\let\make@piccaption@orig\make@piccaption
\renewcommand\make@piccaption{%
  \renewcommand\figurename{Abb.}%
  \make@piccaption@orig}
\def\cleardoublepage{\clearpage\if@twoside \ifodd\c@page\else
    \hbox{}
    \vspace*{\fill}
\begin{center}
\end{center}
    \vspace{\fill}                                              %
    \thispagestyle{empty}                                       %
    \newpage                                                    %
    \if@twocolumn\hbox{}\newpage\fi\fi\fi}                      %
\title{Deformation Quantization and Reduction by Stages\thanks{Based on the
    author's diploma thesis at the University of Freiburg, August
  2010, Advisor: N. Neumaier, S. Waldmann}}
\author{\textbf{Dominic Maier}\thanks{E-mail: dominic.maier@gmx.net}
\\[0.1cm]
Fakult{\"a}t f{\"u}r Mathematik und Physik\\
Albert-Ludwigs-Universit{\"a}t Freiburg\\
Physikalisches Institut\\
  }
\date{}
\newcommand{\Ins}[1]{\operatorname{i}(#1)}
\newcommand{\kIn}{\ensuremath{\imath}}
\newcommand{\kInE}{\ensuremath{\imath_1}}
\newcommand{\kjIn}{\ensuremath{\jmath}}
\newcommand{\gIn}{\ensuremath{\iota}}
\newcommand{\kRes}{\ensuremath{\imath^*}}
\newcommand{\kResE}{\ensuremath{\imath_1^*}}
\newcommand{\kjRes}{\ensuremath{\jmath^*}}
\newcommand{\qRes}{\ensuremath{\boldsymbol{\imath}^{\boldsymbol{*}}}}
\newcommand{\qResE}{\ensuremath{\boldsymbol{\imath}_1^{\boldsymbol{*}}}}
\newcommand{\qjRes}{\ensuremath{\boldsymbol{\jmath}^{\boldsymbol{*}}}}
\newcommand{\Faser}[1]{
\futuredef[^]{\HochzeichenJaNein}{\mathrm{t}_{#1}\HochzeichenJaNein}
}
\newcommand{\oFaser}[1]{
\futuredef[^]{\HochzeichenJaNein}{\overline{\mathrm{t}}_{#1}\HochzeichenJaNein}
}
\newcommand{\FaserG}[2][]{\Faser{#1\Gamma_{#2}}}
\newcommand{\oFaserG}[2][]{\oFaser{#1\Gamma_{#2}}}
\newcommand{\qJ}[1][]{\boldsymbol{J}_{\!#1}}
\newcommand{\tqJ}[1][1]{\tilde {\qJ[#1]}}
\newcommand{\CM}[1][M]{\ensuremath{C^{\infty}(#1)}}
\newcommand{\prol}[1][]{{\ensuremath{\operatorname{\mathsf{prol}}_{#1}}}}
\newcommand{\qrol}[1][]{{\ensuremath{\operatorname{\mathsf{qrol}}_{#1}}}}
\newcommand{\starred}[1][]{\ensuremath{\operatorname{\star}_{\tiny
      \mathrm{red}_{#1}}}}
\newcommand{\kkoszul}[1][]{\operatorname{\partial}\Gradsetzer{#1}}
\newcommand{\kkoszulE}{\operatorname{\partial_1}}
\newcommand{\qkoszul}[1][]{\operatorname{{\boldsymbol{\partial}}\Gradsetzer{#1}}}
\newcommand{\qkoszulE}{\operatorname{\boldsymbol{\partial}_1}}
\newcommand{\qkoszulZ}{\operatorname{\boldsymbol{\partial}_2}}
\newcommand{\Gradsetzer}[1]{
\switch %
\case{\ifempty{#1}{0}{1}} %
\otherwise {^{\scriptscriptstyle{(#1)}}}
\endswitch
}
\newcommand{\Id}[1][]{\operatorname{\id\Gradsetzer{#1}}}
\newcommand{\homotopie}{\mathsf{h}}
\newcommand{\h}[1][]{\operatorname{\homotopie\Gradsetzer{#1}}}
\newcommand{\hE}{\operatorname{\h_{1}}}
\newcommand{\hZ}{\operatorname{\h_{2}}}
\newcommand{\qh}[1][]{\boldsymbol{\homotopie}\Gradsetzer{#1}}
\newcommand{\I} {{\mathrm i}}
\newcommand{\Fdot}{\ensuremath{\text{.}}}
\newcommand{\Fcom}{\ensuremath{\text{,}}}
\newcommand{\kanSymbol}{\ensuremath{\mathrm{kan}}}
\newcommand{\JKan}[1][]{
\switch
\case{\ifempty{#1}{0}{1}} \ensuremath{J_{\kanSymbol}}
\otherwise {\ensuremath{J_{\kanSymbol}^{#1}}}
\endswitch
}
\newcommand{\piKan}[1][]{%
\ensuremath{
  \switch %
  \case{\ifempty{#1}{0}{1}} \pi_{\kanSymbol} %
  \otherwise { %
  \futuredef[^]{\HochzeichenJaNein}{%
    \switch[\pdfstrcmp{\HochzeichenJaNein}]%
    \case{{^}} {({\pi_{\kanSymbol}^{#1}})}\HochzeichenJaNein%
    \otherwise {\pi_{\kanSymbol}^{#1}}\HochzeichenJaNein%
    \endswitch%
  }
}%
\endswitch%
}%
}
\newcommand{\omegaKan}[1][]{\ensuremath{\omega_{\kanSymbol}^{#1}}}
\newcommand{\thetaKan}{\theta_{\kanSymbol}}
\newcommand\overline\thetaKan{%
  \csname\NoSuffixName\overline\expandafter\endcsname\thetaKan
}
\newcommand{\kInKan}[1][]{
\ensuremath{
  \switch %
  \case{\ifempty{#1}{0}{1}} \kIn_{\kanSymbol}%
  \otherwise { %
  \futuredef[^]{\HochzeichenJaNein}{%
    \switch[\pdfstrcmp{\HochzeichenJaNein}]%
    \case{{^}} {(\kIn_{\kanSymbol}^{#1})}\HochzeichenJaNein%
    \otherwise {\kIn_{\kanSymbol}^{#1}}\HochzeichenJaNein%
    \endswitch%
  }
}%
\endswitch%
}%
}
\newcommand{\prolKan}[1][]{\ensuremath{{\prol}_{\kanSymbol}^{#1}}}
\newcommand{\hKan}[1][]{\ensuremath{{\h}_{\kanSymbol}^{#1}}}
\newcommand{\JB}[1][]{\ensuremath{J_{B}^{#1}}}
\newcommand{\omegaB}[1][]{\ensuremath{\omega_{B}^{#1}}}
\newcommand{\piB}[1][]{%
\ensuremath{
  \switch %
  \case{\ifempty{#1}{0}{1}} \pi_{B} %
  \otherwise { %
  \futuredef[^]{\HochzeichenJaNein}{%
    \switch[\pdfstrcmp{\HochzeichenJaNein}]%
    \case{{^}} {({\pi_{B}^{#1}})}\HochzeichenJaNein%
    \otherwise {\pi_{B}^{#1}}\HochzeichenJaNein%
    \endswitch%
  }
}%
\endswitch%
}%
}
\newcommand{\kInB}[1][]{\ensuremath{\kIn_{B}^{#1}}}
\newcommand{\prolB}[1][]{\ensuremath{{\prol}_{B}^{#1}}}
\newcommand{\hB}[1][]{\ensuremath{{\h}_{B}^{#1}}}
\newcommand{\kKoszulB}[1][]{\ensuremath{\partial_B^{#1}}}
\newcommand{\qKoszulB}[1][]{\ensuremath{\qkoszul_B^{#1}}}
\newcommand{\chevalley}{
\futuredef[^]{\HochzeichenJaNein}{\operatorname{\partial}_{\mathrm{\scriptscriptstyle{CE}}}\HochzeichenJaNein}}
\newcommand{\tn}[1]{#1}
\newrobustcmd{\Comp}[2]{%
\begingroup
\let\Compare\undefinded
\DeclareStringFilter\Compare{#1}%
\Compare?{#2}{0}{1}%
\endgroup
}%
\newcommandx{\eAnn}[3][1=(, 3=)]{\footnotesize \text{#1#2#3}}
\newcommand{\qIdeal}[1][] {\ensuremath{{\boldsymbol{\mathcal I}_{{#1}}}}}
\newcommand{\qbIdeal}[1][] {\ensuremath{{\boldsymbol{\mathcal B}_{{#1}}}}}
\newcommand{\kIdeal}[1][] {\ensuremath{{\mathcal I}_{#1}}}
\newcommand{\kbIdeal}[1][] {\ensuremath{{\mathcal B}_{#1}}}
\newcommand{\lieAlgebra}[1][G]{\ensuremath{\mathfrak{\MakeLowercase{#1}}}} % Lie Algebra zur Lie-Gruppe G
\newcommand{\wirk}{\vartriangleright} % Gruppenwirkung
\newcommand{\nSchnitt}[1][]{\ensuremath{n_{#1}}} %
\newcommand{\dpA}{\ensuremath{\colon}} % Doppelpunkt für Abbildungen
\newcommand{\tangentIso}{\ensuremath{I}}
\newcommand{\TIso}[1]{{#1}^{\Yleft}}
\newcommand{\TIsoI}[1]{{#1}^{\Yright}}
\newcommand{\mengenSystem}[1][U]{\ensuremath{\mathcal{#1}}} % Teilmengen der
\newcommand{\potMenge}[1][X]{\ensuremath{\mathcal{P}(#1)}} % Potenzmenge
\newcommand{\neuerBegriff}[1]{\emph{#1}} % neue Begriffe
\newcommand{\abschluss}[1]{\ensuremath{\overline{#1}}} % Abschluss einer Menge
\newcommand{\vbNIso}[1]{\nu_{#1}} % Tangentialbündel eines Vektorbündels
\newcommand{\conj}{\ensuremath{\text{con}}}
\newcommand{\conjG}[1][G]{\ensuremath{\text{con}^{\scriptscriptstyle{#1}}}}
\newcommand{\tconjG}[1][G]{\ensuremath{\widetilde{\text{con}}^{\scriptscriptstyle{#1}}}}
\newcommand{\Lie}[1][G]{\ensuremath{\operatorname{\mathrm{Lie}}(#1)}}
\newcommand{\Ad}         {\operatorname{\mathrm{Ad}}}
\newcommand{\AdG}[1][G]
{\ensuremath{\mathrm{Ad}^{\scriptscriptstyle{#1}}}}
\newcommand{\tAdG}[1][G]{\ensuremath{\widetilde{\mathrm{Ad}}^{\scriptscriptstyle{#1}}}}
\newcommand{\Mred}[1][]{{M_{\mathrm{red}_{#1}}}}
\newcommand{\dPaar}[2]{\ensuremath{\langle #1,#2\rangle}} %duale Paarung
\newcommand{\at}[1] {\ensuremath{|_{#1}}} % kleines Einschränken
\newcommand{\At}[1] {\ensuremath{\big|_{#1}}} % großes Einschränken
\newcommand{\im}{\ensuremath{\mathrm{im}}} %Bild
\newcommand{\D} {\operatorname{\mathrm{d}}} % kleines Differential
\newcommand{\ddt}[1][0]{\left . \frac{\D}{\D t} \middle | \right
  ._{t=#1}} % Ableitung an derStelle [0]
\newcommand{\id}{\operatorname{\mathsf{id}}}
\newcommand{\inneres}[1]{\ensuremath{\mathring{#1}}}
\newcommand{\supp} {\operatorname{\mathrm{supp}}} %Träger
\newcommand{\carr} {\operatorname{\mathrm{carr}}} %Carrier
\newcommand{\projFaktor}{\operatorname{\mathrm{pr}}} % Projektion
\newcommand{\zeugendeU}[1]{\textcolor{black}{\ensuremath{\Gamma_{#1}}}}
\newcommand\afterheading{\par\nobreak\@afterheading}
\newenvironment{satzEnum}{\mbox{}\begin{compactenum}[\itshape
     i.)]}{\end{compactenum}}
\newenvironment{lemmaEnum}{\mbox{}\begin{compactenum}[\itshape
     i.)]}{\end{compactenum}}
\newenvironment{beweisEnum}{\mbox{}\begin{compactenum}[\itshape {a}d
     i.)]}{\end{compactenum}}
\newenvironment{propositionEnum}{\mbox{}\begin{compactenum}[\itshape
     i.)]}{\end{compactenum}}
\newenvironment{korollarEnum}{\mbox{}\begin{compactenum}[\itshape
     i.)]}{\end{compactenum}}
\newenvironment{bemerkungEnum}{\mbox{}\begin{compactenum}[\itshape
     i.)]}{\end{compactenum}}
\newenvironment{definitionEnum}{\mbox{}\begin{compactenum}[\itshape
     i.)]}{\end{compactenum}}
\newcommand{\refitem}[1] {\textit{\ref{#1}.)}} %Damit es eine Klammer gibt wenn man auf Teile referenziert...
\newtheorem{theorem}{Theorem}[section]
\theoremstyle{plain}\newtheorem{satz}[theorem]{Satz}
\theoremstyle{plain}\newtheorem{lemma}[theorem]{Lemma}
\newtheorem*{lemma*}{Lemma} %Das geht mit amsthm.sty, mit theorem.sty oder
\theoremstyle{plain}\newtheorem{proposition}[theorem]{Proposition}
\theoremstyle{plain}\newtheorem*{proposition*}{Proposition}
\theoremstyle{plain}\newtheorem{korollar}[theorem]{Korollar}
\theoremstyle{definition}\newtheorem{definition}[theorem]{Definition}
\theoremstyle{remark} \newtheorem{bemerkung}[theorem]{Bemerkung}
\theoremstyle{remark}\newtheorem{beispiel}[theorem]{Beispiel}
\newenvironment{klein}{}{}
\newenvironment{proofklein}{\small\begin{proof}}{\end{proof}\normalsize}
\newcommand{\Bigvee}{\bigvee\nolimits}
\newcommand{\Bigwedge}{\bigwedge\nolimits}
\begin{document}
\renewcommand{\thepage}{\roman{page}}
\setcounter{page}{1}
\begin{otherlanguage}{english}
\maketitle{}
\begin{abstract}
We discuss the Quantum-Koszul method for constructing star products on
reduced phase spaces in the symplectic, regular case. It is shown that
the reduction method proposed by Kowalzig, Neumaier and Pflaum for
cotangent bundles is a special case of the Quantum-Koszul method. We
give sufficient conditions that reduction by stages is possible in the
Quantum-Koszul framework and show that the star product obtained by two
steps is identical to that obtained by one step. In order to do so, we prove an
equivariant version of the compatible tubular neighborhood theorem.
\end{abstract}
\end{otherlanguage}
\selectlanguage{german}
\cleardoublepage
\thispagestyle{empty}
\vspace*{5cm}
\begin{center}
\Large\emph{{Für Patricia}}
\end{center}
\cleardoublepage
\thispagestyle{empty}
\begin{center}\end{center}
\vspace{1.0cm}
\begin{flushright}
   \parbox{10.0cm}{ {\emph{
         \noindent"`Die Abgründe der Ahnung, %
         ein sicheres Anschauen der Gegenwart, %
         mathematische Tiefe, %
         physische Genauigkeit, %
         Höhe der Vernunft, %
         Schärfe des Verstandes, %
         bewegliche, sehnsuchtsvolle Phantasie, %
         liebevolle Freude am Sinnlichen, nichts kann %
         entbehrt werden zum lebhaften, fruchtbaren Ergreifen %
         des Augenblicks, wodurch ganz allein ein Kunstwerk, %
         von welchem Gehalt es auch sei, entstehen kann."' %
         }}
\vspace{0.6cm} J. W. v. Goethe, Materialien zur Geschichte der Farbenlehre}
\end{flushright}
\cleardoublepage
\tableofcontents
\newpage
\newpage
\chapter*{Einleitung}
\label{sec:Einleitung}
\addcontentsline{toc}{chapter}{Einleitung}
\pagestyle{fancy}
\setlength\headheight{22.36pt}
\fancyhf{}
\fancyhead[OR]{\footnotesize\uppercase{Einleitung}}
\fancyhead[EL]{\footnotesize\uppercase{Einleitung}}
\fancyfoot[C]{\thepage}
Es erfordert keine große Kühnheit zu behaupten, die Entwicklung der
Quantenmechanik am Anfang des zwanzigsten Jahrhunderts habe unser
physikalisches und philosophisches Weltbild so gewandelt wie kaum eine
andere Entwicklung in den Naturwissenschaften. Auch über hundert Jahre
nach Plancks Arbeiten zur Schwarzkörperstrahlung
\cite{planck1901ueber,giulini2000dachte} und Einsteins
Lichtquantenhypothese \cite{einstein1906plancksche} ist die
Quantentheorie nicht nur eines der aktivsten Gebiete der Physik, in dem
es unzählige hochinteressante Anwendungsprobleme zu lösen gibt, sondern
es stellen sich auch heute noch viele interpretatorische und
konzeptionelle Probleme, die immer wieder Anlass zu neuen Diskussionen
und Überlegungen geben \cite{auletta2001foundations}.
\subsubsection*{Verhältnis von klassischer Physik und Quantenphysik}
\label{sec:VerhKlassischerUndQuantenphysik}
Eine wichtige begriffliche Fragestellung, die  zugleich
interpretatorische und auch anwendungsbezogene Relevanz hat, ist die
nach dem Verhältnis von klassischer Physik zur Quantenphysik,
insbesondere klassischer Punktmechanik zur nichtrelativistischen
Quantenmechanik und die Beziehung klassischer Feld- und
Eichtheorien zu deren Quanten"=Analoga
\cite{landsman:1998a,landsman2007between}. Besonders durch neuere
experimentelle Techniken gewinnen diese Fragen an Aktualität. Im
Speziellen sind an dieser Stelle etwa quantenoptische Experimente zu
nennen, siehe \cite{landsman:2006a} und \cite{roemer:2004a}. Auch wenn
man wohl behaupten darf, dass die Mehrheit der Physiker die
Quantentheorie als die fundamentalere Theorie ansieht, aus der durch
bestimmte Näherungen die Gesetze der klassischen Physik zumindest im
Prinzip gewonnen werden können sollten, ist dies eine ganz und gar
nichttriviale Behauptung, die zu vielen tiefgreifenden physikalischen
und erkenntnistheoretischen Diskussionen geführt hat, siehe
\cite{falkenburg2007particle}, \cite{landsman:1998a},
\cite{primas1983chemistry},\cite{de2002foundations},
\cite{ludwig1985axiomatic} und \cite{bunge1967foundations}.  Historisch
scheint die Beziehung zwischen klassischer Physik und Quantenphysik eine
treibende Kraft gewesen zu sein. So geht etwa das Korrespondenzprinzip
auf Niels Bohr zurück, der schon in dem nach ihm benannten Atommodell
für große Quantenzahlen die Beziehung zur klassischen Strahlungsformel
betont hat, vergleiche \cite{bohr1920uber},
\cite[Sec.~5.4.1]{falkenburg2007particle}. Heisenberg,
vergleiche \cite{heisenberg1958physikalische},
\cite[Sec.~5.4.2]{falkenburg2007particle} hat diesen Gedanken in wesentlich
verallgemeinerter Form wieder aufgegriffen und wohl mit der Intuition
im Hinterkopf gearbeitet, dass für $\hbar \to 0$ aus der
Quantenphysik die klassische emergieren sollte. Natürlich ist $\hbar$
eine physikalische Konstante und der Ausdruck $\hbar \to 0$ ist so zu
verstehen, dass $\hbar$  gegen die typische Wirkung des
betrachteten Systems klein ist. Bei Dirac \cite{dirac:1964a} findet sich später
dann eine formalisiertere Form des Korrespondenzprinzips, welche
vermutlich das Denken von Generationen von Physikern geprägt haben
mag. Wie das Groenewold"=van Hove"=Theorem \cite[Satz
5.2.3]{waldmann:2007a} jedoch zeigt, ist die von Dirac vorgeschlagene
Formalisierung nicht streng widerspruchsfrei durchzuführen.
\subsubsection{Deformationsquantisierung}
\label{sec:DeformationsquantisierungEinleitung}
Nachdem in den 70er Jahren des zwanzigsten Jahrhunderts die Theorie der
formalen assoziativen Deformationen nach Gerstenhaber
\cite{gerstenhaber:1964a,gerstenhaber:1966a,gerstenhaber:1968a,gerstenhaber:1974a}
zur Verfügung stand, war es jedoch möglich eine physikalisch und
mathematisch tragfähige Formalisierung des Korrespondenzprinzips
anzugeben. Mit den Arbeiten von Bayen, Flato, Fr{\o}nstal, Lichnerwowicz
und Sternheimer \cite{bayen.et.al:1977a,bayen.et.al:1978a} konnte nun
die Quantenmechanik als formale Deformation der klassischen Mechanik
angesehen werden. Man spricht dabei in naheliegender Weise von
\neuerBegriff{Deformationsquantisierung}. Diese bildet den groben
thematischen Rahmen, in den sich die vorliegende Arbeit eingliedert.
Die Deformationsquantisierung sieht dabei die Observablenalgebra und
deren Quantisierung als primäres Objekt an und versucht die
quantenmechanische Observablenalgebra durch eine
\neuerBegriff{Deformation} des klassischen, kommutativen Produkts zu
gewinnen. Die quantenmechanische Observablenalgebra wird hierbei nicht
etwa als völlig neues Objekt verstanden, sondern die klassische wird mit
einem neuen nichtkommutativen Produkt, dem \neuerBegriff{Sternprodukt},
versehen, welches dem Operatorprodukt entsprechen soll. Der Ansatz für
dieses Produkt besteht in einer Entwicklung nach Potenzen von $\hbar$,
wobei die nullte Ordnung das kommutative Produkt der Funktionen auf dem
klassischen Phasenraum ist. Weiter verlangt man, dass der Kommutator
bezüglich des Sternprodukts in erster Ordnung von $\hbar$ mit dem
$\I\hbar$"=fachen der klassischen Poisson"=Klammer übereinstimmt. Als
essentielle algebraische Bedingung an die höheren Ordnungen erweist sich
die Forderung nach \neuerBegriff{Assoziatitivät} des
Sternprodukts. Zustände werden bei diesem Zugang als ein vom
Observablenbegriff abgeleitetes Konzept betrachtet und als positive
Funktionale beschrieben. Um dem Superpositionsprinzip Rechnung zu
tragen, betrachtet man nach Konstruktion der Sternproduktalgebra
Darstellungen derselben auf (Prä-)Hil\-bert\-räumen.
Die Deformationsquantisierung bringt mehrere konzeptionelle Vorteile mit
sich. Als erstes ist sicherlich die Tatsache zu nennen, dass der
klassische Limes auf Observablenniveau schon intrinsisch bei der
Quantisierung miteingebaut ist. Auf Zustandsniveau ist dieser Punkt
allerdings komplizierter. Die Deformationsquantisierung erlaubt es, grob
gesagt, in einem sinnvoll gewählten Rahmen alle Quantentheorien auf
kontrollierte Art und Weise, durch Deformation in einem formalem
Parameter, dem später die Rolle der Planckschen Konstante $\hbar$
zukommt, alle möglichen Quantenmodelle zu generieren, die ein gegebenes
klassisches Modell auf Observablenebene als klassischen Limes haben und
diese auch zu klassifizieren. Welches dieser dann die Natur richtig
beschreibt entscheidet natürlich das Experiment. Auch wenn der
klassische Limes dann immer noch ein sehr tiefes Problem ist, stellt die
Deformationsquantisierung einen guten Ausgangspunkt dar um dieses besser
verstehen zu können. Weiter zeigt es sich, dass die
Deformationsquantisierung auch die Quantisierung geometrisch
komplizierter klassischer Systeme begrifflich signifikant klarer und
handhabbarer macht. Insbesondere bringt die explizite Verwendung
formaler Potenzreihen eine klare Trennung geometrischer und
funktionalanalytischer Schwierigkeiten mit sich. Diese Vorteile legen
auch nahe, dass die Deformationsquantisierung eine wichtige Rolle beim
Verständnis von Quantengravitationstheorien spielen könnte.  Nach dem
bisher Diskutierten ist klar, dass die zunächst rein mathematische Frage
nach Existenz und Klassifikation von Sternprodukten auf beliebigen
Phasenräumen, das heißt auf beliebigen symplektischen oder allgemeiner
Poisson"=Mannigfaltigkeiten, von eminenter physikalischer Bedeutung
ist. Mathematisch betrachtet war die Lösung dieser Fragen jedoch alles
andere als einfach, insbesondere im Falle von
Poisson"=Mannigfaltigkeiten. Die mathematische Tiefe des letztgenannten
Problems, das von Kontsevich 1997 gelöst und 2003
\cite{kontsevich:2003a} veröffentlicht wurde, zeigt sich nicht zuletzt
darin, dass dieses Resultat mit der Fields"=Medaille gewürdigt
wurde. Die auch im symplektischen Fall nichttriviale Frage nach der
Existenz von Sternprodukten konnte in voller Allgemeinheit schon früher
von DeWilde und Lecomte \cite{dewilde.lecomte:1983b}, Fedosov
\cite{fedosov:1985a,fedosov:1986a,fedosov:1989a} und Omori, Maeda und
Yoshida \cite{omori.maeda.yoshioka:1991a} positiv
beantwortet werden. Aufbauend auf dem Beweis von Fedosov gelang Nest und
Tsygan \cite{nest.tsygan:1995a,nest.tsygan:1995b}, Deligne
\cite{deligne:1995a} und Bertelson, Cahen und Gutt
\cite{bertelson.cahen.gutt:1997a} auch eine Klassifikation von
Sternprodukten auf symplektischen Mannigfaltigkeiten.
\subsubsection{Entwicklung der Physik}
\label{sec:EntwicklungDerPhysik}
Möchte man die treibenden Kräfte benennen, die überhaupt die rasante
Entwicklung der modernen Physik, sowohl in reiner Erkenntnisgewinnung
als auch in ihrer technischen Anwendung, in den letzten Jahrhunderten
ermöglichten, so stößt man unweigerlich zum einen auf das Experiment,
das bestimmte Phänomene der Natur isoliert und nicht wie früher alles
als Gesamtes betrachtet, zum anderen auf die Mathematisierung der
Naturwissenschaft sowie auf den Symmetriebegriff. Letzterer spielt nicht
nur bei der Lösung physikalischer Probleme, sondern auch bei der
fundamentalen Theoriebildung und der Abbildung von
Naturgesetzmäßigkeiten in mathematische Strukturen eine enorm wichtige
Rolle. Mit den beiden letztgenannten Triebfedern der
Physikgeschichte beschäftigt sich auch diese Arbeit.
\subsubsection{Mathematische Physik}
\label{sec:MathematischePhysik}
Die vorliegende Abhandlung ist in der mathematischen Physik anzusiedeln,
einem Gebiet der Physik, bei dem die oben genannte Mathematisierung in
besonderem Maße als Leitgedanke dient. Zu ihrer Aufgabe gehört es, zum
einen zu untersuchen, welche mathematische Sprache für bestimmte
physikalische Probleme angemessen sein könnte und diese auch
mitzuentwickeln, zum anderen jene dann rigoros auf physikalische
Probleme anzuwenden. Gerade durch letztgenannten Anspruch scheint sich
die mathematische Physik bisweilen mit physikalischen Problemen zu
beschäftigen, die seit Jahren auf einem eher heuristischen Niveau als
gelöst betrachtet werden. Trotzdem spielt die mathematische Physik in
der Gesamtentwicklung der Physik wohl eine außergewöhnlich wichtige
Rolle, auch wenn dies auf den ersten Blick möglicherweise nur schwer
erkennbar ist. Hierzu sind einige Punkte zu nennen. Erstens erscheint es
letztlich für ein wirkliches Verständnis der Natur für unabdingbar, dass
die Sprache der einen bestimmten Wirklichkeitsbereich beschreibenden
Theorie klar und angemessen gewählt ist, dass sie begrifflich logisch
konsistent ist und insbesondere keine Widersprüche aufweist. Sich aber
um derartige Fragen zu kümmern, dauert lange und ist mühsam, oft ohne
direkt neue physikalische Effekte aufzudecken. Zweitens scheint es für
den Gesamtfortschritt der Physik wichtig zu sein, die logischen
Beziehungen zwischen den verschiedenen Grundpfeilern einer Theorie
genauestens zu kennen, um bei Entdeckung neuer Phänomene klarer sehen zu
können, welche Teile möglicherweise zusammenstürzen, welche beibehalten
werden können und wo die Theorie erweitert werden kann. Außerdem spielt
eine mathematisch durchdachte und logisch stringente Formulierung einer
Theorie eine wichtige Rolle in der Vermittlung dieser Theorie sowohl an
zukünftige Forscher im Bereich der Physik als auch an solche, die die
Theorie in der Technik anwenden wollen. Drittens ist das immerwährende
Prüfen und Weiterentwicklen der Schnittstelle von Mathematik und Physik
auch rein praktisch relevant, und zwar um physikalische Probleme in einer
Form vorliegen zu haben, in der neue Entwicklungen aus der angewandten
Mathematik, etwa der Numerik, möglichst direkt verwendet werden
können. Man verdeutliche sich die oben genannten Punkte exemplarisch an
den historisch schon etwas weiter zurückliegenden Ereignissen der
Geschichte der Physik, die aber gerade aus diesem Grund eine Beurteilung
für die Gesamtentwicklung der Physik zulassen, nämlich der Formulierung
der Elektrodynamik nach Maxwell, die Formulierung der Quantenmechanik
nach von Neumann und Mackey und die Entwicklung der allgemeinen
Relativitätstheorie durch Einstein. Schließlich wollen wir auch nicht
verschweigen, welch großen Einfluss die Physik auf Entwicklungen in der
Mathematik ausüben kann. Man denke zum Beispiel an die Arbeiten von
Witten, der mit Hilfe von Methoden aus der Quantenfeldtheorie und
Eichtheorie bedeutende mathematische Beiträge zum Verständnis von
Knoteninvarianten oder der Morsetheorie lieferte.
Da nun, wie oben beschrieben, mit der Deformationsquantisierung
insbesondere konzeptionelle Einsichten gewonnen werden sollen, ist es
ohne Weiteres verständlich, dass sie mathematisch exakt formuliert sein
sollte und somit auch Gegenstand der mathematischen Physik ist.
\subsubsection{Symmetrie}
\label{sec:Symmetrie}
Der weiter oben schon angesprochene Begriff der Symmetrie wird uns in
dieser Arbeit besonders beschäftigten. Etwa bei klassischen mechanischen
Systemen kann man durch Ausnutzen bestimmter Symmetrien konkrete
Probleme vereinfachen und quantitative, aber auch vor allem qualitative
Einsichten in das Verhalten eines Systems gewinnen, vergleiche
\cite{cushman.bates:1997a}. Das wohl prominenteste Beispiel einer
solchen Verwendung von Symmetrien ist das Kepler Problem, das in jeder
Anfängervorlesung behandelt wird. Auch hier besteht der Standardzugang
darin, bestimmte Symmetrien beziehungsweise die damit verbundenen
Erhaltungsgrößen auszunutzen, um die Planetenbewegungen zu verstehen.
In der klassischen Physik ist das Eliminieren von Freiheitsgraden, die
sogenannte \neuerBegriff{Phasenraumreduktion} schon eingehend studiert
worden. Auch im Rahmen der geometrischen Mechanik wurden auf diesem
Gebiet seit den 60er Jahren des vergangenen Jahrhunderts viele
Fortschritte erzielt, siehe \cite{abraham.marsden:1985a},
\cite{arnold:1989a}, \cite{marsden.ratiu:1999a} oder
\cite{cushman.bates:1997a}. Eine Frage, die in den letzten Jahren in
diesem Kontext an Bedeutung gewann, ist, grob gesagt, das Problem, ob es
äquivalent ist, eine Symmetrie direkt auszunutzen oder in mehreren
Schritten Teilsymmetrien. Es konnte gezeigt werden, dass dies unter
bestimmten schwachen technischen Annahmen immer wahr ist, siehe etwa
\cite{marsden.misiolek:2007a} für eine eingehende Diskussion.
In Eichtheorien spielen Symmetrien eine fundamentalere Rolle. In diesen
Theorien tauchen mit den Eichfreiheitsgraden nämlich prinzipiell
unphysikalische Freiheitsgrade auf. Physikalisch relevant ist hier nur
der reduzierte Phasenraum. Dessen Geometrie ist aber im Allgemeinen
kompliziert, weshalb es naheliegt, darüber nachzudenken, wie man Systeme
mit geometrisch komplexeren Phasenräumen quantisieren kann. Die
Phasenräume in klassischen Eichtheorien sind im Allgemeinen unendlich
dimensional, es handelt sich dabei ja um das Kotangentialbündel des
Raumes der Eichpotentiale oder mathematisch gesprochen der
Zusammenhangseinsformen eines Hauptfaserbündels, dessen Totalraum
physikalisch eine verallgemeinerte Phase eines Teilchens beschreibt,
dessen Raum"~(Zeit"~)Ko\-or\-di\-na\-ten wiederum durch Punkte der
Basismannigfaltigkeit beschrieben werden. Die Eichgruppe, die auf dem
Raum der Eichpotentiale operiert, ist im Allgemeinen auch unendlich
dimensional. Da die durch die Unendlichdimensionalität herrührenden
funktionalanalytischen Probleme und die geometrischen Probleme jeweils
für sich genommen schon schwierig genug sind, bietet es sich hier an,
diese Schwierigkeitsbereiche getrennt zu betrachten. Den geometrischen
Aspekt kann man modellhaft in der Sprache von Phasenraumreduktion endlich
dimensionaler Mannigfaltigkeiten untersuchen. Betrachtet man
Eichtheorien auf endlichen Gittern, so liegt eine
endlichdimensionale Situation vor.
\subsubsection{Reduktion von Sternprodukten}
\label{sec:ReduktionVonSternprodukten}
Natürlich stellt sich sofort die Frage, wie das Ausnutzen von Symmetrien
mathematisch im Rahmen der Deformationsquantisierung beschrieben werden
kann. Nach den allgemeinen Existenzsätzen für Sternprodukte gibt es
immer auch auf dem reduzierten Phasenraum ein solches. Man kann jedoch
genauer fragen, ob es konstruktive Verfahren gibt, aus einem gegebenen
Sternprodukt eines auf dem reduzierten Phasenraum zu gewinnen und ob man
idealerweise analoge Beziehungen vom ursprünglichen zum reduzierten
Sternprodukt vorliegen hat, wie es schon klassisch für die
Poisson"=Klammer der Fall ist. Zunächst müssen dazu geeignete
Invarianzbegriffe für Sternprodukte gefunden werden, insbesondere stellt
sich die Frage, ob das Konzept der Impulsabbildung im Rahmen von
Sternprodukten eine natürliche Verallgemeinerung erfährt. Mit diesem
Thema haben sich Arnal, Cortet, Molin und Piczon in
\cite{arnal.cortet.molin.pinczon:1983a} auseinandergesetzt und
verschiedene Invarianzbegriffe für Sternprodukte diskutiert. Der
naheliegendste Invarianzbegriff ist, zu fordern, dass die Wirkung auf
den Observablen mit einem Element der Symmetriegruppe ein
Algebramorphismus bezüglich des Sternprodukts ist. Daraus folgt
insbesondere, dass sie auch ein Poisson"=Isomorphismus ist. Im
symplektischen Fall genügt die Existenz einer unter der Symmetriegruppe
invarianten, torsionsfreien kovarianten Ableitung, was etwa dann der
Fall ist, wenn die Symmetriegruppe eigentlich operiert, also
insbesondere bei allen kompakten Symmetriegruppen. Bertelson, Bieliavsky
und Gutt gelang es in \cite{bertelson.bieliavsky.gutt:1998a} mit Hilfe
der Fedosov"=Konstruktion die $G$"=invarianten Sternprodukte zu
klassifizieren. Überträgt man die definierende Eigenschaft einer
Impulsabbildung durch Ersetzen der Poisson"=Klammer durch den
Sternproduktkommutator, so gelangt man zum Begriff der
\neuerBegriff{starken Invarianz} eines Sternprodukts und durch Zulassen
von Quantenkorrekturen bei der Impulsabbildung zu dem der
\neuerBegriff{Quantenimpulsabbildung} (vgl.\
\cite{bordemann.brischle.emmrich.waldmann:1996a,xu:1998a}). Xu konnte in
\cite{xu:1998a} hinreichende Bedingungen für die Existenz von
Quantenimpulsabbildungen angeben. Müller"=Bahns und Neumaier gelang es
in \cite{mueller-bahns.neumaier:2004a}, Kriterien für die Existenz von
Quantenimpulsabbildungen von Fedosov"=Sternprodukten zu
formulieren. Ebenso wurde dort untersucht, wann derartige Sternprodukte
stark invariant sind.
Eine Reduktionsmethode für Sternprodukte, für die es eine
Quantenimpulsabbildung gibt, wurde beispielsweise von Kowalzig, Neumaier
und Pflaum in \cite{kowalzig.neumaier.pflaum:2005a} für den wichtigen
Fall von Kotangentialbündeln mit Magnetfeldern vorgeschlagen. Eine
andere Methode besteht in der aus der Quantenfeldtheorie bekannten
BRST"=Quantisierungsmethode, die von Bordemann, Herbig und Waldmann in
\cite{bordemann.herbig.waldmann:2000a} für den Fall endlich vieler
Freiheitsgrade auch im Rahmen der Deformationsquantisierung beschrieben
werden konnte.  Waldmann und Gutt haben den davon im regulären Fall
relevanten Teil in \cite{gutt2010involutions} formuliert. Dies werden
wir später \emph{Quanten"=Koszul}"=Schema nennen. Für abstraktere Betrachtungen
zu diesem Thema ist schließlich noch die Arbeit von Bordemann
\cite{bordemann:2005a} zu nennen. Die Frage, ob und wie in den genannten
Methoden eine Reduktion in Stufen funktioniert, wurde in der Literatur
bisher noch nicht untersucht und ist Gegenstand dieser Arbeit.
\section*{Ergebnisse}
\label{sec:Ergebnisse}
In der vorliegenden Arbeit werden zwei Probleme aus dem Themengebiet der
Konstruktion von Sternprodukten auf dem reduzierten Phasenraum gelöst.
Erstens wird die von Kowalzig, Neumaier und Pflaum in \cite[Seite
547]{kowalzig.neumaier.pflaum:2005a} aufgeworfene Frage beantwortet, was
die von diesen Autoren dargelegte Methode zur Reduktion von
Sternprodukten auf Kotangentialbündeln mit BRST"=Quantisierung in der
Formulierung von Bordemann, Herbig und Waldmann
\cite{bordemann.herbig.waldmann:2000a} zu tun hat. Es wird in
Satz~\ref{satz:VergleichMitNikolaisArbeitAllgemein} gezeigt, dass
erstgenannte Methode einen echten Spezialfall der
Quanten"=Koszul"=Reduktionsmethode und damit auch der BRST"=Methode
darstellt. Genauer stellt es sich heraus, dass sich die etwas technisch
erscheinende Methode von Kowalzig, Neumaier und Pflaum aus der
Quanten"=Koszul"=Methode durch naheliegende Wahlen von, an das
Kotangentialbündel angepassten, Eingangsdaten  ergibt.
Zweitens gelingt es in dieser Arbeit, die aus der klassischen
Hamiltonschen geometrischen Mechanik bekannte Technik der symplektischen
Phasenraumreduktion in Stufen, unter gewissen Einschränkungen an die
betrachteten Symmetriegruppen, die etwa im Falle kompakter Gruppen immer
erfüllt sind, auf die Reduktionsmethode von Sternprodukten nach dem
Quanten"=Koszul"=Schema zu übertragen, siehe
Satz~\ref{thm:UebereinstimmungDerSternprodukte}.
Als Nebenprodukt dieser Ergebnisse aus der theoretisch"=mathematischen
Physik geht noch ein rein mathematisches Resultat hervor, nämlich wie
man im Rahmen eigentlicher Gruppenwirkungen durch bekannte
Integrationstechniken einen äquivarianten Spray konstruieren, vergleiche
Satz~\ref{satz:invarianterSpray}, und wie man mit Hilfe dessen den
Existenzsatz für kompatible Tubenumgebungen auf eine äquivariante
Situation verallgemeinern kann, siehe Satz~\ref{satz:GkompatibleTuben}.
\subsubsection{Einordnung der Resultate}
\label{sec:EinordnungDerResultate}
Das erstgenannte Resultat ist aus mehreren Gründen von Interesse. Zum
einen ist es natürlich im Allgemeinen für ein gutes Verständnis eines
Problems immer wichtig, auch die Beziehungen zwischen verschiedenen
Lösungsansätzen zu sehen. Zum anderen verhält es sich im betrachteten
Problem so, dass Kowalzig, Neumaier und Pflaum mit ihrer Methode für den
Fall des Kotangentialbündels etwa Aussagen über die Frage, ob
Quantisierung mit Reduktion vertauscht, treffen konnten. Ihre Methode
scheint allerdings technisch und schwer vom betrachteten Beispiel aus zu
verallgemeinern, die Quanten"=Koszul"=Methode jedoch erweckt einen
konzeptionell wesentlich klareren Eindruck. Somit gewinnen die durch die
erste Methode gewonnenen Erkenntnisse nochmals dadurch an Bedeutung, dass
sie ein Spezialfall einer anderen konzeptionell klareren Methode
darstellt. Die oben angesprochene Frage, ob Reduktion mit Quantisierung
vertauscht, ist insbesondere im Rahmen von Eichtheorien von Interesse,
wo das übliche Vorgehen wegen der komplizierten geometrischen Struktur
des reduzierten Phasenraums darin besteht, den unreduzierten Phasenraum
zu quantisieren und anschließend die Eichbedingungen zu verarbeiten
statt, wie es physikalisch und begrifflich eher angemessen wäre, den
reduzierten Phasenraum direkt zu quantisieren, da nur dieser von
direkter physikalischer Relevanz ist. Somit können Ergebnisse, die --
wenn auch nur in endlich dimensionalen Modellen -- zeigen, dass unter
bestimmten Bedingungen Reduktion mit Quantisierung vertauscht,
interessante Hinweise sein, die das übliche Vorgehen bei der
Quantisierung von Eichtheorien rechtfertigen und verständlicher machen.
Zum zweiten Resultat ist zu sagen, dass es zum einen in gewisser
intuitiver Weise für eine gute Quanten"=Reduktionsmethode als wichtig
erscheint, dass eine Reduktion in Stufen zu einer Reduktion in einem
Schritt äquivalent sein sollte, zum anderen ist dies insbesondere bei
der Quantisierung von Eichtheorien von Interesse, wo bestimmte
Eichfreiheitsgrade möglicherweise relativ einfach, andere sehr schwierig
in Griff zu bekommen sind. Wenn man nun weiß, dass diese ohne Bedenken
getrennt behandelt werden können, hat man schon eine erste Vereinfachung
des Problems erreicht und kann sich vollständig auf die schwerer zu
beherrschenden Eichsymmetrien konzentrieren.  Die gewonnenen Ergebnisse
wurden auf klassischer Seite von einem symplektischen Standpunkt aus
formuliert, da hier die klassische Reduktion in Stufen schon gut
ausgearbeitet in der Literatur zur Verfügung stand und da weiter für
diesen Fall die Klassifikation und Existenz von invarianten
Sternprodukten, Quantenimpulsabbildung und anderes wohl besser verstanden
ist, als im Poisson"=Fall. Wir vermuten jedoch stark, dass alle
Überlegungen entsprechend auch für den Poisson"=Fall richtig sein
werden. Das beschriebene Resultat liefert sofort Fragestellungen für
mögliche weitere Untersuchungen und Verallgemeinerungen. Zuerst
erscheint es interessant, genauer zu sehen, ob man die Voraussetzungen an
die Symmetriegruppen nicht noch weiter abschwächen kann, oder aber
zeigen, dass dies wirklich harte Bedingungen sind. Ferner mag es auch
von Interesse sein, die Quanten"=Koszul"=Reduktion in Stufen auch für singuläre
Fälle zu untersuchen. Diese haben wir in der vorliegenden Arbeit nicht
betrachtet, da zum Zeitpunkt ihrer Verfassung noch keine vollständige
Verallgemeinerung der Quanten"=Koszul-Methode bzw.\ der BRST"=Methode im
Rahmen der Deformationsquantisierung für beliebige
stratifizierte Räume  bekannt war. Es gibt eine neuere Arbeit
von Bordemann, Herbig und Pflaum \cite{bordemann2007homological} in der
die BRST"=Methode für den singulären Fall untersucht wird, allerdings nur
in dem Sinne, dass eine regulärer Phasenraum zu einem singulären
Phasenraum reduziert wird. Der Startpunkt ist also hier eine reguläre
Situation. Für eine Untersuchung von Reduktion in Stufen erscheint es
jedoch geboten, auch in einer singulären Situation starten zu
können. Schließlich stellt sich natürlich die Frage, wie unsere
Untersuchungen auf den unendlich dimensionalen feldtheoretischen Fall
verallgemeinert werden können.
Das dritte, rein mathematische Resultat erwies sich für die Lösung der
oben genannten Probleme als sehr nützlich  und wird vermutlich auch in
anderen Bereichen der theoretischen und mathematischen Physik
Anwendungen finden.
\section*{Aufbau der Arbeit}
\label{sec:AufbauDerArbeit}
Die vorliegende Arbeit gliedert sich in vier Kapitel und mehrere
Anhänge.
Wir beginnen in {\bfseries Kapitel~\ref{cha:Deformationsquantisierung}}
mit einer kurzen Einführung in die Grundideen und Konzepte der
Deformationsquantisierung und führen insbesondere den Begriff des
Sternprodukts ein. Weiter gehen wir dort auf Symmetriebegriffe der
klassischen Mechanik ein, speziell wird der Begriff der Impulsabbildung
definiert und der wichtige Satz über die symplektische Reduktion
angegeben. Schließlich behandeln  wir in diesem Kapitel auch Symmetriebegriffe
für Sternprodukte, insbesondere wie man das Konzept der
Impulsabbildung für diesen Rahmen verallgemeinern kann.
In {\bfseries
  Kapitel~\ref{cha:Koszul-Reduktion}} widmen wir uns  den
Grundlagen der Quanten"=Koszul"=Reduktion. Zuerst formulieren wir die
geometrische Reduktionstheorie algebraisch und nehmen anschließend dies
als Ausgangspunkt um das Reduktions"=Schema der
Quanten"=Koszul"=Reduktion, wie es von Gutt und Waldmann in
\cite{gutt2010involutions} dargestellt wurde, formulieren zu
können.
Diese Methode betrachten wir in {\bfseries
  Kapitel~\ref{cha:QuantenKoszulAufKotangentialbuendel}} am Beispiel
eines Kotangentialbündels und können dort zeigen, dass die Methode von
Kowalzig, Neumaier und Pflaum aus \cite{kowalzig.neumaier.pflaum:2005a}
ein Spezialfall der Quanten"=Koszul"=Methode darstellt (vgl.\ Satz
\ref{satz:VergleichMitNikolaisArbeitAllgemein}). Dabei richten wir unser
Augenmerk zuerst auf den Fall verschwindender Impulswerte ohne
Magnetfeld, den wir anschließend durch Verschiebungstricks von
Tubenumgebungen auch auf nichtverschwindende, invariante Impulswerte
und in Anwesenheit eines Magnetfeldes verallgemeinern können. Dies stellt
das erste Hauptresultat der vorliegenden Arbeit dar.
Das zweite Hauptergebnis folgt in {\bfseries
  Kapitel~\ref{cha:Koszul-Reduktion_in_Stufen}}, in dem wir zeigen
können, dass unter bestimmten Annahmen an die Symmetriegruppe auch die
Quanten"=Koszul"=Reduktion in Stufen durchführbar ist und in zwei Stufen
das gleiche Ergebnis liefert wie in einem Schritt. Bevor wir zu diesem
Punkt kommen, beleuchten wir noch die klassische symplektische Reduktion
in Stufen, wie sie in \cite{marsden.misiolek:2007a} beschrieben wird. Es
ist ausreichend, dies für verschwindende Impulswerte zu tun, was einige
technische Schwierigkeiten beseitigt und in dieser Variante noch nicht
in der Literatur aufgeschrieben wurde. Anschließend wenden wir uns der
Frage nach der Existenz einer Quantenimpulsabbildung für das im ersten
Reduktionsschritt erhaltene Sternprodukt zu und geben hinreichende
Existenzbedingungen sowie eine explizite Konstruktion an, siehe
Satz~\ref{thm:ReduziertesSternproduktInvariant}. Nachdem wir noch die
stark invariante Situation beleuchtet haben, formulieren wir schließlich
in Satz~\ref{thm:UebereinstimmungDerSternprodukte} das oben
angesprochene Hauptergebnis.
Die Anhänge beschäftigen sich schließlich mit diversen mathematischen
Grundlagen. In {\bfseries Anhang~\ref{cha:elementMathAnmerkungen}}
bringen wir zunächst einige elementare mathematische Bemerkungen. In
{\bfseries Anhang~\ref{cha:Differentialoperatoren}} folgt eine äußerst
knappe Einführung in die Grundbegriffe von Differentialoperatoren auf
Mannigfaltigkeiten. In {\bfseries Anhang~\ref{cha:Hauptfaserbuendel}}
geben wir anschließend eine kurze Zusammenfassung der Theorie von
Hauptfaserbündeln. {\bfseries Anhang~\ref{cha:invariante_Strukturen}}
beschäftigt sich mit $G$"=invarianten Strukturen auf
$G$"=Mannigfaltigkeiten. Hier legen wir zunächst einige topologische
Grundlagen und betrachten dann, wie man bestimmte Strukturen auf
Mannigfaltigkeiten für eigentliche Gruppenwirkungen durch
Integrationstechniken invariant machen kann. Insbesondere betrachten wir
Riemannsche Fasermetriken und Spray"=Vektorfelder.  In {\bfseries
  Anhang~\ref{cha:Tubensatze}} bringen wir detaillierte Beweise für
Tubensätze, die für den Hauptstrang dieser Arbeit immer wieder benötigt
werden. Insbesondere beweisen wir die angekündigte Verallgemeinerung des
Satzes über kompatible Tubenumgebungen in Anwesenheit eigentlicher
Gruppenwirkungen, siehe Satz~\ref{satz:GkompatibleTuben}.
\cleardoublepage
\pagestyle{fancy}
\setlength\headheight{22.36pt}
\fancyhf{}
\fancyhead[OR]{\footnotesize\uppercase{Konventionen}}
\fancyhead[EL]{\footnotesize\uppercase{Konventionen}}
\fancyfoot[C]{\thepage}
\section*{Konventionen}
\label{cha:Notation}
An dieser Stelle möchten wir kurz einige Konventionen einführen, die wir
in dieser Arbeit immer wieder gebrauchen wollen.
Zunächst sei gesagt, dass wir die \emph{Einsteinsche Summenkonvention}
verwenden. Vektorräume sind immer über dem Körper der reellen Zahlen
$\mathbb{R}$ oder der komplexen Zahlen $\mathbb{C}$ zu verstehen, ebenso
Lie"=Algebren, sofern nicht anders gesagt. Auftretende
Mannigfaltigkeiten werden stets als glatt, reell und zusammenhängend
angenommen, dies gilt insbesondere für Lie"=Gruppen. Falls es uns
wichtig erscheint, werden wir dies stellenweise jedoch auch nochmal im
Haupttext erwähnen. Wirkungen von Lie"=Gruppen auf Mannigfaltigkeiten
sind immer Linkswirkungen und glatt. Weiter verwenden wir die Begriffe
Wirkung und Operation einer Gruppe synonym. $\mathbb{K}$"=lineare
Abbildungen zwischen zwei $\mathbb{K}$"=Vektorräumen ($\mathbb{K} \in
\{\mathbb{R},\mathbb{C}\}$) wollen wir stets
$\mathbb{K}[[\lambda]]$"=linear auf deren formale Potenzreihen im
formalen Parameter $\lambda$ fortsetzen und mit demselben Symbol
bezeichnen, ohne dies jedes Mal explizit zu erwähnen.
Ferner benutzen wir in dieser Arbeit die in der Mathematik
gängige Notation, wollen jedoch noch auf einige ausgewählte
Bezeichnungen aufmerksam machen.
Sei $M$ eine Menge und $\{m_i\}_{i \in I}$ eine Familie von Elementen
von $M$. Falls aus dem Kontext die Indexmenge $I$ klar ist, so begehen
wir den Notationsmissbrauch einfach $\{m_i\}$ statt $\{m_i\}_{i \in I}$
zu schreiben. Ebenfalls verzichten wir gelegentlich auf das
Kompositionszeichen $\circ$ von Abbildungen, um Formeln nicht zu lang
oder unübersichtlich werden zu lassen, sofern keine Verwechslungen zu
befürchten sind. Betrachten wir eine absteigende Sequenz von Abbildungen
der Form \def\tA[#1]{A_{#1}}
   \begin{equation*}
      \begin{tikzpicture}[baseline=(current
         bounding box.center),description/.style={fill=white,inner sep=2pt}]
         \matrix (m) [matrix of math nodes, row sep=3.0em, column
         sep=3.5em, text height=1.5ex, text depth=0.25ex]
         { \dots &K^{n-1} & K^{n} & K^{n+1} & \dots \\
         }; %
         \path[<-] (m-1-1) edge node[auto]{$f$}(m-1-2); %
         \path[<-] (m-1-2) edge node[auto]{$f$}(m-1-3); %
         \path[<-] (m-1-3) edge
         node[auto]{$f$}(m-1-4); %
         \path[<-] (m-1-4) edge node[auto]{$f$}(m-1-5);
      \end{tikzpicture}\Fcom
   \end{equation*}
   so schreiben wir statt $f \colon K^{n} \to K^{n-1}$ auch
   $f^{\scriptscriptstyle(n)}$, um den Definitionsbereich zu
   betonen. Analoges gilt für aufsteigende Sequenzen.
   Ist $V$ ein Vektorraum, so bezeichnen wir seinen algebraischen
   Dualraum mit $V^*$ und die adjungierte Abbildung einer linearen
   Abbildung $l \dpA V \to W$ zwischen zwei Vektorräumen $V$ und $W$
   notieren wir mit $l^* \dpA W^* \to V^*$. Die duale Paarung bezeichnen
   wir mit $\dPaar{\cdot}{\cdot}$. Außerdem verwenden wir die
   Bezeichnung $\bigwedge^\bullet V$ für $\bigoplus_k \bigwedge^k V$,
   falls wir die Gradierung hervorheben wollen. Erhöht oder verringert
   eine Abbildung von $\bigwedge^\bullet V$ in sich den Grad um $k$, so
   machen wir dies durch $\bigwedge^\bullet V \to \bigwedge^{\bullet \pm
     k}V$ deutlich. Für die Lie"=Algebra einer Lie"=Gruppe $G$ schreiben
   wir $\lieAlgebra[g]$ oder auch $\Lie[G]$.  Ist $G$ eine Lie"=Gruppe
   und $\Phi \dpA G \times M \to M$ eine $G$"=Wirkung, so bezeichnen wir
   das fundamentale Vektorfeld bezüglich dieser Wirkung zu einem Vektor
   $\xi \in \lieAlgebra[g]$ mit $\xi^{\scriptscriptstyle{G}}_M$, oder,
   falls wir die Gruppe nicht betonen wollen, einfach mit $\xi_M$. Ist
   \(W\) ein Vektorraum und \(G\) eine Gruppe, so heißt eine Wirkung
   \(\rho \colon G \times W \to W\) \neuerBegriff{linear}, falls
   \(\rho(g,\cdot) \colon W \to W\) für alle \(g \in G\) linear ist. In
   diesem Fall induziert \(\rho\) durch \(G \times W^* \ni
   (g,\alpha)\mapsto \rho^{*}(g,\alpha) :=
   (\rho(g^{-1},\cdot))^{*}(\alpha) \in W^*\) eine lineare Wirkung
   \(\rho^{*}\) von $G$ auf \(W^{*}\). Ist \(M_i\), \(i=1,2\), eine
   Menge und \(G\) eine Gruppe, die sowohl auf \(M_1\) als auch auf
   \(M_2\) wirke, so liefert dies eine Wirkung von \(G\) auf \(M_1
   \times M_2\) via \(g(m_1,m_2) := (gm_1,gm_2)\) für alle $g \in G,
   (m_1,m_2) \in M_1 \times M_2$. Sind $W_1$ und $W_2$ Vektorräume, die
   eine lineare $G$"=Wirkung tragen, so induziert auch dies eine
   $G$"=Wirkung auf $W_1 \otimes W_2$ via $G \times (W_1 \otimes W_2)
   \ni (g,w_1 \otimes w_2) \mapsto gw_1 \otimes gw_2 \in W_1 \otimes
   W_2$ und linearer Fortsetzung. Analoges gilt auch für mehrfache
   Tensorprodukte und äußere Potenzen.
Die glatten \emph{komplexwertigen} Funktionen auf einer Mannigfaltigkeit
$M$ bezeichnen wir immer mit $C^\infty(M)$, ist $f \in C^\infty(M)$, so
sei $\overline{f}$ die komplex konjugierte Funktion. Ist $G$ eine
Lie"=Gruppe, die auf $M$ und damit vermöge Zurückziehen auch auf
$C^\infty(M)$ wirkt, dann schreiben wir $C^\infty(M)^G$ für die
$G$"=invarianten, glatten komplexwertigen Funktionen auf $M$. Ist $\phi
\colon M \to M'$ eine glatte Abbildung zwischen Mannigfaltigkeiten, so
notieren wir den pull"=back als $\phi^* \colon C^\infty(M') \to
C^\infty(M)$. Es sollte immer aus dem Kontext klar sein, wann $^*$
den pull"=back und wann die duale Abbildung bezeichnet. Den Träger einer
Funktion $f \colon M \to V$ von einem topologischen Raum $M$ in einen
Vektorraum $V$ bezeichnen wir mit $\supp f = \abschluss{\carr f}$, wobei
$\carr f := \{p \in M \mid f(p) \neq 0\}$. Für die glatten Schnitte auf
einem Vektorbündel $E \to M$ schreiben wir $\Gamma^\infty(E)$.
\cleardoublepage
\pagestyle{fancy}
\setlength\headheight{22.36pt}
\fancyhf{}
\fancyhead[OR]{\footnotesize\rightmark}
\fancyhead[EL]{\footnotesize\leftmark}
\fancyfoot[C]{\thepage}
\renewcommand{\thepage}{\arabic{page}}
\setcounter{page}{1}
\chapter{Deformationsquantisierung}
\label{cha:Deformationsquantisierung}
Eine physikalische Theorie verwendet in der Regel die Sprache der
Mathematik, um Teile der Wirklichkeit zu beschreiben. Es ist nicht
verwunderlich, dass eine Theorie $\mathcal{T}_1$, die einen echt
größeren Wirklichkeitsbereich beschreibt als eine Theorie $\mathcal{T}_2$,
auch eine allgemeinere -- und nicht etwa eine völlig andere --
mathematische Struktur zu dessen Beschreibung verwendet. Ein schönes
Beispiel liefert etwa das Verhältnis der klassischen Punktmechanik zur
speziellen Relativitätstheorie oder die letztere im Verhältnis zur
allgemeinen Relativitätstheorie. Bei der klassischen Mechanik und der
Quantentheorie sieht man dies in verbandstheoretischer oder
algebraischer Formulierung sehr schön \cite{piron1976foundations}.
Die oben genannte echte Erweiterung des Wirklichkeitsbereichs ist oft,
und so auch in den genannten Beispielen, mit einer Änderung der Skala einer
physikalischen Größe verbunden. Es stellt sich heraus, dass sich die
Erweiterung des mathematischen Formalismus gewissermaßen durch eine
Deformation eines Repräsentanten dieser Größe innerhalb der mathematischen
Struktur der Theorie kontrollieren lässt.
Oftmals zeigt sich so a posteriori die allgemeinere Theorie nach
Abstecken des mathematischen Rahmens gewissermaßen als zwingend, da sie
stabil unter Deformationen ist. Es ist als nicht zu unterschätzende
Eigenschaft einer guten physikalischen Theorie anzusehen, dass sich ihre
charakteristische mathematische Struktur nicht durch kleine Variation
von bestimmten Konstanten der Theorie ändert, da deren Werte letztlich
gemessen werden und somit auch unweigerlich mit Messfehlern behaftet
sind.
Die \neuerBegriff{Deformationsquantisierung} beschäftigt sich nun mit
der Anwendung der oben beschriebenen Heuristik, die oft auch
\neuerBegriff{Deformationsphilosophie} genannt wird
(vgl.\ \cite{MR2285047}), auf das Verhältnis von klassischer Mechanik und
Quantenmechanik oder auch auf das von klassischen zu
Quantenfeldtheorien.  Wir beschränken uns in dieser Arbeit auf den
ersten Fall, der im Übrigen auch in Termen von Gittereichtheorien
interpretiert werden kann. Im Folgenden wollen wir kurz erläutern, in
welchem Rahmen deformiert wird und führen den grundlegenden Begriff des
Sternprodukts ein. Dazu präsentieren wir zuerst in aller Kürze eine
mathematische Beschreibung der klassischen Mechanik, anschließend
erläutern wir die der Quantenmechanik und schließlich legen wir dar, was
man in diesem Zusammenhang mit Deformationen meint. Es würde jedoch den
Rahmen dieser Arbeit bei weitem sprengen, wenn wir versuchen wollten,
eine detaillierte Einführung in das Quantisierungsproblem und die
Deformationsquantisierung zu geben. Wir verweisen hierfür auf das sehr
schön geschriebene Lehrbuch von Waldmann \cite{waldmann:2007a}.
\section{Sternprodukte}
\label{sec:Sternprodukte}
Die Observablenalgebra in der klassischen Mechanik beschreibt man
typischerweise und idealisiert\footnote{Vgl.\ dazu
  \cite{waldmann:2007a}, insbesondere Bemerkung 1.3.7.} durch die
kommutative Poisson"=$^*$"=Algebra der glatten komplexwertigen
Funktionen auf dem Phasenraum $\mathbb{R}^{2n}$, der mit der kanonischen
Poisson"=Klammer versehen wird, oder allgemeiner durch die auf einer
Poisson"=Mannigfaltigkeit $(M,\{\cdot,\cdot\})$.  Die Poisson"=Klammer
bestimmt zusammen mit der Hamiltonfunktion die Zeitentwicklung des
Systems. \neuerBegriff{Observabel} im eigentlichen Sinne sind die
reellwertigen Funktionen $f = \overline{f} \in C^\infty(M)$. Die
möglichen Messwerte sind durch die Funktionswerte gegeben. Die reinen
Zustände  sind die Punkte auf dem Phasenraum und allgemeine
Zustände werden durch positive Borelmaße mit kompaktem Träger
beschrieben. Mit Hilfe des Rieszschen Darstellungssatzes (vgl.\
\cite{rudin:1987a}) können diese alternativ als positive Funktionale
angesehen werden.
In der Quantenmechanik beschreibt man die Observablen üblicherweise als
selbstadjungierte Opertoren auf einem separablen, komplexen, im
Allgemeinen unendlich dimensionalen Hilbertraum $\mathcal{H}$. Die
reinen Zustände eines Systems werden durch Strahlen in $\mathcal{H}$
repräsentiert, allgemeine Zustände beschreibt man durch Dichtematrizen,
d.\,h.\ durch positive Spurklasse"=Operatoren auf $\mathcal{H}$ mit Spur
Eins, vgl.\ \cite{blank2008hilbert}. Typische Observablen wie der Orts"=
oder Impulsoperator sind unbeschränkt, insbesondere nicht auf ganz
$\mathcal{H}$ definiert. Man kann sich jedoch überlegen, dass es
ausreicht, nur die beschränkten Operatoren auf $\mathcal{H}$ zu
betrachten, siehe etwa \cite[Ch.~IV~2.3]{prugovecki1981quantum}.
Es stellt sich weiter als sinnvoll heraus, die im Allgemeinen
nichtkommutative $^*$"=Algebra $\mathfrak{B}(\mathcal{H})$ aller
beschränkten Operatoren auf $\mathcal{H}$ mit Operatoradjunktion als
$^*$"=Involution als Observablenalgebra anzusehen und die
selbstadjungierten, die hier gleich den Hermiteschen Operatoren sind,
als die \neuerBegriff{observablen} Elemente darin zu
bezeichnen. Zustände können dann als positive, normierte Funktionale auf
$\mathfrak{B}(\mathcal{H})$ angesehen werden. Um den wohlverstandenen
funktionalanalytischen Fragen der Einfachheit halber im Rahmen dieser
Motivation aus dem Weg zu gehen, wollen wir im Folgenden $\mathcal{H}$
nur als Prä"=Hilbertraum auffassen, was auch ausreichend ist, solange
man den Spektralkalkül nicht benötigt.
Ein verbreitetes Vorgehen, ein System der klassischen Mechanik mit
Phasenraum $\mathbb{R}^{2n}$ zu "`quantisieren"' besteht darin, der
Ortsvariablen $q^k$ den Ortsoperator $Q^k$ zuzuordnen und der
Impulsvariablen $p_k$ den Impulsoperator $P_k$, dabei werden die Orts-
und Impulsoperatoren üblicherweise wie folgt definiert.
\begin{align}
   \label{eq:Ortssoperator}
   Q^k \colon C_0^\infty(\mathbb{R}^n) \ni \psi \mapsto (\mathbb{R}^{n} \ni
   q \mapsto (Q^k\psi)(q) = q^k \psi(q)) \in C_0^\infty(\mathbb{R}^n)
\end{align}
und
\begin{align}
   \label{eq:Impulsoperator}
   P_k \colon C_0^\infty(\mathbb{R}^n) \ni \psi \mapsto (\mathbb{R}^{n} \ni
   q \mapsto (P_k\psi)(q) = \frac{\hbar}{\I} \frac{\partial
     \psi}{\partial q^k}(q)) \in C_0^\infty(\mathbb{R}^n)\Fdot
\end{align}
Dabei bezeichnet $C^\infty_0(\mathbb{R}^n)$ den Prä"=Hilbertraum der
glatten Funktionen mit kompaktem Träger auf $\mathbb{R}^n$.
Nun ist schon bei einfachsten klassischen Observablen, etwa der Form
$q^kp_k$, nicht klar, wie diese grobe Regel anzuwenden ist, denn auf der
einen Seite vertauschen $q^k$ und $p_k$, die korrespondierenden
Operatoren $Q^k$ und $P_k$ tun dies jedoch nicht. Man muss also eine
Ordnungsvorschrift wählen. Die einfachste Möglichkeit dies zu tun, ist
die sogenannte \neuerBegriff{Standardordnung}. Hat man eine in $q^k$ und
$p_l$ polynomiale Funktion vorliegen, so schreibt man erst alle Impulse
nach rechts und ersetzt anschließend die Orts- und Impulsvariablen wie
oben beschrieben. Man kann zeigen, dass so eine $\mathbb{C}$"=lineare
Bijektion $\rho_{\mathrm{std}}\colon \mathrm{Pol}(\mathbb{R}^{2n}) \to
\mathrm{DiffOp}_{\mathrm{Pol}}(C^\infty_0(\mathbb{R}^n))$ von den
polynomialen Funktionen $\mathrm{Pol}(\mathbb{R}^{2n})$ auf
$\mathbb{R}^{2n}$ in die Differentialoperatoren auf
$C^\infty_0(\mathbb{R}^n)$ mit polynomialen Koeffizienten definiert
wird, siehe \cite[Prop. 5.2.7]{waldmann:2007a}. Analog liefern viele
andere Ordnungsvorschriften genauso $\mathbb{C}$"=lineare Bijektionen
$\rho \colon \mathrm{Pol}(\mathbb{R}^{2n}) \to
\mathrm{DiffOp}_{\mathrm{Pol}}(C^\infty_0(\mathbb{R}^n))$. Die grobe
Idee besteht nun darin, mit Hilfe eines derartigen $\rho$ die
Komposition von Differentialoperatoren auf den Raum $
\mathrm{Pol}(\mathbb{R}^{2n})$ zurückzuziehen und die so erhaltene
assoziative Multiplikation mit der punktweisen, kommutativen
Multiplikation zu vergleichen. Für $\rho_{\mathrm{std}}$ erhält man so
das sogenannte \neuerBegriff{standardgeordnete Sternprodukt}
$\star_{\mathrm{std}}$.
\begin{proposition}
   \label{prop:StandardgeordnetesSternprodukt}
   Das standardgeordnete Sternprodukt $\star_{\mathrm{std}}$ ist für
   alle $f,f' \in \mathrm{Pol}(\mathbb{R}^{2n})$ durch
   \begin{align}
      \label{eq:StandardgeordnetesSternprodukt}
      f \star_{\mathrm{std}} f' =
      \rho_{\mathrm{std}}^{-1}(\rho_{\mathrm{std}}(f)\rho_{\mathrm{std}}(f'))
      = \sum_{r = 0}^\infty \frac{1}{r!} \left(\frac{\hbar}{\I} \right)^r
      \sum_{i_1,\dots,i_r} \frac{\partial^r f}{\partial p_{i_1}
        \dotsm \partial p_{i_r}} \frac{\partial^r f'}{\partial q^{i_1}
        \dotsm \partial q^{i_r}}
   \end{align}
   gegeben. Dabei sind $(q^1,\dots,q^n,p_1,\dots,p_n)$ Koordinaten von
   $\mathbb{R}^{2n}$, wobei die $q^i$ als Orts- und die $p_i$ als
   Impulskoordinaten interpretiert werden.
\end{proposition}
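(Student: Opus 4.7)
Der Plan ist, die behauptete Identität zunächst auf Monomen zu verifizieren und anschließend durch $\mathbb{C}$-Bilinearität auf ganz $\mathrm{Pol}(\mathbb{R}^{2n}) \times \mathrm{Pol}(\mathbb{R}^{2n})$ fortzusetzen; beide Seiten sind nämlich in $(f,f')$ offensichtlich bilinear. Es genügt also, Monome $f = q^\beta p^\alpha$ und $f' = q^{\beta'} p^{\alpha'}$ mit Multiindizes $\alpha,\beta,\alpha',\beta' \in \mathbb{N}_0^n$ zu betrachten. Nach Definition der Standardordnung, bei der alle Impulse nach rechts gezogen werden, gilt dann $\rho_{\mathrm{std}}(q^\beta p^\alpha) = q^\beta \bigl(\frac{\hbar}{\I}\bigr)^{|\alpha|} \partial_q^\alpha$, wobei $\partial_q^\alpha$ die zum Multiindex $\alpha$ gehörige partielle Ableitung in den Ortsvariablen $q^1,\dots,q^n$ bezeichnet.

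Im zweiten Schritt berechne ich die Komposition $\rho_{\mathrm{std}}(f) \circ \rho_{\mathrm{std}}(f')$ explizit. Mit der verallgemeinerten Leibnizregel
\[
\partial_q^\alpha \bigl(q^{\beta'}\,\cdot\,\bigr) = \sum_{\gamma \leq \alpha} \binom{\alpha}{\gamma}\, \bigl(\partial_q^\gamma q^{\beta'}\bigr)\, \partial_q^{\alpha - \gamma}
\]
sowie $\partial_q^\gamma q^{\beta'} = \frac{\beta'!}{(\beta'-\gamma)!}\, q^{\beta'-\gamma}$ für $\gamma \leq \beta'$ (und null sonst) erhält man einen bereits standardgeordneten Ausdruck, auf den $\rho_{\mathrm{std}}^{-1}$ unmittelbar angewandt werden kann. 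Das Ergebnis ist eine endliche Summe über Multiindizes $\gamma$ mit $\gamma \leq \alpha$ und $\gamma \leq \beta'$, deren Koeffizienten die Form $\binom{\alpha}{\gamma}\, \frac{\beta'!}{(\beta'-\gamma)!}\, \bigl(\frac{\hbar}{\I}\bigr)^{|\gamma|}$ haben.

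Parallel dazu bringe ich die in der Behauptung angegebene rechte Seite in Multiindex-Form, indem ich die Summe $\sum_{i_1,\dots,i_r}$ nach dem zugehörigen Multiindex $\gamma \in \mathbb{N}_0^n$ mit $|\gamma| = r$ gruppiere; jeder solche $\gamma$ tritt genau $r!/\gamma!$-mal auf. Dadurch schreibt sich die rechte Seite als $\sum_\gamma \frac{1}{\gamma!} \bigl(\frac{\hbar}{\I}\bigr)^{|\gamma|} \bigl(\partial_p^\gamma f\bigr)\bigl(\partial_q^\gamma f'\bigr)$. Einsetzen der Monome, Verwendung von $\partial_p^\gamma p^\alpha = \frac{\alpha!}{(\alpha-\gamma)!}\, p^{\alpha-\gamma}$ und der Identität $\binom{\alpha}{\gamma} = \frac{\alpha!}{\gamma!(\alpha-\gamma)!}$ liefert exakt die im zweiten Schritt gefundenen Koeffizienten, womit die Gleichheit folgt.

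Der eigentliche Aufwand besteht ausschließlich in der sauberen Multiindex-Buchhaltung und im Umschreiben der Summe $\sum_{i_1,\dots,i_r}$ in eine Summe über Multiindizes; konzeptionell ist die Aussage eine direkte Konsequenz der Leibnizregel. Da beide Seiten bilinear sind und auf polynomialen Argumenten nur endlich viele Summanden beitragen, treten zudem keine Konvergenzfragen auf.
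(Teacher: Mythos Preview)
Dein Vorgehen ist korrekt und vollständig: die Reduktion auf Monome via Bilinearität, die Berechnung der Operatorkomposition mittels verallgemeinerter Leibnizregel und das Umgruppieren der rechten Seite nach Multiindizes liefern genau die behauptete Koeffizientenidentität. Die Arbeit selbst führt keinen eigenen Beweis, sondern verweist lediglich auf \cite[Prop.~5.2.17]{waldmann:2007a}; dein Argument entspricht dem Standardzugang, wie er auch dort zu erwarten ist.
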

Einen Beweis dieser Tatsache findet man in \cite[Prop. 5.2.17]{waldmann:2007a}.
\begin{bemerkung}
   \label{bem:PolynomialSternprodukt}
   Offensichtlich kann man die Definition von $\star_{\mathrm{std}}$ ohne
   Weiteres auf Funktionen ausdehnen, die nur in den Impulsvariablen
   polynomial sind.
\end{bemerkung}
\begin{bemerkung}
   \label{bem:Hermitesch}
   Aus physikalischer Sicht hat die Standardordnung einen wesentlichen
   Nachteil, denn sie bildet observable Elemente $f = \overline{f} \in \mathrm{Pol}(\mathbb{R}^{2n})$ im
   Allgemeinen nicht wieder auf observable Elemente ab. So ist zum
   Beispiel der Operator $\rho_{\mathrm{std}}(q^1p_1)$ kein
   Hermitescher Operator bezüglich der üblichen
   Prä"=Hilbertraumstruktur von $C^\infty_0(\mathbb{R}^n)$. Für
   $\star_{\mathrm{std}}$ spiegelt sich dieser Defekt darin wieder, dass
   im Allgemeinen die Gleichung $\overline{f \star_{\mathrm{std}} f'} =
   \overline{f'} \star_{\mathrm{std}}  \overline{f}$ verletzt ist. Wir
   werden später, nachdem der allgemeine Begriff des Sternprodukts
   eingeführt wurde, Beispiele sehen, wo diese Problematik nicht auftritt.
\end{bemerkung}
Diese Überlegungen münden letztlich in den Begriff des Sternprodukts,
den wir nun definieren wollen. Für weitere Details zur beschriebenen
Motivation müssen wir leider wieder auf \cite{waldmann:2007a}
verweisen. Ebenso verweisen wir für die im Folgenden benötigten
Grundlagen zur Differentialgeometrie, symplektischen und
Poisson"=Geometrie, die wir alle in dieser Arbeit voraussetzen wollen,
auf die zahlreiche Literatur. Es seien exemplarisch \cite{lee:2003a},
\cite{ortega.ratiu:2004} und \cite{waldmann:2007a} genannt.
\begin{definition}[Sternprodukt]
   \label{def:Sternprodukt}
   Ein \neuerBegriff{formales Sternprodukt} oder kurz
   \neuerBegriff{Sternprodukt} $\star$ für eine Poisson"=Mannigfaltigkeit
   $(M,\{\cdot,\cdot\})$ ist eine $\mathbb{C}[[\lambda]]$"=bilineare Abbildung
   \begin{align}
      \label{eq:Sternprodukt}
      \star \colon C^\infty(M)[[\lambda]] \times C^\infty(M)[[\lambda]] \to
      C^\infty(M)[[\lambda]]
   \end{align}
   der Form
   \begin{align}
      \label{eq:FormSternprodukt}
      \star = \sum_{r=0}^\infty \lambda^r \star_r
   \end{align}
   mit $\mathbb{C}$"=bilinearen Abbildungen $\star_r \colon C^\infty(M)
   \times C^\infty(M) \to C^\infty(M)$ welche auf die übliche Weise
   $\mathbb{C}[[\lambda]]$"=bilinear fortgesetzt werden, so dass für
   alle $f,f' \in C^\infty(M)$ folgende Bedingungen erfüllt sind.
   \begin{definitionEnum}
      \item %
         $\star$ ist assoziativ.
      \item %
         $f\star_0 f' = f\cdot f'$.
      \item %
         $f \star_1 f' - f' \star_1 f =  \I \{f,f'\}$.
      \end{definitionEnum}
      Falls zusätzlich $\star_r$ für alle $r \in \mathbb{N}$ ein
      Bidifferentialoperator ist, heißt $\star$
      \neuerBegriff{differentiell}.  Gilt $\overline {(f \star f')} =
      \overline{f'} \star \overline{f}$ für alle $f,f' \in C^\infty(M)$,
      nennt man $\star$ \neuerBegriff{Hermitesch}. Dabei definiert man
      $\overline{\lambda} = \lambda$.
\end{definition}
\begin{bemerkung}
   \label{bem:Unteralgebra}
   Wir wollen im Folgenden auch Produkte, die nur auf formalen
   Potenzreihen von Poisson-$^*$-Unteralgebren von
   $C^\infty(M)[[\lambda]]$ definiert sind, als Sternprodukte bezeichnen,
   sofern sie die sonstigen Eigenschaften aus Definition
   \ref{def:Sternprodukt} erfüllen.
\end{bemerkung}
\begin{bemerkung}
   \label{bem:KonvergenzProbleme}
   Interpretiert man $M$ als Phasenraum eines physikalischen Systems, so
   kann man sich überlegen, dass es sinnvoll ist anzunehmen, dass
   $\star_r$ für alle $r \in \mathbb{N}$ die Dimension von
   $\frac{1}{[\mathrm{Wirkung}]^r}$ hat. Grob gesagt legt das Beispiel
   des Kotangentialbündels nahe, die Koordinatenfunktionen auf einem
   beliebigen Phasenraum mit $\sqrt{[\mathrm{Wirkung}]}$ als
   physikalische Dimension zu interpretieren, da es auf dem
   Kotangentialbündel genausoviele Koordinaten gibt, die man als
   Ortskoordinaten interpretieren kann, wie solche, die man als
   Impulskoordinaten ansehen kann und deren Produkte somit die Einheit
   einer Wirkung tragen. Dann überlegt man sich im Wesentlichen durch
   Betrachten der Differentiationsordnung in der Fedosovkonstruktion
   (vgl.\ \cite[Abschnitt 6.4]{waldmann:2007a}), dass bei einem
   Fedosov"=Sternprodukt $\star_r$ die Dimension
   $\frac{1}{[\mathrm{Wirkung}]^r}$ hat. Da jedes differentielle
   Sternprodukt zu einem Fedosovsternprodukt äquivalent ist (vgl.\
   \cite[Satz 6.4.27]{waldmann:2007a}), ist es naheliegend, dies auch
   für beliebige differentielle Sternprodukte so zu interpretieren.
   Weiter kann man $\lambda$ als $\hbar$ auffassen, welches ja die
   Dimension einer Wirkung hat. In dieser Sichtweise wollen wir für
   $\star$ das Symbol $\star^{\scriptscriptstyle \hbar}$
   verwenden. Somit ist für alle $r \in \mathbb{N}$ die Größe
   ${\star^{\scriptscriptstyle{\hbar}}_r}\hbar^r$ dimensionslos und man
   kann für zwei Funktionen $f,f' \in C^\infty(M)$ sinnvoll nach der
   Konvergenz der Reihe $\sum_{r = 0}^\infty \hbar^r f
   {\star^{\scriptscriptstyle{\hbar}}_r} f'$ fragen. Es ist bisher im
   Allgemeinen nicht klar, ob und wie man eine sinnvolle Topologie
   angeben kann, in der diese Konvergenz zu verstehen ist. Außerdem wäre
   es wünschenswert, eine Unteralgebra $\mathcal{A}$ von
   $C^\infty(M)[[\lambda]]$ auszuzeichnen, bei der für alle $f,f' \in
   \mathcal{A}$ die Reihe $f\star^{\scriptscriptstyle{\hbar}}f'$ in
   einem bestimmten Sinne konvergiert. Siehe
   \cite{beiser.roemer.waldmann:2007a} und \cite{beiser:2005a} für
   nichttriviale Beispiele. Für weitere Betrachtungen zum
   Konvergenzproblem sei auch auf das Buch von Fedosov
   \cite{fedosov:1996a} verwiesen. In den Beispielen
   \ref{bsp:WeylSternprodukt} und \ref{bsp:WickSternprodukt} werden wir
   noch einfache, aber physikalisch relevante Situationen angeben, in
   denen das Konvergenzproblem einfach zu lösen ist.
\end{bemerkung}
\begingroup
\emergencystretch=0.6em
\begin{definition}[Äquivalenz von Sternprodukten]
   \label{def:Aequivalenz}
   Sei $\star$ bzw.\ $\star'$ ein formales Sternprodukt auf der
   Poisson"=Mannigfaltigkeit $(M,\{\cdot,\cdot\})$ bzw.\ $(M',\{\cdot,\cdot\}')$.
   \begin{definitionEnum}
   \item %
      Die formalen Sternprodukte $\star$ und $\star'$ heißen
      \neuerBegriff{isomorph}, falls es einen
      $\mathbb{C}[[\lambda]]$"=Algebraisomorphismus $S \colon
      (C^\infty(M)[[\lambda]],\star) \to
      (C^\infty(M')[[\lambda]],\star')$ gibt. Kann $S$ als
      $\mathbb{C}[[\lambda]]$"=lineare Fortsetzung eines
      Poisson"=Algebraisomorphismus $s \colon
      (C^\infty(M),\{\cdot,\cdot\}) \to (C^\infty(M'),\{\cdot,\cdot\}')$
      gewählt werden, so wollen wir $\star$ und $\star'$
      \neuerBegriff{Poisson"=isomorph} nennen. Sind
      $(M,\{\cdot,\cdot\})$ und $(M',\{\cdot,\cdot\}')$ symplektisch, so
      sagen wir auch $\star$ und $\star'$ sind
      \neuerBegriff{symplektomorph}.
   \item %
      Gilt $(M,\{\cdot,\cdot\}) = (M',\{\cdot,\cdot\}')$, so heißen
      $\star$ und $\star'$ \neuerBegriff{äquivalent}, falls es einen
      $\mathbb{C}[[\lambda]]$"=Algebraisomorphismus $S \colon
      C^\infty(M)[[\lambda]] \to C^\infty(M')[[\lambda]]$ der Form $S =
      \id + \sum_{r = 1}^\infty \lambda^r S_r$ von
      $(C^\infty(M)[[\lambda]],\star)$ nach
      $(C^\infty(M')[[\lambda]],\star')$ gibt.
   \end{definitionEnum}
\end{definition}
\begin{bemerkung}
   \label{bem:MilnorsExercise}
   Eine Abbildung $s \colon C^\infty(M) \to C^\infty(M')$ ist genau dann
   ein Algebraisomorphismus, wenn es einen Diffeomorphismus $\phi \colon
   M' \to M$ mit $s = \phi^*$ gibt (\emph{Milnors Exercise}
   \cite{grabowski:2005a}). Insbesondere kann man sich so jeden
   Poisson"=Algebraisomorphismus $s \colon (C^\infty(M),\{\cdot,\cdot\})
   \to (C^\infty(M'),\{\cdot,\cdot\}')$ als pull-back einer
   Poisson"=Abbildung denken.
\end{bemerkung}
\begin{bemerkung}
   \label{bem:AequivalenzTransformation}
   Ist $\star'$ ein Sternprodukt für $(M',\{\cdot,\cdot\}')$, so liefert
   eine Abbildung
   \begin{equation}
      \label{eq:AequivalenzTransformation}
      S = S_0 +
      \sum^{\infty}_{r=1} \lambda^r S_r \colon C^\infty(M')[[\lambda]]
      \to C^\infty(M)[[\lambda]]
   \end{equation}
   mit einem Poisson"=Algebraisomorphismus $S_0 \colon C^\infty(M') \to
   C^\infty(M)$ und $\mathbb{C}$-linearen Abbildungen $S_r \colon
   C^\infty(M') \to C^\infty(M)$ sowie $S_r(1) = 0$ für $r >
   0$ durch
   \begin{equation}
      \label{eq:AequivalenzTransformation2}
      f \star \tilde f = S(S^{-1}(f) \star' S^{-1}(\tilde f))
   \end{equation}
   für alle $f,\tilde f \in C^\infty(M)[[\lambda]]$ ein Sternprodukt $\star$
   für $(M,\{\cdot,\cdot\})$, siehe dazu auch
   \cite[Prop.~6.1.7]{waldmann:2007a}. Falls $S = \phi^*$ für einen
   Poisson"=Diffeomorphismus $\phi \colon M \to M'$ gilt, so schreiben
   wir auch $\phi^*\star' := \star$ und sprechen von einem (mit $\phi$)
   \neuerBegriff{zurückgezogenen} Sternprodukt.
\end{bemerkung}
\endgroup
Wir kommen nun zu zwei einfachen Beispielen für Sternprodukte, das \neuerBegriff{Weyl-}
und das \neuerBegriff{Wick"=Sternprodukt}.

\begin{beispiel}[Weyl"=Sternprodukt]
   \label{bsp:WeylSternprodukt}
   Auf $M = \mathbb{R}^{2n}$ mit den Ortskoordinaten $q^k$ und den
   Impulskoordinaten $p_k$ definiert
   \begin{align}
      \label{eq:WeylSternprodukt}
      f \star_{\mathrm{\scriptscriptstyle{Weyl}}} f' := \mu \circ \exp(-\frac{\I\lambda}{2}
      \omega^{rs} \frac{\partial}{\partial x^r} \otimes
      \frac{\partial}{\partial x^s} ) f\otimes f'
   \end{align}
   für alle $f,f' \in C^\infty(\mathbb{R}^{2n})$ und
   $\mathbb{C}[[\lambda]]$"=lineare Fortsetzung ein Sternprodukt auf
   $(\mathbb{R}^{2n},\omega)$. Dabei ist $\omega = dq^i \wedge d p_i$
   die kanonische symplektische Form auf $\mathbb{R}^{2n}$ mit
   Fundamentalmatrix $(\omega_{ij})$ und deren
   inverser Matrix $(\omega^{ij})$. Die Abbildung $\mu \colon
   C^\infty(\mathbb{R}^{2n}) \otimes C^\infty(\mathbb{R}^{2n}) \to
   C^\infty(\mathbb{R}^{2n})$ bezeichnet die punktweise Multiplikation
   von Funktionen. Das Sternprodukt
   $\star_{\mathrm{\scriptscriptstyle{Weyl}}}$ heißt auch
   \neuerBegriff{Weyl"=Moyal"=Sternprodukt} oder kurz
   \neuerBegriff{Weyl"=Sternprodukt}. Das Weyl"=Sternprodukt ist
   differentiell und Hermitesch.  Für die Unteralgebra der in den
   Impulskoordinaten polynomialen Funktionen sieht man leicht, dass die
   Exponentialreihe abbricht.
\end{beispiel}
\begin{beispiel}[Wick"=Sternprodukt]
   \label{bsp:WickSternprodukt}
   Es sei $M = \mathbb{C}^n = \mathbb{R}^{2n}$ mit der kanonischen
   symplektischen Form $\omega$ versehen. Dann definiert
   \begin{align}
      \label{eq:WickSternprodukt}
      f \star_{\mathrm{\scriptscriptstyle{Wick}}} f' := \mu \circ \exp(2 \lambda
      \frac{\partial}{\partial z^k} \otimes \frac{\partial}{\partial
        \overline{z}^k}) f \otimes f'
   \end{align}
   für alle $f,f' \in C^\infty(\mathbb{C}^n)$ und
   $\mathbb{C}[[\lambda]]$"=lineare Fortsetzung ein Sternprodukt für
   $(\mathbb{R}^{2n},\omega)$. Dabei ist $\mu$ definiert wie in Beispiel
   \ref{bsp:WeylSternprodukt} und $z^i$, $\overline{z}^i$ sind die
   holomorphen und anti"=holomorphen Standardkoordinaten der reellen
   Mannigfaltigkeit $\mathbb{C}^n$. Man nennt
   $\star_{\mathrm{\scriptscriptstyle{Wick}}}$
   \neuerBegriff{Wick"=Sternprodukt}. Es ist differentiell und
   Hermitesch. Wählt man die in $z^i$ und $\overline{z}^i$ polynomialen
   Funktionen als Unteralgebra der glatten Funktionen auf
   $\mathbb{C}^n$, so hat man auch hier nach Ersetzen von $\lambda$
   durch $\hbar$ eine konvergente Situation vorliegen.
\end{beispiel}
Für weitere konkrete Beispiele von Sternprodukten verweisen wir auf
\cite{waldmann:2007a}, \cite{waldmann:1998a} und \cite{gutt:1983a}.
Nach dem bisher Gesagten ergibt sich freilich die Frage, ob Sternprodukte
immer existieren und ob man diese auf sinnvolle Weise klassifizieren
kann. Dies wurde in voller Allgemeinheit, d.\,h.\ für beliebige
Poisson"=Mannigfaltigkeiten $1997$ von Kontsevich beantwortet, jedoch
erst 2003 publiziert (vgl.\ \cite{kontsevich:2003a}). Wir fassen sein
Resultat in folgendem Satz zusammen.
\begin{satz}[Existenz und Klassifikation von Sternprodukten auf
   Poisson"=Mannigfaltigkeiten]
   \label{thm:KontsevichExistenzUndKlassifikation}
   Sei $(M,\{\cdot,\cdot\})$ eine Poisson"=Mannigfaltigkeit mit
   Poisson"=Tensor $\pi$ (vgl.\ \cite[Def. 4.1.7]{waldmann:2007a}), dann
   gilt Folgendes.
   \begin{satzEnum}
   \item %
      Es gibt ein differentielles Sternprodukt für
      $(M,\{\cdot,\cdot\})$.
   \item%
      Die Äquivalenzklassen von Sternprodukten auf $(M,\{\cdot,\cdot\})$
      stehen zu den Äquivalenzklassen von formalen Deformationen
      (vgl.\ \cite[Def. 4.2.35]{waldmann:2007a}) des Poisson"=Tensors
      $\pi$ modulo formalen Diffeomorphismen im Sinne von
      \cite[Def. 4.2.40]{waldmann:2007a} in Bijektion.
   \end{satzEnum}
\end{satz}
Das Problem dieser Aussage besteht darin, dass die formalen
Poisson"=Tensoren modulo formalen Diffeomorphismen im Allgemeinen sehr
unzugänglich sind.
Anders als im Poisson"=Fall ist die Existenz eines Sternprodukts auf
einer symplektischen Mannigfaltigkeit aufgrund des Darboux"=Theorems
(vgl.\ \cite[Satz 3.1.24]{waldmann:2007a}) zumindest lokal klar. Die
Schwierigkeit ist hier, zu zeigen, dass auch tatsächlich ein
globales Sternprodukt existiert. DeWilde und Lecomte gelang der erste
allgemeine Existenzbeweis für Sternprodukte auf symplektischen
Mannigfaltigkeiten schon 1983, siehe \cite{dewilde.lecomte:1983b} sowie
\cite{gutt.rawnsley:1999a} für eine neuere Darstellung. Einen anderen
unabhängigen und viel geometrischeren Beweis konnte Fedosov in einer
zunächst wenig beachteten Arbeit angeben, siehe
\cite{fedosov:1985a,fedosov:1986a,fedosov:1989a}. Ein dritter Beweis
stammt von Omori, Maeda und Yoshida
\cite{omori.maeda.yoshioka:1991a}. Mit Hilfe der Fedosov Konstruktion
gelang es die Äquivalenzklassen von Sternprodukten auf symplektischen
Mannigfaltigkeiten zu klassifizieren, insbesondere sind hier die
Arbeiten von Nest und Tsygan \cite{nest.tsygan:1995a,nest.tsygan:1995b},
Deligne \cite{deligne:1995a}, Bertelson, Cahen und Gutt
\cite{bertelson.cahen.gutt:1997a} und Weinstein und Xu
\cite{weinstein.xu:1998a} zu nennen.
\begin{satz}[Klassifikation von Sternprodukten auf symplektischen
   Mannigfaltigkeiten]
   \label{satz:KlassifikationSymplektisch}
   Sei $(M,\omega)$ eine symplektische Mannigfaltigkeit. Dann stehen die
   Äquivalenzklassen von differentiellen Sternprodukten zu den formalen
   Potenzreihen $\mathrm{H}^2_{\mathrm{dR}}(M,\mathbb{C})[[\lambda]]$,
   der zweiten de"=Rham"=Kohomologie von $M$ mit komplexen
   Koeffizienten in Bijektion.
\end{satz}
\begin{bemerkung}[Fedosov"=Konstruktion]
   \label{bem:Fedosov}
   Die Fedosovkonstruktion liefert für jedes geschlossene $\Omega \in
   \lambda \Gamma^\infty(\Bigwedge^2 T^*M)[[\lambda]]$ ein differentielles
   Sternprodukt $\star_{\Omega}$, das sogenannte
   \neuerBegriff{Fedosov"=Sternprodukt}. Es zeigt sich, dass erstens zwei
   Fedosov"=Sternprodukte $\star_{\Omega}$ und $\star_{\Omega'}$ genau
   dann äquivalent sind, wenn $\Omega - \Omega'$ exakt ist, d.\,h.\ in
   $H^2_{\mathrm{dR}}(M,\mathbb{C})[[\lambda]]$ die Gleichung $[\Omega]
   = [\Omega]$ gilt und zweitens dass jedes differentielle Sternprodukt
   zu einem Fedosovsternprodukt äquivalent ist. Für eine ausführliche
   Lehrbuchdarstellung dieses Sachverhaltes verweisen wir auf
   \cite[Abschnitt 6.4]{waldmann:2007a}. Die Fedosovkonstruktion hat außer
   $\Omega$ noch weitere Eingangsdaten, etwa eine torsionsfreie
   kovariante Ableitung $\nabla$. Allerdings ändert eine Variation
   dieser zusätzlichen Daten nicht die Äquivalenzklasse des
   erhaltenen Sternprodukts, weshalb wir sie hier unterschlagen wollen.
\end{bemerkung}
Satz \ref{satz:KlassifikationSymplektisch} besagt im Speziellen, dass
auf einer Mannigfaltigkeit mit verschwindender zweiter de"=Rham
Kohomologie, wie etwa dem $\mathbb{R}^n$, $n \in \mathbb{N}$, alle
Sternprodukte äquivalent sind. Insbesondere sind nach dem
Poincar\'e"=Lemma im symplektischen Fall auch alle Sternprodukte lokal
äquivalent, womit sich eine Nicht"=Äquivalenz zweier Sternprodukte als
globaler Effekt erweist. Ersteres legt auch nahe, die obigen Ergebnisse
als Klassifikation bis auf Wahl einer (verallgemeinerten)
Ordnungsvorschrift zu interpretieren. Innerhalb einer Äquivalenzklasse von
Sternprodukten kann es durchaus verschiedene Sternprodukte geben, die
physikalisch als nicht äquivalent anzusehen sind.
Bisher haben wir nur die Observablenalgebra betrachtet, jedoch noch
nicht darüber gesprochen, wie wir Zustände in der
Deformationsquantisierung beschreiben wollen. Auch wenn dies für die
vorliegende Arbeit nicht direkt von Belang ist, wollen wir auf diesen
Punkt der allgemeinen Übersicht halber in aller Kürze kurz eingehen.
Es stellt sich als sinnvoll heraus, in der Deformationsquantisierung,
wie auch in anderen Zugängen zur Quantenmechanik, Zustände als ein von
den Observablen abgeleitetes Konzept anzusehen, nämlich als positive
Funktionale auf der Observablenalgebra. Mit Hilfe des Begriffs konvexer
Zerlegbarkeit, kann man von gemischten und reinen Zuständen sprechen. Um
das für die Quantenmechanik wichtige Superpositionsprinzip auch in die
Deformationsquantisierung implementieren zu können, kann man
Darstellungen von Sternproduktalgebren betrachten.  Wir können an dieser
Stelle jedoch nicht weiter darauf eingehen. Siehe \cite{waldmann:2007a}
und die Referenzen dort für weitergehende Betrachtungen. Allgemein zum
Zustandskonzept und Superpositionsprinzip sehr lesenswert ist das Buch
von Araki \cite{araki1999mathematical}.
\section{Symmetrien und Invarianzen}
\label{sec:SymmetrienUndInvarianzen}
Die in dieser Arbeit betrachteten Symmetrien sind mathematisch immer
durch eine Gruppenwirkung einer Lie"=Gruppe $G$ auf einer
Mannigfaltigkeit $M$ gegeben, wobei letztere als Phasenraum eines
klassischen mechanischen Systems interpretiert werden kann. Wenn wir von
Lie"=Gruppe sprechen, meinen wir eine zusammenhängende
Lie"=Gruppe. Sofern nichts anderes gesagt, bezeichnet $\lieAlgebra$
stets die Lie"=Algebra von $G$.
\subsection{Impulsabbildungen und Phasenraumreduktion in der klassischen Mechanik}
\label{sec:ImpulsAbbKlassMechanik}
In diesem Abschnitt wiederholen wir den für diese Arbeit wichtigen
Begriff der Impulsabbildung und stellen die Phasenraum"=Reduktion nach
Marsden"=Weinstein vor. Wir orientieren uns hierbei unter anderem an
\cite{waldmann:2007a}, \cite{ortega.ratiu:2004} und
\cite{abraham.marsden:1985a}.
Es sei zunächst daran erinnert, dass man eine
Wirkung $\Phi \colon G \times M \to M$ einer Lie"=Gruppe $G$ auf einer
Poisson"=Mannigfaltigkeit $(M,\{\cdot,\cdot\})$ \neuerBegriff{kanonisch}
nennt, falls sie durch Poisson"=Abbildungen wirkt, d.\,h.\ falls
$\Phi_g^*\{f,f'\} = \{\Phi_g^*f,\Phi_g^*f'\}$ für alle $g \in G$ und $f,f'
\in C^\infty(M)$ gilt. Ist $M$ symplektisch, bedeutet dies, dass sie
durch Symplektomorphismen wirkt, man spricht in diesem Fall auch von
einer \neuerBegriff{symplektischen} Wirkung.

\begin{definition}[Impulsabbildung]
   \label{def:Impulsabbildung}
   Sei $(M,\{\cdot,\cdot\})$ eine Poisson"=Mannigfaltigkeit und $G$ eine
   Lie"=Gruppe, die  kanonisch auf $M$ operiere. Eine glatte
   Abbildung

   \begin{align}
      \label{eq:Impulsabbildung}
      J \colon M \to \lieAlgebra^*
   \end{align}
   heißt \neuerBegriff{Impulsabbildung}, falls für jedes $\xi \in
   \lieAlgebra$  die Gleichung
   \begin{align}
      \label{eq:ImpulsabbildungDefGleichung}
      \xi_M = \{\cdot,J(\xi)\}
   \end{align}
   erfüllt ist, wobei $J(\xi) \in C^\infty(M)$ als punktweise duale
   Paarung $J(\xi)(p) := \dPaar{J(p)}{\xi}$ für alle $p \in M$ und $\xi
   \in \lieAlgebra$ zu verstehen ist.
\end{definition}
\begin{bemerkung}
   \label{bem:ImpulsabbildungSichtweiseRaume}
   Äquivalent zu Definition \ref{def:Impulsabbildung} kann man eine
   Impulsabbildung $J$ vermöge $J(\xi)(p) = \dPaar{J(p)}{\xi}$ für alle
   $p \in M$   und $\xi
   \in \lieAlgebra$ auch als lineare Abbildung
   \begin{align}
      \label{eq:ImpulsabbildungAndereSichtweise}
      J \colon \lieAlgebra \to C^\infty(M)
   \end{align}
   auffassen, die $\overline{J(\xi)} = J(\xi)$ für alle $\xi \in
   \lieAlgebra$ und Gleichung \eqref{eq:ImpulsabbildungDefGleichung}
   erfüllt. Wir wollen im Folgenden beide Sichtweisen gleichberechtigt
   verwenden.
\end{bemerkung}

\begin{bemerkung}
   \label{bem:WirkungenundAequivarianteImpulsabb}
   $G$ operiert auf $\lieAlgebra$ vermöge der adjungierten, auf
   $\lieAlgebra^*$ vermöge der koadjungierten Wirkung und auf
   $C^\infty(M)$ durch pull"=backs. Wenn wir von der $G$-Äquivarianz der
   Impulsabbildung sprechen, beziehen wir uns immer auf diese
   (Links-)Wir\-kungen. Des Weiteren sei bemerkt, dass die
   $G$"=Äquivarianz bezüglich der einen Auffassung einer Impulsabbildung
   äquivalent zur
   $G$"=Äquivarianz in der anderen Auffassung ist.
\end{bemerkung}

\begin{bemerkung}
   \label{bem:Eindeutigkeit}
         Ist $M$ zusammenhängend, so ist $J$ eindeutig bis auf Addition von
         Elementen aus $\lieAlgebra^*$.
\end{bemerkung}

Eine Impulsabbildung braucht für eine gegebene Wirkung nicht zu
existieren, siehe etwa \cite[4.5.16]{ortega.ratiu:2004} für ein
einfaches Gegenbeispiel im symplektischen Fall. Die folgende Proposition,
vgl.\ \cite[Prop. 4.5.17]{ortega.ratiu:2004}, klärt die Situation für den
symplektischen Fall und liefert zwei einfach zu prüfende hinreichende
Existenzkriterien. Die entsprechende Frage für allgemeine
Poisson"=Mannigfaltigkeiten ist komplizierter.

\begin{proposition}
   \label{prop:ExistenzImpulsAbbildung}
   Sei $(M,\omega)$ eine symplektische Mannigfaltigkeit, auf der eine
   Lie"=Gruppe $G$ symplektisch operiere. Es gibt genau dann eine
   Impulsabbildung zu dieser Wirkung, wenn die lineare Abbildung
   \begin{align}
      \label{eq:AbbildungFuerExistenzVonImpulsabbildungen}
      \lieAlgebra/[\lieAlgebra,\lieAlgebra] &\to H^1_{\mathrm{dR}}(M,\mathbb{R})  \\
      [\xi] &\mapsto [\omega(\xi_M,\cdot)]
   \end{align}
   identisch Null ist. Dabei bezeichnet $H^1_{\mathrm{dR}}(M,\mathbb{R})$ die erste
   de"=Rham"=Kohomologie von $M$ mit Koeffizienten in $\mathbb{R}$.
\end{proposition}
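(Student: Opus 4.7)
The plan is to translate the existence of a momentum map into a statement about de~Rham classes of the $1$-forms $\alpha_\xi := \omega(\xi_M,\cdot) = i_{\xi_M}\omega$ for $\xi \in \lieAlgebra$. Each such $\alpha_\xi$ is closed: from $\mathcal{L}_{\xi_M}\omega = 0$ (since $G$ acts symplectically) and $\D\omega = 0$, Cartan's magic formula gives $\D\alpha_\xi = \mathcal{L}_{\xi_M}\omega - i_{\xi_M}\D\omega = 0$. Hence $\xi \mapsto [\alpha_\xi]$ defines a linear map $\Psi \colon \lieAlgebra \to H^1_{\mathrm{dR}}(M,\mathbb{R})$, and the task is to show (i)~that $\Psi$ factors through $\lieAlgebra/[\lieAlgebra,\lieAlgebra]$ so the statement even makes sense, and (ii)~that the descended map vanishes if and only if a momentum map exists.

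For (i), I would use that $\xi \mapsto \xi_M$ is a Lie-algebra (anti-)homomorphism, so $[\xi_M,\eta_M] = \pm[\xi,\eta]_M$. Combined with Cartan and the closedness of $i_{\eta_M}\omega$ this gives
\[
i_{[\xi,\eta]_M}\omega \;=\; \pm\, i_{[\xi_M,\eta_M]}\omega \;=\; \pm\, \mathcal{L}_{\xi_M}\, i_{\eta_M}\omega \;=\; \pm\, \D\bigl(\omega(\eta_M,\xi_M)\bigr),
\]
which is exact; hence $\Psi$ vanishes on $[\lieAlgebra,\lieAlgebra]$ and descends as claimed.

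For (ii), the forward direction is immediate: if $J$ exists, then $\xi_M = \{\cdot,J(\xi)\}$ identifies $\xi_M$ with the Hamiltonian vector field of $J(\xi)$, so in the symplectic setting $\alpha_\xi = \D J(\xi)$ is exact and $\Psi(\xi) = 0$ for all $\xi$. For the converse, assuming $\Psi = 0$, I would fix a basis $e_1,\dots,e_n$ of $\lieAlgebra$, choose primitives $J_i \in C^\infty(M)$ with $\D J_i = \alpha_{e_i}$ (possible since each $[\alpha_{e_i}] = 0$), and extend by linearity to a map $J \colon \lieAlgebra \to C^\infty(M)$. Then $\D J(\xi) = \alpha_\xi$ holds for all $\xi$, which is precisely \eqref{eq:ImpulsabbildungDefGleichung} in the formulation of Remark~\ref{bem:ImpulsabbildungSichtweiseRaume}.

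The main obstacle is really only sign and convention bookkeeping: one must align the conventions for $\{\cdot,\cdot\}$, Hamiltonian vector fields, and the left-action map $\xi \mapsto \xi_M$ so that the equivalence $\xi_M = \{\cdot,J(\xi)\} \Leftrightarrow \D J(\xi) = \alpha_\xi$ and the identity $[\xi_M,\eta_M] = \pm[\xi,\eta]_M$ fit together coherently. Once these signs are pinned down, the mathematical content of the proof reduces to the two one-line applications of Cartan's formula given above, together with the elementary basis argument in the final step.
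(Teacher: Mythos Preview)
Your argument is correct and is the standard proof; the paper itself does not prove this proposition but simply refers to \cite[Prop.~4.5.17]{ortega.ratiu:2004}, so there is nothing to compare against beyond noting that your sketch matches the usual treatment found there. The only loose end you flag yourself—getting the signs in $[\xi_M,\eta_M] = \pm[\xi,\eta]_M$ and $\D J(\xi) = \omega(\xi_M,\cdot)$ consistent with the paper's convention $\xi_M = \{\cdot,J(\xi)\}$—is genuine bookkeeping but does not affect the argument, since exactness of $i_{[\xi,\eta]_M}\omega$ is insensitive to the overall sign.
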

\begin{korollar}
   \label{kor:HinreichendeBedFuerExistenzVonImpulsabb}
   Seien $(M,\omega)$ und $G$ wie in Proposition
   \ref{prop:ExistenzImpulsAbbildung}. Dann sind folgende Bedingungen
   hinreichend für die Existenz einer Impulsabbildung.
   \begin{korollarEnum}
      \item %
         \label{item:deRhamExFuerImpulsabb}
         $H^1_{\mathrm{dR}}(M,\mathbb{R}) = 0$.
      \item %
         \label{item:LieAlgebarExFuerImpulsabb}
         $[\lieAlgebra,\lieAlgebra] = \lieAlgebra$.
   \end{korollarEnum}
\end{korollar}

\begin{bemerkung}
   \label{bem:HinreichendeBedFuerExistenzVonImpulsabb}
   Die Bedingung~\refitem{item:deRhamExFuerImpulsabb} aus Korollar
   \ref{kor:HinreichendeBedFuerExistenzVonImpulsabb} ist insbesondere
   für jede einfach zusammenhängende Mannigfaltigkeit $M$ erfüllt.
   Lie"=Algebren, die Bedingung~\refitem{item:LieAlgebarExFuerImpulsabb}
   erfüllen heißen auch vollkommen oder perfekt, insbesondere ist jede
   halbeinfache Lie"=Algebra perfekt, siehe
   \cite[Satz~10.7]{schottenloher:1995a}. Für nichttriviale abelsche
   Lie"=Algebren ist~\refitem{item:LieAlgebarExFuerImpulsabb}
   offensichtlich nie erfüllt.
\end{bemerkung}

Die folgende Proposition (vgl.\ \cite{waldmann:2007a}) ist insbesondere
deshalb nützlich, da wir Lie"=Gruppen immer als zusammenhängend
voraussetzen.

\begin{proposition}
   \label{prop:AequivariantUndInfinitesimalAequivariant}
   Sei $(M,\{\cdot,\cdot\})$ eine Poisson"=Mannigfaltigkeit, $G$ eine
   zusammenhängende Lie"=Gruppe, die auf $M$ kanonisch operiere. Eine
   Impulsabbildung $J\colon M \to \lieAlgebra^*$ ist  genau dann
   $G$"=äquivariant, wenn die Gleichung $\{J(\xi),J(\eta)\} =
   J([\xi,\eta])$ für alle $\xi,\eta \in \lieAlgebra$ erfüllt ist.
\end{proposition}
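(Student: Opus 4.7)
The proof plan is to verify both implications by differentiating the $G$-equivariance condition at the identity and, in the converse direction, integrating the infinitesimal relation back up via a connectedness argument.

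For the direction ``$G$-equivariance $\Rightarrow$ bracket condition'', I would rewrite the equivariance of $J$ in the form
\begin{equation*}
  \Phi_g^* J(\xi) \;=\; J(\mathrm{Ad}_{g^{-1}}\xi)
  \qquad\text{for all } g\in G,\;\xi\in\lieAlgebra,
\end{equation*}
which is the pointwise pairing version of $J\circ\Phi_g = \AdG^*_{g^{-1}}\circ J$. Differentiating at $g=e$ along $\exp(t\eta)$ and evaluating at $p\in M$, the left-hand side gives $\mathcal L_{\eta_M} J(\xi)(p) = \eta_M[J(\xi)](p)$, which by the defining property $\eta_M = \{\cdot,J(\eta)\}$ of the momentum map equals $\{J(\xi),J(\eta)\}(p)$. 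The right-hand side gives $J(-\mathrm{ad}_\eta \xi)(p) = J([\eta,\xi])(p) = -J([\xi,\eta])(p)$. Equating yields $\{J(\xi),J(\eta)\} = -J([\xi,\eta]) \cdot (-1) = J([\eta,\xi])$ with the correct signs, i.e.\ $\{J(\eta),J(\xi)\} = J([\eta,\xi])$.

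For the converse, I would fix $p\in M$ and consider the two $\lieAlgebra^*$-valued smooth maps on $G$,
\begin{equation*}
  a(g) := J(\Phi_g(p)), \qquad b(g) := \AdG^*_{g^{-1}} J(p),
\end{equation*}
which clearly satisfy $a(e) = b(e) = J(p)$. The goal is to show $a \equiv b$. For any $\eta\in\lieAlgebra$ and any $g\in G$, pairing with a test vector $\xi\in\lieAlgebra$ and differentiating the curve $t\mapsto a(g\exp(t\eta))$ at $t=0$ gives, using the momentum map property and the hypothesis $\{J(\xi),J(\eta)\} = J([\xi,\eta])$,
\begin{equation*}
  \frac{\D}{\D t}\Big|_{t=0}\!\dPaar{a(g\exp(t\eta))}{\xi}
  = \eta_M[J(\xi)](\Phi_g(p))
  = J([\xi,\eta])(\Phi_g(p))
  = -\dPaar{\mathrm{ad}^*_\eta\,a(g)}{\xi}.
\end{equation*}
A direct calculation of the coadjoint orbit gives the same ODE for $b$, namely $\frac{\D}{\D t}|_{t=0} b(g\exp(t\eta)) = -\mathrm{ad}^*_\eta\,b(g)$. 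Hence the difference $d(g) := a(g) - b(g)$ satisfies, along every curve $t\mapsto g\exp(t\eta)$, the linear ODE $\dot d = -\mathrm{ad}^*_\eta\,d$.

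To finish, I would use connectedness of $G$: from $d(e) = 0$ and the linear homogeneous ODE along each one-parameter subgroup $\exp(t\eta)$, uniqueness forces $d \equiv 0$ on a neighborhood of $e$, and the same ODE argument applied along curves $g\exp(t\eta)$ propagates the identity $d(g)=0$ to $d(g\exp(t\eta))=0$. Since every element of a connected Lie group is a finite product of exponentials, this yields $a = b$ on all of $G$, which is precisely the $G$-equivariance of $J$. The main obstacle is keeping track of the signs in the coadjoint action and in the Poisson-bracket sign convention from the definition $\xi_M = \{\cdot,J(\xi)\}$; once these are fixed, the ODE/connectedness argument is standard and the two implications become essentially dual.
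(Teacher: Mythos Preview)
The paper does not give its own proof of this proposition; it simply cites \cite{waldmann:2007a}. Your argument is the standard textbook proof and is essentially correct: differentiate the equivariance relation $\Phi_g^* J(\xi) = J(\Ad_{g^{-1}}\xi)$ at $g=e$ for one direction, and for the converse show that $g\mapsto J(\Phi_g(p))$ and $g\mapsto \Ad_{g^{-1}}^* J(p)$ satisfy the same first-order linear ODE along one-parameter subgroups, then invoke connectedness via finite products of exponentials. This is exactly the kind of argument one finds in the cited reference, and the paper itself uses the same pattern (in a much terser form) for the quantum analogue in Proposition~\ref{bem:QuantenImpulsabbHamiltonian}.

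One cosmetic point: your sign bookkeeping in the forward direction wobbles---you write $J(-\mathrm{ad}_\eta\xi) = J([\eta,\xi])$, which is off by a sign since $-\mathrm{ad}_\eta\xi = -[\eta,\xi] = [\xi,\eta]$; the correct chain is simply
\[
\frac{\D}{\D t}\Big|_{t=0} J(\Ad_{\exp(-t\eta)}\xi) = J([\xi,\eta]),
\]
giving $\{J(\xi),J(\eta)\} = J([\xi,\eta])$ directly without the extra sign juggling. You already flag the signs as the main obstacle, and indeed once they are handled cleanly both implications are routine.
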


Eine gegebene Impulsabbildung $J\colon M \to \lieAlgebra^*$ ist im Allgemeinen nicht $G$"=äquivariant. Für kompakte $G$ kann man aus $J$
jedoch immer durch Integration über $G$ eine andere, invariante
Impulsabbildung bauen, vgl.\ \cite[Prop. 4.5.19]{ortega.ratiu:2004}.

\begin{proposition}
   \label{prop:KompaktInvarianteImpulsabbildung}
   Sei $G$ eine kompakte Lie"=Gruppe, die auf einer
   Poisson"=Mannigfaltigkeit $(M,\{\cdot,\cdot\})$ kanonisch
   operiere. Falls eine Impulsabbildung existiert, so gibt es auch eine
   $G$"=äquivariante Impulsabbildung.
\end{proposition}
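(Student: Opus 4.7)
The plan is to average the given (not necessarily equivariant) momentum map $J$ over $G$ with respect to the normalized Haar measure $dg$, which exists because $G$ is compact. In the linear-map viewpoint of Remark~\ref{bem:ImpulsabbildungSichtweiseRaume} I would define
\begin{equation*}
   \tilde J(\xi) := \int_G \Phi_g^*\bigl(J(\Ad_g\xi)\bigr)\, dg \in C^\infty(M) \qquad (\xi \in \lieAlgebra),
\end{equation*}
with $\mathbb{C}$-linear dependence on $\xi$. Smoothness of $\tilde J(\xi)$ on $M$ then follows from the standard theorem on differentiation under the integral sign, using that the integrand is smooth in $(g,p)$ and that $G$ is compact.

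I would then check the defining properties in two steps. For $G$-equivariance: replacing $\xi$ by $\Ad_h\xi$ and $p$ by $h\cdot p$ in the integrand and substituting $g':=gh$, bi-invariance of the Haar measure on a compact group yields $\Phi_h^*\tilde J(\Ad_h\xi) = \tilde J(\xi)$, which by Remark~\ref{bem:WirkungenundAequivarianteImpulsabb} is exactly $G$-equivariance in the dual viewpoint $\tilde J\colon M \to \lieAlgebra^*$.

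For the momentum map property $\{f,\tilde J(\xi)\} = \xi_M f$, observe that $\{f,\cdot\}$ is a first-order differential operator and hence commutes with the Haar integral. I would then examine the integrand for fixed $g$: using that $\Phi_g$ is a Poisson diffeomorphism (the action is canonical) and that $J$ is a momentum map, one obtains
\begin{equation*}
   \{f,\Phi_g^*J(\Ad_g\xi)\} = \Phi_g^*\bigl\{\Phi_{g^{-1}}^*f,\, J(\Ad_g\xi)\bigr\} = \Phi_g^*\bigl((\Ad_g\xi)_M(\Phi_{g^{-1}}^*f)\bigr),
\end{equation*}
and combining this with the standard identity $\Phi_g^*\xi_M = (\Ad_{g^{-1}}\xi)_M$ for the pull-back of fundamental vector fields, the right-hand side simplifies pointwise to $\xi_M f$, \emph{independent of} $g$. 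Integration against the normalized Haar measure then yields $\{f,\tilde J(\xi)\} = \xi_M f$, as required by \eqref{eq:ImpulsabbildungDefGleichung}.

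The main technical obstacle is keeping the sign conventions for pull-backs, for the coadjoint action, and for the fundamental vector field aligned so that the crucial cancellation $\Ad_{g^{-1}}\Ad_g = \id$ manifests as the $g$-independence of the integrand in the momentum-map computation; once this is observed, all remaining steps (smoothness, the interchange of Poisson bracket and integral, equivariance) are entirely routine.
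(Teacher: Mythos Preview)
Your argument is correct and is precisely the standard averaging construction. The paper does not supply its own proof here but only refers to \cite[Prop.~4.5.19]{ortega.ratiu:2004}; however, your computation coincides with the proof the paper gives for the quantum analogue in Proposition~\ref{prop:IntegrierenVonQuantenimpulsabbildungen}, up to the substitution $g \mapsto g^{-1}$ in the integration variable (note that in the paper's conventions $g\cdot f = \Phi_{g^{-1}}^* f$ and $g\cdot\xi = \Ad_g\xi$, so the formula there reads $\qJ[g](\xi) = \Phi_{g^{-1}}^*\qJ(\Ad_{g^{-1}}\xi)$).
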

Eine andere Idee besteht darin, eine gegebene Impulsabbildung additiv so
abzuändern, dass sie äquivariant wird. Dies ist Gegenstand der folgenden
Proposition, siehe \cite[Satz 3.3.48]{waldmann:2007a}.

\begin{proposition}
   \label{prop:ExistenzAequivarianterImpulsabbildungen}
   Sei $G$ eine Lie"=Gruppe, die auf einer zusammenhängenden
   symplektischen Mannigfaltigkeit $(M,\omega)$ symplektisch operiere und
   $J\colon M \to \lieAlgebra^*$ eine Impulsabbildung. Dann ist für alle
   $\xi,\eta \in \lieAlgebra$ die Funktion $c(\xi,\eta) := J([\xi,\eta])
   - \{J(\xi),J(\eta)\}$ konstant auf $M$ und die Abbildung $c \colon
   \lieAlgebra \times \lieAlgebra \ni (\xi,\eta) \mapsto c(\xi,\eta) \in
   \mathbb{R}$ ist ein $2$"=Lie"=Algebra"=Kozyklus. Weiter gibt es genau
   dann eine $G$"=äquivariante Impulsabbildung $J'$, wenn das Element
   $[c]$ der zweiten Lie"=Algebra"=Kohomologie $H^2(\lieAlgebra)$
   trivial ist.
\end{proposition}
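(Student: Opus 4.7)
Mein Plan ist es, die drei Behauptungen separat zu beweisen und jeweils auf die Jacobi"=Identität der Poisson"=Klammer sowie die definierende Relation $\xi_M = \{\cdot, J(\xi)\}$ der Impulsabbildung zurückzuführen.

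Für die Konstanz von $c(\xi,\eta)$ würde ich zeigen, dass diese Funktion verschwindendes Hamiltonsches Vektorfeld besitzt. Mit $X_f := \{\cdot,f\}$ liefert die Jacobi"=Identität die bekannte Identität $X_{\{f,g\}} = -[X_f,X_g]$. Setzt man $f = J(\xi)$, $g = J(\eta)$ und verwendet $X_{J(\xi)} = \xi_M$ sowie $[\xi_M,\eta_M] = -[\xi,\eta]_M$ (für Linkswirkungen), so folgt $X_{\{J(\xi),J(\eta)\}} = [\xi,\eta]_M = X_{J([\xi,\eta])}$ und damit $X_{c(\xi,\eta)} = 0$. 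Wegen der Nichtentartung von $\omega$ und der Zusammenhängigkeit von $M$ ist $c(\xi,\eta)$ dann konstant.

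Die Antisymmetrie von $c$ ist evident. Für den Kozyklusnachweis rechne ich die zyklische Summe $c([\xi,\eta],\zeta) + c([\eta,\zeta],\xi) + c([\zeta,\xi],\eta)$ aus. Der Beitrag der $J$"=Terme $J([[\xi,\eta],\zeta]) + J([[\eta,\zeta],\xi]) + J([[\zeta,\xi],\eta])$ verschwindet wegen der Jacobi"=Identität in $\lieAlgebra$. Den Beitrag der Poisson"=Terme behandle ich, indem ich die Identität $J([\xi,\eta]) = \{J(\xi),J(\eta)\} + c(\xi,\eta)$ einsetze; da $c(\xi,\eta)$ konstant ist, fällt es in der äußeren Poisson"=Klammer weg, und übrig bleibt die zyklische Summe $-\{\{J(\xi),J(\eta)\},J(\zeta)\} - \{\{J(\eta),J(\zeta)\},J(\xi)\} - \{\{J(\zeta),J(\xi)\},J(\eta)\}$, welche nach der Jacobi"=Identität der Poisson"=Klammer ebenfalls verschwindet.

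Für den Äquivalenzteil nutze ich Bemerkung \ref{bem:Eindeutigkeit}, der zufolge auf dem zusammenhängenden $M$ jede weitere Impulsabbildung die Gestalt $J' = J + \alpha$ mit einem $\alpha \in \lieAlgebra^*$ besitzt. Da Poisson"=Klammern mit konstanten Funktionen verschwinden, ergibt sich unmittelbar $c_{J'}(\xi,\eta) = c(\xi,\eta) + \alpha([\xi,\eta])$. Der Ausdruck $(\xi,\eta) \mapsto \alpha([\xi,\eta])$ ist bis auf Vorzeichen gerade der Chevalley"=Eilenberg"=Korand der $1$"=Kokette $\alpha$, weshalb die Klassen $[c]$ und $[c_{J'}]$ in $H^2(\lieAlgebra)$ übereinstimmen. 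Nach Proposition \ref{prop:AequivariantUndInfinitesimalAequivariant} ist $J'$ wegen der Zusammenhängigkeit von $G$ genau dann $G$"=äquivariant, wenn $c_{J'} = 0$ ist, also wenn $c$ ein Korand ist, also wenn $[c] = 0$ gilt. Die Hauptschwierigkeit sehe ich allein in der konsequenten Kontrolle der Vorzeichenkonventionen zwischen Linkswirkung, Hamiltonscher Vektorfeldzuordnung und Chevalley"=Eilenberg"=Korand; die Rechnungen selbst sind nach der Jacobi"=Identität elementar.
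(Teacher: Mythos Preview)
Your proof is correct and follows the standard argument. Note, however, that the paper itself does not supply a proof for this proposition: it is stated with an explicit reference to \cite[Satz 3.3.48]{waldmann:2007a} and no proof environment follows. Your write-up is precisely the classical argument one finds in such references---constancy via vanishing Hamiltonian vector field, cocycle property via the two Jacobi identities, and the cohomological characterisation of equivariance via Bemerkung~\ref{bem:Eindeutigkeit} and Proposition~\ref{prop:AequivariantUndInfinitesimalAequivariant}---so there is nothing to compare beyond confirming that your reasoning matches the intended cited proof. Your caveat about sign conventions is well placed but harmless: all signs cancel in the final statements, and the paper's standing assumption that Lie groups are connected justifies your use of Proposition~\ref{prop:AequivariantUndInfinitesimalAequivariant}.
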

\begin{bemerkung}
   \label{bem:InvarianteImpulsabbildung}
   Falls es eine Impulsabbildung gibt, so ist eine hinreichende
   Bedingung für die Existenz einer $G$"=äquivarianten Impulsabbildung,
   dass die zweite Lie"=Algebra"=Kohomologie $H^2(\lieAlgebra)$
   verschwindet. Dies ist nach dem Whitehead"=Lemma
   (vgl.\ \cite[52]{guillemin.sternberg:1984b}) insbesondere für jede
   halbeinfache Lie"=Algebra erfüllt.
\end{bemerkung}

Siehe \cite{ortega.ratiu:2004}  und die dort
angegebenen Referenzen für weitere Kriterien für die Existenz
äquivarianter Impulsabbildungen.

Im Lichte der obigen Betrachtungen erscheint die folgende Definition
gerechtfertigt.

\begin{definition}
   \label{def:Hamiltonsch}
   Sei $(M,\{\cdot,\cdot\})$ eine Poisson"=Mannigfaltigkeit und
   $\Phi\colon G \times M \to M$ eine kanonische Wirkung einer Lie"=Gruppe $G$.
   \begin{definitionEnum}
      \item %
         $\Phi$ heißt \neuerBegriff{Hamiltonsch}, falls es eine
         Impulsabbildung gibt.
      \item %
         $\Phi$ heißt \neuerBegriff{stark Hamiltonsch}, falls es eine
         $G$"=äquivariante Impulsabbildung gibt.
   \end{definitionEnum}
\end{definition}

Um die physikalische Interpretation anzudeuten und zur abkürzenden
Schreibweise, geben wir noch folgende Definition.
\begin{definition}[Hamiltonsches System]
   \label{def:hamiltonschesSystem}

   Ist $(M,\{\cdot,\cdot\})$ eine Poisson"=Mannigfaltigkeit und $H \in
   C^\infty(M)$, so nennen wir das Tripel $(M,\{\cdot,\cdot\},H)$ ein
   \neuerBegriff{Hamiltonsches System}. Die Funktion $H$ heißt dann auch
   \neuerBegriff{Hamiltonfunktion}. Den Fluß des Hamiltonschen
   Vektorfelds $X_H := \{\cdot,H\}$ nennen wir
   \neuerBegriff{Zeitentwicklung} von $H$. Ist $f \in C^\infty(M)$
   invariant unter der Zeitenwicklung von $H$, so sagen wir $f$ ist eine
   \neuerBegriff{Erhaltungsgröße} (bezüglich der Zeitenwicklung von $H$).
\end{definition}
Die nächste Proposition, siehe etwa \cite[Satz
11.4.1]{marsden.ratiu:2000a}, verdeutlicht die physikalische Bedeutung
des Begriffs Impulsabbildung.

\begin{proposition}[Noether"=Theorem]
   \label{prop:NotherTheorem}
   Sei $(M,\{\cdot,\cdot\})$ eine Poisson"=Mannigfaltigkeit, $G$ eine
   zusammenhängende Lie"=Gruppe, die auf $M$ Hamiltonsch wirke und $J
   \colon M \to \lieAlgebra^*$ eine Impulsabbildung. Dann ist eine
   Hamiltonfunktion $H \in C^\infty(M)$ genau dann $G$"=invariant, wenn
   $J(\xi) \in C^\infty(M)$ eine Erhaltungsgröße bezüglich der
   Zeitentwicklung von $H$ ist.
\end{proposition}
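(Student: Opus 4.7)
The plan is to reduce the statement to an infinitesimal identity between two Poisson brackets and then read it in two different ways.

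First I would use that $G$ is connected: this is standard and ensures that $G$-invariance of a smooth function is equivalent to infinitesimal invariance, that is, $\Phi_g^* H = H$ for all $g \in G$ if and only if $\mathcal{L}_{\xi_M} H = 0$ for every $\xi \in \lieAlgebra$. Here I would argue as usual by noting that a connected Lie group is generated by arbitrarily small neighborhoods of the identity, and that $t \mapsto \Phi_{\exp(t\xi)}^* H$ has derivative $\mathcal{L}_{\xi_M} \Phi_{\exp(t\xi)}^* H$.

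Next I would invoke the defining property of the momentum map. Since $\xi_M = \{\cdot, J(\xi)\}$ for every $\xi \in \lieAlgebra$, we get
\begin{equation*}
   \mathcal{L}_{\xi_M} H \;=\; \xi_M(H) \;=\; \{H, J(\xi)\}.
\end{equation*}
By antisymmetry of the Poisson bracket, this equals $-\{J(\xi), H\} = -X_H(J(\xi))$. Hence $\mathcal{L}_{\xi_M} H = 0$ if and only if $X_H(J(\xi)) = 0$, i.e., $J(\xi)$ is constant along the flow of $X_H$, which is exactly what it means for $J(\xi)$ to be a conserved quantity with respect to the time evolution of $H$ in the sense of Definition~\ref{def:hamiltonschesSystem}.

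Combining the two steps, $H$ is $G$-invariant iff $\mathcal{L}_{\xi_M} H = 0$ for all $\xi$, iff $J(\xi)$ is a conserved quantity for every $\xi$, which is the claim. I do not expect any serious obstacle: the only non-trivial input is the passage from group invariance to Lie algebra invariance, which is guaranteed by the connectedness hypothesis on $G$; everything else is just the defining equation \eqref{eq:ImpulsabbildungDefGleichung} of the momentum map together with the antisymmetry of $\{\cdot,\cdot\}$.
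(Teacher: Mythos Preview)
Your argument is correct and is exactly the standard one. The paper itself does not give a proof of this proposition at all; it merely cites \cite[Satz 11.4.1]{marsden.ratiu:2000a} and moves on, so there is nothing to compare against beyond noting that your proof supplies precisely the details one would find in that reference.
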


Geometrisch grundlegend für das Verständnis der restlichen Arbeit ist
der Satz über die symplektische Reduktion nach Marsden, Weinstein und
Meyer, \cite{meyer:1973a} \cite{marsden.weinstein:1974a}, vgl.\ auch
\cite[Satz~3.3.55]{waldmann:2007a} für eine ausführliche
Lehrbuchdarstellung.

\begin{satz}[Marsden"=Weinstein"=Reduktion]
   \label{satz:MarsdenWeinstein}
   Sei $(M,\omega)$ eine symplektische Mannigfaltigkeit, $G$ eine
   Lie"=Gruppe mit zugehöriger Lie"=Algebra $\lieAlgebra$ und $\Phi \colon
   G \times M \to M$ eine stark Hamiltonsche Wirkung mit
   $G$"=äquivarianter Impulsabbildung $J \colon M \to \lieAlgebra^*$.
   Sei weiter $\mu \in \lieAlgebra^*$ ein regulärer Wert und die
   Impulsniveaufläche $C:= J^{-1}(\{\mu\}) \subseteq M$ nicht
   leer. Zudem wirke die Isotropiegruppe $G_\mu \subset G$ von $\mu$
   frei und eigentlich auf $M$. Unter diesen Voraussetzungen gibt es
   eine eindeutig bestimmte symplektische Form $\omega_{\mathrm{red}}$
   auf dem Quotientenraum $M_{\mathrm{red}}:= C/G_{\mu}$, so dass
   \begin{align}
      \label{eq:ReduzierteSymplektischeForm}
      \kRes \omega = \pi^* \omega_{\mathrm{red}}
   \end{align}
   gilt. Dabei ist $\kIn \colon C \hookrightarrow M$ die kanonische
   Inklusion und $\pi \colon C \to M_{\mathrm{red}}$ die kanonische Projektion auf
   den Quotienten $M_{\mathrm{red}}$. Insbesondere ist $\pi$ eine surjektive
   Submersion.
\end{satz}
Man beachte, dass die Voraussetzungen des Satzes insbesondere dann
erfüllt sind, wenn $G$ frei und eigentlich auf ganz $M$ wirkt. Falls
$\mu$ $G$"=invariant ist, gilt $G_{\mu} = G$.

\begin{definition}[Reduzierter Phasenraum]
   \label{def:RedPhasenraum}
   Man nennt die symplektische Mannigfaltigkeit $(\Mred,\omega_{\mathrm{red}})$
    \neuerBegriff{reduzierter Phasenraum} zu $(M,\omega,J,\mu)$.
\end{definition}
\begin{bemerkung}[Eichtheorie]
   \label{bem:Eichtheorie}
   In einer klassischen Eichtheorie betrachtet man den affinen, im
   Allgemeinen unendlich dimensionalen Vektorraum $\mathcal{A}$ der
   Zusammenhangseinsformen auf einem Hauptfaserbündel $P \to B$ mit
   Strukturgruppe $S$. Auf $\mathcal{A}$ operiert die Gruppe
   $\mathcal{G}$ der Hauptfaserbündelautomorphismen über der Identität
   auf $P \to B$ vermöge pull"=backs, oder allgemeiner eine Untergruppe
   davon. In physikalischer Interpretation ist $B$ zum Beispiel die
   Raumzeit, nur der physikalische Raum, oder allgemeiner eine
   Cauchy"=Hyperfläche $\Sigma$ in einer global hyperbolischen Raumzeit
   (vgl.\ \cite[2.2.4]{waldmann2009wave}). $P$ interpretiert man als Raum
   verallgemeinerter Phasenfaktoren, $\mathcal{A}$ als Raum der
   Eichpotentiale und $S$ als interne Symmetriegruppe.  Die Gruppe
   $\mathcal{G}$ ist dann die Eichgruppe und zwei Eichfelder entlang
   eines $\mathcal{G}$"=Orbits sind physikalisch äquivalent. Wir wollen
   annehmen, dass eine global hyperbolische Raumzeit in der Form $\Sigma
   \times \mathbb{R}$ vorliegt, d.\,h.\ wir schon eine Zeiteichfixierung
   durchgeführt haben und nehmen $B = \Sigma$ an. Dann kann man
   versuchen, die zeitliche Dynamik der Eichpotentiale als Hamiltonsches
   System zu beschreiben. Man betrachtet dabei das (i.\,Allg.\ unendlich
   dimensionale) Kotangentialbündel $T^*\mathcal{A}$, versehen mit der
   kanonischen symplektischen Form. Die Eichgruppe $\mathcal{G}$ wirkt
   dann auf $\mathcal{M} := T^*\mathcal{A}$ vermöge der gelifteten
   $\mathcal{G}$ Wirkung auf $\mathcal{A}$ und man kann als
   Impulsabbildung die kanonische wählen. Nun ist man also zumindest
   strukturell in der Situation von Satz
   \ref{satz:MarsdenWeinstein}. Allerdings sind $\mathcal{M}$ und
   $\mathcal{G}$ unendlich dimensional. Außerdem ist es im Allgemeinen
   nicht der Fall, dass $\mathcal{G}$ frei wirkt, so dass im Quotienten
   ein symplektisch stratifizierter Raum zu erwarten ist. Betrachtet man
   eine Gittereichtheorie, d.\,h.\ ersetzt konkret $\Sigma$ durch ein
   endliches Gitter, so werden $\mathcal{M}$ und $\mathcal{G}$ endlich
   dimensional, die Problematik, dass im Allgemeinen Singularitäten
   auftreten, bleibt jedoch bestehen. Für das genauere Studium von
   Gittereichtheorien in oben skizzierter Formulierung verweisen wir auf
   Arbeiten von Rudolph
   et.\,al.\ \cite{huebschmann2009gauge,fischer.rudolph.schmidt2007,charzynski.kijowski.rudolph2005},
   für eine Lehrbuchdarstellung singulärer (endlich"=dimensionaler)
   Marsden"=Weinstein"=Reduktion siehe \cite[Ch. 8]{ortega.ratiu:2004}.
   Für eine elementare Einführung in die Hamiltonsche, geometrische
   Sichtweise von Eichtheorien sei auf die Arbeit von Belot
   \cite{belot2003symmetry} verwiesen. Als weitere Literatur seien
    \cite{kijowski1979symplectic},\cite{marathe1992mathematical},
   \cite{schottenloher:1995a},
   \cite{mitter1981bundle},\cite{kijowski1984canonical},
   \cite{trautman1970fibre}, \cite{shabanov2000geometry},
   \cite{emmrich.roemer:1990a}, \cite{rudolph2002gauge},
   \cite{moncrief1980reduction} und \cite{daniel1980geometrical} genannt.
  \end{bemerkung}

  Auch wenn der folgende Satz für den weiteren Verlauf dieser Arbeit
  nicht benötigt wird, möchten wir ihn dennoch an dieser Stelle
  zitieren, vgl.\ \cite[Thm. 6.1.1]{ortega.ratiu:2004} und
  \cite[6.1.10]{ortega.ratiu:2004}, da er die physikalische
  Bedeutung des vorangegangenen Satzes innerhalb der klassischen
  Mechanik deutlich macht.

\begin{satz}[Reduktion und Rekonstruktion der Dynamik]
   \label{satz:ReduktionDerDynamik}

   Seien die Voraussetzungen wir in Satz~\ref{satz:MarsdenWeinstein}
   gegeben.  Weiter sei $H \in C^\infty(M)^G$ eine $G$"=invariante
   Hamiltonfunktion. Dann sind folgende Aussagen richtig.
   \begin{satzEnum}
   \item %
      Das Hamiltonsche Vektorfeld $X_H$ ist tangential an die
      Impulsniveaufläche $C$ und dessen Fluss $\Phi_t^{X_H}$
      bildet diese in sich ab.
   \item %
      Es gibt eine eindeutig bestimmte Hamiltonfunktion
      $H_{\mathrm{red}} \in C^\infty(\Mred)$, so dass
      \begin{align}
         \label{eq:ReduktionDerDynamik1}
         \kRes H = \pi^* H_{\mathrm{red}}
      \end{align}
      und
      \begin{align}
         \label{eq:ReduktionDerDynamik2}
         \pi \circ \Phi_t^{X_H} \circ \kIn = \Phi_t^{X_{H_\mathrm{red}}} \circ \pi
      \end{align}
      gilt, wobei $\Phi^{X_{H_\mathrm{red}}}$ den Fluss von $X_{H_{\mathrm{red}}}$
      bezeichnet.
   \item %
      Sei $p_0\in C$, $d(t)$ eine glatte Kurve in $C$ mit $d(0) = p_0$,
      $\xi(t)$ eine glatte Kurve in $\lieAlgebra_\mu := \{\xi \in
      \lieAlgebra \mid \mathrm{ad}^*_\xi \mu = 0\}$ mit
      \begin{align}
         \label{eq:RekonstruktionDerDynamik1}
         \xi(t)_M(d(t)) = X_H(d(t)) - \dot d(t)
      \end{align}
      und
      $g(t)$ eine glatte Kurve in $G_\mu$ mit
      \begin{align}
         \label{eq:RekonstruktionDerDynamik2}
         \dot g(t) = T_eL_{g(t)}\xi(t) \quad \text{und} \quad g(0) = e \Fcom
      \end{align}
      dann ist $c(t) = g(t) d(t)$ eine Integralkurve von $X_H$ mit $c(0)
      = p_0$.
      Falls $G$ abelsch ist, gilt
      \begin{align}
         \label{eq:RekonstruktionDerDynamik3}
         g(t) = \exp\left(\int_0^t \xi(s) \, ds\right) \Fdot
      \end{align}
      Ist nun $\gamma \colon TC \to \lieAlgebra_{\mu}$ eine
      Zusammenhangseinsform für das $G_\mu$"=Hauptfaserbündel $\pi
      \colon C \to \Mred$ und sei $d(t)$ die horizontale Hebung
      von $\Phi_t^{X_{H_{\mathrm{red}}}}(\pi(p_0))$ durch $p_0$, dann ist
      \begin{align}
         \label{eq:RekonstruktionDerDynamik4}
         \xi(t) = \gamma(X_H(d(t)))
      \end{align}
      eine Lösung von Gleichung \eqref{eq:RekonstruktionDerDynamik1}.
   \end{satzEnum}
\end{satz}

Gleichung \eqref{eq:RekonstruktionDerDynamik1} ist typischerweise
algebraischer Natur. Falls zum Beispiel $G$ eine Matrix"=Lie"=Gruppe ist,
handelt es sich einfach um eine Matrixgleichung.

Die Bedeutung dieser Reduktionssätze liegen innerhalb der klassischen
Mechanik zum einen in einem Gewinn des qualitativen Verständnisses eines
mechanischen Systems, zum anderen können sie hilfreich sein bei der
Lösung konkreter Bewegungsgleichungen eines solchen.

An dieser Stelle möchten wir noch bemerken, dass eine Impulsabbildung $J
\colon M \to \lieAlgebra^*$ submersiv ist, d.\,h.\ alle Werte auch reguläre
Werte sind, falls $G$ frei auf $M$ wirkt. Wir werden auf diesen Punkt
nun in etwas größerer Allgemeinheit eingehen.

\begin{definition}
   \label{def:lokalFreiUndInfiniFrei}
   Eine Wirkung einer Lie"=Gruppe $G$ auf einer Mannigfaltigkeit $M$
   heißt \neuerBegriff{lokal frei}, falls für alle $p \in M$ die
   Stabilisatorgruppe $G_p := \{g \in G \mid gp = p\}$ diskret ist,
   d.\,h.\ es für jedes $g \in G_p$ eine offene Umgebung $U$ von $g$ in $G$
   gibt, so dass $U \cap G_p = \{g\}$. Sie heißt
   \neuerBegriff{infinitesimal frei}, falls für alle $p \in M$ die
   \neuerBegriff{Symmetriealgebra} $\lieAlgebra_p := \{\xi \in \lieAlgebra \mid
   \xi_M(p) = 0\}$ trivial ist, d.\,h.\ $\lieAlgebra_p = \{0\}$ gilt.
\end{definition}
\begin{bemerkung}
   \label{bem:lokalFrei}
   \begin{bemerkungEnum}
   \item %
      Jede freie Wirkung ist offensichtlich auch lokal frei.
   \item %
      Falls die Wirkung eigentlich ist, ist $G_p$ kompakt für alle $p
      \in M$, d.\,h.\ sie ist genau dann lokal frei, wenn $G_p$ endlich ist
      für alle $p \in M$.
   \end{bemerkungEnum}
\end{bemerkung}

\begin{proposition}
   \label{prop:lokalFrei}
   Sei $M$ eine Mannigfaltigkeit, die eine Wirkung $\Phi \colon M \times
   G \to G$ einer Lie"=Gruppe $G$ trage. Dann sind die folgenden Aussagen äquivalent.
   \begin{propositionEnum}
   \item %
      \label{item:lokalFrei1}
      $\Phi$ ist lokal frei.
   \item %
      \label{item:lokalFrei2}
      Für alle $p \in M$ gibt es eine offene Umgebung $V_p$ von $e$ in
      $G$ mit  für $V_p \cap G_p = \{e\}$.
   \item %
      \label{item:lokalFrei3}
      $\Phi$ ist infinitesimal frei.
   \end{propositionEnum}
\end{proposition}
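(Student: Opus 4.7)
The plan is to proceed via the structure theory of the stabilizer subgroup $G_p$. The key fact to invoke is that for each $p \in M$, the stabilizer $G_p = \{g \in G \mid gp = p\}$ is a closed Lie subgroup of $G$ (it is the preimage of $\{p\}$ under the continuous orbit map $g \mapsto gp$, so it is closed, and the closed subgroup theorem gives it a unique Lie subgroup structure). Moreover, its Lie algebra is exactly the symmetry algebra $\lieAlgebra_p$. This latter identification follows from the computation
\begin{equation*}
   \ddt \exp(t\xi) \cdot p = \xi_M(p) \Fcom
\end{equation*}
so that $\xi$ generates a one-parameter subgroup contained in $G_p$ precisely when $\xi_M(p) = 0$.

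First I would establish \refitem{item:lokalFrei1} $\Leftrightarrow$ \refitem{item:lokalFrei2}. A subgroup of a topological group is discrete if and only if $\{e\}$ is open in the subspace topology, since left translation by elements of $G_p$ is a homeomorphism of $G$ that maps $e$ to any prescribed element of $G_p$. Thus a neighborhood $V_p$ of $e$ with $V_p \cap G_p = \{e\}$ exists iff every point of $G_p$ admits an isolating neighborhood, which is precisely local freeness.

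Next I would treat \refitem{item:lokalFrei2} $\Leftrightarrow$ \refitem{item:lokalFrei3}. Using that $G_p$ is a Lie subgroup with Lie algebra $\lieAlgebra_p$: if $\lieAlgebra_p = \{0\}$, then $G_p$ is zero-dimensional, and one obtains the neighborhood $V_p$ by using a chart of $G$ near $e$ adapted to $G_p$ (or by the fact that the exponential restricts to a local diffeomorphism $\lieAlgebra_p \to G_p$ near $e$, so $G_p$ has $\{e\}$ open). Conversely, if $V_p \cap G_p = \{e\}$, then any $\xi \in \lieAlgebra_p$ satisfies $\exp(t\xi) \in G_p$ for all $t$, so by continuity $\exp(t\xi) \in V_p \cap G_p = \{e\}$ for small $t$, forcing $\xi = 0$.

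The only real subtlety will be citing the closed-subgroup theorem and the identification $\operatorname{Lie}(G_p) = \lieAlgebra_p$ cleanly; once those are in hand, the argument is essentially a sequence of short topological and Lie-theoretic observations. No compactness or properness hypothesis is needed here, in contrast to Remark~\ref{bem:lokalFrei}.
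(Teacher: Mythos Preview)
Your argument is correct, but it takes a different route from the paper. You go through the structure theory of the stabilizer: invoking the closed subgroup theorem to obtain that $G_p$ is an embedded Lie subgroup with $\operatorname{Lie}(G_p) = \lieAlgebra_p$, after which all three equivalences become standard facts about Lie subgroups. The paper, by contrast, avoids the closed subgroup theorem entirely. For \refitem{item:lokalFrei1}$\Leftrightarrow$\refitem{item:lokalFrei2} it argues directly by translating a neighborhood of $e$ by elements of $G_p$. For \refitem{item:lokalFrei3}$\Rightarrow$\refitem{item:lokalFrei2} it introduces the map $\hat\Phi\colon G\times M \to M\times M$, $(g,p)\mapsto(gp,p)$, shows that infinitesimal freeness makes $T_{(e,p)}\hat\Phi$ injective, and then applies the immersion theorem to conclude that $\hat\Phi$ is injective on a product neighborhood $V_p\times W_p$; restricting to the slice $\{p\}$ gives the desired $V_p$. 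For the converse it uses the one-parameter subgroup $\exp(t\xi)$, essentially as you do. Your approach is cleaner and more conceptual, at the cost of importing Cartan's theorem as a black box; the paper's is more self-contained and gives slightly more (injectivity on a neighborhood in $M$ as well), though its immersion argument is a heavier tool for this particular direction.
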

\begin{proof}
      Die Implikation von~\refitem{item:lokalFrei1} nach~\refitem{item:lokalFrei2} ist trivial. Sei also $p \in M$ und
      $V_p$ eine offene Umgebung von $e$ in $G$ mit $V_p \cap G_p = \{e\}$. Sei
      weiter $g \in G_p$. Da die Wirkung stetig ist, ist $gV_p$ eine
      offene Umgebung von $g$. Offensichtlich gilt für $h \in G_p \cap
      gV_p$ auch $g^{-1}h \in G_p \cap V_p$, also $g^{-1}h = e$,
      d.\,h.\ $h= g$, womit sich~\refitem{item:lokalFrei1} ergibt.

      Sei nun $\Phi$ infinitesimal frei. Sei $\hat \Phi \colon G \times
      M \to M \times M$, $(g,p) \mapsto (gp,p)$. Es genügt zu zeigen,
      dass es für alle $p \in M$ eine Umgebung $V_p \times W_p$ von $(e,p)$
      gibt, so dass $\hat \Phi\at{V_p \times W_p}$ injektiv ist. Denn dann
      ist offenbar $V_p$ eine Umgebung wie in~\refitem{item:lokalFrei2}. Um dies zu sehen, genügt es nach dem
      Immersionssatz (vgl.\ \cite[Thm. 3.5.7]{marsden:2002}) zu zeigen,
      dass die Ableitung von $\hat \Phi$ an der Stelle $(e,p)$ injektiv
      ist. Dies ist aber nach Voraussetzung und Proposition
      \ref{prop:produkte} klar, denn für $\xi \in \lieAlgebra = T_eG$
      und $v_p \in T_pM$ gilt
      \begin{align*}
         0 = T_{(e,p)}\hat \Phi \TIsoI{(\xi,v_p)} &= \TIsoI{(T_e
           \Phi_{\cdot}(p), 0)} + \TIsoI{(v_p,v_p)} = \TIsoI{(\xi_M(p) +
           v_p, v_p)} \\
         &\implies v_p = 0 \, \text{und} \, \xi_M(p) = 0 \Fdot
      \end{align*}
      Daraus folgt nach Voraussetzung schon $\xi = 0$, also auch
      $\TIsoI{(\xi,v_p)} = 0$.
      Sei nun angenommen $\Phi$ nicht infinitesimal frei. Dann gibt es
       ein $p \in M$ und ein $0 \neq\xi \in \lieAlgebra$ mit
      $T_e\Phi(p) = \xi_M(p) = 0$. Es gibt also eine Umgebung $V$
      von $e$ in $G$, so dass $\Phi_g(p) = p$ für alle $g \in V$,
      d.\,h.~\refitem{item:lokalFrei2} kann nicht gelten.
\end{proof}

\begin{proposition}
   \label{lem:LineareAlgebra}
   Seien $V$ und $W$ endlich dimensionale Vektorräume über $\mathbb{K}$ und $A \colon V
   \to W$ linear, so ist $A$ genau dann surjektiv, wenn aus $\dPaar{Av}{w^*}
   = 0 \; \forall v \in V$ schon $w^* = 0 \in W^*$ folgt.
\end{proposition}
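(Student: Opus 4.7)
Die Aussage soll durch den Nachweis beider Richtungen der Äquivalenz bewiesen werden, wobei die Rückrichtung der eigentlich zu begründende Teil ist. Die Vorwärtsrichtung ist beinahe eine Tautologie: Nimmt man an, $A$ sei surjektiv und gelte $\dPaar{Av}{w^*} = 0$ für alle $v \in V$, so kann man zu einem beliebigen $w \in W$ ein $v \in V$ mit $Av = w$ wählen und erhält $\dPaar{w}{w^*} = 0$ für alle $w \in W$, also $w^* = 0 \in W^*$.

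Für die Rückrichtung bietet sich das Arbeiten mit der Kontraposition an. Angenommen $A$ sei nicht surjektiv, d.\,h.\ $U := \im A$ ist ein echter Unterraum von $W$. Ich möchte ein $w^* \in W^*$ mit $w^* \neq 0$ konstruieren, das auf $U$ verschwindet; dann gilt nämlich $\dPaar{Av}{w^*} = 0$ für alle $v \in V$, obwohl $w^* \neq 0$, und die angenommene Implikation kann nicht erfüllt sein.

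Die Konstruktion eines solchen Funktionals ist der eigentliche Kern des Arguments und nutzt wesentlich die Endlichdimensionalität: Man wählt eine Basis $\{w_1,\dots,w_k\}$ von $U$ und ergänzt diese zu einer Basis $\{w_1,\dots,w_k,w_{k+1},\dots,w_n\}$ von $W$, was wegen $U \subsetneq W$ und $\dim W < \infty$ möglich ist, und zwar mit $k < n$. Jetzt definiert man $w^* \in W^*$ durch $w^*(w_i) := 0$ für $i \leq k$ und etwa $w^*(w_{k+1}) := 1$ (die übrigen Werte sind egal) und lineare Fortsetzung. Dann ist $w^* \neq 0$, aber $w^*$ verschwindet auf $U = \im A$.

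Der einzig nichttriviale Schritt ist die Basisergänzung, die aber direkt aus der elementaren linearen Algebra endlich dimensionaler Vektorräume folgt; hier geht die Voraussetzung $\dim V, \dim W < \infty$ ein. In unendlichdimensionalen Situationen würde man stattdessen auf den Satz von Hahn"=Banach zurückgreifen müssen, was im vorliegenden Rahmen jedoch nicht nötig ist. Äquivalent könnte man die Proposition auch mittels der bekannten Identität $(\im A)^{\perp} = \ker A^*$ im Dualraum $W^*$ formulieren und die Tatsache benutzen, dass $U^\perp = \{0\}$ genau dann gilt, wenn $U = W$ ist; dies verpackt aber im Grunde dieselbe Argumentation.
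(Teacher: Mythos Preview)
Your argument is correct. The paper's proof is much terser: it simply observes that the condition ``$\dPaar{Av}{w^*}=0$ for all $v$ implies $w^*=0$'' says precisely that $A^*$ is injective, and then invokes the standard duality fact (for finite-dimensional spaces) that $A$ is surjective if and only if $A^*$ is injective --- equivalently, that $A$ is injective if and only if $A^*$ is surjective, applied to $A^*$ in place of $A$. Your explicit basis-extension construction is essentially a self-contained proof of exactly this duality statement, and your closing remark about $(\im A)^\perp = \ker A^*$ is the paper's argument in different words. So the approaches agree in spirit; yours just unpacks what the paper leaves as ``well known''.
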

\begin{proof}
   Die Aussage ist äquivalent dazu, dass $A$ genau dann injektiv ist,
   wenn die duale Abbildung $A^*$ surjektiv ist. Damit ist die
   Behauptung klar.

\end{proof}

\begin{lemma}

   \label{lem:JSurjektiv}
   Sei $(M,\omega)$ eine symplektische Mannigfaltigkeit und $\Phi \colon
   G \times M \to M$ eine Hamiltonsche Wirkung einer Lie"=Gruppe $G$ mit
   Impulsabbildung $J \colon M \to \lieAlgebra^*$. Dann ist $p \in M$
   genau dann ein regulärer Punkt von $J$, wenn $\lieAlgebra_p = \{0\}$
   gilt. Insbesondere ist $\Phi$ genau dann lokal frei, wenn $J$
   submersiv ist, was genau dann der Fall ist, wenn jeder Wert von $J$
   regulär ist.
\end{lemma}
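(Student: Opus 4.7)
The plan is to derive a single pointwise identity relating the derivative of $J$ to the symplectic form, and then read off both assertions using Proposition~\ref{lem:LineareAlgebra} and Proposition~\ref{prop:lokalFrei}. Fix $p \in M$, $v_p \in T_pM$ and $\xi \in \lieAlgebra$. Since $J(\xi)\colon M \to \mathbb{R}$ is by definition the function $q \mapsto \dPaar{J(q)}{\xi}$, one has $\dPaar{T_pJ(v_p)}{\xi} = v_p[J(\xi)] = dJ(\xi)\at{p}(v_p)$. Using the defining equation $\xi_M = \{\cdot, J(\xi)\}$ and the symplectic form $\omega$, this last expression equals $\omega_p(\xi_M(p), v_p)$ up to the sign fixed by the Hamiltonian convention. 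That one computation is the heart of the whole argument.

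From this identity the first assertion follows immediately. By Proposition~\ref{lem:LineareAlgebra}, the linear map $T_pJ \colon T_pM \to \lieAlgebra^*$ is surjective if and only if any $\xi \in \lieAlgebra$ satisfying $\dPaar{T_pJ(v_p)}{\xi} = 0$ for all $v_p \in T_pM$ must vanish. Plugging in the identity, this condition reads: $\omega_p(\xi_M(p), v_p) = 0$ for every $v_p \in T_pM$ forces $\xi = 0$. Because $\omega_p$ is non-degenerate, the inner vanishing is equivalent to $\xi_M(p) = 0$, i.e.\ $\xi \in \lieAlgebra_p$. Hence $p$ is a regular point of $J$ precisely when $\lieAlgebra_p = \{0\}$.

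For the additional equivalences, I would argue as follows. The map $J$ is submersive iff every $p \in M$ is a regular point, which by the first part is equivalent to $\lieAlgebra_p = \{0\}$ for every $p$, i.e.\ to $\Phi$ being infinitesimal frei. Proposition~\ref{prop:lokalFrei} then identifies this with local freeness of $\Phi$. For the last equivalence, if $J$ is submersive then every value is of course regular; conversely, if every value is regular then in each fibre $J^{-1}(\mu)$ every point is regular, so every point of $M$ is regular and $J$ is submersive.

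There is no real obstacle; the lemma reduces to the single identity $\dPaar{T_pJ(v_p)}{\xi} = \omega_p(\xi_M(p), v_p)$, combined with the dual-space Proposition~\ref{lem:LineareAlgebra} and non-degeneracy of $\omega_p$. The only subtlety is pinning down the sign relating $\{\cdot, J(\xi)\}$ to the Hamiltonian vector field of $J(\xi)$; this is a bookkeeping issue and is irrelevant to the argument, since only the vanishing condition on $\omega_p(\xi_M(p), v_p)$ enters the conclusion.
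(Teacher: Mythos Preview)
Your proof is correct and follows essentially the same line as the paper's: both establish the identity $\dPaar{T_pJ(v_p)}{\xi} = \omega_p(\xi_M(p), v_p)$ (up to the canonical identification $T_{J(p)}\lieAlgebra^* \cong \lieAlgebra^*$, which the paper makes explicit via an isomorphism $I_{J(p)}$ and verifies in coordinates, while you leave it implicit), then apply Proposition~\ref{lem:LineareAlgebra} and non-degeneracy of $\omega$. The remaining equivalences via Proposition~\ref{prop:lokalFrei} are handled just as in the paper.
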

\begin{proof}
   Für $\mu \in \lieAlgebra^*$ bezeichne in diesem Beweis $I_\mu\colon
   \lieAlgebra^* \to T_\mu\lieAlgebra^*$ den kanonischen Isomorphismus
   der Mannigfaltigkeit $\lieAlgebra^*$ mit ihrem Tangentialraum
   $T_{\mu}\lieAlgebra^*$ an der Stelle $\mu$.
   \begin{align*}
      p \in M \; \text{regulär} &\iff T_pJ \colon T_pM \to
      T_{J(p)}\lieAlgebra^* \; \text{ist surjektiv} \\
      &\iff I_{J(p)}^{-1} \circ T_pJ \colon T_pM \to \lieAlgebra^* \;
      \text{ist surjektiv} \\
      &\iff \dPaar{I_{J(p)}^{-1} \circ T_p J v}{\xi} = 0 \; \forall v \in T_pM \implies \xi =
      0 &&\eAnn{nach Proposition \ref{lem:LineareAlgebra}}\\
      &\iff \dPaar{d_pJ(\xi)}{v} = 0 \;\forall v \in T_pM \implies  \xi = 0 \\
      &\iff \omega_p(\xi_M(p),v) = 0 \; \forall v \in T_pM \implies \xi = 0
      \\
      &\iff \xi_M(p) = 0 \implies \xi = 0 &&\eAnn{da $\omega_p$ nicht ausgeartet} \\
      &\iff \lieAlgebra_p = \{0\}
   \end{align*}
   Die Umformung im vierten Schritt sieht man leicht in Koordinaten. Ist
   etwa $(U,x)$ eine Karte von $M$ um $p$, $\{e_i\}$ eine Basis von
   $\lieAlgebra$ und $\{e^i\}$ die zugehörige duale Basis, und schreiben
   wir $v = v_k \frac{\partial}{\partial x^k}\at{p}$ sowie $\xi = \xi^i
   e_i$, dann rechnet man leicht
   \begin{align*}
      \dPaar{I_{J(p)}^{-1} \circ T_p J v}{\xi} = \dPaar{\frac{\partial
          J_i}{\partial x^k}\at{p}v_k e^i}{\xi^j e_j}
      &= v_k \frac{\partial J_i}{\partial x^k}\at{p} \xi^i \\
      &= v_k \frac{\partial}{\partial x^k}\at{p}(\xi^i J_i)
      = v_k
      \frac{\partial}{\partial x^k}(J(\xi))
      = \dPaar{d_pJ(\xi)}{v}
   \end{align*}
   nach.
\end{proof}

\subsection{Invarianzbegriffe in der Deformationsquantisierung}
\label{sec:InvarianzbegriffeInDerDeformationsquantisierung}

Wir kommen nun zu Invarianzbegriffen von Sternprodukten
(vgl.\ \cite{arnal.cortet.molin.pinczon:1983a},
\cite{xu:1998a,bordemann.brischle.emmrich.waldmann:1996a}).

\begin{definition}[Invarianz von Sternprodukten]
   \label{def:InvariantesSternprodukt}
   Sei $(M,\{\cdot,\cdot\})$ eine Poisson"=Mannigfaltigkeit, die eine
   kanonische Wirkung $\Phi \colon G \times M \to M$ einer Lie"=Gruppe
   $G$ trage. Ein Sternprodukt $\star$ für $(M,\{\cdot,\cdot\})$ heißt
   \neuerBegriff{$G$"=invariant}, wenn für alle $g \in G$ und $f,f' \in
   C^\infty(M)$ die Gleichung
   \begin{align}
      \label{eq:InvariantesSternprodukt}
      \Phi_g^*(f \star f') = \Phi_g^*f \star \Phi_g^* f'
   \end{align}
   gilt.  Ist $J \colon M \to \lieAlgebra^*$ eine $G$"=äquivariante
   Impulsabbildung, so heißt $\star$ \neuerBegriff{stark invariant},
   falls für alle $\xi \in \lieAlgebra$ und $f \in C^\infty(M)$
   \begin{align}
      \label{eq:SternproduktKovariant}
      \I \lambda \{J(\xi),f\} = [J(\xi),f]_{\star}
   \end{align}
   gilt, wobei $[\cdot,\cdot]_\star$ der Kommutator bezüglich des
   Sternprodukts $\star$ ist.

\end{definition}
\begin{bemerkung}
   \label{bem:BeziehungStarkInvariantInvariant}
   Die starke Invarianz eines Sternprodukts impliziert dessen Invarianz
   im Falle einer zusammenhängenden Lie"=Gruppe.  Der Umkehrschluss gilt
   im Allgemeinen nicht. Siehe etwa \cite[p. 112,
   136]{bordemann.herbig.waldmann:2000a}.
\end{bemerkung}

Als nächstes verallgemeinern wir das Konzept der Impulsabbildung, indem
wir noch Terme höherer $\lambda$"=Ordnungen zulassen.
\begin{definition}[Quantenimpulsabbildung]
   \label{def:Quantenimpulsabbildung}
   Sei $(M,\{\cdot,\cdot\})$ eine Poisson"=Mannigfaltigkeit, $G$ eine
   Lie"=Gruppe, die auf $M$ wirke. Weiter sei $\star$ ein
   $G$"=invariantes Sternprodukt auf $M$. Dann heißt eine lineare
   Abbildung $\qJ \colon \lieAlgebra \to C^\infty(M)[[\lambda]]$ der
   Form
   \begin{align}
      \qJ = \sum_{r=0}^\infty  \lambda^r J_r
   \end{align}
   mit linearen  Abbildungen $J_r \colon \lieAlgebra \to C^\infty(M)$
   \neuerBegriff{Quantenhamiltonfunktion} (oder auch
   \neuerBegriff{Quanten"=Hamiltonian}), falls $\qJ$ die Deformation
   einer Impulsabbildung ist, d.\,h.\ $J := J_0$ eine Impulsabbildung ist
   und für alle $f \in C^\infty(M)$ sowie $\xi \in \lieAlgebra$ die
   Gleichung
   \begin{align}
      \label{eq:QuantenHamiltonian}
      \I \lambda \{J(\xi),f\} = [\qJ(\xi),f]_\star
   \end{align}
   gilt. Falls zusätzlich für alle $\xi,\eta \in
   \lieAlgebra$ die Gleichung
   \begin{align}
      \label{eq:Quantenimpulsabbildung}
      \I \lambda \qJ([\xi,\eta]) = [\qJ(\xi),\qJ(\eta)]_\star
   \end{align}
   wahr ist, nennt man $\qJ$
   \neuerBegriff{Quantenimpulsabbildung}. Eine $G$"=äquivariante
   Quantenhamiltonfunktion heißt auch \neuerBegriff{$G$"=äquivariante
     Quantenimpulsabbildung},
   vgl.\ \cite[p. 492]{kowalzig.neumaier.pflaum:2005a}.
\end{definition}
\begin{bemerkung}
   \label{bem:StarkInvariant}
   Das Sternprodukt $\star$ ist offensichtlich genau dann stark
   invariant, wenn die klassische Impulsabbildung $J$ auch eine
   $G$"=äquivariante Quantenimpulsabbildung ist.
\end{bemerkung}

\begin{bemerkung}
   \label{bem:QuantenhamiltonfunktionAndereSichtweise}
   Wie bei der klassischen Impulsabbildung, vgl.\ Bemerkung~\ref{bem:ImpulsabbildungSichtweiseRaume}, kann man eine
   Quantenhamiltonfunktion über die Beziehung $\qJ(\xi)(p) =
   \dPaar{\qJ(p)}{\xi}$ für alle $p \in M$ und $\xi \in \lieAlgebra$ als
   Abbildung
   \begin{align}
      \label{eq:QuantenhamiltonfunktionAndereSichtweise}
      \qJ \colon M \to \lieAlgebra_{\mathbb{C}}^*[[\lambda]]
   \end{align}
   der Form $\qJ = \sum_{r=0}^\infty \lambda^r J_r$ mit glatten
   Abbildungen $J_r\colon M \to \lieAlgebra_{\mathbb{C}}^*$ auffassen,
   so dass $J_0$ eine Impulsabbildung ist und $\qJ$ die Gleichung
   \eqref{eq:QuantenHamiltonian} erfüllt. Dabei bezeichnet
   $\lieAlgebra_{\mathbb{C}} = \lieAlgebra \otimes_{\mathbb{R}}
   \mathbb{C}$ die komplexifizierte Lie"=Algebra. Auch hier wollen wir
   beide Sichtweisen austauschbar verwenden.
\end{bemerkung}

\begin{bemerkung}
   \label{bem:PhysikalicheBedeutung}
   \begin{bemerkungEnum}
   \item %
      Ist $H = J(\xi)$ für ein $\xi \in \lieAlgebra$, so folgt aus der
      starken Invarianz des Sternprodukts $\star$, dass die
      Quantenzeitentwicklung jeder Observablen bezüglich $H$ mit der
      klassischen übereinstimmt. Dies ist vom Beispiel des harmonischen
      Oszillators wohlbekannt. Für eine eingehendere Diskussion der
      physikalischen Bedeutung stark invarianter Sternprodukte verweisen
      wir auf die Pionier"=Arbeit \cite{bayen.et.al:1978a} von Bayen
      et.\,al.\
      \item %
         Wenn Symmetrien eines klassischen Systems nicht als Symmetrien
         des korrespondierenden Quantensystems implementiert werden
         können, spricht man von \neuerBegriff{Anomalien}. Römer und
         Paufler bemerken etwa in \cite{romer2000anomalies}, dass für
         $\qJ = J$ eine Verletzung von Gleichung
         \eqref{eq:SternproduktKovariant} feldtheoretisch Schwinger
         Termen entsprechen sollte und eine Verletzung von Gleichung
         \eqref{eq:QuantenHamiltonian} anomalen Ward"=Identitäten. Sie
         vermuten weiter, dass diese und andere Anomalien nicht
         auftreten, wenn man auch Quantenimpulsabbildungen zulässt.
   \end{bemerkungEnum}
\end{bemerkung}

Die folgende Proposition zeigt, dass in der vorliegenden Situation von
zusammenhängenden Lie"=Gruppen der Begriff der $G$"=äquivarianten
Quantenimpulsabbildung aus Definition \ref{def:Quantenimpulsabbildung}
mit dem einer Quantenimpulsabbildung, die $G$"=äquivariant ist, übereinstimmt.
\begin{proposition}
   \label{bem:QuantenImpulsabbHamiltonian}
   Die Voraussetzungen seien wie in Definition
   \ref{def:Quantenimpulsabbildung} gegeben. Ist $\qJ$ eine
   $G$"=äquivariante Quantenhamiltonfunktion, so ist $\qJ$ schon eine
   Quantenimpulsabbildung.

   Da wir $G$ als zusammenhängend annehmen, gilt auch die Umkehrung.
\end{proposition}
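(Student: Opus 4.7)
The plan is to isolate a single infinitesimal identity that serves as a bridge between the two conditions, and to exploit the defining quantum Hamiltonian relation $\I\lambda\{J(\eta),f\}=[\qJ(\eta),f]_\star$, which is available as an assumption in both directions of the proposition. Concretely, I single out the auxiliary condition
\begin{equation*}
   \{\qJ(\xi), J(\eta)\} = \qJ([\xi,\eta]) \quad \text{for all } \xi,\eta \in \lieAlgebra
\end{equation*}
and show separately that (i) it is equivalent to $G$-equivariance of $\qJ$ when $G$ is connected, and (ii) it is equivalent to the quantum momentum map condition \eqref{eq:Quantenimpulsabbildung}. Granting both equivalences, the proposition follows at once.

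For (ii), I would plug $f = \qJ(\xi)$ into the quantum Hamiltonian relation and use antisymmetry of both brackets to rewrite the outcome as
\begin{equation*}
   \I\lambda \{\qJ(\xi), J(\eta)\} = [\qJ(\xi), \qJ(\eta)]_\star .
\end{equation*}
This turns the auxiliary identity and \eqref{eq:Quantenimpulsabbildung} into literal rearrangements of one another, so each implies the other.

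For (i), I would differentiate the equivariance equation $\Phi_{\exp(t\eta)}^*\qJ(\xi) = \qJ(\Ad_{\exp(-t\eta)}\xi)$ at $t=0$. The left-hand side derivative becomes $\eta_M(\qJ(\xi)) = \{\qJ(\xi), J(\eta)\}$ by the defining relation $\eta_M = \{\cdot, J(\eta)\}$ of the classical momentum map, and the right-hand side derivative becomes $\qJ([\xi,\eta])$ from the standard formula for the derivative of the adjoint action at the identity; this yields the auxiliary identity. For the converse, one integrates back to global equivariance exactly as in the classical Proposition \ref{prop:AequivariantUndInfinitesimalAequivariant}: using connectedness of $G$, every element can be written as a finite product of exponentials, and along each one-parameter subgroup the discrepancy $t \mapsto \Phi_{g\exp(t\eta)}^*\qJ(\xi) - \qJ(\Ad_{(g\exp(t\eta))^{-1}}\xi)$ is governed by a linear ODE with vanishing initial value, hence stays zero.

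The hard part: conceptually nothing is hard — every step is an unwinding of definitions. The only real risk lies in sign and convention bookkeeping (pullback versus Lie derivative, adjoint versus coadjoint, whether $\xi_M$ equals $\{\cdot,J(\xi)\}$ or $\{J(\xi),\cdot\}$, antisymmetry when swapping arguments), all of which must be set in agreement with the conventions fixed earlier in this chapter. Once these are consistent, both implications of the proposition drop out of the two equivalences (i) and (ii).
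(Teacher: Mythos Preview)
Your proposal is correct and is essentially the same argument as the paper's: both directions hinge on applying the quantum Hamiltonian relation with $f=\qJ(\eta)$ (equivalently $\qJ(\xi)$) to reduce \eqref{eq:Quantenimpulsabbildung} to the infinitesimal equivariance identity $-\xi_M(\qJ(\eta))=\qJ([\xi,\eta])$, and then passing between this identity and global $G$-equivariance via connectedness. The only difference is organizational---you package the infinitesimal identity as an explicit ``auxiliary condition'' and spell out the ODE/exponential argument for the integration step, whereas the paper simply writes the chain of equalities and invokes connectedness in one line.
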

\begin{proof}
   Sei $\qJ$ eine $G$"=äquivariante Quantenhamiltonfunktion. Dann gilt für
   alle $g \in G$ und $\eta \in \lieAlgebra$
   \begin{align*}
      g \qJ(\eta) = \qJ(g \eta)\Fcom
   \end{align*}
   also durch Ableiten nach $g$ für alle $\xi \in \lieAlgebra$
   \begin{align*}
      \xi_M(\qJ(\eta)) = -\qJ([\xi,\eta])
   \end{align*}
   d.\,h.\
   \begin{align*}
      [\qJ(\xi),\qJ(\eta)]_{\star} = \I \lambda \{J(\xi),\qJ(\eta)\} =
      -\I \lambda \xi_M(\qJ(\eta)) = \I \lambda \qJ([\xi,\eta]) \quad
      \forall \xi \in \lieAlgebra \Fdot
   \end{align*}
   Sei umgekehrt $\qJ$ eine Quantenimpulsabbildung und seien $\xi,\eta \in
   \lieAlgebra$. Dann gilt
   \begin{align*}
      -\xi_M(\qJ(\eta)) = \{J(\xi),\qJ(\eta)\} =
      \frac{1}{\I \lambda}[\qJ(\xi),\qJ(\eta)]_{\star} = \qJ([\xi,\eta]) \Fdot
   \end{align*}
   Da $G$ zusammenhängend ist, folgt die $G$"=Äquivarianz von $\qJ$.
\end{proof}

Über die Eindeutigkeit von Quantenimpulsabbildungen kann man analoge
Aussagen treffen wie bei der klassischer Impulsabbildungen.

\begin{proposition}
   \label{prop:EindeutigkeitvonQuanteimpulsabbildungen}
   Seien die Voraussetzungen wie in  Definition
   \ref{def:Quantenimpulsabbildung} gegeben und zusätzlich
   $\{\cdot,\cdot\}$ symplektisch und $M$ zusammenhängend.
   \begin{propositionEnum}
   \item %
      Eine Quantenhamiltonfunktion $\qJ$ zu einer gegebenen
      Impulsabbildung $J$ ist bis auf Addition von Elementen aus
      $\bigwedge_{\mathbb{C}}\lieAlgebra_{\mathbb{C}}^*[[\lambda]]$ eindeutig
      bestimmt.
   \item %
      Eine Quantenimpulsabbildung $\qJ$ zu einer gegebenen
      $G$"=äquivarianten Impulsabbildung $J$ ist bis auf Addition von
      $G$"=invarianten Elementen aus $\bigwedge_{\mathbb{C}}
      \lieAlgebra_{\mathbb{C}}^*[[\lambda]]$ eindeutig bestimmt. Diese
      sind genau die formalen Potenzreihen von Lie"=Algebra
      $1$"=Kozykeln. Insbesondere sind Quantenimpulsabbildungen eindeutig,
      wenn die erste Lie"=Algebra"=Kohomologie verschwindet.
   \end{propositionEnum}
\end{proposition}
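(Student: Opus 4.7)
The plan is to reduce both claims to the single fact that on a connected symplectic manifold the center of a star product equals the scalars $\mathbb{C}[[\lambda]]$.

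Set $K(\xi) := \qJ(\xi) - \qJ'(\xi)$ for $\xi \in \lieAlgebra$. Subtracting the two instances of the defining equation \eqref{eq:QuantenHamiltonian} cancels the classical right-hand side and yields
\begin{equation*}
   [K(\xi), f]_{\star} = 0 \qquad \text{for all } f \in C^{\infty}(M)[[\lambda]]\Fcom
\end{equation*}
so each $K(\xi)$ lies in the center of $(C^{\infty}(M)[[\lambda]], \star)$. The key technical step is therefore to show that this center is $\mathbb{C}[[\lambda]]$. I would argue by induction on the order in $\lambda$: if $c = \sum_{r \geq 0} \lambda^r c_r$ is central, the $\lambda$-leading order of $[c, f]_{\star} = \I\lambda\{c,f\} + O(\lambda^2) = 0$ forces $\{c_0, f\} = 0$ for all $f$; non-degeneracy of $\omega$ together with connectedness of $M$ then give $c_0 \in \mathbb{C}$, and iterating on $c - c_0$ yields $c_r \in \mathbb{C}$ for every $r$. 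This identifies $\xi \mapsto K(\xi)$ with an element of $\lieAlgebra_{\mathbb{C}}^*[[\lambda]]$, and since constants are central, adding any such element to $\qJ$ again satisfies \eqref{eq:QuantenHamiltonian}; this proves \textit{i.)}.

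For \textit{ii.)} assume in addition that both $\qJ$ and $\qJ'$ are $G$-equivariant. Since $K(\xi) \in \mathbb{C}[[\lambda]]$ by \textit{i.)}, it is fixed by the $G$-action on $C^{\infty}(M)$, so the $G$-equivariance of the two quantum momentum maps collapses to $K(\operatorname{Ad}_g \xi) = K(\xi)$ for all $g \in G$, that is, $K$ is $\operatorname{Ad}^*$-invariant as an element of $\lieAlgebra_{\mathbb{C}}^*[[\lambda]]$. Differentiating at $g = e$ along a one-parameter subgroup in direction $\eta \in \lieAlgebra$ gives $K([\eta, \xi]) = 0$ for all $\xi, \eta \in \lieAlgebra$, which is exactly the Chevalley--Eilenberg $1$-cocycle condition for the trivial representation. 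Because all $1$-coboundaries in the trivial representation vanish, the space of admissible $K$ equals $H^1(\lieAlgebra, \mathbb{C})[[\lambda]]$ termwise, which in particular gives the claimed $H^1$-criterion. Conversely, adding any formal power series of such cocycles to $\qJ$ preserves \eqref{eq:QuantenHamiltonian} (centrality of constants) and \eqref{eq:Quantenimpulsabbildung} (the cocycle property makes the contributions on both sides agree).

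The only genuine technical ingredient is the centrality lemma $Z(\star) = \mathbb{C}[[\lambda]]$ on connected symplectic manifolds; once it is in hand, the uniqueness claims are essentially the trivial-representation version of the classical uniqueness statement. The main place one has to be careful is in correctly aligning the sign conventions for the $G$-action on $C^{\infty}(M)$ with the adjoint action on $\lieAlgebra$, but since the relevant objects are constant in $M$ this plays no real role beyond bookkeeping.
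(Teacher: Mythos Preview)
Your proposal is correct and follows exactly the approach the paper takes: reduce everything to the fact that on a connected symplectic manifold the $\star$-center is $\mathbb{C}[[\lambda]]$, then read off the rest. The paper's own proof is a single sentence citing this centrality fact from \cite[Bsp.~6.3.15]{waldmann:2007a} and \cite{xu:1998a} and declaring the statement ``klar''; you have simply unpacked what ``klar'' means, including the inductive argument for the center and the identification of $\operatorname{Ad}^*$-invariant functionals with $1$-cocycles for the trivial representation.
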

\begin{proof}
   Da das Zentrum von $C^\infty(M)[[\lambda]]$ bezüglich $\star$ nur aus
   den konstanten Funktionen besteht
   (vgl.\ \cite[Bsp. 6.3.15]{waldmann:2007a}) ist die Aussage klar, siehe
   auch \cite{xu:1998a}.
\end{proof}

\begin{bemerkung}
   \label{bem:KlassenFeinerMitSymmetrie}
   Wir haben im vorangegangenen Abschnitt gesehen, dass die
   Klassifikation von Sternprodukten auf symplektischen
   Mannigfaltigkeiten mit Hilfe der de"=Rham"=Kohomologie im Allgemeinen
   physikalisch zu grob ist. In vielen Situationen wird man jedoch gute
   physikalische Gründe haben, bestimmte Invarianzforderungen an das
   Sternprodukt zu stellen. Dies verkleinert im Allgemeinen die
   Sternprodukt"=Äquivalenzklassen oder zeichnet gar ein Sternprodukt
   aus. Für eine weitergehende Diskussion der Klassifikation von
   invarianten Sternprodukten verweisen wir auf die Diplomarbeit von
   Schaumann \cite{schaumann:2010}.
   Dort wird insbesondere als schönes Beispiel gezeigt, dass das
   Weyl"=Moyal"=Sternprodukt das einzige $\mathbb{Sp}_{2n} \ltimes
   \mathbb{R}^{2n}$"=invariante, differentielle $\mathbb{Sp}_{2n}$"=stark
   invariante Sternprodukt für den $\mathbb{R}^{2n}$ mit kanonischer
   symplektischer Form ist. Dabei bezeichnet $\mathbb{Sp}_{2n}$ die
   symplektische Gruppe und $\ltimes$ das semi"=direkte"=Produkt von
   Gruppen. Für Details verweisen wir auf \cite{schaumann:2010}.
\end{bemerkung}

Mit Hilfe der Fedosov Konstruktion erhält man das folgende Resultat über
die Existenz von invarianten Sternprodukten auf symplektischen
Mannigfaltigkeiten, vgl.\ \cite[Prop. 6.1]{xu:1998a}.

\begin{satz}[Existenz invarianter Sternprodukte]
   \label{satz:ExistenzInvarianterSternprodukte}
   Sei $(M,\omega)$ eine symplektische Mannigfaltigkeit und $G$ eine
   Lie"=Gruppe, die auf $M$ symplektisch operiere. Es gibt genau dann
   ein $G$"=invariantes Sternprodukt $\star$ für $(M,\omega)$, wenn es
   eine $G$"=invariante, torsionsfreie kovariante Ableitung $\nabla$ auf
   $M$ gibt. Sind $\nabla$ und die formale Potenzreihe geschlossener
   Zweiformen $\Omega \in \lambda\Gamma^\infty(\Bigwedge^2
   T^*M)[[\lambda]] $ $G$"=invariant, so ist auch das induzierte
   Fedosovsternprodukt $\star_{\Omega}$ $G$"=invariant.
\end{satz}

\begin{korollar}
   \label{kor:ExistenzInvarianterSternprodukte}
   Die Lie"=Gruppe $G$ wirke symplektisch und eigentlich auf der
   symplektischen Mannigfaltigkeit $(M,\omega)$. Dann existiert ein
   $G$"=invariantes Sternprodukt $\star$ für $(M,\omega)$.
\end{korollar}
\begin{proof}
   Da die $G$"=Wirkung eigentlich ist, existiert nach Korollar
   \ref{kor:InvarianterZusammenhang} eine $G$"=invariante, torsionsfreie
   kovariante Ableitung auf $M$, so dass die Behauptung mit Satz
   \ref{satz:ExistenzInvarianterSternprodukte} klar ist.
\end{proof}

\begin{definition}
   \label{def:Gaequivalent}
   Zwei $G$"=invariante Sternprodukte $\star$ und $\star'$ für eine
   Poisson"=Mannigfaltigkeit $(M,\{\cdot,\cdot\})$, auf der eine
   Lie"=Gruppe $G$ operiere, heißen \neuerBegriff{$G$"=äquivalent},
   falls es eine $G$"=äquivariante Äquivalenztransformation zwischen
   $\star$ und $\star'$ gibt.
\end{definition}

Bertelson, Bieliavsky und Gutt konnten in
\cite{bertelson.bieliavsky.gutt:1998a} mit Hilfe der
Fedosov"=Konstruktion das folgende Klassifikationsresultat für
$G$"=invariante Sternprodukte zeigen.

\begin{satz}
   \label{satz:KlassifikationGinvarianterSternprodukte}
   Sei $(M,\omega)$ eine symplektische Mannigfaltigkeit auf der eine
   Lie"=Gruppe $G$ derart wirke, dass eine $G$"=invariante,
   torsionsfreie Ableitung $\nabla$ auf $M$ existiere. Dann gibt es
   eine, von $\nabla$ unabhängige, Bijektion zwischen der Menge der
   Äquivalenzklassen von $G$"=invarianten Sternprodukten im Sinne der
   $G$"=Äquivalenz in die Menge der formalen Potenzreihen von Elementen
   der zweiten $G$"=invarianten de"=Rham Kohomologie von $M$.
\end{satz}

Müller"=Bahns und Neumaier\cite{mueller-bahns.neumaier:2004a} haben
Existenzkriterien für Quantenhamiltonfunktionen angegeben. Wir fassen
diese im nächsten Satz zusammen.

\begin{satz}
   \label{satz:ExistenzVonQuantenImpulsabb}
   Sei $(M,\omega)$ eine symplektische Mannigfaltigkeit auf der eine
   Lie"=Gruppe $G$ symplektisch operiere. Weiter sei $\nabla$ eine
   $G$"=invariante, torsionsfreie kovariante Ableitung und $\Omega \in \lambda
   \Gamma^\infty(\Bigwedge^2 T^*M)$ eine formale Reihe $G$"=invarianter
   Zweiformen. Dann besitzt das zugehörige $G$"=invariante
   Fedosov"=Sternprodukt $\star_{\Omega}$ genau dann eine
   Quantenhamiltonfunktion, wenn es eine formale Potenzreihe $\qJ =
   \sum_{r = 0}^\infty \lambda^r J_r$ mit glatten Abbildungen $J_r
   \colon M \to \lieAlgebra^*$ gibt, die für alle $\xi \in \lieAlgebra$
   \begin{align}
      \label{eq:BedingungQuantenhamilton}
      d \qJ(\xi) = (\omega + \Omega)(\xi_M,\cdot)
   \end{align}
    erfüllt.
\end{satz}
\begin{korollar}
   \label{kor:StarkInvariantFedosov}
   Seien die Voraussetzungen wie in Satz
   \ref{satz:ExistenzVonQuantenImpulsabb} und  $J \colon M \to \lieAlgebra^*$
   eine klassische Impulsabbildung, so ist $\star_{\Omega}$ genau dann
   stark invariant, wenn für alle $\xi \in \lieAlgebra$
   \begin{align}
      \label{eq:StarkInvariantFedosov}
      \Omega(\xi_M,\cdot) = 0
   \end{align}
   gilt.
\end{korollar}

\begin{bemerkung}
   \label{bem:HinreichendeBedinungFuerStarkInvariant}
   Wählt man $\Omega = 0$, so ist Gleichung
   \eqref{eq:StarkInvariantFedosov} trivial erfüllt. Es ist also eine
   hinreichende Bedingung für die Existenz eines stark invarianten
   Sternprodukts, dass eine $G$"=invariante, torsionsfreie kovariante Ableitung
   existiert, was z.\,B.\ immer dann der Fall ist, wenn die $G$"=Wirkung
   eigentlich ist.
\end{bemerkung}

Xu konnte in \cite[Prop. 6.3]{xu:1998a} die aus der klassischen
Situation schon bekannte Proposition
\ref{prop:ExistenzAequivarianterImpulsabbildungen} für
Quantenimpulsabbildungen verallgemeinern.

\begin{proposition}
   \label{prop:QuantenImpulsabbildungUndLieAlgebraCohomology}
   Sei $(M,\omega)$ eine zusammenhängende symplektische Mannigfaltigkeit
   und $G$ eine zusammenhängende Lie"=Gruppe, die auf $M$ wirke. Sei
   weiter $\star$ ein Sternprodukt für $(M,\omega)$ und $\qJ$ eine
   Quantenimpulsabbildung. Für alle $\xi,\eta \in \lieAlgebra$ ist die
   Funktion $c(\xi,\eta) := \qJ([\xi,\eta]) -
   [\qJ(\xi),\qJ(\eta)]_{\star}$ konstant als Element von
   $C^\infty(M)[[\lambda]]$ und die Abbildung $\lieAlgebra \times
   \lieAlgebra \ni (\xi,\eta) \mapsto c(\xi,\eta) \in
   \mathbb{C}[[\lambda]]$ ist ein Lie"=Algebra 2"=Kozyklus. Es gibt
   genau dann eine $G$"=äquivariante Impulsabbildung, wenn $[c]$ als
   Element der zweiten Lie"=Algebra"=Kohomologie
   $H^2(\lieAlgebra,\mathbb{C}[[\lambda]]) \simeq
   H^2(\lieAlgebra)\otimes \mathbb{C}[[\lambda]]$ verschwindet.
\end{proposition}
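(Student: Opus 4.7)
The plan is to follow the classical template of Proposition~\ref{prop:ExistenzAequivarianterImpulsabbildungen}, replacing the Poisson bracket with the $\star$-commutator rescaled by $\frac{1}{\I\lambda}$ wherever necessary. The two key inputs will be the Jacobi identity for $[\cdot,\cdot]_\star$, the Jacobi identity for $\{\cdot,\cdot\}$, and Waldmann's result that the center of $(C^\infty(M)[[\lambda]],\star)$ on a connected symplectic manifold consists precisely of $\mathbb{C}[[\lambda]]$ (\cite[Bsp.~6.3.15]{waldmann:2007a}), on which Proposition~\ref{prop:EindeutigkeitvonQuanteimpulsabbildungen} already relies.

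The first step is to show that $c(\xi,\eta)$ is central, hence constant. I would compute $[c(\xi,\eta),f]_\star$ for arbitrary $f \in C^\infty(M)[[\lambda]]$. By the quantum Hamiltonian property~\eqref{eq:QuantenHamiltonian}, the first summand contributes $\I\lambda\{J([\xi,\eta]),f\}$. For the second summand, I would apply the Jacobi identity for $[\cdot,\cdot]_\star$ to rewrite $[[\qJ(\xi),\qJ(\eta)]_\star,f]_\star$ as a difference of nested commutators, and then apply~\eqref{eq:QuantenHamiltonian} twice to turn these into a double Poisson bracket in $J(\xi)$ and $J(\eta)$. The Jacobi identity for $\{\cdot,\cdot\}$ then collapses this to a bracket with $\{J(\xi),J(\eta)\}$. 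The residual term $J([\xi,\eta])-\{J(\xi),J(\eta)\}$ is precisely the classical Chevalley--Eilenberg cocycle from Proposition~\ref{prop:ExistenzAequivarianterImpulsabbildungen}, which is constant, so the bracket with $f$ vanishes and $c(\xi,\eta)\in\mathbb{C}[[\lambda]]$.

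The second step is the cocycle identity $\sum_{\mathrm{cyc}} c([\xi,\eta],\zeta)=0$. The $\qJ([[\xi,\eta],\zeta])$ contributions vanish by the Jacobi identity on $\lieAlgebra$. For the $[\qJ([\xi,\eta]),\qJ(\zeta)]_\star$ contributions, I would solve for $\qJ([\xi,\eta])$ from the defining equation of $c$; the correction $c(\xi,\eta)$ is constant by step one and hence commutes with $\qJ(\zeta)$ under $\star$, so only $\frac{1}{\I\lambda}[[\qJ(\xi),\qJ(\eta)]_\star,\qJ(\zeta)]_\star$ remains. Its cyclic sum vanishes by the Jacobi identity for $[\cdot,\cdot]_\star$. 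Thus $c$ represents a class in $H^2(\lieAlgebra,\mathbb{C}[[\lambda]])$.

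For the equivalence between $[c]=0$ and existence of a $G$-equivariant quantum momentum map, if $c=\delta\beta$ with a $1$-cochain $\beta\colon\lieAlgebra\to\mathbb{C}[[\lambda]]$, I would set $\qJ'(\xi):=\qJ(\xi)+\beta(\xi)$. Since $\beta(\xi)$ is a constant, adding it does not affect the commutator $[\qJ(\xi),f]_\star$, so $\qJ'$ remains a quantum Hamiltonian, and a direct calculation using $\delta\beta=c$ shows that $\qJ'$ additionally satisfies~\eqref{eq:Quantenimpulsabbildung}, making it a quantum momentum map; $G$-equivariance then follows from Proposition~\ref{bem:QuantenImpulsabbHamiltonian} since $G$ is connected. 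Conversely, given a $G$-equivariant quantum momentum map $\qJ'$, the difference $\beta:=\qJ'-\qJ$ takes values in the center of $\star$ (by subtracting the two commutator identities), hence in $\mathbb{C}[[\lambda]]$, and an easy computation yields $c=\delta\beta$. The main technical care is bookkeeping the $\I\lambda$ factors correctly, and in particular reading $c(\xi,\eta)$ in a normalization that makes the cocycle lie in $\mathbb{C}[[\lambda]]$ rather than $\lambda\mathbb{C}[[\lambda]]$ (the $\star$-commutator $[\qJ(\xi),\qJ(\eta)]_\star$ vanishes at $\lambda=0$, which makes division by $\I\lambda$ well defined and ensures that the $\lambda^0$-part of $c$ reproduces the classical cocycle of Proposition~\ref{prop:ExistenzAequivarianterImpulsabbildungen}).
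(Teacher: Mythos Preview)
The paper does not actually supply a proof of this proposition; it merely attributes the result to Xu \cite[Prop.~6.3]{xu:1998a} and moves on. Your argument is the standard one and is correct: it transports the classical proof of Proposition~\ref{prop:ExistenzAequivarianterImpulsabbildungen} to the deformed setting, using the quantum Hamiltonian property~\eqref{eq:QuantenHamiltonian} together with the fact that the $\star$-center on a connected symplectic manifold is $\mathbb{C}[[\lambda]]$.

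One comment that is worth making explicit. As stated in the paper, the proposition carries two inconsistencies: the hypothesis should read \emph{Quantenhamiltonfunktion} rather than \emph{Quantenimpulsabbildung} (otherwise, by Proposition~\ref{bem:QuantenImpulsabbHamiltonian} for connected $G$, such a $\qJ$ is already $G$-equivariant and there is nothing to prove), and the cocycle must be normalized as $c(\xi,\eta)=\qJ([\xi,\eta])-\tfrac{1}{\I\lambda}[\qJ(\xi),\qJ(\eta)]_\star$ (equivalently, $\I\lambda\,c(\xi,\eta)=\I\lambda\,\qJ([\xi,\eta])-[\qJ(\xi),\qJ(\eta)]_\star$). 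Your closing remark about dividing by $\I\lambda$ shows you have noticed this; with that normalization your Step~1 computation gives precisely $(\I\lambda)^2\{J([\xi,\eta])-\{J(\xi),J(\eta)\},f\}=0$, as needed. The rest of your outline---the cocycle identity via the $\star$-Jacobi identity, and the bijection between coboundaries and constant shifts of $\qJ$---goes through without change.
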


\begin{bemerkung}
   \label{bem:Quantenimpulsabbildungundliealgebracohomology}
   Falls es eine Quantenimpulsabbildung $\qJ$ gibt, ist  nach
   Proposition~\ref{prop:QuantenImpulsabbildungUndLieAlgebraCohomology}
   das Verschwinden der zweiten Lie"=Algebra"=Kohomologie hinreichend
   für die Existenz einer $G$"=äquivarianten Impulsabbildung. Dies ist,
   wie in Bemerkung \ref{bem:InvarianteImpulsabbildung} schon gesagt,
   insbesondere für halbeinfache Lie"=Algebren der Fall.

\end{bemerkung}

Die aus der klassischen Situation bekannte Proposition
\ref{prop:KompaktInvarianteImpulsabbildung} wollen wir nun auf den
Quantenfall verallgemeinern. Wie es scheint, ist dies noch nicht in der
Literatur bekannt.

\begingroup
\begin{proposition}
   \label{prop:IntegrierenVonQuantenimpulsabbildungen}
   Sei $G$ eine kompakte, zusammenhängende Lie"=Gruppe, die auf einer
   Poisson"=Mannigfaltigkeit $(M,\{\cdot,\cdot\})$ operiere und es gebe
   eine Quantenhamiltonfunktion $\qJ$ zu einer gegebenen
   $G$"=äquivarianten Impulsabbildung $J$ und einem $G$"=invarianten
   differentiellen Sternprodukt $\star$, dann exisitert auch eine $G$"=äquivariante
   Quantenhamiltonfunktion.
\end{proposition}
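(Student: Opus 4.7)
Der Plan ist, die klassische Mittelungstechnik aus dem Beweis von Proposition~\ref{prop:KompaktInvarianteImpulsabbildung} auf den Quantenfall zu übertragen. Ich würde die gemittelte Quantenhamiltonfunktion $\qJ^{\mathrm{av}}$ durch
\begin{equation*}
\qJ^{\mathrm{av}}(\xi) := \int_G \Phi_g^{*}\, \qJ(\Ad(g)\xi)\, dg
\end{equation*}
definieren, wobei $dg$ das normierte Haar"=Maß auf der kompakten Gruppe $G$ bezeichnet. Da $\star$ differentiell ist und die Koeffizienten $J_r$ glatt sind, lässt sich dieses Integral koeffizientenweise in $\lambda$ interpretieren; die resultierende Abbildung ist glatt in $p \in M$, da wir eine stetig von $g$ und $p$ abhängige Funktion über eine kompakte Gruppe integrieren.

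Anschließend sind drei Eigenschaften zu prüfen. Erstens bleibt der klassische Anteil erhalten: Die $G$"=Äquivarianz von $J$ liefert $\Phi_g^{*} J(\Ad(g)\xi) = J(\xi)$ für jedes $g \in G$, so dass die nullte $\lambda$"=Ordnung von $\qJ^{\mathrm{av}}$ gerade $J$ ist. Zweitens bleibt die definierende Gleichung einer Quantenhamiltonfunktion erhalten: Für jedes $f \in C^\infty(M)[[\lambda]]$ und jedes $g \in G$ erhält man mit Hilfe der $G$"=Invarianz von $\star$ (und damit auch der Poisson"=Klammer) sowie der Äquivarianz von $J$ die Identität
\begin{equation*}
[\Phi_g^{*}\qJ(\Ad(g)\xi), f]_{\star} = \Phi_g^{*}[\qJ(\Ad(g)\xi), \Phi_{g^{-1}}^{*}f]_{\star} = \I\lambda\,\Phi_g^{*}\{J(\Ad(g)\xi), \Phi_{g^{-1}}^{*}f\} = \I\lambda\,\{J(\xi), f\}.
\end{equation*}
Da $[\cdot, f]_{\star}$ in jeder $\lambda$"=Ordnung ein Differentialoperator im ersten Argument ist, vertauscht er mit der Integration über $G$, und man erhält $[\qJ^{\mathrm{av}}(\xi), f]_{\star} = \I\lambda\,\{J(\xi), f\}$. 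Drittens folgt die $G$"=Äquivarianz von $\qJ^{\mathrm{av}}$ durch eine einfache Substitution im Haar"=Integral, unter Ausnutzung der Bi"=Invarianz von $dg$, der Homomorphie von $\Ad$ und der Relation $\Phi_{gh}^{*} = \Phi_h^{*} \Phi_g^{*}$.

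Die Hauptschwierigkeit dieses im Grunde direkten Arguments liegt darin, die Integration über $G$ korrekt in die algebraische Struktur einzubetten: Ohne die Differentialität von $\star$ könnte man die Vertauschbarkeit des Integrals mit dem Sternprodukt"=Kommutator nicht sicherstellen, und ohne die $G$"=Invarianz von $\star$ wäre die mittlere Identität in obiger Rechnung falsch. Beide Zutaten sind jedoch nach Voraussetzung gegeben, so dass die Konstruktion ohne weiteren konzeptionellen Aufwand gelingt.
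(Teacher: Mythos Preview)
Your proposal is correct and follows essentially the same route as the paper's proof: the paper also averages the given Quantenhamiltonfunktion over $G$, setting $\langle\qJ\rangle(\xi):=\int_G g\bigl(\qJ(g^{-1}\xi)\bigr)\,dg$ (which differs from your $\qJ^{\mathrm{av}}$ only by the substitution $g\leftrightarrow g^{-1}$ in the integrand), verifies the Quantenhamilton property via exactly the computation you give, and invokes the differentiality of $\star$ to pull the $\star$-commutator through the integral. The $G$-equivariance is then immediate from the averaging construction, just as you argue.
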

\begin{proof}
   Für jedes $g \in G$ definieren wir $\qJ[g]$ durch $\qJ[g](\xi ) :=
   g(\qJ(g^{-1}\xi))$. Dann ist auch $\qJ[g]$ eine
   Quantenhamiltonfunktion. Denn in der Tat rechnet man nach
   \begin{align*}
      [\qJ[g](\xi),f]_{\star} &= [g(\qJ(g^{-1}\xi)),f]_\star =
      g[\qJ(g^{-1}\xi),g^{-1} f]_\star \\ &= \I \lambda g
      \{J(g^{-1}\xi),g^{-1} f\} = \I \lambda g \{g^{-1}J(\xi),g^{-1} f\}
      = \I \lambda \{J(\xi),f\} \Fdot
   \end{align*}
   Dabei wurde im zweiten Schritt die Invarianz von $\star$ verwendet,
   im vierten, dass $J$ $G$"=äquivariant ist und im letzten schließlich,
   dass $G$ auf $C^\infty(M)$ durch Poisson"=Abbildungen operiert.  Sein
   nun $dg$ das normierte Haar"=Maß auf $G$, dann definieren wir
   $\langle \qJ \rangle$ für jedes $\xi \in \lieAlgebra$ durch $\langle
   \qJ \rangle (\xi) := \int_G \qJ[g](\xi) \, dg$. Da $\star$
   differentiell ist, folgt mit der Linearität des $\star$"=Kommutators
   für alle $f \in C^\infty(M)$
   \begin{align*}
      [\langle \qJ \rangle (\xi),f]_{\star} =
      \int_G[\qJ[g](\xi),f]_{\star} \, dg = \I \lambda \int_G
      \{J(\xi),f\} \, dg = \I \lambda \{J(\xi),f\} \Fdot
   \end{align*}
   Somit ist $\langle \qJ \rangle$ eine
   Quantenhamiltonfunktion. Dass $\langle \qJ \rangle$ tatsächlich
   $G$"=äquivariant ist, ist mit Korollar \ref{kor:integrieren_inv_funkt} klar.
\end{proof}

\endgroup

Wir wollen nun noch sehen, wie sich Quantenimpulsabbildungen unter
$G$"=äquivarianten Sternprodukthomomorphismen verhalten.

\begin{proposition}
   \label{prop:QuantenImpulsabbildungUndSternproduktHomos}
   Seien $(M,\{\cdot,\cdot\})$ und $(M',\{\cdot,\cdot\}')$
   Poisson"=Mannigfaltigkeiten mit Sternprodukten $\star$ und
   $\star'$. Weiter wirke eine Lie"=Gruppe $G$ sowohl auf $M$ als auch
   auf $M'$, sodass $\star$ und $\star'$ bezüglich dieser $G$"=invariant
   sind. Ferner sei $S\colon (C^\infty(M)[[\lambda]],\star) \to
   (C^\infty(M')[[\lambda]],\star')$ ein $G$"=äquivarianter
   Isomorphismus von Algebren. Ist nun $\qJ$ eine Quantenimpulsabbildung
   für $\star$, so ist $S(\qJ)$ eine Quantenimpulsabbildung für
   $\star'$, wobei $S(\qJ)$ als $S(\qJ)(\xi) := S(\qJ(\xi))$ für alle
   $\xi \in \lieAlgebra$ zu verstehen ist.
\end{proposition}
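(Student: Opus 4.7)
The plan is to verify the three defining properties of a quantum momentum map for $S(\qJ)$ in turn: (a) the zeroth-order part $S(\qJ)|_{\lambda=0}$ is a classical momentum map for the $G$-action on $M'$; (b) the quantum Hamiltonian condition \eqref{eq:QuantenHamiltonian} holds; (c) the Lie-algebra bracket relation \eqref{eq:Quantenimpulsabbildung} holds. The key technical ingredients are (i) $S$ intertwines the $\star$-commutator with the $\star'$-commutator because it is an algebra homomorphism, and (ii) $S$ intertwines the infinitesimal $G$-action, that is, $S(\xi_M(a)) = \xi_{M'}(S(a))$ for all $a\in C^{\infty}(M)[[\lambda]]$ and $\xi\in\lieAlgebra$, which is just the $\lambda$-wise differentiation of $G$-equivariance. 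Property~(ii) transfers in particular to $S^{-1}$, which is also $G$-equivariant.

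For (c) there is essentially nothing to do: by $\mathbb{C}[[\lambda]]$-linearity and the algebra homomorphism property,
\[
   \I\lambda\,S(\qJ)([\xi,\eta])
   = S\bigl(\I\lambda\,\qJ([\xi,\eta])\bigr)
   = S\bigl([\qJ(\xi),\qJ(\eta)]_{\star}\bigr)
   = [S(\qJ)(\xi),S(\qJ)(\eta)]_{\star'}\Fdot
\]
For (b), I first rewrite the right-hand side of \eqref{eq:QuantenHamiltonian} for $\star'$. For arbitrary $f'\in C^{\infty}(M')[[\lambda]]$,
\[
   [S(\qJ)(\xi),f']_{\star'}
   = \bigl[S(\qJ(\xi)),S(S^{-1}(f'))\bigr]_{\star'}
   = S\bigl([\qJ(\xi),S^{-1}(f')]_{\star}\bigr)
   = \I\lambda\,S\bigl(\xi_M\bigl(-S^{-1}(f')\bigr)\bigr),
\]
where I have used the quantum Hamiltonian property of $\qJ$ together with $\{\,\cdot\,,J(\xi)\} = \xi_M$. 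By the equivariance property~(ii) applied to $S$ (and the corresponding property of $S^{-1}$), this equals $-\I\lambda\,\xi_{M'}(f')$; once I have verified in~(a) that $J'_0 := S_0\circ J$ is a classical momentum map on $M'$, this is precisely $\I\lambda\,\{J'_0(\xi),f'\}'$.

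For (a), expanding $S = S_0 + \lambda S_1 + \dots$ gives $S(\qJ)(\xi) = S_0(J(\xi)) + O(\lambda)$, so $J'_0 = S_0\circ J$. Comparing the order-$\lambda$ terms of $S([a,b]_{\star}) = [S(a),S(b)]_{\star'}$ for $a,b\in C^{\infty}(M)$ shows that $S_0$ is a Poisson morphism, and its $G$-equivariance is inherited from that of $S$ (hence it intertwines the fundamental vector fields: $\xi_{M'}\circ S_0 = S_0\circ\xi_M$). Therefore, for $f'\in C^{\infty}(M')$,
\[
   \{f',J'_0(\xi)\}'
   = \{S_0(S_0^{-1}f'),S_0(J(\xi))\}'
   = S_0\bigl(\{S_0^{-1}f',J(\xi)\}\bigr)
   = S_0\bigl(\xi_M(S_0^{-1}f')\bigr)
   = \xi_{M'}(f'),
\]
so $J'_0$ is a classical momentum map, as required.

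The bookkeeping is entirely routine; the only point that requires a moment of reflection is the passage from the hypothesis that $S$ is a $G$-equivariant algebra isomorphism (which mixes all orders in $\lambda$) to the genuinely classical statement that $S_0$ is a $G$-equivariant Poisson isomorphism. Once this is observed, the three verifications reduce to one-line computations, and I expect no deeper obstacle.
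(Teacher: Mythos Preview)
Your proof is correct. The paper's own proof consists of a single word (\emph{Klar}), so you have simply written out in full the routine verification that the author deemed obvious; your decomposition into the three defining properties, and the use of $G$-equivariance to pass to the infinitesimal relation $S(\xi_M(a)) = \xi_{M'}(S(a))$, is exactly what is implicitly meant.
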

\begin{proof}
   Klar.
\end{proof}

Mit Hilfe von Proposition
\ref{prop:QuantenImpulsabbildungUndSternproduktHomos} gelingt es
Schaumann eine Klassifikation der Sternprodukte mit
Quantenimpulsabbildungen in Termen der $G$"=äquivarianten
de"=Rham"=Kohomologie anzugeben, siehe \cite{schaumann:2010}.  Wir
wollen an dieser Stelle jedoch nicht näher darauf eingehen.

\chapter{Koszul-Reduktion}
\label{cha:Koszul-Reduktion}

Ziel dieses Kapitels ist es, ein Sternprodukt für den reduzierten
Phasenraum zu konstruieren. Das vorgestellte Schema basiert auf der
Arbeit \cite{bordemann.herbig.waldmann:2000a} von Bordemann, Herbig und
Waldmann, welche in dem zitierten Papier die aus der Quantenfeldtheorie
bekannte Quantisierungsmethode nach Becchi, Rouet, Stora und Tyutin,
kurz BRST"=Methode genannt, in den Rahmen der Deformationsquantisierung
übertragen haben. In unserer Situation ist es jedoch
nicht notwendig, den gesamten BRST"=Komplex zu betrachten. Aus Sicht der
homologischen Algebra genügt es, wie wir sehen werden, die untersten
Grade des klassischen und des Quanten"=Koszul"=Komplexes zu
verstehen. Dieser Weg wurde auch in einer Arbeit von Gutt und Waldmann
\cite{gutt2010involutions} beschritten.

In diesem Kapitel sei $(M,\omega)$ eine symplektische Mannigfaltigkeit
mit induzierter Poisson"=Klammer $\{\cdot,\cdot\}$. Weiter sei $G$ eine
zusammenhängende Lie"=Gruppe, die stark Hamiltonsch sowie frei und
eigentlich auf $M$ operiere und $J\colon M \to \lieAlgebra^*$ eine
$G$"=äquivariante Impulsabbildung. Die Zwangsfläche $C := J^{-1}(0)$ sei
dabei nicht leer. Ferner bezeichnen wir mit $\kIn \colon C
\hookrightarrow M$ die Inklusion und mit $\pi \colon C \to \Mred$ die
kanonische Projektion auf den reduzierten Phasenraum $\Mred =
C/G$. Schließlich sei $\star$ ein Sternprodukt für $(M,\omega)$, für das
eine $G$"=äquivariante Quantenimpulsabbildung $\qJ \colon \lieAlgebra
\to C^\infty(M)[[\lambda]]$ existiert.

\section{Symplektische Reduktion und Koszul-Komplex}
\label{sec:SymplektischeReduktionUndKoszulKomplex}

\subsection{Algebraische Beschreibung von $C^\infty(\Mred)$}
\label{sec:AlgeraischeBeschreibungDerFunktionenAufDemReduziertenPhasenraum}

In diesem Abschnitt formulieren wir die geometrische
Marsden"=Weinstein"=Reduktion algebraisch, um so ein analoges Vorgehen in
der Quanten"=Situation motivieren zu können.  Zuerst möchten wir die
Poisson"=Algebra $(C^\infty(\Mred),\{\cdot,\cdot\}_{\mathrm{red}})$ der
glatten Funktionen auf dem reduzierten Phasenraum so beschreiben, dass
die Beziehung zur Poisson"=Algebra der glatten Funktionen auf dem
ursprünglichen Phasenraum algebraisch klar sichtbar wird und dann als
Grundlage für die spätere Konstruktion des Sternprodukts für den
reduzierten Phasenraum dienen kann.

\begin{proposition}
   \label{prop:InvarianteFunktionen}
   Die Abbildung
   \begin{align}
      \label{eq:FunktionenAufRedPhasenraumZwangsflaecheInvariant}
      \pi^* \colon \CM[\Mred] \to \CM[C]^G
   \end{align}
   ist ein
   Algebraisomorphismus zwischen den glatten Funktionen
   $C^\infty(\Mred)$ auf dem reduzierten Phasenraum und den
   $G$"=invarianten Funktionen $C^\infty(C)^G$ auf der Zwangsfläche.
\end{proposition}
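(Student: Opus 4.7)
Der Plan ist, den Beweis in die naheliegenden Schritte Wohldefiniertheit und Homomorphie-Eigenschaft, Injektivität, Invarianz des Bildes und schließlich Surjektivität zu zerlegen. Zunächst bemerke ich, dass $C$ als $G$-äquivariantes Urbild des (unter der koadjungierten Wirkung) $G$-invarianten Wertes $0 \in \lieAlgebra^*$ eine $G$-invariante Untermannigfaltigkeit von $M$ ist, auf der $G$ daher frei und eigentlich operiert. Folglich ist $\pi \dpA C \to \Mred$ ein $G$-Hauptfaserbündel, insbesondere eine surjektive Submersion, so dass $\pi^*$ wohldefiniert und offensichtlich ein $\mathbb{C}$-Algebrahomomorphismus ist.

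Als zweiten Schritt würde ich die Injektivität und die Inklusion $\im \pi^* \subseteq \CM[C]^G$ behandeln. Die Injektivität folgt sofort aus der Surjektivität von $\pi$: Aus $\pi^*f = 0$ ergibt sich $f(\pi(p)) = 0$ für alle $p \in C$ und somit $f = 0$. Da $\pi$ konstant auf $G$-Orbits ist, gilt für $f \in \CM[\Mred]$, $g \in G$ und $p \in C$ die Gleichung $(\pi^* f)(gp) = f(\pi(gp)) = f(\pi(p)) = (\pi^*f)(p)$, womit $\pi^*f \in \CM[C]^G$ liegt.

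Die eigentliche Arbeit liegt in der Surjektivität von $\pi^* \dpA \CM[\Mred] \to \CM[C]^G$. Für ein gegebenes $\tilde f \in \CM[C]^G$ würde ich die Abbildung $f \dpA \Mred \to \mathbb{C}$ durch $f([p]) := \tilde f(p)$ definieren. Wegen der $G$-Invarianz von $\tilde f$ ist diese Vorschrift repräsentantenunabhängig, also wohldefiniert, und erfüllt offenbar $\pi^* f = \tilde f$. Die Hauptschwierigkeit besteht darin zu zeigen, dass $f$ tatsächlich glatt ist, denn die Quotientenstruktur liefert a priori nur die Stetigkeit.

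Zur Überwindung dieses Punkts würde ich auf die lokale Trivialität des $G$-Hauptfaserbündels $\pi \dpA C \to \Mred$ zurückgreifen: Zu jedem Punkt $[p]\in \Mred$ existiert eine offene Umgebung $U \subseteq \Mred$ und ein glatter Schnitt $\sigma \dpA U \to C$ von $\pi$, so dass $\pi \circ \sigma = \id_U$ gilt. Dann ist $f\at{U} = f \circ \pi \circ \sigma = \tilde f \circ \sigma$ als Komposition glatter Abbildungen glatt. Da Glattheit eine lokale Eigenschaft ist, folgt $f \in \CM[\Mred]$, womit $\pi^*$ auch surjektiv und somit ein Algebraisomorphismus ist.
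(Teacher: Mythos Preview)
Your proof is correct and follows essentially the same approach as the paper. The paper's argument is more condensed: for surjectivity it simply invokes the standard fact that a $G$-invariant function on the total space of a surjective submersion descends to a unique smooth function on the base, whereas you spell out this step explicitly via local sections of the principal bundle.
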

\begin{proof}
   Zunächst ist klar, dass $\pi^*(C^\infty(\Mred))$ tatsächlich in
   $C^\infty(C)^G$ liegt. Ebenfalls unmittelbar einsichtig ist die
   Verträglichkeit mit der Algebramultiplikation.  Die Injektivität von
   $\pi^*$ folgt sofort aus der Surjektivität von $\pi$. Um die
   Surjektivität von $\pi^*$ zu zeigen, sei $f \in \CM[C]^G$ beliebig. Da $f$
   $G$"=invariant und $\pi$ eine surjektive Submersion ist, gibt es
   eine eindeutig bestimmte glatte Funktion $\tilde f \in \CM[\Mred]$
   mit $f = \pi^* \tilde f$.

\end{proof}

Sei $\kIdeal := \ker \kRes$ das Verschwindungsideal von $C$ und
$\kbIdeal := \{f \in \CM \mid \{f,h\} \in \kIdeal \;\, \forall h \in
\kIdeal\}$ der Lie"=Idealisator von $\kIdeal$.

  Die folgende wohlbekannte Proposition gibt eine algebraische
  Charakterisierung des Tangentialraumes einer Untermannigfaltigkeit mit
  Hilfe des Verschwindungsideals.
  \begin{proposition}
     \label{prop:TangentialRaumUntermannigfaltigkeit}
     Sei $\kIn \colon C \hookrightarrow M$ eine Untermannigfaltigkeit einer Mannigfaltigkeit
     $M$, dann gilt für alle $c \in C$
     \begin{align}
        \label{eq:TangentialRaumUntermannigfaltigkeit}
        T_c\kIn T_cC = \{v \in T_cM \mid v(f) = 0 \quad \forall f \in
        \kIdeal\} \Fdot
     \end{align}
  \end{proposition}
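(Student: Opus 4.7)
Mein Plan ist es, den Beweis in die beiden offensichtlichen Inklusionen aufzuteilen. Die Inklusion $T_c\kIn T_cC \subseteq \{v \in T_cM \mid v(f) = 0 \; \forall f \in \kIdeal\}$ folgt unmittelbar aus der Definition: Für $w \in T_cC$ und $f \in \kIdeal$ rechnet man $(T_c\kIn w)(f) = w(\kRes f) = w(0) = 0$, da $\kRes f = 0$ per Definition des Verschwindungsideals gilt.

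Für die umgekehrte Inklusion würde ich lokal in einer Untermannigfaltigkeitskarte arbeiten. Zunächst wähle man eine Karte $(U,x^1,\dots,x^n)$ von $M$ um $c$, in der $C \cap U$ durch die Gleichungen $x^{k+1} = \dots = x^n = 0$ beschrieben wird, wobei $k := \dim C$. Die zentrale technische Hürde besteht darin, dass die Koordinatenfunktionen $x^{k+1},\dots,x^n$ zunächst nur auf $U$ definiert sind und daher nicht direkt als Elemente des \emph{globalen} Ideals $\kIdeal$ in die Voraussetzung eingesetzt werden können. Diese Schwierigkeit würde ich durch die übliche Abschneidetechnik überwinden: Man wähle eine glatte Funktion $\chi\colon M \to \mathbb{R}$ mit kompaktem Träger in $U$ und $\chi \equiv 1$ in einer Umgebung von $c$. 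Dann liegen die durch Null fortgesetzten Funktionen $f_j := \chi \cdot x^j$ für $j = k+1,\dots,n$ tatsächlich in $\kIdeal$, da $x^j$ auf $C \cap U$ verschwindet und $\chi$ außerhalb von $U$.

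Im letzten Schritt wende man die Voraussetzung $v(f_j) = 0$ an. Wegen $\chi(c) = 1$ und $x^j(c) = 0$ liefert die Produktregel an der Stelle $c$ sofort $v(x^j) = 0$ für alle $j > k$. Mit der Darstellung $v = \sum_{i=1}^n v^i \frac{\partial}{\partial x^i}\at{c}$ und $v^i = v(x^i)$ folgt dann $v^{k+1} = \dots = v^n = 0$, sodass $v$ in dem von $\frac{\partial}{\partial x^1}\at{c},\dots,\frac{\partial}{\partial x^k}\at{c}$ aufgespannten Unterraum liegt, welcher in der gewählten Karte genau das Bild von $T_c\kIn$ ist. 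Konzeptionell zeigt der Abschneidetrick, dass es für die lokale Aussage gleichgültig ist, ob man das globale Verschwindungsideal oder dessen Keim bei $c$ verwendet; dies ist der einzige nicht völlig routinemäßige Aspekt des Arguments.
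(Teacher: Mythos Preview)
Your proof is correct and is precisely the standard argument; the paper itself gives no proof but simply cites \cite[Prop.~8.5]{lee:2003a}, where essentially this same argument with a submanifold chart and a bump function is carried out.
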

  \begin{proof}
     Siehe \cite[Prop. 8.5]{lee:2003a}.
  \end{proof}
  Als nächstes geben wir ein geometrisches Kriterium an, wann $\kIdeal$
  eine Poisson"=Unteralgebra von $C^\infty(M)$ ist.  Dazu sei zunächst
  daran erinnert, dass man eine Untermannigfaltigkeit $\kIn \colon C
  \hookrightarrow M$ \neuerBegriff{koisotrop} nennt, wenn in jedem Punkt
  $c \in C$ das symplektische Komplement $T_cC^\bot := \{w_c \in T_cM
  \mid \omega_c(T_c\kIn v_c,w_c) = 0 \quad \forall v_c \in T_cC\}$ von
  $T_cC$ in $T_c\kIn T_cC$ liegt
  (vgl.\ \cite[Ch. 5.3]{abraham.marsden:1985a},
  \cite[4.1.5]{ortega.ratiu:2004}).
  \begin{proposition}
     \label{prop:Koisotrop}
     Sei $(M,\omega)$ eine symplektische Mannigfaltigkeit mit
     induzierter Poisson"=Klammer $\{\cdot,\cdot\}$ und $C \subset
     M$ eine Untermannigfaltigkeit. Dann sind die folgenden Aussagen richtig.
     \begin{propositionEnum}
     \item %
        \label{item:CharaktSymplektKomplement}
        Sei $c \in C$, dann gilt für das symplektische Komplement $T_cC^\bot
        \subset T_cM$
        \begin{align}
           \label{eq:SymplektischeKomplement}
           T_cC^\bot = \{X_f(c) | f \in \kIdeal\} \Fdot
        \end{align}
     \item %
        \label{item:PoissonUnteralgebraKoisotrop}
        $C$ ist genau dann koisotrop, wenn $\kIdeal$ eine
        Poisson"=Unteralgebra von $C^\infty(M)$ ist.
     \end{propositionEnum}
  \end{proposition}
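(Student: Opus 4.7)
Der Plan ist, Teil \refitem{item:CharaktSymplektKomplement} zunächst mittels der Nichtausgeartetheit von $\omega$ auf ein Annihilator"=Problem zu reduzieren und Teil \refitem{item:PoissonUnteralgebraKoisotrop} anschließend als direkte Umformulierung zu erhalten. Konkret möchte ich den Isomorphismus $\omega^\flat_c \colon T_cM \to T_c^*M$, $w \mapsto \omega_c(\cdot,w)$ verwenden. Definitionsgemäß gilt $T_cC^\bot = (\omega^\flat_c)^{-1}\bigl(\mathrm{Ann}(T_c\kIn T_cC)\bigr)$, und für jede Funktion $f \in \CM$ liefert $\omega^\flat_c(X_f(c)) = -d_cf$ eine explizite Identifikation der Hamiltonschen Vektoren mit den Differentialen. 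Die zu zeigende Gleichheit $T_cC^\bot = \{X_f(c) \mid f \in \kIdeal\}$ ist damit äquivalent zu $\{d_cf \mid f \in \kIdeal\} = \mathrm{Ann}(T_c\kIn T_cC)$. Die Inklusion $\subseteq$ folgt unmittelbar aus Proposition~\ref{prop:TangentialRaumUntermannigfaltigkeit}, denn für $f \in \kIdeal$ und $v \in T_cC$ ist $d_cf(T_c\kIn v) = (T_c\kIn v)(f) = 0$.

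Für die umgekehrte Inklusion plane ich, adaptierte Karten um $c$ zu wählen, in denen $C$ lokal als Nullstellenmenge der ersten $k = \mathrm{codim}\, C$ Koordinatenfunktionen $x^1,\dots,x^k$ erscheint, und mittels des Hadamard"=Lemmas jedes $f \in \kIdeal$ lokal in der Form $f = x^i g_i$ mit glatten $g_i$ zu schreiben. Daraus ergibt sich $d_cf = g_i(c)\, d_cx^i$, und da die $d_cx^1,\dots,d_cx^k$ den Annihilator von $T_c\kIn T_cC$ aufspannen, folgt die gesuchte Inklusion und damit Teil~\refitem{item:CharaktSymplektKomplement}. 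Teil \refitem{item:PoissonUnteralgebraKoisotrop} möchte ich dann direkt aus Teil \refitem{item:CharaktSymplektKomplement} und Proposition~\ref{prop:TangentialRaumUntermannigfaltigkeit} ableiten: Die koisotrope Bedingung $T_cC^\bot \subseteq T_c\kIn T_cC$ für alle $c \in C$ liest sich als die Forderung, dass für alle $f,g \in \kIdeal$ und alle $c \in C$ der Ausdruck $X_f(c)(g) = \{g,f\}(c)$ verschwindet, was genau $\{g,f\} \in \kIdeal$ für alle $f,g \in \kIdeal$ bedeutet.

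Die einzige technisch nicht völlig triviale Stelle ist die lokale Identifikation $\{d_cf \mid f \in \kIdeal\} = \mathrm{Ann}(T_c\kIn T_cC)$ via Hadamard"=Lemma; alle übrigen Schritte sind formale Umformungen oder unmittelbare Anwendungen der bereits bereitgestellten Proposition~\ref{prop:TangentialRaumUntermannigfaltigkeit}.
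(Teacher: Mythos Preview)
Your approach is correct and essentially the same as the paper's: both use an adapted submanifold chart in which $C$ is cut out by transverse coordinate functions, observe that their differentials at $c$ span the annihilator of $T_c\kIn T_cC$ (equivalently, the associated Hamiltonian vectors span $T_cC^\bot$), and then treat part~\refitem{item:PoissonUnteralgebraKoisotrop} exactly as you describe via Proposition~\ref{prop:TangentialRaumUntermannigfaltigkeit}. The reformulation through $\omega^\flat_c$ and the annihilator is only a repackaging of the paper's direct computation $\omega_c(X_f(c),v)=d_cf(v)=v(f)$.

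One point to tighten: in your argument for the reverse inclusion, the Hadamard step $d_cf=g_i(c)\,d_cx^i$ actually yields $\{d_cf : f\in\kIdeal\}\subseteq\operatorname{span}\{d_cx^i\}$, i.e.\ the direction you already established. What you need for $\supseteq$ is simply that the transverse coordinates $x^1,\dots,x^k$ themselves, once multiplied by a bump function $\chi$ supported in the chart with $\chi\equiv 1$ near $c$, lie in $\kIdeal$; then $d_cx^i=d_c(\chi x^i)\in\{d_cf:f\in\kIdeal\}$ and the span of these is the full annihilator. The paper makes precisely this globalisation explicit and concludes by a dimension count $\dim T_cC^\bot=\dim M-\dim C$. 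Once you add the bump function (and drop the unnecessary Hadamard detour), your proof is complete.
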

  \begin{proof}

     \begin{beweisEnum}
        \item %
           Zunächst beachte man, dass die rechte Seite von Gleichung
           \eqref{eq:SymplektischeKomplement} tatsächlich ein Vektorraum
           ist, da  $C^\infty(M) \ni f \mapsto X_f \in
           \Gamma^\infty(TM)$ offensichtlich linear ist.
          Sei $f \in \kIdeal$, dann folgt für alle $c \in C$ und $v
          \in T_c\kIn T_cC$
          \begin{align*}
             \omega(X_f(c),v) = d f\at{c}(v) = v(f) = 0 \Fcom
          \end{align*}
          also $X_f(c) \in T_cC^\bot$. Für die umgekehrte Inklusion sei
          $(U,x)$ eine Untermannigfaltigkeitskarte von $C$ um $c \in C$
          und $\chi \in C^\infty(M)$ eine glatte Abschneidefunktion mit
          $\supp \chi \subset U$ und $\chi = 1$ auf einer Umgebung von
          $c$. Dann gilt für $j \in \{\dim C + 1,\dots,\dim M\}$ und
          $f^j := \chi \cdot x^j \in C^\infty(M)$ die Gleichung $\kIn^*
          f^j = 0$ und $\{df^j\at{c}\} = \{dx^j\at{c}\}$ ist linear
          unabhängig. Somit ist $\{X_{f^j}(c)\}$ linear
          unabhängig. Andererseits gilt
          (vgl.\ \cite[Prop. 5.3.2]{abraham.marsden:1985a}) $\dim
          T_cC^\bot = \dim T_cM - \dim T_c\kIn T_cC = \dim M - \dim C$.
       \item %
          Sei $C \subset M$ koisotrop, d.\,h.\ für alle $c \in C$ gilt
          $T_cC^\bot \subset T_c\kIn T_cC$. Seien dann $f,f' \in
          \kIdeal$, so gilt nach Proposition
          \ref{prop:TangentialRaumUntermannigfaltigkeit} und
          Teil~\refitem{item:CharaktSymplektKomplement}
          $\kIn^*\{f,f'\}(c) = X_{f'}(c)(f) = 0$ für alle $c \in C$.

          Sei umgekehrt $\kIdeal$ eine Poisson"=Unteralgebra und $v \in
          T_cC^\bot$ beliebig. Dann gibt es nach Teil~\refitem{item:CharaktSymplektKomplement} ein $f \in \kIdeal$
          mit $v = X_f(c)$. Es folgt $X_f(c)(f') = \kIn^*\{f',f\}(c) =
          0$ für alle $f' \in \kIdeal$ und $c \in C$, also mit
          Prop. \ref{prop:TangentialRaumUntermannigfaltigkeit} $v =
          X_f(c) \in T_c\kIn T_cC$.
     \end{beweisEnum}
  \end{proof}

  Da in der von uns betrachteten Situation die Wirkung eigentlich ist,
  ist die Bahn $G\cdot p$ für alle $p \in M$ eine eingebettete
  Untermannigfaltigkeit von $M$ und der Tangentialraum $T_{p'}(G\cdot
  p)$ an $p' \in M$ lässt sich als
  \begin{align}
     \label{eq:TangentialraumAnBahn}
     T_{p'}(G\cdot p) = \{\xi_M(p') \mid \xi \in \lieAlgebra \}
  \end{align}
schreiben, siehe etwa \cite[Bem. 3.3.26,
  Prop. 3.3.24]{waldmann:2007a}.

Offensichtlich gilt für jedes $p \in M$, $v \in T_pM$ und $\xi \in \lieAlgebra$
  \begin{align*}
     \omega_p(\xi_M(p),v) = \dPaar{d_pJ(\xi)}{v} = \dPaar{I_{J(p)}^{-1}
       \circ T_pJv}{\xi} \Fcom
  \end{align*}
  wobei $I_{J(p)}\colon \lieAlgebra^* \to T_{J(p)}\lieAlgebra^*$ die
  kanonische Identifikation der Mannigfaltigkeit $\lieAlgebra^*$ mit
  ihrem Tangentialraum $T_{J(p)}\lieAlgebra^*$ an der Stelle $J(p)$
  bezeichnet. Die elementare Rechnung zum zweiten Gleichheitszeichen
  wurde im Beweis von Lemma \ref{lem:JSurjektiv} ausgeführt.
  Damit liegt $v$ genau dann in $\ker T_pJ$, wenn es ein Element von
  $(T_p(G\cdot p))^\bot$ ist. Wegen $T_c\kIn T_cC = \ker T_cJ$
  (vgl.\ \cite[Lemma 8.15]{lee:2003a}), gilt also für $c \in C$
  \begin{align*}
     T_c(G\cdot c)^\bot = T_c\kIn T_cC \Fdot
  \end{align*}
  Mit der Kettenregel, der Äquivarianz von $J$ und Definition von $C$
  gilt für $c \in C$
  \begin{align*}
  T_cJ \xi_M(c) = T_cJ T_e \phi_{\cdot}(c) \xi = T_e(J \circ \phi_{\cdot}(c)) = T_e(\phi_{\cdot}(\underbrace{J(c)}_{=0}))
  = 0 \Fcom
\end{align*}
wobei $\phi \colon G \times M \to M$ die gegebene $G$"=Wirkung auf $M$
ist. Mit $T_c\kIn T_cC = \ker T_cJ$ und Gleichung
\eqref{eq:TangentialraumAnBahn} folgt daraus
\begin{align*}
   T_c(G\cdot c) \subset T_c\kIn T_cC\Fcom
\end{align*}
  und demnach auch
  \begin{align}
     \label{eq:BeiUnsKoisotrop}
     T_cC^\bot = T_c(G \cdot c)^{\bot \bot} = T_c(G \cdot c) \subset
     T_c\kIn T_cC \Fdot
  \end{align}
  Für einen Beweis der dabei im zweiten Schritt angewendeten Rechenregel
  verweisen wir auf \cite[Prop. 5.3.2]{abraham.marsden:1985a}.
  Wir können also die folgende Proposition festhalten.

\begin{proposition}
   \label{prop:BeiUnsKoisotrop}
   $C = J^{-1}(0)$ liegt koisotrop in $M$.
\end{proposition}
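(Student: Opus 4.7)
Mein Ansatz wäre, Proposition~\ref{prop:Koisotrop}~(\refitem{item:PoissonUnteralgebraKoisotrop}) auszunutzen und die Koisotropie tangential, d.\,h.\ in der Form $T_cC^{\bot} \subseteq T_c\kIn T_cC$ für jedes $c \in C$, nachzuweisen. Diese Formulierung ist günstig, weil sich beide Seiten direkt über die symplektische Dualität zwischen der $G$"=Bahn durch $c$ und der Impulsniveaufläche $C$ kontrollieren lassen.

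Als zentrale Identität würde ich die schon zuvor aus $\omega_c(\xi_M(c),v) = \dPaar{d_cJ(\xi)}{v}$ gewonnene Gleichung
\begin{equation*}
   T_c\kIn T_cC \;=\; \ker T_cJ \;=\; T_c(G\cdot c)^{\bot}
\end{equation*}
heranziehen. Im ersten substantiellen Schritt möchte ich zeigen, dass die Bahn durch einen Punkt $c \in C$ ganz in $C$ verläuft, also $T_c(G\cdot c) \subseteq T_c\kIn T_cC$. Hierfür benutze ich die $G$"=Äquivarianz von $J$ zusammen mit $J(c) = 0$: Ableiten der Identität $J(\Phi_g(c)) = g \cdot J(c)$ nach $g$ im neutralen Element liefert $T_cJ \cdot \xi_M(c) = 0$ für alle $\xi \in \lieAlgebra$, woraus mit \eqref{eq:TangentialraumAnBahn} die gewünschte Inklusion folgt.

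Zum Abschluss plane ich die Anwendung der Rechenregel $(V^{\bot})^{\bot} = V$ für symplektische Komplemente (vgl.\ \cite[Prop.~5.3.2]{abraham.marsden:1985a}): Aus $T_c\kIn T_cC = T_c(G\cdot c)^{\bot}$ und dem soeben gezeigten Punkt ergibt sich durch Komplementbildung
\begin{equation*}
   T_cC^{\bot} \;=\; \bigl(T_c(G\cdot c)^{\bot}\bigr)^{\bot} \;=\; T_c(G\cdot c) \;\subseteq\; T_c\kIn T_cC\Fcom
\end{equation*}
was nach Proposition~\ref{prop:Koisotrop} die Koisotropie liefert. Eine eigentliche Hürde sehe ich nicht; der einzige Punkt, den man sich klar machen muss, ist, dass die Eigentlichkeit der $G$"=Wirkung die Bahn $G\cdot c$ als eingebettete Untermannigfaltigkeit auszeichnet, sodass $T_c(G\cdot c)$ im Sinne von \eqref{eq:TangentialraumAnBahn} überhaupt sinnvoll ist.
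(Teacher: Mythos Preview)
Your argument is correct and follows essentially the same route as the paper: both establish $T_c\kIn T_cC = T_c(G\cdot c)^{\bot}$ from the defining relation $\omega_c(\xi_M(c),v) = \dPaar{d_cJ(\xi)}{v}$, then use the $G$"=Äquivarianz of $J$ together with $J(c)=0$ to get $T_c(G\cdot c) \subseteq T_c\kIn T_cC$, and finally apply the double"=complement rule to conclude. One small remark: you do not actually need Proposition~\ref{prop:Koisotrop}~\refitem{item:PoissonUnteralgebraKoisotrop} here, since the inclusion $T_cC^{\bot} \subseteq T_c\kIn T_cC$ is precisely the \emph{definition} of coisotropy given in the text; your argument proves coisotropy directly.
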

Wir sind nun in der Lage, die angestrebte algebraische Formulierung der
Observablenalgebra des reduzierten Phasenraumes anzugeben. Dies ist
Gegenstand der nächsten Proposition.

{
\emergencystretch=1.6em
\begin{proposition}
   \label{prop:FunktionenAufDemReduziertenPhasenraum}
 \begin{propositionEnum}
 \item %
    $\kIdeal$ ist ein Poisson"=Ideal\footnote{d.\,h.\ $\kIdeal$ ist sowohl
      ein Ideal bezüglich Algebramultiplikation als auch bezüglich der
      durch $\{\cdot,\cdot\}$ gegebenen Lie"=Klammer, es gilt also für
      alle $f \in C^\infty(M)$ und $f' \in \kIdeal$ $f\cdot f' \in \kIdeal$
      und $\{f,f'\} \in \kIdeal$,
      vgl.\ \cite[S. 347]{cushman.bates:1997a}} in $\kbIdeal$, wodurch
    $\kbIdeal/\kIdeal$ via $\{[f],[h]\}' := [\{f,h\}]$ zu einer
    Poisson"=Algebra mit Poisson"=Klammer $\{\cdot,\cdot\}'$ wird.
   \item %
      \label{item:EinschraenkungSurjektiv} %
       Die Abbildung  $\kIn^* \colon \CM[M] \to C^\infty(C)$ ist
       surjektiv.
    \item %
       \label{item:CharakterisierungDesKlassischenIdealisators}
       Es gilt
       \begin{align}
          \label{eq:CharakterisierungDesKlassischcenIdealisators}
          \kbIdeal = \{f \in C^\infty(M) \mid \kRes f \in
          \pi^*C^\infty(\Mred)\} \Fdot
       \end{align}

       Ist $\prol \colon C^\infty(C) \to C^\infty(M)$ ein
       Rechtsinverses zu $\kRes$, dann gilt insbesondere $\prol
       \pi^*C^\infty(\Mred) \subset \kbIdeal$.
    \item %
      Die Abbildung
      \begin{align}
         \label{eq:FunktionenAufDemReduziertenPhasenraum}
         \mathsf{iso}\colon \kbIdeal/\kIdeal \ni [f] \mapsto \kIn^* f \in
         \pi^*\CM[\Mred] = \CM[C]^G
      \end{align}
      ist ein Algebraisomorphismus. Dabei ist die Algebramultiplikation auf
      $\kbIdeal/\kIdeal$ durch $[f][f'] := [ff']$ für $f,f' \in
      \kbIdeal$ gegeben.

      Ist $\prol \colon
      C^\infty(C) \to C^\infty(M)$ ein Rechtsinverses zu $\kIn^* \colon
      C^\infty(M) \to C^\infty(C)$, so gilt für jedes $f \in {C^\infty(C)}^G$
      \begin{align}
         \label{eq:FunktionenAufDemReduziertenPhasenraumIsoInverses}
         \mathsf{iso}^{-1}(f) = [\prol(f)] \in \kbIdeal/\kIdeal \Fdot
      \end{align}
   \item %
      Die Abbildung
      \begin{align}
         \label{eq:FunktionenAufDemReduziertenPhasenRaumEndgIso}
         \mathsf{iso}^{-1} \circ \pi^* \colon C^\infty(\Mred) \to \kbIdeal/\kIdeal
      \end{align}
      ist ein Isomorphismus von Poisson"=Algebren. Ist $\prol \colon
      C^\infty(C) \to C^\infty(M)$ ein Rechts\-in\-verses zu $\kIn^* \colon
      C^\infty(M) \to C^\infty(C)$, so gilt insbesondere für $\varphi,\varphi' \in C^\infty(\Mred)$
      \begin{align}
         \label{eq:ReduziertePoissonKlammer}
         \pi^*\{\varphi,\varphi'\}_{\mathrm{red}} = \kIn^*\{\prol \pi^* \varphi, \prol \pi^* \varphi'\} \Fdot
      \end{align}
   \end{propositionEnum}
\end{proposition}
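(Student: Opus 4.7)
Der Beweisplan gliedert sich entlang der fünf Teilaussagen, wobei ich Teile (i) und (ii) als weitgehend direkt abhandeln, (iii) als den zentralen Schritt ausarbeiten und (iv) und (v) daraus ableiten würde.

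\textbf{Zu (i) und (ii):} Dass $\kIdeal$ ein Algebra-Ideal in $C^\infty(M)$ und damit auch in $\kbIdeal$ ist, ist offensichtlich, denn für $f \in C^\infty(M)$ und $h \in \kIdeal$ verschwindet $fh$ auf $C$. Für die Lie-Ideal-Eigenschaft nutze ich direkt die Definition von $\kbIdeal$: für $f \in \kbIdeal$ und $h \in \kIdeal$ gilt $\{f,h\} \in \kIdeal$. Dass $\kbIdeal$ selbst eine Lie-Unteralgebra ist, folgt über die Jacobi-Identität. Die Wohldefiniertheit der induzierten Poisson-Klammer auf $\kbIdeal/\kIdeal$ erhält man dann routinemäßig. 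Für (ii) setze ich eine glatte Funktion $f \in C^\infty(C)$ auf $C$ mittels einer tubulären Umgebung von $C$ in $M$ (bzw.\ der Normalbündelstruktur) und einer glatten Abschneidung zu einer globalen glatten Funktion auf $M$ fort; hierzu genügt die Tatsache, dass $C$ als Niveaufläche eines regulären Wertes eine abgeschlossene eingebettete Untermannigfaltigkeit ist.

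\textbf{Zu (iii) (Hauptschritt):} Hier nutze ich die Koisotropie entscheidend. Nach Proposition \ref{prop:Koisotrop}~\refitem{item:CharaktSymplektKomplement} gilt $T_cC^\bot = \{X_h(c) \mid h \in \kIdeal\}$, und nach \eqref{eq:BeiUnsKoisotrop} ist $T_cC^\bot = T_c(G\cdot c)$. Für $f \in C^\infty(M)$ liefert eine einfache Rechnung $\kRes\{f,h\}(c) = X_h(c)(f)$. Damit ist $f \in \kbIdeal$ gleichbedeutend mit $X_h(c)(f) = 0$ für alle $c \in C$ und $h \in \kIdeal$, was wegen der obigen Identifikation äquivalent ist zu $\xi_M(c)(f) = 0$ für alle $\xi \in \lieAlgebra$ und $c \in C$, also zur $G$-Invarianz von $\kRes f$ (hier braucht es, dass $G$ zusammenhängend ist, was wir vorausgesetzt haben). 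Mit Proposition \ref{prop:InvarianteFunktionen} folgt $\kRes f \in \pi^* C^\infty(\Mred)$. Der Zusatz $\prol \pi^* C^\infty(\Mred) \subset \kbIdeal$ ergibt sich unmittelbar aus $\kRes \prol \pi^* \varphi = \pi^* \varphi \in \pi^* C^\infty(\Mred)$.

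\textbf{Zu (iv) und (v):} Die Abbildung $\mathsf{iso}\colon \kbIdeal/\kIdeal \to \pi^* C^\infty(\Mred)$ ist wegen $\kIdeal = \ker \kRes$ wohldefiniert, und wegen der Algebramorphismus-Eigenschaft von $\kRes$ auch multiplikativ. Injektivität ist trivial, Surjektivität ist eine direkte Folgerung aus (ii) und (iii): zu $g \in \pi^* C^\infty(\Mred)$ liefert (ii) ein $f \in C^\infty(M)$ mit $\kRes f = g$, und (iii) zeigt $f \in \kbIdeal$; im Fall eines gewählten Rechtsinversen $\prol$ ist $\prol(g)$ eine solche Fortsetzung, womit \eqref{eq:FunktionenAufDemReduziertenPhasenraumIsoInverses} folgt. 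Für (v) bleibt, die Poisson-Verträglichkeit der Komposition $\mathsf{iso}^{-1} \circ \pi^*$ zu zeigen. Dazu wähle ich zu $\varphi,\varphi' \in C^\infty(\Mred)$ die Fortsetzungen $f := \prol \pi^*\varphi$, $f' := \prol \pi^*\varphi' \in \kbIdeal$ und rechne
\begin{equation*}
\mathsf{iso}\bigl(\{[f],[f']\}'\bigr) = \mathsf{iso}\bigl([\{f,f'\}]\bigr) = \kRes\{f,f'\}.
\end{equation*}
Die verbleibende Identität $\pi^*\{\varphi,\varphi'\}_{\mathrm{red}} = \kRes\{f,f'\}$ ist dann gerade die in Satz \ref{satz:MarsdenWeinstein} implizit enthaltene Charakterisierung der reduzierten Poisson-Klammer über beliebige Fortsetzungen von $\pi^*\varphi, \pi^*\varphi'$ (die Unabhängigkeit von der Wahl der Fortsetzung folgt aus der Tatsache, dass der Unterschied zweier Fortsetzungen in $\kIdeal$ liegt und $f,f' \in \kbIdeal$). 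Der Hauptkniff im gesamten Beweis ist die Umformulierung der Bedingung $f \in \kbIdeal$ über die geometrische Identifikation $T_cC^\bot = T_c(G\cdot c)$ in Schritt (iii); alle weiteren Aussagen fallen dann systematisch heraus.
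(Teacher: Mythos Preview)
Your approach to parts (i)--(iv) matches the paper's proof essentially line for line: the same use of Proposition~\ref{prop:Koisotrop} and equation~\eqref{eq:BeiUnsKoisotrop} for the key characterisation in (iii), and the same deduction of (iv) from (ii) and (iii).

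The only substantive divergence is in part (v). You defer the identity $\pi^*\{\varphi,\varphi'\}_{\mathrm{red}} = \kRes\{\prol\pi^*\varphi,\prol\pi^*\varphi'\}$ to Satz~\ref{satz:MarsdenWeinstein}, calling it ``implizit enthalten''. But that theorem, as stated in the paper, only gives the relation $\kRes\omega = \pi^*\omega_{\mathrm{red}}$ on the level of symplectic forms; translating this into the Poisson-bracket statement still requires work, and your independence-of-extension remark alone does not supply it. The paper carries this out explicitly: it first observes (via Proposition~\ref{prop:TangentialRaumUntermannigfaltigkeit}) that for $f \in \kbIdeal$ the Hamiltonian vector field $X_f$ is tangent to $C$, so there is a vector field $\tilde X_f$ on $C$ that is $\kIn$-related to $X_f$; then a direct computation with $\omega_{\mathrm{red}}$ shows $\tilde X_{\prol\pi^*\varphi}$ is $\pi$-related to $X_\varphi$; the bracket identity then follows from $\kRes\omega = \pi^*\omega_{\mathrm{red}}$ by evaluating both sides on these related vector fields. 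This is the missing ingredient in your sketch --- not a different route, just the explicit argument that your citation gestures at.
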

}
\begin{proof}
   \begin{beweisEnum}
   \item %
      Da nach Proposition
      \ref{prop:Koisotrop}~\refitem{item:PoissonUnteralgebraKoisotrop}
      und Proposition \ref{prop:BeiUnsKoisotrop} das Ideal $\kIdeal$
      eine Poisson"=Unteralgebra von $C^\infty(M)$ ist, gilt
      offensichtlich $\kIdeal \subset \kbIdeal $. Der Rest ist somit
      klar.
   \item %
      Man kann jedes $f \in C^\infty(C)$ lokal glatt fortsetzen und
      vermöge einer Zerlegung der Eins eine glatte Fortsetzung von $f$
      auf eine offene Umgebung von $C$ erlangen. Mit Hilfe einer glatten
      Abschneidefunktion erhält man letztlich eine Fortsetzung  zu
      einer glatten Funktion auf ganz $M$.
   \item %
      Zunächst bemerken wir, dass für $c \in C$ die  Gleichung
      \begin{align}
         \label{eq:FundiHamVektorFelder}
         \{\xi_M(c) \mid \xi \in \lieAlgebra\} = T_c(G \cdot c) =
         T_cC^\bot = \{X_h(c) \mid h \in \kIdeal\} \Fdot \tag{$*$}
      \end{align}
      gilt.  Sei nun $f \in C^\infty(M)$. Wir betrachten die folgenden
      Äquivalenzumformungen.
      \begin{align*}
         \kIn^*f \in C^\infty(C)^G &\iff \Phi_g^*(\kIn^* f) = \kIn^* f
         &&\forall g \in G \\
         &\iff f(g c) = f(c) &&\forall g \in G, c\in C \\
         &\iff \xi_M(c)(f) = 0 &&\forall \xi \in \lieAlgebra, c \in
         C \\
         &\iff X_h(c)(f) = 0 &&\forall h \in \kIdeal, c \in C\\
         &\iff \{f,h\}(c) = 0 &&\forall h \in \kIdeal, c\in C \\
         &\iff f \in \kbIdeal \Fdot
      \end{align*}
      Dabei wurde in der dritten Äquivalenz verwendet, dass $G$
      zusammenhängend ist und in der vierten die Gleichung
      \eqref{eq:FundiHamVektorFelder}.

   \item %
      Aufgrund der Definition von $\kIdeal = \ker \kRes$ ist nach
      Gleichung~\eqref{eq:CharakterisierungDesKlassischcenIdealisators}
      klar, dass $\mathsf{iso}$ wohldefiniert und injektiv ist. Auch die
      Verträglichkeit mit der Algebra- und Vektorraumstruktur ist
      klar. Die Surjektivität folgt aus Teil~\refitem{item:EinschraenkungSurjektiv} und~\refitem{item:CharakterisierungDesKlassischenIdealisators}. Sei
      nun $\prol$ ein Rechtsinverses für $\kIn^*$. Dann gilt für $f \in
      {C^\infty(C)}^G$ nach der obigen Äquivalenz $\prol f \in
      \kbIdeal$, da $\kRes \prol f = f \in C^\infty(C)^G$. Weiter ist
      dann offensichtlich die  Gleichung
      \begin{align*}
         \mathsf{iso}([\prol(f)]) = \kIn^* \prol(f) = f
      \end{align*}
      richtig, und für $f \in \kbIdeal$ gilt
      \begin{align*}
         [\prol(\mathsf{iso}([f]))] = [\prol(\kIn^* f)] = [f] \Fcom
      \end{align*}
      da $\prol \kIn^* f - f \in \kIdeal$.
   \item %
      Nach dem bisher Gezeigten ist klar, dass $\mathsf{iso}^{-1} \circ
      \pi^*$ ein Isomorphismus von Algebren ist. Wir müssen also nur
      prüfen, ob die Poisson"=Klammern respektiert werden. Dazu sei
      $\prol$ ein Rechtsinverses zu $\kIn^*$.

      Zunächst sehen wir mit Gleichung
      \eqref{eq:TangentialRaumUntermannigfaltigkeit}, dass für das
      Hamiltonsche Vektorfeld $X_f$ einer Funktion $f \in \kbIdeal$ für
      alle $c \in C$ die Beziehung $X_f(c) \in T_c\kIn T_cC$ gilt, denn
      für $f' \in \kIdeal$ ist $X_f(c)(f') = \kIn^*\{f',f\}(c) = 0$. Es
      gibt somit ein Vektorfeld $\tilde X_f$ auf $C$, das zu $X_f$
      $\kIn$"=verwandt ist.  Weiter seien $c \in C$ und $v \in T_cC$
      beliebig. Dann gilt für $\varphi \in C^\infty(\Mred)$
      \begin{align*}
         \omega_{\mathrm{red}}\at{\pi(c)}(T_c\pi {\tilde X}_{\prol \pi^*
           \varphi}(c), T_c\pi v) &= \omega\at{c}(T_c \kIn{\tilde X}_{\prol \pi^*
           \varphi}(c),T_c\kIn v) \\
         &= \omega\at{c}(X_{\prol \pi^*\varphi}(c),T_c\kIn v)
         \\
         &= \dPaar{d_c(\prol \pi^* \varphi)}{T_c \kIn v} \\
         &= \dPaar{d_c(\prol
           \pi^* \varphi \circ \kIn)}{v} \\
         &= \dPaar{d_c(\kIn^*\prol (\varphi \circ
           \pi))}{v} \\
         &= \dPaar{d_c(\varphi \circ \pi)}{v} \\
         &= \dPaar{d_{\pi(c)}\varphi}{T_c\pi v} \\
         &= \omega_{\mathrm{red}}\at{\pi(c)}(X_\varphi(\pi(c)),T_c\pi v)\Fdot
      \end{align*}
      Wegen der Nicht"=Ausgeartetheit von $\omega_{\mathrm{red}}$ sind
      demnach die Vektorfelder ${\tilde X}_{\prol \pi^* \varphi}$ und $X_\varphi$
      $\pi$"=verwandt. Somit ergibt sich für $\varphi,{\varphi'} \in C^\infty(\Mred)$
      \begin{align*}
         \kIn^* \{\prol \pi^* \varphi,\prol \pi^* {\varphi'}\} %
         &= \kIn^*(\omega(X_{\prol
           \pi^* \varphi}, X_{\prol \pi^* {\varphi'}})) \\
         &= \kIn^*(\omega(T\kIn {\tilde
           X}_{\prol \pi^* \varphi},T \kIn {\tilde X}_{\prol \pi^* {\varphi'}}))\\
         &= (\kIn^*\omega)({\tilde X}_{\prol \pi^* \varphi},{\tilde X}_{\prol
           \pi^* {\varphi'}}) \\
         &= (\pi^*\omega_{\mathrm{red}})({\tilde X}_{\prol
           \pi^* \varphi},{\tilde X}_{\prol \pi^* {\varphi'}} )\\
         &= (\omega_{\mathrm{red}} \circ \pi)(T\pi {\tilde X}_{\prol \pi^* \varphi}, T \pi
         {\tilde X}_{\prol \pi^* {\varphi'}}) \\
         &= ({\omega_{\mathrm{red}}\circ \pi})(X_\varphi\circ
         \pi,X_{\varphi'}\circ \pi) \\
         &= \pi^*(\omega_{\mathrm{red}}(X_\varphi,X_{\varphi'})) =
         \pi^*\{\varphi,{\varphi'}\}_{\mathrm{red}} \Fdot
      \end{align*}
      Damit gilt natürlich auch
      \begin{align*}
         (\mathsf{iso}^{-1}\circ \pi^*)(\{\varphi,{\varphi'}\}_{\mathrm{red}}) %
         &= [\prol(\kIn^* \{\prol \pi^* \varphi, \prol \pi^* {\varphi'}\})] \\
         &= [\{\prol
         \pi^* \varphi, \prol \pi^* {\varphi'} \}]\\
         &= \{(\mathsf{iso}^{-1}\circ\pi^*) \varphi,(\mathsf{iso}^{-1} \circ\pi^*)
         {\varphi'}\} \Fcom
      \end{align*}
      d.\,h.\ $\mathsf{iso}^{-1}\circ \pi^*$ ist ein Isomorphismus von Poisson"=Algebren.
   \end{beweisEnum}
\end{proof}

\subsection{Augmentierter Koszul-Komplex}
\label{sec:AugementierterKoszulKomplex}

Da a priori keine \glqq{}Quanteneinschränkung\grqq{} gegeben ist,
wollen wir eine andere Charakterisierung von $\kIdeal$ angeben, die eine
naheliegende Analogie für den Quantenfall zulässt. Dazu sei
$\langle J \rangle$ das von der Impulsabbildung $J$ erzeugte Ideal. Ist
$\{e_\alpha\}$ eine Basis von $\lieAlgebra$, so ist $\langle J \rangle =
\{f^\alpha \dPaar{J}{e_\alpha} \mid f^\alpha \in C^\infty(M)\}$. Man
beachte, dass diese Definition unabhängig von der Wahl der Basis
$\{e_\alpha\}$ von $\lieAlgebra$ ist. Offensichtlich gilt $\langle J \rangle
\subset \kIdeal$. Wir wollen im Folgenden zeigen, dass sogar $\langle J
\rangle = \kIdeal$ erfüllt ist. Dazu müssen wir  beweisen, dass es für
jedes $f \in \kIdeal$ glatte Funktionen $f^\alpha \in C^\infty(M)$ gibt
mit $f = f^\alpha \dPaar{J}{e_\alpha}$.  Definiert man die Abbildung
$\partial \colon C^\infty(M) \otimes \lieAlgebra \to C^\infty(M)$, $f
\otimes \xi \mapsto f \dPaar{J}{\xi}$ und lineare Fortsetzung, so
schreibt sich dies als $f = \partial (f^\alpha \otimes e_\alpha)$.

\begin{definition}[Augmentierter Koszul-Komplex]
   \label{def:KoszulKomplex}
   Für $k \in \mathbb{N}$ definiert man die Abbildung
   \begin{align}
      \label{eq:KoszulFormel}
      \partial  \colon C^\infty(M) \otimes \Bigwedge^k
      \lieAlgebra \to C^\infty(M) \otimes \Bigwedge^{k-1} \lieAlgebra
   \end{align}
   durch lineare Fortsetzung von
\begin{align}
\label{eq:KoszuFormel2}
   f\otimes \xi \mapsto f \dPaar{J}{e_\alpha}\otimes
      \Ins{e^\alpha}\xi \Fcom
\end{align}
wobei $\Ins{e^\alpha}$ die antisymmetrische Einsetzderivation  bezeichnet.
Die so erhaltene Sequenz

   \def\tA[#1]{A_{#1}}
   \begin{equation}
      \begin{tikzpicture}[baseline=(current
         bounding box.center),description/.style={fill=white,inner sep=2pt}]
         \matrix (m) [matrix of math nodes, row sep=3.0em, column
         sep=3.5em, text height=1.5ex, text depth=0.25ex] %
         { C^\infty(C) & C^\infty(M) \otimes \Bigwedge^0 \lieAlgebra &
           C^\infty(M)
           \otimes \Bigwedge^1 \lieAlgebra & \dots \\
         }; %

         \path[<-] (m-1-1) edge node[auto]{$\kIn^*$}(m-1-2); %
         \path[<-] (m-1-2) edge node[auto]{$\partial$}(m-1-3); %
         \path[<-] (m-1-3) edge node[auto]{$\partial$}(m-1-4); %
      \end{tikzpicture},
   \end{equation}
   heißt \neuerBegriff{(augmentierter) Koszul"=Komplex} $K$.
\end{definition}
\begin{bemerkung}
   \label{bem:KoszulKomplex}
   Der augmentierte Koszul"=Komplex ist tatsächlich ein Komplex, d.\,h.\ es
   gilt $\partial^2 = 0$ und $\kRes \partial = 0$.
\end{bemerkung}

Eigentlich bezeichnet man nur den Teil ohne $C^\infty(C)$ als
Koszul"=Komplex \cite[XXI,§4]{lang:1997a},
\cite[3.4.6]{loday:1998a}. Durch $C^\infty(C)
\stackrel{\phantom{=}\kRes}{\longleftarrow}$ wird dieser augmentiert. Ist $U
\subset M$ eine offene Teilmenge, so bezeichnen wir im Folgenden mit
$\partial_U \colon C^\infty(U)\otimes \Bigwedge^\bullet \lieAlgebra \to
C^\infty(U)\otimes \Bigwedge^{\bullet + 1} \lieAlgebra$ die eindeutig
bestimmte Abbildung, so dass für alle $f \otimes \xi \in
C^\infty(M)\otimes \Bigwedge^\bullet \lieAlgebra$ die Gleichung
$\partial_U( f\at{U}\otimes \xi) = (\partial (f\otimes \xi))\at{U}$
gilt. Die Abbildung $\partial_U$ ist offenbar für alle $k \in
\mathbb{N}$ und $f \otimes \xi \in C^\infty(U) \otimes \Bigwedge^k
\lieAlgebra$ durch $\partial_U(f\otimes\xi) = f J_\alpha\at{U} \otimes
\Ins{e^\alpha}\xi$ und lineare Fortsetzung gegeben.

Die Gleichung $\langle J \rangle = \kIdeal$ ist dann äquivalent zu der
Aussage, dass der augmentierte Koszul"=Komplex an der Stelle
$C^\infty(M) \otimes \Bigwedge^0 \lieAlgebra$ exakt ist. Da es
keinen wesentlichen Mehraufwand bedeutet, vor allem da es sich für die
spätere Quantisierung als nützlich erweisen wird, werden wir den ganzen
augmentierten Koszul"=Komplex betrachten und durch Konstruktion einer
expliziten zusammenziehbaren Kettenhomotopie zeigen, dass alle
Homologien verschwinden. Mit anderen Worten, der augmentierte
Koszul"=Komplex ist eine Auflösung des $C^\infty(M)$"=Moduls
$C^\infty(C)$. Bevor wir uns dieser etwas technischen Aufgabe zuwenden,
sei noch folgende Äquivarianzeigenschaft von $\partial$ bemerkt.
\begin{proposition}[Äquivarianz von $\partial$]
   \label{prop:partialInvariant}
   Sei $H$ eine Lie"=Gruppe, die auf $M$ wirke und sei weiter eine
   lineare $H$"=Wirkung auf $\lieAlgebra[g]$ gegeben, so dass $J$
   $H$"=äquivariant ist, dann ist auch $\partial$
   $H$"=äquivariant.
\end{proposition}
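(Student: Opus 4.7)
The plan is to reduce the equivariance of $\partial$ to the naturality of the interior product together with the $H$-equivariance of $J$. First I would unwind the $H$-actions involved. On $C^\infty(M)$ we have the pull-back action $(h\cdot f)(p) = f(h^{-1}p)$; on $\lieAlgebra$ the given linear action, extended diagonally to $\Bigwedge^\bullet \lieAlgebra$; on $\lieAlgebra^*$ the dual action, so that $H$-equivariance of $J\colon M \to \lieAlgebra^*$ reads $\dPaar{J(hp)}{\xi} = \dPaar{J(p)}{h^{-1}\xi}$ for all $\xi \in \lieAlgebra$; and on $C^\infty(M)\otimes \Bigwedge^k \lieAlgebra$ the diagonal tensor-product action. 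Viewing such an element as a smooth map $M \to \Bigwedge^k\lieAlgebra$, the action becomes $(h\cdot F)(p) = h \cdot F(h^{-1}p)$.

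The key observation, which makes the whole proof essentially formal, is that $\partial$ admits a basis-free description as \emph{pointwise interior product with $J$}: using $J(p) = J_\alpha(p) e^\alpha$, one has
\begin{equation*}
   \partial(f \otimes \xi)(p) = f(p)\, \Ins{J(p)}\xi\Fcom
\end{equation*}
where $\Ins{\,\cdot\,}\colon \lieAlgebra^* \otimes \Bigwedge^k \lieAlgebra \to \Bigwedge^{k-1}\lieAlgebra$ is the antisymmetric insertion derivation. Verifying this description is a direct check on factorizable elements and shows in particular that the definition does not depend on the choice of basis $\{e_\alpha\}$.

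Next I would establish the naturality of $\Ins{\,\cdot\,}$ with respect to the pair of dual $H$-representations, that is, the identity
\begin{equation*}
   h\cdot \bigl(\Ins{\alpha}\xi\bigr) = \Ins{h\cdot\alpha}(h\cdot \xi) \qquad \forall\, h \in H,\; \alpha \in \lieAlgebra^*,\; \xi \in \Bigwedge^\bullet \lieAlgebra\Fdot
\end{equation*}
This is a standard fact following from the defining formula of the interior product together with $\dPaar{h\cdot\alpha}{h\cdot\eta} = \dPaar{\alpha}{\eta}$, which is just the definition of the dual action; it suffices to check it on decomposable elements $\xi = \xi_1 \wedge \dotsm \wedge \xi_k$.

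Finally, combining these ingredients, the equivariance of $\partial$ falls out by direct computation. For $f \otimes \xi \in C^\infty(M) \otimes \Bigwedge^k \lieAlgebra$, $h \in H$ and $p \in M$, using $J(p) = h\cdot J(h^{-1}p)$ one finds
\begin{align*}
   \partial\bigl(h\cdot(f\otimes\xi)\bigr)(p)
   &= f(h^{-1}p)\, \Ins{J(p)}(h\cdot\xi) \\
   &= f(h^{-1}p)\, \Ins{h\cdot J(h^{-1}p)}(h\cdot\xi) \\
   &= f(h^{-1}p)\, h\cdot\bigl(\Ins{J(h^{-1}p)}\xi\bigr) \\
   &= h\cdot\bigl(\partial(f\otimes\xi)\bigr)(p)\Fcom
\end{align*}
which is the desired equivariance. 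There is no real obstacle here; the only place where something must actually be verified is the naturality of the interior product in step three, and even this is routine once the dual action on $\lieAlgebra^*$ has been spelled out carefully.
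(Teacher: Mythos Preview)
Your proof is correct and follows essentially the same approach as the paper: both rely on the $H$-equivariance of $J$ together with the naturality of the interior product under the dual $H$-actions, combined with the diagonal action on the tensor product. The only cosmetic difference is that you work with the basis-free description $\partial(f\otimes\xi)(p)=f(p)\,\Ins{J(p)}\xi$, whereas the paper keeps the basis expansion $f J(e_\alpha)\otimes\Ins{e^\alpha}\xi$ and uses that $\{he_\alpha\}$, $\{he^\alpha\}$ is again a dual pair of bases; these are two phrasings of the same computation.
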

\begin{proof}
   Sei $h \in H$, dann gilt
   \begin{align*}
       h (\partial (f \otimes \xi)) &= h ( J(e_{\alpha})  f
      \otimes \Ins{e^{\alpha}} \xi)\\
      &= h (J(e_{\alpha})  f) \otimes h \Ins{e^{\alpha}}\xi \\
      &= h J(e_{\alpha})  h f \otimes \Ins{h e^{\alpha}} (h \xi)
      \\
      &= J(h e_{\alpha})  h f \otimes \Ins{h e^\alpha} (h \xi) \\
      &= \partial(h f \otimes h \xi) = \partial (h (f \otimes \xi)) \Fdot
   \end{align*}
\end{proof}

\subsection{Konstruktion der Homotopie}
\label{sec:KonstruktionDerHomotopie}

Da in dieser Arbeit nur zusammenziehbare Kettenhomotopien vorkommen,
bezeichnen wir diese im Folgenden der Kürze halber als Homotopie.

Ist $U$ eine offene Umgebung von $C$ und $r \colon U \to C$ eine glatte
Retraktion, so liegt es nahe, diese zu verwenden um auf geometrischem
Wege ein Rechtsinverses zu $\kRes$ zu konstruieren, wie es schon in
Proposition \ref{prop:FunktionenAufDemReduziertenPhasenraum} rein
algebraisch verwendet wurde. Dies führt die nächste Definition genauer
aus. In unserer Situation existiert nach dem Tubensatz
\ref{satz:GkompatibleTuben} immer eine derartige glatte Retraktion. Ist
$H$ eine Lie"=Gruppe, die eigentlich auf $M$ wirkt und $C$ stabilisiert,
so kann $U$ $H$"=invariant und $r$ $H$"=äquivariant gewählt werden.

\begin{definition}[Geometrische Prolongationsabbildung]
   \label{def:geometrischeProlongation}
   Sei $C$ eine Untermannigfaltigkeit einer Mannigfaltigkeit $M$, $U$
   und $O$ offene Umgebungen von $C$ in $M$ mit $\abschluss{O}
   \subset U$ und $\psi_U \colon
   M \to \mathbb{R}$ eine glatte Funktion, so dass $\{\psi_U,1 -
   \psi_U\}$ eine glatte Zerlegung der Eins ist, die der offenen
   Überdeckung $\{U,M\setminus \abschluss{O}\}$ untergeordnet ist. Weiter sei $r
   \colon U \to C$ eine glatte Retraktion auf $C$. In diesem Fall nennen
   wir die Abbildung
   \begin{align}
      \label{eq:geometrischeProlongation}
      &\prol_{M,O,r,\psi_U}  \colon C^\infty(C) \to C^\infty(M)\\
      &\prol_{M,O,r,\psi_U}(f)(p) :=
      \begin{cases}
         \psi_U(p)f(r(p)) & \text{falls $p \in U$} \\
         0 & \text{sonst}
      \end{cases}
   \end{align}
   \neuerBegriff{geometrische Prolongationsabbildung} oder kurz
   \neuerBegriff{geometrische Prolongation}. Falls aus dem Zusammenhang
   klar ist, welche Daten $(M,O,r,\psi_U)$ gewählt wurden schreiben
   wir einfach $\prol$ statt $\prol_{M,O,r,\psi_U}$.

\end{definition}

\begin{bemerkung}
   \label{bem:geometrischeProlongation}
   Es gelten die Bezeichnungen aus Definition
   \ref{def:geometrischeProlongation} und $H$ sei eine Lie"=Gruppe, die
   auf $M$ wirke. Falls $C$, $U$, $O$ und $\psi_U$ $H$"=invariant sind
   und $r$ $H$"=äquivariant ist, ist auch $\prol$ $H$"=äquivariant. Nach
   Satz \ref{satz:GkompatibleTuben} und Satz
   \ref{satz:inv_zerlegung_der_eins} existiert immer eine glatte,
   $H$"=äquivariante Retraktion und eine glatte $H$"=invariante Funktion
   $\psi_U$ mit den Eigenschaften aus Definition
   \ref{def:geometrischeProlongation}.
\end{bemerkung}

\begin{proposition}
   \label{prop:EigenschaftenGeometrischeProlongation}
   Ist $\prol$ eine geometrische Prolongation, so gilt
   \begin{align}
      \kIn^* \prol = \id\label{eq:EigenschaftenGeometrischeProlongation}\Fcom
   \end{align}
   wobei $\kIn \colon C \hookrightarrow M$ die kanonische Inklusion der
   Untermannigfaltigkeit $C$ in die Mannigfaltigkeit $M$ ist.
\end{proposition}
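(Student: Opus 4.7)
The plan is to verify the identity pointwise on $C$, which reduces to unpacking the three ingredients of a geometric prolongation: the retraction property of $r$, the containment $C \subset O$, and the partition-of-unity property of $\psi_U$.

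First I would fix an arbitrary $f \in C^\infty(C)$ and an arbitrary $c \in C$, and compute $(\kIn^* \prol(f))(c) = \prol(f)(c)$ directly from the definition \eqref{eq:geometrischeProlongation}. Since $C$ is a neighborhood of itself inside $U$ (indeed $C \subset O \subset U$), the point $c$ lies in $U$, so the first branch of the case distinction applies and $\prol(f)(c) = \psi_U(c)\, f(r(c))$.

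Next I would exploit the two structural properties. Since $r \colon U \to C$ is a retraction onto $C$, one has $r(c) = c$ for every $c \in C$, so $f(r(c)) = f(c)$. For the factor $\psi_U(c)$, note that $\{\psi_U, 1 - \psi_U\}$ is subordinate to the cover $\{U, M \setminus \overline{O}\}$, which means in particular that $\supp(1 - \psi_U) \subset M \setminus \overline{O}$, so $1 - \psi_U$ vanishes on $\overline{O}$. Because $C \subset O \subset \overline{O}$, this gives $\psi_U(c) = 1$. Combining both observations yields $\prol(f)(c) = 1 \cdot f(c) = f(c)$, which is precisely $(\kIn^* \prol(f))(c) = f(c)$.

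Since $c \in C$ and $f \in C^\infty(C)$ were arbitrary, this shows $\kIn^* \prol = \id$. There is no real obstacle here; the only thing to be careful about is reading off from the definition of a subordinate partition of unity the correct conclusion that $\psi_U \equiv 1$ on $\overline{O}$ (and hence on $C$), rather than merely $\psi_U > 0$. The proof itself should be only a few lines.
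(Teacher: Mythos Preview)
Your proposal is correct and is precisely the unpacking that the paper leaves implicit: the paper's entire proof is the single word ``Klar.'' (clear), and your argument is exactly the routine verification that justifies it.
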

\begin{proof}
   Klar.
\end{proof}

Um die Geometrie besser unter Kontrolle zu haben und die
Homotopie konstruieren zu können, werden wir in
einer Tubenumgebung arbeiten, die wir, wie der folgende Satz zeigt,
besonders angepasst wählen können.

\begin{satz}
   \label{satz:CxgA}
   Sei $M$ eine glatte Mannigfaltigkeit, und \(W\) ein
   endlichdimensionaler Vektorraum. Weiter sei \(G\) eine Lie"=Gruppe,
   die auf \(M\) eigentlich wirke und auf $W$ linear. Zudem sei eine
   $G$"=äquivariante, glatte Abbildung $J \colon M \to W$ gegeben und $0
   \in W$ ein regulärer Wert von $J$ mit $C := J^{-1}(0) \neq
   \emptyset$. Ist schließlich \(r\colon U' \to C\) eine glatte,
   \(G\)"=äquivariante Retraktion von einer offenen, \(G\)"=invarianten
   Umgebung \(U'\) von \(C\) in \(M\) auf \(C\) (die z.\,B.\ von einer
   Tubenumgebung herrühren könnte), so gibt es \(G\)"=invariante, offene
   Umgebungen \(U \subset U'\) von \(C\) in \(M\) und \(V\) von \(C
   \times \{0\}\) in \(C \times W\), so dass
   \begin{align}
      \label{eq:DefinitionDerGutenTubenAbbildung}
   F \colon U \to V, \quad u \mapsto  (r(u),J(u))
\end{align}
   ein
   \(G\)"=äquivarianter Diffeomorphismus ist.
   Für alle \((c,v) \in V\) gilt dann
   \begin{align}
      \label{eq:TubeneigenschaftDerSpeziellenTube1}
      F^{-1}(c,0) = c \quad \text{und} \quad  J \circ F^{-1}(c,v) = v \Fdot
   \end{align}
   \(V\) kann sternförmig in Faserrichtung gewählt werden, d.\,h.\ derart, dass
   aus \((c,v) \in V \) schon \( (c,tv) \in V\) für alle \(t \in [0,1]\)
   folgt.
\end{satz}
\begin{proof}
   Wir definieren $\widetilde{F} \dpA U' \to C \times W$ durch
   $\widetilde{F}(u) := (r(u),J(u))$.
   Zuerst zeigen wir, dass $T_c\tilde{F}$ für alle $c \in C$ injektiv
   ist. Dazu sei $v \in T_cM$ mit $T_c\tilde{F}v = 0$. Mit Proposition
   \ref{prop:produkte}~\refitem{item:ProdukteKomma} ist dies aber
   äquivalent zu $T_cr v = 0$ und $T_cJv = 0$. Da $0$ ein regulärer Wert
   von $J$ ist, bedeutet letzteres, dass $v \in T_c\iota T_cC$ ist,
   wobei $\iota \dpA C \hookrightarrow M$ die Inklusion bezeichnet. Da
   weiter $r \circ \iota = \id$ gilt, folgt aus $T_cr v = 0$ schon $v =
   0$. Somit ist $T_c\widetilde{F}$ injektiv und aus Dimensionsgründen
   auch bijektiv für alle $c \in C$. Damit ist mit dem Satz über die
   Umkehrfunktion $\widetilde{F}$ ein lokaler Homöomorphismus auf einer
   offenen Umgebung von $C$. Ferner ist offensichtlich wegen der
   $G$"=Äquivarianz von $J$ die Menge $C \times \{0\}$ $G$"=invariant
   und $s \dpA C \times \{0\} \to U'$ mit $s((c,0)) = c$ liefert einen
   stetigen, $G$"=äquivarianten Schnitt von $\widetilde{F}$ auf $C \times
   \{0\}$. Wegen der $G$"=Äquivarianz von $r$ und $J$ ist offenbar auch
   $\widetilde{F}$
   $G$"=äquivariant.

   Mit Korollar \ref{kor:AufUmgebungVonCHomoe} folgt dann, dass es eine
   $G$"=invariante, offene Umgebung $\widetilde{U}$ von $s(C \times
   \{0\}) = C$ gibt, so dass $\widetilde{F}\at{\tilde U}$ ein
   $G$"=äquivarianter Homöomorphismus auf eine offene, $G$"=invariante
   Teilmenge $\widetilde{V}$ von $C \times W$ ist. Offenbar gilt nach
   Konstruktion $J \circ \widetilde{F}^{-1}(c,v) = v$ für $(c,v) \in
   \tilde{V} \subset C \times W$.  Nach Proposition \ref{lem:1Ball} gibt
   es eine $G$"=invariante Riemannsche Fasermetrik $h$ auf $C \times W$
   mit $V := B_1(C) \subset \widetilde{V}$.  Somit sind $U :=
   \widetilde{F}\at{\widetilde{U}}^{-1}(V)$, $V$ und $F :=
   \widetilde{F}\at{U} \dpA U \to V$ wie gewünscht.
\end{proof}

Wir können insbesondere den folgenden Spezialfall des vorangehenden
Satzes formulieren.

\begin{korollar}
   \label{kor:Cxg}
   Sei $(M,\omega)$ eine symplektische Mannigfaltigkeit und $G$ eine
   Lie"=Gruppe mit zugehöriger Lie"=Algebra $\lieAlgebra[G]$. $G$ wirke
   auf $\lieAlgebra^*$ vermöge $\Ad^*$ und auf $M$ stark Hamiltonsch sowie
   eigentlich und sei $J \dpA M \to \lieAlgebra[g]^*$ die zugehörige
   $G$"=äquivariante Impulsabbildung. Sei schließlich $0 \in
   \lieAlgebra[g]^*$ ein regulärer Wert für $J$ und $C :=
   J^{-1}(\{0\}) \neq \emptyset$.

   Dann gibt es eine $G$"=invariante Umgebung $U$ von $C$ in $M$ und
   einen $G$"=äquivarianten Diffeomorphismus $F \dpA U \to V \subset C
   \times \lieAlgebra[g]^*$ auf eine offene $G$"=invariante Umgebung $V$
   von $C \times \{0\}$, wobei die $G$"=Wirkung auf $C \times
   \lieAlgebra[g]^*$  durch $g (c,v) := (g c,\Ad^*_{g^{-1}}
   \,v)$, für alle $g \in G$, $(c,v) \in C \times \lieAlgebra[g]^*$
   gegeben ist.  Man kann
   sogar erreichen, dass $J \circ F^{-1}(c,v) = v$ für $(c,v) \in
   V$ gilt, sowie dass für jedes $(c,v) \in V$ auch $(c,tv) \in V$ für jedes
   $t \in [0,1]$ ist.
\end{korollar}

\begin{bemerkung}[Gute Tubenumgebung]
   \label{bem:SpezielleTubenumgebung}
   $(C\times W \to C,V,U,F^{-1})$ aus Satz \ref{satz:CxgA} ist insbesondere
   eine $G$"=äquivariante Tubenumgebung von $C$ mit $G$"=äquivarianter
   Retraktion $r \dpA U \to C$. Eine derartige Tubenumgebung wollen wir
   im Folgenden auch \neuerBegriff{von $r$ induzierte \tn{gute}
     Tubenumgebung} oder einfach \neuerBegriff{\tn{gute} Tubenumgebung}
   nennen. Analog sprechen wir dann von \neuerBegriff{\tn{guter}
     Tubenabbildung} usw.\ Schließlich vereinbaren wir den
   Notationsmissbrauch auch die inverse Tubenabbildung $F$ als solche zu
   bezeichnen und sprechen auch von der \tn{guten} Tubenabbildung $F \dpA U
   \to V$.
\end{bemerkung}

Mit Hilfe einer guten Tubenumgebung werden wir nun die gesuchten
Homotopie konstruieren.

\begin{lemma}
   \label{lem:HomotopienInDerTubenumgebung}
   Sei $\Psi \colon U \to V \subset C \times \lieAlgebra^*$ eine
   \tn{gute} Tubenabbildung mit zugehöriger Retraktion $r \colon U \to
   C$. Weiter sei die Abbildung $\h_U \colon C^\infty(U) \otimes
   \Bigwedge^\bullet \lieAlgebra \to C^\infty(U) \otimes
   \Bigwedge^{\bullet + 1}\lieAlgebra$ für alle $k \in \mathbb{N}$
   $f \otimes \eta \in C^\infty(U) \otimes \Bigwedge^k \lieAlgebra$ und
   $(c,\mu) \in V$
   durch lineare Fortsetzung von
   \begin{align}
      \label{eq:HomotopienInDerTubenumgebung}
      \h_U(f \otimes \eta) \circ \Psi^{-1}(c,\mu) := \int_{0}^1
      t^k \partial^\alpha(f \circ \Psi^{-1})(c,t \mu) \, dt \otimes
      e_\alpha \wedge \eta
   \end{align}
   gegeben, wobei
   $\partial^\alpha := \partial_{e^\alpha}$ die Ableitung in Richtung
   $e^\alpha$ bezeichnet. Dann gilt
   \begin{align}
      \label{eq:HomotopienInDerTubenumgebung2}
      \h_U\partial_U x + \partial_U \h_U x = x \quad \text{für $x \in
        C^\infty(U) \otimes \Bigwedge^k \lieAlgebra$ und $k \geq 1$} \Fcom
   \end{align}
   \begin{align}
      \label{eq:HomotopienInDerTubenumgebung3}
      r^* \kIn^* f + \partial_U \h_U f = f \quad \text{für $f \in
        C^\infty(U)$}
   \end{align}
   und
   \begin{align}
      \label{eq:HomotopienInDerTubenumgebung4}
       \h_U r^* = 0 \Fdot
   \end{align}

   Trägt $M$ eine Wirkung einer Lie"=Gruppe $H$ unter der $C$ stabil ist,
   $\lieAlgebra$ eine lineare $H$"=Wirkung und ist $\Psi$
   $H$"=äquivariant, so gilt dies auch für $\h_U$.
\end{lemma}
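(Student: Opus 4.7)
Der Plan ist, alle drei Identitäten durch Rechnen in Tubenkoordinaten nachzuweisen. Da $\Psi$ ein Diffeomorphismus ist, genügt es die Gleichungen nach Transport mittels $\Psi^{-1}$ zu verifizieren. Der entscheidende Punkt ist die Beobachtung aus Satz~\ref{satz:CxgA}, dass $J \circ \Psi^{-1}(c,\mu) = \mu$, denn dann wirkt $\partial_U$ in Tubenkoordinaten rein algebraisch, nämlich für $g \in C^\infty(V)$ und $\xi \in \bigwedge^\bullet \lieAlgebra$ durch
\begin{equation*}
   \partial_U(g \otimes \xi) \circ \Psi^{-1}(c,\mu) = \mu_\beta \, g(c,\mu) \otimes \Ins{e^\beta}\xi \Fdot
\end{equation*}
Dies verwandelt die geometrische Aufgabe in eine Poincar\'e"=Lemma"=artige Rechnung entlang des Strahls $t \mapsto (c, t\mu)$.

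Für den ersten Teil sei $x = g \otimes \xi$ mit $\xi \in \bigwedge^k\lieAlgebra$, $k \geq 1$. Einsetzen der Definitionen liefert mit der Leibnizregel
\begin{equation*}
   \partial^\alpha(\mu_\beta g)(c,t\mu) = \delta^\alpha_\beta g(c,t\mu) + t\mu_\beta \, \partial^\alpha g(c,t\mu)
\end{equation*}
und mit der Antikommutationsrelation $\Ins{e^\beta}(e_\alpha \wedge \eta) = \delta^\beta_\alpha \eta - e_\alpha \wedge \Ins{e^\beta}\eta$ die Darstellung
\begin{align*}
   (\h_U \partial_U + \partial_U \h_U)x \circ \Psi^{-1}(c,\mu)
   &= \int_0^1 t^{k-1} g(c,t\mu)\,dt \otimes e_\alpha \wedge \Ins{e^\alpha}\xi \\
   &\quad + \int_0^1 t^k \mu_\alpha \partial^\alpha g(c,t\mu)\,dt \otimes \xi \Fcom
\end{align*}
wobei sich die beiden Terme der Form $\mu_\beta\partial^\alpha g \otimes e_\alpha \wedge \Ins{e^\beta}\xi$ gegenseitig kürzen. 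Mit der Zahloperator"=Identität $e_\alpha \wedge \Ins{e^\alpha}\xi = k\xi$ für $\xi \in \bigwedge^k\lieAlgebra$ wird der rechte Ausdruck zu $\int_0^1 \frac{d}{dt}\bigl[t^k g(c,t\mu)\bigr]dt \otimes \xi = g(c,\mu)\otimes\xi$ nach dem Hauptsatz der Differential- und Integralrechnung, was Gleichung~\eqref{eq:HomotopienInDerTubenumgebung2} beweist.

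Der Fall $k=0$ ist ähnlich, aber einfacher: Für $f \in C^\infty(U)$ mit $g = f\circ\Psi^{-1}$ ergibt die analoge Rechnung $\partial_U \h_U f \circ \Psi^{-1}(c,\mu) = \int_0^1 \mu_\alpha \partial^\alpha g(c,t\mu)\,dt = g(c,\mu) - g(c,0)$, und wegen $F^{-1}(c,0) = c$ aus Satz~\ref{satz:CxgA} ist der Randterm gerade $r^*\kRes f \circ \Psi^{-1}(c,\mu) = f(c) = g(c,0)$, was Gleichung~\eqref{eq:HomotopienInDerTubenumgebung3} liefert. Die dritte Identität $\h_U r^* = 0$ ist unmittelbar klar, da $r^* g$ in Tubenkoordinaten $\mu$"=unabhängig ist, also $\partial^\alpha(r^*g \circ \Psi^{-1}) = 0$.

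Für die Äquivarianz schließlich bleibt zu beobachten, dass $\Psi$ und damit auch die Tubenkoordinaten $H$"=äquivariant sind, und dass die $H$"=Wirkung auf $\lieAlgebra^*$ linear ist, also mit der Streckung $\mu \mapsto t\mu$ vertauscht. Damit kommutiert $H$ mit der Integration über $[0,1]$ und mit $\partial^\alpha$ (bis auf die Wirkung auf die duale Basis $\{e^\alpha\}$, die sich aber mit der Wirkung auf $\{e_\alpha\}$ im Wedge"=Faktor genau kompensiert, analog zum Beweis von Proposition~\ref{prop:partialInvariant}). Die eigentliche technische Hürde liegt in der sauberen Buchhaltung der Indizes in der Hauptrechnung; alle konzeptionellen Schritte sind Standard.
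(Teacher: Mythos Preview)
Your proposal is correct and follows essentially the same approach as the paper: a Poincar\'e-lemma-style computation in tubular coordinates, exploiting $J\circ\Psi^{-1}(c,\mu)=\mu$, the derivation property of $\Ins{e^\beta}$, the number operator identity $e_\alpha\wedge\Ins{e^\alpha}\xi=k\xi$, and the fundamental theorem of calculus, with the equivariance handled via the dual-basis argument. The only organisational difference is that the paper computes $\partial_U\h_U$ alone and treats $k=0$ and $k\geq 1$ simultaneously via $\delta_{0k}$ factors, whereas you compute the anticommutator $\h_U\partial_U+\partial_U\h_U$ directly and split the two cases; mathematically this is the same calculation.
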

\begin{proof}
   Der Beweis verläuft ähnlich wie der des klassischen
   Poincar\'e"=Lemmas, (vgl.\ etwa \cite[Lemma 11.4,
   Thm. 11.11]{lee:2003a}).  Für $\alpha \in \{1,\dots,\dim G\}$ sei
   $\mathrm{pr}_\alpha \colon C \times \lieAlgebra^* \to \lieAlgebra^*$
   die $\alpha$"=te Koordinatenfunktion bezüglich der Basis
   $\{e^\alpha\}$, d.\,h.\  für $(c,\mu_\alpha e^\alpha) \in C \times
   \lieAlgebra^*$ gilt die Gleichung $\mathrm{pr}_\alpha(c,\mu_\beta
   e^\beta) = \mu_\alpha$. Man beachte $J_\alpha \circ \Psi^{-1} =
   \mathrm{pr}_\alpha$. Für $k \in \mathbb{N}$ und $f \otimes \eta \in
   C^\infty(U) \otimes \Bigwedge^k \lieAlgebra$ gilt schließlich

   \begin{align*}
      \lefteqn{(\partial \h_U f\otimes \eta)(\Psi^{-1}(c,\mu))}%
      \\ &= J_\alpha(\Psi^{-1}(c,\mu))
      \Ins{e^\alpha} \h_U(f \otimes \eta)(\Psi^{-1}(c,\mu)) \\
      &= \mu_\alpha \left ( \int_0^1 t^k \partial^\beta(f \circ
         \Psi^{-1})(c,t\mu)
         \, dt \otimes \Ins{e^\alpha}(e_\beta \wedge \eta) \right) \\
      &= \mu_\alpha \left ( \int_0^1 t^k \partial^\beta(f \circ
         \Psi^{-1})(c,t\mu)
         \, dt \otimes \delta^\alpha_\beta \eta \right) \\
      &\phantom{=} - (1 - \delta_{0k})\mu_\alpha \left ( \int_0^1
         t^k \partial^\beta(f \circ \Psi^{-1})(c,t\mu) \, dt \otimes
         (e_\beta \wedge \Ins{e^\alpha} \eta) \right) \\
      &= \int_0^1 t^k \frac{d}{dt}(f \circ \Psi^{-1}(c,t\mu)) \, dt \otimes \eta \\
      &\phantom{=} - (1 - \delta_{0k})\int_0^1
      t^{k-1}\left(\mathrm{pr}_\alpha \cdot \partial^\beta(f \circ
         \Psi^{-1}) \right)(c,t\mu) \, dt \otimes
      e_\beta \wedge \Ins{e^\alpha} \eta \\
      &= \int_0^1 \frac{d}{dt} \left(\left(t^k
            f(\Psi^{-1}(c,t\mu))\right)\right)\, dt \otimes \eta -
      \int_0^1 \frac{d t^k}{dt} (f \circ \Psi^{-1})(c,t\mu) \, dt
      \otimes \eta \\
      &\phantom{=} - (1 - \delta_{0k}) \int_0^1 t^{k-1} \partial^\beta
      (\mathrm{pr}_\alpha \cdot f \circ \Psi^{-1})(c,t\mu) \, dt \otimes
      e_\beta \wedge\Ins{e^\alpha}
      \eta \\
      &\phantom{=} + (1 - \delta_{0k}) \int_0^1
      t^{k-1}(\delta_\alpha^\beta f \circ
      \Psi^{-1})(c,t\mu) \, dt \otimes e_\beta \wedge \Ins{e^\alpha} \eta \\
      &= ((f\circ \Psi^{-1})(c,\mu) - \delta_{0k}(f \circ
      \Psi^{-1})(c,0))\otimes \eta - (1 - \delta_{0k}) k \int_0^1 t^{k-1}
      (f\circ\Psi^{-1})(c,t\mu) \, dt \otimes \eta
      \\
      &\phantom{=} - (1 - \delta_{0k})((\h_U\partial (f\otimes
      \eta))\circ \Psi^{-1})(c,\mu) + (1 - \delta_{0k}) k  \int_0^1 t^{k-1} (f \circ
      \Psi^{-1})(c,t\mu)
      \, dt \otimes \eta \\
      &= (f \otimes\eta)\circ\Psi^{-1}(c,\mu) - (f \otimes \eta) \circ
      \Psi^{-1}(c,0) \delta_{0k} - (1 - \delta_{0k})((\h_U\partial
      (f\otimes \eta))\circ \Psi^{-1})(c,\mu) \Fdot
   \end{align*}
   Dabei wurde im dritten Schritt ausgenutzt, dass $\Ins{e^\alpha}$ eine
   Super"=Derivation ist, im drauffolgenden Schritt kam die Kettenregel
   für Ableitungen zur Anwendung, beim nächsten Gleichheitszeichen wurde
   zweimal die Produktregel für Ableitungen verwendet und schließlich im
   vorletzten Schritt haben wir die Gleichung $e_\alpha \wedge
   \Ins{e^\alpha} \xi = k \xi$ ausgenutzt, die für alle $k \in
   \mathbb{N}$ und alle $\xi \in \Bigwedge^k \lieAlgebra$ gilt.

   Aus obiger Rechnung folgt dann unmittelbar die Behauptung.  Ist nun
   $H$ eine Lie"=Gruppe wie in den Voraussetzungen, so beachte man, dass
   für alle $(c,\mu) \in C \times \lieAlgebra$ und $h \in H$ die
   Gleichung
   \begin{align*}
      (h(\partial_{e^\alpha}(f \circ \Psi^{-1})))(c,\mu)
      &= \partial_{e^\alpha}(f \circ \Psi^{-1})(h^{-1} (c,\mu))\\ &=
      \lim_{s \to 0} \frac{f \circ \Psi^{-1}(h^{-1} c,h^{-1}\mu + s e^\alpha) - f
        \circ
        \Psi^{-1}(h^{-1}c,h^{-1}\mu)}{s} \\
      &= \lim_{s \to 0}\frac{(h (f \circ \Psi^{-1}))(c,\mu + s
        he^\alpha) - (h (f\circ\Psi^{-1}))(c,\mu)}{s} \\
      &= (\partial_{h e^\alpha}(h(f\circ \Psi^{-1})))(c,\mu)
   \end{align*}
   gilt. Ferner gilt $\dPaar{h e_\alpha}{h e^\beta} = \dPaar{h
     h^{-1} e_\alpha}{e^\beta} = \delta_\alpha^\beta$, d.\,h.\ $\{h
   e_\alpha\}$ und $\{h e^\alpha\}$ sind zueinander duale Basen für
   $\lieAlgebra$ und $\lieAlgebra^*$. Mit diesen Bemerkungen sind die
   behaupteten Invarianzeigenschaften klar.
\end{proof}

Das folgende technische Lemma wird sich als nützlich erweisen, die in
Lemma \ref{lem:HomotopienInDerTubenumgebung} auf
einer Tubenumgebung konstruierte Homotopie zu globalisieren.

\begin{lemma}
   \label{lem:GlobalisierungsAbbildung}
   Sei $W \subset M\setminus C$ offen. Dann gibt es eine glatte Abbildung
   $\xi \colon W \to \lieAlgebra$ mit
   \begin{align}
      \label{eq:GlobalisierungsAbbildung}
      \dPaar{J\at{W}}{\xi} = 1 \Fdot
   \end{align}
   Sei $H$ eine Lie"=Gruppe, die auf $M$ eigentlich wirke und sei $W$
   $H$"=invariant sowie $J$ $H$"=äquivariant. Weiter sei
   $\lieAlgebra$ mit einer linearen $H$"=Wirkung versehen. Dann
   kann $\xi$ auch $H$"=äquivariant gewählt werden.
\end{lemma}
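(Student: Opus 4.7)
Der Plan ist, die Aussage zuerst lokal zu lösen und dann mit einer (im invarianten Teil $H$"=invarianten) Zerlegung der Eins zu globalisieren. Dies ist möglich, weil die Bedingung $\dPaar{J}{\xi}=1$ affin in $\xi$ ist und sich damit unter konvexen Linearkombinationen erhält.

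Für die nichtäquivariante Aussage: Da $W \cap C = \emptyset$, ist $J(p) \neq 0$ für jedes $p \in W$, und es existiert ein $\xi_p \in \lieAlgebra$ mit $\dPaar{J(p)}{\xi_p} = 1$. Die glatte Funktion $\dPaar{J(\cdot)}{\xi_p}$ verschwindet bei $p$ nicht, sodass $U_p := \{q \in W \mid \dPaar{J(q)}{\xi_p} \neq 0\}$ eine offene Umgebung von $p$ in $W$ ist. Auf $U_p$ setzt man $\xi^{(p)}(q) := \xi_p/\dPaar{J(q)}{\xi_p}$; dies ist glatt und erfüllt \eqref{eq:GlobalisierungsAbbildung} auf $U_p$. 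Mit einer der Teilüberdeckung $\{U_{p_i}\}$ untergeordneten, glatten, lokal endlichen Zerlegung der Eins $\{\psi_i\}$ auf $W$ liefert dann $\xi := \sum_i \psi_i \xi^{(p_i)}$ das Gesuchte, denn $\dPaar{J}{\xi} = \sum_i \psi_i \dPaar{J}{\xi^{(p_i)}} = \sum_i \psi_i = 1$.

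Im äquivarianten Fall führt man das obige Argument orbitweise so, dass sowohl lokale $H$"=Äquivarianz als auch die Verträglichkeit mit einer $H$"=invarianten Zerlegung der Eins gewährleistet sind. Da die Wirkung eigentlich ist, ist die Isotropie $H_p$ für jedes $p \in W$ kompakt. Man wählt $\xi_p \in \lieAlgebra$ wie oben und mittelt mit dem normierten Haar"=Maß über $H_p$ zu $\tilde\xi_p := \int_{H_p} h \xi_p \, dh$. Wegen $h^{-1}p = p$ für $h \in H_p$ und der $H$"=Invarianz der dualen Paarung bleibt $\dPaar{J(p)}{\tilde\xi_p} = 1$, und $\tilde\xi_p$ ist $H_p$"=fixiert. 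Auf einer $H_p$"=invarianten Umgebung $V_p$ von $p$ in einer Scheibe durch $p$ (bereitgestellt durch den Scheibensatz für eigentliche Wirkungen, vgl.\ Anhang~\ref{cha:Tubensatze}) setzt man $\xi^{(p)}(q) := \tilde\xi_p/\dPaar{J(q)}{\tilde\xi_p}$. Die $H_p$"=Fixierung von $\tilde\xi_p$ liefert die $H_p$"=Äquivarianz von $\xi^{(p)}$, sodass sich dieses eindeutig zu einem $H$"=äquivarianten glatten Schnitt auf der invarianten Tubenumgebung $HV_p \cong H\times_{H_p}V_p$ fortsetzen lässt. Schließlich verklebt man die so erhaltenen lokalen Schnitte mit einer $H$"=invarianten, glatten, lokal endlichen Zerlegung der Eins gemäß Satz~\ref{satz:inv_zerlegung_der_eins}; wegen der Affinität der Bedingung $\dPaar{J}{\xi^{(p)}} = 1$ überträgt sich diese auf die konvexe Linearkombination $\xi$, welche per Konstruktion $H$"=äquivariant ist.

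Die Hauptschwierigkeit liegt in der Konstruktion wohldefinierter und glatter $H$"=äquivarianter lokaler Schnitte entlang ganzer Orbits. Der Schlüssel dazu ist das Zusammenspiel der Mittelung über das kompakte $H_p$ mit der durch den Scheibensatz gegebenen invarianten Tubenstruktur; alles andere, insbesondere das Überdeckungs- und Verklebungsargument, ist Routine und stützt sich auf die in Anhang~\ref{cha:invariante_Strukturen} bereitgestellten Integrationstechniken für eigentliche Gruppenwirkungen.
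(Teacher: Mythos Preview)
Your proof is correct and follows essentially the same strategy as the paper: construct an $H$-equivariant local solution near each orbit using compactness of the isotropy $H_p$, then glue via an $H$-invariant partition of unity, exploiting the affinity of the condition $\dPaar{J}{\xi}=1$. The only differences are implementation details: the paper obtains the $H_p$-fixed vector by dualizing $J(p)$ with respect to an $H_p$-invariant inner product on $\lieAlgebra$ (rather than averaging a chosen $\xi_p$ over $H_p$), and it extends from the orbit to a tubular neighborhood using the geometric prolongation of Definition~\ref{def:geometrischeProlongation} (rather than extending from a slice via $H\times_{H_p}V_p$). Both variants are standard and yield the same result.
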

\begin{proof}
   Wir betrachten direkt den Fall mit Gruppenwirkung. Der andere ergibt
   sich trivialerweise mit $H = \{e\}$.  Sei $x \in W$ beliebig. Als
   erstes konstruieren wir eine glatte $H$"=äquivariante Abbildung
   $\eta^{[x]} \colon H\cdot x \to \lieAlgebra$ mit der Eigenschaft
   $\dPaar{J\at{H\cdot x}}{\eta^{[x]}} = 1$ wie folgt. Da die
   $H$"=Wirkung eigentlich ist, ist die Stabilisatorgruppe $H_x$ von $x$
   kompakt, womit ein $H$"=invariantes Skalarprodukt $\langle \cdot ,
   \cdot \rangle_x$ mit induzierter Norm $|\cdot|_x$ auf $\lieAlgebra$
   existiert (vgl.\ Lemma \ref{kor:invarianteFasermetrik}). Ist nun $\nu
   \in \lieAlgebra^*$, so sei $\nu^{(x)}$ das eindeutig bestimmte
   Element von $\lieAlgebra$ mit $\langle \nu^{(x)}, \cdot \rangle_x =
   \nu$. Für $h \in H_x$ ist dann ${J(x)}^{(x)} = {J(hx)}^{(x)} =
   h{J(x)}^{(x)}$, denn wie man leicht nachvollzieht gilt für jedes
   $\rho \in \lieAlgebra$
   \begin{align*}
      \langle {J(h x)}^{(x)}, \rho \rangle_x = \dPaar{J(hx)}{\rho} &=
      \dPaar{h J(x)}{\rho} \\ &=
      \dPaar{J(x)}{h^{-1} \rho} = \langle {J(x)}^{(x)}, h^{-1} \rho
      \rangle_x = \langle h {J(x)}^{(x)},\rho \rangle_x \Fdot
   \end{align*}
   Sind $h,h' \in H$ mit $hx = h'x$ so ist offenbar $\tilde h := h^{-1}
   h' \in H_x$ und es folgt $\tilde h J(x)^{(x)} = J(x)^{(x)}$, d.\,h.\
   $h J^{(x)}(x) = h' J^{(x)}(x)$. Somit drückt sich die glatte
   Abbildung $H \to \lieAlgebra$, $h \mapsto h
   {J(x)}^{(x)}/|J(x)^{(x)}|_x$ zur glatten Abbildung $H/H_x \to
   \lieAlgebra$, $[h] \mapsto h {J(x)}^{(x)}/|{J(x)}^{(x)}|_x$
   herunter. Da bekanntlich (vgl.\ \cite[Prop. 3.3.24]{waldmann:2007a})
   $H/H_x \to H\cdot x$, $[h] \mapsto hx$ ein Diffeomorphismus ist,
   folgt durch Komposition, dass $\eta^{[x]} \colon H\cdot x \to
   \lieAlgebra$, $hx \mapsto h J^{(x)}/|{J(x)}^{(x)}|_x$ eine
   wohldefinierte glatte Abbildung ist. Per Konstruktion ist
   $\eta^{[x]}$ $H$"=äquivariant. Nun können wir nach Bemerkung
   \ref{bem:geometrischeProlongation} eine $H$"=invariante offene
   Umgebung $\tilde{U}^{[x]}$ von $H \cdot x$ in $M$ sowie eine
   $H$"=äquivariante geometrische Prolongation $\prol^{[x]} \colon
   C^\infty(H \cdot x) \to C^\infty(M)$ darauf wählen. Wir definieren
   dann $\xi^{[x]} := \prol^{[x]} \eta^{[x]}$. Aus Stetigkeits- und
   Invarianzgründen gibt es eine offene, $H$"=invariante Umgebung
   $U^{[x]} \subset \tilde{U}^{[x]}$ von $H\cdot x$ mit
   $\dPaar{J}{\xi^{[x]}}\at{U^{[x]}} > 0$. Damit ist ${\xi^{[x]}}'
   \colon U^{[x]} \to \lieAlgebra$, $p \mapsto
   \frac{\xi^{[x]}(p)}{\dPaar{J(p)}{\xi^{[x]}}}$ eine wohldefinierte
   glatte und $H$"=äquivariante Abbildung und es gilt
   $\dPaar{J\at{U^{[x]}}}{{\xi^{[x]}}'} = 1$. Wir wählen dann eine der
   offenen $H$"=invarianten Überdeckung $\{U^{[x]}\}_{x\in W}$
   untergeordnete, glatte, $H$"=invariante, lokal endliche Zerlegung der
   Eins $\{\chi_x\}_{x \in W}$ (vgl. Satz
   \ref{satz:inv_zerlegung_der_eins}) und setzen $\xi := \sum_{x\in
     W}\chi_x {\xi^{[x]}}'$. Dies erfüllt offensichtlich die gewünschte
   Eigenschaft $\dPaar{J\at{W}}{\xi} = 1$.

\end{proof}

\begin{korollar}
   \label{kor:XiMitPsi}
   Es seien dieselben Voraussetzungen wie in Lemma
   \ref{lem:GlobalisierungsAbbildung} gegeben und zusätzlich eine glatte
   Funktion $\psi_W \colon M \to \mathbb{R}$ mit $\supp \psi_W \subset W$. Dann
   gibt es eine glatte Abbildung $\xi \colon M \to \lieAlgebra$ mit
   $\supp \xi \subset W$ und $\dPaar{J}{\xi} = \psi_W$. Ist $H$ eine
   Lie"=Gruppe wie in Lemma~\ref{lem:GlobalisierungsAbbildung} und $W$
   sowie $\psi_W$ $H$"=invariant, so ist $\xi$ $H$"=äquivariant wählbar.
\end{korollar}
\begin{proof}
   Nach Lemma \ref{lem:GlobalisierungsAbbildung} gibt es ein glattes
   $\tilde \xi \colon W \to \lieAlgebra$ mit $\dPaar{J\at{W}}{\xi} =
   1$. Dann definiere man $\xi := \psi_W \tilde \xi \colon M \to
   \lieAlgebra$.
\end{proof}
\begin{proposition}
   \label{prop:HomotopieEigenschaftVonXiW}
   Sei $W \subset M\setminus C$ offen und $\psi_W \in C^\infty(M)$ mit $\supp
   \psi_W \subset W$ sowie $\xi \colon M \to \lieAlgebra$ eine glatte
   Abbildung mit $\dPaar{J}{ \xi} = \psi_W$ und $\supp \xi \subset
   W$. Ist $\{e_\alpha\}$ eine Basis von $\lieAlgebra$, so gibt es eindeutig
   bestimmte Funktionen $\xi^\alpha \in C^\infty(M)$ mit $\xi =
   \xi^\alpha e_\alpha$.  Sei $\h_W \colon C^\infty(M) \otimes
   \Bigwedge^{\bullet} \lieAlgebra \to C^\infty(M) \otimes
   \Bigwedge^{\bullet +1}\lieAlgebra$, für alle $k \in \mathbb{N}$ und
   $f\otimes\eta \in C^\infty(M)\otimes \Bigwedge^k \lieAlgebra$ durch
   $\h_W(f\otimes \eta) := \xi^\alpha f\otimes e_\alpha \wedge \eta$ und lineare
   Fortsetzung definiert. Diese Definition ist unabhängig von der Wahl
   der Basis $\{e_\alpha\}$ und es gilt
   \begin{align}
      \label{eq:HomotopieEigenschaftVonXiW1}
      \partial \h_W x + (1 - \delta_{0k}) \h_W \partial x = \psi_W x \quad \forall x \in
      C^\infty(M) \otimes \Bigwedge^k \lieAlgebra, k \in \mathbb{N} \Fdot
   \end{align}
   Es trage $M$ die Wirkung einer Lie"=Gruppe $H$, welche auch auf
   $\lieAlgebra$ linear operiere. Falls $\psi_W$ $H$"=invariant und $\xi$
   $H$"=äquivariant ist, so ist auch $\h_W$ $H$"=äquivariant.
\end{proposition}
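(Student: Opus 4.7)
The plan has three independent pieces: basis--independence of $\h_W$, the homotopy identity, and the equivariance statement.

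First I would observe that the assignment $f\otimes \eta \mapsto \xi^\alpha f \otimes e_\alpha \wedge \eta$ is nothing other than wedging on the left with the element $\xi \in C^\infty(M) \otimes \lieAlgebra$, viewed as sitting in $C^\infty(M)\otimes \Bigwedge^\bullet \lieAlgebra$ in degree one. Since the components $\xi^\alpha$ and the basis vectors $e_\alpha$ transform contragrediently under a change of basis, the product $\xi^\alpha e_\alpha$, and hence $\h_W$, is manifestly basis--independent.

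For the homotopy identity, I would compute $\partial \h_W$ on a pure tensor $f\otimes\eta$ with $\eta \in \Bigwedge^k\lieAlgebra$ using the Leibniz (super-derivation) rule for $\Ins{e^\beta}$. Concretely,
\begin{align*}
   \partial\h_W(f\otimes\eta) &= \partial(\xi^\alpha f \otimes e_\alpha\wedge \eta)
   = \xi^\alpha f\, J(e_\beta) \otimes \Ins{e^\beta}(e_\alpha\wedge \eta) \\
   &= \xi^\alpha f\, J(e_\beta)\otimes \bigl(\delta^\beta_\alpha \eta - e_\alpha \wedge \Ins{e^\beta}\eta\bigr) \\
   &= \dPaar{J}{\xi} f\otimes \eta - \h_W\bigl(f J(e_\beta)\otimes \Ins{e^\beta}\eta\bigr) \\
   &= \psi_W f \otimes \eta - \h_W\partial(f\otimes\eta).
\end{align*}
For $k\geq 1$ this is exactly the claimed identity; for $k = 0$ the Leibniz term drops out (there is no $e_\alpha$ to split off) and the same computation yields $\partial\h_W f = \psi_W f$ with no $\h_W\partial$ term, explaining the factor $(1-\delta_{0k})$. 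The only subtlety is the sign bookkeeping in the super-derivation step and the observation $\xi^\alpha J(e_\alpha) = \dPaar{J}{\xi} = \psi_W$.

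For the equivariance, I would argue that the equivariance hypothesis on $\xi \colon M \to \lieAlgebra$ translates to $H$-invariance of $\xi$ as an element of $C^\infty(M)\otimes\lieAlgebra$ under the combined action (pull-back on $C^\infty(M)$, the given linear action on $\lieAlgebra$): indeed, writing $\xi(h^{-1}p) = h^{-1}\xi(p)$ in components shows $(h\cdot\xi)(p) = \xi(p)$. Since $\h_W$ is left multiplication by this invariant element in the graded algebra $C^\infty(M)\otimes\Bigwedge^\bullet\lieAlgebra$, and since the $H$-action respects both the pointwise product on $C^\infty(M)$ and the wedge product on $\Bigwedge^\bullet\lieAlgebra$, the map $\h_W$ is automatically $H$-equivariant. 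No serious obstacle is expected here; the whole proof is essentially one careful computation plus the basis-free reinterpretation.
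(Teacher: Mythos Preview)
Your proof is correct and follows essentially the same approach as the paper: the homotopy identity is the same super-derivation computation (the paper adds $\partial\h_W$ and $\h_W\partial$ from the start rather than recognizing the latter inside the former, but the algebra is identical), and your conceptual argument for equivariance---$\h_W$ is left-wedging by the $H$-invariant element $\xi \in C^\infty(M)\otimes\lieAlgebra$---is exactly what underlies the paper's direct computation $h^{-1}\xi^\alpha \otimes h^{-1}e_\alpha = \xi^\alpha \otimes e_\alpha$.
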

\begin{proof}
Sei $k \in \mathbb{N}$ und $f \otimes \eta \in C^\infty(M) \otimes \Bigwedge^k \lieAlgebra$, dann gilt
   \begin{align*}
      \lefteqn{\partial \h_W (f \otimes \eta) + (1 - \delta_{0k})
        \h_W \partial
        (f \otimes \eta)}\\
      &= \partial(\xi^\alpha f \otimes e_\alpha \wedge \eta) + (1 -
      \delta_{0k})\h_W(f J_\alpha \otimes \Ins{e^\alpha} \eta) \\
      &= J_\beta \xi^\alpha f \otimes \Ins{e^\beta}(e_\alpha \wedge
      \eta) + (1 - \delta_{0k}) f J_\alpha \xi^\beta \otimes (e_\beta
      \wedge
      \Ins{e^\alpha}\eta) \\
      &= J_\alpha\xi^\alpha f\otimes \eta -
      (1 - \delta_{0k})J_\beta\xi^\alpha f\otimes e_\alpha \wedge\Ins{ e^\beta} \eta +
      (1 - \delta_{0k}) f J_\alpha \xi^\beta \otimes (e_\beta
      \wedge
      \Ins{e^\alpha}\eta) \\
      &= \dPaar{J}{\xi}(f \otimes \eta) \\
      &= \psi_W (f\otimes \eta) \Fdot
   \end{align*}
   Um die Äquivarianzeigenschaften zu zeigen sei $h \in H$ , $k \in
   \mathbb{N}$ und $f\otimes \eta \in C^\infty(M) \otimes \Bigwedge^k
   \lieAlgebra$, dann folgt:
   \begin{align*}
      \h_W(h(f \otimes \eta)) &= \h_W(hf \otimes h \eta) = \xi^\alpha hf
      \otimes e_\alpha \wedge h\eta \\
      &= h(h^{-1} \xi^\alpha f \otimes h^{-1} e_\alpha \wedge \eta) = h
      \h_W(f \otimes \eta) \Fdot
   \end{align*}
\end{proof}

Wir kommen nun zum angekündigten Existenzsatz einer globalen Homotopie.

\begin{satz}[Homotopie]
   \label{satz:globalisierteHomotopie}
   Sei $\Psi \colon U \to V \subset C \times \lieAlgebra^*$ eine
   \tn{gute} Tubenabbildung mit zugehöriger Retraktion $r \colon U \to
   C$ und $\h_U \colon C^\infty(U) \otimes \Bigwedge^\bullet \lieAlgebra
   \to C^\infty(U) \otimes \Bigwedge^{\bullet + 1} \lieAlgebra$ wie in
   Lemma \ref{lem:HomotopienInDerTubenumgebung}. Sei weiter $O$ eine
   offene Umgebung von $C$ mit $\abschluss{O} \subset U$, $W :=
   M\setminus \abschluss{O}$ sowie $\{\psi_U,\psi_W\}$ eine der offenen
   Überdeckung $\{U,W\}$ untergeordnete glatte Zerlegung der Eins. Dann
   gibt es eine glatte Abbildung $\xi \colon M \to \lieAlgebra$ mit
   $\supp \xi \in W$, $\dPaar{J}{\xi} = \psi_W$ und $\xi\at{\supp \psi_W
     \cap U} = -\h_U(\psi_U\at{U})\at{\supp \psi_W \cap U}$. Weiter sei
   $\h_W$ davon induziert wie in Proposition~\ref{prop:HomotopieEigenschaftVonXiW}. Schließlich sei $\prol :=
   \prol_{M,O,r,\psi_U}$.  Dann erfüllt die Abbildung $\h \colon
   C^\infty(M) \otimes \Bigwedge^\bullet \lieAlgebra \to C^\infty(M)
   \otimes \Bigwedge^{\bullet + 1} \lieAlgebra$, welche für $k \in
   \mathbb{N}$ und $f \otimes \eta \in C^\infty(M) \otimes \Bigwedge^k
   \lieAlgebra$ durch
   \begin{align}
      \h(f \otimes \eta):= \psi_U \h_U(f\at{U} \otimes \eta) + \h_W(f
      \otimes \eta)
   \end{align}
   und lineare Fortsetzung definiert ist,
   die folgenden Gleichungen.
\begin{align}
      \label{eq:globalisiertHomotopie1}
      \h\partial x + \partial \h x = x \quad \text{für $x \in
        C^\infty(M) \otimes \Bigwedge^k \lieAlgebra$ und $k \geq 1$}\Fcom
   \end{align}
   \begin{align}
      \label{eq:globalisiertHomotopie2}
      \prol \kIn^* f + \partial \h f = f \quad \text{für $f \in
        C^\infty(M)$}
   \end{align}
   und
   \begin{align}
      \label{eq:globalisierteHomotopie3}
      \h \prol  f = 0 \quad \text{für $f \in C^\infty(C)$}\Fdot
   \end{align}
   Trägt $M$ eine Wirkung einer Lie"=Gruppe $H$, unter der $C$ stabil ist, und
   $\lieAlgebra$ eine lineare $H$"=Wirkung, so kann man $\psi_U$,
   $\psi_W$ $H$"=invariant und $\xi$ $H$"=äquivariant wählen. Ebenso kann
   nach Lemma \ref{lem:HomotopienInDerTubenumgebung} $\h_U$
   $H$"=äquivariant gewählt werden und nach Proposition
   \ref{prop:HomotopieEigenschaftVonXiW} auch $\h_W$. Unter diesen
   Annahmen ist $\h$ ebenfalls $H$"=äquivariant.
\end{satz}
\begin{proof}

   Sei $\hat \xi \colon U \cap W \to \lieAlgebra$ definiert durch $\hat
   \xi = -\h_U(\psi\at{U})\at{U \cap W}$. Dann gilt offensichtlich für
   $p \in U \cap W$
   \begin{align*}
      \dPaar{J(p)}{\hat \xi(p)} = J_\alpha(p) \xi^\alpha(p) &= -
      (\partial \h_U(\psi_U\at{U}))(p) \\
      &= -(\psi_U(p) - (r^*\kRes \psi_U\at{U})(p)) = 1 - \psi_U(p) =
      \psi_W(p)
   \end{align*}
   Nun wählen wir auf $W$ ein $\xi' \colon M \to \lieAlgebra$ mit
   $\dPaar{J}{\xi'} = \psi_W$. Sei weiter $\{\chi_1,\chi_2\}$ eine
   Zerlegung der Eins von $W$ mit $\supp \chi_1 \subset U \cap W$ und
   $\supp \chi_2 \subset W \setminus \supp \psi_U$, dann erfüllt $\xi :=
   \chi_1 \hat \xi + \chi_2 \xi'$ die gewünschten Eigenschaften.

   Für die äquivariante Situation beachte man für $h \in H$ und $p \in U$
   \begin{align*}
     (\h_U(\psi_U))(hp) &= \int_0^1(\partial_{e^\alpha}(\psi_U \circ
     \Psi^{-1}))(r(hp)t J(hp)) \, dt \, e_\alpha \\
     &= \int_0^1(\partial_{e^\alpha}(\psi_U \circ
     \Psi^{-1}))(h(r(p),tJ(p))) \, dt \, e_\alpha \\
     &= \int_0^1(h^{-1}(\partial_{e^\alpha}(\psi_U \circ \Psi^{-1})))(r(p),tJ(p))
     \, dt  \, e_\alpha \\
     &= \int_0^1 \partial_{h^{-1} e^\alpha}(h^{-1} (\psi_U \circ
     \Psi^{-1}))(r(p),tJ(p)) \, dt  \, e_\alpha \\
     &= h\left(\int_0^1 \partial_{h^{-1} e^\alpha}(\psi_U \circ \Psi^{-1}) \,
     dt \, h^{-1} e_\alpha\right) \\
     &= h(\h_U(\psi_U)(p)),
   \end{align*}
   womit $\hat \xi$ $H$"=äquivariant ist. Dabei wurde beim zweitletzten
   Schritt die Invarianz von $\psi_U$ und die Äquivarianz von $\Psi$
   verwendet:
   \begin{align*}
      h^{-1} (\psi_U \circ \Psi^{-1})(p) = \psi_U \circ \Psi^{-1}(hp) =
      \psi_U (h \Psi^{-1}(p))  = \psi_U \circ \Psi^{-1}(p) \quad \forall
      p \in V
   \end{align*}
   Für $k >
   0$ und $x \in C^\infty(M) \otimes \Bigwedge^k \lieAlgebra$ gilt
   \begin{align*}
      \h \partial x + \partial \h x = \psi_U \h_U (\partial x)\at{U} +
      \h_W \partial x + \psi_U \partial \h_U x\at{U} + \partial \h_W x = \psi_U x + \psi_W
      x = x
   \end{align*}
   und für $f \in C^\infty(M)$
   \begin{align*}
      \prol \kIn^* f + \partial \h f = \psi_U r^* \kIn^* f +
      \psi_U \partial \h_U f\at{U} + \partial \h_W f = \psi_U f + \psi_W
      f = f \Fdot
   \end{align*}
   Für den letzten Teil der Aussage  beachte man,  dass  wegen
   $\dPaar{J\at{W}}{\xi} = \psi_W$ schon $\supp \psi_W \subset \supp
   \xi$ gilt, d.\,h.\ für $p \notin \supp \psi_U \cap \supp \xi$ gilt $p
   \notin \supp \psi_U \cap \supp \psi_W$, also offensichtlich
   $\psi_U(p)\h_U(\psi_U)(p) = 0$. Man rechnet dann nach, dass für $p \in W$
   \begin{align*}
      \h_W(\prol f)(p) &= \xi^\alpha(p) \prol f(p) e_\alpha  \\
      &=
      \begin{cases}
          \psi_U(p) f(r(p)) \xi(p) & \text{für } p \in
         \supp \psi_U \cap \supp \xi \subset U \\
         0 &\text{sonst}
      \end{cases}  \\
      &=
      \begin{cases}
         - \psi_U(p)f(r(p)) \h_U(\psi_U) &\text{für } p \in \supp \psi_U
         \cap \supp \xi \\
         0 &\text{sonst}
      \end{cases}\Fdot
   \end{align*}
   Andererseits gilt
   \begin{align*}
      (\h_U\prol f)(p) &= \int_0^1 \partial^\alpha((\psi_U r^*f)\circ
      \Psi^{-1})(r(p),tJ(p)) \, dt\, e_\alpha \\
      &= \int_0^1 \partial^\alpha((\psi_U \circ \Psi^{-1})f\circ r \circ
      \Psi^{-1})(r(p),tJ(p)) \, dt \, e_\alpha \\
      &= \int_0^1 f(r(p)) \partial^\alpha(\psi_U \circ
      \Psi^{-1})(r(p),tJ(p)) \, dt\, e_\alpha \\
      &= f(r(p)) \h_U(\psi_U)(p) \Fdot
   \end{align*}
   Damit ist  klar, dass
   \begin{align*}
      \h \prol f = 0
   \end{align*}
   gilt.
\end{proof}
Wir haben nun also eine explizite Homotopie für den Koszul"=Komplex
konstruiert, insbesondere wissen wir damit, dass $\kIdeal = \ker \kRes =
\im \kkoszul[1] = \langle J \rangle$ gilt. Da wir später die
geometrischen Eingangsdaten für diese Homotopie explizit verwenden
werden, erscheint die folgende Definition sinnvoll.
\begin{definition}[Geometrische Homotopie"=Daten]
   \label{def:GeometrischeHomotopieDaten}
   Das Tupel $\mathsf{GH} := (\Psi \colon U \to V, O \subset U,
   \psi_U,\psi_W,\xi)$ aus Satz \ref{satz:globalisierteHomotopie} nennen
   wir auch \neuerBegriff{geometrische Homotopie"=Daten} (für $C$). Ist $H$ eine
   Lie"=Gruppe wie in Satz \ref{satz:globalisierteHomotopie}, und sind
   $\Psi$ und $\xi$ $H$"=äquivariant sowie $O$, $U$, $\psi_U$, $\psi_W$
   $H$"=invariant gewählt, so nennen wir $\mathsf{GH}$ $H$"=invariante
   geometrische Homotopie"=Daten (für $C$).
\end{definition}
\begin{bemerkung}
   \label{bem:induzierteDingens}
   Wie wir in diesem Abschnitt gesehen haben, induzieren die
   geometrischen Homotopie"=Daten eine zusammenziehbare Kettenhomotopie
   $\h$ für den Koszul"=Komplex sowie eine geometrische
   Prolongationsabbildung $\prol$ sodass $\h \prol = 0$ gilt. Ist $H$
   eine Lie"=Gruppe wie in Satz \ref{satz:globalisierteHomotopie} und
   $J$ $H$"=äquivariant, so haben wir gesehen, dass es immer möglich
   ist, $H$"=invariante Homotopiedaten zu wählen. Insbesondere die davon
   induzierten Abbildungen $\h$ und $\prol$ $H$"=äquivariant.
\end{bemerkung}
\subsection{Chevalley-Eilenberg-Komplex}
\label{sec:ChevalleyEilenberg}

Zum Abschluss dieses Abschnitts möchten wir darlegen, dass man den oben
vorgestellten Koszul"=Komplex als Spezialfall des
Chevalley"=Eilenberg"=Komplexes auffassen kann. Diese Sichtweise wird
sich bei der im nächsten Kapitel vorgestellten Konstruktion eines
Sternprodukts auf dem reduzierten Phasenraum als hilfreich
erweisen. Zunächst wollen wir jedoch eine allgemeine Definition des
Chevalley"=Eilenberg"=Komplexes geben und zeigen, dass es sich dabei
tatsächlich um einen Komplex handelt. Die von uns verwendete Definition
findet man etwa in dem Buch von Loday \cite[Def. 10.1.3]{loday:1998a},
mit dem kleinen Unterschied, dass wir nicht wie Loday über einem Körper,
sondern etwas allgemeiner, über einem kommutativen Ring $R$ mit
$\mathbb{Q} \subset R$ arbeiten wollen. Bevor wir zur Definition kommen,
sei noch kurz an den Begriff der Lie"=Algebra"=Darstellung einer
Lie"=Algebra $\lieAlgebra$ über $R$ auf ein $R$"=Modul $V$
erinnert. Darunter verstehen wir eine $R$"=lineare Abbildung $\rho
\colon \lieAlgebra \to \mathrm{\operatorname{End}}(V)$, so dass für alle
$\xi,\eta \in \lieAlgebra$ die Gleichung $\rho([\xi,\eta]) = \rho(\xi)
\circ \rho(\eta) - \rho(\eta)\circ \rho(\xi)$ gilt.

Damit ist $\rho$ also nichts anderes, als ein
$R$"=Lie"=Algebrahomomorphismus von $(\lieAlgebra,[\cdot,\cdot])$ in
die $R$"=Lie"=Algebra der $R$"=linearen Endomorphismen
$\mathrm{\operatorname{End}}(V)$ auf $V$, welche mit dem Kommutator als
Lie"=Klammer versehen ist.

\begin{definition}[Chevalley"=Eilenberg"=Komplex]
   \label{def:Chevalley-Eilenberg-Komplex}
   Sei $\lieAlgebra$ eine Lie"=Algebra über $R$, $V$ ein Modul über $R$
   und $\rho \colon \lieAlgebra \to \mathrm{\operatorname{End}}(V)$ eine
   Lie"=Algebra"=Darstellung. Dann heißt die
   Sequenz von Abbildungen \def\tA[#1]{A_{#1}}
\begin{equation}
      \begin{tikzpicture}[baseline=(current
    bounding box.center),description/.style={fill=white,inner sep=2pt}]
         \matrix (m) [matrix of math nodes, row sep=3.0em, column
         sep=3.5em, text height=1.5ex, text depth=0.25ex]
         {
V & V \otimes \Bigwedge^1
\lieAlgebra  & V
\otimes \Bigwedge^2 \lieAlgebra & \dots \\
}; %

\path[<-] (m-1-1) edge node[auto]{$\chevalley$}(m-1-2); %
\path[<-] (m-1-2) edge node[auto]{$\chevalley$}(m-1-3); %
\path[<-] (m-1-3) edge node[auto]{$\chevalley$}(m-1-4); %
 \end{tikzpicture},
\end{equation}
bei der die Abbildung $\chevalley$ für alle $n \in \mathbb{N}$ und $v
\otimes \xi_1 \wedge \dots \wedge \xi_n \in V \otimes \Bigwedge^k \lieAlgebra$ durch
\begin{multline}
   \label{eq:DefinitionChevalleyEilenbergOperator}
   \chevalley(v \otimes \xi_1 \wedge \dots \wedge \xi_n) := \sum_{1 \leq
     j \leq n} (-1)^{j+1} \rho(\xi_j)v \otimes \xi_1 \wedge \dots \wedge
   \widehat{\xi_j} \wedge \dots \wedge \xi_n \\
   + \sum_{1 \leq i < j \leq n}(-1)^{i + j - 1} v \otimes [\xi_i,\xi_j]
   \wedge \xi_1 \wedge \dots \wedge \widehat{\xi_i} \wedge \dots \wedge
   \widehat{\xi_j} \wedge \dots \wedge \xi_n
\end{multline}
sowie durch lineare Fortsetzung gegeben ist,
\neuerBegriff{Chevalley"=Eilenberg"=Komplex} zur Lie"=Algebra"=Darstellung
$\rho$. Dabei bedeutet $\widehat{\xi_i}$ das Weglassen der Variablen
$\xi_i$. Der Randoperator $\chevalley$ heißt
$\neuerBegriff{Chevalley"=Eilenberg"=Operator}$ (zur Lie"=Algebra
Darstellung $\rho$).
\end{definition}
Um eventuellen Unklarheiten vorzubeugen, sei an dieser Stelle bemerkt,
dass es sich bei dem  im Papier von Bordemann, Herbig und Waldmann
\cite{bordemann.herbig.waldmann:2000a} betrachteten
Chevalley"=Eilenberg"=Komplex um den zu obigem Kettenkomplex dualen
Kokettenkomplex handelt.

Schreibt man für $\xi,\eta \in \lieAlgebra$ und $v \in V$ $\xi \wirk v
:= \rho(\xi)(v)$ und $\xi \wirk \eta := [\xi,\eta]$ sowie
$(\xi_0,\xi_1,\dots,\xi_n) := (\xi_0 \otimes \xi_1 \wedge \dots \wedge
\xi_n) \in V \otimes \Bigwedge^n \lieAlgebra$, so lässt sich
$\chevalley$ offensichtlich in der folgenden etwas homogeneren Form schreiben.
\begin{align}
   \label{eq:HomogeneFormVonChevalleyEilenberg}
   \chevalley(\xi_0,\dots,\xi_n) = \sum_{0 \leq i < j \leq n}
   (-1)^{j+1}(\xi_0,\xi_1,\dots,\xi_{i-1},\xi_j \wirk \xi_i,
   \xi_{i+1},\dots,\widehat{\xi_j},\dots,\xi_n) \Fdot
\end{align}
Für $n \in \mathbb{N}$ und $(\xi_0,\dots,\xi_n) \in V \otimes \Bigwedge^n
\lieAlgebra$ und $\xi \in \lieAlgebra$ setzen wir
\begin{align}
   \label{eq:WirkungAufDachprodukt}
   \xi \wirk (\xi_0,\dots,\xi_n) := \sum_{i=0}^n (\xi_0,\dots,\xi \wirk \xi_i,\dots,\xi_n)\Fdot
\end{align}
Dies definiert offensichtlich eine Lie"=Algebra-Wirkung $\wirk$ auf $V
\otimes \Bigwedge^n \lieAlgebra$, d.\,h.\ es gilt insbesondere für
$\xi,\xi' \in \lieAlgebra$ und $\gamma \in V \otimes \Bigwedge^n
\lieAlgebra$ die Gleichung
\begin{align}
   \label{eq:LieAlegrbaWirkungseigenschaft}
   (\xi \wirk \xi') \wirk \gamma = [\xi,\xi'] \wirk \gamma = \xi \wirk
   (\xi' \wirk \gamma) - \xi' \wirk (\xi \wirk \gamma) \Fdot
\end{align}

Mit dieser Notation können wir nun ohne große Mühe zeigen, dass es
sich beim Chevalley"=Eilenberg"=Komplex tatsächlich um einen Komplex
handelt, d.\,h.\ $\chevalley^2 = 0$ gilt.
\begin{proposition}
   \label{prop:ChevalleyQuatratGleichNull}
   \begin{propositionEnum}
      \item %
         \label{item:ChevalleyReinziehen}
         Für jedes $n\in \mathbb{N}$ und $\alpha = (\xi_0,\dots,\xi_n)
         \in V \otimes \Bigwedge^n \lieAlgebra$ und jedes $\xi \in
         \lieAlgebra$ gilt
         \begin{align}
            \label{eq:ChevalleyReinziehen}
            \chevalley (\xi \wirk \alpha) = \xi \wirk \chevalley \alpha \Fcom
         \end{align}
         d.\,h.\ $\chevalley$ ist $\lieAlgebra$"=äquivariant.
      \item %
         \label{item:ChevalleyQuatratGleichNull}
         Es gilt
         \begin{align}
            \label{eq:ChevalleyQuatratGleichNull}
            \chevalley^2 = 0 \Fdot
         \end{align}
   \end{propositionEnum}
\end{proposition}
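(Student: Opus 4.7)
For part \refitem{item:ChevalleyReinziehen}, the plan is a direct calculation using the homogeneous form \eqref{eq:HomogeneFormVonChevalleyEilenberg} of $\chevalley$ together with the derivation-type formula \eqref{eq:WirkungAufDachprodukt}. First I would expand $\chevalley(\xi \wirk \alpha)$ as a double sum over the position $k$ at which $\xi$ acts and the pair $(i,j)$ at which $\chevalley$ couples two entries, and likewise expand $\xi \wirk \chevalley\alpha$. For $k \notin \{i,j\}$ the contributions match term-by-term on both sides. The remaining terms, where $\xi$ and $\chevalley$ touch the same slot, fail to match pairwise but cancel once we invoke two structural identities: (a) the Jacobi identity in the form
\begin{equation*}
\xi \wirk (\xi_j \wirk \xi_i) \; = \; (\xi \wirk \xi_j) \wirk \xi_i \, + \, \xi_j \wirk (\xi \wirk \xi_i),
\end{equation*}
which handles the case where the coupled slot is a bracket, and (b) the representation property
\begin{equation*}
\xi \wirk (\xi_j \wirk v) \, - \, \xi_j \wirk (\xi \wirk v) \; = \; (\xi \wirk \xi_j) \wirk v,
\end{equation*}
which handles the case where the coupled slot is the $V$-entry acted on via $\rho$. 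Combining these cancellations closes the identity.

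For part \refitem{item:ChevalleyQuatratGleichNull}, the cleanest route in my view is a direct verification via the decomposition $\chevalley = d_1 + d_2$, where
\begin{equation*}
d_1(v \otimes \xi_1 \wedge \dots \wedge \xi_n) \; := \; \sum_{j=1}^{n} (-1)^{j+1}\, \rho(\xi_j)v \otimes \xi_1 \wedge \dots \wedge \widehat{\xi_j} \wedge \dots \wedge \xi_n
\end{equation*}
collects the representation terms and $d_2$ collects the bracket terms. Then $\chevalley^2 = d_1^2 + (d_1 d_2 + d_2 d_1) + d_2^2$, and I would check each summand separately. The vanishing $d_1^2 = 0$ uses only the representation property $\rho([\xi,\eta]) = \rho(\xi)\rho(\eta) - \rho(\eta)\rho(\xi)$, since the remaining $\rho(\xi_i)\rho(\xi_j)$-type contributions pair up antisymmetrically under the wedge product. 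The vanishing $d_2^2 = 0$ is the classical Chevalley--Eilenberg identity with trivial coefficients and follows purely from the Jacobi identity in $\lieAlgebra$. The mixed term $d_1 d_2 + d_2 d_1 = 0$ is where the interplay between $\rho$ and $[\cdot,\cdot]$ appears: the term produced by first coupling $\xi_i,\xi_j$ via $d_2$ and then applying $\rho$ to the resulting bracket yields $\rho([\xi_i,\xi_j])$, which cancels against the $\rho(\xi_i)\rho(\xi_j) - \rho(\xi_j)\rho(\xi_i)$ contribution coming from $d_1 d_2 + d_2 d_1$ applied to the same pair.

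The main obstacle is sign bookkeeping: in both the $d_2^2$ and the mixed computation one has to control the signs arising when moving two selected indices to a canonical position and then contracting them. I would handle this systematically by rewriting every summand in the form (sign)\,$\cdot$\,(canonical expression with $\xi_i,\xi_j$ brought to the front), at which point the cancellations become transparent. As a sanity check, I would verify the identity $\chevalley^2 = 0$ in the lowest nontrivial degree (on $V \otimes \bigwedge^2 \lieAlgebra$), where it reduces precisely to the representation axiom together with the Jacobi identity, and then use the systematic decomposition above to extend this to arbitrary degree. An alternative would be an induction on the wedge-degree exploiting the equivariance from part \refitem{item:ChevalleyReinziehen}, but the direct decomposition seems more transparent and yields both properties with uniform effort.
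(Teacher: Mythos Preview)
Your approach to part \refitem{item:ChevalleyReinziehen} via direct expansion is correct, though the paper instead runs a short induction on $n$ using the recursion $\chevalley(\beta,y) = (\chevalley\beta,y) + (-1)^{n+1}\, y\wirk\beta$ together with $\xi\wirk(\beta,y) = (\xi\wirk\beta,y) + (\beta,\xi\wirk y)$, which avoids the double-sum bookkeeping entirely.

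For part \refitem{item:ChevalleyQuatratGleichNull}, however, your decomposition argument contains a genuine error: $d_1^2$ does \emph{not} vanish on its own. Pairing the $\rho(\xi_i)\rho(\xi_j)$ terms antisymmetrically produces, for each $i<j$, a contribution $(-1)^{i+j}\,[\rho(\xi_i),\rho(\xi_j)]\,v = (-1)^{i+j}\,\rho([\xi_i,\xi_j])\,v$, and the representation identity you invoke merely rewrites this commutator---it does not kill it. Your own sanity check on $V\otimes\Bigwedge^2\lieAlgebra$ would have exposed this: there $d_1^2(v\otimes\xi_1\wedge\xi_2) = -\rho([\xi_1,\xi_2])\,v$, while $d_1 d_2(v\otimes\xi_1\wedge\xi_2) = \rho([\xi_1,\xi_2])\,v$, and it is their \emph{sum} that vanishes. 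The correct cancellation pattern in general is that $d_1^2$ cancels against the piece of $d_1 d_2$ in which $d_1$ applies $\rho$ to the bracket just formed by $d_2$; the remaining parts of $d_1 d_2 + d_2 d_1$ then cancel among themselves, and $d_2^2 = 0$ by Jacobi. So the decomposition strategy can be salvaged, but the three blocks you isolate do not vanish separately. Incidentally, the alternative you mention at the end---induction on the wedge degree using the equivariance from part \refitem{item:ChevalleyReinziehen}---is exactly the paper's route: from $\chevalley(\beta,y) = (\chevalley\beta,y) + (-1)^{n+1}\,y\wirk\beta$ and $\chevalley(y\wirk\beta) = y\wirk\chevalley\beta$ one gets $\chevalley^2(\beta,y) = (\chevalley^2\beta,y)$ in two lines, with no sign bookkeeping at all.
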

\begin{proof}
   Wir folgen bei diesem Beweis eng der Argumentation von Loday,
   vgl.\ \cite[Lem. 10.6.3]{loday:1998a}.
   \begin{beweisEnum}
   \item %
      Wir zeigen die Aussage durch vollständige Induktion. Für $n = 0$
      ist sie trivialerweise erfüllt. Wir wollen annehmen, dass sie für
      ein festes $n-1 \geq 0$ gelte und zeigen, dass sie dann auch schon für
      $n$ gilt.  Dazu seien $\xi \in \lieAlgebra$ und $\alpha :=
      (\beta,y)$ mit $\beta := (\xi_0,\dots,\xi_{n-1}) \in V \otimes
      \Bigwedge^{n-1}\lieAlgebra$ und $y := \xi_n \in \lieAlgebra$
      gegeben.
      Nach Definition von $\chevalley$ und $\wirk$ gelten offensichtlich
      die folgenden Verträglichkeitsbedingungen.
      \begin{align}
         \label{eq:ChevalleyMitKlammer}
         \chevalley(\beta,y) = (\chevalley \beta,y) + (-1)^{n+1} y \wirk
         \beta \tag{$*$},
      \end{align}
      \begin{align}
         \label{eq:WirkungMitKlammer}
         \xi \wirk (\beta,y) = (\xi \wirk \beta,y) + (\beta,\xi \wirk y)\Fdot
         \tag{$**$}
      \end{align}
      Damit sehen wir
      \begin{align*}
         \chevalley(\xi \wirk \alpha)%
         &= \chevalley((\xi \wirk \beta,y) + (\beta,\xi \wirk y))
         &&\eAnn{nach \eqref{eq:WirkungMitKlammer}}\\
         &= (\chevalley(\xi \wirk \beta),y) + (-1)^{n+1} y\wirk (\xi
         \wirk
         \beta) &&\eAnn{nach \eqref{eq:ChevalleyMitKlammer}}\\
         &\phantom{=}+ (\chevalley \beta,\xi \wirk y) + (-1)^{n+1}(\xi
         \wirk y) \wirk \beta
         \\
         &= (\xi \wirk \chevalley \beta,y) + (\chevalley \beta, \xi
         \wirk y)  &&\eAnn{Induktionsvoraussetzung}\\
         &\phantom{=}+ (-1)^{n+1}((\xi \wirk y) \wirk \beta + y \wirk
         (\xi
         \wirk \beta))\\
         &= \xi \wirk (\chevalley \beta,y) + (-1)^{n+1}\xi \wirk (y
         \wirk
         \beta) &&\eAnn{nach \eqref{eq:WirkungMitKlammer}  und \eqref{eq:LieAlegrbaWirkungseigenschaft}} \\
         &= \xi \wirk (\chevalley(\beta,y)) &&\eAnn{nach \eqref{eq:ChevalleyMitKlammer}}\\
         &= \xi \wirk \chevalley \alpha \Fdot
      \end{align*}
   \item %
      Wir zeigen durch vollständige Induktion $\chevalley^2 \alpha = 0$
      für alle $\alpha \in V \otimes \Bigwedge^n \lieAlgebra$. Für $n =
      0$ ist die Aussage trivial. Wir nehmen nun an, sie gelte für $n-1$
      und zeigen, dass daraus folgt, dass sie auch für $n \in
      \mathbb{N}$ gilt. Dazu seien wieder $\xi \in \lieAlgebra$ und
      $\alpha := (\beta,y)$ mit $\beta := (\xi_0,\dots,\xi_{n-1}) \in V
      \otimes \Bigwedge^{n-1}\lieAlgebra$ und $y := \xi_n \in
      \lieAlgebra$ gegeben.
         \begin{align*}
            \chevalley^2 \alpha %
            &= \chevalley\chevalley(\beta,y) \\
            &= \chevalley((\chevalley \beta, y) + (-1)^{n+1} y \wirk \beta)
            &&\eAnn{nach \eqref{eq:ChevalleyMitKlammer}}\\
            &= (\chevalley^2 \beta,y) + (-1)^{n} y \wirk \chevalley
            \beta &&\eAnn{nach
              \eqref{eq:ChevalleyMitKlammer} und \eqref{eq:ChevalleyReinziehen}}\\
            &\phantom{=}+ (-1)^{n+1} y \wirk \chevalley \beta \\
            &= (\chevalley^2\beta,y) = 0
            &&\eAnn{Induktionsvoraussetzung} \Fdot
         \end{align*}
   \end{beweisEnum}
\end{proof}
Nun kommen wir noch zu einer lokalen Formel für den
Chevalley"=Eilenberg"=Operator. Diese wird sich später als hilfreich
erweisen, den Quanten"=Koszul"=Operator so zu schreiben, wie er in
der Literatur  auftaucht.
\begin{proposition}
   \label{prop:ChevalleyEilenbergDifferential}
   Die Voraussetzungen seien wie in Definition
   \ref{def:Chevalley-Eilenberg-Komplex}, die Lie"=Algebra
   $\lieAlgebra$ sei zusätzlich frei und endlich erzeugt mit einer Basis
   $\{e_\alpha\}$ und zugehöriger dualer Basis $\{e^\alpha\}$, dann gilt für alle
   $n \in \mathbb{N}$ und $v \otimes \xi  \in V \otimes \Bigwedge^n
   \lieAlgebra$ die folgende Gleichung.
   \begin{align}
      \label{eq:LokaleDarstellungDesChevalleyEilenbergDifferentials}
      \chevalley(v \otimes \xi) = \rho(e_\alpha)(v)\otimes \Ins{e^\alpha} \xi -
      \frac{1}{2} c_{\alpha \beta}^\gamma  v \otimes e_\gamma \wedge
      \Ins{e^\alpha} \Ins{e^\beta}\xi \Fdot
   \end{align}
   Dabei sind $\{c_{\alpha \beta}^\gamma\}$ die Strukturkonstanten der
   Lie"=Algebra $\lieAlgebra$ bezüglich der Basis $\{e_\alpha\}$, die für
   $\alpha,\beta,\gamma \in \{1,\dots,\dim \lieAlgebra\}$ durch $c_{\alpha
     \beta}^\gamma := \dPaar{e^\gamma}{[e_\alpha,e_\beta]}$ gegeben sind.
\end{proposition}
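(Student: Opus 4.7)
The plan is to reduce to a direct computation on decomposable elements $v \otimes \xi_1 \wedge \dots \wedge \xi_n$ using multilinearity of both sides, and then verify that the two summands on the right of \eqref{eq:LokaleDarstellungDesChevalleyEilenbergDifferentials} match the two summands of the defining formula \eqref{eq:DefinitionChevalleyEilenbergOperator} separately.

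For the first (representation) term, I would use that $\Ins{e^\alpha}$ is the antisymmetric insertion derivation, so
\begin{equation*}
\Ins{e^\alpha}(\xi_1 \wedge \dots \wedge \xi_n) = \sum_{j=1}^n (-1)^{j+1} \dPaar{e^\alpha}{\xi_j}\, \xi_1 \wedge \dots \wedge \widehat{\xi_j} \wedge \dots \wedge \xi_n.
\end{equation*}
Tensoring with $\rho(e_\alpha)v$ and summing over $\alpha$ using Einstein's convention, the identity $\dPaar{e^\alpha}{\xi_j} e_\alpha = \xi_j$ combined with the $R$-linearity of $\rho$ yields $\dPaar{e^\alpha}{\xi_j} \rho(e_\alpha) = \rho(\xi_j)$, which is precisely the first sum in \eqref{eq:DefinitionChevalleyEilenbergOperator}.

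For the second (bracket) term, I would iterate the insertion formula above once more. This gives $\Ins{e^\alpha}\Ins{e^\beta}(\xi_1 \wedge \dots \wedge \xi_n)$ as a double sum over pairs $(i,j)$ with $i \neq j$, with signs arising from both insertions. Contracting with $c_{\alpha\beta}^\gamma e_\gamma$ and using the defining relation $c_{\alpha\beta}^\gamma e_\gamma = [e_\alpha, e_\beta]$ together with $\dPaar{e^\alpha}{\xi_i}\dPaar{e^\beta}{\xi_j}[e_\alpha, e_\beta] = [\xi_i, \xi_j]$, the expression reorganizes into a sum over $i \neq j$ of terms of the form $v \otimes e_\gamma \wedge (\dots)$ with $e_\gamma$ playing the role of $[\xi_i, \xi_j]$ after wedging. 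The antisymmetry $c_{\alpha\beta}^\gamma = -c_{\beta\alpha}^\gamma$ combined with $\Ins{e^\alpha}\Ins{e^\beta} = -\Ins{e^\beta}\Ins{e^\alpha}$ means the $(i,j)$ and $(j,i)$ contributions combine, and the prefactor $\tfrac{1}{2}$ collapses the sum to one over $i < j$ only, yielding exactly the sign $(-1)^{i+j-1}$ needed to match the second summand of \eqref{eq:DefinitionChevalleyEilenbergOperator}.

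The main obstacle is the careful sign bookkeeping in the second step: one must correctly track (a) the two factors $(-1)^{j+1}$ and $(-1)^{i+1}$ (respectively $(-1)^i$) arising from the two successive antisymmetric insertions depending on whether $i<j$ or $i>j$, (b) the sign from moving $e_\gamma$ into its appropriate position in the wedge product so that the remaining factors appear in the natural order $\xi_1 \wedge \dots \widehat{\xi_i} \dots \widehat{\xi_j} \dots \wedge \xi_n$, and (c) the antisymmetrization step converting the unrestricted sum into the ordered sum $i < j$. Once these signs are tallied, the claimed identity follows by comparing coefficients of each $v \otimes e_\gamma \wedge \xi_1 \wedge \dots \widehat{\xi_i} \dots \widehat{\xi_j} \dots \wedge \xi_n$ on both sides.
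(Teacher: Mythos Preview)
Your proposal is correct and follows essentially the same route as the paper: both arguments match the two summands of \eqref{eq:LokaleDarstellungDesChevalleyEilenbergDifferentials} separately to the two summands of the defining formula \eqref{eq:DefinitionChevalleyEilenbergOperator} by expanding the insertion derivations on a decomposable element $\xi = \xi_1 \wedge \dots \wedge \xi_n$ and contracting the basis indices. The only cosmetic difference is that the paper writes out the sign bookkeeping for the bracket term explicitly (splitting the double sum into $i<j$ and $i>j$ and then recombining), whereas you describe it in words; your identification of this bookkeeping as the main obstacle is accurate.
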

\begin{proof}
   Der erste Summand der rechten Seite von Gleichung
   \eqref{eq:LokaleDarstellungDesChevalleyEilenbergDifferentials} stimmt
   offensichtlich mit dem ersten Summenden der rechten Seite der
   $\chevalley$ definierenden Gleichung
   \eqref{eq:DefinitionChevalleyEilenbergOperator} überein. Sei nun $\xi
   = \xi_1 \wedge \dots \wedge \xi_n$. Dann ist der zweite Summand
   offensichtlich gleich
   \begin{align*}
      &\lefteqn{- \frac{1}{2} c_{\alpha \beta}^\gamma v \otimes e_\gamma \wedge
      \Ins{e^\alpha} \Ins{e^\beta}\xi} \\
      &=-\frac{1}{2}v \otimes
        [e_\alpha,e_\beta] \wedge \Ins{e^\alpha}(\sum_{j=1}^n(-1)^{j-1}
        \xi_1 \wedge \dots \wedge
        \Ins{e^\beta}\xi_j \wedge \dots \xi_n)\\
      &=-\frac{1}{2} v \otimes \sum_{j=1}^n(-1)^{j-1} [e_\alpha,\xi_j]
      \wedge (\sum_{i = 1}^{j-1} (-1)^{i-1} \xi_1 \wedge \dots \wedge
      \Ins{e^\alpha}\xi_i \wedge \dots \wedge
      \widehat{\xi_j} \wedge \dots  \wedge\xi_n \\
      &\phantom{v \otimes \sum_{j=1}^n(-1)^j [e_\alpha,\xi_j] \wedge (}
      + \sum_{i = j+1}^n (-1)^{i} \xi_1 \wedge \dots \wedge
      \widehat{\xi_j} \wedge
      \dots \wedge \Ins{e^\alpha} \xi_i \wedge \dots \wedge \xi_n) \\
      &=-\frac{1}{2} v \otimes \sum_{j=1}^n \sum_{i=1}^{j-1}
      (-1)^{j+i} [\xi_i,\xi_j] \wedge \xi_1 \wedge \dots \wedge
      \widehat{\xi_i} \wedge \dots
      \wedge \widehat{\xi_j} \wedge \dots \wedge \xi_n \\
      &\phantom{=}- \frac{1}{2} v\otimes\sum_{j=1}^n \sum_{i=j+1}^n
      (-1)^{j+i -1} [\xi_i,\xi_j] \wedge \xi_1 \wedge \dots \wedge
      \widehat{\xi_j} \wedge \dots \wedge \widehat{\xi_i} \wedge \dots
      \wedge \xi_n\\
      &= \sum_{1 \leq i < j \leq n}(-1)^{i + j - 1} v \otimes
      [\xi_i,\xi_j] \wedge \xi_1 \wedge \dots \wedge \widehat{\xi_i}
      \wedge \dots \wedge \widehat{\xi_j} \wedge \dots \wedge \xi_n
      \Fdot
   \end{align*}
\end{proof}
Nun kommen wir zur schon angekündigten Interpretation des
Koszul-Komplexes als Chevalley"=Eilenberg-Komplex.
\begin{bemerkung}
   \label{bem:KoszulAlsChevalleyEilenbergKlassisch}
   Versieht man $\lieAlgebra$ mit der trivialen Lie"=Klammer als
   Lie"=Algebrastruktur, so definiert $\rho_J \colon \lieAlgebra \to
   \operatorname{\mathrm{End}}(C^\infty(M))$, $\rho_J(\xi)(f) :=
   f\cdot J$ offensichtlich eine Lie"=Algebra-Darstellung. Dann ist klar,
   dass der Chevalley"=Eilenberg"=Komplex zu $\rho_J$ mit dem
   Koszul"=Komplex übereinstimmt.
\end{bemerkung}
\section{Quanten-Koszul-Reduktion}
\label{sec:QuantenKoszul}

In diesem Kapitel wollen wir die Grundlagen der
Quanten"=Koszul"=Reduktion darstellen. Wie bereits gezeigt, gilt in der
klassischen Situation $C^\infty(\Mred) \simeq \kbIdeal/\kIdeal$ als
Poisson"=Algebren. Die Poisson"=Klammer auf $\kbIdeal/\kIdeal$ wird
dabei auf naheliegende Art und Weise von der gegebenen auf $C^\infty(M)$
induziert. Die grobe Idee, ein Sternprodukt auf
$C^\infty(\Mred)[[\lambda]]$ zu konstruieren, besteht nun darin für eine
Basis $\{e_\alpha\}$ von $\lieAlgebra$, das $\star$"=Linksideal $\qIdeal
:= \langle \qJ \rangle_\star := \{f^\alpha \star \dPaar{\qJ}{e_\alpha}
\mid f^\alpha \in C^\infty(M)[[\lambda]]\}$ und den
$\star$"=Lie"=Idealisator $\qbIdeal := \{f \in C^\infty(M)[[\lambda]]
\mid [f,f']_\star \in \qIdeal \quad \forall f' \in \qIdeal\}$ von
$\qIdeal$ in $C^\infty(M)[[\lambda]]$ zu betrachten, via $\star$ auf
$\qbIdeal/\qIdeal$ ein assoziatives Produkt zu induzieren,

einen geeigneten Isomorphismus von $\qbIdeal/\qIdeal$ nach
$C^\infty(\Mred)[[\lambda]]$ zu finden und so vermöge diesem ein
Sternprodukt für $(\Mred,\omega_{\mathrm{red}})$ zu induzieren. Man
beachte, dass die Definition von $\qIdeal$ offensichtlich unabhängig von
der Wahl der Basis $\{e_\alpha\}$ ist. Durch Verwendung geometrischer
Konstruktionen im Hintergrund dieser algebraischen Betrachtungen nährt
sich dabei die Hoffnung, dass dieses auch differentiell wird. Wir zeigen
in diesem Kapitel, dass sich diese Heuristik in der Tat durchführen
lässt.

\subsection{Quanten-Koszul-Komplex}
\label{sec:QuantenKoszulKomplex}

\begingroup
\emergencystretch=0.8em
In Analogie zum Vorgehen in der klassischen Situation suchen wir eine
Abbildung $\qkoszul \colon C^\infty(M)[[\lambda]] \otimes
\Bigwedge^{\bullet} \lieAlgebra \to C^\infty(M)[[\lambda]] \otimes
\Bigwedge^{\bullet "=1} \lieAlgebra $, so dass $\qIdeal =
\qkoszul(C^\infty(M)[[\lambda]] \otimes \lieAlgebra)$ gilt. Für $f \otimes
\xi \in C^\infty(M)[[\lambda]] \otimes \lieAlgebra$ definieren wir
$\qkoszul$ auf die naheliegende Weise durch
\begin{align}
   \qkoszul(f \otimes \xi) := f\star \qJ(e_\alpha) \xi^\alpha =
   f \star \qJ({e_\alpha}) \Ins{e^\alpha}\xi \Fdot
\end{align}
Dabei ist hier und im Folgenden $\{e_\alpha\}$ wieder
eine Basis von $\lieAlgebra$ mit dualer Basis $\{e^\alpha\}$ und ein $\xi \in
\lieAlgebra$ schreiben  wir als $\xi = \xi^\alpha
e_\alpha$. Die gegebene Definition ist unabhängig von der
Wahl dieser Basis. Als nächstes möchten wir die Definition von
$\qkoszul$ auf ganz $C^\infty(M)[[\lambda]] \otimes \Bigwedge^\bullet \lieAlgebra$
ausdehnen und zwar  so, dass $\qkoszul^2 = 0$ gilt. Kennt man
den Chevalley"=Eilenberg"=Komplex aus Definition
\ref{def:Chevalley-Eilenberg-Komplex}, so bereitet dies keine
Schwierigkeiten. Man betrachte nämlich die Lie"=Algebra
$\lieAlgebra_{\mathbb{C}}[[\lambda]]$ über dem kommutativen Ring
$\mathbb{C}[[\lambda]]$ mit der Lie"=Algebra"=Struktur $-\I \lambda
[\cdot,\cdot]$. Dann definiert die Abbildung
\begin{align}
   \label{eq:QuantenDarstellungMitJ}
   \rho_{\qJ} \colon \lieAlgebra_{\mathbb{C}}[[\lambda]] \to
   \operatorname{\mathrm{End}}_{\mathbb{C}[[\lambda]]}(C^\infty(M)[[\lambda]]),
   \quad \rho_{\qJ}(\xi)(f) := f \star \qJ(\xi) \quad \forall \xi \in
   \lieAlgebra, f \in C^\infty(M)[[\lambda]]
\end{align}
eine Lie"=Algebra"=Darstellung, da $\qJ$ eine Quantenimpulsabbildung
ist.

Ausgeschrieben heißt dies  für $\xi,\eta \in \lieAlgebra$ und
$f \in C^\infty(M)[[\lambda]]$
\begin{align}
   \label{eq:QuantenDarstellungMitJ1}
   \rho_{\qJ}(\xi)(\rho_{\qJ}(\eta)(f)) -
   \rho_{\qJ}(\xi)(\rho_{\qJ}(\eta)(f))  &= f \star \qJ(\eta) \star
   \qJ(\xi) - f \star \qJ(\xi) \star \qJ(\eta)\notag\\
   &= f \star (-\I \lambda
   \qJ([\xi,\eta])) = \rho_{\qJ}(-\I \lambda [\xi,\eta])(f) \Fdot
\end{align}
\endgroup
Man beachte, dass hier die Assoziativität von $\star$ verwendet wurde.
Betrachtet man die Definition des Chevalley"=Eilenberg Differentials
$\chevalley$ zur Darstellung $\rho_{\qJ}$, so induziert die kanonische
Einbettung, $C^\infty(M)[[\lambda]] \otimes \Bigwedge^k
\lieAlgebra \hookrightarrow C^\infty(M)[[\lambda]] \otimes
\Bigwedge^k_{\mathbb{C}} \lieAlgebra_{\mathbb{C}}[[\lambda]]$, die, da $\lieAlgebra$
endlichdimensional ist, offensichtlich sogar ein Isomorphismus ist,
einen zum Chevalley"=Eilenberg"=Komplex isomorphen Komplex
\def\tA[#1]{A_{#1}}
\begin{equation}
      \begin{tikzpicture}[baseline=(current
    bounding box.center),description/.style={fill=white,inner sep=2pt}]
         \matrix (m) [matrix of math nodes, row sep=3.0em, column
         sep=3.5em, text height=1.5ex, text depth=0.25ex]
         {
C^\infty(M)[[\lambda]] & C^\infty(M)[[\lambda]] \otimes \Bigwedge^1
\lieAlgebra  & C^\infty(M)[[\lambda]]
\otimes \Bigwedge^2 \lieAlgebra & \dots \\
}; %

\path[<-] (m-1-1) edge node[auto]{$\qkoszul$}(m-1-2); %
\path[<-] (m-1-2) edge node[auto]{$\qkoszul$}(m-1-3); %
\path[<-] (m-1-3) edge node[auto]{$\qkoszul$}(m-1-4); %
 \end{tikzpicture},
\end{equation}
den wir \neuerBegriff{Quanten"=Koszul-Komplex} mit
\neuerBegriff{Quanten"=Koszul"=Operator} $\qkoszul$ nennen
wollen. Wie man leicht sieht, stimmt $\qkoszul$ in niedrigstem Grad mit dem
oben schon definierten $\qkoszul$ überein. Es sei bemerkt, dass die
konkrete Gestalt von $\qkoszul$ im Folgenden nur im niedrigsten Grad
relevant sein wird, ansonsten benötigen wir nur, dass $\qkoszul^2 = 0$
gilt. Der Vollständigkeit halber und zum besseren Vergleich mit der
Literatur (vgl.\ \cite[Def. 3.3]{gutt2010involutions},
\cite[Def. 16]{bordemann.herbig.waldmann:2000a})
wollen wir an dieser Stelle die explizite Formel des oben
konstruierten Quanten"=Koszul-Operators niederschreiben.
\begin{proposition}
   \label{prop:KonkreteGestaltDesQuantenKoszuloperators}
   Für den Quanten"=Koszul"=Operator $\qkoszul$ gilt für $f \in
   C^\infty(M)[[\lambda]]$ und $\xi \in \lieAlgebra$ die folgende Formel.
   \begin{align}
      \label{eq:KonkreteGestaltDesQuantenKoszuloperators}
      \qkoszul(f \otimes \xi) = f \star \qJ(e_\alpha) \otimes \Ins{e^\alpha} \xi +
      \I \lambda \frac{1}{2} c_{\alpha \beta}^\gamma  f \otimes e_\gamma \wedge
      \Ins{e^\alpha} \Ins{e^\beta}\xi \Fdot
  \end{align}
   Dabei sind $\{c_{\alpha \beta}^\gamma\}$ die Strukturkonstanten der
   Lie"=Algebra $\lieAlgebra$ bezüglich der Basis $\{e_\alpha\}$, die für
   $\alpha,\beta,\gamma \in \{1,\dots,\dim G\}$ durch $c_{\alpha
     \beta}^\gamma := \dPaar{e^\gamma}{[e_\alpha,e_\beta]}$ gegeben sind.
\end{proposition}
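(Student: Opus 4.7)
Der Beweis besteht im Wesentlichen darin, die lokale Formel für den Chevalley"=Eilenberg"=Operator aus Proposition~\ref{prop:ChevalleyEilenbergDifferential} auf genau denjenigen Chevalley"=Eilenberg"=Komplex anzuwenden, der laut vorangehender Konstruktion den Quanten"=Koszul"=Komplex liefert. Ich würde also zunächst die relevanten Eingangsdaten dieses speziellen Komplexes präzise benennen: Der zugrundeliegende Ring ist $\mathbb{C}[[\lambda]]$, die Lie"=Algebra ist $\lieAlgebra_{\mathbb{C}}[[\lambda]]$, versehen mit der skalierten Klammer $[\cdot,\cdot]_{\mathrm{q}}:=-\I\lambda[\cdot,\cdot]$, das Modul ist $C^\infty(M)[[\lambda]]$, und die Darstellung ist $\rho_{\qJ}$ aus \eqref{eq:QuantenDarstellungMitJ}, die nach der bereits geführten Rechnung \eqref{eq:QuantenDarstellungMitJ1} tatsächlich eine Lie"=Algebra"=Darstellung bezüglich $[\cdot,\cdot]_{\mathrm{q}}$ ist.

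Als nächstes würde ich die Strukturkonstanten $\widetilde c_{\alpha\beta}^\gamma$ von $(\lieAlgebra_{\mathbb{C}}[[\lambda]],[\cdot,\cdot]_{\mathrm{q}})$ bezüglich der Basis $\{e_\alpha\}$ bestimmen. Aus
\begin{equation*}
   [e_\alpha,e_\beta]_{\mathrm{q}} = -\I\lambda\,[e_\alpha,e_\beta] = -\I\lambda\,c_{\alpha\beta}^\gamma\,e_\gamma
\end{equation*}
folgt sofort $\widetilde c_{\alpha\beta}^\gamma = -\I\lambda\,c_{\alpha\beta}^\gamma$. Die dualen Basen $\{e_\alpha\}$ und $\{e^\alpha\}$ bleiben dabei unverändert, da die Skalenänderung die Vektorraumstruktur nicht berührt.

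Danach würde ich direkt Formel \eqref{eq:LokaleDarstellungDesChevalleyEilenbergDifferentials} aus Proposition~\ref{prop:ChevalleyEilenbergDifferential} mit den oben identifizierten Daten $\rho=\rho_{\qJ}$, $v=f$ und Strukturkonstanten $\widetilde c_{\alpha\beta}^\gamma$ anwenden. Einsetzen liefert für $f\otimes\xi \in C^\infty(M)[[\lambda]]\otimes\Bigwedge^{\bullet}\lieAlgebra$
\begin{equation*}
   \qkoszul(f\otimes\xi) = \rho_{\qJ}(e_\alpha)(f)\otimes \Ins{e^\alpha}\xi - \tfrac{1}{2}\,\widetilde c_{\alpha\beta}^\gamma\,f\otimes e_\gamma\wedge\Ins{e^\alpha}\Ins{e^\beta}\xi,
\end{equation*}
und mit $\rho_{\qJ}(e_\alpha)(f)=f\star\qJ(e_\alpha)$ sowie $\widetilde c_{\alpha\beta}^\gamma = -\I\lambda\,c_{\alpha\beta}^\gamma$ ergibt sich unmittelbar die Behauptung.

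Eine echte Schwierigkeit sehe ich hier nicht; der Beweis ist reines Buchhalten. Die einzige Stelle, an der man aufpassen muss, ist die Vorzeichen- und $\I\lambda$"=Nachführung: dass das Minus aus \eqref{eq:LokaleDarstellungDesChevalleyEilenbergDifferentials} zusammen mit dem Minus aus $-\I\lambda$ zu dem Pluszeichen vor $\tfrac{\I\lambda}{2}c_{\alpha\beta}^\gamma$ in \eqref{eq:KonkreteGestaltDesQuantenKoszuloperators} führt. Begrifflich lohnt es sich, vorab zu betonen, dass die kanonische Einbettung $C^\infty(M)[[\lambda]]\otimes \Bigwedge^{\bullet}\lieAlgebra \hookrightarrow C^\infty(M)[[\lambda]]\otimes \Bigwedge^{\bullet}_{\mathbb{C}}\lieAlgebra_{\mathbb{C}}[[\lambda]]$ ein Isomorphismus ist (wegen $\dim\lieAlgebra<\infty$) und unter diesem Isomorphismus die beiden Differentiale übereinstimmen, so dass die Übertragung der Formel legitim ist.
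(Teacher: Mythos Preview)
Your proof is correct and follows exactly the approach implied by the paper: the proposition is stated without explicit proof immediately after the construction of $\qkoszul$ as the Chevalley--Eilenberg differential for $(\lieAlgebra_{\mathbb{C}}[[\lambda]],-\I\lambda[\cdot,\cdot])$ with representation $\rho_{\qJ}$, and is clearly meant to be read off from Proposition~\ref{prop:ChevalleyEilenbergDifferential} precisely as you do. Your careful tracking of the sign and the $\I\lambda$ factor, and your remark on the canonical isomorphism, make explicit what the paper leaves to the reader.
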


In \cite{gutt2010involutions} und \cite{bordemann.herbig.waldmann:2000a}
steht in der Definition von $\qkoszul$ noch ein zusätzlicher Summand der
Form $\I \lambda \kappa \Ins{\Delta}$, wobei $\Delta :=
\dPaar{e^\alpha}{[e_\alpha,e_\beta]}e^\beta \in \lieAlgebra^*$ die modulare
Einsform von $\lieAlgebra$ bezeichnet und $\kappa$ ein Element von
$\mathbb{C}[[\lambda]]$ ist. Da Sternproduktkommutatoren auf konstanten
Funktionen verschwinden und nach Definition von $\qkoszul$ ist klar,
dass man diesen Term durch eine Redefinition der gegebenen
Quantenimpulsabbildung absorbieren kann. Um zu sehen, dass dies auch an
möglichen Äquivarianzeigenschaften dieser nichts ändert, geben wir die
folgende Proposition an.

\begin{proposition}
   \label{prop:ModulareEinsformInvariant}
   Sei $H$ eine Lie"=Gruppe und $\rho \dpA H \times \lieAlgebra[g] \to
   \lieAlgebra[g]$ eine lineare $H$"=Wirkung auf $\lieAlgebra[g]$, welche
   die Lie"=Algebrastruktur respektiert, d.\,h.\ es gilt $\rho_h([\xi,\xi']) = [\rho_h
   (\xi), \rho_h (\xi')]$ für alle $h \in H$ und $\xi,\xi' \in
   \lieAlgebra[g]$. Dann ist die modulare Einsform $\Delta \in
   \lieAlgebra[g]^*$ von $H$"=invariant bezüglich der $\rho^*$"=Wirkung
   und $\Ins{\Delta}$ ist $H$"=äquivariant.
\end{proposition}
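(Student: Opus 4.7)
Der Plan ist, die modulare Einsform zuerst basisfrei als eine Spur zu identifizieren und dann die übliche Spurinvarianz unter Konjugation auszunutzen. Konkret bemerkt man, dass für $\xi \in \lieAlgebra[g]$
\begin{equation*}
   \Delta(\xi) = \dPaar{e^\alpha}{[e_\alpha,\xi]} = -\operatorname{tr}(\operatorname{ad}_\xi)
\end{equation*}
gilt, wobei die Spur des $\mathbb{R}$"=linearen Endomorphismus $\operatorname{ad}_\xi \colon \lieAlgebra[g] \to \lieAlgebra[g]$ offensichtlich basisunabhängig ist. Diese Darstellung macht $\Delta$ zu einem intrinsischen Objekt, unabhängig von der Wahl der Basis $\{e_\alpha\}$.

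Für die $H$"=Invarianz nutzt man aus, dass $\rho_h$ nach Voraussetzung ein Lie"=Algebra"=Automorphismus ist. Für alle $h \in H$, $\xi,\eta \in \lieAlgebra[g]$ folgt
\begin{equation*}
   \operatorname{ad}_{\rho_h\xi}(\eta) = [\rho_h\xi,\eta] = \rho_h[\xi,\rho_{h^{-1}}\eta] = (\rho_h \circ \operatorname{ad}_\xi \circ \rho_{h^{-1}})(\eta)\Fcom
\end{equation*}
also $\operatorname{ad}_{\rho_h\xi} = \rho_h \circ \operatorname{ad}_\xi \circ \rho_h^{-1}$. Die Zyklizität der Spur liefert $\operatorname{tr}(\operatorname{ad}_{\rho_h\xi}) = \operatorname{tr}(\operatorname{ad}_\xi)$ und folglich $\Delta(\rho_h\xi) = \Delta(\xi)$ für alle $h \in H$ und $\xi \in \lieAlgebra[g]$. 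Nach der in den Konventionen festgehaltenen Definition $\rho^*_h = (\rho_{h^{-1}})^*$ ist dies nach Ersetzen $h \rightsquigarrow h^{-1}$ gleichbedeutend mit $\rho^*_h \Delta = \Delta$, was die behauptete $H$"=Invarianz von $\Delta$ ist.

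Für die Äquivarianz von $\Ins{\Delta}$ zeigt man die etwas allgemeinere Aussage, dass für $\alpha \in \lieAlgebra[g]^*$ und $h \in H$ auf zerlegbaren Elementen $\xi_1 \wedge \dots \wedge \xi_k \in \Bigwedge^k \lieAlgebra[g]$
\begin{align*}
   \Ins{\alpha}(\rho_h(\xi_1 \wedge \dots \wedge \xi_k))
   &= \sum_{i=1}^{k}(-1)^{i-1}\, \alpha(\rho_h\xi_i)\, \rho_h\xi_1 \wedge \dots \wedge \widehat{\rho_h\xi_i} \wedge \dots \wedge \rho_h\xi_k \\
   &= \rho_h \Ins{\rho^*_{h^{-1}}\alpha}(\xi_1 \wedge \dots \wedge \xi_k)
\end{align*}
gilt, da $\alpha(\rho_h\xi_i) = (\rho^*_{h^{-1}}\alpha)(\xi_i)$. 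Setzt man $\alpha = \Delta$ und verwendet $\rho^*_{h^{-1}}\Delta = \Delta$ aus dem ersten Teil, erhält man $\Ins{\Delta}\circ\rho_h = \rho_h \circ \Ins{\Delta}$ zunächst auf zerlegbaren und durch Linearität auf allen Elementen von $\Bigwedge^\bullet \lieAlgebra[g]$.

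Ernsthafte Hindernisse sehe ich hier nicht; die einzige Aufmerksamkeit verlangt das saubere Buchführen bezüglich der Dualraum"=Konvention $\rho^*_h = (\rho_{h^{-1}})^*$ und der Vorzeichenregelung der antisymmetrischen Einsetzderivation.
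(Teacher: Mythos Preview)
Your argument is correct. For the invariance of $\Delta$ you take a genuinely different route than the paper: you identify $\Delta(\xi) = -\operatorname{tr}(\operatorname{ad}_\xi)$ and invoke the conjugation invariance of the trace, whereas the paper computes directly in a basis, observing that $\{h e_\alpha\}$, $\{h e^\alpha\}$ are again dual bases and that the structure constants are unchanged since $\rho_h$ is a Lie-algebra automorphism, so $h\Delta = \dPaar{h e^\alpha}{[h e_\alpha,h e_\beta]}\, h e^\beta = \Delta$. Your trace formulation is more conceptual and makes the basis independence manifest from the outset; the paper's version is shorter but relies on tracking the basis change by hand. For the equivariance of $\Ins{\Delta}$ the two proofs are essentially the same: both reduce to the identity $h(\Ins{\alpha}\xi) = \Ins{h\alpha}(h\xi)$ and then use $h\Delta = \Delta$. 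The paper carries this out on elements $f\otimes\xi \in C^\infty(M)[[\lambda]]\otimes\Bigwedge^\bullet\lieAlgebra$; your version on $\Bigwedge^\bullet\lieAlgebra$ alone covers the substantive part, and the extension to the tensor product is immediate since $\Ins{\Delta}$ acts only on the second factor.
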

\begin{proof}
   Wir schreiben $h \xi := \rho(h,\xi)$ für $h \in H$ und $\xi \in
   \lieAlgebra[g]$. Dann gilt für alle $h \in H$
   \begin{align*}
      h \Delta \stackrel{\rho_h \text{linear}}{=}
      \dPaar{e^\alpha}{[e_\alpha,e_\beta]} h e^\beta = \dPaar{h
        e^\alpha}{h [e_\alpha,e_\beta]} h e^\beta = \dPaar{h
        e^\alpha}{[h e_\alpha,h e_\beta]} h e^\beta = \Delta
   \end{align*}
   und
   \begin{align*}
      h\Ins{\Delta}(f \otimes \xi) = h f \otimes h \Ins{\Delta} \xi  &= (h
      f) \otimes (\Ins{h \Delta})(h \xi) \\ &=(h f) \otimes (\Ins{\Delta} (h
      \xi)) = \Ins{\Delta}(h(f\otimes \xi)) \Fdot
   \end{align*}
\end{proof}

\begin{proposition}

   \label{prop:QuantenKoszulInvariant}
   Sei $H$ eine Lie"=Gruppe, die auf $M$ wirke. Weiter sei eine lineare
$H$"=Wirkung auf $\lieAlgebra[g]$ gegeben, welche die
   Lie"=Algebrastruktur respektiere. $\Bigwedge^\bullet \lieAlgebra[g]$ sei mit
   der davon induzierten Wirkung versehen. Ist $\qJ$ $H$"=äquivariant, so
   ist auch $\qkoszul$ $H$"=äquivariant.
\end{proposition}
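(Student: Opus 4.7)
Die Strategie besteht in direkter Verifikation der $H$"=Äquivarianz durch Nachrechnen, in enger Analogie zum klassischen Fall (vgl.\ Proposition \ref{prop:partialInvariant}). Sei $h \in H$ und $f \otimes \xi \in C^\infty(M)[[\lambda]] \otimes \Bigwedge^\bullet \lieAlgebra$ beliebig. Zu zeigen ist die Gleichung $\qkoszul(h(f \otimes \xi)) = h\,\qkoszul(f \otimes \xi)$.

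Zunächst benutze ich die explizite Formel aus Proposition \ref{prop:KonkreteGestaltDesQuantenKoszuloperators} zusammen mit der Definition der Wirkung $h(f \otimes \xi) = hf \otimes h\xi$, um $\qkoszul(hf \otimes h\xi)$ termweise auszuschreiben. Da die Formel für $\qkoszul$ basisunabhängig ist, wie man wegen $\dPaar{h e_\alpha}{h e^\beta} = \dPaar{e_\alpha}{e^\beta} = \delta_\alpha^\beta$ sieht (also $\{h e_\alpha\}, \{h e^\alpha\}$ wieder duale Basen), kann ich den Ausdruck äquivalent in der Basis $\{h e_\alpha\}$ aufschreiben.

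Anschließend wende ich die folgenden Verträglichkeiten an: \emph{(i)} die vorausgesetzte $H$"=Äqui\-va\-rianz $\qJ(h e_\alpha) = h\,\qJ(e_\alpha)$; \emph{(ii)} die $H$"=Invarianz des Sternprodukts $\star$, also $(hf) \star h(\qJ(e_\alpha)) = h(f \star \qJ(e_\alpha))$; \emph{(iii)} die Identitäten $\Ins{h\eta}(h\zeta) = h(\Ins{\eta}\zeta)$ sowie $(h e_\gamma) \wedge (h\zeta) = h(e_\gamma \wedge \zeta)$, die unmittelbar aus der Linearität der $H$"=Wirkung auf $\Bigwedge^\bullet \lieAlgebra$ und der Invarianz der dualen Paarung folgen; \emph{(iv)} die vorausgesetzte Verträglichkeit der $H$"=Wirkung mit der Lie"=Klammer, welche
\begin{align*}
\dPaar{h e^\gamma}{[h e_\alpha, h e_\beta]} = \dPaar{h e^\gamma}{h[e_\alpha, e_\beta]} = \dPaar{e^\gamma}{[e_\alpha, e_\beta]} = c^\gamma_{\alpha\beta}
\end{align*}
liefert. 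In jedem der beiden Summanden aus Proposition \ref{prop:KonkreteGestaltDesQuantenKoszuloperators} lässt sich dann der Operator $h$ aus sämtlichen Tensorfaktoren vorziehen, und der Gesamtausdruck reduziert sich auf $h\,\qkoszul(f\otimes\xi)$.

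Ein echtes Hindernis sehe ich nicht; der Beweis ist die direkte Quanten"=Analogie der bereits ausgeführten klassischen Rechnung in Proposition \ref{prop:partialInvariant}. Einziger Punkt, auf den zu achten ist, ist die Verwendung der $H$"=Invarianz von $\star$ in \emph{(ii)}: diese wird implizit benötigt, ergibt sich im vorliegenden Kontext aber zwangsläufig, da die in Definition \ref{def:Quantenimpulsabbildung} formulierte Definition einer Quantenimpulsabbildung bereits die Invarianz des zugrundeliegenden Sternprodukts unter der in Rede stehenden Gruppenwirkung mit einschließt. Alles Weitere ist reine Buchhaltung der Gruppenwirkung auf den einzelnen Tensorfaktoren.
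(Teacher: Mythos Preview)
Dein Vorschlag ist korrekt und folgt im Wesentlichen exakt dem Beweis der Arbeit: beide rechnen direkt mit der expliziten Formel aus Proposition~\ref{prop:KonkreteGestaltDesQuantenKoszuloperators} und nutzen Basisunabhängigkeit, $H$"=Äquivarianz von $\qJ$, $H$"=Invarianz von $\star$, Verträglichkeit der Einsetzderivation und Invarianz der Strukturkonstanten. Ein kleiner Hinweis: deine Begründung, warum die $H$"=Invarianz von $\star$ aus Definition~\ref{def:Quantenimpulsabbildung} folge, trägt streng genommen nur für $H=G$ (die Gruppe, bezüglich derer $\qJ$ Quantenimpulsabbildung ist); die Arbeit benutzt die $H$"=Invarianz ebenfalls stillschweigend, und in allen Anwendungen (etwa Kapitel~\ref{cha:Koszul-Reduktion_in_Stufen}) ist sie tatsächlich gegeben.
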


\begin{proof}
   Sei $h \in H$, $k \in \mathbb{N}$, $\xi \in \Bigwedge^k
   \lieAlgebra[g]$ und $f \in \CM[M]$.  Weiter sei $\{e_\alpha\}$ eine Basis
   von $\lieAlgebra[g]$ und $\{e^\alpha\}$ die zugehörige duale Basis.  Dann
   gilt
   \begin{align*}
      h (\qkoszul (f \otimes \xi)) &= h ( \qJ(e_{\alpha}) \star f
      \otimes \Ins{e^{\alpha}} \xi) + h (c_{\alpha \beta}^\gamma f
      \otimes
      e_\gamma \wedge \Ins{e^\alpha}\Ins{e^\beta} \xi)\\
      &= h (\qJ(e_{\alpha}) \star f) \otimes h \Ins{e^{\alpha}}\xi +
      c_{\alpha \beta}^\gamma hf \otimes
      he_\gamma \wedge h(\Ins{e^\alpha}\Ins{e^\beta} \xi) \\
      &= h \qJ(e_{\alpha}) \star h f \otimes \Ins{h e^{\alpha}} (h \xi)
      + c_{\alpha \beta}^\gamma hf \otimes he_\gamma \wedge
      \Ins{he^\alpha}\Ins{he^\beta} h\xi
      \\
      &= \qJ(h e_{\alpha}) \star h f \otimes \Ins{h e^\alpha} (h \xi) +
      c_{\alpha \beta}^\gamma hf \otimes e_\gamma \wedge
      \Ins{e^\alpha}\Ins{e^\beta} h\xi  \\
      &= \qkoszul(h f \otimes h \xi) = \qkoszul (h (f \otimes \xi))
      \Fdot
   \end{align*}
   Im zweitletzten Schritt wurde dabei ausgenutzt, dass $\{h e_\alpha\}$ und
   $\{h e^\alpha\}$ zueinander duale Basen von $\lieAlgebra[g]$ und
   $\lieAlgebra[g]^*$ sind und dass die Strukturkonstanten bezüglich
   dieser die gleichen sind wie bezüglich $\{e_\alpha\}$, denn
   \begin{align*}
      \dPaar{h e^\gamma}{[he_\alpha,he_\beta]} = \dPaar{h
        e^\gamma}{h[e_\alpha,e_\beta]} =
      \dPaar{e^\gamma}{[e_\alpha,e_\beta]}\Fdot
   \end{align*}
\end{proof}

\subsection{Quanteneinschränkung}
\label{sec:Quanteneinschraenkung}

\begingroup
\emergencystretch=0.8em Als nächstes wollen wir eine geeignete
Augmentationsabbildung $\qRes \colon C^\infty(M)[[\lambda]] \to
C^\infty(C)[[\lambda]]$ finden, so dass $\qRes$ eine Deformation von
$\kRes$ ist und der augmentierte Koszul"=Komplex an der Stelle
$C^\infty(M)[[\lambda]] \otimes
\Bigwedge^0 \lieAlgebra$ exakt ist.
\endgroup

Als Ausgangspunkt wählen wir Gleichung \eqref{eq:globalisiertHomotopie2}
und erhalten für $f \in C^\infty(M)[[\lambda]]$

\begin{align}
   \label{eq:IdeeQuantenEinschr}
   \qkoszul[1] \h[0] f + \prol \kRes f = (\qkoszul[1] - \kkoszul[1])\h[0] f + \kkoszul[1] \h[0]
   f + \prol \kRes f = (\id + (\qkoszul[1] - \kkoszul[1])\h[0])f \Fdot
\end{align}
Der Operator $\qkoszul[1] - \kkoszul[1]$ ist mindestens von der Ordnung
$\lambda$, womit $\id + (\qkoszul[1] - \kkoszul[1])\h[0]$ invertierbar ist.

Somit ergibt sich sofort die Gleichung
\begin{align}
   \label{eq:IdeeQuantenEinschr2} \qkoszul[1] \h[0] \frac{\id}{\id +
(\qkoszul[1] - \kkoszul[1])\h[0]} + \prol \kRes \frac{\id}{\id +
(\qkoszul[1] - \kkoszul[1])\h[0]} = \id \Fdot
\end{align}
Dies motiviert die Definition der Augmentierungsabbildung und einer
Deformation von $\h[0]$ wir folgt.
\begin{align}
   \label{eq:QuantenEinschrDef}
   \qRes := \kRes \frac{\id}{\id + (\qkoszul[1] - \kkoszul[1])\h[0]}
\end{align}
und
\begin{align}
   \label{eq:defHomotopie}
   \qh[0] := \h[0] \frac{\id}{\id + (\qkoszul[1] - \kkoszul[1])\h[0]} \Fdot
\end{align}
Mit diesen Definitionen erhalten wir die folgende Proposition.
\begin{proposition}
   \label{prop:QuantenHomotopieUntereOrdnung}
   Es gilt
   \begin{align}
      \id = \qkoszul[1] \qh[0] + \prol \qRes\label{eq:QuantenHomotopieUntereOrdnung}
   \end{align}
   und
   \begin{align}
      \label{eq:QuantenHomotopieUntereOrdnung2}
      \qh[0] \prol = 0 \Fdot
   \end{align}
   Sei $H$ eine Lie"=Gruppe, die auf $M$ wirke und sei eine lineare
   $H$"=Wirkung auf $\lieAlgebra[g]$ gegeben, so dass $\h$, $\kkoszul$
   und $\qkoszul$ $H$"=äquivariant sind, dann sind auch $\qh[0]$ und
   $\qRes$ $H$"=äquivariant.
\end{proposition}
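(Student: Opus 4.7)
Der Beweis läuft im Wesentlichen entlang der Motivation aus den Gleichungen~\eqref{eq:IdeeQuantenEinschr} und \eqref{eq:IdeeQuantenEinschr2} auf eine formale Umformung hinaus. Als erstes würde ich den Operator $A := (\qkoszul[1] - \kkoszul[1])\h[0]$ einführen. Da $\qJ$ eine Deformation von $J$ in $\lambda$ ist, gilt $\qkoszul[1] = \kkoszul[1] + \mathcal{O}(\lambda)$, d.\,h.\ $A$ ist mindestens von Ordnung $\lambda$. Damit ist $\id + A$ im $\lambda$"=adischen Sinn invertierbar mit $(\id+A)^{-1} = \sum_{k\geq 0}(-A)^k$ als wohldefinierte formale Potenzreihe, und nach Definition lässt sich $\qRes = \kRes(\id+A)^{-1}$ sowie $\qh[0] = \h[0](\id+A)^{-1}$ schreiben.

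Für Gleichung~\eqref{eq:QuantenHomotopieUntereOrdnung} rechne ich dann
\begin{align*}
  \qkoszul[1]\qh[0] + \prol\qRes
  &= \bigl(\qkoszul[1]\h[0] + \prol\kRes\bigr)(\id+A)^{-1} \\
  &= \bigl(A + \kkoszul[1]\h[0] + \prol\kRes\bigr)(\id+A)^{-1} \\
  &= (\id + A)(\id+A)^{-1} = \id,
\end{align*}
wobei im vorletzten Schritt die klassische Homotopie"=Identität $\kkoszul[1]\h[0] + \prol\kRes = \id$ aus Gleichung~\eqref{eq:globalisiertHomotopie2} des Satzes~\ref{satz:globalisierteHomotopie} eingesetzt wird.

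Für \eqref{eq:QuantenHomotopieUntereOrdnung2} expandiere ich die geometrische Reihe und erhalte $\qh[0]\prol = \sum_{k\geq 0}\h[0](-A)^k \prol$. Der Summand zu $k=0$ verschwindet nach Gleichung~\eqref{eq:globalisierteHomotopie3}. Für $k\geq 1$ schreibe ich $(-A)^k \prol = (-A)^{k-1}\bigl(-(\qkoszul[1]-\kkoszul[1])\bigr)\h[0]\prol$; da jeder dieser Terme den Faktor $\h[0]\prol = 0$ enthält, verschwindet die gesamte Summe.

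Die Äquivarianzbehauptung folgt schließlich unmittelbar: Nach Voraussetzung sind $\h[0]$, $\kkoszul[1]$ und $\qkoszul[1]$ $H$"=äquivariant, und $\kRes$ ist wegen der $H$"=Stabilität von $C$ automatisch äquivariant. Damit ist $A$ ein $H$"=äquivarianter Operator; die formale Potenzreihe $(\id+A)^{-1}$ ist als $\lambda$"=adischer Grenzwert $H$"=äquivarianter Operatoren wieder $H$"=äquivariant, und $\qRes$ sowie $\qh[0]$ sind als Kompositionen $H$"=äquivarianter Abbildungen ebenfalls $H$"=äquivariant. Eine echte Hürde sehe ich beim Beweis nicht; die einzige Feinheit liegt in der sauberen Buchführung der $\lambda$"=adischen Konvergenz der geometrischen Reihe.
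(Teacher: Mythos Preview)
Your proof is correct and matches the paper's approach exactly. The paper's own proof reads simply ``Klar.'', since the definitions of $\qRes$ and $\qh[0]$ were explicitly engineered in \eqref{eq:IdeeQuantenEinschr} and \eqref{eq:IdeeQuantenEinschr2} to make \eqref{eq:QuantenHomotopieUntereOrdnung} hold, and \eqref{eq:QuantenHomotopieUntereOrdnung2} follows from $\h[0]\prol = 0$ after expanding the geometric series precisely as you do; your write-up just makes these implicit steps explicit.
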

\begin{proof}
Klar.
\end{proof}

\begin{bemerkung}
   \label{bem:Quanteneinschraenkung}
   Aus Gleichung \eqref{eq:QuantenHomotopieUntereOrdnung} ergibt sich
   insbesondere die wichtige Charakterisierung
   \begin{align*}
      \qIdeal = \ker \qRes \Fdot
   \end{align*}
\end{bemerkung}

Aus naheliegenden Gründen wollen wir $\qRes$ im Folgenden auch als
\neuerBegriff{Quanteneinschränkung} bezeichnen.

\begin{proposition}
   \label{prop:QuantenAugmentierung}
   Die Abbildung $\qRes = \sum_{r=0}^\infty \lambda^r \qRes_r\colon
   C^\infty(M)[[\lambda]] \to C^\infty(C)[[\lambda]]$ erfüllt die
   folgenden drei Bedingungen.
   \begin{align}
      \qRes_0 &= \kRes \Fdot    \label{eq:QuantenAugmentierung1}\\
      \qRes \qkoszul[1] &= 0 \Fdot \label{eq:QuantenAugmentierung2}\\
      \qRes \prol &= \id       \label{eq:QuantenAugmentierung3} \Fdot
   \end{align}
   Dabei impliziert die dritte unmittelbar, dass $\prol \qRes$ eine
   Projektion ist mit $\im{(\prol \qRes)} =
   \prol(C^\infty(M)[[\lambda]])$. Weiter gilt $\ker{(\prol \qRes)} =
   \qIdeal$, womit auch die direkte Summenzerlegung
   \begin{align}
      \label{eq:DirekteSummenzerlegung}
      C^\infty(M)[[\lambda]] = \qIdeal \oplus \prol(C^\infty(M)[[\lambda]])
   \end{align}
   gilt.
\end{proposition}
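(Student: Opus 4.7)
The plan is to handle (1) and (3) by immediate manipulation of the defining formula for $\qRes$, then to derive (2) from a single commutation identity between $A$ and $\kkoszul[1]$, and finally to deduce the projection and direct sum decomposition formally.

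For (1), I would note that $\qkoszul[1] = \kkoszul[1] + O(\lambda)$, since $\qJ = J + O(\lambda)$ and $\star$ reduces to the pointwise product in order zero. Hence $A := \id + (\qkoszul[1] - \kkoszul[1])\h[0]$ equals $\id + O(\lambda)$ and is $\lambda$-adically invertible with $A^{-1} = \id + O(\lambda)$. Taking the lowest order of $\qRes = \kRes A^{-1}$ then gives $\qRes_0 = \kRes$. For (3), equation \eqref{eq:globalisierteHomotopie3} supplies $\h[0]\prol = 0$; therefore $A\prol = \prol$, and hence $A^{-1}\prol = \prol$. Composing with $\kRes$ and using $\kRes\prol = \id$ from Proposition~\ref{prop:EigenschaftenGeometrischeProlongation} yields $\qRes\prol = \id$.

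The heart of the argument is (2). Let $B := \id + (\qkoszul[2] - \kkoszul[2])\h[1]$ be the degree-one analogue of $A$. I claim that $A\,\kkoszul[1] = \qkoszul[1]\,B$. Indeed, using the classical homotopy $\h[0]\kkoszul[1] + \kkoszul[2]\h[1] = \id$ from \eqref{eq:globalisiertHomotopie1} together with $\kkoszul[1]\kkoszul[2] = 0$, one computes
\[
A\,\kkoszul[1] = \kkoszul[1] + (\qkoszul[1] - \kkoszul[1])(\id - \kkoszul[2]\h[1]) = \qkoszul[1] - \qkoszul[1]\,\kkoszul[2]\,\h[1],
\]
while $\qkoszul^2 = 0$ (Proposition~\ref{prop:ChevalleyQuatratGleichNull}) gives $\qkoszul[1]\,\qkoszul[2] = 0$, whence
\[
\qkoszul[1]\,B = \qkoszul[1] + \qkoszul[1]\,(\qkoszul[2] - \kkoszul[2])\,\h[1] = \qkoszul[1] - \qkoszul[1]\,\kkoszul[2]\,\h[1],
\]
and the two expressions agree. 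Since $B = \id + O(\lambda)$ is invertible for the same reason as $A$, the identity rearranges to $A^{-1}\qkoszul[1] = \kkoszul[1]\,B^{-1}$. Applying $\kRes$ on the left and using $\kRes\,\kkoszul[1] = 0$ then gives $\qRes\,\qkoszul[1] = \kRes\,\kkoszul[1]\,B^{-1} = 0$.

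The direct sum decomposition is now formal. By (3), $P := \prol\,\qRes$ is idempotent, since $P^2 = \prol(\qRes\prol)\qRes = P$. The identity $P\,\prol = \prol$ yields $\im P = \prol(C^\infty(M)[[\lambda]])$, while $\qRes\prol = \id$ forces $\prol$ to be injective, so $\ker P = \ker \qRes$. Moreover $\ker\qRes = \qIdeal$: the inclusion $\qIdeal \subseteq \ker\qRes$ is exactly (2), and the reverse inclusion follows from Proposition~\ref{prop:QuantenHomotopieUntereOrdnung}, as any $f \in \ker\qRes$ satisfies $f = \qkoszul[1]\qh[0]f \in \im\qkoszul[1] = \qIdeal$. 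The only nontrivial step of the whole proof is the commutation $A\,\kkoszul[1] = \qkoszul[1]\,B$, which however reduces to the short calculation above using both the classical homotopy and $\qkoszul^2 = 0$.
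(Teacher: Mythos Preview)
Your proof is correct and follows essentially the same approach as the paper: the core identity $A\,\kkoszul[1] = \qkoszul[1]\,B$ is exactly the paper's computation $\qkoszul[1]\h[0]\kkoszul[1] = \qkoszul[1](\id + (\qkoszul[2]-\kkoszul[2])\h[1])$ (the two are equal since $\kkoszul[1]\h[0]\kkoszul[1] = \kkoszul[1]$). Your organization for (2) is in fact slightly more streamlined, as you go directly via $A^{-1}\qkoszul[1] = \kkoszul[1]B^{-1}$ and $\kRes\kkoszul[1]=0$, whereas the paper first detours through establishing $\qRes\qkoszul[1]\h[0] = 0$ from the quantum homotopy identity before invoking the same commutation.
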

\begin{proof}
   Gleichung \eqref{eq:QuantenAugmentierung3} folgt sofort aus
   Gleichung~\eqref{eq:globalisierteHomotopie3}, denn per Definition von
   $\qRes$ gilt
\begin{align*}
   \qRes \prol = \kRes \frac{\id}{\id + (\qkoszul[1]
     - \kkoszul[1])\h[0]}\prol = \kRes \prol = \id \Fdot
\end{align*}
Um Gleichung \eqref{eq:QuantenAugmentierung2} zu zeigen, beachten wir
zunächst, dass nach Gleichung \eqref{eq:QuantenHomotopieUntereOrdnung}
\begin{align*}
   \qRes = \qRes(\qkoszul[1] \qh[0] + \prol \qRes) = \qRes \qkoszul[1] \qh[0]
   + \qRes
\end{align*}
gilt, woraus
\begin{align*}
   \qRes \qkoszul[1] \qh[0] = 0
\end{align*}
folgt und damit direkt auch
\begin{align}
   \label{eq:qResqKoszulhNull}
   \qRes \qkoszul[1] \h[0] = 0 \Fdot \tag{$*$}
\end{align}
Wegen $\qkoszul[1] \qkoszul[2] = 0$ gilt andererseits
\begin{align*}
   \qkoszul[1] \h[0] \kkoszul[1] &= \qkoszul[1](\id - \kkoszul[2] \h[1])
   \\ &= \qkoszul[1](\id + \qkoszul[2] \h[1] - \kkoszul[2]\h[1]) =
   \qkoszul[1](\id + (\qkoszul[2] - \kkoszul[2])\h[1]) \Fdot
\end{align*}
\begingroup
\emergencystretch=0.8em
Dabei haben wir im ersten Schritt die Homotopieeigenschaft von $\h$
ausgenutzt. So erhalten wir mit \eqref{eq:qResqKoszulhNull} die Gleichung
\begin{align*}
   0 = \qRes \qkoszul[1] \h[0] \kkoszul[1] = \qRes \qkoszul[1](\id +
   (\qkoszul[2] - \kkoszul[2])\h[1]) \Fcom
\end{align*}
woraus sich, unter Verwendung der Invertierbarkeit von $\id + (\qkoszul[2]
- \kkoszul[2])\h[1]$, Gleichung~\eqref{eq:QuantenAugmentierung2} ergibt.

Aus Gleichung \eqref{eq:QuantenAugmentierung3} folgt
$(\prol \qRes)^2 = \prol \qRes$ und $\im(\prol \qRes) =
\prol(C^\infty(M)[[\lambda]])$, d.\,h.\ $\prol \qRes$ ist eine Projektion
auf $\prol(C^\infty(M)[[\lambda]])$. Wir berechnen nun den Kern von
$\prol \qRes$. Dazu stellen wir zuerst fest, dass $\im \qkoszul[1] =
\im(\qkoszul[1] \qh[0])$ gilt. Die Inklusion $\im(\qkoszul[1]) \supset
\im(\qkoszul[1] \qh[0])$ ist trivial. Die andere ergibt sich unmittelbar aus der
Beziehung
\begin{align*}
   \qkoszul[1] \qh[0] \qkoszul[1] = (\id - \prol \qRes)\qkoszul[1] =
   \qkoszul[1]\Fcom
\end{align*}
wobei hier beim ersten Schritt Gleichung
\eqref{eq:QuantenHomotopieUntereOrdnung} verwendet wurde und beim
zweiten \eqref{eq:QuantenAugmentierung2}. Damit und mit Gleichung
\eqref{eq:QuantenHomotopieUntereOrdnung} folgt dann $\ker (\prol \qRes)
= \im(\qkoszul[1] \qh[0]) = \im(\qkoszul[1]) = \qIdeal$. Da wie eben
gezeigt $\prol \qRes$ eine Projektion ist, folgt damit auch sofort die
Zerlegung $C^\infty(M)[[\lambda]] = \ker(\prol \qRes) \oplus \im(\prol
\qRes) = \qIdeal \oplus \prol(C^\infty(M)[[\lambda]])$.
\endgroup
\end{proof}
\begin{proposition}[Eindeutigkeit der Quanteneinschränkung]
   \label{prop:EindeutigkeitDerQuanteneinschr}
   Sei eine $\mathbb{C}[[\lambda]]$"=lineare
   Abbildung $\boldsymbol{\mathrm{res}} = \sum_{r=0}^\infty \lambda^r
   \boldsymbol{\mathrm{res}}_r \colon C^\infty(M)[[\lambda]] \to
   C^\infty(C)[[\lambda]]$ gegeben, welche die folgenden Eigenschaften
   erfüllt\Fdot
   \begin{align}
      \boldsymbol{\mathrm{res}}_0 &= \kRes \label{eq:EindeutigkeitDerQuanteneinschr1}\Fdot\\
      \boldsymbol{\mathrm{res}} \circ \qkoszul[1] &= 0 \label{eq:EindeutigkeitDerQuanteneinschr2} \Fdot\\
      \boldsymbol{\mathrm{res}} \circ \prol &= \id  \label{eq:EindeutigkeitDerQuanteneinschr3}\Fdot
   \end{align}
   Dann gilt schon $\boldsymbol{\mathrm{res}} = \qRes$.
\end{proposition}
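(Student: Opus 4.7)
The plan is to exploit the identity from Proposition \ref{prop:QuantenHomotopieUntereOrdnung}, namely
\begin{equation*}
   \id = \qkoszul[1]\, \qh[0] + \prol\, \qRes,
\end{equation*}
which already encodes the direct sum decomposition $C^\infty(M)[[\lambda]] = \qIdeal \oplus \prol(C^\infty(M)[[\lambda]])$ of Proposition \ref{prop:QuantenAugmentierung}. The entire proof then reduces to applying the hypothesized map $\boldsymbol{\mathrm{res}}$ to this identity.

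Concretely, I would proceed as follows. Let $f \in C^\infty(M)[[\lambda]]$ be arbitrary. By the identity above,
\begin{equation*}
   f = \qkoszul[1](\qh[0] f) + \prol(\qRes f).
\end{equation*}
Applying $\boldsymbol{\mathrm{res}}$ and using its $\mathbb{C}[[\lambda]]$-linearity yields
\begin{equation*}
   \boldsymbol{\mathrm{res}}(f) = \boldsymbol{\mathrm{res}}(\qkoszul[1](\qh[0] f)) + \boldsymbol{\mathrm{res}}(\prol(\qRes f)).
\end{equation*}
The first summand vanishes by hypothesis \eqref{eq:EindeutigkeitDerQuanteneinschr2}, and the second equals $\qRes f$ by hypothesis \eqref{eq:EindeutigkeitDerQuanteneinschr3}. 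Hence $\boldsymbol{\mathrm{res}}(f) = \qRes(f)$, and since $f$ was arbitrary, $\boldsymbol{\mathrm{res}} = \qRes$.

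This argument is essentially a one-line computation and does not even require property \eqref{eq:EindeutigkeitDerQuanteneinschr1}: the latter is merely a consistency condition ensuring that $\boldsymbol{\mathrm{res}}$ is a genuine deformation of the classical restriction, and it drops out automatically from the two other properties together with $\qRes_0 = \kRes$. Accordingly, there is no real obstacle here; all the substantive work has already been done in Proposition \ref{prop:QuantenAugmentierung}, where the splitting $\id = \qkoszul[1]\qh[0] + \prol \qRes$ was established. The present uniqueness statement is simply its immediate algebraic consequence.
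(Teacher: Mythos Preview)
Your argument is correct and, in fact, more direct than the paper's. The paper proceeds by showing that $\boldsymbol{p} := \prol \circ \boldsymbol{\mathrm{res}}$ is a projection with image $\prol(C^\infty(M)[[\lambda]])$ and kernel containing $\qIdeal$; it then invokes the direct sum decomposition $C^\infty(M)[[\lambda]] = \qIdeal \oplus \prol(C^\infty(M)[[\lambda]])$ from Proposition~\ref{prop:QuantenAugmentierung} to force $\ker \boldsymbol{p} = \qIdeal$, concludes that $\boldsymbol{p} = \prol\,\qRes$ since a projection is determined by its kernel and image, and finally applies $\qRes$ from the left to strip off $\prol$. Your approach sidesteps the projection bookkeeping entirely by applying $\boldsymbol{\mathrm{res}}$ directly to the homotopy identity $\id = \qkoszul[1]\qh[0] + \prol\,\qRes$, which is the source of that decomposition anyway. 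Both routes rely on the same ingredients and neither actually uses hypothesis~\eqref{eq:EindeutigkeitDerQuanteneinschr1}; your observation on that point is correct. The paper's version makes the role of the splitting more explicit, while yours is the shorter computation.
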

\begin{proof}
   Mit Gleichung \eqref{eq:EindeutigkeitDerQuanteneinschr3} ist klar,
   dass $\boldsymbol{p} := \prol \circ \boldsymbol{\mathrm{res}}$ eine
   Projektion ist, d.\,h.\ $\boldsymbol{p}^2 = \boldsymbol{p}$ gilt und
   $\im (\boldsymbol{p}) = \prol{(C^\infty(M))[[\lambda]]}$ erfüllt
   ist. Da per Definition von $\qIdeal$ die Gleichung $\qIdeal = \im
   \qkoszul[1]$ gilt, folgt mit
   \eqref{eq:EindeutigkeitDerQuanteneinschr2} dass $\qIdeal$ im Kern von
   $\boldsymbol{p}$ liegt. Da nun $\boldsymbol{p}$ eine Projektion ist,
   gilt einerseits $C^\infty(M)[[\lambda]] = \ker(\boldsymbol{p}) \oplus
   \im(\boldsymbol{p})$. Andererseits gilt nach Proposition
   \ref{prop:QuantenAugmentierung} die Zerlegung $C^\infty(M)[[\lambda]]
   = \qIdeal \oplus \prol(C^\infty(M)[[\lambda]])$ und wie oben gezeigt,
   $\im (\boldsymbol{p}) = \prol{(C^\infty(M))[[\lambda]]}$ sowie
   $\qIdeal \subset \ker(\boldsymbol{p})$. Somit ist klar, dass auch schon
   $\qIdeal = \ker(\boldsymbol{p})$ gilt.  Da eine Projektion durch
   ihren Kern und ihr Bild bekanntlich eindeutig festgelegt ist,
   erhalten wir zusammen mit Proposition \ref{prop:QuantenAugmentierung}
   die Beziehung $\prol \circ \boldsymbol{\mathrm{res}} = \boldsymbol{p}
   = \prol \qRes$. Durch Anwenden von $\qRes$ von links auf diese
   Gleichung folgt dann mit Gleichung \eqref{eq:QuantenAugmentierung3}
   $\boldsymbol{\mathrm{res}} = \qRes$.
\end{proof}
\begin{bemerkung}
   \label{bem:InduzierteQuantenDings}
   Aus der Definition von $\qRes$ ist ersichtlich, dass gewählte
   geometrische Homotopie"=Daten, zusammen mit $J$ und $\qJ$ immer eine
   Quanteneinschränkung $\qRes$ induzieren.  Trägt $M$ eine Wirkung einer
   Lie"=Gruppe $H$, unter der $C$ stabil ist, und $\lieAlgebra$ eine
   lineare $H$"=Wirkung, und sind $J$ und $\qJ$ $H$"=äquivariant so gibt
   es immer $H$"=invariante geometrische Homotopiedaten, siehe
   \ref{bem:induzierteDingens} und die davon induzierte
   Quanteneinschränkung ist $H$"=äquivariant. Nach Proposition
   \ref{prop:EindeutigkeitDerQuanteneinschr} können wir $\qRes$ auch von
   $J$, $\qJ$ und $\prol$ induziert ansehen, wobei $\prol$ seinerseits
   wieder von geometrischen Homotopiedaten induziert wird, wobei man zu
   dessen Konstruktion das Datum $\xi$ aus Definition
   \ref{def:GeometrischeHomotopieDaten} welches zur Globalisierung von
   der Homotopie benötigt wurde, nicht braucht.
\end{bemerkung}
\begin{proposition}[Lokalisierbarkeit von $\qRes$]
   \label{prop:LokalisierbarkeitDerQuanteneinschraenkung}
   \begin{propositionEnum}
   \item %
      \label{item:lokaleGestaltDerQuanteneinschraenkung}
      Sei $c \in C$, dann gibt es eine offene Umgebung $U' \subset U$
      von $c$, so dass $\Psi(U') \subset C \times \lieAlgebra^*$
      sternförmig bezüglich des Nullpunktes in jeder Faser ist und
      so dass für dieses und jedes kleinere $U'$ mit diesen Eigenschaften
      die folgende Aussage richtig ist. Für die lineare Abbildung
      $\qRes_{U'} \colon C^\infty(U')[[\lambda]]\to
      C^\infty(C)[[\lambda]]$,
      \begin{align}
         \label{eq:lokalisierteQuanteneinschraenkung}
         \qRes_{U'} := \kIn^*_{U'} \circ (\id + (\qkoszul_{U'}
         - \partial_{U'}) \h_{U',U})^{-1}
      \end{align}
      gilt für alle $f \in C^\infty(M)$ die Gleichung
      \begin{align}
         \label{eq:EinschraenkbarkeitDerQuanteneinschr}
         \qRes_{U'} f\at{U'} = \qRes f \Fdot
      \end{align}

      Dabei ist $\h_{U',U} \colon C^\infty(U') \otimes
      \Bigwedge^{\bullet}\lieAlgebra \to C^\infty(U') \otimes
      \Bigwedge^{\bullet + 1} \lieAlgebra$ für $k \in \mathbb{N}$,
      $f \otimes \xi \in C^\infty(U') \otimes \Bigwedge^k \lieAlgebra$
      und $p\in U'$ durch
      \begin{align}
         \label{eq:lokalisierteHomotopie}
         \h_{U',U}(f \otimes \xi)(p) = \int_0^1 t^k \partial^\alpha(f
         \circ \Psi\at{U'}^{-1})(r(p),t J(p)) \, dt \otimes e_\alpha \wedge \xi
      \end{align}
      gegeben und $\kIn_{U'}\colon C \cap U' \hookrightarrow U'$
      bezeichne die Inklusion.  $U'$ kann sogar als Umgebung von $C$
      gewählt werden.  Ist $H$ eine Lie"=Gruppe, die auf $M$ wirke,
      trage $\lieAlgebra$ eine lineare $H$"=Wirkung und seien alle
      vorkommenden Größen invariant bzw.\ äquivariant gewählt, so kann
      auch $U'$ $H$"=äquivariant gewählt werden.
   \item %
      \label{item:WennInUmgebungNullDannNull}
      Sei $U' \subset U$ eine offene Umgebung von $C$, so dass $\Psi(U')
      \subset C \times \lieAlgebra^*$ sternförmig bezüglich des
      Nullpunktes in jeder Faser ist.  Ist $f \in C^\infty(M)$ mit
      $f\at{U'} = 0$, dann gilt schon $\qRes f = 0$.
   \end{propositionEnum}

\end{proposition}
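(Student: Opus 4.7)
The plan is to exploit the $\lambda$-adic expansion
\[
(\id + (\qkoszul[1] - \kkoszul[1])\h[0])^{-1} = \sum_{n \geq 0} (-1)^n\bigl((\qkoszul[1] - \kkoszul[1])\h[0]\bigr)^n
\]
of the inverse entering the definition of $\qRes$, together with a careful localization of the global homotopy $\h$ of Satz~\ref{satz:globalisierteHomotopie} near $C$. The key geometric observation is that the fiber integral defining $\h_U$ on the good tube $U$ is local along the zero section, so it can be evaluated on any open $U' \subset U$ whose image $\Psi(U') \subset V$ is fiberwise starlike around $C \times \{0\}$. If $U'$ is chosen deep enough inside $O$ (the interior neighborhood of $C$ used to glue $\h$), then both the cutoff factor $\psi_U$ and the globalization summand $\h_W$ become trivial on $U'$, so that $\h(g)|_{U'}$ reduces exactly to the local formula $\h_{U',U}(g|_{U'})$.

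First I would choose, for a given $c \in C$, an open neighborhood $U' \ni c$ with (a) $U' \subset O$, ensuring $\psi_U \equiv 1$ on $U'$ and, since $\supp \xi \subset W = M \setminus \overline{O}$, also $\xi|_{U'} = 0$; and (b) $\Psi(U') \subset V$ fiberwise starlike around $C \times \{0\}$. Such a $U'$ exists because $\Psi(c) = (c,0)$ lies in the zero section of the fiberwise starlike open set $V$. Doing this construction pointwise along $C$ and shrinking the fiber radius locally in $C$ yields a $U'$ that is an open neighborhood of all of $C$, which is the strengthened form stated at the end of~\refitem{item:lokaleGestaltDerQuanteneinschraenkung}. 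In the equivariant setting, one starts from $H$-invariant homotopy data as in Bemerkung~\ref{bem:induzierteDingens} and symmetrizes the fiber radius by a compact average to obtain an $H$-invariant $U'$ satisfying (a) and (b).

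The main technical step is the pointwise localization identity
\[
\h(g)|_{U'} = \h_{U',U}(g|_{U'}) \quad \text{for all } g \in C^\infty(M) \otimes \Bigwedge^\bullet \lieAlgebra,
\]
which follows directly from $\h = \psi_U \h_U + \h_W$: on $U'$ we have $\psi_U = 1$, $\h_W$ vanishes since $\xi|_{U'} = 0$, and the integral formula for $\h_U$ at $p \in U'$ only evaluates $g \circ \Psi^{-1}$ along the segment from $(r(p),0)$ to $(r(p),J(p))$, which by fiberwise starlikeness lies entirely inside $\Psi(U')$. Since $\star$ is differential, both $\qkoszul[1]$ and $\kkoszul[1]$ are differential and hence local operators, so that $\bigl((\qkoszul[1] - \kkoszul[1])g\bigr)|_{U'} = (\qkoszul_{U'} - \partial_{U'})(g|_{U'})$. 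An easy induction on $n$ combining these two facts yields
\[
\bigl((\qkoszul[1] - \kkoszul[1])\h[0]\bigr)^n f \,\Big|_{U'} = \bigl((\qkoszul_{U'} - \partial_{U'})\h_{U',U}\bigr)^n (f|_{U'})
\]
for all $n \in \mathbb{N}$ and $f \in C^\infty(M)[[\lambda]]$. Summing over $n$ in the $\lambda$-adic topology and then applying $\kRes$, respectively $\kIn^*_{U'}$, to both sides produces~\eqref{eq:EinschraenkbarkeitDerQuanteneinschr} and proves~\refitem{item:lokaleGestaltDerQuanteneinschraenkung}. Part~\refitem{item:WennInUmgebungNullDannNull} is then immediate: shrinking the given neighborhood $U'$ of $C$ to one satisfying (a) and (b) preserves the hypothesis $f|_{U'} = 0$, and part~(i) gives $\qRes f = \qRes_{U'}(f|_{U'}) = 0$.

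The main obstacle is the bookkeeping in the inductive step: although $\h$ typically enlarges the support of its input beyond $U'$, only the restriction $g|_{U'}$ enters the value $\h(g)|_{U'}$, and this stability under restriction has to survive the composition with the local operator $\qkoszul[1] - \kkoszul[1]$. The combination $U' \subset \mathrm{int}(O)$ together with the fiberwise starlikeness of $\Psi(U')$ is exactly what is needed to decouple $\h|_{U'}$ from the globalization data $(\psi_U, \xi)$ entering Satz~\ref{satz:globalisierteHomotopie}.
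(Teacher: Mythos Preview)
Your proof is correct and follows essentially the same route as the paper: choose $U'$ small enough that $\psi_U\equiv 1$ and $\h_W\equiv 0$ on it, deduce $\h(g)|_{U'}=\h_{U',U}(g|_{U'})$, then iterate using locality of $\qkoszul-\kkoszul$ and sum the geometric series. Your write-up is in fact more careful than the paper's about why the fiberwise starlikeness guarantees that $\h_U$ restricts, and your remark that $\h$ may enlarge supports but only $g|_{U'}$ matters for $\h(g)|_{U'}$ is exactly the point the paper leaves implicit.
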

\begin{proof}
   \begin{beweisEnum}
   \item %
      Es ist klar, dass es eine offene Umgebung $U'$ von $C$ gibt,
      so dass $\Psi(U')$ sternförmig bezüglich des Nullpunktes in jeder
      Faser ist und sowohl $\psi_U\at{U'} = 1$ als auch $U' \cap W =
      \emptyset$ wahr sind. Dann gilt $\h(f)\at{U'} = \h_{U',U}f\at{U'}$
      und somit wegen der Lokalität von $\qkoszul - \partial$ auch
      $((\qkoszul - \partial)\h(f))\at{U'} = (\qkoszul_{U'}
      - \partial_{U'})\h_{U',U}(f\at{U'})$. Induktiv erhält man dann
      \begin{align*}
         (((\qkoszul - \partial)\h)^k f)\at{U'} = ((\qkoszul_{U'}
         - \partial_{U'})\h_{U',U})^k f\at{U'}
      \end{align*}
      für alle $k \in \mathbb{N}$. Der Rest ist klar.
   \item %
      Folgt sofort aus Teil~\refitem{item:lokaleGestaltDerQuanteneinschraenkung}.
   \end{beweisEnum}
\end{proof}

\subsection{Sternprodukt für den reduzierten Phasenraum}
\label{sec:SternproduktFuerDenReduziertenPhasenraum}

Wir sind nun in der Lage ein Sternprodukt für $\Mred$ zu
konstruieren. Zunächst geben wir jedoch eine andere Charakterisierung
für den $\star$"=Lie"=Idealisator $\qbIdeal$ an. Im Folgenden seien
$(\Psi \colon U \to V, O \subset U, \psi_U,\psi_W,\xi)$ $G$"=invariante
geometrische"=Homotopiedaten. Insbesondere sind die davon induzierte
Prolongation $\prol$ und die Quanteneinschränkung $\qRes$
$G$"=äquivariant. Wie im vorherigen Abschnitt gezeigt, können wir stets
solche wählen, da $J$ und $\qJ$ $G$"=äquivariant sind.
\begin{lemma}
   \label{lem:CharakterisierungQuantenLieIdealisator}
   Es gilt
   \begin{align}
      \label{eq:CharakterisierungQuantenLieIdealisator}
      \qbIdeal = \{f \in C^\infty(M)[[\lambda]] \mid \qRes f \in
      \pi^*C^\infty(\Mred)[[\lambda]] \} \Fdot
   \end{align}
   Insbesondere ist $\prol \pi^*C^\infty(\Mred)[[\lambda]] \subset \qbIdeal$.
\end{lemma}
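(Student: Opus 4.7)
The plan is to prove both inclusions using the quantum moment map identity $\I\lambda\{J(\xi),f\}=[\qJ(\xi),f]_\star$ together with the $G$-equivariance of $\qRes$ that was arranged in Bemerkung~\ref{bem:InduzierteQuantenDings}.

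For the inclusion $\supseteq$, I will suppose $\qRes f\in\pi^*C^\infty(\Mred)[[\lambda]]$, which by Proposition~\ref{prop:InvarianteFunktionen} is equivalent to $\qRes f\in C^\infty(C)^G[[\lambda]]$. The task of showing $[f,f']_\star\in\qIdeal$ for every $f'\in\qIdeal$ I would first reduce to the case $f'=\qJ(e_\alpha)$: any element of $\qIdeal$ has the form $h^\alpha\star\qJ(e_\alpha)$, and the $\star$-Leibniz rule gives
\begin{equation*}
[f,h^\alpha\star\qJ(e_\alpha)]_\star=[f,h^\alpha]_\star\star\qJ(e_\alpha)+h^\alpha\star[f,\qJ(e_\alpha)]_\star,
\end{equation*}
the first summand lying in the left ideal $\qIdeal$ tautologically, the second as well provided $[f,\qJ(e_\alpha)]_\star\in\qIdeal$ for each $\alpha$. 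Using the quantum moment map property I then rewrite
\begin{equation*}
[f,\qJ(\xi)]_\star=-[\qJ(\xi),f]_\star=-\I\lambda\{J(\xi),f\}=\I\lambda\,\xi_M(f),
\end{equation*}
so the claim becomes $\qRes(\xi_M(f))=0$ for every $\xi\in\lieAlgebra$. Since $J$ is $G$-equivariant with $J(C)=\{0\}$, the fundamental vector field $\xi_M$ is tangent to $C$ and restricts to the fundamental vector field $\xi_C$ of the induced action on $C$. Differentiating the $G$-equivariance of $\qRes$ therefore gives $\qRes(\xi_M(f))=\xi_C(\qRes f)$, which vanishes because $\qRes f$ is $G$-invariant and $G$ is connected. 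This closes the first direction in view of $\qIdeal=\ker\qRes$ from Bemerkung~\ref{bem:Quanteneinschraenkung}.

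For the inclusion $\subseteq$, I suppose $f\in\qbIdeal$. Since $\qJ(\xi)=\xi^\alpha\qJ(e_\alpha)\in\qIdeal$ for every $\xi\in\lieAlgebra$, the defining property of the idealizer gives $[f,\qJ(\xi)]_\star\in\qIdeal=\ker\qRes$. The identity above then yields
\begin{equation*}
0=\qRes[f,\qJ(\xi)]_\star=\I\lambda\,\xi_C(\qRes f),
\end{equation*}
so $\xi_C(\qRes f)=0$ for every $\xi\in\lieAlgebra$. Connectedness of $G$ forces $\qRes f$ to be $G$-invariant, and Proposition~\ref{prop:InvarianteFunktionen} then gives $\qRes f\in\pi^*C^\infty(\Mred)[[\lambda]]$ as required. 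The concluding assertion $\prol\pi^*C^\infty(\Mred)[[\lambda]]\subset\qbIdeal$ follows at once from $\qRes\circ\prol=\id$ established in Proposition~\ref{prop:QuantenAugmentierung}.

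The main technical point is the passage $\qRes(\xi_M(f))=\xi_C(\qRes f)$, which crucially relies on the $G$-equivariance of $\qRes$; this is precisely where the $G$-invariant choice of geometric homotopy data from Bemerkung~\ref{bem:InduzierteQuantenDings} enters in an essential way. Without that preparation the calculation would already fail at order $\lambda$. All remaining steps are formal manipulations using the Leibniz rule, the left-ideal structure of $\qIdeal$ and the quantum moment map identity, so no genuine obstacle is expected there.
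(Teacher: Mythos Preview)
Your proof is correct and follows essentially the same route as the paper's: reduce to the generators $\qJ(\xi)$, use the quantum moment map identity to convert $[f,\qJ(\xi)]_\star$ into $\I\lambda\,\xi_M(f)$, then invoke the $G$-equivariance of $\qRes$ and connectedness of $G$. The only cosmetic difference is that the paper packages the argument as a single chain of equivalences and reduces to the generators via the left-ideal observation $[f,f']_\star\in\qIdeal\iff f'\star f\in\qIdeal$, whereas you split into two inclusions and use the $\star$-Leibniz rule; both reductions are equally valid. Your handling of $\xi_M$ versus $\xi_C$ is in fact slightly more careful than the paper's notation.
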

\begin{proof}
   Nach Definition von $\qbIdeal$ ist genau dann $f \in \qbIdeal$, wenn
   $[f,f']_{\star} \in \qIdeal$ für alle $f' \in \qIdeal$ gilt. Dies ist
   äquivalent zu der Bedingung $f' \star f \in \qIdeal$ für alle $f' \in
   \qIdeal$, weil $\qIdeal$ ein $\star$"=Linksideal ist. Da weiter
   $\qIdeal$ von den Komponenten von $\qJ$ erzeugt wird, ist $f$ genau
   dann ein Element von $\qbIdeal$, wenn für alle $\xi \in \lieAlgebra$
   die Bedingung $[f,\qJ(\xi)]_\star \in \qIdeal$ gilt. Wegen $\qIdeal =
   \ker \qRes$ (vgl.\ Bem. \ref{bem:Quanteneinschraenkung}), ist dies
   weiter äquivalent zu $\qRes([f,\qJ(\xi)]_{\star}) = 0$ für alle $\xi \in
   \lieAlgebra$. Nun gilt aber für jedes $\xi \in \lieAlgebra$ die
   Gleichung
   \begin{align*}
      \qRes([f,\qJ(\xi)]_\star) = \qRes(\I \lambda \{f,J(\xi)\}) = \I
      \lambda \qRes(\xi_{M}f) = \I\lambda \xi_M \qRes f \Fdot
   \end{align*}

   Dabei wurde im ersten Schritt verwendet, dass $\qJ$ eine
   Quantenimpulsabbildung ist und im dritten dass $\qRes$ $G$"=äquivariant
   ist.

   Da nach Proposition \ref{prop:InvarianteFunktionen}
   $\pi^*C^\infty(\Mred) = C^\infty(C)^G$ gilt, und da $G$
   zusammenhängend ist, folgt die Behauptung.
\end{proof}
Wir kommen nun zum zentralen Satz
(vgl.\ \cite[Prop. 3.12]{gutt2010involutions}) dieses Kapitels. In diesem
wird das Sternprodukt auf dem reduzierten Phasenraum angegeben.
\begin{satz}[Sternprodukt auf dem reduzierten Phasenraum]
   \label{satz:SternproduktAufDemReduziertenPhasenraum}
   \begin{satzEnum}
   \item %
      $\qbIdeal$ ist eine Unteralgebra von
      $(C^\infty(M)[[\lambda]],\star)$ und $\qIdeal$ ist ein Ideal in
      $\qbIdeal$, wodurch $\qbIdeal/\qIdeal$ via $[f] \bullet [f'] := [f
      \star f']$ für $f,f' \in \qbIdeal$ zu einer assoziativen Algebra
      wird.
   \item %
      Die Abbildung
      \begin{align}
         \label{eq:SternproduktAufDemReduziertenPhasenraum}
         \boldsymbol{\mathrm{iso}} \colon \qbIdeal/\qIdeal \ni [f]
         \mapsto \qRes f \in \pi^*C^\infty(\Mred)[[\lambda]] =
         C^\infty(C)^G[[\lambda]]
      \end{align}
      ist ein Isomorphismus und es gilt für jedes $f \in
      C^\infty(C)^G[[\lambda]]$
      \begin{align}
         \label{eq:SternproduktAufDemReduziertenPhasenraum2}
         \boldsymbol{\mathrm{iso}}^{-1}(f) = [\prol(f)] \in
         \qbIdeal/\qIdeal \Fdot
      \end{align}
      Weiter induziert der Isomorphismus $\boldsymbol{\mathrm{iso}}^{-1}
      \circ \pi^* \colon C^\infty(\Mred)[[\lambda]] \to
      \qbIdeal/\qIdeal$ ein Sternprodukt $\starred$ auf $C^\infty(\Mred)[[\lambda]]$, welches insbesondere für alle
      $\phi,\phi' \in C^\infty(\Mred)[[\lambda]]$ durch die Gleichung
      \begin{align}
         \label{eq:SternproduktAufDemReduziertenPhasenraum3}
         \pi^*(\phi \star_{\mathrm{red}} \phi') = \qRes(\prol(\pi^*
         \phi) \star \prol(\pi^* \phi'))
      \end{align}
      bestimmt ist.
      \item %
        Ist $\star$ differenziell, so auch $\starred$.
   \end{satzEnum}
\end{satz}
\begin{proof}
   \begin{beweisEnum}
      \item %
         Nach Lemma \ref{lem:CharakterisierungQuantenLieIdealisator} ist
          $\qIdeal \subset \qbIdeal$ klar, die restlichen Aussagen
         sind dann unmittelbar einsichtig.
      \item %
         Nach Lemma \ref{lem:CharakterisierungQuantenLieIdealisator} ist
         $\qRes f \in \pi^*C^\infty(\Mred)[[\lambda]]$ für alle $f \in
         \qbIdeal$. Für jedes $f\in C^\infty(C)^G[[\lambda]]$ ist die Funktion
         $\qRes\prol(f) = f$ $G$"=invariant, also ein Element von
         $\pi^*C^\infty(\Mred)[[\lambda]] = C^\infty(C)^G[[\lambda]]$. Damit liegt,
         wieder nach Lemma
         \ref{lem:CharakterisierungQuantenLieIdealisator}, die Funktion
         $\prol(f)$ in $\qbIdeal$. Somit sind die Abbildungen
         $\boldsymbol{\mathrm{iso}}$ und $C^\infty(C)^G[[\lambda]] \ni f
         \mapsto [\prol f] \in \qbIdeal/\qIdeal$ wohldefiniert. Wir
         rechnen als nächstes nach, dass sie zueinander invers sind. Sei
         dazu zuerst $f \in \qbIdeal$ gegeben. Dann gilt
         \begin{align*}
            \qRes (\prol \qRes f - f) = 0 \Fcom
         \end{align*}
         also
         \begin{align*}
            [\prol \qRes f] = [f] \Fdot
         \end{align*}
         Sei umgekehrt $f \in \pi^*C^\infty(\Mred)[[\lambda]]$, so folgt
         \begin{align*}
            \boldsymbol{\mathrm{iso}}([\prol f]) = \qRes \prol f = f \Fdot
         \end{align*}
         Formel \eqref{eq:SternproduktAufDemReduziertenPhasenraum3} ist
         unmittelbar einsichtig, womit auch klar ist, dass  die Gleichung
         $[\phi,\phi']_{\star_{\mathrm{red}}} = \I \lambda
         \{\phi,\phi'\}_{\mathrm{red}}$ gilt.
   \item %
      Dies folgt direkt aus dem nachfolgenden Lemma \ref{lem:S} und der
      Tatsache, dass die Komposition von Multi"=Differentialoperatoren wieder
      ein Multi"=Differentialoperator ist.
   \end{beweisEnum}
\end{proof}

\begin{lemma}
   \label{lem:S}
   Es gibt eine formale Reihe $S := \id + \sum_{r=1}^\infty \lambda^r
   S_r$ von Differentialoperatoren $S_r \colon C^\infty(M) \to
   C^\infty(M)$, so dass
   \begin{align*}
      \qRes = \kRes \circ S
   \end{align*}
   gilt. Die Differentialoperatoren $S_r$ können dabei so gewählt
   werden, dass $S_r$ für $r \geq 1$ auf Konstanten verschwindet.

\end{lemma}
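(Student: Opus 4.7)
The approach is to start from the formal inverse
\begin{equation*}
T \;:=\; \bigl(\id + (\qkoszul[1] - \kkoszul[1])\,\h[0]\bigr)^{-1},
\end{equation*}
which by the very definition of the quantum restriction satisfies $\qRes = \kRes \circ T$. Since the $\lambda^0$-parts of $\qkoszul[1]$ and $\kkoszul[1]$ coincide, $\qkoszul[1] - \kkoszul[1]$ is of $\lambda$-order at least one, so the geometric series expansion of $T$ yields $\lambda^r$-components $T_r$ given by finite sums. It has to be shown that each $T_r$ can be replaced by a differential operator $S_r \colon C^\infty(M) \to C^\infty(M)$ satisfying $\kRes \circ S_r = \kRes \circ T_r$, and that one can afterwards arrange $S_r(1) = 0$ for $r \geq 1$.

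The geometric core is a \emph{jet-locality} property of $\h[0]$ at $C$. From the explicit formula
\begin{equation*}
\h_U(f)(\Psi^{-1}(c,\mu)) = \int_0^1 \partial^\alpha(f \circ \Psi^{-1})(c, t\mu)\, dt \otimes e_\alpha
\end{equation*}
in a good tubular neighborhood, differentiation under the integral followed by evaluation on the zero section $\mu = 0$ shows that the Taylor jet of order $k$ of $\h[0](f)$ at a point $c \in C$ depends only on the Taylor jet of order $k+1$ of $f$ at $c$. Since $\star$ is differential, $\qkoszul[1]$ and $\kkoszul[1]$ are bidifferential, hence the composite $(\qkoszul[1] - \kkoszul[1])\h[0]$ inherits this jet-locality at $C$ with a bounded increase of jet-order. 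Iterating, the Taylor jet of $\bigl((\qkoszul[1] - \kkoszul[1])\h[0]\bigr)^{k}(f)$ at $c$ is controlled by finitely many derivatives of $f$ at $c$. Only $k \leq r$ contribute to $T_r$, so $\kRes \circ T_r \colon C^\infty(M) \to C^\infty(C)$ is a differential operator of finite order from $M$ into $C$ (that is, its value at $c$ depends on a bounded number of derivatives of $f$ at $c$).

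Any such jet-local operator is of the form $\kRes \circ S_r$ for some differential operator $S_r$ on $M$: in the coordinates provided by $\Psi$ it reads $f \mapsto \sum_\alpha a_\alpha \, \partial^\alpha(f \circ \Psi^{-1})$ with coefficient functions $a_\alpha \in C^\infty(C)$; extending these via pullback along the retraction $r$, multiplying by $\psi_U$, and observing that contributions supported in $M \setminus \overline{O}$ are killed by $\kRes$ furnishes the desired global $S_r$.

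Finally, $1 = \psi_U + \psi_W$ together with $\prol(1) = \psi_U$, the identity $\qRes \circ \prol = \id$ (Proposition~\ref{prop:QuantenAugmentierung}), and the localization statement (Proposition~\ref{prop:LokalisierbarkeitDerQuanteneinschraenkung}) applied to $\psi_W$ imply $\qRes(1) = 1$, so $\kRes(S_r(1)) = 0$ for $r \geq 1$. Replacing $S_r$ by $f \mapsto S_r(f) - S_r(1)\cdot f$ preserves $\kRes \circ S_r = \qRes_r$ (the subtracted term is killed by $\kRes$), forces $S_r(1) = 0$, and keeps $S_r$ a differential operator of the same order. I expect the main obstacle to be the rigorous formulation of the jet-locality claim with an explicit order bound at each $\lambda^r$ and the gluing of the local coordinate construction to a differential operator globally defined on $M$; the remaining steps are direct verifications or standard partition-of-unity arguments.
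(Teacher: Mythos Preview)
Your proposal is correct and follows essentially the same route as the paper: both arguments work in the good tubular neighbourhood, observe that evaluating the integral in $\h[0]$ on the zero section $\mu = 0$ collapses it to a finite-order $\mu$-derivative (what you call jet-locality, what the paper writes out explicitly in submanifold coordinates $(\bar x, J_1,\dots,J_{\dim G})$), iterate this through powers of $(\qkoszul[1]-\kkoszul[1])\h[0]$, and then globalize with a cutoff supported in the tube. Your treatment of the vanishing-on-constants clause --- deducing $\qRes(1)=1$ from $\qRes\prol=\id$ together with $\prol(1)=\psi_U$ and the localization property for $\psi_W$, and then replacing $S_r$ by $f\mapsto S_r(f)-S_r(1)\cdot f$ --- is in fact more carefully argued than the paper's brief remark.
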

\begin{proof}

   Sei $U'$ eine offene Umgebung von $C$, $\kRes_{U'}$ und $\qRes_{U'}$
   wie in Proposition
   \ref{prop:LokalisierbarkeitDerQuanteneinschraenkung}.  Es genügt zu
   zeigen, dass es eine formale Reihe $S_{U'} = \id_{U'} +
   \sum_{r=1}^\infty\lambda^r{S_{U'}}_r$ von Differentialoperatoren ${S_{U'}}_r
   \colon C^\infty(U') \to C^\infty(U') $ gibt, die auf Konstanten
   verschwinden, so dass $\qRes_{U'}= \kRes_{U'} \circ S_{U'}$ gilt. Denn
   hat man ein derartiges $S_{U'}$ gefunden, definiere man für jedes $r
   \geq 1$ den Differentialoperator $S_r \colon C^\infty(M) \to
   C^\infty(M)$ durch $S_r := \chi_{U'}\cdot {S_{U'}}_r \circ \kIn_{U',M}^*$,
   wobei $\chi_{U'} \colon M \to \mathbb{R}$ eine glatte Funktion sei
   mit $\supp \chi_{U'} \subset {U'}$ und $\chi_{U'}\at{C} = 1$ und
   $\kIn_{U',M} \colon {U'} \hookrightarrow M$ die Inklusion
   bezeichne (vgl.\ Prop. \ref{prop:VortsetzungVonDiffops}).

   $S := \id + \sum_{r=1}^\infty S_r$ hat dann  die
   gewünschten Eigenschaften, denn für $f \in C^\infty(M)$  gilt
   \begin{align*}
      (\kRes \circ S_r)f  = \kRes (\chi_{U'} {S_{U'}}_r \circ \kRes_{U',M})(f) =
      (\kRes \chi_{U'}) \kRes_{U'} {S_{U'}}_r \kRes_{U',M} f = (\kRes_{U'} \circ
      {S_{U'}}_r) f\at{U'} \quad \forall r \geq 1
   \end{align*}
   und somit
   \begin{align*}
      (\kRes \circ S)f = \qRes_{U'} f\at{U'} = \qRes f \Fdot
   \end{align*}
   Um die lokale Situation beherrschen zu können, ist die folgende
   Beobachtung hilfreich. Sei $D \colon C^\infty(U') \otimes \lieAlgebra
   \to C^\infty(U')$ eine lineare Abbildung. Dann gibt es bekanntlich
   lineare Abbildungen $D_i \colon C^\infty(U') \to C^\infty(U')$, so dass
   $D(f \otimes \xi) = \Ins{e^i}(\xi) D_i(f)$ gilt für alle $f \otimes
   \xi \in C^\infty(U') \otimes \lieAlgebra$. Falls jedes $D_i$ ein
   Differentialoperator ist, wollen wir der Einfachheit halber auch $D$
   als Differentialoperator bezeichnen. Ist $D_i$ für alle $i$ von der
   Ordnung $k \in \mathbb{N}$, so wollen wir sagen, $D$ sei von der
   Ordnung $k$. Ist nun ein derartiges $D$ der Ordnung $k$
   gegeben, so gibt es einen Differentialoperator  $D' \colon C^\infty(U')
   \to C^\infty(U')$ von der Ordnung $k+1$, der
   \begin{align*}
      \kRes_{U'} \circ D \circ \h_{U',U} = \kRes_{U'} \circ D'
   \end{align*}
   erfüllt.
   Dies ist in der Tat der Fall, denn sei $(U_C,x)$ eine Karte für $C$,
   so ist $(U'' := \Psi^{-1}(U_C \times \lieAlgebra^*),(\overline x = x \circ
   r,J_1,\dots,J_{\dim G}))$ eine Untermannigfaltigkeitskarte von $C$ in
   $M$ und $D$ lässt sich in dieser Karte in der Form
   \begin{align*}
      D (f\at{U''} \otimes \xi) = \sum_{\nu = 0}^k \sum_{r+s = \nu}\Ins{e^j}(\xi)
      {D_{U''}^\nu}_{j j_1\dots j_{s}}^{i_1 \dots i_r}
      \frac{\partial^{r}}{\partial \overline{x}^{i_1} \dotsm \partial
\overline{x}^{i_r}} \frac{\partial^s }{\partial J_{j_1} \dotsm J_{j_s}}f
   \end{align*}
   mit glatten Funktionen ${D_{U''}^\nu}_{j j_1\dots j_s}^{i_1 \dots
     j_r} \in C^\infty(U'')$ angeben. Sei nun $f \in C^\infty(U')$ und $c \in
   C$, dann erhalten wir
   \begin{align*}
     \lefteqn{ (D \h_{U',U}(f))(c)}\\
     &= \sum_{\nu = 0}^k \sum_{r +s = \nu} {D_{U''}^\nu}_{j j_1\dots
        j_s}^{i_1 \dots j_r} \int_0^1 t^s \delta_{m,j}
      \frac{\partial^{r}}{\partial x^{i_1} \dotsm \partial x^{i_r}}
      \frac{\partial^s }{\partial \mu_{j_1} \dotsm \partial
        \mu_{j_s} \partial \mu_m} (f \circ
      \Psi^{-1})(r(c),t \cdot 0) \, dt \\
      &= \sum_{\nu = 0}^k\sum_{r +s = \nu} {D_{U''}^\nu}_{j j_1\dots j_s}^{i_1 \dots j_r}
      \frac{1}{s+1} \frac{\partial^{r}}{\partial \overline{x}^{i_1}
        \dotsm \partial \overline{x}^{i_r}} \frac{\partial^s }{\partial
        J_{j_1} \dotsm \partial J_{j_s} \partial J_j} f (c) \\
      &=: (D'f)(c) \Fdot
   \end{align*}
   Wobei wir $D'$ lokal durch das letzte Gleichheitszeichen definieren,
   wie man leicht sieht, ist $D'$ tatsächlich ein Differentialoperator
   (vgl.\ \cite[Prop. A.3.6]{waldmann:2007a}).  Durch ordnungsweises
   Anwenden dieser Beobachtung, sieht man, dass  für jede formale
   Potenzreihe $D \colon C^\infty(U')[[\lambda]] \otimes \lieAlgebra \to
   C^\infty(U')[[\lambda]]$ von Differentialoperatoren eine formale
   Potenzreihe $D' \colon C^\infty(U')[[\lambda]] \to
   C^\infty(U')[[\lambda]]$ von Differentialoperatoren  mit
   $\kRes_{U'} \circ D \circ \h_{U',U} = \kRes_{U'} \circ D'$ existiert.

   Die Abbildung $B_{U'} := \frac{1}{\lambda}(\qkoszul_{U'}
   - \partial_{U'})$ ist offensichtlich eine formale Reihe von
   Differentialoperatoren und verschwindet auf Konstanten. Nach obiger
   Beobachtung gibt es dann einen Differentialoperator $D_1'$ mit
   $\kRes_{U'} \lambda {B_{U'}}\circ \h_{U',U} = \kRes_{U'} \circ
   D_1'$. Wir setzen nun für $n \in \mathbb{N}$ $B_n := (\lambda
   {B_{U'}} \circ \h_{U',U})^n$ und behaupten, dass es für jedes $n \in
   \mathbb{N}$ eine formale Reihe von Differentialoperatoren $D_n'$
   gibt, so dass $\kRes_{U'} \circ D_n' = \kRes_{U'} \circ B_n$ gilt. Der Beweis erfolgt durch vollständige Induktion. Den
   Induktionsanfang haben wir dabei oben schon durchgeführt. Es gelte
   die Aussage nun für ein $n \in \mathbb{N}$, dann gilt sie auch für
   $n+1$. Dies sehen wir wie folgt ein. Es gilt
   \begin{align*}
      \kRes_{U'} \circ B_{n+1} &= \kRes_{U'} \circ B_n \circ
      \lambda B_{U'}  \circ h_{U',U} \\
      &= \kRes_{U'} \circ D_n' \circ \lambda B_{U'} \circ \h_{U',U} \Fcom
   \end{align*}
   wobei im letzten Schritt die Induktionsannahme einfloss. Als
   Komposition von formalen Potenzreihen von Differentialoperatoren ist
   $\lambda D'_n \circ B_{U'}$ wieder eine formale Potenzreihe von
   Differentialoperatoren, womit wir unsere obige Beobachtung anwenden
   können und das gewünschte $D_{n+1}'$ finden.
   Mit dem bisher gezeigten ist dann die Existenz von  $S_{U'}$ klar,
   woraus wie Eingangs bemerkt  sofort die Behauptung folgt.
\end{proof}

\begin{definition}
   \label{def:reduziertesSternprodukt}
   Das Sternprodukt $\starred$ aus Satz
   \ref{satz:SternproduktAufDemReduziertenPhasenraum} nennen wir auch
   das (nach dem Quanten"=Koszul"=Schema) \neuerBegriff{reduzierte
     Sternprodukt}.
\end{definition}

\begin{bemerkung}
   \label{bem:reduziertesSternprodukt}
   \begin{bemerkungEnum}
      \item %
         Das nach dem Quanten"=Koszul"=Schema reduzierte Sternprodukt
         ist dasselbe wie das, welches man durch die BRST"=Konstruktion
         nach \cite{bordemann.herbig.waldmann:2000a} erhält.
      \item %
         Satz \ref{satz:SternproduktAufDemReduziertenPhasenraum} liefert
         eine konkrete, relativ einfache Formel für das Sternprodukt auf
         dem reduzierten Phasenraum, sogar eine, die strukturell
         derjenigen für die Poisson"=Klammer auf dem reduzierten
         Phasenraum gleicht,
         vgl.\ Gleichung~\eqref{eq:ReduziertePoissonKlammer}.
      \item %
         Die Quanteneinschränkung $\qRes$ ist zwar durch Gleichung
         \eqref{eq:QuantenEinschrDef} relativ konkret gegeben, es stellt
         sich jedoch heraus, dass die dort auftretende geometrische
         Reihe selbst in einfachen Beispielen nur sehr schwer noch
         konkreter berechnet werden kann. Das in
\cite{cahen2007symplectic}
      betrachtete Beispiel verdeutlicht
         die zu erwartenden Schwierigkeiten.
      \item %
         Ist $\star$ Hermitesch, so stellt sich die Frage, ob auch
         $\starred$ Hermitesch ist. Falls $\star$ zusätzlich stark
         invariant ist und man als Quantenimpulsabbildung $\qJ = J +
         \frac{1}{2}\I \lambda \Delta $ wählt, ist das Sternprodukt
         $\starred$ wieder Hermitesch, wie Gutt und Waldmann in
         \cite[Prop. 4.7]{gutt2010involutions} zeigen konnten.

   \end{bemerkungEnum}
\end{bemerkung}

\subsection{Nichtverschwindende Impulswerte}
\label{sec:nichtVerschwindendeImpulswerte}

Die bisher beschriebene Quanten"=Koszul"=Reduktion ist nur für den Fall
verschwindender Impulswerte anwendbar. Für nicht"=verschwindende
$G$"=invariante Impulswerte $\mu \in \lieAlgebra^*$ liegt es nahe, die um
$-\mu$ verschobene Impulsabbildung zu betrachten und dann an der Null zu
reduzieren.

Genauer sei $(M,\omega)$ eine symplektische Mannigfaltigkeit. Weiter
wirke eine Lie"=Gruppe $G$ stark Hamiltonsch, frei und eigentlich auf
$M$, $J\colon M \to \lieAlgebra^*$ sei eine $G$"=äquivariante
Impulsabbildung und $\mu \in \lieAlgebra^*$ ein $G$"=invarianter
Impulswert mit $J^{-1}(\mu) \neq \emptyset$. Dann ist auch $J^\mu := J -
\mu \colon M \to \lieAlgebra^*$ eine $G$"=äquivariante Impulsabbildung
und es gilt $J^{-1}(\mu) = ({J^\mu})^{-1}(0)$, womit die reduzierten
Phasenräume inklusive ihrer symplektischen Struktur trivialerweise
übereinstimmen. Sei nun $\star$ ein Sternprodukt für $M$ und $\qJ$ eine
$G$"=äquivariante Quantenimpulsabbildung für $J$ und $\star$. Dann ist
$\qJ -\mu$ auch eine $G$"=äquivariante Quantenimpulsabbildung für
$J^\mu$ und $\star$, womit wir vermöge der Quanten"=Koszul"=Reduktion
ein Sternprodukt $\star_{\mathrm{red},J^\mu,0}$ für den bezüglich
$J^\mu$ am Impulswert $0$ reduzierten Phasenraum erhalten. Ist $\star$
mit $J$ stark invariant, so auch mit $J^\mu$, da Poisson"=Klammern und
$\star$"=Kommutatoren mit konstanten Funktionen verschwinden. Wir
definieren dann das reduzierte Sternprodukt $\star_{\mathrm{red},\mu}$
als $\star_{\mathrm{red},\mu} := \star_{\mathrm{red},J^\mu,0}$. Für
allgemeine, nicht notwendigerweise invariante Impulswerte kann man das
Shifting"=Theorem verwenden, an das wir zunächst kurz erinnern wollen
\begin{proposition}(Shifting-Theorem)
   \label{prop:ShiftingTrick}
   Sei $(M,\omega)$ eine symplektische Mannigfaltigkeit und $G$ eine
   Lie"=Gruppe, die stark Hamiltonsch, frei und eigentlich auf $M$
   operiere und $J \colon M \to \lieAlgebra^*$ eine $G$"=äquivariante
   Impulsabbildung. Weiter sei $\mu$ ein regulärer Wert von $J$ mit
   $J^{-1}(\mu) \neq \emptyset$, $G\mu$ der koadjungierte Orbit und
   $\omega_{G\mu}^+$ die Kirillov"=Kostant"=Souriau symplektische Form
   auf $G\mu$. Dann ist $J^\mu \colon M \times G\mu \to \lieAlgebra^*$,
   $(p,\nu) \mapsto J(p) - \nu$ eine $G$"=äquivariante Impulsabbildung
   auf der symplektischen Mannigfaltigkeit $(M \times G\mu, \omega -
   \omega_{G\mu}^+)$ für die kanonische Diagonalwirkung von $G$ auf $M
   \times G \mu$ und die reduzierten Phasenräume $(M \times G
   \mu)_{\mathrm{red}}$ und $M_{\mathrm{red},\mu}$ sind symplektomorph,
   wobei $M_{\mathrm{red},\mu}$ den bezüglich des Impulswerts $\mu$ reduzierten
   Phasenraum bezeichnet.

\end{proposition}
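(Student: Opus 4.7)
Der Beweis gliedert sich in drei Bausteine. Zuerst ist zu verifizieren, dass $J^\mu$ tatsächlich eine $G$-äquivariante Impulsabbildung auf $(M \times G\mu, \omega - \omega_{G\mu}^+)$ bezüglich der Diagonalwirkung ist. Die $G$-Äquivarianz folgt unmittelbar aus der $G$-Äquivarianz von $J$ zusammen mit der Tatsache, dass die Wirkung auf $G\mu$ die Einschränkung der koadjungierten Wirkung ist: Für $g \in G$ gilt $J^\mu(g(p,\nu)) = \Ad^*_{g^{-1}}J(p) - \Ad^*_{g^{-1}}\nu = \Ad^*_{g^{-1}}J^\mu(p,\nu)$. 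Die definierende Gleichung einer Impulsabbildung prüft man komponentenweise: Das fundamentale Vektorfeld $\xi_{M\times G\mu}$ bezüglich der Diagonalwirkung ist die Summe der Fundamentalvektorfelder auf den beiden Faktoren, und unter Ausnutzung der bekannten Tatsache, dass $\nu \mapsto \dPaar{\nu}{\xi}$ eine Impulsabbildung für $(G\mu,\omega_{G\mu}^+)$ ist, folgt $\I_{\xi_{M\times G\mu}}(\omega - \omega_{G\mu}^+) = d\dPaar{J^\mu(\cdot)}{\xi}$. Die Vorzeichenwahl in $\omega - \omega_{G\mu}^+$ ist dabei genau so kalibriert, dass sie mit der Subtraktion $J(p) - \nu$ in der Definition von $J^\mu$ zusammenspielt.

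Im zweiten Schritt analysiere ich die Niveaumenge $(J^\mu)^{-1}(0) = \{(p,\nu) \in M \times G\mu \mid J(p) = \nu\}$, die sich vermöge der Projektion auf die erste Komponente mit $J^{-1}(G\mu) \subset M$ identifizieren lässt. Da $\mu$ ein regulärer Wert von $J$ ist und die $G$-Wirkung auf $M$ frei und eigentlich ist, folgt, dass $0$ ein regulärer Wert von $J^\mu$ ist und die Diagonalwirkung von $G$ auf $(J^\mu)^{-1}(0)$ ebenfalls frei und eigentlich. Damit sind die Voraussetzungen des Satzes~\ref{satz:MarsdenWeinstein} zur Bildung des reduzierten Phasenraums $(M \times G\mu)_{\mathrm{red}}$ erfüllt.

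Der zentrale Schritt ist schließlich die Konstruktion des Symplektomorphismus. Dazu betrachte ich die Einbettung $\varphi \dpA J^{-1}(\mu) \hookrightarrow (J^\mu)^{-1}(0)$, $p \mapsto (p,\mu)$. Da die $G$-Wirkung auf $M$ frei ist, hat der Stabilisator eines Punktes $(p,\mu)$ unter der Diagonalwirkung die Form $G_\mu$, und $\varphi$ induziert eine wohldefinierte Abbildung $\overline{\varphi} \dpA J^{-1}(\mu)/G_\mu \to (J^\mu)^{-1}(0)/G$. Surjektivität folgt daraus, dass $G$ transitiv auf $G\mu$ wirkt und somit jeder $G$-Orbit in $(J^\mu)^{-1}(0)$ einen Repräsentanten der Form $(p,\mu)$ enthält; Injektivität ergibt sich aus der obigen Stabilisatoridentifikation. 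Zur Verifikation der Symplektomorphismus-Eigenschaft genügt es, dank der universellen Eigenschaft des Quotienten, die Gleichheit $\varphi^*(\omega - \omega_{G\mu}^+) = \kRes \omega$ auf $J^{-1}(\mu)$ nachzuweisen; da $\varphi$ in der zweiten Komponente konstant gleich $\mu$ ist, gilt $\varphi^*\omega_{G\mu}^+ = 0$, und die erste Komponente liefert trivialerweise den gewünschten Beitrag. Die Haupthürde liegt weniger in einer konkreten Rechnung als vielmehr in der sauberen Verträglichkeit der beiden unterschiedlichen Quotientenkonstruktionen (bezüglich $G$ bzw.\ $G_\mu$) mit den jeweiligen Inklusionen und kanonischen Projektionen, also in der Kommutativität des zugehörigen Diagramms, aus dem die Übertragung auf die reduzierten symplektischen Formen folgt. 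Für eine ausführliche Darstellung dieses Standardresultats verweise ich auf \cite[Ch. 6]{ortega.ratiu:2004}.
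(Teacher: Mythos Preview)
The paper does not actually prove this proposition; immediately after the statement it writes that the proof, while not difficult, relies on orbit reduction, which is not treated in the text, and refers the reader to \cite[Thm.~6.5.2]{ortega.ratiu:2004}. Your proposal therefore goes further than the paper by sketching a direct argument via the embedding $\varphi \colon J^{-1}(\mu) \to (J^\mu)^{-1}(0)$, $p \mapsto (p,\mu)$, which is the standard route and ultimately points to the same reference.

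One correction in your sketch: the stabilizer of a point $(p,\mu)$ under the diagonal action is not $G_\mu$ but $\{e\}$, precisely because the $G$-action on $M$ is free. What you need (and what the rest of your argument uses correctly) is that the subgroup of $G$ preserving the second coordinate $\mu$ is $G_\mu$; this is what makes $\overline{\varphi}$ well-defined and injective. With that rephrasing the argument stands.
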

Der Beweis ist zwar nicht besonders schwierig, da er jedoch
Orbit"=Reduktion verwendet, auf welche wir nicht eingehen wollen und da
der Satz im weiteren Verlauf der Arbeit nicht von Bedeutung ist,
verweisen wir auf \cite[Thm. 6.5.2]{ortega.ratiu:2004}.

Man kann nun versuchen ein Sternprodukt $\star_{G \mu}$ auf $C^\infty(G
\mu)$ zu wählen, $\hat{\star} := \star \otimes \star_{G \mu}$ zu
betrachten und als reduziertes Sternprodukt $\hat{\star}_{\mathrm{red}}$ zu
nehmen. Damit dieses Vorgehen von einem konstruktivistischen Standpunkt
aus befriedigend erscheint, müsste jedoch noch untersucht werden, welche
Freiheiten man bei der Wahl von $\star_{G\mu}$ hat und wie sich diese
auf $\hat{\star}_{\mathrm{red}}$, beziehungsweise dessen Klasse
auswirken. Falls $\mu$ $G$"=invariant ist, gilt natürlich $G\mu = \{\mu\}$,
jegliche Wahlfreiheiten verschwinden und die beschriebene Anwendung des
Shifting"=Theorems liefert das weiter oben schon erklärte.

\subsection{Homotopie für den gesamten Quanten-Koszul-Komplex}
\label{sec:HomotopieFuerGesamtenQuatenKoszulKomplex}

Die folgenden Aussagen werden für die restliche Arbeit zwar nicht
benötigt, wir wollen jedoch der Vollständigkeit halber noch ausführen,
wie man auf relativ elementarem Wege eine Homotopie für den ganzen
augmentierten Quanten"=Koszul"=Komplex erhalten kann. Diese wird
interessant, wenn man den ganzen BRST"=Komplex betrachtet. Siehe
\cite{bordemann.herbig.waldmann:2000a} und
\cite{herbig2007variations} für den singulären Fall. Dort wird auch
gezeigt, wie man die Homotopie auf abstrakterem Wege, mittels
homologischer Störungstheorie bekommt.

Im Folgenden schreiben wir $\qkoszul[0] := \qRes$ und betrachten den
Komplex
\def\tA[#1]{A_{#1}}
\begin{equation}
      \begin{tikzpicture}[baseline=(current
    bounding box.center),description/.style={fill=white,inner sep=2pt}]
         \matrix (m) [matrix of math nodes, row sep=3.0em, column
         sep=3.5em, text height=1.5ex, text depth=0.25ex]
         {
C^\infty(C)[[\lambda]] & C^\infty(M)[[\lambda]] \otimes \Bigwedge^0
\lieAlgebra  & C^\infty(M)[[\lambda]]
\otimes \Bigwedge^1 \lieAlgebra & \dots \\
}; %

\path[<-] (m-1-1) edge node[auto]{$\qkoszul[0]$}(m-1-2); %
\path[<-] (m-1-2) edge node[auto]{$\qkoszul[1]$}(m-1-3); %
\path[<-] (m-1-3) edge node[auto]{$\qkoszul[2]$}(m-1-4); %
 \end{tikzpicture}.
\end{equation}
Wir sind nun in der Lage, eine Homotopie für diesen Komplex anzugeben. Um
die Notation für die folgenden Betrachtungen zu vereinfachen, schreiben
wir $\h[-1] := \prol$ und $\kkoszul[0] := \kRes$, $\kkoszul[k] := 0$, $ \h[k] :=
0$ für $k \leq -1$, $\Id[k] := 0$ für $k \leq -2$, $\Id[-1] :=
\id_{C^\infty(C)}$ und $\Id[k] := \id_{C^\infty(M) \otimes \Bigwedge^k
  \lieAlgebra}$ für $k \geq 0$. Damit kombinieren sich die
Gleichungen $\kRes \prol = \id$, \eqref{eq:globalisiertHomotopie1} und
\eqref{eq:globalisiertHomotopie2} für alle $k \in \mathbb{Z}$ zur Gleichung
\begin{align}
   \label{eq:KlassischeHomotopieAllesInEinem}
   \h[k-1] \kkoszul[k] + \kkoszul[k+1] \h[k] = \Id[k] \Fdot
\end{align}
Da $\h[0] \prol = 0$ gilt, können wir $\qh[0]$  noch etwas
anders schreiben, so dass, grob gesagt, die Rollen von $\qkoszul$ und $\qh$
symmetrischer verteilt werden. Es gilt nämlich für $n \in \mathbb{N}$
\begin{align*}
  \lefteqn{ \h[0](\h[-1](\qkoszul[0] - \kkoszul[0]) + (\qkoszul[1]
   - \kkoszul[1]) \h[0])^n} \\
 &= \h[0](\prol(\qkoszul[0] - \kkoszul[0]) +
   (\qkoszul[1] - \kkoszul[1])\h[0])(\h[-1](\qkoszul[0] - \kkoszul[0]) +
   (\qkoszul[1] - \kkoszul[1])\h[0])^{n-1} \\ &= \dots = \h[0]((\qkoszul[1]
   - \kkoszul[1])\h[0])^n \Fcom
\end{align*}
also
\begin{align}
   \label{eq:QuantenHomotopieGradNullSymmetrieisert}
   \qh[0] = \h[0] \frac{\Id[0]}{\Id[0] + \h[-1](\qkoszul[0] - \kkoszul[0])  +
     (\qkoszul[1] - \kkoszul[1])\h[0]} \Fdot
\end{align}
Den Nenner können wir mit Hilfe der klassischen Homotopieeigenschaft,
siehe Gleichung \eqref{eq:globalisiertHomotopie2}, wie folgt noch etwas
kompakter schreiben.
\begin{align}
   \label{eq:QuantenHomotopieGradNullSymmetrieisertKompakter}
  \lefteqn{ \id + \h[-1](\qkoszul[0] - \kkoszul[0]) + (\qkoszul[1]
   - \kkoszul[1])\h[0]}\notag\\ &= \h[-1]\kkoszul[0] + \kkoszul[1]\h[0] +
   \h[-1](\qkoszul[0] - \kkoszul[0]) + (\qkoszul[1] - \kkoszul[1])\h[0] \notag\\
   &= \h[-1]\qkoszul[0] + \qkoszul[1] \h[0] \Fdot
\end{align}
Demnach gilt
\begin{align}
   \label{eq:QuantenHomotopieGradNullSymmetrieisertKompakter2}
   \qh[0] = \h[0]\frac{\Id[0]}{\h[-1]\qkoszul[0] + \qkoszul[1] \h[0]}\Fdot
\end{align}

Dies liefert uns eine grobe Idee, wie man  eine Homotopie für den
Quanten"=Koszul-Komplex definieren könnte.
\begin{proposition}
   \label{prop:Quantenhomotopie1}
   Die Abbildung $\h[k-1]\qkoszul[k] + \qkoszul[k+1]\h[k]$ ist für
   alle $k \in \mathbb{Z}$ invertierbar.
\end{proposition}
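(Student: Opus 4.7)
Mein Plan ist, $A_k := \h[k-1]\qkoszul[k] + \qkoszul[k+1]\h[k]$ als $\lambda$"=Deformation der klassischen Homotopierelation zu erkennen und die Invertierbarkeit anschließend durch eine Neumann"=Reihe in $\lambda$ zu erschließen.

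Im ersten Schritt zerlege ich
\[
   A_k \;=\; \bigl(\h[k-1]\kkoszul[k] + \kkoszul[k+1]\h[k]\bigr) \;+\; \h[k-1]\bigl(\qkoszul[k] - \kkoszul[k]\bigr) \;+\; \bigl(\qkoszul[k+1] - \kkoszul[k+1]\bigr)\h[k].
\]
Der eingeklammerte Summand stimmt nach der klassischen Homotopieeigenschaft \eqref{eq:KlassischeHomotopieAllesInEinem}, welche die Gleichungen $\kRes\prol = \id$, \eqref{eq:globalisiertHomotopie1} und \eqref{eq:globalisiertHomotopie2} in einheitlicher Form zusammenfasst, mit $\Id[k]$ überein.

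Im zweiten Schritt möchte ich nachweisen, dass die beiden Korrektursummanden von strikt positiver $\lambda$"=Ordnung sind. Für $k \geq 1$ folgt dies aus der expliziten Gestalt von $\qkoszul$ in Proposition \ref{prop:KonkreteGestaltDesQuantenKoszuloperators} zusammen mit $\qJ = J + O(\lambda)$ und $\star = \cdot + O(\lambda)$, wodurch $\qkoszul[k] - \kkoszul[k] \in \lambda \operatorname{End}(\cdot)[[\lambda]]$ gilt; für $k = 0$ ist wegen $\qkoszul[0] = \qRes$ und $\qRes_0 = \kRes$ nach Proposition \ref{prop:QuantenAugmentierung} dasselbe richtig. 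Da weder die Homotopien $\h[k]$ noch die Prolongation $\prol = \h[-1]$ eine $\lambda$"=Abhängigkeit tragen, erhält man insgesamt eine Darstellung $A_k = \Id[k] + \lambda B_k$ mit einer formalen Potenzreihe $B_k$ von linearen Operatoren.

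Damit ist die Behauptung auf die Invertierbarkeit von $\Id[k] + \lambda B_k$ reduziert. Im einzig nichttrivialen Bereich $-1 \leq k \leq \dim G$ ist $\Id[k] = \id$, sodass $A_k$ als $\mathbb{C}[[\lambda]]$"=lineare Abbildung durch die $\lambda$"=adisch ordnungsweise konvergente Neumann"=Reihe $\sum_{n \geq 0}(-\lambda B_k)^n$ invertiert werden kann; außerhalb dieses Bereichs sind die Räume $C^\infty(M)[[\lambda]] \otimes \Bigwedge^k \lieAlgebra$ trivial und die Aussage gegenstandslos. Eine echte Hürde erwarte ich nicht; lediglich die Buchhaltung an den Randindizes $k = -1$ und $k = 0$, an denen sich die klassische Identität zu $\qRes\prol = \id$ beziehungsweise $\prol \kRes + \kkoszul[1]\h[0] = \id$ degeneriert, ist sorgfältig auszuführen, wird aber durch die Konventionen $\h[-1] := \prol$ und $\qkoszul[0] := \qRes$ bereits abgedeckt.
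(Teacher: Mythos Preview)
Your proposal is correct and follows essentially the same line as the paper's proof: both subtract $\Id[k]$, use the classical homotopy identity \eqref{eq:KlassischeHomotopieAllesInEinem} to rewrite the remainder as $\h[k-1](\qkoszul[k]-\kkoszul[k]) + (\qkoszul[k+1]-\kkoszul[k+1])\h[k]$, observe that this starts in order $\lambda$ since $\qkoszul$ is a deformation of $\kkoszul$, and conclude invertibility via the geometric (Neumann) series. Your additional remarks on the boundary indices $k=-1,0$ and the explicit invocation of $\qRes_0 = \kRes$ are more detailed than the paper's version but not required, as these cases are already absorbed into the uniform convention $\qkoszul[0] := \qRes$, $\h[-1] := \prol$ together with \eqref{eq:KlassischeHomotopieAllesInEinem}.
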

\begin{proof}
   Da $\qkoszul$ eine Deformation von $\partial$ ist, fängt
   \begin{align*}
      \lefteqn{\h[k-1]\qkoszul[k] + \qkoszul[k+1]\h[k] - \Id[k]}\\
      &= \h[k-1]\kkoszul[k] + \kkoszul[k-1]\h[k] + \h[k-1](\qkoszul[k]
      - \kkoszul[k]) + (\qkoszul[k+1] - \kkoszul[k+1])\h[k] - \Id[k]\\
      &= \h[k-1](\qkoszul[k] - \kkoszul[k]) + (\qkoszul[k+1] -
      \kkoszul[k+1])\h[k]
    \end{align*}
mindestens in Ordnung $\lambda$ an, woraus die Behauptung folgt. Dabei
wurde im zweiten Schritt der obigen Umformung Gleichung
\eqref{eq:KlassischeHomotopieAllesInEinem} verwendet.
\end{proof}
Wir definieren nun für $k \in \mathbb{Z}$ die Abbildung
\begin{align}
   \label{eq:DefinitionQuantenHomotopie}
   \qh[k] := \h[k] \frac{\id}{\h[k-1] \qkoszul[k] + \qkoszul[k+1]\h[k]}
\end{align}
und zeigen in der nächsten Proposition, dass dies eine Homotopie
 für den augmentierten Quanten"=Koszul"=Komplex liefert.
\begin{proposition}
   \label{prop:QuantenHomotopie}
   Es gilt für alle $k \in \mathbb{Z}$ die Homotopiegleichung
   \begin{align}
      \label{eq:QuantenHomotopie}
      \qh[k-1]\qkoszul[k] + \qkoszul[k+1]\qh[k] = \Id[k]\Fdot
   \end{align}
   Für  $k = -1$ gilt explizit
   \begin{align}
      \label{eq:QuantenHomotopieGradMinusEins}
      \qh[-1] = \prol
   \end{align}
   und für $k = 0$
   \begin{align}
      \label{eq:QuantenHomotopieGradNull}
      \qh[0] = \h[0] \frac{\Id[0]}{\Id[0] + (\qkoszul[1] - \kkoszul[1])\h[0]} \Fdot
   \end{align}
\end{proposition}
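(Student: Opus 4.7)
The strategy decomposes into three parts corresponding to the three assertions of the proposition. Throughout, set $A_k := \h[k-1]\qkoszul[k] + \qkoszul[k+1]\h[k]$, which is invertible for every $k$ by Proposition~\ref{prop:Quantenhomotopie1}, so that by definition $\qh[k] = \h[k]\,A_k^{-1}$.

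For the explicit formula $\qh[-1] = \prol$ the plan is to unpack the definition directly: the convention $\h[-2] = 0$ together with $\qRes\prol = \id$ (equation~\eqref{eq:QuantenAugmentierung3}) makes $A_{-1}$ collapse to the identity, so that $\qh[-1] = \h[-1] = \prol$. For the closed form~\eqref{eq:QuantenHomotopieGradNull} of $\qh[0]$ essentially nothing remains to do: the manipulations \eqref{eq:QuantenHomotopieGradNullSymmetrieisert}--\eqref{eq:QuantenHomotopieGradNullSymmetrieisertKompakter2} carried out immediately before the proposition already rewrite the original denominator from \eqref{eq:defHomotopie} in the form $A_0 = \h[-1]\qkoszul[0] + \qkoszul[1]\h[0]$, which is precisely the denominator occurring in~\eqref{eq:DefinitionQuantenHomotopie} at $k=0$. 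The bridging ingredient there is the classical homotopy equation~\eqref{eq:globalisiertHomotopie2}.

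The core step is the general homotopy identity~\eqref{eq:QuantenHomotopie}, and the plan is to deduce it from the intertwining relation
\[
\qkoszul[k]\,A_k \;=\; A_{k-1}\,\qkoszul[k].
\]
To verify this, expand both sides: the cross terms involving two consecutive Koszul differentials $\qkoszul[k]\qkoszul[k+1]$ and $\qkoszul[k-1]\qkoszul[k]$ drop out by $\qkoszul^2 = 0$ (augmented at the boundary $k=0$ by $\qRes\circ\qkoszul[1] = 0$ from equation~\eqref{eq:QuantenAugmentierung2}), and what remains on either side is the single term $\qkoszul[k]\h[k-1]\qkoszul[k]$. Inversion then yields $A_{k-1}^{-1}\qkoszul[k] = \qkoszul[k]\,A_k^{-1}$, and substitution into the definitions of $\qh[k-1]$ and $\qh[k]$ allows $A_k^{-1}$ to be factored on the right:
\[
\qh[k-1]\qkoszul[k] + \qkoszul[k+1]\qh[k] \;=\; \bigl(\h[k-1]\qkoszul[k] + \qkoszul[k+1]\h[k]\bigr)A_k^{-1} \;=\; A_k\,A_k^{-1} \;=\; \Id[k].
\]

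The remaining degrees are dealt with by the conventions. For $k \leq -2$ both sides of~\eqref{eq:QuantenHomotopie} vanish identically since $\h[k] = \qkoszul[k] = \Id[k] = 0$, and the case $k = -1$ reduces to $\qRes\prol = \id$ via the formula $\qh[-1] = \prol$ obtained in the first step. The main subtlety I anticipate is not computational but notational: the non-uniform conventions at the low degrees ($\h[-2]=0$, $\h[-1] = \prol$, $\kkoszul[0] = \kRes$, $\Id[-1] = \id_{C^\infty(C)}$) must be tracked carefully so that the uniform intertwining argument still makes sense at the boundary $k \in \{-1, 0\}$; beyond this bookkeeping, no genuine obstacle arises because everything reduces to the complex property $\qkoszul^2 = 0$ and the invertibility already established in Proposition~\ref{prop:Quantenhomotopie1}.
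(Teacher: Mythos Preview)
Your proposal is correct and follows essentially the same route as the paper: establish the intertwining relation between $\qkoszul$ and the denominators $A_k$, invert, and factor $A_k^{-1}$ on the right. The only cosmetic discrepancy is that the paper defines $A_k$ with an extra $-\Id[k]$ (so that their denominator reads $\Id[k]+A_k$), but since the identity commutes with everything this changes neither the intertwining computation nor the final factorization; the explicit formulas for $k=-1$ and $k=0$ are handled identically.
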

\begin{proof}
   Sei $k \in \mathbb{Z}$. Wir setzen $A_k := \h[k-1]\qkoszul[k] +
   \qkoszul[k+1]\h[k] - \Id[k]$. Nach Proposition
   \ref{prop:Quantenhomotopie1} ist $\Id[k] + A_k$ invertierbar. Offensichtlich
   gilt wegen $\qkoszul^2 = 0$ und der Definition von $A_k$
   \begin{align*}
      A_k \qkoszul[k+1] = (\h[k-1]\qkoszul[k] + \qkoszul[k+1]\h[k] -
      \Id[k])\qkoszul[k+1] = 0 + \qkoszul[k+1]\h[k]\qkoszul[k+1] - \qkoszul[k+1]
   \end{align*}
   und
   \begin{align*}
      \qkoszul[k]A_k = \qkoszul[k](\h[k-1]\qkoszul[k] + \qkoszul[k+1]\h[k]
      - \Id[k]) = \qkoszul[k] \h[k-1]\qkoszul[k] + 0 - \qkoszul[k] \Fdot
   \end{align*}
Somit erhalten wir die Vertauschungsrelation
\begin{align*}
   A_k \qkoszul[k+1] = \qkoszul[k+1] A_{k+1}\Fcom
\end{align*}
woraus sich unmittelbar die Beziehung
\begin{align*}
   \frac{\Id[k]}{\Id[k] + A_k} \qkoszul[k+1] = \qkoszul[k+1]
   \frac{\Id[k+1]}{\Id[k+1] + A_{k+1}}
\end{align*}
ergibt.
Folglich ist klar, dass
\begin{align*}
   \qh[k-1]\qkoszul[k] + \qkoszul[k+1]\qh[k] &= \h[k-1]
   \frac{\Id[k-1]}{\Id[k-1] + A_{k-1}}\qkoszul[k] + \qkoszul[k+1]\h[k]
   \frac{\Id[k]}{\Id[k] + A_k} \\
   &=
   \h[k-1] \qkoszul[k]\frac{\Id[k]}{\Id[k] + A_{k}} + \qkoszul[k+1]\h[k] \frac{\Id[k]}{\Id[k] + A_k}\\
   &= (  \h[k-1]\qkoszul[k]  + \qkoszul[k+1]\h[k])\frac{\Id[k]}{\Id[k] + A_k} \\
   &= \Id[k]
\end{align*}
gilt. Die Homotopieeigenschaft ist damit gezeigt.

Für $k= -1$ finden wir
\begin{align*}
  \qh[-1] = \h[-1]\frac{\Id[-1]}{\h[-2]\qkoszul[-1] +
    \qkoszul[0]\h[-1]} = \prol \frac{\Id[-1]}{0 + \qRes\prol} = \prol \Fdot
\end{align*}
Der Rest ist nach Gleichung
\eqref{eq:QuantenHomotopieGradNullSymmetrieisertKompakter2} klar.
\end{proof}

\chapter{Quanten-Koszul-Reduktion auf Kotangentialbündeln}
\label{cha:QuantenKoszulAufKotangentialbuendel}

Wir wollen nun die in Abschnitt \ref{sec:QuantenKoszul} dargestellte
Reduktionsmethode für Sternprodukte auf den Spezialfall von
Kotangentialbündeln anwenden und das so erhaltene Sternprodukt auf dem
reduzierten Phasenraum mit dem von Kowalzig, Neumaier und Pflaum in
\cite{kowalzig.neumaier.pflaum:2005a} konstruierten vergleichen.

\section{Kotangentialbündel mit Magnetfeld}
\label{sec:KotangentialBuendelMitMagnetfeld}

Wir betrachten das Kotangentialbündel $\mathfrak{p} \dpA M = T^*Q \to Q$
einer Mannigfaltigkeit $Q$, versehen mit der symplektischen Form $\omegaB
= \omegaKan + \mathfrak{p}^* B$, wobei $\omegaKan \in
\Gamma^\infty(\Bigwedge^2 T^*(T^*Q))$ die kanonische symplektische Form
bezeichne und $B \in \Gamma^\infty(\Bigwedge^2 T^*Q)$ eine beliebige geschlossene
Zweiform ist. Es sei daran erinnert, dass $\omegaKan$ durch $\omegaKan := -
d \thetaKan $ definiert ist. Dabei ist $\thetaKan \in
\Gamma^\infty(T^*(T^*Q))$ die kanonische Einsform, welche für $\alpha_q \in
T^*_{q}Q$ und $w_{\alpha_q} \in T_{\alpha_q}(T^*Q)$ wiederum durch
\begin{align}
   \label{eq:KanonischeEinsform}
   \thetaKan\at{\alpha_q}(w_{\alpha_q}) = \alpha_q(T_{\alpha_q}\mathfrak{p}(w_{\alpha_q}))
\end{align}
gegeben ist. Eine Karte $(U,(x^1,\dots,x^n))$  für $Q$ induziert
bekanntlich (vgl.\ \cite[Abschnitt 3.2.1]{waldmann:2007a},
\cite[Prop. 4.4]{lee:2003a}) eine Karte
$(\mathfrak{p}^{-1}(U),(q^1,\dots,q^n,p_1,\dots,p_n))$ für $T^*Q$ mit
\begin{align}
   \label{eq:KotangentialbuendelKoordinaten1}
   q^i(\alpha_q) = x^i(q)
\end{align}
und
\begin{align}
   \label{eq:KotangentialbuendelKoordinaten2}
   p_i(\alpha_q) = \alpha_q \left(\frac{\partial}{\partial x^i}\at{q} \right)
\end{align}
für $i \in \{1,\dots,n\}$, $q \in Q$ und $\alpha_q \in T^*_qQ$. Man rechnet
leicht nach (vgl.\ \cite[Lem.~3.2.2]{waldmann:2007a}), dass $\thetaKan$ in diesen Koordinaten die folgende Gestalt
annimmt.
\begin{align}
   \label{eq:KanonischeEinsformInKoordinaten}
  \thetaKan\at{\mathfrak{p}^{-1}(U)} = p_i d q^i \Fdot
\end{align}
Durch Anwenden der Definition von $\omegaKan$ folgt damit auch unmittelbar
\begin{align}
   \label{eq:KanonischeSymplektischeFormInKoordinaten}
   \omegaKan\at{\mathfrak{p}^{-1}(U)} = dq^i \wedge
   d p_i \Fdot
\end{align}

Eine physikalische Interpretation der oben geschilderten Situation wäre
ein System von endlich vielen geladenen Teilchen im Sinne der klassischen
Punktmechanik, deren Bewegungen etwa aufgrund von
Zwangsbedingungen auf den Konfigurationsraum $Q$ eingeschränkt sind und
die sich zudem in einem Magnetfeld $B$ befinden.

Wir nehmen an, dass $Q$ eine Wirkung $\phi \colon G \times Q \to Q$
trägt und heben diese zu einer Wirkung $\Phi \colon G \times T^*Q \to
T^*Q$, mit $\Phi_g = \Phi(g,\cdot) := \mathsf{T}^*(\phi_{g^{-1}})$ für alle $g
\in G$. Dabei bezeichnet für $\phi \colon Q \to Q$ die Abbildung
$\mathsf{T}^*\phi \colon T^*Q \to T^*Q$ den Kotangentiallift, welcher
bekanntlich  für alle $q \in Q$, $\alpha_{\phi(q)} \in
T^*_{\phi(q)}Q$ und $v_q \in T_qQ$ durch
${\mathsf{T}^*\phi(\alpha_{\phi(q)})}{v_q} :=
\alpha_{\phi(q)}(T_q\phi(v_q))$ definiert ist. Da für $\phi \colon Q \to
Q$ der Kotangentiallift $\mathsf{T}^*\phi$ ein Vektorbündelisomorphismus
über $\phi^{-1}$ ist, siehe etwa \cite[Satz 3.2.11]{waldmann:2007a}, ist
klar, dass $\Phi$ eine Gruppenwirkung definiert. In der Tat ist $\Phi$
auch glatt, wie man etwa leicht in Bündelkarten sieht.

Man beachte, dass sich die Eigenschaften einer Wirkung, frei und
eigentlich zu sein, auf die geliftete Wirkung vererben, siehe
Proposition \ref{prop:eigentlichAnsteckend}.

Falls $B = 0$ ist, gibt es eine kanonische $G$"=äquivariante
Impulsabbildung $\JKan \colon T^*Q \to \lieAlgebra^*$, welche für $q \in
Q$ und $\alpha_q \in T_q^*Q$ durch $\JKan(\alpha_q)(\xi) :=
\dPaar{\alpha_q}{\xi_Q(q)}$ gegeben ist, vgl.\ etwa \cite[Satz 3.3.39]{waldmann:2007a}. Die Frage, unter welchen
Voraussetzungen es eine klassische Impulsabbildung für die Situation
mit $B \neq 0$ gibt und wie sie in diesem Falle aussieht, klärt das
folgende wohlbekannte Lemma, welches auch in
\cite{kowalzig.neumaier.pflaum:2005a} gezeigt wurde.

\begin{lemma}
   \label{lem:ImpulsAbbildungMitMagnetfeld}
   \begin{lemmaEnum}
   \item %
      $G$ wirkt genau dann symplektisch bezüglich $\omega_B$, wenn $B$
      $G$"=invariant ist.

   \item %
      Sei $B$ $G$"=invariant. Dann gibt es genau dann eine klassische
      Impulsabbildung bezüglich $\omegaB$, wenn eine glatte Abbildung
      $j_0 \colon Q \to \lieAlgebra^*$ existiert, so dass für alle $\xi
      \in \lieAlgebra$ die Gleichung

      \begin{align}
         \label{eq:ImpulsAbbildungMitMagnetfeld}
         d \dPaar{j_0}{\xi} = B(\xi_Q,\cdot)
      \end{align}
      erfüllt ist. In diesem Fall
      definiert  $\dPaar{\JB}{\xi} = \dPaar{\JKan}{\xi} +
      \mathfrak{p}^*\dPaar{j_0}{\xi}$ für $\xi \in \lieAlgebra$ eine
      klassische Impulsabbildung  $\JB \colon T^*Q \to
      \lieAlgebra^*$ bezüglich $\omegaB$. Weiter ist $\JB$ genau dann $G$"=äquivariant, wenn $j_0$
      $G$"=äquivariant ist.

   \end{lemmaEnum}
\end{lemma}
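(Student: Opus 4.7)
Für Teil \emph{i.)} würde ich ausnutzen, dass der Kotangentiallift stets die kanonische symplektische Form $\omegaKan$ invariant lässt. Zusammen mit der Identität $\mathfrak{p} \circ \Phi_g = \phi_{g^{-1}} \circ \mathfrak{p}$ erhält man $\Phi_g^* \omegaB = \omegaKan + \mathfrak{p}^* \phi_{g^{-1}}^* B$, so dass $\Phi_g^* \omegaB = \omegaB$ äquivalent ist zu $\mathfrak{p}^* (\phi_{g^{-1}}^* B - B) = 0$. Da $\mathfrak{p}$ eine surjektive Submersion ist, ist der Pull-back $\mathfrak{p}^*$ auf Differentialformen auf $Q$ injektiv, woraus unmittelbar die behauptete Äquivalenz zur $G$-Invarianz von $B$ folgt.

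Für Teil \emph{ii.)} liegt der Schlüssel in der Beobachtung, dass das Fundamentalvektorfeld $\xi_{T^*Q}$ der gelifteten Wirkung $\mathfrak{p}$-verwandt zu $\xi_Q$ ist (man differenziere $\mathfrak{p} \circ \Phi_g = \phi_{g^{-1}} \circ \mathfrak{p}$ bei $g=e$). Das liefert $i_{\xi_{T^*Q}} \mathfrak{p}^* B = \mathfrak{p}^*(B(\xi_Q, \cdot))$. Zusammen mit der Standardidentität $i_{\xi_{T^*Q}} \omegaKan = d\dPaar{\JKan}{\xi}$ für die kanonische Impulsabbildung erhält man
\begin{equation*}
   i_{\xi_{T^*Q}} \omegaB = d\dPaar{\JKan}{\xi} + \mathfrak{p}^*(B(\xi_Q, \cdot)) \Fdot
\end{equation*}
Gibt es nun ein glattes $j_0 \dpA Q \to \lieAlgebra^*$ mit $d\dPaar{j_0}{\xi} = B(\xi_Q, \cdot)$, so zeigt ein direktes Einsetzen, dass $\dPaar{\JB}{\xi} := \dPaar{\JKan}{\xi} + \mathfrak{p}^* \dPaar{j_0}{\xi}$ tatsächlich eine klassische Impulsabbildung bezüglich $\omegaB$ definiert.

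Für die Rückrichtung nähme ich eine beliebige Impulsabbildung $\JB$ bezüglich $\omegaB$ und betrachte die Differenz $F(\xi) := \dPaar{\JB}{\xi} - \dPaar{\JKan}{\xi} \in \CM[T^*Q]$. Nach obiger Formel gilt $dF(\xi) = \mathfrak{p}^*(B(\xi_Q, \cdot))$; insbesondere verschwindet $dF(\xi)$ auf jedem vertikalen Tangentialvektor. Da die Fasern von $\mathfrak{p}$ zusammenhängend sind (sie sind Vektorräume), ist $F(\xi)$ faserweise konstant und damit der Pull-back einer eindeutig bestimmten Funktion $\dPaar{j_0}{\xi} \in \CM[Q]$, welche wegen der Linearität von $F$ in $\xi$ auch linear in $\xi$ ist. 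Die definierende Gleichung für $j_0$ ergibt sich dann wieder aus der Injektivität von $\mathfrak{p}^*$. Die Äquivarianz-Aussage folgt schließlich daraus, dass $\JKan$ stets $G$-äquivariant ist (Standardresultat für Kotangentiallifts, vgl.\ \cite[Satz 3.3.39]{waldmann:2007a}) und $\Phi_g^* \mathfrak{p}^* = \mathfrak{p}^* \phi_{g^{-1}}^*$ gilt, wodurch sich die $G$-Äquivarianz von $\JB$ via Injektivität von $\mathfrak{p}^*$ auf die von $j_0$ zurückführen lässt.

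Ein Augenmerk ist lediglich auf die saubere Begründung der $\mathfrak{p}$-Verwandtschaft von $\xi_{T^*Q}$ und $\xi_Q$ sowie auf die Injektivität des Pull-backs faserbasierter Formen zu richten; beides sind jedoch elementare Konsequenzen der Submersionseigenschaft von $\mathfrak{p}$ und der Definition des Kotangentiallifts, so dass der Beweis im Kern aus einer einzigen Rechnung $i_{\xi_{T^*Q}}\omegaB = \dots$ besteht.
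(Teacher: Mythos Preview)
Your argument is correct and follows essentially the same route as the paper: both proofs reduce to the single computation $i_{\xi_{T^*Q}}\omegaB = d\dPaar{\JKan}{\xi} + \mathfrak{p}^*(B(\xi_Q,\cdot))$ and then use injectivity of $\mathfrak{p}^*$. One small slip: with the paper's convention $\Phi_g = \mathsf{T}^*(\phi_{g^{-1}})$ the bundle map $\Phi_g$ covers $\phi_g$, so the correct identity is $\mathfrak{p}\circ\Phi_g = \phi_g\circ\mathfrak{p}$ rather than $\phi_{g^{-1}}$; your conclusion that $\xi_{T^*Q}$ is $\mathfrak{p}$-related to $\xi_Q$ is nonetheless the right one, and since you quantify over all $g\in G$ the sign discrepancy is harmless for part~\emph{i.)} anyway. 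For the converse in part~\emph{ii.)} the paper simply sets $j_0 := n^*(J-\JKan)$ via the zero section instead of arguing fiberwise constancy from vanishing vertical differential --- the two are equivalent and yours is equally clean.
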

\begin{proof}

   \begin{beweisEnum}
      \item %
         Für $g \in G$ gilt
         \begin{align*}
            \Phi_g^* \omegaB = \Phi_g^* \omegaKan  + \Phi_g^* \mathfrak{p}^* B =
            \omegaKan + \mathfrak{p}^* \phi_g^* B
         \end{align*}
         und somit
         \begin{align*}
            \Phi_g^* \omegaB = \omegaB \iff \mathfrak{p}^* B = \mathfrak{p}^*
            \phi_g^* B \iff B = \phi_g^* B \Fdot
         \end{align*}
         Dabei wurde bei der letzten Äquivalenz die Submersivität und
         Surjektivität von
         $\mathfrak{p}$ verwendet.
      \item %
         Für $\xi \in \lieAlgebra$, $q \in Q$, $\alpha_q \in T^*_qQ$ und
         $v_{\alpha_q}\in T_{\alpha_q}T^*Q $ gilt
         \begin{align*}
            \omegaB(\alpha_q)(\xi_{T^*Q}(\alpha_q),v_{\alpha_q}) &=
            \omegaKan(\alpha_q)(\xi_{T^*Q}(\alpha_q),v_{\alpha_q}) +
            B(q)(T_{\alpha_q} \mathfrak{p} \xi_{T^*Q}(\alpha_q),T_{\alpha_q} \mathfrak{p} v_{\alpha_q})\\
            &= \dPaar{d_{\alpha_q} \JKan(\xi)}{v_{\alpha_q}} +
            B(q)(\xi_Q(q),T_{\alpha_q}
            \mathfrak{p} v_{\alpha_q}) \\
            &= \dPaar{d_{\alpha_q} \JKan(\xi)}{v_{\alpha_q}} + \dPaar{B(q)(\xi_Q(q),\cdot)}{T_{\alpha_q} \mathfrak{p} v_{\alpha_q}} \\
            &= \dPaar{d_{\alpha_q} \JKan(\xi)}{v_{\alpha_q}} +
            \dPaar{(\mathfrak{p}^*(B(\xi_Q,\cdot)))(\alpha_q)}{v_{\alpha_q}}
            \Fdot
         \end{align*}
         Somit gilt weiter
         \begin{align*}
            \lefteqn{\text{$J \colon T^*Q \to \lieAlgebra^*$ ist eine
                Impulsabbildung bzgl. $\omegaB$} } \\
            &\iff X_{J(\xi)} = \xi_{T^*Q} \quad \forall \xi \in
            \lieAlgebra \quad \text{bzgl. $\omegaB$}
            \\
            &\iff d J(\xi) = \omega_B(\xi_{T^*Q},\cdot) \quad \forall \xi \in \lieAlgebra\\
            &\iff d(J(\xi) - \JKan(\xi)) = \mathfrak{p}^*(B(\xi_Q,\cdot))
            \quad \forall \xi \in \lieAlgebra \Fdot
         \end{align*}
         Ist dann $j_0 \colon Q \to \lieAlgebra^*$ eine glatte Abbildung
         mit $d\dPaar{j_0}{\xi} = B(\xi_Q,\cdot)$, so ist nach diesen
         Überlegungen klar, dass $\JB := \JKan + \mathfrak{p}^*j_0$ eine
         Impulsabbildung bezüglich $\omegaB$ ist. Ist umgekehrt $J$ eine
         Impulsabbildung bezüglich $\omegaB$, so folgt aus obigem für
         $j_0 := n^*(J - \JKan)$ die Gleichung $d\dPaar{j_0}{\xi} =
         B(\xi_Q,\cdot)$, wobei $n \colon Q \to T^*Q$ den Nullschnitt
         bezeichne. Da $\JKan$ schließlich $G$"=äquivariant ist, ist im
         Falle der Existenz eines $j_0$ die $G$"=Äquivarianz von $\JB =
         \JKan + \mathfrak{p}^*j_0$ äquivalent zur $G$"=Äquivarianz von
         $j_0$.
   \end{beweisEnum}
\end{proof}

\begin{bemerkung}
   \label{bem:KanonischeImpulsabbildung}
   Ist $B = 0$, so kann man in Lemma
   \ref{lem:ImpulsAbbildungMitMagnetfeld}  $j_0 = 0$
   wählen und erhält $\JB = \JKan$.
\end{bemerkung}

Wir wollen im Folgenden immer die Existenz einer klassischen
$G$"=äquivarianten Impulsabbildung $\JB$ bezüglich $\omegaB$ der Form
$\dPaar{\JB}{\xi} = \dPaar{\JKan}{\xi} + \mathfrak{p}^*\dPaar{j_0}{\xi}$
für $\xi \in \lieAlgebra$ mit einem $G$"=äquivarianten $j_0$ wie in
Lemma \ref{lem:ImpulsAbbildungMitMagnetfeld} annehmen und stets nur
solche betrachten, wenn wir von $\JB$ sprechen.

Bezeichne $VQ$ das Vertikalbündel des Hauptfaserbündels $\varpi \dpA Q \to Q/G$,
dessen Fasern für alle $q \in Q$ durch $V_qQ := \{\xi_Q(q) \mid \xi \in \lieAlgebra
\}$ gegeben sind.

Weiter sei $\gamma \colon TQ \to \lieAlgebra$ eine
Zusammenhangseinsform für $\varpi \dpA Q \to Q/G$
(vgl.\ Prop.~\ref{prop:ExistenzHauptfaserbuendelzusammenhang} und Prop. \ref{prop:HauptfaserBuendelZusammenhang}). Dies
liefert eine Zerlegung des Tangentialbündels
\begin{align}
   \label{eq:ZerlegungTangentialbuendel}
   TQ = HQ \oplus  VQ  \Fcom
\end{align}
wobei $HQ$ das von $\gamma$ induzierte Horizontalbündel bezeichne,
dessen Fasern  für alle $q \in Q$ durch $H_qQ =
\{v_q \in T_qQ \mid \gamma(v_q) = 0 \in \lieAlgebra\}$ gegeben
sind. Setzen wir dann $H^{*^\gamma}_qQ := \{\alpha \in T_{q}^*Q \mid
\alpha\at{V_qQ} = 0\}$ und $V^{*^\gamma}_qQ := \{\alpha_q \in T_q^*Q
\mid \alpha_q\at{H_qQ} = 0\}$, so sind $H^{*^\gamma}Q := \bigcup_{q \in
  Q}H^{*^\gamma}_qQ$ und $V^{*^\gamma}Q := \bigcup_{q \in
  Q}V^{*^\gamma}_qQ$ Untervektorbündel von $T^*Q$ und es gilt $T^*Q =
H^{*^\gamma}Q \oplus V^{*^\gamma}Q $. Man beachte, dass die Zerlegungen
$TQ = HQ \oplus VQ$ und $T^*Q = H^{*^\gamma}Q \oplus V^{*^\gamma}Q$
$G$"=invariant sind. Dabei operiert $G$ auf $TQ$ via $G \times TQ \ni (g,v)
\mapsto T\phi_gv \in TQ$.

Die Vektorbündelstruktur von $T^*Q$ erlaubt es uns, von polynomialen Funktionen
\begin{align*}
   \mathcal{P}(Q) := \{f \in C^\infty(T^*Q) \mid f\at{T^*_qQ} \text{ ist
     polynomial für alle $q \in Q$ } \} \subset C^\infty(T^*Q)
\end{align*}
in den Impulskoordinaten, oder kurz von polynomialen Funktionen, zu
sprechen. Um diese Zusatzstruktur nutzen zu können, erscheint es
sinnvoll, sich auf Sternprodukte $\star$ zu beschränken, für die
$\mathcal{P}(Q)[[\lambda]]$ eine $\star$"=Unteralgebra von
$C^\infty(T^*Q)[[\lambda]]$ bildet.

\cite{kowalzig.neumaier.pflaum:2005a} folgend werden wir in diesem
Kapitel nur Sternprodukte $\star$ für $(M,\omega_B)$ mit den folgenden
Eigenschaften betrachten.

\begin{compactitem}
\item $\star$ ist $G$"=invariant, d.\,h.\ invariant bezüglich der
   gelifteten Wirkung $\Phi$ auf $T^*Q$.
\item $\qJ[B]$ ist eine $G$"=äquivariante Quantenimpulsabbildung für die
   klassische Impulsabbildung $J_B$ und das Sternprodukt $\star$.
\item $\mathcal{P}(Q)[[\lambda]]$ ist eine $\star$"=Unteralgebra von
   $C^\infty(T^*Q)[[\lambda]]$.
\item $\qJ[B](\xi) \in \mathcal{P}(Q)[[\lambda]]$ für alle $\xi \in \lieAlgebra$.
\end{compactitem}

Kowalzig, Neumaier und Pflaum konnten mit einigem Aufwand zeigen, dass
die letzte Forderung schon aus den restlichen folgt, siehe
\cite[Cor.~4.14~ii)]{kowalzig.neumaier.pflaum:2005a}. Man beachte, dass
für die klassische Impulsabbildung nach Konstruktion schon $J_B(\xi) \in
\mathcal{P}(Q)$ für alle $\xi \in \lieAlgebra$ erfüllt ist.

\section{Reduziertes Sternprodukt für verschwindenden Impulswert und verschwindendes Magnetfeld}
\label{sec:SpezialFallImpulsMagnetfeldNull}

In diesem Abschnitt wenden wir uns dem Spezialfall verschwindender
Impulswerte $\mu = 0$ und verschwindender Magnetfelder zu. Wir nehmen
also $B = 0$ sowie $J_B = \JKan$ an und schreiben $\qJ[\mathrm{kan}] =
\qJ[B]$.

 Wegen $\JKan(\xi)(\alpha_q) =
\alpha_q(\xi_Q(q))$ für $q \in Q$ und $\alpha_q \in T_q^*Q$ gilt
offensichtlich
\begin{align}
   \label{eq:ImpulsniveauFlaecheSpezialfall}
   C_{\mathrm{kan}} := \JKan^{-1}(0) &= \{\alpha \in T^*Q \mid
   \JKan(\alpha)(\xi) = 0 \,
   \forall \xi \in \lieAlgebra\} \notag \\
   &= \{\alpha \in T^*Q \mid \alpha(\xi_Q(\mathfrak{p}(\alpha))) = 0 \,
   \forall \xi \in \lieAlgebra\} = H^{*^\gamma}Q \Fdot
\end{align}
Ist $\mu \in \lieAlgebra^*$, so wird durch
\begin{align}
   \label{eq:Zusammenhangseinsform}
   \dPaar{\Gamma_\mu(q)}{v_q}
   := \dPaar{\mu}{\gamma(v_q)} \quad \text{für $q \in Q$, $v_q \in T_qQ$}
\end{align}
eine glatte Einsform
\begin{align}
   \Gamma_\mu \colon Q \ni q \mapsto \Gamma_\mu(q) \in T^*Q
\end{align}
definiert und es gilt
$\Gamma_{\mu}(q) \in V^{*^\gamma}_qQ$ für alle $q \in Q$.
Wir betrachten als nächstes die Abbildung
\begin{align}
   \label{eq:globaleTube}
   \Psi' \dpA C_{\mathrm{kan}} \times \lieAlgebra^* \to T^*Q = C_{\mathrm{kan}} \oplus
   V^{*^\gamma}Q, \quad (c,\mu) \mapsto c + \Gamma_\mu(\mathfrak{p}(c))
   \Fdot
\end{align}

\begin{proposition}
   \label{prop:globaleTubenabbildung}
   $\Psi'$ ist eine  globale, glatte, $G$"=äquivariante, \tn{gute}
   Tubenumgebung von $C_{\mathrm{kan}}$ in $T^*Q$.
\end{proposition}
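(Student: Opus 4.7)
The plan is to unpack $\Psi'$ via the $\gamma$"=induced decompositions $TQ = HQ \oplus VQ$ and $T^*Q = H^{*^\gamma}Q \oplus V^{*^\gamma}Q$, show directly that $\Psi'$ is a $G$"=äquivarianter Diffeomorphismus from the \textit{gesamten} $C_{\mathrm{kan}} \times \lieAlgebra^*$ auf $T^*Q$, und anschließend die definierenden Eigenschaften einer \tn{guten} Tubenumgebung im Sinne von Satz~\ref{satz:CxgA} und Bemerkung~\ref{bem:SpezielleTubenumgebung} nachzurechnen. Da $C_{\mathrm{kan}} \times \lieAlgebra^*$ faserweise der ganze Vektorraum $\lieAlgebra^*$ ist, ist die Sternförmigkeit in Faserrichtung trivial; der eigentliche Aufwand liegt im Nachweis der Bijektivität und der Smoothness der Umkehrabbildung.

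Zuerst beobachte ich, dass $\Gamma_\mu(q) \in V^{*^\gamma}_qQ$ liegt, denn für $v_q \in H_qQ$ gilt $\gamma(v_q) = 0$ und somit $\dPaar{\Gamma_\mu(q)}{v_q} = \dPaar{\mu}{0} = 0$. Da die Wirkung frei ist, ist die Einsetzabbildung $\lieAlgebra \ni \xi \mapsto \xi_Q(q) \in V_qQ$ ein linearer Isomorphismus, und $\gamma\at{V_qQ}$ ist ihre Inverse. Folglich ist $\lieAlgebra^* \to V^{*^\gamma}_qQ, \mu \mapsto \Gamma_\mu(q)$ ein linearer Isomorphismus. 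Zusammen mit $T^*Q = H^{*^\gamma}Q \oplus V^{*^\gamma}Q$ und $C_{\mathrm{kan}} = H^{*^\gamma}Q$ ergibt dies die Bijektivität von $\Psi'$. Die Umkehrabbildung lässt sich explizit als $(\Psi')^{-1}(\alpha_q) = (\mathrm{pr}_{H^{*^\gamma}}(\alpha_q), \JKan(\alpha_q))$ schreiben, wobei $\mathrm{pr}_{H^{*^\gamma}}$ die glatte Projektion entlang $V^{*^\gamma}Q$ bezeichnet. Für die zweite Komponente beachte man, dass für $\alpha_q = c + \Gamma_\mu(q)$ mit $c \in H^{*^\gamma}_qQ$ und $\xi \in \lieAlgebra$
\begin{align*}
   \JKan(\alpha_q)(\xi) = \alpha_q(\xi_Q(q)) = c(\xi_Q(q)) + \dPaar{\mu}{\gamma(\xi_Q(q))} = 0 + \dPaar{\mu}{\xi}
\end{align*}
gilt, d.\,h.\ $\JKan \circ \Psi'(c,\mu) = \mu$. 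Damit sind Glattheit von $\Psi'$ (linear in $\mu$, glatt in $q$) und Glattheit von $(\Psi')^{-1}$ (Komposition glatter Abbildungen) klar.

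Für die $G$"=Äquivarianz verwende ich die wohlbekannte Äquivarianz der Zusammenhangseinsform $\gamma \circ T\phi_g = \Ad_g \circ \gamma$. Für $g \in G$, $q \in Q$ und $v_{gq} \in T_{gq}Q$ liefert dies
\begin{align*}
   (\Phi_g \Gamma_\mu(q))(v_{gq}) = \Gamma_\mu(q)(T\phi_{g^{-1}}v_{gq}) = \dPaar{\mu}{\Ad_{g^{-1}}\gamma(v_{gq})} = \dPaar{\Ad^*_{g^{-1}}\mu}{\gamma(v_{gq})} = \Gamma_{\Ad^*_{g^{-1}}\mu}(gq)(v_{gq}) \Fcom
\end{align*}
also $\Phi_g \Gamma_\mu(q) = \Gamma_{\Ad^*_{g^{-1}}\mu}(gq)$. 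Da $\Phi_g$ faserweise linear ist und $\mathfrak{p}$ $G$"=äquivariant ist, folgt $\Psi'(gc, \Ad^*_{g^{-1}}\mu) = \Phi_g \Psi'(c,\mu)$ bezüglich der in Korollar~\ref{kor:Cxg} beschriebenen $G$"=Wirkung auf $C_{\mathrm{kan}} \times \lieAlgebra^*$.

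Schließlich bleibt $\Psi'(c,0) = c + \Gamma_0(\mathfrak{p}(c)) = c$ sowie die schon bewiesene Identität $\JKan \circ \Psi'(c,\mu) = \mu$ festzuhalten; diese beiden Gleichungen entsprechen genau der Bedingung~\eqref{eq:TubeneigenschaftDerSpeziellenTube1}. Die Sternförmigkeit in Faserrichtung ist wegen $V = C_{\mathrm{kan}} \times \lieAlgebra^*$ trivial. Die Hauptarbeit besteht somit in der Identifikation der richtigen Zerlegung und dem elementaren linearen Algebra"=Argument für die faserweise Isomorphie $\mu \mapsto \Gamma_\mu(q)$; alle weiteren Eigenschaften folgen dann direkt. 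Besondere Hindernisse erwarte ich nicht, da die Globalität hier ein Geschenk der Vektorbündelstruktur von $T^*Q$ ist und wir die technisch aufwändigere Tubenkonstruktion aus Satz~\ref{satz:CxgA} in diesem Fall gar nicht bemühen müssen.
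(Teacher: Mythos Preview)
Your proof is correct and follows essentially the same route as the paper: both exploit the decomposition $T^*Q = H^{*^\gamma}Q \oplus V^{*^\gamma}Q$ and the fiberwise isomorphism $\mu \mapsto \Gamma_\mu(q)$ onto $V^{*^\gamma}_qQ$, the equivariance of $\gamma$, and the identity $\JKan \circ \Psi'(c,\mu) = \mu$. The only cosmetic difference is that you write down the inverse $(\Psi')^{-1}$ explicitly and check its smoothness, whereas the paper argues more briefly that $\Psi'$ is an injective vector bundle morphism over $\id_Q$ between bundles of equal rank and hence a diffeomorphism.
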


\begin{proof}
$\Psi'$ ist offensichtlich glatt. Seien $(c,\mu), (c',\mu') \in C_{\mathrm{kan}} \times
\lieAlgebra^*$ mit $c + \Gamma_\mu(\mathfrak{p}(c)) = c' + \Gamma_{\mu'}(\mathfrak{p}(c))$,
so erhält man sofort $c = c'$ und $\dPaar{\mu -
  \mu'}{\gamma(v_{\mathfrak{p}(c)})} = 0$ für alle $v_{\mathfrak{p}(c)} \in
T_{\mathfrak{p}(c)}Q$. Mit $\gamma(\xi_Q(\mathfrak{p}(c))) = \xi$ für $\xi \in \lieAlgebra$

folgt weiter $\dPaar{\mu - \mu'}{\xi} = 0$ für alle $\xi \in
\lieAlgebra$, also auch $\mu = \mu'$. Somit ist klar, dass $\Psi'$
injektiv ist. Sieht man $C_{\mathrm{kan}} \times \lieAlgebra^*$ als Vektorbündel über
$Q$ an, so ist $\Psi'$ eine Vektorbündelmorphismus über der
Identität, womit $\Psi'$ wegen der Injektivität und aus
Dimensionsgründen schon ein Diffeomorphismus sein muss.

Dass es sich tatsächlich auch um eine Tubenabbildung handelt, ist klar,
da $\Psi'(c,0) = c$ für alle $c \in C_{\mathrm{kan}}$.
Um die Äquivarianz nachzuprüfen, sei $g \in G$. Dann gilt für $q \in Q$
und $v_{gq} \in T_{gq}Q$
\begin{align*}
   \dPaar{\Gamma_{g\mu}(gq)}{v_{gq}} &= \dPaar{g \mu}{\gamma(v_{gq})} =
   \dPaar{\mu}{g^{-1} \gamma(v_{gq})}\\ &=
   \dPaar{\mu}{\gamma(g^{-1} v_{gq})} =
   \dPaar{\Gamma_{\mu}(q)}{g^{-1} v_{gq}} = \dPaar{g
     \Gamma_\mu(q)}{v_{gq}} \Fdot
\end{align*}
Demnach ist für alle $c \in C_{\mathrm{kan}}$, $g \in G$ und $\mu \in \lieAlgebra^*$ die Gleichung
\begin{align*}
   \Psi'(g(c,\mu)) &= \Psi'((gc,g\mu)) = gc + \Gamma_{g\mu}(\mathfrak{p}(gc)) \\ &= gc
   + \Gamma_{g\mu}(g\mathfrak{p}(c)) = g(c + \Gamma_\mu(\mathfrak{p}(c))) = g\Psi'((c,\mu))
\end{align*}
richtig.
Sei nun $c \in C_{\mathrm{kan}}$ und $\mu \in \lieAlgebra^*$. Dann gilt für alle $\xi \in
\lieAlgebra$
\begin{align*}
   \dPaar{\JKan(\Psi'(c,\mu))}{\xi} &= \dPaar{\Psi'(c,\mu)}{\xi_Q(\mathfrak{p}(c))} \\ &=
   \dPaar{c}{\xi_Q(\mathfrak{p}(c))} + \dPaar{\Gamma_\mu(\mathfrak{p}(c))}{\xi_Q(\mathfrak{p}(c))}
   = \dPaar{\mu}{\gamma(\xi_Q(\mathfrak{p}(c)))} = \dPaar{\mu}{\xi} \Fdot
\end{align*}
Dabei wurde im dritten Schritt $c \in C_{\mathrm{kan}} =
H_{\mathfrak{p}(c)}^{*^\gamma}Q$ und $\xi_Q(\mathfrak{p}(c)) \in
V_{\mathfrak{p}(c)}Q$ ausgenutzt.

Demnach ist $\Psi'$ auch tatsächlich eine \tn{gute} Tubenabbildung.
\end{proof}

Wir schreiben im Folgenden $\Psi := \Psi'^{-1}$.

Die Tatsache, dass wir in diesem Abschnitt nicht eine beliebige
symplektische Mannigfaltigkeit mit irgendeiner $G$"=Wirkung
betrachteten, sondern ein Kotangentialbündel mit gelifteter
$G$"=Wirkung, verhalf, nach Wahl einer Zusammenhangseinsform, die
besonders einfache globale, \tn{gute} Tubenabbildung $\Psi'$ zu
konstruieren. Insbesondere war die Vektorbündelstruktur von $T^*Q$
wichtig. Diese wollen wir auch in den weiteren Betrachtungen massiv
ausnutzen. Einerseits erlaubt sie uns, wie schon erwähnt, von
polynomialen Funktionen, zu sprechen. Die gewählte Zusammenhangseinsform
liefert andererseits eine Zerlegung des Tangentialbündels von $Q$ und
damit, wie wir in Kürze genauer darlegen werden, auch Zerlegungen von
davon induzierten Tensorbündelkonstruktionen. Diese Zerlegungen
übertragen sich bekanntlich auf die polynomialen Funktionen
$\mathcal{P}(Q)$, da diese zu den Schnitten der symmetrischen Potenzen
von $TQ$ in Bijektion stehen. An diese Tatsache möchten wir in der
folgenden Proposition erinnern, vgl.\ auch \cite[2.2.23]{waldmann:2007a}.
Dazu sei
\begin{align}\label{eq:SymmetrischeAlgebraDefGleichung}
   \Gamma^\infty(\Bigvee TQ) := \bigoplus_{k=0}^\infty
   \Gamma^\infty({\Bigvee}^k TQ)
\end{align}
die symmetrische Algebra des $C^\infty(Q)$"=Moduls $\Gamma^\infty(TQ)$.

\begin{proposition}
   \label{prop:universelleImpulsabbildung}

   Die lineare Abbildung
   \begin{align}
      \Gamma^\infty( TQ) \to \mathcal{P}(Q), s \mapsto
      (T^*Q \ni \alpha \mapsto \dPaar{\alpha}{s(\mathfrak{p}(\alpha))})
   \end{align}
   besitzt eine eindeutige Fortsetzung zu einem Algebraisomorphismus
   \begin{align}
      \mathsf{P}\dpA \Gamma^\infty(\Bigvee TQ) \to \mathcal{P}(Q)\Fdot
   \end{align}
   Es gilt insbesondere für $k \in \mathbb{N}\setminus\{0\}$ und $s \in
   \Bigvee^k \Gamma^\infty(T^*Q)$
   \begin{align}
      \label{eq:universelleImpulsabbildung}
      \mathsf{P}(s)(\alpha_q) = \frac{1}{k!}
s(q)(\alpha_q,\dots,\alpha_q)
   \end{align}
   für alle $q \in Q$ und $\alpha_q \in T_q^*Q$ sowie
   \begin{align}
      \label{eq:universelleImpulsabbildungGrad0}
      \mathsf{P}(\chi) = \mathfrak{p}^* \chi
   \end{align}
   für $\chi \in \CM[Q]$.

\end{proposition}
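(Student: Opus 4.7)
Der Beweis besteht aus drei Schritten, die ich kurz umreißen möchte. Zunächst würde ich die universelle Eigenschaft der symmetrischen Algebra über $\CM[Q]$ ausnutzen. Die polynomialen Funktionen $\mathcal{P}(Q)$ bilden vermöge des pull-backs $\mathfrak{p}^* \colon \CM[Q] \to \CM[T^*Q]$ eine kommutative $\CM[Q]$"=Algebra. Die natürliche Abbildung
\begin{align*}
   \phi \colon \Gamma^\infty(TQ) \to \mathcal{P}(Q), \quad s \mapsto \left(\alpha_q \mapsto \dPaar{\alpha_q}{s(q)}\right)
\end{align*}
ist $\CM[Q]$"=linear, denn für $f \in \CM[Q]$ und $s \in \Gamma^\infty(TQ)$ gilt $\phi(fs)(\alpha_q) = f(q)\dPaar{\alpha_q}{s(q)} = (\mathfrak{p}^*f)(\alpha_q) \cdot \phi(s)(\alpha_q)$. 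Nach der universellen Eigenschaft der symmetrischen Algebra lässt sich $\phi$ daher eindeutig zu einem $\CM[Q]$"=Algebrahomomorphismus $\mathsf{P} \colon \Gamma^\infty(\Bigvee TQ) \to \mathcal{P}(Q)$ fortsetzen, wobei insbesondere $\mathsf{P}(\chi) = \mathfrak{p}^* \chi$ für $\chi \in \CM[Q]$ unmittelbar aus der Homomorphismuseigenschaft folgt.

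Im zweiten Schritt werde ich die explizite Formel \eqref{eq:universelleImpulsabbildung} verifizieren. Für zerlegbare Elemente $s_1,\ldots,s_k \in \Gamma^\infty(TQ)$ liefert die Konstruktion nach
\begin{align*}
   \mathsf{P}(s_1 \vee \cdots \vee s_k)(\alpha_q) = \prod_{i=1}^k \dPaar{\alpha_q}{s_i(q)}.
\end{align*}
Bei der üblichen Konvention $s_1 \vee \cdots \vee s_k = \sum_{\sigma \in S_k} s_{\sigma(1)} \otimes \cdots \otimes s_{\sigma(k)}$ ergibt die Auswertung $(s_1 \vee \cdots \vee s_k)(q)(\alpha_q,\ldots,\alpha_q) = k!\prod_i\dPaar{\alpha_q}{s_i(q)}$, woraus durch $\CM[Q]$"=lineare Fortsetzung die Formel \eqref{eq:universelleImpulsabbildung} für beliebiges $s \in \Gamma^\infty(\Bigvee^k TQ)$ folgt. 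Insbesondere landet $\mathsf{P}(s)$ tatsächlich in $\mathcal{P}(Q)$, da die Einschränkung auf jede Faser $T^*_qQ$ ein homogenes Polynom vom Grad $k$ ist.

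Im dritten Schritt zeige ich die Bijektivität lokal. Zu einer Karte $(U,x)$ von $Q$ mit induzierten Kotangentialbündelkoordinaten $(q^i,p_i)$ auf $\mathfrak{p}^{-1}(U)$ gilt offensichtlich $p_i = \mathsf{P}(\partial/\partial x^i)\at{\mathfrak{p}^{-1}(U)}$. Jedes $f \in \mathcal{P}(Q)$ besitzt lokal eine eindeutige Darstellung
\begin{align*}
   f\at{\mathfrak{p}^{-1}(U)} = \sum_{k} \sum_{i_1 \leq \cdots \leq i_k} f^{i_1\cdots i_k}(q) \, p_{i_1} \cdots p_{i_k}
\end{align*}
mit glatten Koeffizienten $f^{i_1\cdots i_k}$ (endliche Summe über $k$). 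Zusammen mit der Homomorphieeigenschaft liegt $f$ also lokal im Bild der entsprechenden lokalen Version von $\mathsf{P}$; eine der offenen Überdeckung durch Kartenumgebungen untergeordnete Zerlegung der Eins auf $Q$ globalisiert dies zur Surjektivität. Für die Injektivität nutzt man die Eindeutigkeit der lokalen Polynomdarstellung: Aus $\mathsf{P}(s) = 0$ ergibt sich, dass in jeder Karte alle Koordinatenkoeffizienten von $s$ verschwinden, also $s = 0$ gilt.

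Die wesentliche Schwierigkeit liegt weniger im konzeptionellen Argument -- das im Kern durch die universelle Eigenschaft des symmetrischen Tensorprodukts geliefert wird -- als vielmehr in der sauberen Behandlung der Konventionen für das symmetrische Produkt samt Faktor $1/k!$ und in der korrekten Globalisierung der lokalen Koordinatendarstellung mittels Zerlegung der Eins.
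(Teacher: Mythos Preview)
Your argument is correct and follows the standard route; the paper itself gives no proof beyond ``Klar'' and a reference to \cite[2.2.23]{waldmann:2007a}, so you have in fact supplied considerably more detail than the paper does. The three steps you outline---universal property of the symmetric algebra over $\CM[Q]$, verification of the factor $1/k!$ on decomposables, and local bijectivity via bundle charts and a partition of unity---are exactly how one would spell out the ``Klar''.
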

\begin{proof}
Klar.

\end{proof}

\begin{bemerkung}
   \label{bem:GInvarianzDesIsoszwischenPolysUndSyms}
   Die Wirkung von $G$ auf $TQ$ via $G \times TQ \ni (g,v) \mapsto
   T\phi_gv \in TQ$ induziert für jedes $k \in \mathbb{N}$ eine Wirkung
   auf $\bigvee^kTQ$ vermöge linearer Fortsetzung von $g(v_q^1 \vee
   \dots \vee v_q^n) := g v_q^1 \vee \dots \vee g v_q^n$ für alle $g \in
   G$, $q \in Q$ und $v_q^1,\dots,v_q^n \in T_qQ$, insbesondere gilt
   $(gv)(\alpha_q^1,\dots,\alpha_q^n) = v(g^{-1} \alpha_q^1,\dots,g^{-1}
   \alpha_q^n)$. Dies liefert bekanntlich wieder eine $G$"=Wirkung auf
   den Schnitten $\Gamma^\infty(\bigvee^kTQ)$ durch $(gs)(q) :=
   g(s(g^{-1}q))$ für alle $g \in G$, $q \in Q$ und $s \in
   \Gamma^\infty(\bigvee^k TQ)$. Damit sieht man leicht, dass $\mathsf{P}$
   $G$"=äquivariant ist.
\end{bemerkung}

\begin{proposition}
   \label{prop:Zerlegung}
   $\mathrm{PV}(\Gamma^\infty(\Bigvee TQ)) := \bigoplus_{m = 0,n =
     1}^\infty \Gamma^\infty(\Bigvee^m HQ \otimes \Bigvee^n VQ)$ wird
   via $(h^m\otimes v^n) \vee (h'^{m'} \otimes v'^{n'}) := (h^m \vee
   h'^{m'}) \otimes (v^n \vee v'^{n'})$, für alle $m,m',n,n' \in
   \mathbb{N}$ und $h^m\in \Gamma^\infty(\Bigvee^m HQ)$, $h'^{m'} \in
   \Gamma^\infty(\Bigvee^{m'} HQ)$, $v^n \in \Gamma^\infty(\Bigvee^n
   VQ)$ und $v'^{n'}\in \Gamma^\infty(\Bigvee^{n'} VQ)$ und
   $C^\infty(Q)$"=linearer Fortsetzung zu einer assoziativen,
   $\mathbb{N}\times \mathbb{N}^+$"=gradierten Algebra, wobei $x \in
   \mathrm{PV}(\Gamma^\infty(\Bigvee TQ))$ genau dann
   \neuerBegriff{homogen} vom Grade $(m,n) \in \mathbb{N} \times
   \mathbb{N}^+$ ist, wenn $x \in \Gamma^\infty(\Bigvee^m HQ \otimes
   \Bigvee^n VQ)$. Ebenso ist $\bigoplus_{m,n = 0}^\infty
   \Gamma^\infty(\Bigvee^m HQ \otimes \Bigvee^n VQ)$ eine assoziative,
   $\mathbb{N}\times \mathbb{N}$"=gradierte Algebra.
\end{proposition}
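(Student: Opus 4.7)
The plan is to reduce the statement to the well-known fact that for a direct sum decomposition $E = E_1 \oplus E_2$ of a locally free $C^\infty(Q)$-module, the symmetric algebra factors as $\Bigvee E \cong \Bigvee E_1 \otimes_{C^\infty(Q)} \Bigvee E_2$ as $\mathbb{N}$-graded algebras. Once this tensor factorization is in place, the bi-grading and the claimed multiplication formula drop out by inspection, and associativity inherits from associativity of $\vee$ on each factor.

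First I would establish the algebra structure on each factor separately. Since $HQ$ and $VQ$ are (smooth) subbundles of $TQ$ with $TQ = HQ \oplus VQ$, the modules $\Gamma^\infty(\Bigvee HQ) = \bigoplus_{m \geq 0} \Gamma^\infty(\Bigvee^m HQ)$ and $\Gamma^\infty(\Bigvee VQ) = \bigoplus_{n \geq 0} \Gamma^\infty(\Bigvee^n VQ)$ are associative, commutative, $\mathbb{N}$-graded $C^\infty(Q)$-algebras under the symmetric product $\vee$, with $C^\infty(Q)$ sitting in degree zero as the unit. Second, I would define the bilinear map on the bi-graded module $\bigoplus_{m,n \geq 0} \Gamma^\infty(\Bigvee^m HQ \otimes \Bigvee^n VQ)$ by the formula stated in the proposition and check that it is well-defined (since $\otimes$ here is over $C^\infty(Q)$, bilinearity and $C^\infty(Q)$-linearity of the two factor-wise products pass through).

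Third, associativity follows from
\begin{equation*}
\bigl((h \otimes v) \vee (h' \otimes v')\bigr) \vee (h'' \otimes v'')
= (h \vee h' \vee h'') \otimes (v \vee v' \vee v'')
= (h \otimes v) \vee \bigl((h' \otimes v') \vee (h'' \otimes v'')\bigr),
\end{equation*}
where the middle equality uses associativity of $\vee$ on $\Gamma^\infty(\Bigvee HQ)$ and on $\Gamma^\infty(\Bigvee VQ)$ separately. That the product respects the bi-grading is immediate from the very definition, sending bi-degrees $(m,n)$ and $(m',n')$ to $(m+m', n+n')$.

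For the variant $\mathrm{PV}(\Gamma^\infty(\Bigvee TQ))$ where $n \geq 1$, I would simply note that this is closed under the product, since $n, n' \geq 1$ implies $n + n' \geq 2 \geq 1$; one obtains a (non-unital) associative $\mathbb{N} \times \mathbb{N}^+$-graded subalgebra of the full bi-graded algebra. No serious obstacle is anticipated: the claim is essentially a bookkeeping statement about how the symmetric algebra of a direct sum factors, and the only mild care needed is to verify that the tensor products are taken over $C^\infty(Q)$ so that everything is well-defined on sections.
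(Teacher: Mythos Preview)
Your proposal is correct and is precisely the standard unpacking of why this is clear; the paper itself simply writes ``Klar.'' as proof. Your argument --- factoring via $\Bigvee(E_1 \oplus E_2) \cong \Bigvee E_1 \otimes \Bigvee E_2$, checking associativity componentwise, and noting that $\mathrm{PV}$ is closed under the product --- is exactly what underlies that one-word proof.
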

\begin{proof}
Klar.
\end{proof}

\begingroup
\emergencystretch=0.8em
\begin{definition}
   \label{def:TeilweiseUndVollstaendigVertikaleSchnitte} %
   Wir wollen Elemente der gradierten Algebra
   $\mathrm{H}(\Gamma^\infty(\Bigvee TQ)) :=
   \bigoplus_{k=0}^\infty\Gamma^\infty(\Bigvee^k HQ)$ als
   \neuerBegriff{total horizontale Schnitte} und Elemente von
   $\mathrm{PV}(\Gamma^\infty(\Bigvee TQ))$ als \neuerBegriff{teilweise
     vertikale Schnitte} bezeichnen. Ein  vom Grade $(m,n) \in
   \mathbb{N} \times \mathbb{N}^+$ homogenes Element $T \in \Gamma^\infty(\Bigvee^m
   HQ \otimes \Bigvee^n VQ) $ bezeichnen wir auch als
   \neuerBegriff{horizontal vom Grad $m$} und \neuerBegriff{vertikal vom
     Grad $n$}.
\end{definition}
\endgroup

Mit den Regeln der Tensorrechnung,
vgl.\ Proposition \ref{prop:RechenregelnSchnitteUndSummen}, sehen wir dann
\begingroup
\allowdisplaybreaks[1]
\begin{align}
   \label{eq:ZerlegungSymmetrischeSchnitte}
   \Gamma^\infty(\Bigvee TQ) &= \bigoplus_{k=0}^\infty\Gamma^\infty(\Bigvee^k TQ) \notag \\
   &=\bigoplus_{k=0}^\infty \Gamma^\infty(\Bigvee^k(HQ \oplus VQ)) \notag \\
   &\simeq \bigoplus_{k=0}^\infty
   \Gamma^\infty(\bigoplus_{l=0}^k(\Bigvee^{k-l}HQ \otimes \Bigvee^l
   VQ)) \notag \\
   &\simeq \bigoplus_{k=0}^\infty\bigoplus_{l=0}^k
   \Gamma^\infty(\Bigvee^{k-l}HQ \otimes \Bigvee^l
   VQ) \notag \\
   &= \bigoplus_{m,n=0}^\infty \Gamma^\infty(\Bigvee^{m}HQ \otimes
   \Bigvee^n
   VQ) \notag \\
   &= \bigoplus_{k=0}^\infty \Gamma^\infty(\Bigvee^k HQ) \oplus
   \bigoplus_{m=0,n=1}^\infty \Gamma^\infty(\Bigvee^{m}HQ \otimes \Bigvee^n VQ) \notag
   \\
   &= \mathrm{H}(\Gamma^\infty (\Bigvee TQ)) \oplus \mathrm{PV}(\Gamma^\infty
   (\Bigvee TQ)) \Fdot
\end{align}
\endgroup
Die Projektionen zu dieser Zerlegung wollen wir im Folgenden mit
$\mathrm{H} \dpA \Gamma^\infty (\Bigvee TQ) \to \mathrm{H}(\Gamma^\infty
(\Bigvee TQ))$ und $\mathrm{PV} \dpA \Gamma^\infty (\Bigvee TQ) \to
\mathrm{PV}(\Gamma^\infty (\Bigvee TQ))$ bezeichnen. Offensichtlich ist
die Zerlegung \eqref{eq:ZerlegungSymmetrischeSchnitte} $G$"=invariant,
womit auch die Projektionen $\mathrm{H}$ und $\mathrm{PV}$
$G$"=äquivariant sind.  Die obige Zerlegung induziert mit Hilfe des
Isomorphismus $\mathsf{P}$ aus Proposition~\ref{prop:universelleImpulsabbildung} eine Zerlegung der polynomialen
Funktionen $\mathcal{P}(Q)$ in den Raum der \neuerBegriff{total
  horizontalen} polynomialen Funktionen $\mathrm{h}(\mathcal{P}(Q))$ und
den Raum der \neuerBegriff{teilweise vertikalen} polynomialen Funktionen
$\mathrm{pv}(\mathcal{P}(Q))$.
\begin{align}
  \mathcal{P}(Q) = \mathrm{h}(\mathcal{P}(Q)) \oplus \mathrm{pv}(\mathcal{P}(Q))
\end{align}
Die Projektionen bezüglich dieser Zerlegung sind dann durch
\begin{align}
   \label{eq:pvundh}
   \mathrm{pv} = \mathsf{P} \circ \mathrm{PV} \circ \mathsf{P}^{-1}
   \quad \text{und} \quad \mathrm{h} = \mathsf{P} \circ \mathrm{H} \circ
   \mathsf{P}^{-1}
\end{align}
gegeben. Mit Bemerkung \ref{bem:GInvarianzDesIsoszwischenPolysUndSyms}
folgt direkt auch die $G$"=Äquivarianz von $\mathrm{h}$ und
$\mathrm{pv}$. Insbesondere gilt $\mathrm{h}(\mathcal{P}(Q)^G) =
\mathrm{h}(\mathcal{P}(Q))^G$.  Ist $F \in
\mathsf{P}(\Gamma^\infty(\Bigvee^m HQ\otimes \Bigvee^n VQ)) \subset
\mathcal{P}(Q)$ mit $(m,n) \in \mathbb{N}\times
\mathbb{N}$, so nennen wir $F$ \neuerBegriff{horizontal vom Grad $m$}
und \neuerBegriff{vertikal vom Grad $n$}. Ist $\{e_\alpha\}_{1 \leq
  \alpha \leq \dim G}$ eine Basis von

$\lieAlgebra[g]$ mit zugehöriger dualer Basis $\{e^\alpha\}_{1 \leq
  \alpha \leq
  \dim G}$, dann bilden die
zugehörigen fundamentalen Vektorfelder $\{{(e_\alpha)}_Q\}_{1 \leq
  \alpha \leq
  \dim G}$ bekanntlich einen globalen Rahmen für das Vertikalbündel $VQ$.

Für jedes $T \in \mathrm{PV}(\Gamma^\infty(\Bigvee TQ))$ gibt es somit
eindeutig bestimmte $\mathrm{R}^i(T) \in \mathrm{PV}(\Gamma^\infty(\Bigvee TQ))$
mit  $T = \sum_{i=1}^{\dim G} \mathrm{R}^i(T)\vee {(e_i)}_Q$. Für $i \in \{1,\dots,\dim
G\}$
setzen wir $\mathrm{R}^i$ auf $\mathrm{H}(\Gamma^\infty(\Bigvee TQ))$ durch $0$ zu einer
Abbildung $\mathrm{R}^i\dpA \Gamma^\infty(\Bigvee TQ) \to \Gamma^\infty(\Bigvee TQ)$
fort. Offenbar kann man dann jedes $T \in \Gamma^\infty(\Bigvee TQ)$ in
der Form
\begin{align}
   \label{eq:Aufspaltung1}
   T = \mathrm{H}(T) + \mathrm{R}^i(T)\vee (e_i)_Q
\end{align}
schreiben.  Weiter setzen wir
\begin{align}
   \label{eq:ri}
   \mathrm{r}^i := \mathsf{P} \circ \mathrm{R}^i \circ \mathsf{P}^{-1}
   \dpA \mathcal{P}(Q) \to \mathcal{P}(Q)\Fcom
\end{align}
womit wir
mit Gleichung~\eqref{eq:Aufspaltung1} sehen, dass man jedes $F \in
\mathcal{P}(Q)$ in der Form
\begin{align}
   \label{eq:Aufspaltung2}
   F = \mathrm{h}(F) + \mathrm{r}^i(F)\mathsf{P}((e_i)_Q)
\end{align}
schreiben kann.

Wir wählen nun $(\Psi,T^*Q,1,\emptyset,\emptyset)$ als geometrische Homotopiedaten und können damit die
davon induzierte Homotopie  $\h_{\mathrm{kan}}$ aus dem
Quanten-Koszul-Schema mit Hilfe der $\mathrm{r}^i$ ausdrücken.

\begin{lemma}
   \label{lem:HomotopiePolynomial}
   Ist $F \in \mathcal{P}(Q)$, so gilt
   \begin{align}
      \label{eq:HomotopiePolynomial}
       \hKan(F)  = \mathrm{r}^i(F) \otimes e_i\Fdot
   \end{align}
\end{lemma}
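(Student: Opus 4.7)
The plan is to evaluate both sides of the identity in the global tubular chart $\Psi\colon U\to V$ constructed in Proposition~\ref{prop:globaleTubenabbildung}. Because $\Psi$ already covers all of $T^*Q$, we may take the geometric homotopy data as $(\Psi, T^*Q, 1, 0, 0)$ so that $\psi_U\equiv 1$ and all contributions from $\psi_W$ and $\h_W$ vanish. By Lemma~\ref{lem:HomotopienInDerTubenumgebung} (with $k=0$) the homotopy thus reduces to
\[
\hKan(F)\circ\Psi^{-1}(c,\mu) \;=\; \int_0^1 \partial^\alpha(F\circ\Psi^{-1})(c,t\mu)\,dt\;\otimes\; e_\alpha,
\]
so it suffices to check for each $\alpha$ that $\int_0^1 \partial^\alpha(F\circ\Psi^{-1})(c,t\mu)\,dt = \mathrm{r}^\alpha(F)\circ\Psi^{-1}(c,\mu)$.

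By bilinearity and the decomposition~\eqref{eq:ZerlegungSymmetrischeSchnitte}, it is enough to treat $F=\mathsf{P}(s)$ for $s\in\Gamma^\infty(\Bigvee^m HQ\otimes \Bigvee^n VQ)$ bi-homogeneous. The case $n=0$ is immediate: $\mathrm{r}^\alpha(F)=0$ by definition of $\mathrm{R}^i$ on $\mathrm{H}(\Gamma^\infty(\Bigvee TQ))$, and $F\circ\Psi^{-1}(c,\mu)=F(c+\Gamma_\mu(\mathfrak{p}(c)))=F(c)$ is independent of $\mu$ since $\Gamma_\mu(\mathfrak{p}(c))\in V^{*^\gamma}Q$ while $s$ pairs only with the horizontal part of a covector; hence both sides vanish.

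For $n\geq 1$ the argument rests on two elementary facts about the \tn{gute} Tubenabbildung $\Psi$: first, $J_i\circ\Psi^{-1}(c,\mu)=\mu_i$ by \eqref{eq:TubeneigenschaftDerSpeziellenTube1}; second, $F\circ\Psi^{-1}(c,\cdot)$ is polynomial in $\mu$ homogeneous of degree~$n$, because the horizontal factor $h$ of $s$ couples only with $c$ and the vertical factor $v$ only with $\Gamma_\mu(\mathfrak{p}(c))$, whose components in the basis $\{(e_i)_Q\}$ are $\mu_i$. Homogeneity gives $\partial^\alpha(F\circ\Psi^{-1})(c,t\mu)=t^{n-1}\partial^\alpha(F\circ\Psi^{-1})(c,\mu)$ and therefore
\[
\int_0^1 \partial^\alpha(F\circ\Psi^{-1})(c,t\mu)\,dt \;=\; \tfrac{1}{n}\,\partial^\alpha(F\circ\Psi^{-1})(c,\mu).
\]

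The remaining step is the identification $\tfrac{1}{n}\partial^\alpha(F\circ\Psi^{-1})=\mathrm{r}^\alpha(F)\circ\Psi^{-1}$. Using $\mathrm{h}(F)=0$ and the decomposition~\eqref{eq:Aufspaltung2}, we have $(F\circ\Psi^{-1})(c,\mu)=(\mathrm{r}^i(F)\circ\Psi^{-1})(c,\mu)\,\mu_i$. Expanding the vertical factor $v$ in the basis $\{(e_i)_Q\}$ and tracing through the algebra isomorphism $\mathsf{P}$ shows that on a bi-homogeneous tensor of vertical degree $n$ the operator $\mathrm{R}^\alpha$ defined by the splitting $T=\mathrm{R}^i(T)\vee(e_i)_Q$ acts precisely as $\tfrac{1}{n}\cdot\partial/\partial(e_\alpha)_Q$ (the unique normalization compatible with Euler's identity $\sum_i(e_i)_Q\vee\partial T/\partial(e_i)_Q=nT$); under $\mathsf{P}$ and $\Psi^{-1}$ this translates into $\tfrac{1}{n}\partial^\alpha$ acting on the $\mu$-polynomial $F\circ\Psi^{-1}(c,\cdot)$. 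Summing over $\alpha$ and tensoring with $e_\alpha$ yields the claim.

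The one bookkeeping point which will require some care is matching the normalization of $\mathrm{R}^i$ in the symmetric algebra with the analytic $1/n$ coming from the integral $\int_0^1 t^{n-1}\,dt$; once this convention is made explicit via Euler's identity the verification is routine, and I do not expect any deeper obstacle.
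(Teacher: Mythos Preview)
Your proposal is correct and follows essentially the same route as the paper: reduce by linearity to bi-homogeneous tensors $s\in\Gamma^\infty(\Bigvee^m HQ\otimes\Bigvee^n VQ)$, compute $F\circ\Psi^{-1}(c,\mu)$ in the global tubular chart, observe that the horizontal part is $\mu$-independent while the vertical part contributes a homogeneous polynomial in $\mu$, and then match the integral $\int_0^1 t^{n-1}\,dt=1/n$ against the normalization of $\mathrm{R}^\alpha$. The paper carries this out by writing $T$ explicitly with coefficients $R^{k_1,\dots,k_m}_{l_1,\dots,l_n,i}$ taken symmetric in the lower indices and differentiating the resulting product, whereas you invoke Euler's identity; these are the same computation in slightly different language, and your caveat about the normalization is exactly the point the paper handles by imposing the symmetric convention on the coefficients.
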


\begin{proof}

   Sei $v \in \Gamma^\infty(HQ)$ und $F :=
   \mathsf{P}(v)$. Dann gilt  für $c \in C_{\mathrm{kan}}$
   und $\mu \in \lieAlgebra^*$
    \begin{align*}
       F \circ \Psi^{-1}(c,\mu) &= F(c + \Gamma_\mu(\mathfrak{p}(c))) = \mathsf{P}(v)(c +
       \Gamma_\mu(\mathfrak{p}(c))) \\ &= v\at{\mathfrak{p}(c)}(c +
       \Gamma_\mu(\mathfrak{p}(c))) = v\at{\mathfrak{p}(c)}(c) +
       \underbrace{v\at{\mathfrak{p}(c)}(\Gamma_\mu(\mathfrak{p}(c)))}_{=
         \dPaar{\mu}{\gamma(v(\mathfrak{p}(c)))}
       = 0} = v\at{\mathfrak{p}(c)}(c) \Fdot
   \end{align*}

   Also haben wir
   \begin{align*}
      \frac{\partial}{\partial \mu_i} (F \circ \Psi^{-1})(c,\mu) = 0
   \end{align*}
   für alle $i \in \{1,\dots,\dim G\}$, $c \in C_{\mathrm{kan}}$ und $\mu \in
   \lieAlgebra^*$, womit $\hKan(F) = 0 = \mathrm{r}^i(F) \otimes e_i$
   folgt.

   Für $F = \mathsf{P}((e_i)_Q)$ gilt
   \begin{align*}
      F \circ \Psi^{-1}(c,\mu) &= F(c + \Gamma_\mu(\mathfrak{p}(c))) =
      \mathsf{P}((e_i)_Q)(c + \Gamma_\mu(\mathfrak{p}(c))) \\ &=
      (e_i)_Q(\mathfrak{p}(c))(c+ \Gamma_\mu(\mathfrak{p}(c))) =
      \underbrace{(e_i)_Q(\mathfrak{p}(c))(c)}_{=0, \, \text{da $c \in
          H^{*^{\scriptscriptstyle{\gamma}}}_{\mathfrak{p}(c)}Q$}} +
      (e_i)_Q(\mathfrak{p}(c))(\Gamma_\mu(\mathfrak{p}(c))) \\  &=
      \dPaar{\mu}{\gamma((e_i)_Q(\mathfrak{p}(c)))} = \dPaar{\mu}{e_i} =
      \mu_i \Fcom
   \end{align*}
   wenn $\mu = \mu_i e^i$.  Wir erhalten somit weiter
   \begin{align*}
      \frac{\partial}{\partial \mu_\alpha}(F \circ \Psi^{-1})(c,\mu) =
      \delta_{\alpha i}
   \end{align*}
   für alle $\alpha \in \{1,\dots,\dim G\}$.
   Sei nun $(n,m) \in \mathbb{N} \times \mathbb{N}$ und $T =
   R^{k_1,\dots,k_m}_{l_1,\dots,l_n,i} v_{k_1} \vee \dots
   \vee v_{k_m} \vee (e_{l_1})_Q \vee
   \dots (e_{l_n})_Q \vee (e_i)_Q = \mathrm{R}^i(T) \vee (e_i)_Q$ mit $
   R^{k_1,\dots,k_m}_{l_1,\dots,l_n,i} \in C^\infty(Q)$, o.\,E.\ symmetrisch in den
   unteren Indices und $v_{k_j} \in \Gamma^\infty(HQ)$. Dann gilt für $F = \mathsf{P}(T)$
   \begin{align*}
      F \circ \Psi^{-1}(c,\mu) &= F(c + \Gamma_\mu(\mathfrak{p}(c)))\\
      &= \mathsf{P}(T)(c +
      \Gamma_\mu(\mathfrak{p}(c)))\\
      &= R^{k_1,\dots,k_m}_{l_1,\dots,l_n,i}(\mathfrak{p}(c)) \mathsf{P}(v_{k_1})(c +
      \Gamma_\mu(\mathfrak{p}(c)))\dotsm \mathsf{P}(v_{k_m})(c+
      \Gamma_\mu(\mathfrak{p}(c)))\cdot \\
      &\phantom{= } \mathsf{P}((e_{l_1})_Q)(c + \Gamma_\mu(\mathfrak{p}(c)))
      \dotsm \mathsf{P}((e_{l_n})_Q)(c + \Gamma_\mu(\mathfrak{p}(c)))\cdot \mathsf{P}((e_i)_Q)(c +
      \Gamma_\mu(\mathfrak{p}(c))) \\
      &=
      R^{k_1,\dots,k_m}_{l_1,\dots,l_n,i}(\mathfrak{p}(c))v_{k_1}\at{\mathfrak{p}(c)}(c)\dotsm
      v_{k_m}\at{\mathfrak{p}(c)}(c) \cdot \mu_{l_1} \dotsm
      \mu_{l_n} \cdot \mu_i \Fdot
   \end{align*}
   Folglich ergibt sich
   \begin{align*}
      \frac{\partial}{\partial \mu_\alpha} (F \circ \Psi^{-1})(c,\mu)&=
      R^{k_1,\dots,k_m}_{l_1,\dots,l_n,i}(\mathfrak{p}(c)) \cdot v_{k_1}\at{\mathfrak{p}(c)}(c)\dotsm
      v_{k_m}\at{\mathfrak{p}(c)}(c) \cdot \\
      &\phantom{= }(\delta_{\alpha l_1} \mu_{l_2}
      \dotsm \mu_{l_n} \cdot \mu_i + \dots + \mu_{l_1}\dotsm \delta_{l_n
      \alpha} \mu_i + \mu_{l_1} \dotsm \mu_{l_n} \delta_{i \alpha}) \\
    &= (n+1)  R^{k_1,\dots,k_m}_{l_1,\dots,l_n,\alpha}(\mathfrak{p}(c)) \cdot v_{k_1}\at{\mathfrak{p}(c)}(c)\dotsm
      v_{k_m}\at{\mathfrak{p}(c)}(c) \cdot \mu_{l_1} \dotsm \mu_{l_n} \\
      &= (n+1) \mathsf{P}(\mathrm{R}^\alpha(T))(c + \Gamma_\mu) \Fdot
   \end{align*}
   Des Weiteren gilt offensichtlich
   \begin{align*}
      \mathsf{P}(\mathrm{R}^\alpha(T))\circ\Psi^{-1}(c,t\mu) = t^n \mathsf{P}(\mathrm{R}^\alpha(T))\circ \Psi^{-1}(c,\mu)
   \end{align*}
   und damit
   \begin{align*}
      \int_0^1 \frac{\partial}{\partial \mu_\alpha}(F \circ
      \Psi^{-1})(c,t\mu) \, dt
      &= (n+1) \int_0^1 t^n \, dt \, \mathsf{P}(\mathrm{R}^\alpha(T))(\Psi^{-1}(c,\mu)) \\
      &= \mathsf{P}(\mathrm{R}^\alpha(T))(\Psi^{-1}(c,\mu)) \Fdot
   \end{align*}
   Schließlich sehen wir so
\begin{align*}
   \mathrm{r}^i(F) \otimes e_i = \mathrm{r}^i(\mathsf{P}(T)) \otimes e_i
   = \mathsf{P}(\mathrm{R}^i(T)) \otimes e_i  = \hKan(F) \Fdot
\end{align*}

\end{proof}
Wir definieren nun
\begin{align}
   \label{eq:DeltaNikolai}
   \Delta_\star \dpA \mathcal{P}(Q)[[\lambda]] &\to
   \mathcal{P}(Q)[[\lambda]], \notag\\
   F &\mapsto \frac{1}{\I \lambda} \sum_{i=1}^{\dim G} \left(\mathrm{r}^i(F)\JKan(e_i)
   - \mathrm{r}^i(F) \star
   \qJ[\mathrm{kan}](e_i)\right) \Fdot
\end{align}
Die Abbildung $\Delta_\star$ ist in der Tat wohldefiniert, denn nach
Voraussetzung sind die Funktionen $\mathrm{r}^i(F)$, $\JKan(e_i)$ sowie
$\qJ[\mathrm{kan}](e_i)$ polynomial und $\mathcal{P}(Q)[[\lambda]]$
bildet eine Unteralgebra von $C^\infty(M)[[\lambda]]$ bezüglich $\star$.
Mit Lemma \ref{lem:HomotopiePolynomial} sowie den Definitionen von
$\partial_{\mathrm{kan}}$ und $\qkoszul_{\mathrm{kan}}$ sehen wir für $F
\in \mathcal{P}(Q)$ sofort
\begin{align}
   \label{eq:DeltaUndPartial}
   \Delta_\star F = \frac{1}{\I \lambda} (\partial_{\mathrm{kan}} - \qkoszul_{\mathrm{kan}}) \hKan(F) \Fdot
\end{align}
Dabei bezeichnet $\partial_{\mathrm{kan}}$
bzw.\ $\qkoszul_{\mathrm{kan}}$ den von $\JKan$ bzw.\ $\qJ[\mathrm{kan}]$
induzierten Koszul- bzw.\ Quanten"=Koszul"=Operator.

Der folgende Satz zeigt, dass $(T^*Q)_{\mathrm{red}}$ wieder die
Struktur eines Kotangentialbündels besitzt.
\begin{samepage}
   \begingroup
   \emergencystretch=0.8em
   \begin{satz}[Kotangentialbündelreduktion für Impulswert $0$]

      \label{satz:ReduktionKotangentenbuendel}
      \begin{satzEnum}
      \item %
         Es gibt eine eindeutig bestimmte glatte, $G$"=invariante
         Abbildung
         \begin{align}
            u \dpA \JKan^{-1}(0) \to T^*(Q/G) \quad \text{mit} \quad
            \dPaar{u(\alpha_q)}{T_q\varpi v_q} = \dPaar{\alpha_q}{v_q}
         \end{align}
         für alle $q \in Q$, $\alpha_q \in \JKan^{-1}(0) \cap T_q^*Q$
         und $v_q \in T_qQ$. Weiter gibt es eine eindeutig bestimmte
         glatte Abbildung $\overline{u} \dpA (T^*Q)_{\mathrm{red}}\to
         T^*(Q/G)$, so dass das folgende Diagramm kommutiert.
         \def\tA[#1]{A_{#1}}
         \begin{equation}
            \label{eq:DiagrammU}
            \begin{tikzpicture}[baseline=(current
               bounding box.center),description/.style={fill=white,inner sep=2pt}]
               \matrix (m) [matrix of math nodes, row sep=3.0em, column
               sep=3.5em, text height=1.5ex, text depth=0.25ex]
               {
                 \JKan^{-1}(0) &   \\
                 (T^*Q)_{\mathrm{red}} & T^*(Q/G) \\
               }; %

            \path[->] (m-1-1) edge node[auto] {$\piKan$} (m-2-1); %
            \path[->] (m-2-1) edge node[auto]{$\overline u$}(m-2-2); %
            \path[->] (m-1-1) edge node[auto]{$u$}(m-2-2); %
         \end{tikzpicture} \Fdot
      \end{equation}
      Dabei bezeichnet $\piKan$ die kanonische Projektion.
\item %
   Die Abbildung $\overline u$ ist ein Symplektomorphismus.
   \end{satzEnum}
\end{satz}
\end{samepage}

\begin{proof}
   \begin{beweisEnum}
   \item %
      Wegen der Submersivität von $\varpi$ ist die Eindeutigkeit
      klar.

      Wir beachten die Gleichheit $\ker T_q\varpi = \{\xi_Q(q) \mid \xi \in
      \lieAlgebra\}$. Dies ist klar, denn offensichtlich ist $\xi_Q(q)
      \in \ker T_q\varpi$ für alle $\xi \in \lieAlgebra$ und da
      $T_q\varpi$ surjektiv ist, folgt $\dim \ker T_q\varpi = \dim Q -
      \dim Q/G = \dim G = \dim  \{\xi_Q(q) \mid \xi \in
      \lieAlgebra\}$.

      Falls also $T_q \varpi v_q = T_q \varpi
      \tilde{v}_q $, so folgt nach dieser Beobachtung schon, dass es ein
      $\xi \in \lieAlgebra$ gibt mit $v_q - \tilde{v}_q =
      \xi_Q(q)$. Dann ist aber für jedes $\alpha_q \in \JKan^{-1}(0) \cap
      T_q^*Q$ schon $\dPaar{\alpha_q}{v_q - \tilde{v}_q} =
        \dPaar{\alpha_q}{\xi_Q(q)} = 0$, da $\xi_Q(q) \in V_qQ$. Demnach
      existiert eine glatte Abbildung $u$ mit den gewünschten
      Eigenschaften.
      $u$ ist sogar $G$"=invariant, denn
      $\dPaar{u(g\alpha_q)}{T_{gq}\varpi v_{gq}} =
      \dPaar{\alpha_q}{g^{-1}v_{gq}} =
      \dPaar{u(\alpha_q)}{T_{gq}\varpi(g^{-1} v_{gq})} =
      \dPaar{u(\alpha_q)}{T_{gq}\varpi v_{gq}}$ für alle $\alpha_q \in
      \JKan^{-1}(0) \cap T_q^*Q$ und $v_{gq}\in T_{gq}Q$. Damit induziert
      $u$ eine eindeutig bestimmte glatte Abbildung, so dass Diagramm
      \eqref{eq:DiagrammU} kommutiert.
   \item %
      Sei $\alpha_{\varpi(q)} \in T^*_{\varpi(q)}(Q/G)$, dann ist
      $(T_q\varpi)^* \alpha_{\varpi(q)} \in \JKan^{-1}(0)$, denn für
      jedes $\xi \in \lieAlgebra$ gilt
      \begin{align*}
          \dPaar{\JKan((T_q\varpi)^*\alpha_{\varpi(q)})}{\xi} =
          \dPaar{(T_q\varpi)^*\alpha_{\varpi(q)}}{\xi_Q(q)} =
          \dPaar{\alpha_{\varpi(q)}}{T_q \varpi \xi_Q(q)} = 0 \Fdot
       \end{align*}
      Weiter gilt für alle $w_q \in T_qQ$
       \begin{align*}
          \dPaar{u((T_q\varpi)^* \alpha_{\varpi(q)})}{T_q\varpi w_q} =
          \dPaar{(T_p\varpi)^* \alpha_{\varpi(q)}}{w_q} =
          \dPaar{\alpha_{\varpi(q)}}{T_p \varpi w_q} \Fcom
       \end{align*}
       also wegen der Submersivität von $\varpi$ auch $u((T_q\varpi)^*
       \alpha_{\varpi(q)}) = \alpha_{\varpi(q)}$.

       Demnach ist $u$ und somit auch $\overline u$ surjektiv .  Die
       Abbildung $\overline{u}$ ist injektiv, denn ist $u(\alpha_q) =
       u(\beta_{q'})$, so folgt nach Definition von $u$ schon $\varpi(q) =
       \varpi(q')$, d.\,h.\ es gibt ein $g \in G$ mit $q' = gq$, und
      \begin{align*}
         \dPaar{\alpha_q}{w_q} &= \dPaar{u(\alpha_q)}{T_q\varpi w_q} =
         \dPaar{u(\beta_{q'})}{T_q\varpi w_q} =
         \dPaar{u(\beta_{q'})}{T_q(\varpi \circ \phi_g)w_q} \\&=
         \dPaar{u(\beta_{q'})}{T_{\phi_g(q)}\varpi T_q \phi_g w_q} =
         \dPaar{u(\beta_{q'})}{T_{gq}\varpi g w_q} =
         \dPaar{\beta_{q'}}{g w_q}\\ &= \dPaar{g^{-1} \beta_{q'}}{ w_q}
      \end{align*}
      für alle $w_q \in T_qQ$ also $\piKan(\alpha_q) =
      \piKan(\beta_{q'})$.  Um zu zeigen, dass $\overline{u}$
      symplektisch ist, genügt es die kanonischen $1$"=Formen zu
      betrachten.  Sei $\thetaKan$ die kanonische $1$"=Form auf $T^*Q$
      und $\overline\thetaKan$ diejenige auf $T^*(Q/G)$ sowie $\kIn \dpA
      \JKan^{-1}(0) \hookrightarrow T^*Q$ die Inklusion.  Nach
      Konstruktion der reduzierten symplektischen Form ist es also
      hinreichend, $\piKan^* \overline{u}^*\overline\thetaKan = \kIn^*\thetaKan$
      zu zeigen. Sei $q \in Q$, $\alpha_q \in T_q^*Q$ und
      $v_{\alpha_q} \in T_{\alpha_q}TQ$, dann gilt
      \begin{align*}
         \dPaar{\piKan^*\overline{u}^*\overline\thetaKan\at{\alpha_q}}{v_{\alpha_q}}
         &=
         \dPaar{\overline\thetaKan\at{\overline{u}(\pi(\alpha_q))}}{T_{\alpha_q}(\overline{u}\circ
           \piKan) v_{\alpha_q}}\\
         &=
         \dPaar{\overline{u}(\piKan(\alpha_q))}{T_{\alpha_q}(\overline{\mathfrak{p}}
           \circ
           \overline{u} \circ \piKan) v_{\alpha_q}}\\
         &= \dPaar{\overline{u}(\piKan(\alpha_q))}{T_{\alpha_q}(\varpi
           \circ \mathfrak{p} \circ \kIn ) v_{\alpha_q}}\\
         &= \dPaar{\alpha_q}{T_{\alpha_q} (\mathfrak{p} \circ \kIn) v_{\alpha_q}} =
         \dPaar{\kIn^*\thetaKan}{v_{\alpha_q}} \Fdot
      \end{align*}
      Die Abbildung $\overline{u}$ ist demnach symplektisch und
      insbesondere immersiv, woraus aus Dimensionsgründen und dem Satz
      über die Umkehrfunktion schon folgt, dass $\overline{u}$ ein
      Diffeomorphismus, d.\,h.\ auch ein Symplektomorphismus sein muss.
   \end{beweisEnum}
\end{proof}
\endgroup

\begingroup
\emergencystretch=0.8em In der folgenden Diskussion seien nun die
Abbildungen $u$ und $\overline{u}$ wie in Satz
\ref{satz:ReduktionKotangentenbuendel}. Wir setzen $^{\mathsf{h}} \dpA
\Gamma^\infty(T{(Q/G)}) \to \Gamma^\infty(TQ)$ zu einem Homomorphismus
$^{\mathsf{h}}\dpA \Gamma^\infty(\Bigvee T{(Q/G)}) \to
\Gamma^\infty(\Bigvee TQ)$ bezüglich $\vee$ fort. Insbesondere sei
$\chi^{\mathsf{h}} = \varpi^* \chi$ für $\chi \in \CM[Q/G]$. Schließlich
sei
\begin{align}
   \label{eq:DefintionVonl}
   l \dpA \mathcal{P}(Q/G) \to \mathrm{h}(\mathcal{P}(Q)^G)
\end{align}
durch
\begin{align}
   \label{eq:DefintionVon2}
   l(F) := \mathsf{P}(({\overline{\mathsf{P}}^{-1}(F))^{\mathsf{h}}})
\end{align}
für $F \in \mathcal{P}(Q/G)$ gegeben. Dabei bezeichnet die Abbildung
$\overline{\mathsf{P}} \dpA \Gamma^\infty(\Bigvee T(Q/G)) \to
\mathcal{P}(Q/G)$ den Algebraisomorphismus aus Proposition
\ref{prop:universelleImpulsabbildung} für $Q/G$. Man prüft leicht mit
Hilfe von Proposition \ref{prop:AngepassteBasen} nach, dass $l$ bijektiv
ist, siehe auch \cite[Prop. 3.3 iv)]{kowalzig.neumaier.pflaum:2005a}.

\begin{proposition}
   \label{prop:EinschraenkungFuerKotangentialbuendel}
   Sei $\kIn \colon C_{\mathrm{kan}} \hookrightarrow T^*Q$ die
   kanonische Inklusion, dann gilt
   \begin{align}
      \kIn^*\at{\mathcal{P}(Q)^G[[\lambda]]} = u^* \circ l^{-1} \circ
      \mathrm{h}\at{\mathcal{P}(Q)^G[[\lambda]]} \Fdot
   \end{align}
\end{proposition}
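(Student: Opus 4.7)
The plan is to reduce the claimed identity to an algebraic check on generators of $\mathcal{P}(Q/G)$. By $\mathbb{C}[[\lambda]]$-linearity of both sides it suffices to prove equality for $F \in \mathcal{P}(Q)^G$. For such $F$, I would first invoke the decomposition \eqref{eq:Aufspaltung2} to write $F = \mathrm{h}(F) + \mathrm{r}^i(F)\mathsf{P}((e_i)_Q)$. The decisive observation is that by \eqref{eq:universelleImpulsabbildung} one has $\mathsf{P}((e_i)_Q)(\alpha_q) = \dPaar{\alpha_q}{(e_i)_Q(q)} = \dPaar{\JKan(\alpha_q)}{e_i}$, so $\mathsf{P}((e_i)_Q) = \dPaar{\JKan}{e_i}$ vanishes on $C_{\mathrm{kan}} = \JKan^{-1}(0)$. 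Therefore $\kIn^* F = \kIn^* \mathrm{h}(F)$. Because $\mathrm{h}$ is $G$-equivariant one has $\mathrm{h}(\mathcal{P}(Q)^G) = \mathrm{h}(\mathcal{P}(Q))^G = l(\mathcal{P}(Q/G))$, and the problem reduces to showing $\kIn^* \circ l = u^*$ as maps $\mathcal{P}(Q/G) \to C^\infty(C_{\mathrm{kan}})$.

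Next, I would observe that both $\kIn^* \circ l$ and $u^*$ are $\mathbb{C}$-algebra homomorphisms: pullbacks along smooth maps are algebra homomorphisms, while $l = \mathsf{P} \circ (\cdot)^{\mathsf{h}} \circ \overline{\mathsf{P}}^{-1}$ is a composition of the $\vee$-algebra isomorphisms $\mathsf{P}$, $\overline{\mathsf{P}}^{-1}$ and the multiplicatively extended horizontal lift $(\cdot)^{\mathsf{h}}$. Via $\overline{\mathsf{P}}$ the algebra $\mathcal{P}(Q/G)$ is identified with $\Gamma^\infty(\Bigvee T(Q/G))$, which is generated as an algebra by $C^\infty(Q/G)$ (degree zero) and $\Gamma^\infty(T(Q/G))$ (degree one). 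It therefore suffices to verify the equality on these generators.

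For $\chi \in C^\infty(Q/G) \subset \mathcal{P}(Q/G)$, identified with $\overline{\mathfrak{p}}^*\chi$ via $\overline{\mathsf{P}}$, \eqref{eq:universelleImpulsabbildungGrad0} gives $l(\chi) = \mathsf{P}(\varpi^*\chi) = \mathfrak{p}^*\varpi^*\chi$, so for $\alpha_q \in C_{\mathrm{kan}}$ one has $\kIn^* l(\chi)(\alpha_q) = \chi(\varpi(q)) = \chi(\overline{\mathfrak{p}}(u(\alpha_q))) = u^*\chi(\alpha_q)$, the middle equality holding because $u(\alpha_q) \in T^*_{\varpi(q)}(Q/G)$ by construction. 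For a vector field $v \in \Gamma^\infty(T(Q/G))$ with horizontal lift $v^{\mathsf{h}}$, one has $l(\overline{\mathsf{P}}(v)) = \mathsf{P}(v^{\mathsf{h}})$, and using \eqref{eq:universelleImpulsabbildung}, the defining property of $u$, and $T_q\varpi\, v^{\mathsf{h}}(q) = v(\varpi(q))$ one computes
\[
\kIn^* l(\overline{\mathsf{P}}(v))(\alpha_q) = \dPaar{\alpha_q}{v^{\mathsf{h}}(q)} = \dPaar{u(\alpha_q)}{T_q\varpi\, v^{\mathsf{h}}(q)} = \dPaar{u(\alpha_q)}{v(\varpi(q))} = u^*\overline{\mathsf{P}}(v)(\alpha_q).
\]

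The real content of the proof is the first step: identifying $\mathsf{P}((e_i)_Q)$ with $\dPaar{\JKan}{e_i}$ so that only the horizontal component of $F$ survives the restriction to $C_{\mathrm{kan}}$. Once this reduction is carried out, the remaining argument is purely structural, namely that two algebra homomorphisms agreeing on a generating set must coincide, and the generator check itself is essentially forced by the defining property of $u$ from Satz~\ref{satz:ReduktionKotangentenbuendel} and the compatibility of the horizontal lift with the projection $\varpi$.
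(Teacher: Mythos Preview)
Your proof is correct and follows essentially the same approach as the paper: first observe that $\mathsf{P}((e_i)_Q)$ vanishes on $C_{\mathrm{kan}}$ so that $\kIn^* F = \kIn^* \mathrm{h}(F)$, then reduce to verifying the algebra-homomorphism identity $\kIn^* \circ l = u^*$ on the generators $C^\infty(Q/G)$ and $\Gamma^\infty(T(Q/G))$ of $\mathcal{P}(Q/G)$. The computations on generators are the same as the paper's, with only cosmetic differences in presentation.
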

\begin{proof}
   Für $q \in Q$ und $\alpha_q \in C_{\mathrm{kan}} \cap T_q^*Q =
   H^{*^\gamma}_qQ$ ist $\mathsf{P}((e_i)_Q)(\alpha_q) =
   \dPaar{\alpha_q}{(e_i)_Q(q)} = 0$, also folgt für alle $F \in
   \mathcal{P}(Q)^G$, da $\mathsf{P}$ ein Algebrahomomorphismus ist,
   $\kIn^*(F) = \kIn^*(\mathrm{h}(F))$. Als nächstes zeigen wir, dass
   $u^*\at{\mathcal{P}(Q/G)} = \kIn^* \circ l$, d.\,h.\ auch $u^* \circ
   l^{-1} = \kIn^*\at{\mathrm{h}(\mathcal{P}(Q)^G)}$ gilt, woraus dann
   offensichtlich die Behauptung folgt. Dazu bemerken wir zuerst, dass
   sowohl $u^*$ als auch $\kIn^* \circ l$ Algebrahomomorphismen sind,
   womit wir diese Gleichung nur auf Erzeugern prüfen müssen. Sei also
   $\chi \in \CM[Q/G]$, dann gilt für alle $q \in Q$ und $\alpha_q \in
   C_{\mathrm{kan}} \cap T_q^*Q$
\begin{align*}
   u^*(\overline{\mathsf{P}}(\chi))(\alpha_q) &= (u^*\circ{\overline{\mathfrak{p}}}^*
   \chi)(\alpha_q) = \chi((\overline{\mathfrak{p}}  \circ u)(\alpha_q)) =
   \chi(\varpi \circ \mathfrak{p}( \alpha_q))\\ &= (\varpi^*
   \chi)(\mathfrak{p}(\alpha_q)) = \mathfrak{p}^*\varpi^* \chi(\alpha_q)
   = \mathsf{P}(\chi^{\mathsf{h}})(\alpha_q) =
   l(\mathsf{\overline{P}}(\chi))(\alpha_q) \Fdot
\end{align*}
Dabei bezeichnet $\overline{\mathfrak{p}} \colon T^*(Q/G) \to Q/G$ die Fußpunktprojektion.

Ist weiter $\overline{X} \in \Gamma^\infty(T{(Q/G)})$, dann gilt für alle
$q \in Q$ und $\alpha_q \in C_{\mathrm{kan}} \cap T_q^*Q$
\begin{align*}
   u^*(\overline{\mathsf{P}}(\overline{X}))(\alpha_q) &=
   \overline{\mathsf{P}}(\overline{X})(u(\alpha_q)) =
   \dPaar{u(\alpha_q)}{\overline{X}(\varpi(q))}\\ &=
   \dPaar{u(\alpha_q)}{T_q \varpi \overline{X}^{\mathsf{h}}(q)} =
   \dPaar{\alpha_q}{\overline{X}^{\mathsf{h}}(q)} =
   \mathsf{P}(\overline{X}^{\mathsf{h}}(q))(\alpha_q) = l(\overline{\mathsf{P}}(\overline{X}))(\alpha_q) \Fdot
\end{align*}
Da auch $\overline{\mathsf{P}}$ ein Algebramorphismus ist, folgt die
Behauptung sofort.
\end{proof}
\endgroup

Mit Proposition \ref{prop:EinschraenkungFuerKotangentialbuendel} und
Gleichung \eqref{eq:DeltaUndPartial} erhalten wir eine explizite Formel für die
Quanteneinschränkung im Falle von Kotangentialbündeln und den oben
getroffenen Wahlen für die geometrischen Homotopie"=Daten.

\begin{proposition}
   \label{prop:QuantenEinschraenkungFuerKotangentialbauendel}
Für jedes $F \in \mathcal{P}(Q)^G[[\lambda]]$ gilt
\begin{align}
   \label{eq:QuantenEinschraenkungFuerKotangentialbauendel}
   \qRes(F) = \left (u^* \circ l^{-1} \circ \mathrm{h} \circ \left(\frac{\id}{\id - \I
        \lambda \Delta_\star}\right) \right)(F) \Fdot
\end{align}
\end{proposition}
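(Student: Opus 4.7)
The plan is to trace through the definition of $\qRes$ given in equation \eqref{eq:QuantenEinschrDef} and rewrite each building block in terms of the polynomial structure, using the preceding two results as the main ingredients. Concretely, I want to combine the following two facts: first, by Lemma~\ref{lem:HomotopiePolynomial} the homotopy $\hKan$ on polynomial functions is given by $\hKan(F)=\mathrm{r}^i(F)\otimes e_i$, which by the definition \eqref{eq:DeltaNikolai} of $\Delta_\star$ immediately yields equation \eqref{eq:DeltaUndPartial}, i.e.\
\[
 \I\lambda\,\Delta_\star F \;=\; (\kkoszul_{\mathrm{kan}} - \qkoszul_{\mathrm{kan}})\,\hKan(F)
 \qquad \text{for all } F\in\mathcal{P}(Q)[[\lambda]];
\]
second, by Proposition~\ref{prop:EinschraenkungFuerKotangentialbuendel} we have $\kIn^*|_{\mathcal{P}(Q)^G[[\lambda]]} = u^*\circ l^{-1}\circ\mathrm{h}|_{\mathcal{P}(Q)^G[[\lambda]]}$.

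The first step will be to show that the operator $(\qkoszul_{\mathrm{kan}} - \kkoszul_{\mathrm{kan}})\hKan$ preserves the subspace $\mathcal{P}(Q)[[\lambda]]$, so that the formal geometric series defining the inverse $\frac{\id}{\id - \I\lambda\Delta_\star}$ may indeed be formed inside $\mathcal{P}(Q)[[\lambda]]$. For this I will use that $\mathcal{P}(Q)[[\lambda]]$ is, by our standing assumption on $\star$, a $\star$-subalgebra; that both $\JKan(e_i)$ and $\qJ[\mathrm{kan}](e_i)$ lie in $\mathcal{P}(Q)[[\lambda]]$; and Lemma~\ref{lem:HomotopiePolynomial} which shows that $\hKan$ maps $\mathcal{P}(Q)[[\lambda]]$ into $\mathcal{P}(Q)[[\lambda]]\otimes\lieAlgebra$. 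Together these imply that $\qkoszul_{\mathrm{kan}}\hKan$ and $\kkoszul_{\mathrm{kan}}\hKan$ both send $\mathcal{P}(Q)[[\lambda]]$ to itself, hence so does $\I\lambda\,\Delta_\star$, and the Neumann series $\sum_{n\ge 0}(\I\lambda\Delta_\star)^n$ converges in the $\lambda$-adic sense on $\mathcal{P}(Q)[[\lambda]]$.

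The second step will be to check that this operator preserves the $G$-invariant subspace $\mathcal{P}(Q)^G[[\lambda]]$. This follows because $\JKan$ and $\qJ[\mathrm{kan}]$ are $G$-equivariant, the product $\star$ is $G$-invariant, and the globally chosen tube $\Psi'$ is $G$-equivariant, so the induced homotopy $\hKan$ is $G$-equivariant by Satz~\ref{satz:globalisierteHomotopie}; consequently $\Delta_\star$ commutes with the $G$-action and preserves $\mathcal{P}(Q)^G[[\lambda]]$.

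In the final step I will simply assemble the pieces: starting from
\[
 \qRes \;=\; \kIn^* \circ \frac{\id}{\id + (\qkoszul_{\mathrm{kan}}-\kkoszul_{\mathrm{kan}})\hKan}
\]
(which is the definition \eqref{eq:QuantenEinschrDef} for our choice of geometric homotopy data $(\Psi,T^*Q,1,\emptyset,\emptyset)$), I substitute the identity from the first step to rewrite the denominator as $\id - \I\lambda\,\Delta_\star$ on $\mathcal{P}(Q)[[\lambda]]$, and then invoke Proposition~\ref{prop:EinschraenkungFuerKotangentialbuendel} to replace $\kIn^*$ by $u^*\circ l^{-1}\circ\mathrm{h}$, which is legitimate by step two. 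The only genuinely non-mechanical point — and therefore the main obstacle — is the verification that $(\qkoszul_{\mathrm{kan}}-\kkoszul_{\mathrm{kan}})\hKan$ really leaves $\mathcal{P}(Q)^G[[\lambda]]$ invariant; everything else is bookkeeping assembled from Lemma~\ref{lem:HomotopiePolynomial} and Proposition~\ref{prop:EinschraenkungFuerKotangentialbuendel}.
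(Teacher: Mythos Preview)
Your proposal is correct and follows exactly the route the paper takes: the proposition is stated there as an immediate consequence of plugging equation~\eqref{eq:DeltaUndPartial} (i.e.\ $(\qkoszul_{\mathrm{kan}}-\kkoszul_{\mathrm{kan}})\hKan = -\I\lambda\,\Delta_\star$ on polynomials) into the defining formula~\eqref{eq:QuantenEinschrDef} for $\qRes$, and then replacing $\kIn^*$ by $u^*\circ l^{-1}\circ\mathrm{h}$ via Proposition~\ref{prop:EinschraenkungFuerKotangentialbuendel}. Your additional care in verifying that $\Delta_\star$ preserves $\mathcal{P}(Q)^G[[\lambda]]$ makes explicit what the paper leaves implicit (it notes well-definedness of $\Delta_\star$ right after~\eqref{eq:DeltaNikolai} and relies on the $G$-equivariance built into the chosen homotopy data), but the overall argument is the same.
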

Um das reduzierte Sternprodukt nach Gleichung
\eqref{eq:SternproduktAufDemReduziertenPhasenraum3} genauer bestimmen zu
können, müssen wir auch noch für $\prol \circ \piKan^*$ einen an die
Kotangentialbündelgeometrie angepassteren Ausdruck finden.
\begin{proposition}
    \label{prop:ProlongationFuerKotangentialbuendel}
   Für alle $f \in \overline{u}^*(\mathcal{P}(Q/G))$
   ist die Gleichung
   \begin{align}
      \label{eq:ProlongationFuerKotangentialbuendel}
      (\prol \circ \piKan^*)(f) = (l \circ (\overline{u}^{-1})^*)(f)
   \end{align}
   erfüllt.
\end{proposition}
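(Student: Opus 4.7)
My plan is to unfold both sides and reduce everything to the elementary fact that horizontal polynomial functions are already determined by their restriction to $C_{\mathrm{kan}} = H^{*^\gamma}Q$.

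First, I would recall that for the geometric homotopy data $(\Psi, T^*Q, 1, \emptyset, \emptyset)$ chosen in this section, the geometric prolongation reduces to $\prol = r^*$, where $r = \mathrm{pr}_1 \circ \Psi \colon T^*Q \to C_{\mathrm{kan}}$ is the retraction induced by the tube from Proposition \ref{prop:globaleTubenabbildung}. Concretely, using the decomposition $T^*Q = H^{*^\gamma}Q \oplus V^{*^\gamma}Q$ and writing $\alpha_q = \alpha_q^H + \alpha_q^V$, one reads off directly that $r(\alpha_q) = \alpha_q^H$, so that $\kIn \circ r \colon T^*Q \to T^*Q$ is the projection $\alpha_q \mapsto \alpha_q^H$ onto the horizontal part.

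Next I would show that any total horizontal polynomial function $F \in \mathrm{h}(\mathcal{P}(Q))$ satisfies $F(\alpha_q) = F(\alpha_q^H)$, and hence $r^* \kIn^* F = F$. This follows because such an $F$ corresponds via $\mathsf{P}$ to a section $s \in \Gamma^\infty(\Bigvee HQ)$, and for $v \in H_qQ$ the pairing $\alpha_q(v) = \alpha_q^H(v)$ holds, since $\alpha_q^V$ annihilates $H_qQ$ by the very definition of $V^{*^\gamma}Q$. (This is exactly the computation already carried out in the proof of Lemma \ref{lem:HomotopiePolynomial}.) In particular, $\prol$ is a left inverse of $\kIn^*$ on $\mathrm{h}(\mathcal{P}(Q)^G)$.

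Then I would apply Proposition \ref{prop:EinschraenkungFuerKotangentialbuendel} to the horizontal function $l(g) \in \mathrm{h}(\mathcal{P}(Q)^G)$: since $\mathrm{h}$ acts as the identity on horizontal functions, we obtain
\begin{equation*}
   \kIn^* l(g) \;=\; u^* \circ l^{-1} \circ \mathrm{h}(l(g)) \;=\; u^*(g) \;=\; \piKan^* \overline{u}^* g,
\end{equation*}
using $u = \overline{u} \circ \piKan$ from the diagram \eqref{eq:DiagrammU}. For $f = \overline{u}^* g \in \overline{u}^*(\mathcal{P}(Q/G))$, one has $(\overline{u}^{-1})^* f = g$, so combining this with step two gives
\begin{equation*}
   \prol(\piKan^* f) \;=\; r^* \kIn^* l(g) \;=\; l(g) \;=\; l((\overline{u}^{-1})^* f),
\end{equation*}
which is the asserted identity.

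There is no real obstacle here; the proposition is essentially a bookkeeping statement making the two previous ingredients (namely Proposition \ref{prop:EinschraenkungFuerKotangentialbuendel} and the explicit form of the retraction $r$) fit together. The only thing worth emphasising in the write-up is that $l$ lands in the horizontal part, so that the horizontal-fiber-degree argument of Lemma \ref{lem:HomotopiePolynomial} can be invoked to conclude $r^* \kIn^* l(g) = l(g)$ without any further computation.
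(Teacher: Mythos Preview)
Your proposal is correct and follows essentially the same approach as the paper: both arguments reduce to showing that $\prol\,\kIn^* F = F$ for every total horizontal function $F \in \mathrm{h}(\mathcal{P}(Q)^G)$, and then invoke the relation $\kIn^* \circ l = u^* = \piKan^* \circ \overline{u}^*$ from the proof of Proposition~\ref{prop:EinschraenkungFuerKotangentialbuendel}. Your explicit identification of the retraction $r$ as the horizontal projection $\alpha_q \mapsto \alpha_q^H$ makes the key step $r^*\kIn^* F = F$ slightly more transparent than the paper's formulation $f(\Psi'(c,\mu)) = f(c)$, but the content is identical.
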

\begin{proof}
   Wir zeigen $l = \prol \piKan^* \overline{u}^*\at{\mathcal{P}(Q/G)}$.

   Sei $\overline{X} \in
   \Gamma^\infty(T{(Q/G)})$, dann gilt für $q \in Q$ und $\mu \in \lieAlgebra^*$
   \begin{align*}
      \mathsf{P}(\overline{X}^{\mathsf{h}})(\Gamma_\mu(q)) =
      \dPaar{\overline{X}^{\mathsf{h}}(q)}{\Gamma_{\mu}(q)} =
      \dPaar{\mu}{\gamma(\overline{X}^{\mathsf{h}}(q))} = 0 \Fdot
   \end{align*}
   Nun lässt sich jedes $f \in \mathrm{h}(\mathcal{P}(Q)^G)$ als $f =
   \mathsf{P}(\overline{T}^{\mathsf{h}})$ mit $\overline{T} \in
   \Gamma^\infty(\bigvee T(Q/G))$ schreiben.  Da $^{\mathsf{h}}$ und
   $\mathsf{P}$ Homomorphismen bezüglich $\vee$ sind, folgt für jedes $c
   \in C_{\mathrm{kan}}$, $\mu \in \lieAlgebra^*$ und jedes $f \in
   \mathrm{h}(\mathcal{P}(Q)^G)$ die Gleichung
   \begin{align*}
      f(\Psi'(c,\mu)) = f(c + \Gamma_{\mu}(\mathfrak{p}(c))) = f(c)
   \end{align*}

   und somit
   \begin{align*}
       \prol \kIn^* f = f
   \end{align*}
   für alle $f \in \mathrm{h}(\mathcal{P}(Q)^G)$.
   Und somit ergibt sich mit der im Beweis von Proposition
   \ref{prop:EinschraenkungFuerKotangentialbuendel} hergeleiteten
   Beziehung $u^*\circ l^{-1} = \kIn^*\at{\mathrm{h}(\mathcal{P}(Q)^G)}$
   \begin{align*}
      \prol u^* l^{-1} = \id \Fcom
   \end{align*}
   was gleichbedeutend ist mit $l = \prol u^*\at{\mathcal{P}(Q/G)} = \prol \piKan^*
   \overline{u}^*\at{\mathcal{P}(Q/G)}$.
\end{proof}

Trägt man nun all die in diesem Kapitel gemachten Beobachtungen
zusammen, so erhält man eine explizitere Formel für das reduzierte
Sternprodukt.

\begin{satz}
   \label{satz:ReduziertesSternproduktFuerKotangentialBuendel}
   Für das nach Koszul"=Schema reduzierte Sternprodukt
   $\star_{\mathrm{red}}$ gilt für alle $f,f' \in \overline{u}^*(\mathcal{P}(Q/G))$
   \begin{align}
      \label{eq:ReduziertesSternproduktFuerKotangentialBuendel}
      \piKan^*(f \star_{\mathrm{red}} f') = \left(u^* \circ l^{-1} \circ
      \mathrm{h} \circ
      \frac{\id}{\id - \I \lambda \Delta_\star} \right) \left ((l \circ
         {\overline{u}^{-1}}^*)(f)  \star (l \circ {\overline{u}^{-1}}^*)(f')
          \right ) \Fdot
   \end{align}
\end{satz}

Satz \ref{satz:ReduziertesSternproduktFuerKotangentialBuendel} zeigt,
dass die Quanten"=Koszul"=Reduktionsmethode mit den oben getroffenen
geometrischen Wahlen im Spezialfall von Kotangentialbündeln mit
kanonischer symplektischer Form, gelifteter Wirkung und verschwindendem
Impulswert sowie ohne Magnetfeld genau das Sternprodukt ist, das
Kowalzig, Neumaier und Pflaum in \cite{kowalzig.neumaier.pflaum:2005a}
konstruiert haben.

\section{Reduziertes Sternprodukt für nichtverschwindenden Impulswert
  und mit Magnetfeld}
\label{sec:nichtVerschwindendeImpulswerteInAnwesenheitVonMagnetfeldern}

In diesem Abschnitt wollen wir die Betrachtungen des vorangegangenen
verallgemeinern und sowohl nichtverschwindende, $G$"=invariante
Impulswerte $\mu$ als auch $G$"=invariante Magnetfelder $B$
zulassen. Ziel ist es, zu zeigen, dass die Konstruktion von Kowalzig,
Neumaier und Pflaum auch für diese Version einen Spezialfall der
Quanten"=Koszul-Reduktion darstellt. Um dies zu erreichen, müssen wir
spezielle geometrische Daten für die Homotopie wählen, genauer werden
wir solche durch Zurückziehen der geometrischen Homotopie"=Daten aus
Abschnitt \ref{sec:SpezialFallImpulsMagnetfeldNull} erhalten. Dies
betrachten wir zunächst in einer etwas allgemeineren Situation.

\begin{satz}[Zurückziehen geometrischer Homotopie"=Daten]

   \label{satz:ZurueckZiehenVonGeomHomotopieDaten}
   Seien $(M,\omega)$ und $(M',\omega')$ symplektische
   Mannigfaltigkeiten und $G$ eine Lie"=Gruppe, die auf $M$ und $M'$ frei
   und eigentlich wirke, sowie $J \colon M \to \lieAlgebra^*$ und $J'
   \colon M' \to \lieAlgebra^*$, zugehörige $G$"=äquivariante
   Impulsabbildungen mit $C := J^{-1}(0) \neq \emptyset$ und $C' := J'^{-1}(0)
   \neq \emptyset$. Weiter seien $\mathsf{GH'} := (\Psi' \colon U' \to
   V', O' \subset U',{\psi'}_{U'},{\psi'}_{W'},\xi')$ $G$"=invariante
   geometrische Homotopie"=Daten für $C' \subset M'$.  Sei nun $s \colon
   M \to M'$ ein $G$"=äquivarianter Diffeomorphismus mit $J =
   s^*J'$. Wir definieren  $U := s^{-1}(U')$, $O := s^{-1}(O')$, $W
   := M\setminus \abschluss{O}$ und $\psi_U := s^*\psi_U'$ sowie $\psi_W
   := s^*\psi_W'$. Zudem sei $V := (s\at{C}^{-1} \times \id)(V') \subset
   C \times \lieAlgebra^*$ und $\Psi := (s\at{C}^{-1} \times \id) \circ
   \Psi' \circ s\at{U} \colon U \to V$ sowie schließlich $\xi = s^*
   {\xi'}$.

   Dann sind $(\Psi \colon U \to V, O \subset
   U,{\psi}_{U},{\psi}_{W},\xi)$ $G$"=invariante geometrische
   Homotopie"=Daten für $C$ und es gilt für die davon induzierte
   geometrische Homotopie $\h$ die Beziehung
   \begin{align}
   \label{eq:ZurueckGezogeneHomotopie}
      \h = s^* \circ \h' \circ (s^{-1})^* \Fcom
   \end{align}
   wobei $\h'$ die von $\mathsf{GH'}$ induzierte geometrische Homotopie
   bezeichne. Des Weiteren ist für die induzierte geometrische
   Prolongation $\prol$ die Gleichung
   \begin{align}
      \label{eq:ZurueckGezogenesProl}
      \prol = s^* \circ \prol' \circ (s\at{C}^{-1})^*
   \end{align}
   erfüllt, wobei $\prol'$ die von $\mathsf{GH'}$ induzierte
   geometrische Prolongation sei.

   Ist $\Psi'$ global, $O = U' = M'$, also $W' = \emptyset$,
   ${\psi'}_{U'} \equiv 1$ und $\xi' = \emptyset$, so gilt entsprechendes auch
   für die zurückgezogenen geometrischen Homotopie"=Daten.
\end{satz}
\begin{proof}
   Es ist klar, dass $O$ und $U$ offen und $G$"=invariant sind. Da $s$
   ein Homöomorphismus ist, folgt $\abschluss{O} =
   \abschluss{s^{-1}(O')} = s^{-1}(\abschluss{O'}) \subset s^{-1}(U') =
   U$. Weiter sieht man leicht die Inklusion $C = s^{-1}(C')
   \subset O$ und die Beziehung $W = M\setminus \abschluss{O} =
   M\setminus s^{-1}(\abschluss{O'}) = s^{-1}(M'\setminus
   \abschluss{O'}) = s^{-1}(W')$. $\{\psi_{U},\psi_W\}$ ist
   offensichtlich eine glatte, $G$"=invariante Zerlegung der Eins mit
   $\supp \psi_U = s^{-1}(\supp {\psi'}_{U'}) \subset U$ und $\supp
   \psi_W = s^{-1}(\supp{\psi'}_{W'}) \subset W$. Ebenfalls bemerkt man
   sofort, dass $V$ offen und $G$"=invariant ist sowie dass $\Psi$ ein
   $G$"=äquivarianter Diffeomorphismus ist. Nach Konstruktion gilt für $c
   \in C$ die Gleichung
   \begin{align*}
      \Psi(c) = ((s\at{C}^{-1} \times \id))(\Psi'(s(c))) =
      ({s\at{C}}^{-1}(s(c)),0) = (c,0) \Fdot
   \end{align*}
 Des Weiteren
   ist $V$ offenbar sternförmig in Faserrichtung und für $(c,\mu) \in V$
   gilt
   \begin{align*}
      J(\Psi^{-1}(c,\mu)) = J' \circ s({s\at{U}}^{-1} \circ
      \Psi'^{-1} \circ (s\at{C} \times \id)(c,\mu)) =
      J'({\Psi'}^{-1}(s(c),\mu)) = \mu \Fdot
   \end{align*}
   Also ist $\Psi$ eine  $G$"=äquivariante \tn{gute} Tubenabbildung.  Sei  nun
   \begin{align*}
      \h_U \colon \CM[U] \otimes \Bigwedge^\bullet \lieAlgebra \to
      \CM[U] \otimes \Bigwedge^{\bullet + 1} \lieAlgebra
    \end{align*}
    die Homotopie auf $U$, die, wie wir nochmal erinnern möchten, für $k
    \in \mathbb{N}$ und $f \otimes \eta \in C^\infty(U)\otimes
    \Bigwedge^k \lieAlgebra$ folgendermaßen gegeben ist.
    \begin{align*}
       \h_U(f \otimes \eta) \circ \Psi^{-1}(c,\mu) = \int_0^1
       t^k \partial^\alpha(f \circ \Psi^{-1})(c,t \mu) \, dt \otimes
       e_\alpha \wedge \eta \Fdot
    \end{align*}

    Dann erhalten wir für $k \in \mathbb{N}$, $f \otimes \eta \in
    C^\infty(U)\otimes \Bigwedge^k \lieAlgebra$ und $(c,\mu) \in V$
        \begin{align*}
       \lefteqn{((s\at{U})^* \circ \h'_{U'} \circ (s\at{U}^{-1})^*)(f
         \otimes
         \eta)(\Psi^{-1}(c,\mu))} \\
       &= {\h'}_{U'}(f\circ s\at{U}^{-1} \otimes \eta)(s\at{U} \circ
       \Psi^{-1}(c,\mu)) \\
       &= \h'_{U'}(f\circ s\at{U}^{-1} \otimes \eta)({\Psi'}^{-1}
       \circ
       {(s\at{C} \times \id)}(c,\mu)) \\
       &= \int_0^1 t^k \partial^\alpha (f \circ
       s\at{U}^{-1}\circ{\Psi'}^{-1})(s(c),t\mu) \, dt \otimes e_\alpha \wedge \eta \\
       &= \int_0^1 t^k \partial^\alpha (f \circ \Psi^{-1} \circ
       (s\at{C}^{-1} \times \id))(s(c),t\mu) \, dt \otimes e_\alpha \wedge \eta \\
       &= \int_0^1 t^k \partial^\alpha (f \circ \Psi^{-1}
       )((s\at{C}^{-1} \times \id)(s(c),t\mu)) \, dt \otimes e_\alpha \wedge \eta \\
       &= \h_U(f \otimes \eta)(\Psi^{-1}(c,\mu)) \Fdot
     \end{align*}
     Zudem
   gilt
    \begin{align*}
       \dPaar{J}{\xi} = \dPaar{s^*J'}{s^*\xi'} = s^*\psi_W = \psi_W
    \end{align*}
    und
    \begin{align*}
       \supp \xi = s^{-1}(\supp \xi') \subset W \Fdot
    \end{align*}
    Sei nun $p \in \supp({\psi}_{U}) \cap W$, dann ist offenbar auch
    $s(p) \in \supp({\psi'}_{U'}) \cap W'$ und es folgt
     \begin{align*}
        \xi(p) = \xi'(s(p))&= -{\h'}_{U'}({\psi'_{U'}\at{U'}})(s(p)) \\
        &= -((s\at{U}^{-1})^* \circ \h_{U} \circ
        s\at{U}^*({\psi'_{U'}\at{U'}}))(s(p)) = -\h_U(\psi_U\at{U})(p)
        \Fcom
     \end{align*}
     d.\,h.\
     \begin{align*}
        \xi\at{{\supp(\psi_U) \cap W}} =
        -\h_U(\psi_U\at{U})\at{{\supp(\psi_U) \cap W}} \Fdot
     \end{align*}
     Weiter gilt für $\phi \in \CM \otimes \Bigwedge^k \lieAlgebra^*$

     \begin{align*}
        s^* \circ \h_{W'}' \circ (s^{-1})^* (\phi) = s^*(\xi' \wedge
        (s^{-1})^*(\phi)) = s^*\xi' \wedge s^*(s^{-1})^*\phi = \xi
        \wedge \phi = {\h}_{W}(\phi) \Fcom
     \end{align*}
     folglich auch
     \begin{align*}
        s^* \circ \h' \circ (s^{-1})^*(\phi) &= s^*(\psi'_{U'}
        \h_{U'}'(((s^{-1})^*\phi)\at{U'})) +
        s^*\h_{W'}' (s^{-1})^* \phi \\
        &= \psi_U \h_U(\phi\at{U}) + \h_{W}(\phi) \\
        &= \h(\phi)\Fdot
     \end{align*}
Schließlich sehen wir für alle $f \in C^\infty(C)$ und $(c,\mu) \in V$
\begin{align*}
   (s^* \circ {\prol'} \circ ({s\at{C}}^{-1})^*)(f)(\Psi^{-1}(c,\mu) )
   &=
   ({\prol'}((s\at{C}^{-1})^*f))({\Psi'}^{-1}(s(c),\mu)) \\
   &=
   \psi'_{U'}({\Psi'}^{-1}(s(c),\mu)) \cdot f((s\at{C})^{-1}(s(c)))
   \\
   &=
     \psi'_{U'}(s(\Psi^{-1}(c,\mu))) \cdot f(c) \\
&= \psi_U ({\Psi}^{-1}(c,\mu)) \cdot f(c)\\
   &= \prol(f)({\Psi}^{-1}(c,\mu)) \Fdot
\end{align*}
\end{proof}

\begin{bemerkung}
\label{bem:Zurueckziehen}
Sind in Satz \ref{satz:ZurueckZiehenVonGeomHomotopieDaten} $\kIn \colon
J^{-1}(0) \hookrightarrow M$ und $\kIn' \colon J'^{-1}(0)
\hookrightarrow M'$ die
Inklusionen der Zwangsflächen in den Phasenraum, so gilt offensichtlich
\begin{align}
   \label{eq:InklusionenZurueckZiehen}
   \kIn^* = s\at{C}^* \circ \kIn'^* \circ (s^{-1})^* \Fdot
\end{align}
\end{bemerkung}

Wir betrachten nun die in Abschnitt
\ref{sec:KotangentialBuendelMitMagnetfeld} beschriebene Situation in
voller Allgemeinheit. Insbesondere sei $\mu \in \lieAlgebra^*$ ein
$G$"=invarianter Impulswert und  $B \in \Gamma^\infty(\bigwedge^2 T^*Q)$
$G$"=invariant sowie geschlossen.

Da wir  einen im Allgemeinen nichtverschwindenden $G$"=invarianten
Impulswert $\mu$ vorliegen haben, müssen wir die in Abschnitt
\ref{sec:nichtVerschwindendeImpulswerte} dargelegten Überlegungen
anwenden. Wir definieren also $J^\mu_B := J_B - \mu$ und $\qJ[B]^\mu :=
\qJ[B] - \mu$. Die davon induzierten Koszul-
bzw.\ Quanten"=Koszul"=Operatoren wollen wir mit $\partial^\mu_B$
bzw.\ $\qkoszul^\mu_B$ bezeichnen. Die Inklusion der Zwangsfläche
$C_B^\mu := {J^\mu_B}^{-1}(0)$ notieren wir als $\kIn_B^\mu \colon
C_B^\mu \hookrightarrow T^*Q$ und die kanonische Projektion auf den reduzierten
Phasenraum als $\pi^\mu_B \colon T^*Q \to C^\mu_B/G$. Schließlich
schreiben wir $(\qRes)_B^\mu$ für die Quanteneinschränkung.

Ist $\alpha \in \Gamma^\infty(T^*Q)$, so definieren wir die
\neuerBegriff{Fasertranslation} $\Faser{\alpha}\colon T^*Q \to T^*Q$
durch $\Faser{\alpha}(\zeta_q) := \zeta_q + \alpha(q)$
(vgl.\ \cite[Def. 3.2.14,~ii.)]{waldmann:2007a}). Dies ist offensichtlich
ein Diffeomorphismus von $T^*Q$ mit inverser Abbildung
$(\Faser{\alpha})^{-1} = \Faser{-\alpha}$ und es gilt
$\Faser{\alpha}^*(\mathcal{P}(Q)) \subset \mathcal{P}(Q)$.  Sofern
$\alpha$ $G$"=invariant ist, folgt, wie man leicht nachrechnet, die
$G$"=Äquivarianz von $\Faser{\alpha}$ und damit unmittelbar die von
$\Faser{\alpha}^*$. Ist weiter $j \colon Q \to \lieAlgebra^*$ eine
glatte Abbildung, so definieren wir die Einsform $\Gamma_{j} \in
\Gamma^\infty(T^*Q)$ durch $\Gamma_{j} := \dPaar{j}{\gamma}$. Für ein
$G$"=äquivariantes $j$ ist, wie eine einfache Rechnung zeigt,
$\Gamma_{j}$ $G$"=invariant. Da nun $\mu \in \lieAlgebra^*$ ein
$G$"=invarianter Impulswert ist sowie $j_0 \colon Q \to \lieAlgebra^*$
eine $G$"=äquivariante Abbildung, sieht man nach obiger Anmerkung
direkt, dass auch $\FaserG{j_0 - \mu}$ $G$"=äquivariant ist.

\begin{proposition}
   \label{lem:sBeiUns}
   Es gilt $\FaserG{j_0 - \mu}^* \JKan = \JB[\mu]$.
\end{proposition}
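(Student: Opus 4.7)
The plan is to verify this identity by a direct pointwise computation, unfolding all definitions involved. The proof is essentially a one-line calculation once the relevant structures are unwound, so no serious obstacle should arise; the only conceptual ingredient needed is the defining property of a principal connection one-form.

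First I would fix $\xi \in \lieAlgebra$, $q \in Q$ and $\zeta_q \in T_q^*Q$, and compute
\begin{align*}
(\FaserG{j_0 - \mu}^* \JKan)(\xi)(\zeta_q)
 &= \JKan(\xi)\bigl(\zeta_q + \Gamma_{j_0-\mu}(q)\bigr) \\
 &= \dPaar{\zeta_q + \Gamma_{j_0-\mu}(q)}{\xi_Q(q)} \\
 &= \dPaar{\zeta_q}{\xi_Q(q)} + \dPaar{\Gamma_{j_0-\mu}(q)}{\xi_Q(q)},
\end{align*}
where in the second line I just used the definition $\JKan(\xi)(\alpha_q) = \dPaar{\alpha_q}{\xi_Q(q)}$. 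The first summand is $\JKan(\xi)(\zeta_q)$. For the second summand, by the definition of $\Gamma_{j_0 - \mu} = \dPaar{j_0 - \mu}{\gamma}$ and using the fundamental property $\gamma(\xi_Q(q)) = \xi$ of the connection one-form, I obtain
\begin{equation*}
\dPaar{\Gamma_{j_0-\mu}(q)}{\xi_Q(q)} = \dPaar{j_0(q) - \mu}{\gamma(\xi_Q(q))} = \dPaar{j_0(q) - \mu}{\xi}.
\end{equation*}

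On the other hand, by the construction of $\JB$ in Lemma~\ref{lem:ImpulsAbbildungMitMagnetfeld} and the definition $\JB[\mu] := \JB - \mu$, we have
\begin{equation*}
\JB[\mu](\xi)(\zeta_q) = \dPaar{\zeta_q}{\xi_Q(q)} + \mathfrak{p}^*\dPaar{j_0}{\xi}(\zeta_q) - \dPaar{\mu}{\xi} = \JKan(\xi)(\zeta_q) + \dPaar{j_0(q) - \mu}{\xi}.
\end{equation*}
Comparing the two expressions term by term yields the claimed equality, and since $\xi$, $q$ and $\zeta_q$ were arbitrary, we conclude $\FaserG{j_0-\mu}^* \JKan = \JB[\mu]$ as maps $T^*Q \to \lieAlgebra^*$. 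The whole argument is purely computational; the only non-trivial input is the relation $\gamma \circ \xi_Q = \xi$ characterising a principal connection, which is precisely what allows the translation by the vertical one-form $\Gamma_{j_0 - \mu}$ to reproduce the shift $\mathfrak{p}^* j_0 - \mu$ under pullback by $\JKan$.
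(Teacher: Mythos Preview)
Your proof is correct and follows essentially the same approach as the paper: a direct pointwise computation using the definition of $\JKan$, the fibre translation, and the connection property $\gamma(\xi_Q(q)) = \xi$. The only difference is cosmetic (your variable is $\zeta_q$ rather than $\alpha_q$, and you spell out the comparison with $\JB[\mu]$ in a separate line rather than in the final step of the chain).
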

\begin{proof}
   Sei $\xi \in \lieAlgebra$, $q \in Q$ und $\alpha_q \in T^*_qQ$, dann gilt
   \begin{align*}
      (\FaserG{j_0 - \mu}^*( \JKan(\xi)))(\alpha_q) &=
      \JKan(\xi)(\FaserG{j_0 - \mu}(\alpha_q)) \\
      &= \JKan(\xi)(\alpha_q + \Gamma_{j_0 - \mu}(q)) \\
      &= \dPaar{\xi_Q(q)}{\alpha_q} + \dPaar{\xi_Q(q)}{\Gamma_{j_0 -
          \mu}(q)}\\
      &= \JKan(\xi)(\alpha_q) + \dPaar{j_0(q) -
        \mu}{\gamma(\xi_Q(q))}\\
      &= \JKan(\xi)(\alpha_q) + \dPaar{j_0(q) - \mu}{\xi}\\
      &= \JB[\mu](\xi)(\alpha_q) \Fdot
   \end{align*}
\end{proof}

\begin{proposition}
   \label{prop:ReduktionKotangentialbuendelAllgemein}

         Es gilt $\FaserG{j_0 - \mu}(C_B^\mu) \subset
         C_{\mathrm{kan}}$ und $\FaserG{j_0 - \mu}\at{C^\mu_B}
         \colon C^\mu_B \to C_{\mathrm{kan}}$ ist ein
         $G$"=äquivarianter Diffeomorphismus, der einen eindeutig
         bestimmten Diffeomorphismus ${\oFaserG{j_0 - \mu}}
         \colon C^\mu_B/G \to C_{\mathrm{kan}}/G$
         induziert, so dass das  Diagramm

         \def\tA[#1]{A_{#1}}
\begin{equation}
   \label{eq:ReduktionKotangentialbuendelAllgemein}
   \begin{tikzpicture}[baseline=(current
      bounding box.center),description/.style={fill=white,inner sep=2pt}]
      \matrix (m) [matrix of math nodes, row sep=3.0em, column
      sep=3.5em, text height=1.5ex, text depth=0.25ex]
      {
        C^\mu_B & C_{\mathrm{kan}}  \\
         C^\mu_B/G & C_{\mathrm{kan}}/G   \\
      }; %

      \path[->] (m-1-1) edge
      node[auto]{$\FaserG{j_0 - \mu}\at{C^\mu_B}$}(m-1-2); %
      \path[->] (m-1-1) edge node[auto]{$\pi^\mu_B$}(m-2-1); %
      \path[->] (m-2-1) edge node[auto]{$\oFaserG{j_0 - \mu}$}(m-2-2); %
      \path[->] (m-1-2) edge node[auto]{$\piKan$}(m-2-2); %
   \end{tikzpicture}
\end{equation}
kommutiert.
\end{proposition}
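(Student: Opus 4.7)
The plan is to exploit Proposition~\ref{lem:sBeiUns} as a key translation rule and then read off the claim almost by inspection, treating the quotient step by the standard universal property.

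First I would verify the inclusion $\FaserG{j_0-\mu}(C^\mu_B) \subset C_{\mathrm{kan}}$ directly from Proposition~\ref{lem:sBeiUns}. Since $\FaserG{j_0-\mu}^*\JKan = \JB[\mu]$, for every $\alpha \in C^\mu_B$ and every $\xi \in \lieAlgebra$ we have $\dPaar{\JKan(\FaserG{j_0-\mu}(\alpha))}{\xi} = \dPaar{\JB[\mu](\alpha)}{\xi} = 0$, so the image lies in $C_{\mathrm{kan}}$. The same identity, read in reverse, shows surjectivity: since $\FaserG{j_0-\mu}$ is a global diffeomorphism of $T^*Q$ with inverse $\FaserG{\mu-j_0}$, and since by the analogous computation $\FaserG{\mu-j_0}$ pulls $\JB[\mu]$ back to $\JKan$, it maps $C_{\mathrm{kan}}$ into $C^\mu_B$. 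Thus the restriction is a bijection between the two constraint surfaces, and being the restriction of a global diffeomorphism between embedded submanifolds of the same dimension, it is itself a diffeomorphism.

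Next I would check $G$-equivariance of $\FaserG{j_0-\mu}$ on all of $T^*Q$. The lifted $G$-action $\Phi_g = \mathsf{T}^*(\phi_{g^{-1}})$ acts fiberwise linearly, and $\Gamma_{j_0-\mu}$ is $G$-invariant: indeed $j_0$ is $G$-equivariant and $\mu$ is $G$-invariant by assumption, while the connection form $\gamma$ is $G$-equivariant, so a short computation gives $\phi_g^*\Gamma_{j_0-\mu} = \Gamma_{j_0-\mu}$. From this, $\Phi_g \circ \FaserG{j_0-\mu} = \FaserG{j_0-\mu} \circ \Phi_g$ follows immediately from the fact that the lifted action is fiberwise linear and $\Gamma_{j_0-\mu}$ is $G$-invariant as a section. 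This equivariance then restricts to $C^\mu_B$.

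Finally, for the descent to the quotients, I would invoke the universal property of quotients by free proper actions. Since $G$ acts freely and properly on both $C^\mu_B$ and $C_{\mathrm{kan}}$, the projections $\pi^\mu_B$ and $\piKan$ are smooth surjective submersions realizing the quotients, and the composition $\piKan \circ \FaserG{j_0-\mu}\at{C^\mu_B}$ is $G$-invariant in its argument by the equivariance just established. Hence it descends to a unique smooth map $\oFaserG{j_0-\mu} \colon C^\mu_B/G \to C_{\mathrm{kan}}/G$ making the diagram commute, and applying the same argument to $\FaserG{\mu-j_0}\at{C_{\mathrm{kan}}}$ yields a smooth inverse, so $\oFaserG{j_0-\mu}$ is a diffeomorphism. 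I expect no genuine obstacle; the only point that requires some care is the $G$-invariance of $\Gamma_{j_0-\mu}$, since the shift $-\mu$ only makes sense as an invariant object because $\mu \in (\lieAlgebra^*)^G$, and this assumption is precisely what is needed to make the whole translation trick work.
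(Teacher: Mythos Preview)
Your proposal is correct and follows exactly the same route as the paper: the paper's proof reads in its entirety ``Die erste Aussage folgt direkt aus Proposition~\ref{lem:sBeiUns}, der Rest ist klar'', i.e.\ the inclusion comes from $\FaserG{j_0-\mu}^*\JKan = \JB[\mu]$ and everything else is routine. Note that the paper already recorded the $G$-equivariance of $\FaserG{j_0-\mu}$ (via the $G$-invariance of $\Gamma_{j_0-\mu}$) in the paragraph immediately preceding Proposition~\ref{lem:sBeiUns}, so your middle step is a re-derivation of something the paper treats as ambient context rather than part of the proof proper.
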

\begin{proof}

Die erste Aussage folgt direkt aus Proposition \ref{lem:sBeiUns}, der Rest ist klar.
\end{proof}

\begin{bemerkung}
   \label{bem:KotangentialbuendelreduktionSymplekto}
   Man kann weiter noch das Folgende zeigen, wodurch die Struktur des
   reduzierten Phasenraums geklärt wird. Er hat nämlich wieder die
   Struktur eines Kotangentialbündels, mit einer symplektischen Form, die
   Magnetfeldterme aufweist, genauer gilt:
   \begin{bemerkungEnum}
   \item %
      Es gibt eine eindeutig bestimmte geschlossene Zweiform $b$ auf
      $Q/G$ mit $\varpi^* b = B + d\Gamma_{j_0 - \mu}$.
   \item %
      Der reduzierte Phasenraum $((T^*Q)_{\mathrm{red},\mu} =
      C^\mu_B/G,\omega_{\mathrm{red},\mu})$ ist symplektomorph zu
      $(T^*(Q/G), ({\omegaKan})_{\mathrm{red}} +
      \overline{\mathfrak{p}}^* b)$ vermöge der Abbildung $\overline{u}
      \circ {\oFaserG{j_0 - \mu}} \colon C^\mu_B/G \to T^*(Q/G)$, dabei
      bezeichnet $\overline{\mathfrak{p}} \colon T^*(Q/G) \to Q/G$ die
      Fußpunktprojektion.
         \begin{equation}   \label{eq:ReduktionKotangentialbuendelAllgemein2}
   \begin{tikzpicture}[baseline=(current
      bounding box.center),description/.style={fill=white,inner sep=2pt}]
      \matrix (m) [matrix of math nodes, row sep=3.0em, column
      sep=3.5em, text height=1.5ex, text depth=0.25ex]
      {
         C^\mu_B & C_{\mathrm{kan}} &  \\
         C^\mu_B/G & C_{\mathrm{kan}}/G & T^*(Q/G)  \\
      }; %

      \path[->] (m-1-1) edge
      node[auto]{$\FaserG{j_0 - \mu}\at{C^\mu_B}$}(m-1-2); %
      \path[->] (m-1-1) edge node[auto]{$\pi^\mu_B$}(m-2-1); %
      \path[->] (m-2-1) edge node[auto]{${\oFaserG{j_0 - \mu}}$}(m-2-2); %
      \path[->] (m-1-2) edge node[auto]{$\piKan$}(m-2-2); %
      \path[->] (m-1-2) edge node[auto]{$u$}(m-2-3); %
      \path[->] (m-2-2) edge node[auto]{$\overline{u}$}(m-2-3); %
   \end{tikzpicture}
\end{equation}
Diese Aussage ist für die Konstruktion von Kowalzig, Neumaier und Pflaum
\cite{kowalzig.neumaier.pflaum:2005a} wichtig, für den von uns
angestrebten Vergleich mit dieser Arbeit spielt sie jedoch nur eine
untergeordnete Rolle. Für einen Beweis dieser Tatsache verweisen wir auf
\cite[Thm. 3.2]{kowalzig.neumaier.pflaum:2005a} sowie auf die dort
angegebenen Referenzen.
   \end{bemerkungEnum}
\end{bemerkung}

Wir wählen nun für $C_{\mathrm{kan}}$ die globale Tubenumgebung aus
Proposition \ref{prop:globaleTubenabbildung}, die davon induzierte
geometrische $G$"=äquivariante Homotopie sei wieder mit $\hKan$
bezeichnet und die induzierte $G$"=äquivariante Prolongation mit $\prolKan$.

Mit Proposition \ref{lem:sBeiUns} ist klar, dass mit den Wahlen $M = M' = T^*Q$, $J =
\JB[\mu]$, $J' = \JKan$ und $s = \FaserG{j_0 - \mu} \colon T^*Q \to
T^*Q$ die Voraussetzungen von Satz
\ref{satz:ZurueckZiehenVonGeomHomotopieDaten} erfüllt sind. Wir
bezeichnen die so induzierte geometrische Homotopie mit $\hB[\mu]$
und die induzierte geometrische Prolongation mit $\prolB[\mu]$.

Um den Vergleich mit der Arbeit \cite{kowalzig.neumaier.pflaum:2005a} zu
erleichtern, wollen wir zunächst einige Symbole definieren. Dazu sei
$\{e_i\}$ eine fest gewählte Basis von $\lieAlgebra$.
   \begin{align}
      \label{eq:SymboleVonNikolai}
      \mathrm{h}_{j_0 - \mu} &:=  \mathrm{h}
      \circ \FaserG[-]{j_0 - \mu}^*\at{\mathcal{P}(Q)[[\lambda]]} \colon
      \mathcal{P}(Q)[[\lambda]] \to \mathcal{P}(Q)[[\lambda]] \notag \\
      \mathrm{r}^i_{j_0 - \mu} &:= \FaserG{j_0 - \mu}^* \circ
      \mathrm{r}^i \circ \FaserG[-]{j_0 -
        \mu}^*\at{\mathcal{P}(Q)[[\lambda]]} \colon
      \mathcal{P}(Q)[[\lambda]] \to \mathcal{P}(Q)[[\lambda]] \quad
      \forall i \in \{1,\dots,\dim G\} \notag \\
      \Delta_{\mu,\star}(F) &:= \frac{1}{\I \lambda} \sum_{i=1}^{\dim G}
      \left(\mathrm{r}^i_{j_0 - \mu}(F)\JB[\mu](e_i) - \mathrm{r}^i_{j_0
           - \mu}(F) \star \qJ[B]^\mu(e_i)\right) \in \mathcal{P}(Q)[[\lambda]] \quad \forall F \in
      \mathcal{P}(Q)[[\lambda]]
   \end{align}
   Man beachte, dass oben eingeführte Abbildungen wegen
   $\FaserG{j_0 - \mu}^*(\mathcal{P}(Q)[[\lambda]]) \subset
   \mathcal{P}(Q)[[\lambda]]$ und aufgrund der Bemerkungen nach Gleichung
   \eqref{eq:DeltaNikolai} wohldefiniert sind.

Wie in Abschnitt \ref{sec:SpezialFallImpulsMagnetfeldNull} wollen wir
nun unter Ausnutzung der dortigen Resultate die Bausteine des nach der
Quanten"=Koszul"=Methode konstruierten Sternprodukts auf dem
reduzierten Phasenraum in an die Kotangentialbündelsituation angepassten
Termen schreiben, um dann einen Vergleich mit
\cite{kowalzig.neumaier.pflaum:2005a} durchführen zu können. Mit Hilfe
von Satz~\ref{satz:ZurueckZiehenVonGeomHomotopieDaten} erhalten wir dabei
die folgenden Propositionen.

\begin{proposition}
   \label{prop:HomotopieMitMagnetfeld}
   Für alle $F \in \mathcal{P}(Q)$ gilt
   \begin{align}
      \hB[\mu](F) =  \mathrm{r}^i_{j_0 - \mu}(F) \otimes e_i \Fdot
   \end{align}
\end{proposition}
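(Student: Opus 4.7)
The plan is to obtain the identity as a direct computation, combining Satz~\ref{satz:ZurueckZiehenVonGeomHomotopieDaten} (pull-back of geometric homotopy data) with Lemma~\ref{lem:HomotopiePolynomial} (the explicit formula for $\hKan$ on polynomial functions). There is no real obstruction; the content of the statement is simply that the renamed operators $\mathrm{h}_{j_0-\mu}$, $\mathrm{r}^i_{j_0-\mu}$ are designed precisely so that Lemma~\ref{lem:HomotopiePolynomial} survives the transport by the fibre translation $\FaserG{j_0-\mu}$.

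First I would invoke Proposition~\ref{lem:sBeiUns} to check that the map $s := \FaserG{j_0-\mu}\colon T^*Q\to T^*Q$ is a $G$"=äquivarianter Diffeomorphismus with $s^*\JKan = \JB[\mu]$, so that the hypotheses of Satz~\ref{satz:ZurueckZiehenVonGeomHomotopieDaten} are met with $M=M'=T^*Q$, $J=\JB[\mu]$, $J'=\JKan$. Applying the theorem with the global \tn{gute} Tubenumgebung from Proposition~\ref{prop:globaleTubenabbildung} as $\mathsf{GH'}$ for $C_{\mathrm{kan}}$, I obtain for the induced homotopy on $T^*Q$ the identity
\begin{equation*}
\hB[\mu] \;=\; \FaserG{j_0-\mu}^* \,\circ\, \hKan \,\circ\, \FaserG[-]{j_0-\mu}^*,
\end{equation*}
where $\FaserG[-]{j_0-\mu}=(\FaserG{j_0-\mu})^{-1}$ is the inverse fibre translation.

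Next I would note that fibre translations preserve the space $\mathcal{P}(Q)$ of functions polynomial in the momenta, so for $F\in\mathcal{P}(Q)$ the function $\FaserG[-]{j_0-\mu}^*F$ lies again in $\mathcal{P}(Q)$, and Lemma~\ref{lem:HomotopiePolynomial} applies. Using that $\FaserG{j_0-\mu}^*$ acts on the $C^\infty(T^*Q)$"=Faktor only and commutes with the tensor factor $e_i\in\lieAlgebra$, I compute
\begin{align*}
\hB[\mu](F)
&= \FaserG{j_0-\mu}^*\bigl(\hKan(\FaserG[-]{j_0-\mu}^*F)\bigr) \\
&= \FaserG{j_0-\mu}^*\bigl(\mathrm{r}^i(\FaserG[-]{j_0-\mu}^*F)\otimes e_i\bigr) \\
&= \bigl(\FaserG{j_0-\mu}^* \circ \mathrm{r}^i \circ \FaserG[-]{j_0-\mu}^*\bigr)(F)\otimes e_i.
\end{align*}
Inserting the definition \eqref{eq:SymboleVonNikolai} of $\mathrm{r}^i_{j_0-\mu}$ gives the claimed formula $\hB[\mu](F)=\mathrm{r}^i_{j_0-\mu}(F)\otimes e_i$. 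The only point that deserves a remark in the writeup is the well-definedness of the composition, i.e.\ that $\FaserG[-]{j_0-\mu}^*\mathcal{P}(Q)\subset\mathcal{P}(Q)$; this is immediate from the fact that translation by a smooth one-form shifts $p_i\mapsto p_i-(\Gamma_{j_0-\mu})_i$, so degrees in the fibre coordinates are preserved.
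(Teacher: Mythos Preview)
Your proof is correct and follows essentially the same approach as the paper: both use Satz~\ref{satz:ZurueckZiehenVonGeomHomotopieDaten} to write $\hB[\mu] = \FaserG{j_0-\mu}^* \circ \hKan \circ \FaserG[-]{j_0-\mu}^*$, then apply Lemma~\ref{lem:HomotopiePolynomial} and the definition of $\mathrm{r}^i_{j_0-\mu}$. The verification of hypotheses via Proposition~\ref{lem:sBeiUns} and the remark on $\FaserG[-]{j_0-\mu}^*\mathcal{P}(Q)\subset\mathcal{P}(Q)$ are already established in the paper's surrounding text, so they need not be repeated, but including them does no harm.
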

\begin{proof}
   Mit Satz \ref{satz:ZurueckZiehenVonGeomHomotopieDaten} und Lemma
   \ref{lem:HomotopiePolynomial} folgt sofort für $F \in \mathcal{P}(Q)$
   \begin{align*}
      \hB[\mu](F) = (\FaserG{j_0 - \mu}^* \circ \hKan \circ \FaserG[-]{j_0 -
        \mu}^*)(F) = \FaserG{j_0 - \mu}^*  \mathrm{r}^i
      (\FaserG[-]{j_0 - \mu}^*(F)) \otimes e_i = \mathrm{r}^i_{j_0 -\mu}(F) \otimes
      e_i \Fdot
   \end{align*}
\end{proof}

Damit ist klar, dass für $F \in \mathcal{P}(Q)$ die Gleichung
\begin{align}
   \label{eq:DeltaMitMagnetfeld}
   \Delta_{\mu,\star}(F) = \frac{1}{\I \lambda}(\kKoszulB[\mu] -
   \qKoszulB[\mu])(\hB[\mu](F))
\end{align}
gilt.

\begin{proposition}
   \label{prop:klassischeEinschraenkungMitMagnetfeld}
Für jedes $F \in \mathcal{P}(Q)^G$ ist die Gleichung
\begin{align}
\label{eq:klassischeEinschraenkungMitMagnetfeld}
{\kInB[\mu]}^*(F) = (\piB[\mu]^* \circ (\overline{u} \circ {\oFaserG{j_0
    - \mu}})^* \circ l^{-1} \circ \mathrm{h}_{j_0 - \mu})(F)
\end{align}
richtig.
\end{proposition}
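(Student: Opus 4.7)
The plan is to reduce the general magnetic, non-zero momentum case to the already established zero-momentum, zero-magnetic-field case by pulling everything back along the fiber translation $\FaserG{j_0 - \mu}$, and then to identify the resulting composition using the commutative diagram in Proposition~\ref{prop:ReduktionKotangentialbuendelAllgemein}.

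First I would invoke Proposition~\ref{lem:sBeiUns}, which says $\FaserG{j_0 - \mu}^* \JKan = \JB[\mu]$, so that the hypothesis of Satz~\ref{satz:ZurueckZiehenVonGeomHomotopieDaten} is satisfied with $M = M' = T^*Q$, $J = \JB[\mu]$, $J' = \JKan$, $C = C_B^\mu$, $C' = C_{\mathrm{kan}}$ and $s = \FaserG{j_0 - \mu}$. The restriction $\FaserG{j_0 - \mu}\at{C_B^\mu} : C_B^\mu \to C_{\mathrm{kan}}$ is then a $G$-equivariant diffeomorphism, and Bemerkung~\ref{bem:Zurueckziehen} yields the identity
\begin{equation*}
   {\kInB[\mu]}^* = (\FaserG{j_0 - \mu}\at{C_B^\mu})^* \circ \kIn^* \circ \FaserG[-]{j_0 - \mu}^*.
\end{equation*}

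Next, since $F \in \mathcal{P}(Q)^G$ and $\FaserG{j_0 - \mu}$ preserves both the polynomial subalgebra and the $G$-action (the first because fiber translations preserve $\mathcal{P}(Q)$, the second because $j_0 - \mu$ is $G$-equivariant), the function $\FaserG[-]{j_0 - \mu}^*(F)$ again lies in $\mathcal{P}(Q)^G$. Thus Proposition~\ref{prop:EinschraenkungFuerKotangentialbuendel} applies and gives
\begin{equation*}
   \kIn^* \circ \FaserG[-]{j_0 - \mu}^*(F) = u^* \circ l^{-1} \circ \mathrm{h}\bigl(\FaserG[-]{j_0 - \mu}^* F\bigr) = u^* \circ l^{-1} \circ \mathrm{h}_{j_0 - \mu}(F),
\end{equation*}
where the last equality is simply the definition of $\mathrm{h}_{j_0 - \mu}$ from \eqref{eq:SymboleVonNikolai}.

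It remains to identify $(\FaserG{j_0 - \mu}\at{C_B^\mu})^* \circ u^*$ with $\piB[\mu]^* \circ (\overline{u} \circ \oFaserG{j_0 - \mu})^*$. For this I would combine the defining factorisation $u = \overline{u} \circ \piKan$ from Satz~\ref{satz:ReduktionKotangentenbuendel} with the commutativity $\piKan \circ \FaserG{j_0 - \mu}\at{C_B^\mu} = \oFaserG{j_0 - \mu} \circ \piB[\mu]$ from Proposition~\ref{prop:ReduktionKotangentialbuendelAllgemein}, giving
\begin{equation*}
   (\FaserG{j_0 - \mu}\at{C_B^\mu})^* \circ u^* = (\FaserG{j_0 - \mu}\at{C_B^\mu})^* \circ \piKan^* \circ \overline{u}^* = \piB[\mu]^* \circ \oFaserG{j_0 - \mu}^* \circ \overline{u}^* = \piB[\mu]^* \circ (\overline{u} \circ \oFaserG{j_0 - \mu})^*.
\end{equation*}
Concatenating the three steps yields the claimed formula. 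No step looks genuinely delicate here — everything is purely formal; the only thing one has to be careful about is checking that the $G$-invariance and polynomial-preservation hypotheses of Proposition~\ref{prop:EinschraenkungFuerKotangentialbuendel} are still satisfied after pulling back by the fiber translation, which is exactly what the $G$-equivariance of $j_0 - \mu$ and the fiber-translation property buy us.
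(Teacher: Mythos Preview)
Your proof is correct and follows exactly the same three-step route as the paper: apply Bemerkung~\ref{bem:Zurueckziehen} to factor ${\kInB[\mu]}^*$ through $\kInKan^*$, then invoke Proposition~\ref{prop:EinschraenkungFuerKotangentialbuendel} on $\FaserG[-]{j_0-\mu}^*F$, and finally use the commutative diagram~\eqref{eq:ReduktionKotangentialbuendelAllgemein} together with $u=\overline{u}\circ\piKan$. Your explicit verification that $\FaserG[-]{j_0-\mu}^*F$ remains in $\mathcal{P}(Q)^G$ is a welcome addition that the paper leaves implicit.
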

\begin{proof}
   Sei $F \in \mathcal{P}(Q)^G$, dann erhält man mit Gleichung
   \eqref{eq:InklusionenZurueckZiehen}, Proposition
   \ref{prop:EinschraenkungFuerKotangentialbuendel} und Diagramm \eqref{eq:ReduktionKotangentialbuendelAllgemein2}
   \begin{align*}
      {\kInB[\mu]}^*(F) &= ((\FaserG{j_0 - \mu}\at{C_B^\mu})^* \circ
      \kInKan^* \circ \FaserG[-]{j_0 -
        \mu}^*)(F)\\
      &= ((\FaserG{j_0 - \mu}\at{C_B^\mu})^*\circ u^* \circ l^{-1} \circ
      \mathrm{h} \circ \FaserG[-]{j_0 -
        \mu}^*)(F)\\
      &= (\piB[\mu]^* \circ (\overline{u} \circ {\oFaserG{j_0 - \mu}})^*
      \circ l^{-1} \circ \mathrm{h}_{j_0 - \mu})(F)\Fdot
   \end{align*}
\end{proof}

\begin{proposition}
   \label{lem:l}
   Sei $j \colon Q \to \lieAlgebra^*$ glatt, dann gilt für jedes $F \in
   \mathrm{h}(\mathcal{P}(Q))$ schon $\FaserG{j}^*(F) = F$
   und somit auch $\FaserG{j}^* \circ l = l$.
\end{proposition}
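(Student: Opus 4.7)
Der Plan ist, die Gleichung $\FaserG{j}^*(F) = F$ auf $\mathrm{h}(\mathcal{P}(Q))$ durch Nachprüfen auf multiplikativen Erzeugern zu zeigen. Dabei nutzt man aus, dass sowohl $\FaserG{j}^*$ (als pull-back eines Diffeomorphismus von $T^*Q$) als auch $\mathsf{P}$ (nach Proposition~\ref{prop:universelleImpulsabbildung}) Algebrahomomorphismen sind. Nach Definition gilt $\mathrm{h}(\mathcal{P}(Q)) = \mathsf{P}(\mathrm{H}(\Gamma^\infty(\Bigvee TQ)))$ mit $\mathrm{H}(\Gamma^\infty(\Bigvee TQ)) = \bigoplus_k \Gamma^\infty(\Bigvee^k HQ)$, und letzteres wird als $\CM[Q]$"=Algebra von Elementen aus $\Gamma^\infty(HQ)$ erzeugt. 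Es reicht somit, die Invarianz auf Elementen der Form $\mathsf{P}(v)$ mit $v \in \Gamma^\infty(HQ)$ sowie auf $\mathfrak{p}^*\chi$ mit $\chi \in \CM[Q]$ zu verifizieren.

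Für $\chi \in \CM[Q]$ folgt aus $\mathfrak{p} \circ \FaserG{j} = \mathfrak{p}$ sofort $\FaserG{j}^*(\mathfrak{p}^*\chi) = \mathfrak{p}^*\chi$. Für $v \in \Gamma^\infty(HQ)$ besteht die entscheidende Beobachtung darin, dass für jedes $q \in Q$ die Paarung $\dPaar{\Gamma_j(q)}{v(q)} = \dPaar{j(q)}{\gamma(v(q))} = 0$ verschwindet, da die Zusammenhangseinsform $\gamma$ horizontale Vektoren nach Konstruktion annulliert. Für $q \in Q$ und $\alpha_q \in T_q^*Q$ ergibt sich damit die direkte Rechnung
\begin{align*}
\FaserG{j}^*(\mathsf{P}(v))(\alpha_q) = \mathsf{P}(v)(\alpha_q + \Gamma_j(q)) = \dPaar{\alpha_q + \Gamma_j(q)}{v(q)} = \dPaar{\alpha_q}{v(q)} = \mathsf{P}(v)(\alpha_q),
\end{align*}
was zusammen mit der Homomorphieeigenschaft von $\FaserG{j}^*$ und $\mathsf{P}$ die Aussage für beliebige $F \in \mathrm{h}(\mathcal{P}(Q))$ liefert.

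Für die zweite Aussage sei $F \in \mathcal{P}(Q/G)$ beliebig. Per Definition gilt $l(F) = \mathsf{P}((\overline{\mathsf{P}}^{-1}(F))^{\mathsf{h}})$, und da der horizontale Lift $^{\mathsf{h}}$ Schnitte von $\Bigvee T(Q/G)$ auf Schnitte von $\Bigvee HQ$ abbildet, folgt $l(F) \in \mathrm{h}(\mathcal{P}(Q))$. Der erste Teil angewandt auf $l(F)$ ergibt dann unmittelbar $\FaserG{j}^* \circ l = l$.

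Da das Argument im Wesentlichen eine Reduktion auf Erzeuger mit anschließender direkter Rechnung ist, stellt sich keine ernsthafte Schwierigkeit. Der einzige nennenswerte Punkt ist die Identifikation $\mathrm{h}(\mathcal{P}(Q)) = \mathsf{P}(\mathrm{H}(\Gamma^\infty(\Bigvee TQ)))$ zusammen mit der geometrischen Beobachtung $\dPaar{\Gamma_j}{v} = 0$ für horizontale $v$, die das eigentliche Herzstück des Beweises bildet.
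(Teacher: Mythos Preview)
Your proof is correct and follows essentially the same approach as the paper: reduce to multiplicative generators of $\mathrm{h}(\mathcal{P}(Q))$, handle $\mathfrak{p}^*\chi$ trivially, and for $v \in \Gamma^\infty(HQ)$ use that $\dPaar{\Gamma_j(q)}{v(q)} = \dPaar{j(q)}{\gamma(v(q))} = 0$ since $\gamma$ annihilates horizontal vectors. Your treatment is slightly more explicit about the algebra-homomorphism reduction and about why $l$ lands in $\mathrm{h}(\mathcal{P}(Q))$, but the argument is the same.
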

\begin{proof}
   Sei $Y \in \Gamma^\infty(HQ)$. Dann gilt für $q \in Q$ und $\alpha_q
   \in T_q^*Q$
   \begin{align*}
      (\FaserG{j}^*\mathsf{P}(Y))(\alpha_q) &=
      \mathsf{P}(Y)(\alpha_q + \Gamma_{j}(q)) = \mathsf{P}(Y)(\alpha_q)
      + \dPaar{Y(q)}{\Gamma_{j}(q)}\\ &= \mathsf{P}(Y)(\alpha_q) +
      \dPaar{\gamma(Y(q))}{j(q)} = \mathsf{P}(Y)(\alpha_q) \Fdot
   \end{align*}
   Da auch $\FaserG{j}^*(\mathsf{P}(\chi)) = \mathsf{P}(\chi)$ für jedes
   $\chi \in C^\infty(Q)$ und da $\FaserG{j}^*$ ein
   Algebrahomomorphismus ist, folgt die Aussage.
\end{proof}

\begin{proposition}
   \label{prop:PropPiMitMagnetfeld}
   Für alle $f \in \oFaserG{j_0 -
     \mu}^*(\overline{u}^*(\mathcal{P}(Q/G)))$ ist
   \begin{align}
      \label{eq:PropPiMitMagnetfeld}
      (\prolB[\mu] \circ \piB[\mu]^*)(f) = (l \circ ({\overline{u}^{-1}}^*)
      \circ {\oFaserG[-]{j_0 - \mu}^*})(f) \Fdot
   \end{align}
\end{proposition}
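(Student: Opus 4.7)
The plan is to reduce the statement to the already proved \emph{canonical} case (Proposition~\ref{prop:ProlongationFuerKotangentialbuendel}) by pulling back along the fibre translation $\FaserG{j_0 - \mu}$, exactly mirroring how $\prolB[\mu]$ was obtained from $\prolKan$. More precisely, I would start by invoking Satz~\ref{satz:ZurueckZiehenVonGeomHomotopieDaten}, which yields the explicit formula
\begin{equation*}
   \prolB[\mu] = \FaserG{j_0 - \mu}^{*} \circ \prolKan \circ \bigl((\FaserG{j_0 - \mu}\at{C_B^{\mu}})^{-1}\bigr)^{*},
\end{equation*}
and Proposition~\ref{prop:ReduktionKotangentialbuendelAllgemein}, whose commutative diagram~\eqref{eq:ReduktionKotangentialbuendelAllgemein} implies the pull\-back identity
\begin{equation*}
   \bigl((\FaserG{j_0 - \mu}\at{C_B^{\mu}})^{-1}\bigr)^{*} \circ \piB[\mu]^{*} \circ \oFaserG{j_0 - \mu}^{*} = \piKan^{*}.
\end{equation*}

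Next, given $f \in \oFaserG{j_0 - \mu}^{*}(\overline{u}^{*}(\mathcal{P}(Q/G)))$, I would write $f = \oFaserG{j_0 - \mu}^{*}(\tilde f)$ with $\tilde f \in \overline{u}^{*}(\mathcal{P}(Q/G))$, so that $\oFaserG[-]{j_0 - \mu}^{*}(f) = \tilde f$. Chaining the two identities above then gives
\begin{equation*}
   \prolB[\mu] \circ \piB[\mu]^{*}(f) = \FaserG{j_0 - \mu}^{*} \circ \prolKan \circ \piKan^{*}(\tilde f).
\end{equation*}
Since $\tilde f$ lies in $\overline{u}^{*}(\mathcal{P}(Q/G))$, Proposition~\ref{prop:ProlongationFuerKotangentialbuendel} applies and yields $\prolKan \circ \piKan^{*}(\tilde f) = l \circ (\overline{u}^{-1})^{*}(\tilde f)$, which in particular lies in $\mathrm{h}(\mathcal{P}(Q)^{G})$ by the very definition of $l$.

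Finally, I would apply Proposition~\ref{lem:l}, according to which $\FaserG{j}^{*}$ acts as the identity on horizontal polynomial functions for any smooth $j \colon Q \to \lieAlgebra^{*}$; in particular $\FaserG{j_0 - \mu}^{*} \circ l = l$. Combining everything produces
\begin{equation*}
   \prolB[\mu] \circ \piB[\mu]^{*}(f) = l \circ (\overline{u}^{-1})^{*}(\tilde f) = \bigl(l \circ (\overline{u}^{-1})^{*} \circ \oFaserG[-]{j_0 - \mu}^{*}\bigr)(f),
\end{equation*}
which is the claim. There is no real obstacle here; the only bookkeeping I expect to need care is ensuring that $\tilde f \in \overline{u}^{*}(\mathcal{P}(Q/G))$ pulls back to a horizontal polynomial function via $\piKan^{*}$, so that Proposition~\ref{lem:l} is indeed applicable. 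This is however immediate from the image description of $l$ built into the proof of Proposition~\ref{prop:ProlongationFuerKotangentialbuendel}.
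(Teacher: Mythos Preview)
Your proposal is correct and essentially coincides with the paper's proof: both use Gleichung~\eqref{eq:ZurueckGezogenesProl}, the commutative diagram~\eqref{eq:ReduktionKotangentialbuendelAllgemein}, Proposition~\ref{prop:ProlongationFuerKotangentialbuendel}, and Proposition~\ref{lem:l} in the same way. The only cosmetic difference is that the paper starts from the right-hand side and arrives at $\FaserG[-]{j_0-\mu}^{*}\circ\prolB[\mu]\circ\piB[\mu]^{*}(f)$ before cancelling the fibre translation via Proposition~\ref{lem:l}, whereas you start from the left-hand side and arrive at $\FaserG{j_0-\mu}^{*}\circ l\circ(\overline{u}^{-1})^{*}(\tilde f)$ before applying the same cancellation.
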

\begin{proof}
   Sei $f \in \oFaserG{j_0 - \mu}^*(\overline{u}^*(\mathcal{P}(Q/G)))$,
   dann gilt
    \begin{align*}
       (l \circ ({\overline{u}^{-1}}^*) \circ {\oFaserG[-]{j_0 -
           \mu}^*})(f) &= (\prolKan \circ \pi_{\mathrm{kan}}^* \circ
       {\oFaserG[-]{j_0 -
           \mu}^*})(f) \\
       &= (\prolKan \circ (\FaserG[-]{j_0 - \mu}\at{C_{\mathrm{kan}}})^*
       \circ
       \piB[\mu]^*)(f)\\
       &= (\FaserG[-]{j_0 - \mu}^* \circ \prolB[\mu] \circ
       \piB[\mu]^*)(f) \Fdot
    \end{align*}
Dabei wurde im ersten Schritt Proposition
\ref{prop:ProlongationFuerKotangentialbuendel} verwendet, im zweiten
Diagramm \eqref{eq:ReduktionKotangentialbuendelAllgemein} und im letzten
Schritt Gleichung \eqref{eq:ZurueckGezogenesProl}.
Nun gilt nach Proposition \ref{lem:l} $\FaserG{j_0 - \mu}^* \circ l = l$, woraus die Behauptung folgt.
\end{proof}

Durch Zusammenfügen der Resultate dieses Abschnittes erhält man
schließlich unter Beachtung der Injektivität von $\piB[\mu]^*$ für alle
$f,f' \in \oFaserG{j_0 - \mu}^*(\overline{u}^*(\mathcal{P}(Q/G)))$
folgende Formel für das reduzierte Sternprodukt \eqref{eq:SternproduktAufDemReduziertenPhasenraum3}:

   \begin{align}
    f \star_{\mathrm{red},\mu} f' = \left ((\oFaserG{j_0 - \mu}^*
       \overline{u}^*) l^{-1} \mathrm{h}_{j_0 - \mu} \frac{\id}{\id - \I
         \lambda\Delta_{\mu,\star}}\right)\left (l(\oFaserG{j_0 - \mu}^*
       \overline{u}^*)^{-1} f \star l(\oFaserG{j_0 - \mu}^*
       \overline{u}^*)^{-1} f'\right ) \Fdot
    \end{align}
Dies führt uns unmittelbar auf das Hauptresultat dieses Kapitels.
\begin{satz}
   \label{satz:VergleichMitNikolaisArbeitAllgemein}

   Unter den in Abschnitt \ref{sec:KotangentialBuendelMitMagnetfeld}
   erklärten Voraussetzungen und den in Abschnitt
   \ref{sec:SpezialFallImpulsMagnetfeldNull} und
   \ref{sec:nichtVerschwindendeImpulswerteInAnwesenheitVonMagnetfeldern}
   eingeführten Bezeichnungen gilt für das durch den Symplektomorphismus
   $(\overline{u} \circ {\oFaserG{j_0 - \mu}})^{-1} \colon
   T^*(Q/G) \to C_B^\mu/G$ zurückgezogene Sternprodukt $\tilde
   \star_{\mathrm{red},\mu} := ((\overline{u} \circ {\oFaserG{j_0 -
       \mu}})^{-1})^*\star_{\mathrm{red},\mu} $ und alle $f,f' \in \mathcal{P}(Q/G)$ die
   Beziehung
   \begin{align}
      \label{eq:VergleichMitNikolaiAllgemein}
      f \tilde \star_{\mathrm{red},\mu} f' = \left ( l^{-1} \mathrm{h}_{j_0 - \mu} \frac{\id}{\id - \I
         \lambda\Delta_{\mu,\star}}\right)\left (l (f) \star l (f')\right ).
   \end{align}
\end{satz}
Dies bedeutet, dass auch für den Fall mit Magnetfeldern und
$G$"=invarianten Impulswerten ungleich Null die in
\cite[Thm. 3.5]{kowalzig.neumaier.pflaum:2005a} konstruierten
Sternprodukte einen Spezialfall des hier betrachteten
Quanten-Koszul-Reduktions"=Schemas darstellen.  Kowalzig, Neumaier und
Pflaum erlauben in ihrer Arbeit noch zusätzlich Quantenimpulswerte, was
jedoch in die Quantenimpulsabbildung absorbiert werden kann und somit an
dieser Stelle keine weitere Verallgemeinerung bedeutet.

Zusammengefasst konnten wir zeigen, dass die von Kowalzig, Neumaier und
Pflaum in \cite{kowalzig.neumaier.pflaum:2005a} vorgeschlagene Methode,
im Falle von Kotangentialbündeln mit Magnetfeldern ein Sternprodukt auf
dem reduzierten Phasenraum zu konstruieren, ein Spezialfall des
allgemeineren Quanten"=Koszul"=Reduktions"=Schemas ist. Genauer ergibt
sich, wie wir gesehen haben, die zunächst sehr technisch erscheinende
Methode aus \cite{kowalzig.neumaier.pflaum:2005a} durch naheliegende
Wahlen aus dem natürlicher wirkenden Quanten"=Koszul-Reduktions"=Schema
Insbesondere handelt es sich auch um einen Spezialfall der in
\cite{bordemann.herbig.waldmann:2000a} vorgestellten
BRST"=Reduktionsmethode. Die Bedeutung der Methode von Kowalzig,
Neumaier und Pflaum besteht sicherlich unter anderem darin, dass in
diesem Rahmen Resultate zur Frage, wann Quantisierung mit Reduktion
vertauscht, gezeigt werden konnten. Die Quanten"=Koszul-Methode besticht
hingegen durch ihre konzeptionelle Klarheit. Somit wagen wir zu
behaupten, eine wichtige Brücke zwischen zwei Methoden geschlagen zu
haben.

\chapter{Koszul-Reduktion in Stufen}
\label{cha:Koszul-Reduktion_in_Stufen}

In diesem Kapitel beschäftigen wir uns mit der Frage, ob man sowohl im
klassischen als auch im Quanten"=Fall die in Kapitel
\ref{cha:Koszul-Reduktion} vorgestellten Methoden der
Phasenraumreduktion auch in zwei Stufen durchführen kann, und wie die so
gewonnenen Observablenalgebren zu denen in Beziehung stehen, die man
erhält, wenn man die vorhandenen Symmetrien auf einmal eliminiert.

Die klassische Situation wurde in der Literatur schon eingehend
untersucht, siehe etwa \cite[Ch. 6.7]{ortega.ratiu:2004} sowie
\cite{marsden.misiolek:2007a} für eine ausführliche
Darstellung. Betrachtet man Phasenraumreduktion nur für verschwindende
Impulswerte, so stellt sich heraus, dass sich die Observablenalgebra
beziehungsweise der Phasenraum, den man durch Reduktion nach der
gesamten Symmetriegruppe gewinnt, von dem, den man durch Reduktion in
zwei Stufen enthält, nur durch einen Isomorphismus unterscheidet. Für
nicht verschwindende Impulswerte wird die Situation wesentlich
komplizierter und man kann unter Annahme einiger technischer Bedingungen
an die Impulswerte (vgl.\ \cite{marsden.misiolek:2007a}) ein analoges
Resultat zeigen.  Wir benötigen jedoch nur die wesentlich einfachere
Theorie für Impulswert $0$, welche wir in Abschnitt
\ref{sec:symplektischeReduktionInStufen} genauer darstellen werden.

Für die Quanten"=Koszul-Reduktion wurde die oben angesprochene Frage in
der Literatur noch nicht untersucht. Es ist eines der Hauptresultate
dieser Arbeit, dass unter bestimmten Annahmen an die Struktur der
Symmetriegruppe auch bei der Quanten"=Koszul-Reduktion die
Observablenalgebra, die man durch Reduktion in zwei Stufen erhält, zu
der in einer Stufe gewonnenen isomorph ist. Dies werden wir  in
\ref{sec:QuantenKoszulInStufen} behandeln.

Die in diesem Kapitel im Fließtext eingeführten Bezeichnungen gelten für
den Rest dieser Arbeit.

\section{Symplektische Reduktion in Stufen}
\label{sec:symplektischeReduktionInStufen}

Sei $(M,\omega)$ eine symplektische Mannigfaltigkeit, $G$ eine
zusammenhängende Lie"=Gruppe, die via $\Phi^{\scriptscriptstyle{G}} \dpA
G \times M \to M$ stark Hamiltonsch, frei und eigentlich auf $M$
operiere und $\gIn \dpA G_1 \hookrightarrow G$ eine normale,
zusammenhängende Unter"=Lie-Gruppe von $G$; insbesondere ist $G_1$ dann
abgeschlossen in $G$ (vgl.\ \cite[Prop. 5.4.1]{lee:2003a}). Man bemerke,
dass dies via $\Phi^{\scriptscriptstyle{G_1}}(g_1,p) :=
\Phi^{\scriptscriptstyle{G}}(g_1,p)$ für alle $g_1 \in G_1$, $p \in M$
eine symplektische, freie und eigentliche Wirkung
$\Phi^{\scriptscriptstyle{G_1}} \colon G_1 \times M \to M$ von $G_1$ auf
$M$ induziert. Sei weiter $J \dpA M \to \lieAlgebra[g]^*$  eine  $G$"=äquivariante Impulsabbildung.

Es stellt sich dann die naheliegende Frage, ob es egal ist, ob man
Marsden"=Weinstein-Reduktion bezüglich der großen Gruppe $G$ betreibt
oder in einem ersten Schritt bezüglich $G_1$ und in einem zweiten
bezüglich $G/G_1$. Dies ist tatsächlich der Fall, wie wir in diesem
Abschnitt in Satz \ref{satz:redPhasenraeumeSymplektomorph} für
verschwindende Impulswerte positiv präsentieren. Zunächst stellen wir
jedoch den detaillierten Rahmen des Problems bereit. Dazu überlegen wir
uns als erstes, dass die adjungierte Wirkung bezüglich $G$ eine
$G$"=Wirkung auf $\lieAlgebra_1^*$ induziert. Dies benötigt man, um zu
sehen, dass die für den ersten Reduktionsschritt im Anschluss daran
definierte Impulsabbildung $J_1$ sogar $G$"=äquivariant ist, womit auch
die durch $J_1$ definierte Zwangsfläche $G$"=invariant sein wird.

Da $G_1 \subset G$ normal ist, induziert für jedes $g \in G$ die
Abbildung $\conjG_g \dpA G \to G$, $h \mapsto ghg^{-1}$ eine eindeutige
glatte Abbildung $\tconjG_{g}\dpA G_1 \to G_1$, so dass das Diagramm
\begin{center}
   \begin{tikzpicture}[description/.style={fill=white,inner sep=2pt}]
      \matrix (m) [matrix of math nodes, row sep=3.0em, column
      sep=3.5em, text height=1.5ex, text depth=0.25ex] %
      { G & G \\
        G_1 & G_1 \\}; %
      \path[right hook->] (m-2-1) edge node[auto] {\gIn} (m-1-1);%
      \path[right hook->] (m-2-2) edge node[auto] {\gIn} (m-1-2);%
      \path[->] (m-1-1) edge node[above=-2.5pt] {\small $\conjG_g$}
      (m-1-2);%
      \path[->] (m-2-1) edge node[above=-2.5pt] {\small $\tconjG_g$}
      (m-2-2);%
   \end{tikzpicture}
\end{center} kommutiert.
Wegen $\AdG_g = T_e \conjG_g$ kommutieren mit $\tAdG_g := T_e \tconjG_g$
aufgrund der Kettenregel auch die folgenden beiden Diagramme.
\begin{center}
   \begin{tikzpicture}[description/.style={fill=white,inner sep=2pt}]
      \hspace*{-3cm} \matrix (m) [matrix of math nodes, row sep=3.0em,
      column sep=3.5em, text height=1.5ex, text depth=0.25ex]
      { \lieAlgebra[g] & \lieAlgebra[g] \\
        \lieAlgebra[g]_1 & \lieAlgebra[g]_1\\};%
      \path[right hook->] (m-2-1) edge node[auto] {$T_e \gIn$} (m-1-1);%
      \path[right hook->] (m-2-2) edge node[right] {$T_e \gIn$}
      (m-1-2);%
      \path[->] (m-1-1) edge node[above=-2.5pt] {\small $\AdG_g$}
      (m-1-2);%
      \path[->] (m-2-1) edge node[above=-2.5pt] {\small $\tAdG_g$}
      (m-2-2);%
      \hspace{6cm} %
      \matrix (n) [matrix of math nodes, row sep=3.0em, column
      sep=3.5em, text height=1.5ex, text depth=0.25ex]
      { \lieAlgebra[g]^* & \lieAlgebra[g]^* \\
        \lieAlgebra[g]_1^* & \lieAlgebra[g]_1^* \\};%
      \path[->] (n-1-1) edge node[left] {$(T_e\gIn)^*$} (n-2-1);%
      \path[->] (n-1-2) edge node[right] {$(T_e\gIn)^*$} (n-2-2);%
      \path[->] (n-1-1) edge node[above=-2.5pt] {\small$(\AdG_g)^*$}
      (n-1-2);%
      \path[->] (n-2-1) edge node[above=-2.5pt] {\small$(\tAdG_g)^*$}
      (n-2-2);%
   \end{tikzpicture}
\end{center}
Damit ist  unmittelbar klar, dass $\tAdG \colon G \times \lieAlgebra[g]_1 \to
\lieAlgebra[g]_1$, $(g,\xi_1) \mapsto \tAdG_g\xi_1$ und $G \times
\lieAlgebra[g]^*_1 \to \lieAlgebra[g]^*_1$, $(g,\mu_1) \mapsto
(\tAdG_{g^{-1}})^*\mu_1$ lineare $G$"=Wirkungen auf $\lieAlgebra[g]_1$
bzw.\ $\lieAlgebra[g]^*_1$ sind. Ferner gilt offensichtlich für $g_1 \in
G_1$ per Definition $\tAdG_{g_1} = \AdG[G_1]_{g_1}$, also auch
$(\tAdG_{g_1})^* = (\AdG[G_1]_{g_1})^*$.

Wir haben demnach auf natürliche Weise eine $G$"=Wirkung auf
$\lieAlgebra[g]^*_1$ vorliegen, bezüglich der $(T_e\gIn)^*$
$G$"=äquivariant wird. Da diese auf $G_1 \times \lieAlgebra[g]^*_1$ mit
der koadjungierten Wirkung $(\AdG[G_1]_{\cdot^{-1}})^*$ bezüglich $G_1$ übereinstimmt, ist der im Folgenden oft verwendete Notationsmissbrauch gerechtfertigt, für $g \in G$ und $\mu_1 \in \lieAlgebra[g]_1^*$ statt
$(\tAdG_{g^{-1}})^*\mu_1$ auch hier
wieder einfach $g \mu_1$ zu schreiben.

Man sieht nun sofort, dass
\begin{align}
   J_1 := (T_e \gIn)^* \circ J \dpA M \to
   \lieAlgebra[g]_1^*\label{eq:klassischesJ1} \Fcom
\end{align}
d.\,h.\  $\dPaar{J_1(p)}{\xi_1} := \dPaar{J(p)}{T_e \gIn
  \xi_1}$ für $p \in M$, $\xi_1 \in \lieAlgebra[g]_1$ eine Impulsabbildung
für die $G_1$"=Wirkung definiert. In der Tat rechnet man dies für $p \in
M$ und $\xi \in \lieAlgebra[g]_1$ leicht nach,
\begin{align*}
   X_{J_1(\xi_1)}(p) = X_{J(T_e\gIn \xi_1)}(p) = (T_e\gIn
   \xi_1)^{\scriptscriptstyle{G}}_M(p) =
   (\xi_1)^{\scriptscriptstyle{G_1}}_M(p) \textrm{,}
\end{align*}
wobei sich die letzte Gleichheit sofort aus der Definition und der
Kettenregel ergibt. Per Definition ist $J_1$ wegen der $G$"=Äquivarianz von $(T_e
\gIn)^*$ und $J$ sogar $G$"=äquivariant -- und nicht etwa nur
$G_1$"=äquivariant. Aus der $G$"=Äquivarianz folgt insbesondere, dass $C_1 :=
J_1^{-1}(0)$ stabil unter $\Phi^{\scriptscriptstyle{G}}$ ist.  Die eingeschränkten Wirkungen
$\Phi^{\scriptscriptstyle{G}}_{\scriptscriptstyle{C_1}} :=
\Phi^{\scriptscriptstyle{G}}\at{G \times C_1} \dpA G \times C_1 \to C_1$
und $\Phi^{\scriptscriptstyle{G_1}}_{\scriptscriptstyle{C_1}} :=
\Phi^{\scriptscriptstyle{G_1}}\at{G_1 \times C_1} \dpA G_1 \times C_1
\to C_1$ sind offensichtlich wieder frei und eigentlich.

$G/G_1$ ist bekanntlich ebenfalls eine Lie"=Gruppe
(vgl.\ \cite[Thm. 9.22]{lee:2003a}), da $G_1$ eine normale Untergruppe von
$G$ ist.

Die kanonischen Projektionen $\wp \dpA G \to G/G_1$ und $\pi_1 \dpA C_1 \to C_1/G_1$
sind surjektive Submersionen, also auch $\wp \times
\pi_1$, vgl.\ Proposition \ref{prop:ProduktVonSubmersionen}. Offensichtlich gilt
für alle $c_1 \in C_1$ und $g \in G$
\begin{align*}
   (\wp\times\pi_1)(g,c_1) = (\wp\times\pi_1)(g',c_1') &\implies \exists g_1,
   h_1 \in G_1 \: \text{mit}\; g_1g(h_1c_1) = g'c_1'\\ &\implies \pi_1(gc_1) =
   \pi_1(g'c_1') \textrm{,}
\end{align*}
wobei bei der letzten Implikation ausgenutzt wurde, dass $G_1$ normal in
$G$ ist. Somit gibt es eine eindeutig bestimmte glatte Abbildung
$\Phi^{\scriptscriptstyle{G/G_1}} \dpA G/G_1 \times C_1/G_1 \to C_1/G_1$,
so dass das Diagramm

\begin{equation}
   \label{eq:gedrueckteWirkung}
   \begin{tikzpicture}[baseline=(current
      bounding box.center),description/.style={fill=white,inner sep=2pt}]
      \matrix (m) [matrix of math nodes, row sep=3.0em, column
      sep=3.5em, text height=1.5ex, text depth=0.25ex]
      { G \times C_1& C_1 \\
        G/G_1 \times C_1/G_1 & C_1/G_1 \\}; %
      \path[->] (m-1-1) edge node[auto] {$\wp \times \pi_1$} (m-2-1); %
      \path[->] (m-1-2) edge node[auto] {$\pi_1$} (m-2-2);%
      \path[->] (m-1-1) edge node[above=-2.5pt] {\small
        $\Phi^{\scriptscriptstyle{G}}_{\scriptscriptstyle{C_1}}$}(m-1-2);%
      \path[->] (m-2-1) edge node[above=-2.5pt] {\small $
        \Phi^{\scriptscriptstyle{G/G_1}}$} (m-2-2); %
   \end{tikzpicture}
\end{equation}
kommutiert. Offensichtlich ist $\Phi^{\scriptscriptstyle{G/G_1}}$ eine
Wirkung von $G/G_1$ auf $C_1/G_1$. Notieren wir Äquivalenzklassen in eckigen Klammern,
so nimmt die Wirkung in unserer abkürzenden Notation folgende Gestalt an
\begin{align}
   \label{eq:gedrueckteWirkungAbk}
   [g][c_1] = \Phi^{\scriptscriptstyle{G/G_1}}([g],[c_1]) = [gc_1], \, g \in
   G,c_1 \in C_1 \Fdot
\end{align}
\begingroup
\emergencystretch=0.8em
\begin{proposition}
   $\Phi^{\scriptscriptstyle{G/G_1}}$ ist symplektisch.
\end{proposition}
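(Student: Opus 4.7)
Die Strategie ist, die Wohldefiniertheit von $\omega_{\mathrm{red}}$ auf $C_1/G_1$ in Verbindung mit dem kommutierenden Diagramm \eqref{eq:gedrueckteWirkung} auszunutzen. Genauer sei $\omega_{\mathrm{red}}$ die durch den Marsden"=Weinstein"=Satz~\ref{satz:MarsdenWeinstein} eindeutig bestimmte symplektische Form auf $C_1/G_1$, welche also die definierende Gleichung $\kIn_1^* \omega = \pi_1^* \omega_{\mathrm{red}}$ erfüllt, wobei $\kIn_1 \colon C_1 \hookrightarrow M$ die Inklusion bezeichne. Da $\pi_1 \colon C_1 \to C_1/G_1$ eine surjektive Submersion ist, ist $\pi_1^*$ auf Differentialformen injektiv; es genügt also, für jedes $g \in G$ die Gleichung $\pi_1^*(\Phi^{\scriptscriptstyle{G/G_1}}_{[g]})^* \omega_{\mathrm{red}} = \pi_1^* \omega_{\mathrm{red}}$ zu zeigen.

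Sei dazu $g \in G$ fest. Aus der Kommutativität von Diagramm~\eqref{eq:gedrueckteWirkung}, die sich als $\pi_1 \circ \Phi^{\scriptscriptstyle{G}}_{\scriptscriptstyle{C_1},g} = \Phi^{\scriptscriptstyle{G/G_1}}_{[g]} \circ \pi_1$ umformulieren lässt, folgt durch Pull"=back
\begin{equation*}
\pi_1^*(\Phi^{\scriptscriptstyle{G/G_1}}_{[g]})^* \omega_{\mathrm{red}}
=\bigl(\Phi^{\scriptscriptstyle{G}}_{\scriptscriptstyle{C_1},g}\bigr)^* \pi_1^* \omega_{\mathrm{red}}
=\bigl(\Phi^{\scriptscriptstyle{G}}_{\scriptscriptstyle{C_1},g}\bigr)^* \kIn_1^* \omega.
\end{equation*}
Da $\Phi^{\scriptscriptstyle{G}}_{\scriptscriptstyle{C_1}}$ lediglich die Einschränkung von $\Phi^{\scriptscriptstyle{G}}$ auf $C_1$ ist, gilt weiter $\kIn_1 \circ \Phi^{\scriptscriptstyle{G}}_{\scriptscriptstyle{C_1},g} = \Phi^{\scriptscriptstyle{G}}_g \circ \kIn_1$, also
\begin{equation*}
\bigl(\Phi^{\scriptscriptstyle{G}}_{\scriptscriptstyle{C_1},g}\bigr)^* \kIn_1^* \omega
= \kIn_1^* (\Phi^{\scriptscriptstyle{G}}_g)^* \omega
= \kIn_1^* \omega
= \pi_1^* \omega_{\mathrm{red}},
\end{equation*}
wobei im vorletzten Schritt benutzt wurde, dass $\Phi^{\scriptscriptstyle{G}}$ stark Hamiltonsch, insbesondere also symplektisch ist.

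Damit ist $(\Phi^{\scriptscriptstyle{G/G_1}}_{[g]})^*\omega_{\mathrm{red}} = \omega_{\mathrm{red}}$ und $\Phi^{\scriptscriptstyle{G/G_1}}$ somit symplektisch. Ein zu überprüfender technischer Punkt ist lediglich die bereits zuvor festgehaltene Wohldefiniertheit der Abbildung $\Phi^{\scriptscriptstyle{G/G_1}}$, da diese die Kommutativität von~\eqref{eq:gedrueckteWirkung} und damit die Eigenschaft $\pi_1 \circ \Phi^{\scriptscriptstyle{G}}_{\scriptscriptstyle{C_1},g} = \Phi^{\scriptscriptstyle{G/G_1}}_{[g]} \circ \pi_1$ überhaupt erst liefert; ansonsten ist der Beweis rein formal. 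Eine echte Schwierigkeit sehe ich nicht, da die Argumentation nur die Definition von $\omega_{\mathrm{red}}$ über die Projektion $\pi_1$, die $\kIn_1$"=Verwandtschaft der Wirkungen und die Symplektizität der ursprünglichen $G$"=Wirkung verwendet.
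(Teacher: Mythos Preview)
Dein Beweis ist korrekt und entspricht im Wesentlichen genau dem des Papiers: beide nutzen die Injektivität von $\pi_1^*$ (aufgrund der Submersivität von $\pi_1$), die Kommutativität von Diagramm~\eqref{eq:gedrueckteWirkung} sowie die $\kIn_1$-Verträglichkeit der Wirkungen, um die Rechnung auf die Symplektizität der $G$-Wirkung auf $M$ zurückzuführen. Der einzige Unterschied ist rein notationell (das Papier schreibt $\omega_{\text{red}_1}$ statt $\omega_{\mathrm{red}}$).
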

\begin{proof}
   Sei $\omega_{\text{red}_1}$ die reduzierte symplektische Form auf $\Mred[1] =
   J_1^{-1}(0)/G$ und $\pi_1 \colon J_1^{-1}(0) \to \Mred[1]$ die
   Projektion.
   Dann gilt für alle $g \in G$
   \begin{align*}
      \pi_1^* ({\Phi^{\scriptscriptstyle{G/G_1}}_{[g]}})^*\omega_{\text{red}_1} =
      ({\Phi^{\scriptscriptstyle{G}}_{\scriptscriptstyle{C_1}})}_g^* \pi_1^*
      \omega_{\text{red}_1}=
      ({\Phi^{\scriptscriptstyle{G}}_{\scriptscriptstyle{C_1}}})_g^*
      \kRes_1 \omega_{\text{red}_1} = \kRes_1
      ({\Phi^{\scriptscriptstyle{G}}_g})^* \omega = \kRes_1
      \omega  =  \pi_1^* \omega_{\text{red}_1}\Fcom
   \end{align*}
   womit wegen der Submersivität von $\pi_1$ das Behauptete folgt.
\end{proof}
\endgroup
\begin{lemma}[{\cite[Lem. 5.2.1]{marsden.misiolek:2007a}}]
   \label{lem:QuotientWirktEigentlich}
   $\Phi^{\scriptscriptstyle{G/G_1}}$ ist frei und eigentlich.
\end{lemma}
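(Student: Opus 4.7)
Der Plan zerfällt in zwei Teile. Zunächst werde ich die Freiheit der Wirkung zeigen: Sei $[g] \in G/G_1$ und $[c_1] \in C_1/G_1$ mit $\Phi^{\scriptscriptstyle{G/G_1}}([g],[c_1]) = [c_1]$, d.\,h.\ $[gc_1] = [c_1]$ gemäß Gleichung~\eqref{eq:gedrueckteWirkungAbk}. Dann existiert ein $h_1 \in G_1$ mit $gc_1 = h_1 c_1$. Da $\Phi^{\scriptscriptstyle{G}}_{\scriptscriptstyle{C_1}}$ als Einschränkung der freien Wirkung $\Phi^{\scriptscriptstyle{G}}$ auf die $G$"=stabile Teilmenge $C_1$ selbst frei ist, folgt $g = h_1 \in G_1$ und damit $[g] = [e]$ in $G/G_1$.

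Für die Eigentlichkeit werde ich die Folgencharakterisierung nutzen: Es genügt zu zeigen, dass jede Folge $([g_n],[c_n])_{n \in \mathbb{N}}$ in $G/G_1 \times C_1/G_1$, für welche die Bildfolge $([g_n c_n],[c_n])$ in $(C_1/G_1)^2$ gegen ein Element $([d],[c])$ konvergiert, eine konvergente Teilfolge besitzt. Da die $G_1$"=Wirkung $\Phi^{\scriptscriptstyle{G_1}}_{\scriptscriptstyle{C_1}}$ frei und eigentlich ist, ist $\pi_1 \dpA C_1 \to C_1/G_1$ ein $G_1$"=Hauptfaserbündel. Mittels lokaler Trivialisierungen lassen sich $[c_n]$ und $[g_n c_n]$ nach Übergang zu einer Teilfolge zu Folgen $\tilde c_n \to c$ bzw.\ $\tilde d_n \to d$ in $C_1$ heben (man beachte, dass $C_1$ unter $\Phi^{\scriptscriptstyle{G}}$ stabil ist, so dass $g_n \tilde c_n \in C_1$). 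Wegen $[\tilde d_n] = [g_n \tilde c_n]$ und der Freiheit der $G_1$"=Wirkung existieren dann eindeutig bestimmte Elemente $h_n \in G_1$ mit $\tilde d_n = h_n g_n \tilde c_n$.

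Im letzten Schritt werde ich die Eigentlichkeit der $G$"=Wirkung auf dem $G$"=stabilen $C_1$ ausnutzen: Aus $\tilde c_n \to c$ und $(h_n g_n) \tilde c_n = \tilde d_n \to d$ folgt nach dem üblichen Kriterium eigentlicher Wirkungen, dass die Folge $(h_n g_n)_n$ in $G$ eine konvergente Teilfolge $h_n g_n \to k$ besitzt. Da $\wp \dpA G \to G/G_1$ stetig ist, konvergiert $[g_n] = [h_n g_n] \to [k]$ in $G/G_1$; zusammen mit $[c_n] \to [c]$ liefert dies die gesuchte konvergente Teilfolge von $([g_n],[c_n])$.

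Die wesentliche Schwierigkeit erwarte ich in der konsistenten Wahl der Hebungen $\tilde c_n$ und $\tilde d_n$. Das Hauptfaserbündel $\pi_1$ sichert zwar die lokale Existenz von Hebungen, jedoch muss durch Übergang zu geeigneten Teilfolgen und Verwendung lokaler Schnitte in Umgebungen von $c$ bzw.\ $d$ gewährleistet werden, dass die resultierenden Elemente $h_n$ tatsächlich wohldefiniert in $G_1$ liegen und dass die Folge $h_n g_n$ als Folge in $G$ (und nicht nur in $G/G_1$) angesprochen werden kann, auf die die Eigentlichkeit der ursprünglichen $G$"=Wirkung angewandt werden darf.
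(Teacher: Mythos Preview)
Your proposal is correct and follows essentially the same route as the paper. Both the freeness argument (lift the fixed-point equation to $C_1$ and use freeness of the original $G$-action) and the properness argument (lift convergent sequences via local sections of the $G_1$-Hauptfaserbündel $\pi_1$, then apply the Folgenkriterium for the eigentliche $G$-Wirkung on $C_1$, and finally push down via the continuous projection $\wp$) match the paper's proof; the ``Schwierigkeit'' you anticipate is exactly what the paper handles by choosing local sections around the limit points.
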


\begin{proof}
   Sei $[g][c_1] = [c_1]$, dann gibt es ein $g_1 \in G_1$ mit $g_1c_1 = c_1$. Da
   $\Phi^{\scriptscriptstyle{G}}_{\scriptscriptstyle{C_1}}$ frei ist,
   folgt $g_1g = e$, d.\,h.\ $g = g_1^{-1} \in G_1$, womit wir wiederum $[g]
   = [e]$ erhalten. Somit haben wir gezeigt, dass
   $\Phi^{\scriptscriptstyle{G/G_1}}$ frei ist.

   Sei nun $([g_n])$ eine Folge in $G/G_1$ und $([{c_1}_n])$ eine
   konvergente Folge in $C_1/G_1$ mit Grenzwert $[c_1] \in C_1/G_1$ so dass
   $([g_n][{c_1}_n])$ gegen ein $[c_1'] \in C_1/G_1$ konvergiert.  Da $C_1 \to
   C_1/G_1$ ein Hauptfaserbündel ist, können wir einen lokalen Schnitt
   $\chi \dpA U_{[c_1]} \to C_1$ über einer Umgebung $U_{[c_1]}$ von $[c_1]$
   wählen mit $\chi([c_1]) = c_1$ und dabei o.\,E.\ annehmen, dass $[{c_1}_n] \in
   U_{[c_1]}$ $\forall n \in \mathbb{N}$. Aus Stetigkeitsgründen
   konvergiert $\chi([{c_1}_n])$ gegen $c_1 = \chi([c_1])$. Sei nun für jedes $n
   \in \mathbb{N}$ ein Element $h_{1,n} \in G_1$ mit $\chi([{c_1}_n]) =
   h_{1,n} {c_1}_n$ gewählt, so gilt $h_{1,n} {c_1}_n \to {c_1}$. Analog können wir
   annehmen, dass es Elemente $\{k_{1,n}\}_{n \in \mathbb{N}} \subset
   G_1$ gibt, so dass $k_{1,n} g_n {c_1}_n \to {c_1}'$, also auch
   $k_{1,n}g_nh^{-1}_{1,n}(h_{1,n}{c_1}_n) \to {c_1}'$ für $n \to \infty$. Da
   $\Phi$ eigentlich ist, gibt es also eine konvergente Teilfolge von
   $(k_{1,n}g_nh^{-1}_{1,n})$ und aus Stetigkeitsgründen also auch eine
   konvergente Teilfolge von $([g_n])$ in $G/G_1$.
\end{proof}

Um den zweiten Reduktionsschritt durchführen zu können, wollen wir
zeigen, dass es eine $G/G_1$"=äquivariante Impulsabbildung $J_2 \dpA
\Mred[1] = C_1/G_1 \to \Lie[G/G_1]^* $ gibt, welche für alle $c_1 \in
C_1, \xi \in \lieAlgebra[g]$ die Gleichung
\begin{align}
   \label{eq:DefZweiteImpulsAbbildung}
   \dPaar{J_2([c_1])}{T_e\wp \xi} = \dPaar{J(c_1)}{\xi}
\end{align}
erfüllt.
Um dies zu erreichen, benötigen wir die beiden folgenden elementaren
Propositionen.

\begin{proposition}
   \label{prop:AdProp}
   Für $g_1 \in G_1$ und $\xi \in \lieAlgebra[g]$ ist $\Ad_{g_1}\xi -
   \xi \in T_e\gIn \lieAlgebra[g]_1$.
\end{proposition}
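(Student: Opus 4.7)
The plan is to exploit the normality of $G_1$ in $G$ together with functoriality of the exponential map. Two essentially equivalent strategies are available; I would present the one that uses the quotient Lie group $G/G_1$, since it is the cleanest and fits well with the setup already introduced.

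First I would recall that since $G_1 \subset G$ is normal and closed, the quotient $G/G_1$ is a Lie group and the projection $\wp \colon G \to G/G_1$ is a smooth surjective submersion whose differential at the identity satisfies
\begin{equation*}
  \ker(T_e \wp) \;=\; T_e \gIn \lieAlgebra[g]_1.
\end{equation*}
Next, for any $g_1 \in G_1$ we have $\wp(g_1) = [e]$, so conjugation by $g_1$ becomes trivial after passing to the quotient: for every $g \in G$,
\begin{equation*}
  \wp(\conjG_{g_1}(g)) \;=\; \wp(g_1 g g_1^{-1}) \;=\; \wp(g_1)\,\wp(g)\,\wp(g_1)^{-1} \;=\; \wp(g),
\end{equation*}
i.e.\ $\wp \circ \conjG_{g_1} = \wp$. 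Differentiating this identity at $e$ and using $\Ad_{g_1} = T_e \conjG_{g_1}$ together with the chain rule yields
\begin{equation*}
  T_e \wp \circ \Ad_{g_1} \;=\; T_e \wp.
\end{equation*}
Applying both sides to $\xi \in \lieAlgebra[g]$ gives $T_e \wp(\Ad_{g_1}\xi - \xi) = 0$, so $\Ad_{g_1}\xi - \xi$ lies in $\ker T_e\wp = T_e\gIn \lieAlgebra[g]_1$, which is the claim.

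There is no real obstacle here; the only point that requires a little care is ensuring that the computation takes place in $G$ (not only in $G_1$) and that one uses the inclusion $T_e\gIn$ correctly, so as not to confuse $\lieAlgebra[g]_1$ with its image inside $\lieAlgebra[g]$. As an alternative, one could instead consider the curve $\alpha(t) := g_1 \exp(t\xi) g_1^{-1} \exp(-t\xi)$, note that $\alpha(t) = g_1 \cdot \bigl(\exp(t\xi)\, g_1^{-1}\, \exp(-t\xi)\bigr) \in G_1$ by normality, and differentiate at $t = 0$ to obtain $\alpha'(0) = \Ad_{g_1}\xi - \xi \in T_e\gIn \lieAlgebra[g]_1$; this second route avoids invoking the Lie group structure of $G/G_1$ but is otherwise equivalent.
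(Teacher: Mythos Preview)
Your primary argument via the quotient $G/G_1$ is correct and clean: the key fact $\ker(T_e\wp) = T_e\gIn\,\lieAlgebra[g]_1$ is standard, and the computation $\wp \circ \conjG_{g_1} = \wp$ for $g_1 \in G_1$ followed by differentiation is sound. The paper, however, takes precisely your \emph{alternative} route: it defines the commutator map $I_{g_1} \colon G \to G_1$, $g \mapsto g_1 g g_1^{-1} g^{-1}$ (well-defined by normality), factors it as $\tilde{I}_{g_1} \circ \Delta = \gIn \circ I_{g_1}$ with $\tilde{I}_{g_1}(g,h) = g_1 g g_1^{-1} h^{-1}$, and then applies the product-rule formula from Proposition~\ref{prop:produkte}\,\refitem{item:DiagonaleKomponiert} to obtain $\Ad_{g_1}\xi - \xi = T_e\gIn\, T_e I_{g_1}\,\xi$.

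The two approaches are dual in the obvious sense: you kill the $G_1$-direction by projecting to the quotient, while the paper exhibits the element directly as lying in the image of $T_e\gIn$. Your quotient argument has the advantage of reusing the map $\wp$ already introduced in the surrounding text and of avoiding the auxiliary product-map machinery; the paper's commutator argument is slightly more self-contained in that it does not appeal to the identification $\ker T_e\wp = T_e\gIn\,\lieAlgebra[g]_1$. Either is perfectly adequate for this elementary fact.
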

\begin{proof}

      Sei $\tilde{I}_{g_1} \dpA G \times G \to G$, $(g,h) \mapsto g_1g
   g_1^{-1}h^{-1}$.  Da $G_1 \subset G$ normal, ist $I_{g_1} \dpA G
   \to G_1$, $g \mapsto g_1gg_1^{-1}g^{-1}$ wohldefiniert. Ist
   $\Delta \dpA G \to G \times G$ die Diagonalabbildung $g \mapsto
   (g,g)$, so gilt $\tilde{I}_{g_1} \circ \Delta = \gIn \circ I_{g_1}
   $ und somit, wie man leicht mit Hilfe von Proposition \ref{prop:produkte}~\refitem{item:DiagonaleKomponiert} nachprüft
   \begin{align*}
      \Ad_{g_1}\xi - \xi = T_e\conj_{g_1} \xi - T_e(\id_G) \xi = T_e\gIn
      T_eI_{g_1} \xi \Fcom
   \end{align*}
   womit die Behauptung klar ist.
\end{proof}

\begin{proposition}
   \label{prop:AdVomQuotienten}
   Für die adjungierte Wirkung $\AdG[G/G_1]$ auf dem Quotienten gilt
   \begin{align}
      \label{eq:AdVomQuotienten}
      \AdG[G/G_1]_{\wp(g)} T_e\wp \xi = T_e\wp \AdG[G]_g \xi
   \end{align}
   für alle $\xi \in \lieAlgebra[g]$ und $g \in G$.
\end{proposition}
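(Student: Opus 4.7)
Der Beweis wird direkt aus der Tatsache folgen, dass \(\wp \colon G \to G/G_1\) ein glatter Lie"=Gruppenhomomorphismus ist (dies ist Standard, da \(G_1\) eine abgeschlossene, normale Unter"=Lie"=Gruppe von \(G\) ist, siehe z.\,B.\ \cite[Thm.~9.22]{lee:2003a}), kombiniert mit der Kettenregel und der Definition der adjungierten Darstellung als Ableitung der Konjugationsabbildung.

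Konkret plane ich wie folgt vorzugehen. Zunächst stelle ich fest, dass aus der Homomorphieeigenschaft von \(\wp\) für alle \(g,h \in G\) die Gleichung
\begin{align*}
   \wp(g h g^{-1}) = \wp(g)\wp(h)\wp(g)^{-1}
\end{align*}
folgt, was bedeutet, dass die Diagramme der Konjugationen kommutieren, also \(\wp \circ \conjG[G]_g = \conjG[G/G_1]_{\wp(g)} \circ \wp\) als glatte Abbildungen \(G \to G/G_1\).

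Im zweiten Schritt differenziere ich diese Gleichung an der Stelle \(e \in G\). Nach der Kettenregel (und weil \(\wp(e) = [e]\) gilt) erhalte ich
\begin{align*}
   T_e\wp \circ T_e\conjG[G]_g = T_{[e]}\conjG[G/G_1]_{\wp(g)} \circ T_e\wp \Fdot
\end{align*}
Per Definition ist \(T_e\conjG[G]_g = \AdG[G]_g\) und \(T_{[e]}\conjG[G/G_1]_{\wp(g)} = \AdG[G/G_1]_{\wp(g)}\). Anwenden auf ein beliebiges \(\xi \in \lieAlgebra[g]\) liefert dann unmittelbar die gewünschte Gleichung \eqref{eq:AdVomQuotienten}.

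Es handelt sich also um eine reine Routine"=Anwendung der Kettenregel auf die Homomorphieeigenschaft von \(\wp\); echte Hindernisse erwarte ich nicht. Die einzige zu rechtfertigende Tatsache ist die Glattheit von \(\wp\) als Gruppenhomomorphismus bzw.\ die Wohldefiniertheit der Lie"=Gruppenstruktur auf \(G/G_1\), was jedoch aus der Normalität und Abgeschlossenheit von \(G_1\) in \(G\) bereits bekannt ist und oben im Text schon zitiert wurde.
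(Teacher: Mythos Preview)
Dein Vorgehen ist korrekt und entspricht genau dem Beweis im Papier: Dort wird ebenfalls die Homomorphieeigenschaft von \(\wp\) benutzt, um \(\conjG[G/G_1]_{\wp(g)}\circ\wp = \wp \circ \conjG[G]_g\) zu erhalten, und anschließend die Kettenregel an der Stelle \(e\) angewendet.
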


\begin{proof}

   Wir sehen, dass
   \begin{align*}
      \conjG[G/G_1]_{\wp(g)}\circ\wp(h) &= \wp(g) \wp(h) \wp(g)^{-1} \\
      &= \wp(ghg^{-1}) = \wp \circ \conjG[G]_g(h) \quad \text{für alle} \,
      g,h \in G
   \end{align*}
   gilt, woraus die Behauptung durch Anwendung der Kettenregel folgt.

\end{proof}

\begin{lemma}
   \label{lem:zweiteImpulsabbildung}
   Durch Gleichung \eqref{eq:DefZweiteImpulsAbbildung} wird eine
   $G/G_1$"=äquivariante Impulsabbildung $J_2 \dpA \Mred[1] \to
   \Lie[G/G_1]^* $ definiert.
\end{lemma}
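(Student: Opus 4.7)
The plan is to verify in this order: (i)~well-definedness of the right-hand side of \eqref{eq:DefZweiteImpulsAbbildung} as a functional on $\Lie[G/G_1]$, (ii)~$G_1$-invariance so that $\tilde J_2$ descends to a smooth map on $\Mred[1]$, (iii)~the momentum map equation, and (iv)~$G/G_1$-equivariance.

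For (i) the crucial observation is that on $C_1$ the linear form $\dPaar{J(c_1)}{\cdot}$ annihilates $\ker T_e\wp = T_e\gIn \lieAlgebra[g]_1$: indeed, for $\eta_1 \in \lieAlgebra[g]_1$ one has $\dPaar{J(c_1)}{T_e\gIn \eta_1} = \dPaar{J_1(c_1)}{\eta_1} = 0$ since $c_1 \in C_1 = J_1^{-1}(0)$. This yields a smooth map $\tilde J_2 \colon C_1 \to \Lie[G/G_1]^*$ satisfying \eqref{eq:DefZweiteImpulsAbbildung}. For (ii) I would fix $g_1 \in G_1$ and use the $G$-equivariance of $J$ together with the induced coadjoint action to write $\dPaar{J(g_1 c_1)}{\xi} = \dPaar{J(c_1)}{\Ad_{g_1^{-1}}\xi}$. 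By Proposition \ref{prop:AdProp} the difference $\Ad_{g_1^{-1}}\xi - \xi$ lies in $T_e\gIn \lieAlgebra[g]_1$, so (i) implies $\tilde J_2(g_1 c_1) = \tilde J_2(c_1)$. Standard descent through the surjective submersion $\pi_1 \colon C_1 \to \Mred[1]$ then gives a unique smooth $J_2$ characterized by $J_2 \circ \pi_1 = \tilde J_2$.

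Step (iii) is the heart of the argument. Let $\iota_1 \colon C_1 \hookrightarrow M$ denote the inclusion; Theorem \ref{satz:MarsdenWeinstein} gives $\iota_1^* \omega = \pi_1^* \omega_{\mathrm{red}_1}$. I plan to combine this with two observations: for every $\xi \in \lieAlgebra[g]$ the fundamental field $\xi_M$ is tangential to $C_1$ (since $J_1$ is $G$-equivariant and $G$ fixes $0 \in \lieAlgebra[g]_1^*$), and by differentiating \eqref{eq:gedrueckteWirkungAbk} its restriction $\xi_M\at{C_1}$ is $\pi_1$-related to the fundamental field $(T_e\wp\,\xi)_{\Mred[1]}$ on $\Mred[1]$. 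Because $J_2(T_e\wp\,\xi) \circ \pi_1 = \iota_1^* J(\xi)$ by construction, the desired equation $\omega_{\mathrm{red}_1}((T_e\wp\,\xi)_{\Mred[1]}, \cdot) = dJ_2(T_e\wp\,\xi)$ on $\Mred[1]$ pulls back through $\pi_1$ to $\iota_1^*\omega(\xi_M, \cdot) = \iota_1^* dJ(\xi)$, which is just the momentum map identity of $J$ restricted to $C_1$. Surjectivity of $T_{c_1}\pi_1$ on tangent spaces then transports the equality back to $\Mred[1]$.

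Finally, (iv) reduces to a short computation: by $G$-equivariance of $J$, Proposition \ref{prop:AdVomQuotienten}, and the definition of the coadjoint action on $\Lie[G/G_1]^*$ induced from that on $\Lie[G]^*$, one checks $\dPaar{J_2([g][c_1])}{T_e\wp\,\xi} = \dPaar{[g]\,J_2([c_1])}{T_e\wp\,\xi}$ for all $g \in G$, $c_1 \in C_1$ and $\xi \in \lieAlgebra[g]$. I do not anticipate any genuine obstacle here; the whole argument is a matter of verifying compatibility with the simultaneous quotients taken in the group ($G \to G/G_1$) and in the manifold ($C_1 \to \Mred[1]$), and the only nontrivial algebraic ingredient is Proposition \ref{prop:AdProp}, which is already in hand.
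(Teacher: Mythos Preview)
Your proposal is correct and follows essentially the same approach as the paper: well-definedness via Proposition~\ref{prop:AdProp}, the momentum map property by showing the fundamental fields are $\pi_1$-related and pulling back through $\pi_1^*\omega_{\mathrm{red}_1} = \iota_1^*\omega$, and equivariance via Proposition~\ref{prop:AdVomQuotienten}. The paper carries out step~(iii) as an explicit chain of equalities rather than phrasing it as a pullback argument, but the content is identical.
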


\begin{proof}

   Zuerst sehen wir ein, dass $J_2$ wohldefiniert ist. Nach Definition
   von $J_1$ und $C_1$ gilt für $\xi_1 \in \lieAlgebra[g]_1$ und ${c_1}
   \in C_1$ schon $\dPaar{J({c_1})}{T_e\gIn \xi_1} =
   \dPaar{J_1({c_1})}{\xi_1} = 0$. Um die Unabhängigkeit
   von $[{c_1}]$ einzusehen, reicht es zu zeigen, dass der Ausdruck
   $\dPaar{J(g_1 c_1)}{\xi}$ für alle $c_1 \in C_1$ unabhängig von $g_1
   \in G_1$ ist. Für beliebige, aber fest gewählte $c_1 \in C_1$, $\xi
   \in \lieAlgebra[g]$ ist die Bedingung $\dPaar{J(g_1{c_1})}{\xi} =
   \dPaar{J(c_1)}{\xi}$ für alle $g_1 \in G_1$ wegen der
   $G_1$"=Äquivarianz von $J$ gleichbedeutend mit
   $\dPaar{J(c_1)}{\Ad_{g_1^{-1}} \xi -\xi} = 0$ für alle $g_1 \in G_1$.
   Letzteres ist aber  nach Proposition \ref{prop:AdProp} und Definition
   von $J_1$ und $C_1$ klar.

   Als nächstes erinnern wir an das kommutative Diagramm
   \eqref{eq:gedrueckteWirkung}. Durch Anwenden der Kettenregel erhalten
   wir daraus die folgende, für die nächste Rechnung hilfreiche,
   Relation für die fundamentalen Vektorfelder.
   \begin{align*}
      (T_e\wp \xi)_{C/G_1}(\pi_1(c_1)) &= T_{[e]}
      \Phi^{\scriptscriptstyle{G/G_1}}(\cdot,\pi_1(c_1)) T_e \wp \xi \\
      &= T_e(\Phi^{\scriptscriptstyle{G/G_1}}(\wp(\cdot),\pi_1(c_1)) \xi \\
      &= T_e(\pi_1 \circ \Phi^{\scriptscriptstyle{G}}(\cdot,c_1)) \xi \\
      &= T_{c_1} \pi_1 T_e\Phi^{\scriptscriptstyle{G}}(\cdot,c_1) \xi \\
      &= T_{c_1} \pi_1 \xi_{C_1}(c_1) \Fcom
   \end{align*}
   für alle $c_1 \in C_1$ und $\xi \in \lieAlgebra[g]$.
   \jot=3pt

   Bezeichne $\omega_{\text{red}_1}$ die durch die Beziehung $\kInE^*
   \omega = \pi_1^* \omega_{\text{red}_1}$ eindeutig bestimmte
   symplektische Form auf $\Mred[1]$.  Als nächstes zeigen wir, dass
   $J_2$ tatsächlich eine Impulsabbildung ist. Dazu sei ${c_1} \in C_1$, $v
   \in T_pC_1$ und $\xi \in \lieAlgebra[g]$, dann gilt

   \jot=5pt
   \begin{align*}
      \omega_{\text{red}_1} (\pi_1({c_1})) ((T_e\wp
      \xi)_{\Mred[1]}(\pi_1({c_1})), T_{c_1} \pi_1 v) &=
      \omega_{\text{red}_1} (\pi_1({c_1})) (T_p\pi_1\xi_{C_1}({c_1}), T_{c_1}
      \pi_1
      v) \\
      &= (\pi_1^* \omega_{\text{red}_1})({c_1})(\xi_{C_1}({c_1}),v) \\
      &= (\kInE^* \omega)({c_1})(\xi_{C_1}({c_1}),v) \\
      &= \omega({c_1})(T_{c_1} \kInE \xi_{C_1}({c_1}),T_{c_1}\kInE v) \\
      &= \omega({c_1})(\xi_M({c_1}),T_{c_1}\kInE v) \\
      &= \dPaar{d_{c_1}J(\xi)}{T_{c_1}\kInE v} \\
      &= \dPaar{d_{c_1}(J(\xi) \circ \kInE)}{v} \\
      &= \dPaar{d_{c_1}(J_2(T_e\wp\xi) \circ \pi_1)}{v} \\
      &= \dPaar{d_{[{c_1}]}(J_2(T_e\wp\xi))}{T_{c_1}\pi_1 v} \Fdot
   \end{align*}
Wir wenden uns nun der Äquivarianz zu. Für ${c_1} \in C_1$, $g \in G$ und
   $\xi \in \lieAlgebra[g]_1$ sieht man leicht folgendes ein:
   \begin{align*}
      \dPaar{J_2([g][{c_1}])}{T_e\wp \xi} &= \dPaar{J(g{c_1})}{\xi} = \dPaar{gJ({c_1})}{\xi} = \dPaar{J({c_1})}{g^{-1} \xi} \\
      &= \dPaar{J_2([{c_1}])}{T_e \wp (g^{-1} \xi)}
      \stackrel{\hidewidth\eqref{eq:AdVomQuotienten}\hidewidth}{=}
      \dPaar{J_2([{c_1}])}{[g^{-1}] T_e \wp\xi} = \dPaar{[g]^{-1}J_2([{c_1}])}{T_e \wp
        \xi} \Fdot
   \end{align*}
   Somit ist auch die $G/G_1$"=Äquivarianz von $J_2$ gezeigt.

\end{proof}

Für den weiteren Verlauf dieser Arbeit seien $C_2 := J_2^{-1}(0)$ und
$\Mred[2] := C_2/(G/G_1)$, versehen mit der reduzierten
symplektischen Form $\omega_{\text{red}_2}$, welche durch die Beziehung
$ \pi_2^* \omega_{\text{red}_2} = \imath_2^* \omega_{\text{red}_1}$
eindeutig bestimmt ist. Dabei ist $\imath_2 \dpA C_2 \hookrightarrow
\Mred[1]$ die Inklusion und $\pi_2 \dpA C_2 \to
\Mred[2]$ die kanonische Projektion auf den Quotienten. Ebenso seien $C :=
J^{-1}(0)$ und $\Mred[] := C/G$, versehen mit der reduzierten
 symplektischen Form $\omega_{\text{red}}$, die durch die Gleichung
$\pi^* \omega_{\text{red}} = \kIn^* \omega$ eindeutig bestimmt ist,
wobei $\kIn \dpA C \hookrightarrow M$ die Inklusion ist und $\pi \dpA C
\to \Mred[]$ die kanonische Projektion auf den Quotienten. Schließlich sei
$\kjIn \dpA C \hookrightarrow C_1$ die Inklusion von $C$ in $C_1$.

\begin{bemerkung}
   \label{bem:PiAufC}
   Ist $c \in C \subset C_1$, so ist $\pi_1(c) \in C_2$, denn für jedes
   $\xi \in \lieAlgebra[g]$ gilt $\dPaar{J_2(\pi_1(c))}{T_e \wp \xi} =
   \dPaar{J(c)}{\xi} = 0$. Somit gibt es eine glatte Abbildung
   $\widetilde \varsigma \dpA C \to C_2$ mit $\tilde \varsigma(c) =
   \pi_1(c)$ für $c \in C$. Falls $\pi(c) = \pi(c')$ für $c,c' \in C$
   gilt, gibt es ein $g \in G$ mit $gc = c'$ und deshalb gilt nach
   Definition der $G/G_1$"=Wirkung (vgl.\ Gleichung
   \eqref{eq:gedrueckteWirkungAbk}) $[g] \pi_1(c) = \pi_1(c')$, also
   $\pi_2(\pi_1(c)) = \pi_2(\pi_1(c'))$. Folglich gibt es eine glatte
   Abbildung $\varsigma \dpA \Mred[] \to \Mred[2]$, so dass das Diagramm

   \begin{equation}
      \begin{tikzpicture}[baseline=(current
         bounding box.center),description/.style={fill=white,inner sep=2pt}]
         \matrix (m) [matrix of math nodes, row sep=3.0em, column
         sep=3.5em, text height=1.5ex, text depth=0.25ex]
         { C & C_2 \\
           \Mred[] & \Mred[2] \\}; %
         \path[->] (m-1-1) edge node[auto] {$\pi$} (m-2-1); %
         \path[->] (m-1-2) edge node[auto] {$\pi_2$} (m-2-2);%
         \path[->] (m-1-1) edge node[above=-2.5pt] {$\widetilde
           \varsigma$}(m-1-2);%
         \path[->] (m-2-1) edge node[above=-2.5pt] {$\varsigma$}
         (m-2-2); %
      \end{tikzpicture}
   \end{equation}
   kommutiert.
\end{bemerkung}
\begin{proposition}
   \label{prop:tildeVarsigmaSurjektiv}
   Die Abbildung $\tilde \varsigma \colon C \to C_2$ ist surjektiv.
\end{proposition}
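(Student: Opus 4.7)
The plan is to show that for every $[c_2] \in C_2$ there is already a preimage $c \in C$ under $\tilde\varsigma$; in fact I expect to prove the stronger statement that every $G_1$-representative of such a $[c_2]$ already lies in $C$, so that $\tilde\varsigma$ is just the restriction of $\pi_1$ and surjectivity is immediate.

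First I would pick $[c_2] \in C_2 = J_2^{-1}(0)$ and choose any representative $c_1 \in C_1$ with $\pi_1(c_1) = [c_2]$. Then I would unravel the defining equation of $J_2$, namely $\dPaar{J_2([c_1])}{T_e\wp\,\xi} = \dPaar{J(c_1)}{\xi}$ for all $\xi \in \lieAlgebra$. Since $\wp\colon G \to G/G_1$ is a surjective submersion, $T_e\wp\colon \lieAlgebra \to \Lie[G/G_1]$ is surjective, so the condition $J_2([c_1]) = 0$ is equivalent to $\dPaar{J(c_1)}{\xi} = 0$ for all $\xi \in \lieAlgebra$, i.e.\ $J(c_1) = 0$. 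Hence $c_1 \in C$, and $\tilde\varsigma(c_1) = \pi_1(c_1) = [c_2]$, which establishes surjectivity.

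The only subtle point — and what could be seen as the main thing to verify — is that this argument does not depend on the chosen representative $c_1$, i.e.\ that $J$ is actually $G_1$-invariant on $C_1$. This follows from Proposition~\ref{prop:AdProp}: for $g_1 \in G_1$ and $\xi \in \lieAlgebra$ one has $\Ad_{g_1^{-1}}\xi - \xi \in T_e\gIn\,\lieAlgebra_1$, so by $G$-equivariance of $J$ and by $c_1 \in C_1 = J_1^{-1}(0)$ the computation $\dPaar{J(g_1 c_1)}{\xi} = \dPaar{J(c_1)}{\Ad_{g_1^{-1}}\xi} = \dPaar{J(c_1)}{\xi} + \dPaar{J_1(c_1)}{\cdot} = \dPaar{J(c_1)}{\xi}$ shows $J(g_1c_1) = J(c_1)$. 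This is precisely the same calculation that makes $J_2$ well defined in Lemma~\ref{lem:zweiteImpulsabbildung}, so no new work is required.

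Consequently the proof is essentially a one-liner: $C_2 = \pi_1(C)$ as subsets of $C_1/G_1$, and $\tilde\varsigma$ is exactly the corestriction of $\pi_1|_C$ to $C_2$, hence surjective. I do not anticipate a genuine obstacle here; the proposition is a formal consequence of the definition of $J_2$ together with the surjectivity of $T_e\wp$.
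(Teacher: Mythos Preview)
Your argument is correct and matches the paper's proof essentially verbatim: pick $c_2\in C_2$, choose any $c_1\in C_1$ with $\pi_1(c_1)=c_2$, and use the defining relation of $J_2$ to conclude $J(c_1)=0$, hence $c_1\in C$ and $\tilde\varsigma(c_1)=c_2$. The paragraph on representative-independence is superfluous here---for surjectivity you only need \emph{one} preimage, and you have already shown that the chosen $c_1$ works; the well-definedness of $J_2$ (Lemma~\ref{lem:zweiteImpulsabbildung}) is a separate issue that is not in question.
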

\begin{proof}
   Sei $c_2 \in C_2$, d.\,h.\ $J_2(c_2) = 0$. Es gibt nun ein $c_1 \in C_1$
   mit $c_2 = \pi_1(c_1)$, so dass $\dPaar{J(c_1)}{\xi} =
   \dPaar{J_2(\pi_1(c_1))}{T_e\wp \xi} = 0$ für alle $\xi \in
   \lieAlgebra$. Somit gilt aber  schon $J(c_1) = 0$, also $c_1 \in
   C$, d.\,h.\ auch  $c_2 = \pi_1\at{C}(c_1) = \tilde \varsigma(c_1)$.
\end{proof}

Insgesamt haben wir  folgendes wichtige kommutative Diagramm:

\newcommand{\BB}{}
\begin{equation}
   \label{eq:Notation}
   \begin{tikzpicture}[baseline=(current
      bounding box.center),description/.style={fill=white,inner sep=2pt}]
      \matrix (m) [matrix of math nodes, row sep=2.0em, column
      sep=2.3em, text height=1.5ex, text depth=0.25ex]
      {  J^{-1}(0) & \BB & J_1^{-1}(0)  &  \BB & M & \BB & \lieAlgebra[g]^* \\
        \BB & \BB & \BB & \BB & \lieAlgebra[g]_1^* & \BB & \BB \\
        \Mred[] & J_2^{-1}(0) & \Mred[1] & \Lie[G/G_1]^* & \BB & \BB & \BB \\
        \BB & \BB & \BB & \BB & \BB & \BB & \BB \\
        \BB & \Mred[2] & \BB & \BB & \BB & \BB & \BB \\ };
      \path[right hook->] (m-1-1) edge node[auto] {\kjIn} (m-1-3);
      \path[right hook->] (m-1-3) edge node[auto] {\kInE} (m-1-5);
      \path[->] (m-1-5) edge node[auto] {$J$} (m-1-7);

      \path[right hook->] (m-3-2) edge node[auto] {$\imath_2$} (m-3-3);%
      \path[->] (m-3-3) edge node[auto] {$J_2$} (m-3-4);%
      \path[->] (m-1-1) edge node[auto] {$\pi$} (m-3-1); %
      \path[->] (m-1-1) edge node[auto] {$\tilde\varsigma$} (m-3-2); %
      \path[->] (m-1-3) edge node[auto] {$\pi_1$} (m-3-3); %
      \path[->] (m-1-5) edge node[auto] {$J_1$} (m-2-5); \path[->]
      (m-3-2) edge node[auto] {$\pi_2$} (m-5-2); %
      \path[->] (m-3-1) edge node[auto] {$\varsigma$} (m-5-2.north
      west); %
      \path[right hook->] (m-1-1) edge[bend left=20] node[auto] {\kIn}
      (m-1-5);%
      \path[<-] (m-2-5) edge[] node[below,sloped] {$(T_e \gIn)^*$}
      (m-1-7);%
      \path[->] (m-3-4.east) edge[bend right=40] node[below,sloped]
      {$(T_e\wp)^*$} (m-1-7);%

   \end{tikzpicture}
\end{equation}

Schon aus Dimensionsgründen ist klar, dass $\widetilde \varsigma$ im Allgemeinen nicht immersiv ist. Ebenfalls aus Dimensionsgründen hat
$\varsigma$ jedoch die Chance ein Diffeomorphismus zu sein. Dies ist
tatsächlich der Fall. Der folgende zentrale Satz zeigt, dass $\varsigma$
sogar ein Symplektomorphismus ist.

\begin{satz}
   \label{satz:redPhasenraeumeSymplektomorph}
   Sei $(M,\omega)$ eine symplektische Mannigfaltigkeit und eine
   Lie"=Gruppe $G$ wirke auf $M$ stark Hamiltonsch, frei und
   eigentlich mit $G$"=äquivarianter Impulsabbildung
   $J\colon M \to \lieAlgebra^*$ und $C := J^{-1}(0) \neq
   \emptyset$. Sei $\gIn \colon G_1 \hookrightarrow G$ eine normale
   Unter"=Lie"=Gruppe von $G$ und $J_1 := (T_e\kIn)^*\circ J\colon M \to
   \lieAlgebra_1^*$ die zugehörige Impulsabbildung. Ferner sei $\Mred =
   C/G$ der reduzierte Phasenraum bezüglich der $G$"=Wirkung und
   $\Mred[1] = C/G_1$ derjenige bezüglich der $G_1$"=Wirkung. Dann wirkt
   $G/G_1$ frei und eigentlich auf $\Mred[1]$ und es gibt eine
   $G/G_1$"=äquivariante Impulsabbildung $J_2$, die Gleichung
   \eqref{eq:DefZweiteImpulsAbbildung} genügt. Unter diesen
   Voraussetzungen sind $\Mred$ und der bezüglich der $G/G_1$"=Wirkung
   reduzierte Phasenraum $\Mred[2] := J_1^{-1}(0)/(G/G_1)$
   symplektomorph. Explizit ist durch die in Bemerkung \ref{bem:PiAufC}
   eingeführte Abbildung
   \begin{align}
      \varsigma \dpA \Mred[] \to \Mred[2]
   \end{align}
    ein Symplektomorphismus gegeben.
\end{satz}

\begin{proof}
   Zuerst zeigen wir, dass $\varsigma$ injektiv ist. Angenommen
   $\varsigma(\pi(c)) = \varsigma(\pi(c'))$ für $c,c' \in C$, dann gilt
   $\pi_2(\pi_1(c)) = \pi_2(\pi_1(c'))$. Es gibt somit ein $g \in G$ mit
   $\pi_1(gc) = [g] \pi_1(c) = \pi_1(c')$. Also gibt es ein $g_1 \in
   G_1$ mit $g_1gc = c'$, womit schon $\pi(c) = \pi(c')$ gilt, was zu
   zeigen war.

   Die Surjektivität von $\varsigma$ folgt unmittelbar aus der von
   $\pi$, $\pi_2$ und $\tilde \varsigma$.
   Mit Hilfe des kommutativen Diagramms \eqref{eq:Notation}
   und der Definition der reduzierten symplektischen Formen rechnet man
   leicht nach:
   \begin{align*}
      \pi^* \varsigma^* \omega_{\text{red}_2} = {\widetilde \varsigma}^*
      \pi_2^* \omega_{\text{red}_2} &= {\widetilde \varsigma}^* \imath_2^*
      \omega_{\text{red}_1} = \kjIn^* \pi_1^* \omega_{\text{red}_1} =
      \kjIn^* \kInE^* \omega = \kIn^* \omega = \pi^* \omega_{\text{red}}\Fdot
   \end{align*}
   Demnach ist $\varsigma$ symplektisch. Insbesondere ist $\varsigma$
   immersiv.  Aus Dimensionsgründen ($\dim \Mred[] = \dim \Mred[2]$) und
   dem Satz über die Umkehrfunktion ist $\varsigma$ ein
   Diffeomorphismus.

\end{proof}

\section{Quanten-Koszul-Reduktion in Stufen}
\label{sec:QuantenKoszulInStufen}

Gegenstand dieses Kapitels ist es, zu untersuchen, ob und wie sich die
in Abschnitt \ref{sec:symplektischeReduktionInStufen} präsentierte
klassische Reduktion in Stufen auf die Quanten-Koszul-Reduktion
übertragen lässt. Zunächst stellt sich dabei die Frage, ob das in der
ersten Stufe durch Reduktion erhaltene Sternprodukt überhaupt die
Voraussetzungen für eine weitere Reduktion erfüllt. Dies werden wir im
ersten Abschnitt dieses Kapitels betrachten. Anschließend wenden wir uns
im zweiten Abschnitt der Fragestellung zu, wie das in einer Stufe reduzierte
Sternprodukt mit dem in zwei Stufen reduzierten in Beziehung steht.

Sei $\star$ ein $G$"=invariantes Sternprodukt auf $(M,\omega)$ und $\qJ
\dpA \lieAlgebra \to C^\infty(M)[[\lambda]]$ eine $G$"=äquivariante
Quantenimpulsabbildung.

Das heißt für $\xi \in \lieAlgebra[g]$, $g \in G$, $p \in M$ und $f \in \CM[M]$ gelten
die folgenden Bedingungen.
\begin{enumerate}[\itshape i.)]
\item $\qJ(\xi)$ ist eine Deformation von $J(\xi)$.

\item
   \label{item:quantenInvarianteImpulsabb}
   $\I \lambda{} \{J(\xi),f\} =
   [\qJ(\xi),f]_{\star}$.
\item $\qJ(gp) = g\qJ(p)$.
\end{enumerate}
Es sei weiter daran erinnert, dass $\star$ genau dann stark invariant
bezüglich $G$ ist, wenn die klassische Impulsabbildung $J$ auch eine
Quantenimpulsabbildung ist.

Ist $\lieAlgebra[g]_2 \subset \lieAlgebra[g]$ mit $\lieAlgebra[g] = T_e
\gIn \lieAlgebra[g]_1 \oplus \lieAlgebra[g]_2$, so induziert dies eine
Zerlegung $\lieAlgebra^* = (\lieAlgebra^*)_1 \oplus (\lieAlgebra^*)_2$,
wobei die beiden Summanden $(\lieAlgebra^*)_1$ und $(\lieAlgebra^*)_2$
durch $(\lieAlgebra^*)_1: = \{\mu \in \lieAlgebra^* \mid
\mu\at{\lieAlgebra_2} = 0\}$ und $(\lieAlgebra^*)_2: = \{\mu \in
\lieAlgebra^* \mid \mu\at{T_e\gIn\lieAlgebra_1} = 0\}$ gegeben
sind. Aufgrund der $G$"=Invarianz von $\lieAlgebra_1$ folgt direkt
dieselbige von $(\lieAlgebra^*)_2$. Wir denken uns im Folgenden
$\lieAlgebra_2$ gewählt. Offensichtlich ist $(T_e
\gIn)^*\at{(\lieAlgebra^*)_1} \dpA (\lieAlgebra^*)_1 \to
\lieAlgebra[g]_1^*$ ein Isomorphismus von Vektorräumen.

\begin{definition}[Reduktiver homogener Raum]
   \label{def:ReduktiverHomogenerRaum}
   Der homogene Raum $G/G_1$ heißt \neuerBegriff{reduktiv}, falls es ein
   $G_1$"=invariantes Vektorraumkomplement $\lieAlgebra_2$ von $T_e\gIn
   \lieAlgebra_1$ gibt.
\end{definition}
\begin{bemerkung}
   \label{bem:ReduktiverHomogenerRaum}
   Die $G_1$"=Invarianz von $\lieAlgebra_2$ ist, da $G_1$
   zusammenhängend ist, äquivalent dazu, dass $[\xi_1,\xi_2] \in
   \lieAlgebra_2$ für alle $\xi_1 \in T_e\gIn\lieAlgebra_1,\xi_2 \in
   \lieAlgebra_2$. Da $G_1$ in $G$ normal und somit $\lieAlgebra_1$ ein
   Lie-Ideal in $\lieAlgebra$ ist, ist dies weiter äquivalent zur
   Bedingung $[\xi_1,\xi_2] = 0$ für alle $\xi_1 \in T_e\gIn
   \lieAlgebra_1,\xi_2 \in \lieAlgebra_2$. Genauso sieht man, dass
   $\lieAlgebra_2$ genau dann $G$"=invariant ist, wenn $\lieAlgebra_2$
   ein Lie"=Ideal in $\lieAlgebra$ ist.
\end{bemerkung}

\begin{bemerkung}
   \label{bem:invariantesKomplement}
   Dass $\lieAlgebra[g]_2$ ein $G$"=invariantes Komplement von $T_e \gIn
   \lieAlgebra[g]_1$ ist, bedeutet, falls -- wie wir es immer annehmen
   -- $G$ und $G_1$ zusammenhängend sind, dass es eine diskrete Gruppe
   $\Gamma \subset G$ und einfach zusammenhängende Gruppen $\tilde G_1$
   und $\tilde G_2$ gibt, so dass $G \simeq (\tilde G_1 \times \tilde
   G_2)/\Gamma$. Dies ist insbesondere immer dann erfüllt, wenn $G$
   kompakt ist, siehe etwa \cite[Thm. 20.21]{lee:2003a}.

\end{bemerkung}

\begin{beispiel}
   \label{bsp:heisenbergLieAlgebra}
   Sei $\lieAlgebra$ die Heisenberg-Lie-Algebra, die durch die
   Generatoren $P,Q$ und $R$ über $\mathbb{R}$ erzeugt wird und deren
   Lie-Klammer durch die Beziehungen $[P,Q] = R$ und $[P,R] = [Q,R] = 0$
   festgelegt ist. Weiter sei $\lieAlgebra_1 := \mathbb{R} R$ das
   Zentrum von $\lieAlgebra$. Auf Gruppenniveau entspricht dies der Wahl
   von $G$ als Heisenberg-Gruppe und der von $G_1$ als
   $\mathbb{R}\setminus \{0\}$.  In diesem Fall gibt es zwar ein
   $G_1$"=invariantes Komplement von $\lieAlgebra_1$ in $\lieAlgebra$,
   etwa $\lieAlgebra_2 := \mathbb{R}\textrm{-}\mathrm{Span}\{P,Q\}$,
   aber kein $G$"=invariantes Komplement, denn wie man leicht nachrechnet,
   gilt $[\lieAlgebra,\lieAlgebra] \subset \lieAlgebra_1$.

\end{beispiel}

\begin{proposition}
   \label{prop:p2Invariant}
   Ist $\lieAlgebra_2$ $H$"=invariant (mit $H \in \{G_1,G\}$), so ist die
   kanonische Projektion $\mathrm{p}_2 \colon \lieAlgebra^* \to
   (\lieAlgebra^*)_2$ $H$"=äquivariant.
\end{proposition}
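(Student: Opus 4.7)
The plan is to show that the $H$-invariance hypothesis on $\lieAlgebra_2$ forces the entire direct sum decomposition $\lieAlgebra^* = (\lieAlgebra^*)_1 \oplus (\lieAlgebra^*)_2$ to be $H$-invariant; once both summands are $H$-stable, the associated projections are automatically $H$-equivariant, which is exactly the claim.

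First I would recall what is already at our disposal. The subspace $(\lieAlgebra^*)_2$ is the annihilator of $T_e\gIn\lieAlgebra_1$, and since $\lieAlgebra_1$ is normal in $\lieAlgebra$ (equivalently $G_1$ is normal in $G$), this annihilator is $G$-invariant, hence in particular $H$-invariant for $H \in \{G_1,G\}$ --- this is exactly the observation already used earlier in the text to introduce $(\lieAlgebra^*)_2$. So half of the decomposition is stable without any further assumption.

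Next I would exploit the new hypothesis: if $\lieAlgebra_2$ is $H$-invariant, then its annihilator $(\lieAlgebra^*)_1$ is also $H$-invariant under the coadjoint $H$-action. This is the standard duality argument: for $h \in H$, $\mu \in (\lieAlgebra^*)_1$ and $\xi_2 \in \lieAlgebra_2$ one computes
\[
\dPaar{h\mu}{\xi_2} = \dPaar{\mu}{h^{-1}\xi_2} = 0,
\]
because $h^{-1}\xi_2 \in \lieAlgebra_2$ by $H$-invariance and $\mu$ annihilates $\lieAlgebra_2$. Hence $h\mu \in (\lieAlgebra^*)_1$.

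Finally, writing any $\mu \in \lieAlgebra^*$ uniquely as $\mu = \mu_1 + \mu_2$ with $\mu_i \in (\lieAlgebra^*)_i$, linearity of the $H$-action gives $h\mu = h\mu_1 + h\mu_2$, and by the two invariance statements $h\mu_1 \in (\lieAlgebra^*)_1$, $h\mu_2 \in (\lieAlgebra^*)_2$. Uniqueness of the decomposition therefore yields $\mathrm{p}_2(h\mu) = h\mu_2 = h\,\mathrm{p}_2(\mu)$, i.e.\ $\mathrm{p}_2$ is $H$-equivariant. There is no genuine obstacle here --- the content is entirely in observing that the annihilator construction is functorial with respect to the (co)adjoint action, and that $(\lieAlgebra^*)_2$ is stable for free.
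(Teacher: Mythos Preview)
Your argument is correct and follows essentially the same route as the paper: both show that the hypothesis makes $(\lieAlgebra^*)_1$ $H$-invariant (as the annihilator of the $H$-stable $\lieAlgebra_2$), combine this with the already known $H$-invariance of $(\lieAlgebra^*)_2$, and then read off equivariance of $\mathrm{p}_2$ from the decomposition. The paper compresses this into a single displayed chain of equalities, while you spell out the annihilator duality step explicitly; there is no substantive difference.
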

\begin{proof}
   Seien $\mu_1 \in (\lieAlgebra^*)_1$ und $\mu_2 \in
   (\lieAlgebra^*)_2$ sowie $h \in H$. Dann gilt
   \begin{align*}
      h \mathrm{p}_2(\mu_1 + \mu_2) = h \mu_2 = \mathrm{p}_2(h \mu_2) = \mathrm{p}_2(h \mu_1 + h
      \mu_2) = \mathrm{p}_2(h(\mu_1 + \mu_2))\Fdot
   \end{align*}
   Dabei wurde im zweiten Schritt die $H$"=Invarianz von
   $(\lieAlgebra^*)_2$ verwendet. Im dritten Schritt geht die
   Voraussetzung ein, denn die $H$"=Invarianz von $\lieAlgebra_2$
   impliziert sofort die von $(\lieAlgebra^*)_1$.
\end{proof}

\subsection{Invarianzeigenschaften eines nach einer Untergruppe
  reduzierten Sternprodukts}
\label{sec:InvarianzDesErstenReduziertenSternprodukts}
In diesem Abschnitt sehen wir analog zur klassischen Situation,
dass $\qJ$ eine $G$"=äquivariante Quantenimpulsabbildung $\qJ[1]$ für den
ersten Reduktionsschritt und eine Quantenimpulsabbildung $\qJ[2]$
induziert. Da letztere zum zweiten Reduktionsschritt dienen soll, muss
sie auch $G/G_1$"=äquivariant sein. Um dies sicherzustellen, werden wir
weitere Bedingungen an die Gruppen stellen müssen. Wir nehmen $G/G_1$ in
diesem Abschnitt immer als reduktiv an und denken uns ein
$G_1$"=invariantes Komplement $\lieAlgebra_2$ gewählt.

Wir definieren
\begin{align}
   \label{eq:J1Definition}
   \qJ[1] \colon \lieAlgebra_1 \to C^\infty(M)[[\lambda]]
\end{align}
durch
\begin{align}
\qJ[1](\xi_1) := \qJ(T_e \gIn \xi_1) \quad \text{ für }\xi_1 \in
\lieAlgebra[g]_1\label{eq:J1Definition2}
\end{align}
und erhalten die folgende Proposition.

\begin{proposition}
   \label{prop:FettJEImpulsabb}
   $\qJ[1]$ ist eine $G$"=äquivariante Quantenimpulsabbildung.
\end{proposition}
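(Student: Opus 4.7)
The plan is to verify each defining property of a quantum moment map for $\qJ[1]$ directly from the corresponding property of $\qJ$, using the fact that $\qJ[1]$ is obtained as the composition $\qJ \circ T_e\gIn$.

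First I would check that $\qJ[1]$ deforms the classical moment map $J_1$: the zeroth order of $\qJ[1](\xi_1) = \qJ(T_e\gIn\xi_1)$ equals $J(T_e\gIn\xi_1) = \dPaar{J}{T_e\gIn\xi_1} = \dPaar{(T_e\gIn)^*J}{\xi_1} = J_1(\xi_1)$ by definition \eqref{eq:klassischesJ1} of $J_1$. Next, for the quantum Hamiltonian property, I would note that for $\xi_1 \in \lieAlgebra_1$ and $f \in C^\infty(M)[[\lambda]]$ the computation
\begin{equation*}
   [\qJ[1](\xi_1),f]_\star = [\qJ(T_e\gIn\xi_1),f]_\star = \I\lambda\{J(T_e\gIn\xi_1),f\} = \I\lambda\{J_1(\xi_1),f\}
\end{equation*}
is a direct application of the quantum Hamiltonian property of $\qJ$ to the element $T_e\gIn\xi_1 \in \lieAlgebra$, together with the above identification $J\circ T_e\gIn = J_1$.

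For the full quantum moment map identity I would use that $G_1$ is normal, so $T_e\gIn\lieAlgebra_1$ is a Lie ideal of $\lieAlgebra$, which in particular means $T_e\gIn[\xi_1,\eta_1] = [T_e\gIn\xi_1,T_e\gIn\eta_1]$ for $\xi_1,\eta_1 \in \lieAlgebra_1$. Then
\begin{equation*}
   [\qJ[1](\xi_1),\qJ[1](\eta_1)]_\star = [\qJ(T_e\gIn\xi_1),\qJ(T_e\gIn\eta_1)]_\star = \I\lambda\qJ([T_e\gIn\xi_1,T_e\gIn\eta_1]) = \I\lambda\qJ[1]([\xi_1,\eta_1])
\end{equation*}
follows from the Quantenimpulsabbildungs-Eigenschaft of $\qJ$. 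Alternatively one may invoke Proposition~\ref{bem:QuantenImpulsabbHamiltonian} once the $G_1$-equivariance (as a special case of the $G$-equivariance established below) is in place, since $G_1$ is connected.

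The last step -- and the genuinely informative one -- is the $G$-equivariance. Here I would invoke the commutative diagram established earlier in the chapter, namely $\Ad^G_g \circ T_e\gIn = T_e\gIn \circ \tAdG_g$ for all $g \in G$. Combining this with the $G$-equivariance of $\qJ$ (under the adjoint $G$-action on $\lieAlgebra$) yields, for $\xi_1 \in \lieAlgebra_1$ and $g \in G$,
\begin{equation*}
   g\qJ[1](\xi_1) = g\qJ(T_e\gIn\xi_1) = \qJ(\Ad^G_g T_e\gIn\xi_1) = \qJ(T_e\gIn\tAdG_g\xi_1) = \qJ[1](\tAdG_g\xi_1),
\end{equation*}
which is exactly $G$-equivariance with respect to the $\tAdG$-action on $\lieAlgebra_1$ introduced in Section~\ref{sec:symplektischeReduktionInStufen}. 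No step presents a real obstacle; the main care is to use the $G$-action on $\lieAlgebra_1$ coming from $\tAdG$ (which coincides with the adjoint action when restricted to $G_1$) rather than only the $G_1$-action, so that the result is truly $G$-equivariant and can later be used to set up the second reduction step.
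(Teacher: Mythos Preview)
Your proof is correct and follows essentially the same line as the paper: verify that $\qJ[1]$ deforms $J_1$, check the quantum Hamiltonian identity by reducing to that of $\qJ$, and derive $G$-equivariance from the equivariance of $\qJ$ together with the compatibility of $T_e\gIn$ with the $G$-actions. The paper phrases the last step dually, invoking that $\qJ[1] = (T_e\gIn)^*\circ\qJ$ is a composition of two $G$-equivariant maps, while you use the commutative diagram $\Ad^G_g\circ T_e\gIn = T_e\gIn\circ\tAdG_g$ directly; these are the same argument. Your explicit verification of the bracket identity $[\qJ[1](\xi_1),\qJ[1](\eta_1)]_\star = \I\lambda\qJ[1]([\xi_1,\eta_1])$ is an extra step the paper omits, since (as you also observe) it follows from $G$-equivariance via Proposition~\ref{bem:QuantenImpulsabbHamiltonian}.
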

\begin{proof}
   Es ist klar, dass $\qJ[1]$ eine Deformation von $J_1$ ist. Weiter
   gilt  für $f \in \CM[M]$ und $\xi_1 \in
   \lieAlgebra[g]_1$
   \begin{align*}
      \I \lambda \{J_1(\xi_1),f\} = \I \lambda \{J(T_e \gIn \xi_1),f\} =
      [\qJ(T_e \gIn \xi_1),f]_{\star} = [\qJ[1]( \xi_1),f]_{\star} \Fdot
   \end{align*}
   Wegen der $G$"=Äquivarianz von $(T_e \gIn)^*$ und der von $\qJ$ ist auch
   klar, dass $\qJ[1] = (T_e\gIn)^* \circ \qJ$ $G$"=äquivariant ist.
\end{proof}

\begin{proposition}
   \label{prop:GWirkungaufkleing1RespektiertLieAlgebra}
   Die Wirkung $\tAdG$ respektiert die Lie"=Algebra-Struktur auf
   $\lieAlgebra[g]_1$, d.\,h.\ $\tAdG_g$ ist für alle $g \in G$ ein
   Morphismus von Lie"=Algebren.
\end{proposition}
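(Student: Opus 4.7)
The plan is to exploit the fact that $\tconjG_g \colon G_1 \to G_1$ is, by construction, a Lie group homomorphism, and then invoke the standard principle that the tangent map at the identity of a Lie group homomorphism is a Lie algebra homomorphism. Concretely, $\conjG_g \colon G \to G$ is a Lie group automorphism of $G$ for every $g \in G$, and since $G_1$ is normal in $G$ it restricts and co-restricts to a smooth map $\tconjG_g \colon G_1 \to G_1$ fitting into the commutative diagram already displayed before the proposition. That commutative diagram is precisely the statement $\gIn \circ \tconjG_g = \conjG_g \circ \gIn$; combined with the fact that $\gIn$ is a Lie group homomorphism and that $\conjG_g$ is a group homomorphism, this forces $\tconjG_g$ to be a group homomorphism as well, hence a Lie group homomorphism. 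The differential $\tAdG_g = T_e \tconjG_g$ is therefore a Lie algebra homomorphism by the usual argument.

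If one prefers to make this last step fully explicit rather than quoting it, the natural route is to transfer everything to $\lieAlgebra$ via the injective Lie algebra homomorphism $T_e \gIn \colon \lieAlgebra_1 \hookrightarrow \lieAlgebra$. Differentiating the diagram at $e$ gives $T_e\gIn \circ \tAdG_g = \AdG_g \circ T_e\gIn$. Using that $\AdG_g$ is a Lie algebra automorphism of $\lieAlgebra$ and that $T_e\gIn$ preserves brackets, one computes for $\xi_1,\eta_1 \in \lieAlgebra_1$
\begin{align*}
   T_e\gIn\bigl([\tAdG_g \xi_1, \tAdG_g \eta_1]\bigr)
   &= [T_e\gIn \tAdG_g \xi_1, T_e\gIn \tAdG_g \eta_1] \\
   &= [\AdG_g T_e\gIn \xi_1, \AdG_g T_e\gIn \eta_1] \\
   &= \AdG_g [T_e\gIn \xi_1, T_e\gIn \eta_1] \\
   &= \AdG_g T_e\gIn [\xi_1,\eta_1] = T_e\gIn \tAdG_g [\xi_1,\eta_1],
\end{align*}
and injectivity of $T_e\gIn$ yields $[\tAdG_g \xi_1,\tAdG_g \eta_1] = \tAdG_g[\xi_1,\eta_1]$.

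There is no genuine obstacle in this proof; the only point to watch is the well-definedness of $\tconjG_g$ as a map into $G_1$, which is exactly where normality of $G_1$ is used, and which has already been established earlier in the paper when the diagram for $\tconjG_g$ was introduced. Everything else is bookkeeping with the chain rule and the functoriality of passing from a Lie group homomorphism to its differential at the identity.
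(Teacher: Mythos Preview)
Your proof is correct and follows essentially the same approach as the paper: both observe that $\tconjG_g \colon G_1 \to G_1$ is a Lie group homomorphism and then invoke the standard fact that its tangent map at $e$, namely $\tAdG_g$, is a Lie algebra homomorphism. Your additional explicit verification via the injective embedding $T_e\gIn$ is more detailed than the paper's one-line argument but yields the same conclusion.
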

\begin{proof}
   Dies ist unmittelbar klar, da für $g \in G$ die Abbildung $\tconjG_g
   \dpA G_1 \to G_1$ ein Morphismus von Lie"=Gruppen ist, was
   bekanntermaßen (vgl.~\cite[Thm. 4.25]{lee:2003a}) impliziert, dass
   $\tAdG_g = T_e \tconjG_g$ ein Morphismus von Lie"=Algebren ist.
\end{proof}

Im Weiteren sei $\qRes_1\dpA \CM[M][[\lambda]] \to \CM[C_1][[\lambda]]$
die von $\qJ[1]$ und einer gewählten $G$"=äquivarianten
Prolongationsabbildung $\prol[1] \dpA \CM[C_1] \to \CM[M]$ induzierte
$G$"=äquivariante Quanteneinschränkung. Die geometrischen
Homotopie"=Daten von denen $\prol[1]$ induziert werden seien an dieser
Stelle noch beliebig und $G$"=invariant. Die hier vorliegende
Wahlfreiheit werden wir später ausnutzen. Analog zur klassischen
Situation wollen wir nun zeigen, dass es eine Quantenimpulsabbildung
$\qJ[2] \colon \Mred[1] \to \Lie[G/G_1]_{\mathbb{C}}^*[[\lambda]]$ gibt,
so dass für alle $\xi_2 \in \lieAlgebra_2$ und $c_1 \in C_1$
\begin{align}
   \dPaar{\qJ[2](\pi_1(c_1))}{T_e \wp \xi_2} = \qResE \dPaar{
     \qJ}{\xi_2}(c_1)
\end{align}
ist. Anders als in der klassischen Situation kann man aber obige
Gleichung nicht für alle $\xi \in \lieAlgebra$ sondern nur für $\xi_2
\in \lieAlgebra_2$ fordern, da $\qRes_1 \qJ$ im Allgemeinen Anteile aus
$(\lieAlgebra_{\mathbb{C}}^*)_1$ enthalten könnte. Die folgende
Proposition klärt die Situation. Dazu sei $\tqJ[1] \colon \lieAlgebra
\to C^\infty(M)[[\lambda]]$ durch
\begin{align}
   \label{eq:tJ1a}
   \tqJ[1](T_e\gIn \xi_1) = \qJ[1](\xi_1) = \qJ(T_e \gIn \xi_1) \quad
   \text{für $\xi_1 \in \lieAlgebra_1$}
\end{align}
und
\begin{align}
   \label{eq:tJ1b}
   \tqJ[1](\xi_2) = 0 \quad \text{ für $\xi_2 \in \lieAlgebra_2$} \Fdot
\end{align}
 gegeben.
\begin{proposition}
   \label{prop:QuantenJ2}
      Es gibt eine eindeutig bestimmte Abbildung
      \begin{align}
         \qJ[2] \colon \Mred[1] \to \Lie[G/G_1]_{\mathbb{C}}^*[[\lambda]]\label{eq:QuantenJ21}
      \end{align}
      mit
      \begin{align}
         \label{eq:QuantenJ2}
         \dPaar{\qJ[2](\pi_1(c_1))}{T_e \wp \xi} = \qResE\dPaar{
           \qJ - \tqJ[1]}{\xi}(c_1) \quad \forall c_1 \in
         C_1, \xi \in \lieAlgebra \Fdot
      \end{align}
\end{proposition}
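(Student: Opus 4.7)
Der Plan ist folgender. Mit der Zerlegung $\lieAlgebra = T_e\gIn\lieAlgebra_1 \oplus \lieAlgebra_2$ und der Definition \eqref{eq:tJ1a}, \eqref{eq:tJ1b} von $\tqJ[1]$ vereinfacht sich für $\xi = T_e\gIn\xi_1 + \xi_2$ die rechte Seite der zu beweisenden Gleichung \eqref{eq:QuantenJ2} zu $\qResE \qJ(\xi_2)(c_1)$, da $\qJ - \tqJ[1]$ auf $T_e\gIn\lieAlgebra_1$ verschwindet. Insbesondere verschwindet die rechte Seite für $\xi \in \ker T_e\wp = T_e\gIn\lieAlgebra_1$ ebenso wie die linke, so dass $\qJ[2](\pi_1(c_1))$ via des Vektorraumisomorphismus $T_e\wp\at{\lieAlgebra_2} \colon \lieAlgebra_2 \to \Lie[G/G_1]$ eindeutig als lineares Funktional auf $\Lie[G/G_1]$ mit Werten in $\mathbb{C}[[\lambda]]$ bestimmt wird. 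Die Eindeutigkeit ist damit sofort klar.

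Für die Existenz bleibt zu zeigen, dass die Funktion $\qResE\qJ(\xi_2)$ für jedes $\xi_2 \in \lieAlgebra_2$ tatsächlich auf den $G_1$"=Orbits in $C_1$ konstant ist, denn erst dann descendiert sie nach Proposition \ref{prop:InvarianteFunktionen} zu einer Funktion auf $\Mred[1]$. Nach Lemma \ref{lem:CharakterisierungQuantenLieIdealisator} ist dies gleichbedeutend mit der Bedingung $\qJ(\xi_2) \in \qbIdeal[1]$.

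Zum Nachweis dieser Invarianz nutze ich die Quantenimpulsabbildungseigenschaft \eqref{eq:Quantenimpulsabbildung} von $\qJ$ und berechne für beliebiges $\xi_1 \in \lieAlgebra_1$
\begin{align*}
  [\qJ(\xi_2),\qJ[1](\xi_1)]_{\star} = [\qJ(\xi_2),\qJ(T_e\gIn\xi_1)]_{\star} = \I\lambda\,\qJ([\xi_2,T_e\gIn\xi_1]).
\end{align*}
Da $G_1 \subset G$ normal ist, ist $T_e\gIn\lieAlgebra_1$ ein Lie"=Ideal in $\lieAlgebra$, sodass sich $[\xi_2,T_e\gIn\xi_1]$ in der Form $T_e\gIn \eta_1$ für ein geeignetes $\eta_1 \in \lieAlgebra_1$ schreiben lässt. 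Somit ist $\qJ([\xi_2,T_e\gIn\xi_1]) = \qJ[1](\eta_1)$ ein Element des von den Komponenten von $\qJ[1]$ erzeugten Linksideals $\qIdeal[1]$, woraus insgesamt $\qJ(\xi_2) \in \qbIdeal[1]$ folgt.

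Der entscheidende Punkt dieses Plans ist die eben gezeigte Zugehörigkeit $\qJ(\xi_2) \in \qbIdeal[1]$, welche sich jedoch ganz natürlich aus der Normalität von $G_1$ in $G$ ergibt; die Reduktivitätsannahme bzw.\ die Wahl des Komplements $\lieAlgebra_2$ wird hingegen erst für die Existenz der Zerlegung benötigt. Damit kann $\qJ[2]$ durch $\dPaar{\qJ[2](\pi_1(c_1))}{T_e\wp\xi_2} := \qResE\qJ(\xi_2)(c_1)$ für alle $\xi_2 \in \lieAlgebra_2$ definiert werden und erfüllt nach den obigen Überlegungen die behauptete Gleichung \eqref{eq:QuantenJ2} für alle $\xi \in \lieAlgebra$.
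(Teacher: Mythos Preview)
Your proof is correct and takes a genuinely different route from the paper. Both arguments dispatch the vanishing on $T_e\gIn\lieAlgebra_1$ in the same way (directly from the definition of $\tqJ[1]$), but for the $G_1$"=invariance of $\qResE\qJ(\xi_2)$ the paper works on the group level: it uses the $G_1$"=equivariance of $\qJ$, $\tqJ[1]$ and $\qRes_1$ to rewrite $\qRes_1\dPaar{\qJ-\tqJ[1]}{\xi}(g_1c_1)$ as $\qRes_1\dPaar{\qJ-\tqJ[1]}{g_1^{-1}\xi}(c_1)$ and then invokes Proposition~\ref{prop:AdProp} to see that $g_1^{-1}\xi - \xi \in T_e\gIn\lieAlgebra_1$, which reduces everything to the first step. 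You instead argue infinitesimally via the quantum moment map identity \eqref{eq:Quantenimpulsabbildung} and the characterization of $\qbIdeal[1]$ from Lemma~\ref{lem:CharakterisierungQuantenLieIdealisator}. Your approach is more in the algebraic spirit of the Koszul scheme and has the minor advantage that it never appeals to the $G_1$"=equivariance of $\tqJ[1]$ (so the $G_1$"=invariance of the chosen complement $\lieAlgebra_2$ is not used at this point); the paper's version, on the other hand, mirrors the classical argument in Lemma~\ref{lem:zweiteImpulsabbildung} more closely and makes the parallel to the classical reduction in stages more transparent.
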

\begin{proof}
         Der Beweis geht analog zum klassischen Fall. Die Eindeutigkeit
         ist klar. Wir definieren $\qJ[2]$ durch Gleichung
         \eqref{eq:QuantenJ2} und müssen zeigen, dass dies tatsächlich
         wohldefiniert ist. Zunächst sieht man,
         dass die rechte Seite von Gleichung \eqref{eq:QuantenJ2} für
         $\xi = T_e\gIn \xi_1 \in T_e\gIn \lieAlgebra_1$ Null ergibt, denn
         \begin{align*}
            \qRes_1\dPaar{\qJ - \tqJ[1]}{T_e \gIn \xi_1} &=
            \qRes_1\dPaar{\qJ}{T_e\gIn \xi_1} -
            \qRes_1\dPaar{\tqJ[1]}{T_e\gIn \xi_1} \\ &=
            \qRes_1 \dPaar{\qJ[1]}{\xi_1} - \qRes_1
            \dPaar{\qJ[1]}{\xi_1} = 0\Fcom
         \end{align*}
         wobei im vorletzten Schritt die Definitionen von $\qJ[1]$ und
         $\tqJ[1]$ verwendet wurden.  Als nächstes zeigen wir, dass die
         rechte Seite von Gleichung \eqref{eq:QuantenJ2} auch unter
         einer Ersetzung von $c_1$ durch $g_1c_1$ mit $g_1 \in G_1$
         unverändert bleibt, d.\,h.\ dass für alle $\xi \in \lieAlgebra$
         der Ausdruck $\qRes_1\dPaar{\qJ - \tqJ[1]}{\xi}(g_1c_1)$
         unabhängig von $g_1$ ist. Um dies zu sehen, bemerken wir
         zunächst, dass $\tqJ[1]$ $G_1$"=äquivariant ist, da
         $G/G_1$"=reduktiv ist und schon $\qJ[1]$ $G_1$"=äquivariant
         ist. Da außerdem auch $\qJ$ und $\qRes_1$ $G_1$"=äquivariant
         sind, gilt
         \begin{align*}
            \qRes_1 \dPaar{\qJ - \tqJ[1]}{\xi} (g_1 c_1) &=
            (g_1^{-1} \qRes_1 \dPaar{\qJ - \tqJ[1]}{\xi} )( c_1)\\
            &= \qRes_1 (g_1^{-1} \dPaar{\qJ - \tqJ[1]}{\xi} )( c_1)\\
            &= \qRes_1 (\dPaar{\qJ - \tqJ[1]}{g_1^{-1} \xi})(c_1) \Fcom
         \end{align*}
         womit wir sehen, dass es analog zum klassischen Fall genügt zu
         zeigen, dass $\qRes_1\dPaar{\qJ - \tqJ[1]}{g_1^{-1}\xi - \xi} =
         0$ gilt. Dies ist aber mit Proposition \ref{prop:AdProp}
         und den obigen Ausführungen klar.
\end{proof}

Wir bemerken, dass die Abbildung $\chi\colon \mathrm{Lie}(G/G_1)^* \to
(\lieAlgebra^*)_2$, $\alpha \mapsto (T_e\wp)^*\alpha$ offensichtlich
bijektiv ist, notieren $\mathrm{p}_2 \colon \lieAlgebra^* \to (\lieAlgebra^*)_2 $
für die kanonische Projektion auf den zweiten Faktor und kommen zum
ersten wichtigen Satz dieses Abschnittes.

\begin{samepage}
\begin{satz}[Invarianzeigenschaften des reduzierten Sternprodukts]
   \label{thm:ReduziertesSternproduktInvariant}

   \begin{satzEnum}

   \item %
      Das Sternprodukt $\starred[1]$ ist $G/G_1$"=invariant.
   \item %
      $\qJ[2]$ ist eine Quantenhamiltonfunktion und, falls
      $\lieAlgebra_2$ $G$"=invariant ist, auch $G/G_1$"=äquivariant,
      d.\,h.\ eine Quantenimpulsabbildung.
   \item %
      \label{item:MitAlpha}
      Falls $\qJ = J + \alpha$ für ein $\alpha \in
      \lambda \lieAlgebra[g]_{\mathbb{C}}^*[[\lambda]]$ und $\qResE J = \kResE J
      $ erfüllt ist, gilt $\qJ[2] = J_2 + \chi^{-1}(\mathrm{p}_2(\alpha))$
      und $\starred[1]$ ist stark invariant bezüglich der
      $G/G_1$"=Wirkung.
   \end{satzEnum}
\end{satz}
\end{samepage}
\begin{proof}
   \begin{beweisEnum}
   \item %
      Dies folgt unmittelbar aus der $G$"=Äquivarianz von $\qRes_1$,
      $\prol_1$, der Definition der $G/G_1$"=Wirkung sowie Gleichung
      \eqref{eq:SternproduktAufDemReduziertenPhasenraum3} und der
      Submersivität von $\pi_1$.
   \item  %
      Es ist offensichtlich, dass $\qJ[2]$ eine Deformation von $J_2$ ist.
      Sei $f \in \CM[{\Mred[1]}]$, dann gilt für alle $\xi_2 \in \lieAlgebra_2$
      \begin{align*}
         \lefteqn{\pi_1^*(\I \lambda \{f,J_2(T_e\wp \xi_2)\}_{\text{red}_1})}\\
         &= \I \lambda \kResE \{ \prol[1] \pi_1^*
         f,\prol[1] \pi_1^* (J_2(T_e\wp \xi))\} && \eAnn{nach Gleichung \eqref{eq:ReduziertePoissonKlammer} }\\
         &= \I \lambda \kResE \{\prol[1] \pi_1^*
         f,\prol[1] \kResE J({\xi_2})\} && \eAnn{per Definition von $J_2$}\\
        &= \I \lambda \kResE \{\prol[1] \pi_1^* f,J({\xi_2})\}
         && \eAnn{$(\id - \prol[1] \kResE)(\CM[M]) \subset \kIdeal[1]$,}[] \\
         &\phantom{=} && \eAnn[]{ $\prol[1]
           \pi_1^*f \in \kbIdeal[1]$ nach
           Prop. \ref{prop:FunktionenAufDemReduziertenPhasenraum}~\refitem{item:CharakterisierungDesKlassischenIdealisators}} \\
         &= \I \lambda \kResE {\xi_2}_M(\prol[1]\pi_1^* f) \\
         &= \I \lambda \kResE \prol[1] {\xi_2}_{C_1}(\pi_1^* f)
         &&\eAnn{$\prol[1]$
           $G$-invariant}\\
         &= \I \lambda \qResE \prol[1]{\xi_2}_{C_1}(\pi_1^* f)
         &&\eAnn{da $\kResE \prol[1] = \id = \qResE \prol[1]$}\\
         &= \I \lambda \qResE {\xi_2}_M(\prol[1] \pi_1^*f)
         &&\eAnn{$\prol[1]$ $G$-invariant}\\
         &= \I \lambda \qResE \{\prol[1] \pi_1^*f,J({\xi_2})\} \\
         &= \qResE [\prol[1]\pi_1^* f,\qJ({\xi_2})]_{\star}
         &&\eAnn{da $\qJ$ Quantenimpulsabbildung}\\
         &= \qResE [\prol[1]\pi_1^* f,\prol[1] \qResE \qJ({\xi_2})]_{\star} && \eAnn{$(\id - \prol[1] \qResE)(\CM[M][[\lambda]]) \subset \qIdeal[1]$,}[] \\
         &\phantom{=} && \eAnn[]{ $\prol[1]
           \pi_1^*f \in \qbIdeal[1]$ nach Lem. \ref{lem:CharakterisierungQuantenLieIdealisator}} \\
         &= \qResE [\prol[1]\pi_1^*
         f,\prol[1] \pi_1^* \qJ[2](T_e\wp {\xi_2})]_{\star} && \eAnn{per Definition von $\qJ[2]$ } \\
         &= \pi_1^*[f,\qJ[2](T_e\wp {\xi_2})]_{\starred[1]}\Fdot
      \end{align*}
      Damit folgt wegen der Injektivität von $\pi_1^*$ sofort
      \begin{align*}
         \I \lambda \{f,J_2(T_e\wp \xi_2)\}_{\text{red}_1} =
         [f,\qJ[2](T_e\wp {\xi_2})]_{\starred[1]} \Fdot
      \end{align*}
      Die $G/G_1$"=Äquivarianz sieht man wie im Beweis zu Lemma
      \ref{lem:zweiteImpulsabbildung}.
      Für $c_1 \in C_1$, $g \in G$ und $\xi_2 \in \lieAlgebra[g]_2$ gilt
      nämlich:
      \begin{align*}
         \dPaar{\qJ[2]([g][c_1])}{T_e\wp \xi_2} &= \qResE \dPaar{
           \qJ}{\xi_2}(gc_1) = (g^{-1} \qResE \dPaar{\qJ}{\xi_2})(c_1) = \qResE g^{-1} \dPaar{\qJ}{\xi_2}(c_1)\\
         &= \qResE\dPaar{\qJ}{g^{-1} \xi_2}(c_1) =
         \dPaar{\qJ[2]([c_1])}{T_e
           \wp (g^{-1} \xi_2)} \\
         &\stackrel{\hidewidth\eqref{eq:AdVomQuotienten}\hidewidth}{=}
         \dPaar{\qJ[2]([c_1])}{[g]^{-1} T_e \wp\xi_2}\\
         &= \dPaar{[g]\qJ[2]([c_1])}{T_e \wp \xi_2} \Fdot
      \end{align*}
      Dabei wurde im fünften Schritt verwendet, dass $\lieAlgebra_2$
      $G$"=invariant ist.
   \item %

      Für $\xi_2 \in \lieAlgebra_2$ und $c_1 \in C_1$ gilt
      \begin{align*}
         \dPaar{\qJ[2](\pi_1(c_1))}{T_e\wp \xi_2}
         &= \qRes_1 \dPaar{\qJ}{\xi_2}(c_1) \\
         &= \qRes_1 \dPaar{J}{\xi_2}(c_1) + \dPaar{\alpha}{\xi_2}\\
         &= \kRes_1 \dPaar{J}{\xi_2}(c_1) +
         \dPaar{\mathrm{p}_2(\alpha)}{\xi_2}\\
         &=
         \dPaar{J_2(\pi_1(c_1))}{T_e \wp \xi_2} +
         \dPaar{\chi^{-1}(\mathrm{p}_2(\alpha))}{T_e \wp \xi_2} \Fcom
      \end{align*}
      und somit $\qJ[2] = J_2 + \chi^{-1}(\mathrm{p}_2(\alpha))$. Damit
      ist klar, dass $\star_{\mathrm{red}_1}$ stark invariant ist, da
      Quantenkommutatoren mit konstanten Funktionen verschwinden. Das
      heißt mit $\qJ[2]$ ist auch $J_2 = \qJ[2] -
      \chi^{-1}(\mathrm{p}_2(\alpha))$ eine Quantenimpulsabbildung.
   \end{beweisEnum}
\end{proof}
\begin{bemerkung}
   \label{bem:Quantenhamiltonfunktion}
   \begin{bemerkungEnum}
   \item
      Für die $G/G_1$"=Invarianz von $\starred[1]$ benötigt offenbar
      nicht dass es ein $G_1$"=invariantes Komplement von
      $T_e\gIn\lieAlgebra_1$ gibt.
   \item %
      Für die Aussage, dass $\qJ$ eine Quantenhamiltonfunktion ist,
      kommt man mit der Voraussetzung aus, dass es ein
      $G_1$"=invariantes Komplement von $T_e\gIn\lieAlgebra_1$ gibt, was
      sichergestellt ist, wenn $G/G_1$ reduktiv ist. Um weiter
      reduzieren zu dürfen, benötigt man jedoch eine
      Quantenimpulsabbildung. Diese erhalten wir, falls wir die echt
      stärkere Bedingung, nämlich dass es ein $G$"=invariantes
      Komplement von $T_e\gIn\lieAlgebra_1$ gebe, fordern.
   \end{bemerkungEnum}
\end{bemerkung}

Falls $\star$ stark invariant bezüglich der $G$"=Wirkung ist, kann man
die geometrischen Homotopie"=Daten und damit $\qRes_1$ so wählen, dass
tatsächlich $\kRes_1 J = \qRes_1 J$ gilt.  Um dies zeigen
zu können benötigen wir jedoch noch einige Vorbereitungen.

\begin{lemma}

   \label{lem:LinAlg}
   Seien $U$, $V$, $V'$ und $W$ endlich dimensionale $\mathbb{R}$"=Vektorräume
   mit $\dim V = \dim V'$ und $f \dpA U \oplus V \to W \oplus V'$ eine
   lineare Abbilddung mit $f(U) \subset W$. Dann gilt
   \begin{align*}
      f \text{surjektiv} \implies f\at{U} \dpA U \to W \text{surjektiv}\Fdot
   \end{align*}
\end{lemma}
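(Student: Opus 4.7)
The plan is to decompose $f$ according to the given direct sum structure and exploit the hypothesis $f(U) \subset W$ to get a block-triangular form; from there the result should fall out from the dimension hypothesis.

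More precisely, writing $\pi_W \colon W \oplus V' \to W$ and $\pi_{V'} \colon W \oplus V' \to V'$ for the canonical projections, and similarly $\iota_U, \iota_V$ for the inclusions into $U \oplus V$, I would introduce the four components $A := \pi_W \circ f \circ \iota_U \colon U \to W$, $B := \pi_W \circ f \circ \iota_V \colon V \to W$, $C := \pi_{V'} \circ f \circ \iota_V \colon V \to V'$, and $D := \pi_{V'} \circ f \circ \iota_U \colon U \to V'$. The hypothesis $f(U) \subset W$ says precisely $D = 0$, so
\begin{equation*}
   f(u,v) = (A(u) + B(v),\, C(v)) \quad \text{for all } (u,v) \in U \oplus V.
\end{equation*}
Note $f|_U = A$, so the claim is exactly surjectivity of $A$.

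The first step is to observe that $C$ is surjective: given $v' \in V'$, by surjectivity of $f$ there exists $(u,v)$ with $f(u,v) = (0,v')$, whence $C(v) = v'$. Since $\dim V = \dim V'$ is finite, $C$ is therefore a bijection; in particular $\ker C = 0$.

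The second step concludes: given $w \in W$, surjectivity of $f$ provides $(u,v) \in U \oplus V$ with $f(u,v) = (w,0)$, so $A(u) + B(v) = w$ and $C(v) = 0$. By the injectivity of $C$ just established, $v = 0$, whence $A(u) = w$. Thus $A = f|_U$ is surjective.

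There is no real obstacle here; the argument is just careful bookkeeping of the block decomposition together with the pigeonhole fact that a surjective endomorphism between equidimensional finite-dimensional spaces is bijective. The only thing to watch is that the finite-dimensionality of $V$ and $V'$ (not of $U$ or $W$) is exactly what is needed to upgrade the surjectivity of $C$ to injectivity.
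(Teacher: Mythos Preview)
Your proof is correct. The paper takes a different route: rather than a block decomposition, it argues purely by dimension counting. From surjectivity it writes $W \oplus V' = f(U) + f(V)$, then applies the inclusion--exclusion formula
\[
\dim W + \dim V' = \dim(f(U)) + \dim(f(V)) - \dim(f(U)\cap f(V)) \leq \dim(f(U)) + \dim(f(V)) \leq \dim W + \dim V,
\]
and since $\dim V = \dim V'$ all inequalities are equalities, forcing $\dim f(U) = \dim W$.

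Your approach is arguably cleaner: the block-triangular form makes the mechanism transparent (the $V \to V'$ component $C$ is surjective, hence bijective, hence the off-diagonal contribution $B(v)$ can be killed). As you note at the end, your argument only needs $V$ and $V'$ to be finite-dimensional with equal dimension; $U$ and $W$ could be arbitrary. The paper's dimension-count, as written, manipulates $\dim W$ as a number, so it formally uses more than is necessary.
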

\begin{proof}
   Zunächst bemerken wir, dass es wegen der Surjektivität von $f$ für $x
   \in W\oplus V'$, $u \in U$ und $v \in V$ gibt mit $x = f(u+v) = f(u) +
   f(v)$. Deshalb ist $W \oplus V' = f(U) + f(V)$. Ferner gilt offenbar
   $\dim(f(U)) \leq \dim W$, da nach Voraussetzung $f(U) \subset W$
   ist. Demnach erhält man
   \begin{align*}
      \dim W + \dim V \; \,
      &\stackrel{\hidewidth\text{Vor.}\hidewidth}{=}\; \, \dim W + \dim
      V' =
      \dim( W \oplus V' )\stackrel{f \text{surj.}} = \dim(f(U) + f(V)) \\
      &= \; \, \dim(f(U)) + \dim(f(V)) - \dim(f(U) \cap f(V)) \\
      &\leq \; \, \underbrace{\dim(f(U))}_{\leq \dim W} +
      \underbrace{\dim(f(V))}_{\leq \dim V} \leq \dim W + \dim V \Fdot
   \end{align*}
   Somit gilt in obiger Ungleichungskette sogar überall
   Gleichheit. Insbesondere ist die folgende Beziehung richtig.
   \begin{align*}
      \dim(f(U)) + \dim(f(V)) = \dim W + \dim V \Fdot
   \end{align*}
   Da aber $\dim (f(U)) \leq \dim W$ und $\dim (f(V)) \leq \dim V$,
   ergibt sich insbesondere schon $\dim (f(U)) = \dim W$, woraus die
   Behauptung folgt.
\end{proof}

\begin{proposition}
   \label{prop:Jg2}
   Es gilt $J(C_1) \subset (\lieAlgebra[g]^*)_2$.
\end{proposition}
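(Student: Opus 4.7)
Der Beweis sollte sich als unmittelbare Konsequenz der Definitionen ergeben. Der Plan ist, lediglich die Definitionen von $C_1$, $J_1$ und $(\lieAlgebra^*)_2$ zu entfalten und diese gegeneinanderzuhalten. Ich erwarte daher keinen wirklichen Kern der Schwierigkeit; der einzige Punkt, an dem man sich nicht vertun darf, ist die korrekte Identifikation der beiden Annihilatorräume $(\lieAlgebra^*)_1$ und $(\lieAlgebra^*)_2$ in der gewählten Zerlegung $\lieAlgebra^* = (\lieAlgebra^*)_1 \oplus (\lieAlgebra^*)_2$.

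Konkret würde ich wie folgt vorgehen. Sei $c_1 \in C_1 = J_1^{-1}(0)$ beliebig. Nach Definition von $J_1$ in Gleichung~\eqref{eq:klassischesJ1} bedeutet dies $(T_e\gIn)^*(J(c_1)) = 0 \in \lieAlgebra_1^*$, also gleichbedeutend
\begin{equation*}
   \dPaar{J(c_1)}{T_e\gIn \xi_1} = 0 \quad \text{für alle } \xi_1 \in \lieAlgebra_1 \Fdot
\end{equation*}
Dies besagt genau, dass $J(c_1)$ auf dem Unterraum $T_e\gIn\lieAlgebra_1 \subset \lieAlgebra$ verschwindet. Nach der Definition $(\lieAlgebra^*)_2 = \{\mu \in \lieAlgebra^* \mid \mu\at{T_e\gIn \lieAlgebra_1} = 0\}$ (siehe den Abschnitt direkt vor Definition~\ref{def:ReduktiverHomogenerRaum}) liegt also $J(c_1) \in (\lieAlgebra^*)_2$, woraus unmittelbar $J(C_1) \subset (\lieAlgebra^*)_2$ folgt.

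Da die Aussage rein definitorisch ist, plane ich keine zusätzlichen Hilfssätze zu benutzen; insbesondere wird weder die $G$-Äquivarianz von $J$ noch die Reduktivität von $G/G_1$ oder die Wahl eines $G_1$-invarianten Komplements $\lieAlgebra_2$ benötigt. Die Proposition dient erkennbar als Vorbereitung für spätere Argumente, in denen für $c_1 \in C_1$ der Wert $J(c_1)$ vermöge der Bijektivität von $\chi \dpA \mathrm{Lie}(G/G_1)^* \to (\lieAlgebra^*)_2$ mit einem Element aus $\mathrm{Lie}(G/G_1)^*$ identifiziert werden soll.
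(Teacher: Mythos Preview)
Your proposal is correct and essentially identical to the paper's own proof, which simply computes $\dPaar{J(c_1)}{T_e\gIn \xi_1} = \dPaar{J_1(c_1)}{\xi_1} = 0$ for $c_1 \in C_1$ and $\xi_1 \in \lieAlgebra_1$. Your additional remarks about which hypotheses are (not) needed and the role of the proposition are accurate.
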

\begin{proof}
   Sei $\xi_1 \in \lieAlgebra_1$ und $c_1 \in C_1$. Dann gilt
   offensichtlich
   \begin{align*}
      \dPaar{J(c_1)}{T_e\gIn \xi_1} = \dPaar{J_1(c_1)}{\xi_1} = 0
   \end{align*}
\end{proof}

\begin{lemma}

   \label{lem:JCSubmersiv}
   $J\at{C_1} \dpA C_1 \to (\lieAlgebra[g]^*)_2$ ist submersiv.
\end{lemma}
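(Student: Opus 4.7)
The plan is to reduce the claim to the linear-algebraic Lemma~\ref{lem:LinAlg} via the differential of $J$ at a point $c_1 \in C_1$. Let me fix $c_1 \in C_1$ and identify $T_{J(c_1)}\lieAlgebra^*$ with $\lieAlgebra^* = (\lieAlgebra^*)_1 \oplus (\lieAlgebra^*)_2$, decomposing $T_{c_1}J = \alpha_1 + \alpha_2$ with $\alpha_i \colon T_{c_1}M \to (\lieAlgebra^*)_i$. Since $G$ acts freely on $M$, the restricted $G_1$-action is also free, hence by Lemma~\ref{lem:JSurjektiv} both $J$ and $J_1$ are submersive at $c_1$.

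The first step is to translate submersivity of $J_1$ into surjectivity of $\alpha_1$. By definition $T_{c_1}J_1 = (T_e\gIn)^* \circ T_{c_1}J$, and since $(T_e\gIn)^*$ annihilates $(\lieAlgebra^*)_2$ and restricts to an isomorphism $(\lieAlgebra^*)_1 \to \lieAlgebra_1^*$ (as noted directly before Proposition~\ref{prop:p2Invariant}), one obtains $T_{c_1}J_1 = (T_e\gIn)^*\at{(\lieAlgebra^*)_1} \circ \alpha_1$, and the surjectivity of $T_{c_1}J_1$ forces $\alpha_1$ to be surjective as well. Moreover, because $T_{c_1}\kInE T_{c_1}C_1 = \ker T_{c_1}J_1 = \ker \alpha_1$, the derivative of $J\at{C_1}$ at $c_1$ is (up to the canonical identifications and Proposition~\ref{prop:Jg2}) exactly $\alpha_2\at{\ker \alpha_1} \colon \ker \alpha_1 \to (\lieAlgebra^*)_2$.

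Now the central step is to apply Lemma~\ref{lem:LinAlg} with $U := \ker \alpha_1$, $W := (\lieAlgebra^*)_2$, $V' := (\lieAlgebra^*)_1$, and $V$ a (vector-space) complement of $U$ in $T_{c_1}M$. Surjectivity of $\alpha_1$ gives $\dim V = \dim (\lieAlgebra^*)_1 = \dim V'$, so the dimension hypothesis holds. The map $f := (\alpha_2,\alpha_1) \colon U \oplus V \to W \oplus V'$ is nothing but $T_{c_1}J$ composed with the flip of summands, and it satisfies $f(U) \subset W$ trivially by the definition of $U$. Lemma~\ref{lem:JSurjektiv} already tells us $T_{c_1}J$, and hence $f$, is surjective. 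Lemma~\ref{lem:LinAlg} now delivers surjectivity of $f\at{U} \colon U \to W$, that is, of $\alpha_2\at{\ker \alpha_1}$, which is exactly the required surjectivity of $T_{c_1}(J\at{C_1})$.

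There is no real obstacle here; the argument is a clean combination of the submersivity of $J$ and $J_1$ (supplied by Lemma~\ref{lem:JSurjektiv}) with the splitting $\lieAlgebra^* = (\lieAlgebra^*)_1 \oplus (\lieAlgebra^*)_2$, tailored precisely to the hypotheses of Lemma~\ref{lem:LinAlg}. The only subtlety worth spelling out is that the target of $J\at{C_1}$ has already been restricted to $(\lieAlgebra^*)_2$ by virtue of Proposition~\ref{prop:Jg2}, so that the image of the differential indeed lands in the intended codomain and the notion of submersivity makes sense.
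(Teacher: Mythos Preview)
Your proof is correct and takes essentially the same approach as the paper: both fix $c_1 \in C_1$, set $U = T_{c_1}\kInE T_{c_1}C_1$ (which you characterise as $\ker\alpha_1$), choose a complement $V$, verify the dimension condition $\dim V = \dim G_1 = \dim(\lieAlgebra^*)_1$, and feed $f = T_{c_1}J$ into Lemma~\ref{lem:LinAlg} using surjectivity from Lemma~\ref{lem:JSurjektiv}. The only cosmetic difference is that you justify $f(U)\subset W$ via the kernel description of $U$, while the paper invokes Proposition~\ref{prop:Jg2} directly.
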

\begin{proof}
   Sei $c_1 \in C_1$ beliebig aber fest gewählt. Weiter sei $K_{c_1}$
   ein Komplement von $T_{c_1}\kInE T_{c_1}C_1$ in $T_{c_1}M$, d.\,h.\
   \begin{align*}
      T_{c_1}\kInE T_{c_1}C_1 \oplus K_{c_1} = T_{c_1}M \Fdot
   \end{align*}

   Wir definieren
   \begin{align*}
      U &:= T_{c_1}\kInE T_{c_1}C_1, \quad V := K_{c_1} \Fcom \\
      W &:= T_{J(c_1)}(\lieAlgebra[g]^*)_2, \quad V' :=
      T_{0}(\lieAlgebra[g]^*)_1
   \end{align*}
   und
   \begin{align*}
      f := T_{c_1}J \dpA U \oplus V \to W \oplus V'\Fdot
   \end{align*}
   Es gilt
   \begin{align*}
      \dim K_{c_1} = \dim M - \dim C_1 = \dim G_1 \Fdot
   \end{align*}
   Also ist mit der eben eingeführten Notation $\dim V = \dim V'$. Nach
   Lemma \ref{lem:JSurjektiv} ist $f$ surjektiv. Ferner wissen wir mit
   Proposition \ref{prop:Jg2}, dass $J(C_1) \subset (\lieAlgebra[g]^*)_2$ gilt
   und damit auch
   \begin{align*}
      f(U) &= T_{c_1}J(T_{c_1}\kInE T_{c_1}C_1) = T_{c_1}(J \circ
      \kInE)(T_{c_1}C_1) \subset T_{J(c_1)}(\lieAlgebra[g]^*)_2 = W\Fdot
   \end{align*}
   Somit können wir Lemma \ref{lem:LinAlg} anwenden und erhalten
   \begin{align*}
      \im (T_{c_1}J\at{C_1}) = f(U) = W =
      T_{J(c_1)}(\lieAlgebra[g]^*)_2\Fdot
   \end{align*}
\end{proof}

Das folgende Lemma stellt eine interessante Verallgemeinerung des
klassischen Tubensatzes (vgl.\ Def. \ref{def:Inv.Tubenumgebung})
dar. Berücksichtigt man keine Gruppenwirkungen, so ist der Satz in der
Literatur wohlbekannt, insbesondere im Kontext von Kontrolldaten
stratifizierter Räume. Meist wird für den Beweis auf die Arbeit von
Mather \cite[Prop. 6.2]{mather:1970} verwiesen. Der dort geführte Beweis
wirkt jedoch nicht besonders überzeugend. Zwei andere Beweise finden
sich in der Arbeit von Pflaum \cite[3.5.3]{pflaum:2000} und in
\cite[Ch. II. Thm. 1.6]{gibson:1976}. Letzterer scheint die dem Problem
angemessene Sprache zu verwenden und eignet sich gut, den Satz auch
unter zusätzlicher Berücksichtigung von Gruppenwirkungen zu formulieren
und zu beweisen.

\begin{lemma}[kompatible $G$-invariante Tubenumgebung]
   \label{lem:kompatibleTuben}
   Seien $M$ und $N$ Mannigfaltigkeiten und $G$ eine Lie"=Gruppe, die auf
   $M$ und $N$ wirke. Die Wirkung auf $M$ sei eigentlich. Weiter sei $C
   \subset M$ eine $G$"=invariante Untermannigfaltigkeit und $f\dpA M \to
   N$ eine $G$"=äquivariante, glatte Abbildung, so dass $f\at{C}\dpA C
   \to N$ submersiv ist. Dann gibt es eine $G$"=invariante Tubenumgebung
   $(\pi\dpA E \to C,U_E,U_M,\tau)$ von $C$ in $M$ mit $f(r(p)) = f(p)$
   für alle $p \in U_M$. Dabei bezeichnet $r := \pi \circ \tau^{-1} \dpA
   U_M \to C$ die induzierte $G$"=äquivariante Retraktion.
\end{lemma}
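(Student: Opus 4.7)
The plan is to mimic the classical non-equivariant construction (as in Gibson or Pflaum), carefully introducing $G$-invariance at each step using the properness of the $G$-action on $M$ and the equivariance of $f$. First I would reduce to the case where $f$ itself is a submersion on a neighborhood of $C$: since $f|_C$ is submersive, the composition $T_cf \circ T_c\iota$ is surjective for every $c \in C$, hence so is $T_cf$; openness of the submersion condition then gives a $G$-invariant open neighborhood of $C$ on which $f$ is a submersion, and we may restrict to this neighborhood. In particular $V := \ker(Tf) \subset TM$ is a $G$-invariant vector subbundle (using that $f$ is $G$-equivariant).

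Next I would construct a $G$-invariant complement $E$ of $TC$ in $TM|_C$ which additionally lies in $V|_C$. Submersivity of $f|_C$ gives $TC + V|_C = TM|_C$, so it suffices to take $E$ to be any $G$-invariant complement of $TC \cap V|_C$ inside $V|_C$. Such a complement exists because the $G$-action is proper: by Lemma \ref{kor:invarianteFasermetrik} (the invariant fiber metric construction used elsewhere in the thesis) there is a $G$-invariant Riemannian fiber metric on $V$, and we take the orthogonal complement within $V|_C$. Then $TM|_C = TC \oplus E$ and $E \subset \ker(Tf)|_C$, both decompositions being $G$-invariant.

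The main step is to build a $G$-equivariant exponential-like map $\tau \colon E \supset U_E \to M$ whose images lie in the fibers of $f$. The plan is to produce a $G$-invariant \emph{vertical} spray $X^v$ on $V$, i.e.\ a $G$-equivariant second-order vector field on $M$, defined on the subset $V \subset TM$, whose integral curves stay within single fibers of $f$. Concretely, the $G$-invariant fiber metric on $V$ restricts on each fiber $f^{-1}(n)$ to a Riemannian metric on $f^{-1}(n)$, giving a Levi-Civita spray there; these assemble into the desired vertical spray $X^v$, and its $G$-invariance follows by the same averaging/integration argument as in Theorem \ref{satz:invarianterSpray} (which provides $G$-invariant sprays under proper actions). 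Since $E \subset V|_C$, the exponential $\exp^{X^v}$ is defined on a neighborhood $U_E$ of the zero section of $E$; as in the standard tubular neighborhood theorem, its differential along the zero section is the identity on $TC \oplus E = TM|_C$, so after shrinking $U_E$ (invariantly, using $G$-invariant fiber norms as in Proposition \ref{lem:1Ball}) we obtain a $G$-equivariant diffeomorphism $\tau$ onto a $G$-invariant open neighborhood $U_M$ of $C$.

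Finally the compatibility $f\circ r = f$ on $U_M$ is automatic: for $e \in E_c \subset \ker(T_cf)$, the curve $t \mapsto \tau(te)$ is an integral curve of $X^v$ with initial velocity tangent to the fiber $f^{-1}(f(c))$, hence stays in that fiber, so $f(\tau(e)) = f(c) = f(\pi(e))$; applying this to $e = \tau^{-1}(p)$ yields $f(r(p)) = f(p)$. The hard part of this plan is the construction of the fiber-respecting $G$-invariant spray $X^v$: one must verify that the fiberwise Levi-Civita sprays patch together into a globally defined, smooth, second-order vector field on $M$, and that the averaging/integration over $G$ which produces invariance respects the vertical structure (i.e.\ preserves the property of being tangent to $V$). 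Once this technical point is handled, the rest of the argument proceeds exactly as in the classical tubular neighborhood theorem with the obvious invariance bookkeeping.
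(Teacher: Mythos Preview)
Your proposal is correct and follows the paper's overall strategy: reduce to a neighborhood where $\ker Tf$ is a subbundle, choose a $G$-invariant complement $E$ of $TC$ inside $\ker Tf|_C$, build a $G$-invariant spray whose flow preserves $\ker Tf$, and use its exponential as the tubular map. The compatibility $f\circ r = f$ then drops out exactly as you describe.

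The one genuine difference is in how the vertical spray is produced. The paper does \emph{not} use fiberwise Levi--Civita connections. Instead it observes that $\ker Tf$ is an integrable distribution, applies Frobenius to obtain flat foliation charts, writes down in each such chart the trivial spray $\xi_U = v^i\,\partial/\partial q^i$ (which manifestly satisfies $\xi_U(\ker Tf)\subset T(\ker Tf)$), patches with a partition of unity, and \emph{then} averages over $G$ via Satz~\ref{satz:invarianterSpray} (which is formulated precisely so that the averaging preserves the condition $\xi(E)\subset TE$ for a $G$-invariant subbundle $E\subset TM$). This sidesteps exactly the ``hard part'' you flagged: smoothness of the assembled spray in the transverse directions is immediate in Frobenius coordinates, and there is no need to verify that Christoffel symbols of the fiberwise metrics depend smoothly on the base point. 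Your Levi--Civita approach would also work (and has the pleasant feature that starting from a $G$-invariant metric gives an automatically $G$-invariant spray without a separate averaging step), but carrying out the smoothness verification essentially forces you through the same Frobenius charts anyway. So the paper's route is a bit more direct for this particular statement.
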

Der Beweis von Lemma \ref{lem:kompatibleTuben} befindet sich im Anhang,
siehe Satz \ref{satz:GkompatibleTuben}.

\begin{definition}
   \label{def:kompatibleTube}
   Eine Tubenumgebung wie in Lemma \ref{lem:kompatibleTuben} heißt auch
   \neuerBegriff{mit $f$ kompatible $G$"=invariante Tubenumgebung}.
\end{definition}

\begin{lemma}
   \label{lem:NormalformJ}
   Es gibt eine $G_1$"=invariante \tn{gute} Tubenumgebung $\Psi_1\dpA
   U_1 \to V_1 \subset C_1 \times \lieAlgebra[g]_1^*$ von $C_1$, so dass
   $\dPaar{J(\Psi^{-1}_1(c_1,\mu_1))}{\xi_2}$ für alle $c_1 \in C_1$ und
   $\xi_2 \in \lieAlgebra[g]_2$ unabhängig von $\mu_1 \in
   \projFaktor_2(V_1)$ ist. Dabei bezeichnet $\projFaktor_2 \dpA C_1
   \times \lieAlgebra[g]^*_1 \to \lieAlgebra[g]^*_1$ die Projektion auf
   den zweiten Faktor. Ist $\lieAlgebra[g]_2$ $G$"=invariant, so kann
   $\Psi_1$ sogar $G$"=äquivariant gewählt werden.
\end{lemma}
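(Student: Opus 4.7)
The plan is to combine the compatible tube theorem from Lemma~\ref{lem:kompatibleTuben} with the construction of \tn{good} tubes from Satz~\ref{satz:CxgA} (Korollar~\ref{kor:Cxg}). The key observation is that the desired condition can be reformulated in terms of the retraction induced by the tube: for a \tn{good} tube $\Psi_1\colon U_1 \to V_1 \subset C_1 \times \lieAlgebra_1^*$ with retraction $r_1 := \projFaktor_1 \circ \Psi_1$ we have $r_1(\Psi_1^{-1}(c_1,\mu_1)) = c_1$, and since by Proposition~\ref{prop:Jg2} we have $J(c_1) \in (\lieAlgebra^*)_2$, the claim $\dPaar{J(\Psi_1^{-1}(c_1,\mu_1))}{\xi_2} = \dPaar{J(c_1)}{\xi_2}$ for $\xi_2 \in \lieAlgebra_2$ is equivalent to the compatibility $\mathrm{p}_2 \circ J \circ r_1 = \mathrm{p}_2 \circ J\at{U_1}$, where $\mathrm{p}_2 \colon \lieAlgebra^* \to (\lieAlgebra^*)_2$ denotes the projection along the decomposition $\lieAlgebra^* = (\lieAlgebra^*)_1 \oplus (\lieAlgebra^*)_2$.

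Concretely, I would first apply Lemma~\ref{lem:kompatibleTuben} with $N = (\lieAlgebra^*)_2$, the $G_1$-action on $M$ (which is proper), the $G_1$-invariant submanifold $C_1 \subset M$, and the map $f := \mathrm{p}_2 \circ J \colon M \to (\lieAlgebra^*)_2$. The hypothesis that $f\at{C_1}$ be submersive is exactly Lemma~\ref{lem:JCSubmersiv}, using Proposition~\ref{prop:Jg2} so that $f\at{C_1} = J\at{C_1}$. The $G_1$-equivariance of $f$ is clear because $J$ is $G$-equivariant and $\mathrm{p}_2$ is at least $G_1$-equivariant (Proposition~\ref{prop:p2Invariant} applied with $H = G_1$; note that the decomposition is automatically $G_1$-invariant since $\lieAlgebra_2$ was chosen $G_1$-invariant). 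This yields a $G_1$-invariant open neighborhood $U' \supset C_1$ together with a $G_1$-equivariant smooth retraction $r\colon U' \to C_1$ satisfying $\mathrm{p}_2 \circ J \circ r = \mathrm{p}_2 \circ J\at{U'}$.

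Next I feed this retraction into Satz~\ref{satz:CxgA} (with $G$ there replaced by $G_1$, $W = \lieAlgebra_1^*$ and the impulse map $J_1$): the hypothesis that $0$ be a regular value of $J_1$ holds because $G_1$ acts freely on $M$. This produces $G_1$-invariant open sets $U_1 \subset U'$ and $V_1 \subset C_1 \times \lieAlgebra_1^*$, star-shaped in the fibers, and a $G_1$-equivariant diffeomorphism $\Psi_1 \colon U_1 \to V_1$, $u \mapsto (r(u),J_1(u))$, which is a \tn{gute} Tubenabbildung with retraction $r\at{U_1}$. For every $(c_1,\mu_1) \in V_1$ and $\xi_2 \in \lieAlgebra_2$ the computation
\begin{align*}
   \dPaar{J(\Psi_1^{-1}(c_1,\mu_1))}{\xi_2}
   &= \dPaar{\mathrm{p}_2(J(\Psi_1^{-1}(c_1,\mu_1)))}{\xi_2}
   = \dPaar{\mathrm{p}_2(J(r(\Psi_1^{-1}(c_1,\mu_1))))}{\xi_2} \\
   &= \dPaar{\mathrm{p}_2(J(c_1))}{\xi_2} = \dPaar{J(c_1)}{\xi_2}
\end{align*}
then delivers the claimed independence of $\mu_1$.

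For the $G$-equivariant refinement, assume $\lieAlgebra_2$ is $G$-invariant. Then the decomposition $\lieAlgebra^* = (\lieAlgebra^*)_1 \oplus (\lieAlgebra^*)_2$ is $G$-invariant (the $G$-invariance of $(\lieAlgebra^*)_1$ comes from the normality of $G_1$ in $G$, which makes $T_e\gIn \lieAlgebra_1$ $G$-invariant), so Proposition~\ref{prop:p2Invariant} now applies with $H = G$ and $f = \mathrm{p}_2 \circ J$ is $G$-equivariant. Since $G$ acts properly on $M$ and $C_1$ is $G$-invariant, the entire argument above goes through verbatim with $G_1$ replaced by $G$: Lemma~\ref{lem:kompatibleTuben} delivers a $G$-equivariant retraction, and Satz~\ref{satz:CxgA} produces a $G$-equivariant \tn{gute} Tubenabbildung $\Psi_1$ with the same compatibility property. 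The only mildly delicate point is checking that $f\at{C_1}$ remains a submersion and that the star-shapedness in the fibers survives the intersection of the two shrinking procedures, but both are handled directly by the arguments in the statements invoked.
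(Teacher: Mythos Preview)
Your proposal is correct and follows essentially the same route as the paper: reformulate the claim as the compatibility condition $\mathrm{p}_2 \circ J \circ r_1 = \mathrm{p}_2 \circ J\at{U_1}$, then obtain such a retraction from Lemma~\ref{lem:kompatibleTuben} applied to $f = \mathrm{p}_2 \circ J$ (using Lemma~\ref{lem:JCSubmersiv} for submersivity and Proposition~\ref{prop:p2Invariant} for equivariance), and finally upgrade it to a \tn{gute} Tubenabbildung via Satz~\ref{satz:CxgA}. Your write-up is in fact somewhat more explicit than the paper's, which compresses these steps into a single paragraph.
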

\begin{proof}
   Sei $\mathrm{p}_2 \dpA \lieAlgebra[g]^* \to (\lieAlgebra[g]^*)_2$ die
   kanonische Projektion. Da $\lieAlgebra_2$ $G_1$"=invariant ist, ist
   $\mathrm{p}_2$ nach Proposition \ref{prop:p2Invariant}
   $G_1$"=äquivariant und sogar $G$"=äquivariant, falls $\lieAlgebra_2$
   $G$"=invariant ist. Diese Äquivarianzeigenschaften übertragen sich
   entsprechend auf die Abbildung $\mathrm{p}_2 \circ J \dpA M \to
   (\lieAlgebra[g]^*)_2$. Offensichtlich ist die Aussage des Lemmas
   äquivalent dazu, dass es eine $G_1$- (bzw.\ $G$)"=invariante gute
   Tubenumgebung $\Psi_1 \colon U_1 \to V_1 \subset C_1 \times
   \lieAlgebra_1^* $ von $C_1$ gibt, so dass $\mathrm{p}_2 \circ J
   (\Psi^{-1}_1(c_1,\mu_1))$ für alle $c_1 \in C_1$ unabhängig von
   $\mu_1 \in \projFaktor_2(V_1)$ ist. Ist also $r_1 \dpA U_1 \to C_1$
   die von $\Psi_1$ induzierte Retraktion, so ist dies weiter äquivalent
   dazu, dass für alle $p \in U_1$ schon $\mathrm{p}_2 \circ J(p) =
   \mathrm{p}_2 \circ J \circ r_1(p)$ gilt. Da nun aber nach Lemma
   \ref{lem:JCSubmersiv} $\mathrm{p}_2 \circ J \at{C_1} = J\at{C_1} \dpA
   C_1 \to (\lieAlgebra[g]^*)_2$ submersiv ist, folgt die Behauptung
   unmittelbar aus Lemma \ref{lem:kompatibleTuben} und Satz
   \ref{satz:CxgA}.
\end{proof}

\begin{satz}

   \label{satz:klassgleichquantenEinschraufJ}

   Sei $\Psi_1\dpA U_1 \to V_1 \subset C_1 \times \lieAlgebra[g]_1^*$
   wie in Lemma \ref{lem:NormalformJ} $G$"=invariant gewählt und $\qResE$
   die von $\Psi_1$ induzierte Quanteneinschränkung. Dann gilt
   \begin{align}
      \qResE J = \kResE J \Fdot
   \end{align}
\end{satz}
\begin{proof}
   Sei $\{\tilde{e}_i\}_{i = 1, \dots, \dim G_1}$ eine Basis von
   $\lieAlgebra[g]_1$ und $\{e_i\}_{i=\dim G_1 +1, \dots, \dim G}$ eine
   Basis von $\lieAlgebra_2$. Weiter definieren wir für $i \in
   \{1,\dots,\dim G_1\}$ $e_i := T_e\gIn \tilde{e}_i$, womit
   $\{e_i\}_{i=1, \dots, \dim G}$ eine Basis von $\lieAlgebra[g]$
   ist. Seien weiter $\{\tilde{e}^i\}$ bzw.\ $\{e^i\}$ die dualen Basen
   zu $\{\tilde{e}_i\}$ bzw.\ $\{e_i\}$.
   Sei nun $(c_1,\mu_1) \in V_1$, dann gilt

   \begin{align*}
      J(e_i)(\Psi^{-1}_1(c_1,\mu_1)) = J_1(\tilde
      e_i)(\Psi^{-1}_1(c_1,\mu_1)) = \mu_1(\tilde e_i) \quad \text{für $i
        \in \{1,\dots,\dim G_1\}$}
   \end{align*}

und $J(e_i)(\Psi^{-1}_1(c_1,\mu_1))$ ist nach Lemma
\ref{lem:NormalformJ} unabhängig von $\mu_1 \in \lieAlgebra_1^*$ für $i \in
\{\dim G_1 + 1,\dots,\dim G\}$.
   Also erhält man
   \begin{align*}
      \partial_{\tilde{e}^{\alpha}} (J(e_i) \circ \Psi_1^{-1})
      (c_1,\tilde{\mu}_1) =
      \begin{cases}
         \delta^{\alpha}_i &\text{für $i \in \{1,\dots,\dim G_1\}$} \\
         0 &\text{sonst}
      \end{cases} \Fdot
   \end{align*}
   Ist $\hE$ die von $\Psi_1$ induzierte Homotopie, so ergibt sich damit
   \begin{align*}
      \hE J(\xi) = \sum_{\alpha = 1}^{\dim G_1} \xi^{\alpha}
      \tilde{e}_{\alpha} \quad \text{für $\xi = \xi^i e_i \in
        \lieAlgebra$} \Fdot
   \end{align*}
   Damit folgt unmittelbar für alle $\xi \in \lieAlgebra$
   \begin{align*}
      (\qkoszulE - \kkoszulE) \hE J(\xi) = 0
   \end{align*}
   und somit
   \begin{align*}
      \qResE J(\xi) = \frac{\id}{\id + (\qkoszulE - \kkoszulE)\hE}
      J(\xi) &= \kResE J(\xi) + 0 =  \kResE J(\xi) \Fdot
   \end{align*}

\end{proof}

\begin{bemerkung}
   Zusammen mit Satz
   \ref{thm:ReduziertesSternproduktInvariant}~\refitem{item:MitAlpha}
   bedeutet die Aussage von Satz
   \ref{satz:klassgleichquantenEinschraufJ}, dass man  die
   geometrischen Wahlen derart treffen kann, dass im Falle eines stark
   invarianten Sternprodukts das reduzierte Sternprodukt wieder stark
   invariant ist. Siehe dazu auch Bemerkung
   \ref{bem:PhysikalicheBedeutung}. Falls $\star$ zusätzlich Hermitesch
   ist und man $\qJ = J + \frac{1}{2} \I \lambda \Delta $ mit modularer
   Einsform $\Delta$ wählt, so zeigt
   \ref{thm:ReduziertesSternproduktInvariant}~\refitem{item:MitAlpha}
   weiter, dass $\qJ[2] = J_2 + \frac{1}{2} \I \lambda \Delta_2$ gilt,
   wobei $\Delta_2$ die modulare Einsform von $\Lie[G/G_1]$ ist. In der
   Tat, da $\lieAlgebra_2$ $G_1$"=invariant ist, folgt, wie man mit
   Bemerkung \ref{bem:ReduktiverHomogenerRaum} leicht
   nachrechnet $\chi^{-1}(\mathrm{p}_2)(\Delta) = \Delta_2$.
\end{bemerkung}

\subsection{Zweite Reduktionsstufe}
\label{sec:VergleichDerSternprodukte}

Wir wollen nun das Sternprodukt $\starred[1]$ bezüglich der Gruppe
$G/G_1$ weiter reduzieren und das so erhaltene Sternprodukt
$\starred[2]$ mit dem bezüglich der $G$"=Wirkung reduzierten $\starred$
vergleichen. Im Folgenden setzen wir voraus, dass $\lieAlgebra_2$
$G$"=invariant ist, $\prol[1]$ $G$"=äquivariant gewählt ist, also $\qJ[2]$
nach Satz \ref{thm:ReduziertesSternproduktInvariant} eine
$G/G_1$"=äquivariante Quantenimpulsabbildung ist.

\subsubsection{Konstruktion kompatibler geometrischer Prolongationen}
\label{sec:KonstruKompGeomProl}
In diesem Abschnitt konstruieren wir Tubenumgebungen, so dass die davon
induzierten Prolongationen naheliegende Verträglichkeitsbedingungen
erfüllen. Wir wollen im Folgenden solche von bestimmten geometrischen
Homotopie"=Daten für einen Reduktionsschritt wie in Kapitel
\ref{cha:Koszul-Reduktion} notieren und mit einem entsprechenden Index
den Reduktionsschritt andeuten. Wir schreiben also
$\prol$, $\prol_1$, $\prol_2$ usw.

\begingroup
\emergencystretch=0.1em
\begin{proposition}
   \label{prop:KompositionVonRetraktionen}
   Sei $\tilde r \dpA \tilde U \to C$ eine $G$"=äquivariante Retraktion
   von einer $G$"=invarianten, offenen Umgebung $\tilde U$ von $C$ in
   $C_1$ auf $C$ und $r_1 \dpA U_1 \to C_1$ eine $G$"=äquivariante
   Retraktion von einer $G$"=invarianten, offenen Umgebung $U_1$ von $C_1$ in $M$
   auf $C_1$. Sei $U := r_1^{-1}(\tilde U)$. Dann ist $r := \tilde r
   \circ r_1\at{U}\dpA U \to C$ eine $G$"=äquivariante Retraktion von der
   in $M$ offenen, $G$"=invarianten Umgebung $U$ von $C$ auf $C$. Es gilt
   insbesondere die wichtige Verträglichkeitsbedingung
   \begin{align}
    r \circ r_1\at{U} = r \Fdot
   \end{align}
\end{proposition}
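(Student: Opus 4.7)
The plan is to verify each clause in the statement in turn, with all arguments being routine consequences of the retraction and equivariance hypotheses together with the continuity of $r_1$.

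First I will show that $U=r_1^{-1}(\tilde U)$ is a $G$-invariant open neighborhood of $C$ in $M$. Openness is immediate from the continuity of $r_1$ together with the openness of $\tilde U$ in $C_1$ (as the preimage under $r_1$ of an open subset of its codomain). The $G$-invariance follows from the $G$-equivariance of $r_1$ and the $G$-invariance of $\tilde U$: if $p\in U$ and $g\in G$, then $r_1(gp)=g\,r_1(p)\in g\tilde U=\tilde U$. Finally $C\subset U$, since $r_1\big|_{C_1}=\id_{C_1}$ (as $r_1$ is a retraction onto $C_1$) and $C\subset C_1$, hence for $c\in C\subset\tilde U$ we have $r_1(c)=c\in\tilde U$.

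Next I will verify that $r:=\tilde r\circ r_1\big|_U\dpA U\to C$ is a $G$-equivariant retraction. Smoothness is clear as a composition of smooth maps. Equivariance is likewise immediate: for $g\in G$ and $p\in U$,
\begin{align*}
r(gp)=\tilde r(r_1(gp))=\tilde r(g\,r_1(p))=g\,\tilde r(r_1(p))=g\,r(p).
\end{align*}
For $c\in C$ we get $r(c)=\tilde r(r_1(c))=\tilde r(c)=c$, so $r\big|_C=\id_C$.

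It remains to verify the compatibility relation $r\circ r_1\big|_U=r$. For $p\in U$ we have $r_1(p)\in\tilde U\subset C_1$, so in particular $r_1(p)\in U$ (since $r_1(r_1(p))=r_1(p)\in\tilde U$ because $r_1\big|_{C_1}=\id_{C_1}$). Applying $r$ therefore gives
\begin{align*}
(r\circ r_1)(p)=\tilde r\bigl(r_1(r_1(p))\bigr)=\tilde r(r_1(p))=r(p),
\end{align*}
which is the desired identity. There is no substantive obstacle here; the only thing to watch is that $r_1(p)$ actually lies in the domain $U$ of $r$, which is precisely ensured by the idempotency of the retraction $r_1$ on $C_1$.
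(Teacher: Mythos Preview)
Your proof is correct and follows essentially the same approach as the paper. The paper's proof only writes out the compatibility relation $r\circ r_1|_U=\tilde r\circ r_1|_U\circ r_1|_U=\tilde r\circ r_1|_U=r$ and declares ``Der Rest ist klar''; you have carefully supplied exactly what that ``Rest'' consists of, and your explicit check that $r_1(p)\in U$ (via idempotency of $r_1$ on $C_1$) is a nice point of rigor that the paper's notation $r_1|_U\circ r_1|_U$ glosses over.
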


\endgroup

\begin{proof}
   Die Verträglichkeitsbedingung folgt sofort aus der Definition von $r$, in
   der Tat gilt
      \begin{align*}
      r \circ r_1\at{U} = \tilde r \circ r_1\at{U} \circ r_1\at{U} =
      \tilde r \circ r_1\at{U} = r \Fdot
   \end{align*}
   Der Rest ist klar.
\end{proof}

\begin{bemerkung}
   Die Retraktionen $r_1$ und $\tilde r$ in Proposition
   \ref{prop:KompositionVonRetraktionen} könnten insbesondere von zwei
   \tn{guten} Tubenabbildungen $\Psi_1$ und $\tilde \Psi$ herrühren.
   Somit kann man die vorherige Proposition zusammen mit Satz
   \ref{satz:CxgA} so interpretieren, dass die Wahl der Tuben $\Psi_1$ und
   $\tilde \Psi$ eine \tn{gute} Tubenabbildung $\Psi$ von $C$ in $C_1$
   induziert, deren zugehörige Retraktion $r$ die
   Verträglichkeitsbedingung
   \begin{align}
      r \circ r_1\at{U} = r
   \end{align}
   erfüllt.
\end{bemerkung}

Wir wenden uns nun der Konstruktion einer speziellen \tn{guten} Tubenumgebung für
$C_2$ in $\Mred[1]$ zu.

\begin{lemma}
   \label{lem:RetraktionRunterD}
   Sei $\tilde r \dpA C_1 \supset \tilde U \to C$ eine $G$"=äquivariante,
   glatte Retraktion auf $C$. Dann gibt es eine glatte,
   $G/G_1$"=äquivariante Retraktion $r_2 \dpA U_2 \to C$ von der
   $G/G_1$"=invarianten, offenen Umgebung $U_2 := \pi_1(U_2)$ von $C_2$
   in $\Mred[1]$ auf $C_2$, so dass das folgende Diagramm kommutiert.
   \def\tA[#1]{A_{#1}}
\begin{equation}
      \begin{tikzpicture}[baseline=(current
    bounding box.center),description/.style={fill=white,inner sep=2pt}]
         \matrix (m) [matrix of math nodes, row sep=3.0em, column
         sep=3.5em, text height=1.5ex, text depth=0.25ex]
         {
\tilde U & C \\
U_2 & C_2 \\
}; %

 \path[->] (m-1-1) edge node[left] {$\pi_1\at{\tilde U}$} (m-2-1); %
 \path[->] (m-1-2) edge node[auto]{$\pi_1\at{C}$}(m-2-2); %
 \path[->] (m-1-1) edge node[auto]{$\tilde r$}(m-1-2); %
\path[->] (m-2-1) edge node[auto]{$r_2$}(m-2-2); %
 \end{tikzpicture}
\end{equation}
\end{lemma}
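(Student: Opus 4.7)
The plan is to descend $\tilde r$ through the quotient map $\pi_1\colon C_1 \to \Mred[1]$. Since $\tilde r$ is $G$-equivariant it is in particular $G_1$-equivariant, and its domain $\tilde U$ is $G$-invariant (hence $G_1$-invariant), so one is led to the definition
\begin{equation*}
   r_2 \colon U_2 := \pi_1(\tilde U) \to C_2, \qquad r_2(\pi_1(c_1)) := \pi_1(\tilde r(c_1)).
\end{equation*}
The commutativity of the displayed diagram is then built into this definition, and the remaining task is to verify, in order: that $U_2$ is an open, $G/G_1$-invariant neighborhood of $C_2$; that $r_2$ is well-defined, smooth, and takes values in $C_2$; that $r_2$ restricts to the identity on $C_2$; and that $r_2$ is $G/G_1$-equivariant.

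For the first point, the projection $\pi_1$ is the bundle map of the principal $G_1$-bundle $C_1 \to C_1/G_1$ and therefore an open surjective submersion; combined with Diagram \eqref{eq:gedrueckteWirkung} this yields the openness and $G/G_1$-invariance of $U_2$. The inclusion $C_2 \subset U_2$ follows from $C \subset \tilde U$ together with the identity $\pi_1(C) = C_2$, which is precisely the content of Proposition \ref{prop:tildeVarsigmaSurjektiv}. Well-definedness of $r_2$ is then a direct consequence of the $G_1$-equivariance of $\tilde r$: if $c_1' = g_1 c_1$ for some $g_1 \in G_1$, then $\tilde r(c_1') = g_1 \tilde r(c_1)$ and so $\pi_1(\tilde r(c_1')) = \pi_1(\tilde r(c_1))$. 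The image of $r_2$ lies in $C_2$ because $\tilde r(\tilde U) \subset C$ and $\pi_1(C) = C_2$, and smoothness of $r_2$ follows from the universal property of the surjective submersion $\pi_1$ applied to the smooth map $\pi_1 \circ \tilde r$.

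For the retraction property, let $[c_1] \in C_2 \cap U_2$. By Proposition \ref{prop:tildeVarsigmaSurjektiv} there is some $c \in C$ with $\pi_1(c) = [c_1]$, and by $G_1$-invariance of $\tilde U$ we may replace the chosen representative $c_1 \in \tilde U$ by $c$ without changing the class. Since $\tilde r$ fixes $C$ pointwise, $r_2([c]) = \pi_1(\tilde r(c)) = \pi_1(c) = [c]$. Finally, the $G/G_1$-equivariance is an immediate computation from \eqref{eq:gedrueckteWirkungAbk} and the $G$-equivariance of $\tilde r$:
\begin{equation*}
   r_2([g][c_1]) = \pi_1(\tilde r(gc_1)) = \pi_1(g\tilde r(c_1)) = [g]\,\pi_1(\tilde r(c_1)) = [g]\,r_2([c_1]).
\end{equation*}

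No step appears to present a genuine obstacle; the argument is a standard descent through a principal bundle projection. The only non-routine ingredient is the identification $\pi_1(C) = C_2$, which is needed both to ensure $C_2 \subset U_2$ and to guarantee that $r_2$ lands in $C_2$ rather than just in some subset of $\Mred[1]$, and this is supplied by Proposition \ref{prop:tildeVarsigmaSurjektiv}.
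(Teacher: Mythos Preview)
Your proof is correct and is exactly the standard descent argument the paper has in mind; the paper's own proof consists of the single word ``Klar.'' You have simply written out the details that the author deemed obvious.
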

\begin{proof}
   Klar.

\end{proof}

Im Folgenden wollen wir die von $\tilde r$ bzw.\ $r_2$ induzierten guten
Tubenabbildungen mit $\tilde \Psi$ bzw.\ $\Psi_2$ bezeichnen.

\begin{bemerkung}
\label{bem:RetraktionRunterD}
Da in Lemma \ref{lem:RetraktionRunterD} $\tilde U$ insbesondere
$G_1$"=invariant ist, gilt schon $\pi_1^{-1}(U_2) =
\pi_1^{-1}(\pi_1(\tilde U)) = G_1 \tilde U = \tilde U$.
\end{bemerkung}

Wir verwenden nun die oben konstruierten Tubenumgebungen verwenden,
Prolongationen zu bauen, die in noch zu präzisierender Weise kompatibel
sein werden.

Sei $\tilde O$ eine $G$-invariante, offene Umgebung von $C$ in $C_1$ mit
$\abschluss{\tilde O} \subset \tilde U$, $\tilde W := C_1
\setminus \abschluss{\tilde O}$ und $\{\tilde \psi_{\tilde U}, \tilde
\psi_{\tilde W}\}$ eine der $G$"=invarianten, offenen Überdeckung
$\{\tilde U,\tilde W\}$ von $C_1$ untergeordnete glatte $G$"=invariante
Zerlegung der Eins. Weiter sei $O_1$ eine $G$"=invariante, offene
Umgebung von $C_1$ in $M$ mit $\abschluss{O_1} \subset U_1$, $W_1 :=
M\setminus \abschluss{O_1}$ und $\{\psi_{U_1},\psi_{W_1}\}$ eine der
$G$"=invarianten, offenen Überdeckung $\{U_1,W_1\}$ von $M$ untergeordnete
$G$"=invariante, glatte Zerlegung der Eins.

Wir definieren dann

\begin{align}
   \label{eq:defqrol}
   \qrol \dpA \CM[C] &\to \CM[C_1], \quad
   \qrol(f)(x) :=
   \begin{cases}
      \tilde \psi_{\tilde U}(x) f(\tilde r(x)) &\text{für } x \in
      \tilde U \\
      0 &\text{sonst}
   \end{cases} \quad \forall f \in C^\infty(C)\Fdot
\end{align}

Nach Konstruktion gilt
\begin{align}
   \label{eq:qrolEinschr}
   \kjRes \qrol = \id \Fdot
\end{align}

$O := r_1^{-1}(\tilde O) \cap O_1 = r_1\at{O_1}^{-1}(\tilde O)$ ist
offensichtlich offen und $G$"=invariant und es gilt $\abschluss{O}
\subset U := r_1^{-1}(\tilde U)$, denn
\begin{align}
   \label{eq:RechungMitAbschlussVonO}
   \abschluss{O} &= \abschluss{r_1^{-1}(\tilde O) \cap O_1}
   \subset \abschluss{r_1^{-1}(\tilde O)} \cap \abschluss{O_1}
   \notag\\
   &\subset \abschluss{r_1^{-1}(\tilde O)} \cap U_1 \subset
   r_1^{-1}(\abschluss{\tilde O}) \cap U_1 \subset r_1^{-1}(\tilde U)
   \cap U_1 = U.
\end{align}

Bei der vorletzten Inklusion wurde die Stetigkeit von $r_1$
verwendet. Man beachte, dass der Schnitt des Abschlusses bezüglich $M$ einer
Teilmenge von $U_1$ mit $U_1$ mit dem Abschluss bezüglich der
Unterraumtopologie in $U_1$ derselben Menge und damit auch mit diesem
geschnitten mit $U_1$ übereinstimmt. Für diese simple topologische
Feinheit sei auf \cite[Prop. 2.1.1]{engelking:1989} verwiesen.

Die Abbildung
\begin{align}
   \psi_U :=
   \begin{cases}
      \psi_{U_1}\at{U} \cdot (\tilde \psi_{\tilde U} \circ r_1\at{U}) &
      \text{auf
        $U$} \\
      0 &\text{sonst}
   \end{cases}
\end{align}
ist nach Konstruktion glatt und $G$"=invariant.  %

Weiter gilt $\supp \psi_U \subset U$, denn zum einen folgt aus der
Stetigkeit von $r_1$
\begin{align*}
   \supp \psi_{U} \cap U_1 = \abschluss{{\carr \psi_{U}}} \cap U_1 \subset
   \abschluss{r_1^{-1}(\carr \tilde \psi_{\tilde U})} \cap U_1 \subset r_1^{-1}(\supp
   \tilde \psi_{\tilde U}) \subset r_1^{-1}(\tilde U) = U
\end{align*}
und zum anderen gilt offenbar $\supp \psi_{U} \subset \supp \psi_{U_1}
\subset U_1$.  Sei $W := M \setminus \abschluss{O}$ und $\psi_W := 1 -
\psi_U$, dann gilt $\supp(\psi_W) \subset W$. Dies sieht man wie folgt ein. Die Menge $\tilde O' :=
C_1\setminus \supp \tilde \psi_{\tilde W}$ ist eine offene Umgebung von
$C$ in $C_1$ mit $\abschluss{\tilde O} \subset \tilde O' \subset \tilde
U$ und $\tilde \psi_{\tilde U}\at{\tilde O'} = 1$. Ebenso ist $O_1' :=
M\setminus \supp \psi_{W_1}$ eine offene Umgebung von $C_1$ in $M$ mit
$\abschluss{O_1} \subset O_1' \subset U_1$ und $\psi_{U_1}\at{O_1'} =
1$. Dann ist $O' := r_1^{-1}(\tilde O') \cap O_1' \subset U$ eine
offene Umgebung von $C$ in $M$ mit $\abschluss{O} \subset O' \subset U$
und $\psi_U\at{O'} = 1$, wobei man die erste Inklusion mit
Rechnung \eqref{eq:RechungMitAbschlussVonO} einsieht. Somit ist
\begin{align*}
   \supp \psi_{W} = \abschluss{\carr \psi_W} \subset
   \abschluss{M\setminus{O'}} = M \setminus{O'} \subset M \setminus
   \abschluss{O} = W
\end{align*}
klar.

Also ist $\{\psi_U,\psi_W\}$ eine der offenen Überdeckung $\{U,W\}$ von $M$
untergeordnete glatte, $G$"=invariante Zerlegung der Eins.

Wir betrachten nun die von dieser Zerlegung der Eins und der
Tubenumgebung $\Psi$

induzierte Prolongationsabbildung $\prol$ sowie die von der Zerlegung
der Eins $\{\psi_{U_1},\psi_{W_1}\}$ und der Tubenabbildung $\Psi_1$
induzierte Prolongationsabbildung $\prol[1]$. Die folgende Proposition
zeigt, dass die so konstruierten geometrischen Prolongationsabbildungen
auf naheliegende Weise verträglich sind.

\begin{proposition}
   \label{prop:prop1undprolKompatibel}
   Es gilt
   \begin{align}
      \prol = \prol[1] \kResE \prol \Fdot
   \end{align}
\end{proposition}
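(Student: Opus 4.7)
The plan is to verify the identity by unwinding the explicit pointwise formulas for all prolongations involved and exploiting the product structure of the cutoff $\psi_U$ together with the factorisation $r = \tilde r \circ r_1|_U$. The natural intermediate object is the prolongation $\qrol \colon C^\infty(C) \to C^\infty(C_1)$ built from $\tilde r$ and $\tilde \psi_{\tilde U}$; I expect to prove the two factorisations
\begin{equation*}
  \kResE \circ \prol = \qrol \quad \text{and} \quad \prol[1] \circ \qrol = \prol,
\end{equation*}
which together give the claim.

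For the first identity, let $c_1 \in C_1$ and $f \in C^\infty(C)$. Since $C_1 \subset O_1$ and $\supp \psi_{W_1} \subset W_1 = M\setminus\abschluss{O_1}$, the partition-of-unity property yields $\psi_{U_1}(c_1) = 1$. As $r_1|_{C_1} = \id$, the point $c_1$ lies in $U = r_1^{-1}(\tilde U)$ precisely when $c_1 \in \tilde U$. If $c_1 \in \tilde U$, then
\begin{equation*}
  (\kResE \prol f)(c_1) = \psi_{U_1}(c_1) \cdot \tilde \psi_{\tilde U}(r_1(c_1)) \cdot f(\tilde r(r_1(c_1))) = \tilde \psi_{\tilde U}(c_1) \, f(\tilde r(c_1)) = \qrol(f)(c_1),
\end{equation*}
while if $c_1 \notin \tilde U$, both sides vanish since $\supp \tilde \psi_{\tilde U} \subset \tilde U$.

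For the second identity, let $p \in M$. If $p \notin U_1$, then also $p \notin U \subset U_1$, so $\prol(f)(p) = 0 = \prol[1](\qrol(f))(p)$. If $p \in U_1$, then
\begin{equation*}
  \prol[1](\qrol(f))(p) = \psi_{U_1}(p) \cdot \qrol(f)(r_1(p)),
\end{equation*}
and using $\qrol(f) = \tilde \psi_{\tilde U} \cdot (f \circ \tilde r)$ extended by zero outside $\tilde U$, this equals $\psi_{U_1}(p) \cdot \tilde \psi_{\tilde U}(r_1(p)) \cdot f(\tilde r(r_1(p)))$ when $r_1(p) \in \tilde U$, i.e.\ when $p \in U$, and vanishes otherwise. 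By definition of $\psi_U$ and $r = \tilde r \circ r_1|_U$, this is exactly $\prol(f)(p)$.

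No real obstacle is expected: the whole proof is a careful bookkeeping of supports. The only point that deserves attention is ensuring that whenever $\tilde r$ or $r_1$ is evaluated outside its domain, the accompanying cutoff already vanishes so that the products are unambiguously defined and equal to zero; this is guaranteed by the support inclusions $\supp \tilde \psi_{\tilde U} \subset \tilde U$ and $\supp \psi_{U_1} \subset U_1$, combined with $U = r_1^{-1}(\tilde U)$. Combining the two identities yields $\prol = \prol[1] \circ \qrol = \prol[1] \circ \kResE \circ \prol$, as claimed.
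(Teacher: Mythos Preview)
Your proof is correct and follows essentially the same route as the paper: both establish the pointwise identity $\prol[1] \circ \qrol = \prol$ via the factorisations $\psi_U = \psi_{U_1}\cdot(\tilde\psi_{\tilde U}\circ r_1)$ and $r = \tilde r \circ r_1|_U$. The only cosmetic difference is the final packaging: you prove $\kResE \circ \prol = \qrol$ directly (using $\psi_{U_1}|_{C_1}=1$ and $r_1|_{C_1}=\id$), whereas the paper instead inserts $\kResE\prol[1] = \id$ into $\prol = \prol[1]\qrol$ to obtain $\prol = \prol[1]\kResE\prol[1]\qrol = \prol[1]\kResE\prol$.
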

\begin{proof}
   Wir zeigen  $\prol = \prol[1] \qrol$, woraus leicht das
   Behauptete folgt, denn
   \begin{align*}
      \prol = \prol[1] \qrol = \prol[1] \kResE \prol[1] \qrol = \prol[1]
      \kResE \prol \Fdot
   \end{align*}
   Sei also $x \in U = r_1^{-1}(\tilde U) \subset U_1$ und $f \in
   C^\infty(C)$. Dann gilt

   \begin{align*}
      \prol[1](\qrol(f))(x) &\stackrel{x \in U_1}{=} \psi_{U_1}(x)
      \qrol(f)(r_1(x)) \stackrel{r_1(x) \in \tilde{U}}{=} \psi_{U_1}(x) \cdot
      \psi_{\tilde U}(r_1(x)) \cdot
      f(\tilde{r}(r_1(x)))\\
      &\stackrel{\phantom{x \in U_1}}{=} \psi_{U}(x) \cdot f(r(x)) =
      (\prol f)(x) \Fdot
   \end{align*}
   Falls $x \notin U$ gilt offensichtlich
   \begin{align*}
      \prol[1](\qrol f)(x) = 0 = (\prol f)(x) \Fdot
   \end{align*}
\end{proof}

Die Aussage von Proposition \ref{prop:prop1undprolKompatibel} ist für
eine beliebige Wahl von $\tilde O$ und $\{\tilde \psi_{\tilde U},\tilde
\psi_{\tilde W}\}$ richtig. Diese Wahlfreiheit wollen wir nun ausnutzen
und konstruieren $\tilde O$ und die Zerlegung der Eins $\{\tilde
\psi_{\tilde U},\tilde \psi_{\tilde W}\}$ ausgehend von einer beliebig
gewählten $G/G_1$"=invarianten, offenen Teilmenge $O_2$ von $C_2$ in
$\Mred[1]$ mit $\abschluss{O_2} \subset U_2$ und einer beliebig
gewählten $G/G_1$"=invarianten glatten Zerlegung der Eins
$\{\psi_{U_2},\psi_{W_2}\}$, die $\{U_2,W_2\}$ untergeordnet sei, wobei
$W_2 := \Mred[1] \setminus \abschluss{O_2}$. Dann definieren wir die
offene Menge $\tilde O := \pi_1^{-1}(O_2) \subset C_1$ und $\tilde W :=
C_1 \setminus \abschluss{\tilde O}$. Wegen der Stetigkeit von $\pi_1$
gilt offenbar $\abschluss{\tilde O} = \abschluss{\pi_1^{-1}(O_2)}
\subset \pi_1^{-1}(\abschluss{O_2}) \subset \pi_1^{-1}(U_2) = \tilde U$,
wobei sich die letzte Gleichheit aus Bemerkung
\ref{bem:RetraktionRunterD} ergibt. Weiter ist $\tilde O$ per Definition der
Quotientenwirkung $G$"=invariant. Nun definieren wir $\tilde \psi_{\tilde
  U} := \pi_1^* \psi_{U_2}$ und $\tilde \psi_{\tilde W} := \pi_1^*
\psi_{W_2}$.

\begingroup
\emergencystretch=0.8em
\begin{proposition}
   \label{prop:zerlderEinsUntenHochgezogen}
   $\{\tilde \psi_{\tilde U},\tilde \psi_{\tilde W}\}$ ist eine der
   offenen, $G$"=invarianten Überdeckung $\{\tilde U,\tilde W\}$ von
   $C_1$ untergeordnete $G$"=invariante, glatte Zerlegung der Eins.
\end{proposition}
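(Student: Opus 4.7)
The plan is to verify each defining property of a partition of unity in turn; all of them reduce to routine continuity/invariance statements about $\pi_1$, together with the fact that $\{\psi_{U_2},\psi_{W_2}\}$ is already a partition of unity on $\Mred[1]$.

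First, I would note that $\tilde\psi_{\tilde U}=\pi_1^*\psi_{U_2}$ and $\tilde\psi_{\tilde W}=\pi_1^*\psi_{W_2}$ are smooth, since $\pi_1\colon C_1\to\Mred[1]$ is smooth (indeed a surjective submersion). The partition-of-unity identity is immediate: since $\pi_1^*$ is an algebra homomorphism,
\begin{equation*}
   \tilde\psi_{\tilde U}+\tilde\psi_{\tilde W}
   =\pi_1^*(\psi_{U_2}+\psi_{W_2})=\pi_1^*(1)=1\Fdot
\end{equation*}

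Next I would verify $G$-invariance. By the very definition of the $G/G_1$-action on $\Mred[1]$ (cf.\ diagram \eqref{eq:gedrueckteWirkung} and Gleichung \eqref{eq:gedrueckteWirkungAbk}) we have $\pi_1(gc_1)=[g]\pi_1(c_1)$ for all $g\in G$, $c_1\in C_1$, so for any $G/G_1$-invariant $\psi\in\CM[\Mred[1]]$ the pullback $\pi_1^*\psi$ satisfies $(\pi_1^*\psi)(gc_1)=\psi([g]\pi_1(c_1))=\psi(\pi_1(c_1))=(\pi_1^*\psi)(c_1)$; this applies to both $\psi_{U_2}$ and $\psi_{W_2}$.

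The key step is the subordination, and here the only small point is the inclusion of supports under pullback by a continuous map. Using $\carr(\pi_1^*\psi)=\pi_1^{-1}(\carr\psi)$ and continuity of $\pi_1$,
\begin{equation*}
   \supp(\pi_1^*\psi_{U_2})=\abschluss{\pi_1^{-1}(\carr\psi_{U_2})}
   \subset\pi_1^{-1}(\abschluss{\carr\psi_{U_2}})
   =\pi_1^{-1}(\supp\psi_{U_2})\subset\pi_1^{-1}(U_2)=\tilde U,
\end{equation*}
where the last equality uses Bemerkung~\ref{bem:RetraktionRunterD}. Analogously
\begin{equation*}
   \supp(\pi_1^*\psi_{W_2})\subset\pi_1^{-1}(\supp\psi_{W_2})\subset\pi_1^{-1}(W_2)
   =C_1\setminus\pi_1^{-1}(\abschluss{O_2})
   \subset C_1\setminus\abschluss{\pi_1^{-1}(O_2)}=\tilde W,
\end{equation*}
again using continuity of $\pi_1$ in the form $\abschluss{\pi_1^{-1}(O_2)}\subset\pi_1^{-1}(\abschluss{O_2})$. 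Local finiteness is automatic because the cover has only two members. There is no real obstacle: the only thing one has to be careful about is not to confuse $\supp$ with $\carr$ and to apply continuity in the correct direction when passing to closures.
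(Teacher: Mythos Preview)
Your proof is correct and follows essentially the same approach as the paper's: sum to one via $\pi_1^*$, $G$-invariance from the $G/G_1$-equivariance of $\pi_1$, and support inclusions from continuity of $\pi_1$ together with $\pi_1^{-1}(U_2)=\tilde U$. Your handling of $\supp\tilde\psi_{\tilde W}\subset\tilde W$ is in fact slightly more direct than the paper's, which introduces an auxiliary open set $O_2'=\Mred[1]\setminus\supp\psi_{W_2}$ before reaching the same conclusion.
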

\endgroup
\begin{proof}
   Offensichtlich gilt
   \begin{align*}
      \tilde \psi_{\tilde U} + \tilde \psi_{\tilde W} = \pi_1^*
      \psi_{U_2} +
      \pi_1^* \psi_{W_2} = \pi_1^*(\psi_{U_2} + \psi_{W_2}) = 1
   \end{align*}
   und
   \begin{align*}
      \tilde \psi_{\tilde U}(g c_1) &= \pi_1^* \psi_{U_2}(gc_1) =
      \psi_{U_2}(\pi_1(gc_1)) \\
      &= \psi_{U_2}(\wp(g) \pi_1(c_1)) = \psi_{U_2}(\pi_1(c_1)) = \tilde
      \psi_{\tilde U}(c_1) \quad \text{für $c_1 \in C_1$ und $g \in G$} \Fdot
   \end{align*}
   Entsprechend folgt die $G$"=Invarianz von $\tilde \psi_{\tilde
     W}$. Ferner sieht man
   \begin{align*}
      \supp \tilde \psi_{\tilde U} = \supp(\pi_1^* \psi_{U_2}) =
      \abschluss{\pi_1^{-1}(\carr \psi_{U_2})} \subset \pi_1^{-1}(\supp
      \psi_{U_2}) \subset \pi_1^{-1}(U_2) = \tilde U \Fdot
   \end{align*}
   Schließlich gilt auch
   \begin{align*}
      \supp \tilde \psi_{\tilde W} \subset  \tilde W \Fdot
   \end{align*}
   Dies sieht man wie folgt ein. Die Menge $O_2' := \Mred[1]\setminus
   \supp \psi_{W_2}$ ist eine offene Umgebung von $C_2$ in $\Mred[1]$
   mit $\abschluss{O_2} \subset O_2' \subset U_2$ und $\supp \psi_{W_2}
   = \Mred[1]\setminus O_2'$. Weiter gilt $\abschluss{\tilde O} \subset
   \pi_1^{-1}(\abschluss{O_2}) \subset \pi_1^{-1}(O_2')$.
   Dann erhält man
   \begin{align*}
      \supp \tilde \psi_{\tilde W} \subset \pi_1^{-1}(\supp \psi_{W_2}) =
      \pi_1^{-1}(\Mred[1]\setminus O_2') = \pi_1^{-1}(\Mred[1])\setminus
      \pi_1^{-1}(O_2') \subset C_1\setminus {\abschluss{\tilde O}} =
      \tilde W \Fdot
   \end{align*}
   Also ist $\{\tilde \psi_{\tilde U},\tilde \psi_{\tilde W}\}$ eine der
   offenen, $G$"=invarianten Überdeckung $\{\tilde U,\tilde W\}$ von
   $\tilde U$ untergeordnete $G$"=invariante, glatte Zerlegung der Eins,
   wie gewünscht.
\end{proof}

Wir können nun einige Verträglichkeitsbedingungen zwischen den
Prolongationsabbildungen formulieren.
\begin{proposition}

   \label{prop:vertrZerlDerEins}
   Sind die Zerlegungen der Eins wie oben beschrieben konstruiert, so
   gilt
   \begin{align}
      \label{eq:vertrZerlDerEins}
      \psi_U\at{\tilde U} = \psi_{U_2} \circ \pi_1\at{\tilde U}  \Fdot
   \end{align}
\end{proposition}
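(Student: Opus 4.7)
The plan is to verify the equality pointwise on $\tilde U$, viewing $\tilde U \subset C_1 \subset M$ via the inclusions. The key geometric facts to exploit are that $r_1$ is a retraction onto $C_1$ (so it is the identity on $\tilde U \subset C_1$), and that $\psi_{U_1}$ equals $1$ on $C_1$.

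First I would check the obvious inclusion $\tilde U \subset U$, which holds because $r_1(\tilde U) = \tilde U$ (as $r_1$ retracts onto $C_1$ and $\tilde U \subset C_1$), so $\tilde U \subset r_1^{-1}(\tilde U) = U$. Hence for any $c_1 \in \tilde U$ the defining formula for $\psi_U$ applies and gives
\[
\psi_U(c_1) = \psi_{U_1}(c_1)\cdot\tilde\psi_{\tilde U}(r_1(c_1)) = \psi_{U_1}(c_1)\cdot\tilde\psi_{\tilde U}(c_1).
\]

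Next I would argue $\psi_{U_1}\at{C_1}\equiv 1$. Since $\{\psi_{U_1},\psi_{W_1}\}$ is subordinate to $\{U_1,W_1\}$, we have $\supp\psi_{W_1}\subset W_1 = M\setminus\abschluss{O_1}$. As $C_1\subset O_1\subset\abschluss{O_1}$, this forces $\psi_{W_1}\at{C_1} = 0$ and therefore $\psi_{U_1}\at{C_1} = 1$. Inserting this into the previous display and using the definition $\tilde\psi_{\tilde U} := \pi_1^*\psi_{U_2}$ yields
\[
\psi_U(c_1) = \tilde\psi_{\tilde U}(c_1) = (\pi_1^*\psi_{U_2})(c_1) = \psi_{U_2}(\pi_1(c_1)),
\]
which is the claimed identity on $\tilde U$.

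This is essentially the only content of the proposition, so there is no real obstacle; everything follows from unwinding the constructions of $\psi_U$, $\tilde\psi_{\tilde U}$ and the subordination property of the partition of unity $\{\psi_{U_1},\psi_{W_1}\}$. The one point one must not overlook is the verification that the formula defining $\psi_U$ is the relevant branch on $\tilde U$, which is guaranteed by $\tilde U\subset U$ as shown above.
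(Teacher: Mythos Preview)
Your proof is correct and follows the same route as the paper's: unwind the definition of $\psi_U$ on a point $c_1 \in \tilde U$, use that $r_1$ fixes $C_1$, and use $\psi_{U_1}\at{C_1}\equiv 1$ together with $\tilde\psi_{\tilde U}=\pi_1^*\psi_{U_2}$. You actually supply more detail than the paper, which simply writes the chain of equalities and notes $r_1(c_1)=c_1$ without spelling out why $\psi_{U_1}(c_1)=1$ or why $\tilde U\subset U$.
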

\begin{proof}
   Für $c_1 \in \tilde U = U \cap C_1$ sieht man
   \begin{align*}
      \psi_U(c_1) = \psi_{U_1}(c_1) \cdot \tilde \psi_{\tilde
        U}(\underbrace{r_1(c_1)}_{= c_1}) = \tilde \psi_{\tilde U} (c_1)
      = \psi_{U_2}(\pi_1(c_1)) \Fdot
   \end{align*}

\end{proof}

\begin{satz}
   \label{satz:prol2Komp}
   Seien die Zerlegungen der Eins und die Tubenabbildungen wie oben
   kompatibel gewählt. Dann gelten für die induzierten
   Prolongationsabbildungen folgende Beziehungen.
   \begin{satzEnum}
   \item %
      \label{item:prol2Kompa} %
      $\pi_1^* \prol[2] = \kResE \prol \tilde \varsigma^*$.
   \item%
      \label{item:prol2Komp}%
      $\prol[1]\pi_1^*\prol[2] \pi_2^* = \prol \pi^* \varsigma^*$.
   \item %
      \label{item:hprolKomp2} %
      $\h \prol[1]\pi_1^*\prol[2] = 0$.
   \item %
      \label{item:fettUndduennEinschProl} %
      $\kRes \prol[1] \pi_1^* \prol[2] = \qRes \prol[1] \pi_1^*
      \prol[2]$.
   \end{satzEnum}
\end{satz}
\begin{proof}
   \begin{beweisEnum}
   \item %
      Für $c_1 \in C_1$ und $f \in \CM[C_2]$ gilt
      \begin{align*}
         (\pi_1^* \prol[2] f)(c_1) &= \prol[2] f (\pi_1(c_1))\\
         &=
         \begin{cases}
            \psi_{U_2}(\pi_1(c_1)) f(r_2(\pi_1(c_1))) & \text{für $\pi_1(c_1) \in U_2$}
            \\
            0 & \text{sonst}
         \end{cases} \\
         &=
         \begin{cases}
            \psi_{U_2}(\pi_1(c_1)) f(\pi_1(r(c_1)) & \text{für $ c_1 \in  U$} \\
            0 & \text{sonst}
         \end{cases} \\
         &=
         \begin{cases}
            \psi_{U}(c_1) f(\pi_1(r(c_1))) & \text{für $c_1 \in U$} \\
            0 & \text{sonst}
         \end{cases} \\
         &=
         \begin{cases}
            \psi_{U}(c_1)(\tilde \varsigma^*f)(r(c_1)) & \text{für $c_1 \in U$} \\
            0 & \text{sonst}
         \end{cases} \\
         &= (\prol \tilde \varsigma^* f)(c_1) = (\kResE \prol \tilde
         \varsigma^* f)(c_1) \Fdot
      \end{align*}
      Dabei wurden im dritten Schritt die Definitionen von $r_2$ und $r$
      verwendet:
      \begin{align*}
         r_2(\pi_1(c_1)) = \pi_1(\tilde r(c_1)) \stackrel{c_1 \in C_1}{=}
         \pi_1(\tilde r(r_1(c_1))) = \pi_1(r(c_1)) \Fdot
      \end{align*}
   \item Mit~\refitem{item:prol2Kompa} und Proposition
      \ref{prop:prop1undprolKompatibel} erhält man
      \begin{align*}
         \prol[1] \pi_1^* \prol[2] \pi_2^* = \prol[1] \kResE \prol \tilde
         \varsigma^* \pi_2^* = \prol \tilde \varsigma^* \pi_2^* = \prol \pi^* \varsigma^*\Fdot
      \end{align*}
   \item Wie im Beweis zu~\refitem{item:prol2Komp} sieht man sofort
      \begin{align*}
         \h \prol[1]\pi_1^*\prol[2] = \h \prol[1]\kResE\prol \tilde
         \varsigma^* = \h \prol \tilde \varsigma^* = 0 \Fdot
      \end{align*}
      Dabei wurde im letzten Schritt Gleichung
      \eqref{eq:globalisierteHomotopie3} verwendet.
   \item Dies folgt unmittelbar aus~\refitem{item:hprolKomp2} und der
      expliziten Formel für $\qRes$.
   \end{beweisEnum}
\end{proof}

Auf für die induzierten Quanteneinschränkungen können wir
Verträglichkeitsbedingungen formulieren.

\begin{lemma}
   \label{lem:RelationenMitQuantenj}
   Es gibt ein $\qjRes \dpA \CM[C_1][[\lambda]] \to
   \CM[C][[\lambda]]$ mit
   \begin{lemmaEnum}
   \item
      \label{item:QuantenJEigenschaft1}
      $\qjRes \qResE = \qRes$ und
   \item\label{item:QuantenJEigenschaft2} $\kjRes \pi_1^* \prol[2]
      \qRes_2 = \qjRes\pi_1^*$.
   \end{lemmaEnum}
   Die Abbildung $\qjRes$ ist durch Eigenschaft~\refitem{item:QuantenJEigenschaft1} schon eindeutig festgelegt und
   von der Form $\qjRes = \qRes \prol[1]$.
\end{lemma}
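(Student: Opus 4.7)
The plan is as follows. I would set $\qjRes := \qRes \prol[1]$ and first verify property~(i). The key observation is that the left $\star$-ideal $\qIdeal[1]$ is contained in $\qIdeal$, because its generators $\qJ[1](\xi_1) = \qJ(T_e\gIn\, \xi_1)$, $\xi_1 \in \lieAlgebra[g]_1$, are among the generators of $\qIdeal$. By Proposition~\ref{prop:QuantenAugmentierung} applied to the first reduction, the map $\id - \prol[1]\qResE$ is the projection onto $\qIdeal[1]$ along $\prol[1](C^\infty(C_1)[[\lambda]])$; hence $\qRes(\id - \prol[1]\qResE) = 0$, giving $\qjRes \qResE = \qRes$. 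Uniqueness of $\qjRes$ with property~(i) is immediate from the surjectivity of $\qResE$ (which is right-invertible by $\prol[1]$).

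For~(ii), the main stepping stone is Satz~\ref{satz:prol2Komp}~\refitem{item:fettUndduennEinschProl} combined with $\kjRes = \kRes \prol[1]$, a direct consequence of $\kResE \prol[1] = \id$. This yields $\kjRes \pi_1^* \prol[2] = \qRes \prol[1] \pi_1^* \prol[2]$; composing with $\qRes_2$ from the right reduces the claim to
\[ \qRes \prol[1] \pi_1^* \bigl( \id - \prol[2] \qRes_2 \bigr) = 0. \]
Since, again by Proposition~\ref{prop:QuantenAugmentierung} for the second reduction, $\id - \prol[2]\qRes_2$ projects onto the left $\starred[1]$-ideal $\qIdeal[2] = \ker \qRes_2$, it is enough to show that $\qRes \prol[1] \pi_1^* \chi = 0$ for every $\chi \in \qIdeal[2]$.

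The main (but routine) computation is to verify this on generators $\chi = \phi \starred[1] \qJ[2](T_e\wp\, \xi_2)$ with $\phi \in C^\infty(\Mred[1])[[\lambda]]$ and $\xi_2 \in \lieAlgebra[g]_2$. Using Proposition~\ref{prop:QuantenJ2} together with $\tqJ[1](\xi_2)=0$ gives $\pi_1^* \qJ[2](T_e\wp\, \xi_2) = \qResE \qJ(\xi_2)$, so the defining formula of $\starred[1]$ from Satz~\ref{satz:SternproduktAufDemReduziertenPhasenraum} yields $\pi_1^* \chi = \qResE(\prol[1]\pi_1^*\phi \star \prol[1]\qResE\qJ(\xi_2))$. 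Applying $\prol[1]$ and then $\qRes$ and invoking~(i) in the form $\qRes \prol[1] \qResE = \qRes$, the expression reduces to $\qRes(\prol[1]\pi_1^*\phi \star \prol[1]\qResE\qJ(\xi_2))$. Splitting $\prol[1]\qResE \qJ(\xi_2) = \qJ(\xi_2) + r$ with $r := (\prol[1]\qResE - \id)\qJ(\xi_2) \in \qIdeal[1] \subset \qIdeal$, the first summand lies in $\qIdeal$ because $\qJ(\xi_2)$ is a generator of the left ideal $\qIdeal$, and the second is a left multiple of an element of $\qIdeal$, hence itself in $\qIdeal$. Both are annihilated by $\qRes$, and linear extension finishes the argument from generators to all of $\qIdeal[2]$.

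The principal obstacle is bookkeeping rather than conceptual: one must keep careful track of which left ideal each intermediate expression belongs to, especially exploiting the inclusion $\qIdeal[1] \subset \qIdeal$, and arrange the final product so that $\qJ(\xi_2)$ appears on the right as an ideal generator. Nothing beyond the standing assumptions of this section (in particular the $G$-equivariance of $\prol[1]$ and the form of $\qJ[2]$) is needed.
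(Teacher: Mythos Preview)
Your proposal is correct and follows essentially the same approach as the paper's proof: define $\qjRes := \qRes\prol[1]$, deduce (i) from $\qIdeal[1]\subset\qIdeal$, reformulate (ii) via Satz~\ref{satz:prol2Komp}~\refitem{item:fettUndduennEinschProl} as $\qRes\prol[1]\pi_1^*(\id-\prol[2]\qRes_2)=0$, and verify this on $\qIdeal[2]$ by the same splitting $\prol[1]\qResE\qJ(\xi_2)=\qJ(\xi_2)+r$ with $r\in\qIdeal[1]\subset\qIdeal$. The only cosmetic difference is that the paper expresses an arbitrary element of $\qIdeal[2]$ explicitly via the homotopy $\hZ$ rather than appealing to generators, but the computation is otherwise identical.
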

\begin{proof}
   Wir zeigen zuerst die Eindeutigkeit und dann die Existenz.
   \begin{itemize}
   \item[Eindeutigkeit] \hfill \\
      Angenommen es gibt ein $\qjRes \dpA \CM[C_1][[\lambda]] \to
      \CM[C][[\lambda]]$ mit der Eigenschaft~\refitem{item:QuantenJEigenschaft1}, dann gilt
      \begin{align*}
         \qjRes = \qjRes \underbrace{\qResE \prol[1]}_{= \id}
         \stackrel{~\refitem{item:QuantenJEigenschaft1}}{=} \qRes
         \prol[1] \Fdot
      \end{align*} %
   \item[Existenz] \hfill \\
      Wir definieren $\qjRes = \qRes \prol[1]$ und zeigen, dass die
      Eigenschaften~\refitem{item:QuantenJEigenschaft1} und~\refitem{item:QuantenJEigenschaft2} erfüllt sind. Sei $\{\tilde
      e_\alpha\}_{\alpha = 1, \dots, \dim G_1}$ eine Basis von
      $\lieAlgebra_1$ mit zugehöriger dualer Basis
      $\{\tilde e^\alpha\}$ und $e_\alpha := T_e \gIn \tilde e_\alpha$ für $\alpha \in
      \{1,\dots,\dim G_1\}$. Weiter wählen wir eine Basis
      $\{e_\alpha\}_{\dim G_1 +1,\dots,\dim G}$ von
      $\lieAlgebra_2$. Dies induziert eine Basis $\{T_e\wp
      e_\alpha\}_{\alpha = \dim G_1 + 1,\dots,\dim G}$ von $\Lie[G/G_1]$, deren duale Basis wir mit
      $\{(T_e\wp e_\alpha)^*\}$ bezeichnen wollen. Insbesondere ist
      $\{e_\alpha\}_{\alpha = 1, \dots, \dim G}$ eine Basis von
      $\lieAlgebra$ mit dualer Basis $\{e^\alpha\}$.
      \begin{beweisEnum}
      \item Per Definition von $\qIdeal$ gilt
         \begin{align*}
            \qIdeal =  \left\{\sum_{\alpha = 1}^{\dim G} f^\alpha \star
            \qJ(e_\alpha) \mid f^\alpha \in \CM[M][[\lambda]]\right\} \Fdot
         \end{align*}
         Unter Beachtung von $\qJ[1](\tilde e_\alpha) = \qJ(e_\alpha)$ für $\alpha
         \in \{1,\dots,\dim G_1\}$ erhält man damit
         \begin{align*}
            \qIdeal[1] = \left\{\sum_{\alpha = 1}^{\dim G_1} f^\alpha \star
            \qJ(e_\alpha) \mid f^\alpha \in \CM[M][[\lambda]]\right\} \Fdot
         \end{align*}

         Also ist insbesondere $\qIdeal[1] \subset \qIdeal$. Somit gilt
         für alle $f \in \CM[M]$
         \begin{align*}
            & \prol[1] \qResE f -f \in \qIdeal[1] \subset \qIdeal \\
            &\implies \qRes(\prol[1] \qResE f - f) = 0\\
            &\implies \qjRes \qResE f = \qRes \prol[1] \qResE f = \qRes
            f \Fdot
         \end{align*}
         Daraus folgt $\qjRes \qResE = \qRes$.
      \item Wir formulieren Bedingung~\refitem{item:QuantenJEigenschaft2} zunächst etwas um.
         \begin{align*}
            \lefteqn{\kjRes \pi_1^* \prol[2] \qRes_2 = \qjRes \pi_1^*}\\
            &\iff \kRes \prol[1] \pi_1^* \prol[2] \qRes_2 = \qRes
            \prol[1] \pi_1^*\\
            &\iff \qRes \prol[1] \pi_1^* \prol[2] \qRes_2 = \qRes
            \prol[1] \pi_1^*   \\
            &\iff \qRes \prol[1] \pi_1^* (\id - \prol[2] \qRes_2) = 0
         \end{align*}
         Dabei wurde bei der ersten Äquivalenz die Gleichungen $\kjRes =
         \kRes \prol[1]$ und $\qjRes = \qRes \prol[1]$ ausgenutzt. Bei der zweiten
         Äquivalenz wurde Satz \ref{satz:prol2Komp}~\refitem{item:fettUndduennEinschProl} verwendet.
\begingroup
\emergencystretch=0.8em

         Da nach Proposition \ref{prop:QuantenAugmentierung} die
         Abbildung $\id - \prol[2] \qRes_2$ aber auf $\qIdeal[2] \subset
         \CM[{\Mred[1]}][[\lambda]]$ projiziert, genügt es die letzte
         Gleichung auf $\qIdeal[2]$ zu zeigen. Mit der
         Homotopieeigenschaft
         \begin{align*}
          \id =   \qkoszulZ\hZ + \hZ \qRes_2
         \end{align*}
         gilt für $\tilde{f} \in \qIdeal[2]$ die Gleichung
         \begin{align*}
            \tilde{f} = \qkoszulZ \hZ (\tilde{f}) =
            \dPaar{\hZ(\tilde{f})}{(T_e \wp e_\alpha)^*} \starred[1] \qJ[2](T_e\wp
            e_\alpha)\Fdot
         \end{align*}
\endgroup
         Mit der Abkürzung $\tilde{f}^\alpha :=
         \dPaar{\hZ(\tilde{f})}{T_e( \wp e_\alpha)^*} \in \CM[{\Mred[1]}]$ berechnen wir
         \begin{align*}
            \qRes \prol[1] \pi^*_1 \tilde{f} &= \qRes \prol[1] \pi^*_1 (
            \tilde{f}^\alpha \starred[1]
            \qJ[2](T_e\wp e_\alpha))  \\
            &= \underbrace{\qRes \prol[1] \qResE }_{= \qjRes \qResE =
              \qRes}(\underbrace{\prol[1] \pi_1^* \tilde{f}^\alpha}_{=:
              f^\alpha} \star \prol[1] \pi_1^* \qJ[2](T_e \wp e_\alpha))
            \\
            &= \qRes (f^\alpha \star \prol[1] \pi_1^* \qJ[2](T_e \wp
            e_\alpha))\\
            &= \qRes(f^\alpha \star \prol[1]\qResE \qJ(e_\alpha)) -
            \underbrace{\qRes(f^\alpha
              \star \qJ(e_\alpha))}_{=0}\\
            &= \qRes(\underbrace{f^\alpha \star\underbrace{
                ((\prol[1]\qResE \qJ(e_\alpha)) -\qJ(e_\alpha))}_{ \in
                \qIdeal[1] \subset \qIdeal}}_{\in \qIdeal} ) \\
            &= 0 \Fdot
         \end{align*}
      \end{beweisEnum}

   \end{itemize}
\end{proof}

\subsubsection{Vergleich der Sternprodukte}
\label{sec:VergleichDerSternproducteSUB}

Wir kommen nun zum Hauptresultat der vorliegenden Arbeit.
\begin{satz}[Quanten-Koszul-Reduktion in Stufen]
   \label{thm:UebereinstimmungDerSternprodukte}

   Es seien dieselben Voraussetzungen wie in Satz
   \ref{satz:redPhasenraeumeSymplektomorph} gegeben. Zusätzlich gebe es
   ein $G$"=invariantes Komplement $\lieAlgebra_2$ von
   $T_e\gIn\lieAlgebra_1$. Weiter sei $\star$ ein $G$"=invariantes
   Sternprodukt für $(M,\omega)$, für das eine Quantenimpulsabbildung
   $\qJ$ existiere und $\qJ[1]$ die durch Gleichung
   \eqref{eq:J1Definition2} davon induzierte Quantenimpulsabbildung für
   die $G_1$"=Wirkung sowie $\qJ[2]$ die gemäß Gleichung
   \eqref{eq:QuantenJ2} induzierte Quantenimpulsabbildung für die
   $G/G_1$"=Wirkung. Dann kann man die geometrischen Homotopie"=Daten
   derart wählen, dass
   \begin{align}
      \varsigma^* \starred[2] = \starred
   \end{align}
   gilt. Dabei bezeichnet $\starred$ das bezüglich der $G$"=Wirkung nach
   dem Quanten"=Koszul"=Schema mit diesen geometrischen Homotopie"=Daten
   und $\qJ$ konstruierte Sternprodukt für
   $(\Mred,\omega_{\mathrm{red}})$. Ist $\star_1$ das entsprechende
   Sternprodukt für $(\Mred[1],\omega_{{\mathrm{red}}_1})$ bezüglich
   $\qJ[1]$, so ist $\starred[2]$ das bezüglich $\starred[1]$ und
   $\qJ[2]$ reduzierte Sternprodukt für
   $(\Mred[2],\omega_{\mathrm{red}_2})$.
\end{satz}
\begin{proof}
   Für $c \in C = J^{-1}(0)$ ist mit Bemerkung \ref{bem:PiAufC}
   $\pi_1(c) \in C_2$, also gilt für $f \in \CM[C_2]$ schon
   \begin{align*}
      f(\pi_1(c)) = \prol[2] f (\pi_1(c))\Fdot
   \end{align*}
   Damit folgt
   \begin{align*}
      \tilde{\varsigma}^* f(c) = f(\pi_1(c)) = \prol[2] f (\pi_1(\kjIn(c))) =
      \kjRes \pi_1^* \prol[2] f(c) \Fcom
   \end{align*}
   also für $\phi \in \CM[{\Mred[2]}]$
   \begin{align*}\label{loc:Stern}
      \pi^*\varsigma^* \phi = \tilde{\varsigma}^* \pi_2^* \phi = \kjRes
      \pi_1^* \prol[2] \pi_2^* \phi\Fdot \tag{$*$}
   \end{align*}
   Seien $\phi_a,\phi_b \in \CM[{\Mred[2]}]$, dann gilt:
   \begin{align*}
      \lefteqn{\pi^*(\phi_a (\varsigma^* \starred[2]) \phi_b)} \\
      &= \pi^* \varsigma^*({\varsigma^{-1}}^* \phi_a \starred[2]
      {\varsigma^{-1}}^*
      \phi_b)\\
      &= \kjRes \pi_1^* \prol[2] \pi_2^*({\varsigma^{-1}}^* \phi_a
      \starred[2] {\varsigma^{-1}}^* \phi_b) && \eAnn{nach \ref{loc:Stern}} \\
      &= \kjRes \pi_1^* \prol[2] \qRes_2 (\prol[2]
      \pi_2^*{\varsigma^{-1}}^* \phi_a \starred[1] \prol[2]
      \pi_2^*{\varsigma^{-1}}^*
      \phi_b) && \eAnn{nach Gleichung \eqref{eq:SternproduktAufDemReduziertenPhasenraum3}}\\
      &= \qjRes \pi_1^* (\prol[2] \pi_2^*{\varsigma^{-1}}^* \phi_a
      \starred[1] \prol[2] \pi_2^*{\varsigma^{-1}}^* \phi_b) &&
      \eAnn{nach Lemma
        \ref{lem:RelationenMitQuantenj}~\refitem{item:QuantenJEigenschaft2}} \\
      &= \qjRes \qResE (\prol[1] \pi_1^* \prol[2]
      \pi_2^*{\varsigma^{-1}}^* \phi_a \star \prol[1] \pi_1^* \prol[2]
      \pi_2^*{\varsigma^{-1}}^*
      \phi_b)   && \eAnn{nach Gleichung \eqref{eq:SternproduktAufDemReduziertenPhasenraum3}} \\
      &= \qRes (\prol \pi^* \phi_a \star \prol \pi^* \phi_b) &&
      \eAnn{nach Lemma
        \ref{lem:RelationenMitQuantenj}~\refitem{item:QuantenJEigenschaft1}}[]
      \\
      &\phantom{=} &&\eAnn[]{und Satz
        \ref{satz:prol2Komp}~\refitem{item:prol2Komp}} \\
      &= \pi^*(\phi_a \starred \phi_b) && \eAnn{nach Gleichung
        \eqref{eq:SternproduktAufDemReduziertenPhasenraum3}} \Fdot
   \end{align*}

\end{proof}
\begin{bemerkung}
   \label{bem:Hauptresultat}
   Es sei darauf hingewiesen, dass Satz
   \ref{thm:UebereinstimmungDerSternprodukte} nicht nur etwa besagt,
   dass $\varsigma^* \starred[2]$ und $\starred$ äquivalent sind,
   sondern dass sie wirklich übereinstimmen. Dies ist auch von
   physikalischem Interesse, da zwei äquivalente Sternprodukte
   physikalisch inäquivalent sein können, wie schon in Kapitel
   \ref{cha:Deformationsquantisierung} bemerkt wurde. Ferner ist der
   Symplektomorphismus aus Satz
   \ref{thm:UebereinstimmungDerSternprodukte} genau derselbe wie der
   klassische aus Satz \ref{satz:redPhasenraeumeSymplektomorph}.  Die
   Einschränkung in Satz \ref{thm:UebereinstimmungDerSternprodukte} im
   Vergleich zur klassischen Situation besteht in den zusätzlichen
   Forderungen an die Gruppenstruktur, die im Falle kompakter Gruppen
   jedoch automatisch erfüllt sind, vgl.\ Bemerkung
   \ref{bem:invariantesKomplement}. Ansonsten gibt es keine weiteren
   Obstruktionen, insbesondere keine die auf Quanteneffekte schließen
   lassen.
\end{bemerkung}

\cleardoublepage
\begin{appendices}

\renewcommand{\appendixname}{}

\cleardoublepage

\numberwithin{theorem}{chapter}
\chapter{Elementare mathematische Anmerkungen}
\label{cha:elementMathAnmerkungen}

In diesem Kapitel wollen wir einige einfache mathematische Bemerkungen
zusammenstellen, die für das Verständnis des Haupttextes und der
weiteren Anhänge nützlich sein können, aber in der Literatur oft nicht
ausführlich thematisiert werden oder nur als Übungsaufgaben vorkommen.

\begin{proposition}
      \label{prop:produkte}
      Seien $M_i$, $N_i$, $i=1,2$, $M$ und $N$ Mannigfaltigkeiten,
      $\projFaktor_i \dpA M_1 \times M_2 \to M_i$ die Projektion auf den
      $i$"=ten Faktor, $\pi_i \dpA TM_i \to M_i$ ($i = 1,2$) und $\pi
      \dpA TM \to M$ die Fußpunktprojektion sowie $\Delta \dpA M \to M
      \times M $, $p \mapsto (p,p)$ die Diagonalabbildung.
      \begin{propositionEnum}
      \item
         \begin{align*}
            \tangentIso \dpA T(M_1 \times M_2) \to TM_1 \times TM_2,
            \quad X \mapsto (T\projFaktor_1 X, T\projFaktor_2 X)
         \end{align*}
         ist ein Vektorbündelisomorphismus (über $\id$).

         Wir schreiben auch $\TIso{X} = I(X)$ sowie $\TIsoI{(X_1,X_2)} =
         I^{-1}((X_1,X_2))$ für $X \in T(M_1 \times M_2)$
         bzw.\ $(X_1,X_2) \in TM_1 \times TM_2$.
      \item \label{item:KommaKurve} Sei $J \subset \mathbb{R}$ offen und
         $\gamma_i \dpA J \to M_i$ ($i=1,2$) eine glatte Kurve, dann
         gilt
         \begin{align*}
            \ddt (\gamma_1(t),\gamma_2(t)) = \TIsoI{(\dot \gamma_1(0),
              \dot \gamma_2(0))} \Fdot
         \end{align*}
      \item%
         \label{item:ProdukteKomma} %
         Ist $\Psi = (\Psi_1,\Psi_2) \dpA M \to M_1 \times M_2$ glatt,
         so gilt für $X \in TM$
         \begin{align*}
            T\Psi X = \TIsoI{(T\Psi_1 X, T\Psi_2 X)} \Fdot
         \end{align*}
      \item%
         \label{item:vonKomma}%
         Ist $\Psi \dpA M_1 \times M_2 \to M$ glatt, so gilt
         \begin{align*}
            T\Psi \TIsoI{(X_1,X_2)} = T\Psi(\cdot,\pi_2(X_2))X_1 +
            T\Psi(\pi_1(X_1),\cdot)X_2
         \end{align*}
         für $X_i \in TM_i$, $i=1,2$.
      \item %
         \label{item:ProductAbbildung} %
         Ist $\Psi_i \dpA M_i \to N_i$, $i=1,2$, glatt, so gilt
         \begin{align*}
            I \circ T(\Psi_1 \times \Psi_2) \circ I^{-1} = T\Psi_1
            \times T\Psi_2 \Fdot
         \end{align*}
         oder anders geschrieben:
       \begin{align*}
          T(\Psi_1 \times \Psi_2)\TIsoI{(X_1,X_2)} = \TIsoI{(T\Psi_1
            X_1, T\Psi_2 X_2)}
       \end{align*}
       für $(X_1,X_2) \in TM_1 \times TM_2$.
    \item%
       \label{item:Diagonale}%
       Es gilt $T\Delta X = \TIsoI{(X,X)}$ für $X \in TM$.
    \item %
       \label{item:DiagonaleKomponiert}%
       Ist $\Psi \dpA M \times M \to N $ glatt, so gilt für $X \in TM$
       \begin{align*}
          T (\Psi \circ \Delta) X = T\Psi(\cdot,\pi(X))X +
          T\Psi(\pi(X),\cdot)X \Fdot
       \end{align*}
    \end{propositionEnum}
 \end{proposition}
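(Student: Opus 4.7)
The plan is to build up parts i.) through vii.) in the natural order, with the core work being the identification of tangent vectors on a product manifold via pairs of tangent vectors on the factors. For i.), the map $\tangentIso$ is fiberwise linear because each $T\projFaktor_i$ is linear on fibers, and it is smooth as a composition of tangent-functor applications. To exhibit an inverse and establish bijectivity on fibers, I would use curve representatives: given $(X_1, X_2) \in TM_1 \times TM_2$ with $\pi_i(X_i) = p_i$, choose smooth curves $\gamma_i \dpA J \to M_i$ with $\gamma_i(0) = p_i$ and $\dot{\gamma}_i(0) = X_i$, and set $\tangentIso^{-1}(X_1, X_2) := \frac{d}{dt}\big|_0(\gamma_1(t), \gamma_2(t))$. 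Well-definedness and bijectivity reduce to the fact that the projections $\projFaktor_i$ on a product manifold detect each component of a tangent vector exactly once. A dimension count plus linearity then forces $\tangentIso$ to be a vector bundle isomorphism over the identity on $M_1 \times M_2$.

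Part ii.) is essentially tautological once i.) is in place: applying $T\projFaktor_i$ to $\frac{d}{dt}\big|_0(\gamma_1(t), \gamma_2(t))$ and invoking the usual chain rule gives $\dot{\gamma}_i(0)$, so the displayed tangent vector maps under $\tangentIso$ to $(\dot\gamma_1(0), \dot\gamma_2(0))$ as claimed. Part iii.) follows from the identity $\projFaktor_i \circ \Psi = \Psi_i$ via the chain rule: $T\projFaktor_i \circ T\Psi = T\Psi_i$, hence $\tangentIso(T\Psi X) = (T\Psi_1 X, T\Psi_2 X)$.

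The main step is iv.). Here I would again realize $\tangentIso^{-1}(X_1, X_2)$ by product curves $(\gamma_1(t), \gamma_2(t))$, so that
\begin{align*}
   T\Psi \, \tangentIso^{-1}(X_1, X_2) = \ddt \Psi(\gamma_1(t), \gamma_2(t)).
\end{align*}
Then the classical two-variable chain rule, applied to the smooth map $t \mapsto \Psi(\gamma_1(t), \gamma_2(t))$, splits this derivative as the sum of the partial derivatives in each slot, which translates precisely into $T\Psi(\cdot, \pi_2(X_2)) X_1 + T\Psi(\pi_1(X_1), \cdot) X_2$. The only slightly delicate point is the bookkeeping of basepoints, which is why the projections $\pi_i(X_i)$ appear in the frozen slot of each partial tangent map; this is the step I expect to be the main obstacle in writing out, though conceptually it is just the usual chain rule in disguise.

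Parts v.), vi.), and vii.) follow by assembly. For v.) I apply iii.) to $\Psi_1 \times \Psi_2 \dpA M_1 \times M_2 \to N_1 \times N_2$, whose components are $\Psi_i \circ \projFaktor_i$, and use the chain rule once more together with i.). Part vi.) is iii.) for $\Delta = (\id_M, \id_M)$. Finally, vii.) follows by combining vi.) with iv.) applied to $\Psi \dpA M \times M \to N$, specializing $X_1 = X_2 = X$ and using $\pi_1(X) = \pi_2(X) = \pi(X)$. Beyond keeping the notation for basepoints consistent throughout, no further subtleties arise; once i.) and iv.) are secured, the remaining parts are immediate corollaries.
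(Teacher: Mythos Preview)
Your proposal is correct and matches the paper's approach closely. The only cosmetic differences are that the paper proves iv.) by first splitting $(X_1,X_2)$ linearly into $(X_1,0)+(0,X_2)$ and then using curves constant in one factor (via ii.)) rather than invoking an abstract two-variable chain rule, and it derives v.) through iv.) rather than by applying iii.) directly to the product map $\Psi_1 \times \Psi_2$.
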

 \begin{proofklein}
    \begin{beweisEnum}

    \item Die Surjektivität folgt aus der Surjektivität der einzelnen
       Komponenten, aus Dimensionsgründen folgt deswegen auch die
       Bijektivität, der Rest ist klar. %

    \item %
       \begin{align*}
          \TIso {\left(\ddt (\gamma_1,\gamma_2)\right)} &=
          (T\projFaktor_1 \ddt
          (\gamma_1,\gamma_2), T\projFaktor_2 \ddt (\gamma_1,\gamma_2)) \\
          &= (\ddt(\projFaktor_1 \circ (\gamma_1,\gamma_2),\ddt
          (\projFaktor_2 \circ
          (\gamma_1,\gamma_2))) \\
          &= (\dot \gamma_1(0), \dot \gamma_2(0)) \Fdot
       \end{align*}
    \item %
       Für eine glatte Kurve $\gamma \dpA \mathbb{R} \to M$ gilt mit
       Teil~\refitem{item:KommaKurve}
       \begin{align*}
          \TIso{(T\Psi \dot \gamma(0))} &= \TIso{\left(\ddt (\Psi \circ \gamma)\right)} \\
          &= \TIso{\left (\ddt (\Psi_1\circ \gamma, \Psi_2 \circ \gamma)\right)}\\
          &= \left(\ddt\Psi_1\circ \gamma, \ddt \Psi_2 \circ \gamma\right) \\
          &= (T\Psi_1 \dot \gamma(0), T\Psi_2 \dot \gamma(0)) \Fdot
       \end{align*}
    \item %
       Sei $\gamma_i \dpA \mathbb{R} \to M_i$ ($i=1,2$) eine Kurve mit
       $\dot\gamma_i (0) = X_i$.  Es sei dann $p_i := \pi_i(X_i)$ und
       $\gamma_{1,p_2}, \gamma_{2,p_1} \dpA \mathbb{R} \to M_1 \times
       M_2$ definiert durch $\gamma_{1,p_2}(t) = (\gamma_1(t),p_2)$
       bzw.\ $\gamma_{2,p_1}(t) = (p_1,\gamma_2(t))$ für $t \in
       \mathbb{R}$.
      \begin{align*}
        T\Psi \TIsoI{(X_1,X_2)} &= T\Psi \TIsoI{(X_1,0)} +
        T\Psi \TIsoI{(0,X_2)} \\
        &\stackrel{\refitem{item:KommaKurve}}{=} T\Psi\dot
        \gamma_{1,p_2}(0) + T\Psi\dot\gamma_{2,p_1}(0) \\
        &= \ddt (\Psi(\gamma_1(t),p_2)) + \ddt (\Psi(p_1,\gamma_2(t))) \\
        &= T\Psi(\cdot,\pi_2(X_2))X_1 + T\Psi(\pi_1(X_1),\cdot)X_2
        \Fdot
      \end{align*}
   \item%
      \begin{align*}
         T(\Psi_1 \times \Psi_2)\TIsoI{(X_1,X_2)} &\stackrel{\refitem{item:vonKomma}}{=}
         T(\Psi_1 \times \Psi_2)(\cdot,\pi_2(X_2))X_1 + T(\Psi_1 \times
         \Psi_2)(\pi_1(X_1),\cdot)X_2 \\
         &= T(\Psi_1,\Psi_2(\pi_2(X_2)))X_1 + T(\Psi_1(\pi_1(X_1)),\Psi_2)X_2 \\
         &\stackrel{\refitem{item:ProdukteKomma}}{=} \TIsoI{(T\Psi_1 X_1,0)} + \TIsoI{(0,T\Psi_2 X_2)}  \\
         &= \TIsoI{(T\Psi_1X_1,T\Psi_2X_2)} \Fdot
      \end{align*}
   \item%
      \begin{align*}
         T\Delta X \stackrel{\refitem{item:ProdukteKomma}}{=} \TIsoI{(T
           \id X , T\id X)} = \TIsoI{(X,X)} \Fdot
      \end{align*}
   \item%
      \begin{align*}
         T(\Psi \circ \Delta) X = T\Psi T\Delta X
         \stackrel{\refitem{item:Diagonale}}{=} T\Psi \TIsoI{(X,X)}
         \stackrel{\refitem{item:vonKomma}}{=} T\Psi(\cdot,\pi(X))X +
         T\Psi(\pi(X),\cdot)X \Fdot
      \end{align*}
   \end{beweisEnum}
\end{proofklein}

\begin{proposition}

   \label{prop:ProduktVonSubmersionen}
   Seien $f_i \dpA M_i \to N_i$, ($i=1,2$), glatte Submersionen zwischen
   den Mannigfaltigkeiten $M_i, N_i$, dann ist auch $f_1 \times f_2 \dpA
   M_1 \times M_2 \to N_1 \times N_2$ eine glatte Submersion.
\end{proposition}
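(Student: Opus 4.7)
The plan is to reduce the claim to the pointwise statement that the differential of $f_1 \times f_2$ is surjective at every point, and to handle this by invoking the tangent-bundle identification $\tangentIso \dpA T(M_1 \times M_2) \to TM_1 \times TM_2$ from Proposition~\ref{prop:produkte}, together with the explicit formula in part~\refitem{item:ProductAbbildung} of that proposition.

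First I would note that smoothness of $f_1 \times f_2$ is immediate: in any product chart it is given componentwise by the smooth maps $f_1$ and $f_2$, so no work is needed there. It therefore remains to show that for every $(p_1,p_2) \in M_1 \times M_2$ the tangent map $T_{(p_1,p_2)}(f_1 \times f_2)$ is surjective onto $T_{(f_1(p_1),f_2(p_2))}(N_1 \times N_2)$.

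Next I would fix a target tangent vector $Z \in T_{(f_1(p_1),f_2(p_2))}(N_1 \times N_2)$ and use the isomorphism $\tangentIso$ for $N_1 \times N_2$ to write $Z = \TIsoI{(Y_1,Y_2)}$ with $Y_i \in T_{f_i(p_i)}N_i$. Since $f_i$ is a submersion for $i=1,2$, there exist $X_i \in T_{p_i}M_i$ with $T_{p_i}f_i \, X_i = Y_i$. By Proposition~\ref{prop:produkte}~\refitem{item:ProductAbbildung} we then have
\begin{equation*}
T_{(p_1,p_2)}(f_1 \times f_2)\,\TIsoI{(X_1,X_2)} = \TIsoI{(T_{p_1}f_1 \, X_1,\, T_{p_2}f_2\, X_2)} = \TIsoI{(Y_1,Y_2)} = Z,
\end{equation*}
which exhibits a preimage of $Z$ under the differential, proving surjectivity.

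There is no genuine obstacle: the entire argument is a direct consequence of the functorial behavior of $T$ on products encoded in Proposition~\ref{prop:produkte}. The only thing to be careful about is to work consistently through the isomorphism $\tangentIso$ at both source and target, rather than trying to manipulate tangent vectors of $M_1 \times M_2$ directly.
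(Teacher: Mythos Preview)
Your proof is correct and matches the paper's argument essentially line for line: both pick a target vector $\TIsoI{(Y_1,Y_2)}$, lift each $Y_i$ via the submersivity of $f_i$, and then invoke Proposition~\ref{prop:produkte}~\refitem{item:ProductAbbildung} to identify $T(f_1\times f_2)\TIsoI{(X_1,X_2)} = \TIsoI{(Y_1,Y_2)}$. The only addition is your explicit remark on smoothness, which the paper simply takes for granted.
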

\begin{proofklein}
   Sei $(p_1,p_2) \in M_1 \times M_2$ und $\TIsoI{(Y_1,Y_2)} \in
   T_{(f_1(p_1),f_2(p_2))}(N_1 \times N_2)$. Dann gibt es nach
   Voraussetzung $X_i \in T_{p_i}M_i$ mit $T_{p_i}f_i X_i = Y_i$
   ($i=1,2$) und es gilt nach Proposition \ref{prop:produkte}~\refitem{item:ProductAbbildung}
   \begin{align*}
      T_{(p_1,p_2)}(f_1 \times f_2)\TIsoI{(X_1,X_2)} =
      \TIsoI{(T_{p_1}f_1 X_1, T_{p_2}f_2 X_2)} = \TIsoI{(Y_1,Y_2)} \Fdot
   \end{align*}
\end{proofklein}

\begin{definition}
   \label{def:InduzierteBuendelAbbildung}

   Sei $k \in \mathbb{N}$ und seien $E_i \to M$ und $F \to M$
   Vektorbündel über der Mannigfaltigkeit $M$ für $i \in
   \{1,\dots,k\}$. Sei weiter
   \begin{align*}
      \varphi \dpA E_1\oplus \dots \oplus E_k \to F
   \end{align*}
   ein Vektorbündelmorphismus über $\id$. Dann wollen wir die
   $\CM$"=lineare Abbildung
   \begin{align*}
      \Gamma^\infty(E_1)\oplus \dots \oplus \Gamma^\infty(E_k) &\to
      \Gamma^\infty(F)\\%
      (s_1,\dots,s_k) &\mapsto (M \ni p \mapsto
      \varphi(s_1(p),\dots, s_k(p)) \in F)
   \end{align*}
   die \neuerBegriff{von $\varphi$ induzierte Abbildung} nennen und den
   Notationsmissbrauch begehen, sie wieder mit $\varphi$ zu bezeichnen.
\end{definition}

\begin{proposition}
   \label{prop:RechenregelnSchnitteUndSummen}
   Sei $n \in \mathbb{N}$ und $M$ eine Mannigfaltigkeit. Sei $E
   \to M$, $F \to M$ und $E_i \to M$ für $i \in \{1,\dots,n\}$ jeweils ein
   Vektorbündel über $M$. Dann sind die folgenden Aussagen richtig.
   \begin{propositionEnum}
   \item %
      \label{item: RechenregelnSchnitteUndSummen} %
      Die Abbildung
      \begin{align*}
         \bigoplus_{i = 1}^n \Gamma^\infty(E_i) &\to
         \Gamma^\infty(\bigoplus_{i = 1}^n E_i), \\ (s_1,\dots,s_n)
         &\mapsto (M \ni p \mapsto (s_1(p),\dots,s_n(p)))
      \end{align*}
      ist ein Isomorphismus von $\CM$"=Moduln.
   \item %
      \label{item: RechenregelnTensor} %
      Seien  $\iota_E \dpA E \hookrightarrow E \oplus F$ und $\iota_F
      \dpA F \hookrightarrow E \oplus F$ die kanonischen Inklusionen und
      sei $k \in \mathbb{N}$. Dann ist für $l \in \{0,\dots,k\}$ die
      Abbildung
      \begin{align*}
         \iota_{lk}:= \vee_{i=0}^{k-l}\iota_E \vee \vee_{i=0}^{l} \iota_F \dpA \Bigvee^{k-l} E \otimes \Bigvee^{l} F  \to
         \Bigvee^k(E \oplus F)
       \end{align*}
       ein injektiver Vektorbündelmorphismus und
      \begin{align*}
         \oplus_{l=0}^k\iota_{lk} \dpA \bigoplus_{l=0}^k(\Bigvee^{k-l} E \otimes \Bigvee^l F) \to
         \Bigvee^k(E \oplus F)
      \end{align*}
      ist ein Vektorbündelisomorphismus.
   \end{propositionEnum}

\end{proposition}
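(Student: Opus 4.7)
The plan is to treat both parts as essentially pointwise statements combined with routine smoothness checks, since the maps in question are vector bundle morphisms over the identity.

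For part \refitem{item: RechenregelnSchnitteUndSummen}, I would first observe that the given map is obviously $C^\infty(M)$-linear and that its fiberwise behavior at each $p \in M$ is the canonical identification $(E_1)_p \oplus \dots \oplus (E_n)_p = \bigl(\bigoplus_{i=1}^n E_i\bigr)_p$, which is the identity. Hence it is induced by the identity vector bundle morphism in the sense of Definition \ref{def:InduzierteBuendelAbbildung}. To construct the inverse explicitly, I would use the projections $\pi_i \dpA \bigoplus_{j=1}^n E_j \to E_i$, which are smooth vector bundle morphisms, and send a section $s \in \Gamma^\infty\bigl(\bigoplus_{i=1}^n E_i\bigr)$ to $(\pi_1 \circ s, \dots, \pi_n \circ s) \in \bigoplus_{i=1}^n \Gamma^\infty(E_i)$. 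The two compositions are then immediately the identity because they are the identity fiberwise.

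For part \refitem{item: RechenregelnTensor}, I would reduce everything to the classical algebraic identity $\Bigvee^k(V \oplus W) \cong \bigoplus_{l=0}^k \bigl(\Bigvee^{k-l} V \otimes \Bigvee^l W\bigr)$ for finite-dimensional vector spaces $V, W$, applied fiberwise to $V = E_p$, $W = F_p$. Concretely, I would first note that $\iota_E$ and $\iota_F$ are smooth vector bundle morphisms over $\id_M$ and that the symmetric product $\vee$ of such morphisms is again a smooth vector bundle morphism. Evaluated at $p \in M$, the map $\iota_{lk}$ sends $v_1 \vee \dots \vee v_{k-l} \otimes w_1 \vee \dots \vee w_l$ to $(v_1, 0) \vee \dots \vee (v_{k-l}, 0) \vee (0, w_1) \vee \dots \vee (0, w_l)$ in $\Bigvee^k(E_p \oplus F_p)$. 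Injectivity on each fiber follows from the linear independence of the image of a basis of $\Bigvee^{k-l} E_p \otimes \Bigvee^l F_p$ under $\iota_{lk}$.

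The bijectivity of the direct sum $\oplus_{l=0}^k \iota_{lk}$ then follows fiberwise: on each fiber both sides have the same dimension by counting multi-indices (or equivalently, using the binomial identity for $\dim \Bigvee^k(E_p \oplus F_p) = \binom{\dim E_p + \dim F_p + k - 1}{k}$ together with the Vandermonde-type identity $\binom{m+n+k-1}{k} = \sum_{l=0}^k \binom{m+k-l-1}{k-l}\binom{n+l-1}{l}$), and injectivity on each summand combined with disjoint images (as $(l, k-l)$ labels the number of ``$F$-slots'' in an elementary tensor) forces injectivity of the direct sum. Hence we have a bijective, smooth vector bundle morphism over $\id_M$, which is therefore a vector bundle isomorphism. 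The main (though still purely routine) obstacle is merely bookkeeping the symmetrization in $\iota_{lk}$ carefully enough to see that the images of different summands are linearly independent in each fiber; once that is in place, everything else is standard linear algebra plus the trivial observation that fiberwise smooth constructions patch to smooth bundle morphisms.
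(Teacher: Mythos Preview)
Your proposal is correct and is the standard approach; the paper's own proof is simply ``Klar.'' (i.e., the author considers the statement evident and gives no argument). Your detailed fiberwise reduction to linear algebra, combined with the observation that the maps are smooth vector bundle morphisms over $\id_M$, is exactly how one would justify this if pressed.
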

\begin{proofklein}
Klar.
\end{proofklein}

\cleardoublepage

\chapter{Differentialoperatoren}
\label{cha:Differentialoperatoren}

In diesem Kapitel definieren wir den Begriff des Differentialoperators
und stellen ein paar wenige Propositionen dazu bereit, soweit sie für
die vorliegende Arbeit von Belang sind.

\begin{definition}[Differentialoperator]
   \label{def:Differentialoperator}
   Sei $\mathcal{A}$ eine assoziative, kommutative Algebra über einem
   Körper $\mathbb{K}$ und sei $\mathsf{L}_a \colon \mathcal{A} \in b
   \mapsto ab \in \mathcal{A}$ für alle $a \in \mathcal{A}$. Dann werden
   die \neuerBegriff{Differentialoperatoren}
   $\mathrm{DiffOp}^k(\mathcal{A})$ der Ordnung $k \in \mathbb{Z}$ auf
   $\mathcal{A}$ induktiv durch
   \begin{align}
      \label{eq:DiffopNiedrigeOrdnung}
      \mathrm{DiffOp}^k(\mathcal{A}) = \{0\}
   \end{align}
für $k < 0$ und
   \begin{align}
      \label{eq:DiffopHoehereOrdnung}
    \mathrm{DiffOp}^k(\mathcal{A}) = \left\{  D \in
           \operatorname{\mathrm{End}}_{\mathbb{K}}(\mathcal{A})  \middle
            | \; [D,\mathsf{L}_a] \in \mathrm{DiffOp}^{k-1}(\mathcal{A}
           )
            \quad \text{für alle $a \in \mathcal{A}$} \right\}
   \end{align}
für $k \geq 0$ definiert.
\end{definition}
\begin{bemerkung}
   \label{bem:Multidifferentialoperator}
   Analog definiert man \neuerBegriff{Multidifferentialoperatoren}, siehe \cite[Anhang A.5]{waldmann:2007a}.
\end{bemerkung}

\begin{proposition}
   \label{prop:EigenschaftenVonDiffops}
   Sei $\mathcal{A}$ eine assoziative, kommutative Algebra über einem
   Körper $\mathbb{K}$. Dann sind folgende Aussagen richtig.
   \begin{propositionEnum}
      \item %
         Es ist $\id \in \mathrm{DiffOp}^0(\mathcal{A})$ und für alle $a
         \in \mathcal{A}$ auch $\mathsf{L}_a \in
         \mathrm{DiffOp}^0(\mathcal{A})$.
      \item %
         Die Differentialoperatoren der Ordnung $k \in \mathbb{Z}$ sind
         auf natürliche Weise ein Bimodul, wobei $a \cdot D \cdot b$ für
         $a,b \in \mathcal{A}$  und $D \in
         \mathrm{DiffOp}^k(\mathcal{A})$ durch
         \begin{align}
            \label{eq:DiffOpBimodul}
            a \cdot D \cdot b = \mathsf{L}_a \circ D \circ \mathsf{L}_b
         \end{align}
         definiert ist.
      \item %
         Die Komposition zweier Differentialoperatoren ist wieder ein
         Differentialoperator. Genauer gilt für $k,l \in \mathbb{Z}$ und
         $D \in \mathrm{DiffOp}^k(\mathcal{A})$ und $D' \in
         \mathrm{DiffOp}^l(\mathcal{A})$
         \begin{align}
            \label{eq:KompositionVonDiffops}
            D \circ D' \in \mathrm{DiffOp}^{k+l}(\mathcal{A}) \Fdot
         \end{align}
   \end{propositionEnum}
\end{proposition}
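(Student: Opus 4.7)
The plan is to treat the three parts in the natural order, since the latter two rely on the same induction principle that already settles the first. For part (i), I would simply note that $[\id,\mathsf{L}_a] = 0 \in \mathrm{DiffOp}^{-1}(\mathcal{A})$ for every $a \in \mathcal{A}$, hence $\id \in \mathrm{DiffOp}^0(\mathcal{A})$. For $\mathsf{L}_a$ itself, commutativity of $\mathcal{A}$ gives $[\mathsf{L}_a,\mathsf{L}_b](c) = a(bc) - b(ac) = 0$ for all $b,c \in \mathcal{A}$, so $[\mathsf{L}_a,\mathsf{L}_b] = 0 \in \mathrm{DiffOp}^{-1}(\mathcal{A})$ and therefore $\mathsf{L}_a \in \mathrm{DiffOp}^0(\mathcal{A})$.

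For part (ii), I would prove by induction on $k \geq 0$ that $\mathsf{L}_a \circ D \circ \mathsf{L}_b \in \mathrm{DiffOp}^k(\mathcal{A})$ whenever $D \in \mathrm{DiffOp}^k(\mathcal{A})$ and $a,b \in \mathcal{A}$. The base case $k = 0$ reduces, after applying the definition, to a commutator $[\mathsf{L}_a \circ D \circ \mathsf{L}_b, \mathsf{L}_c]$ that unwraps via $[\mathsf{L}_b, \mathsf{L}_c] = 0$ and $[D,\mathsf{L}_c] = 0$ into the zero operator. For the induction step I would expand
\begin{equation*}
   [\mathsf{L}_a \circ D \circ \mathsf{L}_b, \mathsf{L}_c]
   = \mathsf{L}_a \circ [D, \mathsf{L}_c] \circ \mathsf{L}_b,
\end{equation*}
where $\mathsf{L}_a$ and $\mathsf{L}_b$ commute with $\mathsf{L}_c$ by commutativity of $\mathcal{A}$. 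The inner factor $[D,\mathsf{L}_c]$ lies in $\mathrm{DiffOp}^{k-1}(\mathcal{A})$ by hypothesis on $D$, so the induction hypothesis applies and the whole expression lies in $\mathrm{DiffOp}^{k-1}(\mathcal{A})$. This establishes the bimodule property and, as a by-product, that $\mathrm{DiffOp}^k(\mathcal{A})$ is stable under the natural left and right multiplication.

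For part (iii) I would run a double induction, most conveniently on $k + l$ with the standard commutator identity
\begin{equation*}
   [D \circ D', \mathsf{L}_c] = D \circ [D', \mathsf{L}_c] + [D, \mathsf{L}_c] \circ D'
\end{equation*}
as the only nontrivial ingredient. If $k < 0$ or $l < 0$ there is nothing to show; otherwise by definition $[D,\mathsf{L}_c] \in \mathrm{DiffOp}^{k-1}(\mathcal{A})$ and $[D',\mathsf{L}_c] \in \mathrm{DiffOp}^{l-1}(\mathcal{A})$, so the induction hypothesis gives $D \circ [D',\mathsf{L}_c] \in \mathrm{DiffOp}^{k+l-1}(\mathcal{A})$ and $[D,\mathsf{L}_c] \circ D' \in \mathrm{DiffOp}^{k+l-1}(\mathcal{A})$. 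Since $\mathrm{DiffOp}^{k+l-1}(\mathcal{A})$ is evidently a $\mathbb{K}$-vector space, the sum lies there as well, which means $D \circ D' \in \mathrm{DiffOp}^{k+l}(\mathcal{A})$.

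There is no real obstacle here; the only small care needed is to fix the book-keeping of the induction (on $\max(k,l,0)$ or on $k+l$) so that the base cases are genuinely trivial and the inductive step only invokes strictly smaller orders. Everything else is a mechanical consequence of the definition together with commutativity of $\mathcal{A}$.
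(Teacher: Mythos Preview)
Your argument is correct and follows the standard inductive approach; the paper itself gives no proof at all but simply refers to \cite[Prop.~A.2.4]{waldmann:2007a}, where essentially the same commutator identities and induction on the order are used.
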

\begin{proofklein}
   Siehe \cite[Prop. A.2.4]{waldmann:2007a}
\end{proofklein}

Ein wichtiges Beispiel für die assoziative, kommutative Algebra
$\mathcal{A}$, sind die glatten Funktionen $C^\infty(M)$ auf einer
Mannigfaltigkeit $M$. Die nächste Proposition, vgl.\ \cite[Satz A.3.5,
Prop. A.3.6]{waldmann:2007a}, gibt eine Charakterisierung von
Differentialoperatoren auf $C^\infty(M)$ und liefert gleichzeitig den
Anschluss an die elementare analytische Definition von
Differentialoperatoren auf dem $\mathbb{R}^n$.

\begin{proposition}
   \label{prop:CharDiffop}
   Sei $M$ eine Mannigfaltigkeit und sei $D \colon C^\infty(M) \to
   C^\infty(M)$ eine lineare Abbildung. Dann gilt genau dann $D \in
   \mathrm{DiffOp}^k(C^\infty(M))$, wenn es einen Atlas von $M$ gibt,
   so dass für jede Karte $(U,x)$ dieses Atlas Funktionen $D_U^{i_1\dots
     i_r}\in C^\infty(U)$ für $r = 0,\dots,k$ derart existiert, dass
   \begin{align}
      \label{eq:CharDiffop}
      Df\at{U} = \sum_{r = 0}^k \frac{1}{r!} D_U^{i_1\dots i_r}
      \frac{\partial^r f\at{U}}{\partial x^{i_1}\dotsm \partial x^{i_r}}
   \end{align}
   gilt.
\end{proposition}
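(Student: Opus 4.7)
Der Beweis verläuft in beiden Richtungen durch Induktion nach $k$. Die Rückrichtung ist der einfachere Teil: Ist $D$ in jeder Karte $(U,x)$ in der angegebenen Form dargestellt, so berechnet man für beliebiges $a \in C^\infty(M)$ direkt mittels der Leibniz-Regel für partielle Ableitungen, dass $[D,\mathsf{L}_a]\at{U}$ erneut eine lokale Darstellung dieser Gestalt besitzt, diesmal jedoch nur bis zur Ordnung $k-1$, da sich die Terme höchster Ordnung im Kommutator gerade aufheben. Der Induktionsanfang $k=0$ ist trivial, denn $D$ ist dann Multiplikation mit $D^{\emptyset}_U$. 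Per Induktion folgt so $D \in \mathrm{DiffOp}^k(C^\infty(M))$.

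Für die Hinrichtung würde ich in drei Teilschritten vorgehen. Der erste Schritt ist der Nachweis der \emph{Lokalität}: Verschwindet $f \in C^\infty(M)$ in einer offenen Umgebung von $p$, so gilt $(Df)(p) = 0$. Dazu wählt man eine glatte Abschneidefunktion $\chi$ mit $\chi(p) = 0$, die außerhalb einer kleineren Umgebung von $p$ gleich Eins ist, sodass $f = \chi f$. Iteriertes Herausziehen von $\chi$ mittels der Rekursionsformel $D(\chi h) = \chi D(h) + [D,\mathsf{L}_\chi](h)$ liefert $(Df)(p)$ als Summe, deren Terme entweder den in $p$ verschwindenden Faktor $\chi$ tragen oder den $(k+1)$"=fach iterierten Kommutator $[\dotsm[D,\mathsf{L}_\chi],\dotsc,\mathsf{L}_\chi]$ enthalten, der nach Definition ein Differentialoperator negativer Ordnung, also der Nulloperator, ist.

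Im zweiten Schritt überträgt man dieses Argument auf die schärfere Aussage, dass $(Df)(p)$ nur von den partiellen Ableitungen von $f$ in $p$ bis zur Ordnung $k$ abhängt: Verschwindet $f$ samt allen diesen Ableitungen in $p$, so lässt sich $f$ lokal als endliche Summe von Produkten $a_0 a_1 \dotsm a_k \cdot h$ mit in $p$ verschwindenden $a_i$'s schreiben. Erneutes Herausziehen der Faktoren via $D(a_i \tilde h) = a_i D(\tilde h) + [D,\mathsf{L}_{a_i}](\tilde h)$ ergibt Summanden, die entweder einen $a_i$"=Faktor im Wert bei $p$ enthalten oder den iterierten Kommutator $[\dotsm[D,\mathsf{L}_{a_0}],\dotsc,\mathsf{L}_{a_k}]$ der Ordnung $-1$, also somit verschwinden.

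Im dritten Schritt liest man die Koeffizienten $D^{i_1 \dotsm i_r}_U(p)$ durch Auswertung von $D$ auf den glatt nach $M$ fortgesetzten Taylormonomen $(x^{j_1} - x^{j_1}(p)) \dotsm (x^{j_s} - x^{j_s}(p))$ ab; die Glattheit in $p$ folgt aus der Glattheit dieser Fortsetzungen als Funktionen von $p$. Die Hauptschwierigkeit sehe ich im ersten Schritt beim Lokalitätsbeweis und der Buchführung über die iterierten Kommutatoren mit der Abschneidefunktion; die beiden verbleibenden Schritte sind dann im Wesentlichen Taylor"=Entwicklung und lineare Algebra.
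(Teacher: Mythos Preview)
Your proposal is correct and follows the standard argument (locality via iterated commutators with a bump function, then Peetre-type dependence on $k$-jets via Hadamard's lemma, then reading off the coefficients). The paper itself does not prove this statement but simply refers to \cite[Satz~A.3.5, Prop.~A.3.6]{waldmann:2007a}, where precisely this line of reasoning is carried out; so your sketch matches the intended proof.
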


\begin{definition}[Lokale Abbildung]
   \label{def:LokaleAbbildung}
   Ist $M$ eine Mannigfaltigkeit, so heißt eine lineare Abbildung $D
   \colon C^\infty(M) \to C^\infty(M)$ \neuerBegriff{lokal}, falls für
   alle $f \in C^\infty(M)$
   \begin{align}
      \label{eq:LokaleAbbildung}
      \supp Df \subset \supp f
   \end{align}
   erfüllt ist.
\end{definition}

\begin{proposition}
   \label{prop:DiffopsLokal}
   Sei $M$ eine Mannigfaltigkeit, dann ist jeder Differentialoperator
   auf $C^\infty(M)$  lokal.
\end{proposition}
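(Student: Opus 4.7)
Der Plan ist, die lokale Darstellung aus Proposition \ref{prop:CharDiffop} direkt auszunutzen. Sei $D \in \mathrm{DiffOp}^k(C^\infty(M))$ mit $k \in \mathbb{N}$ und $f \in C^\infty(M)$ beliebig. Zu zeigen ist $\supp Df \subset \supp f$. Dies ist äquivalent zu der Aussage, dass für jedes $p \in M\setminus \supp f$ auch $p \notin \supp Df$ gilt, also dass $Df$ auf einer offenen Umgebung von $p$ verschwindet.

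Sei also $p \in M\setminus \supp f$ beliebig. Da $M\setminus \supp f$ offen ist, gibt es eine offene Umgebung $V \subset M\setminus \supp f$ von $p$, auf der $f$ identisch verschwindet. Wähle nun aus dem in Proposition \ref{prop:CharDiffop} gegebenen Atlas eine Karte $(U,x)$ mit $p \in U$ und setze $W := V \cap U$. Dann ist $W$ eine offene Umgebung von $p$ in $M$ mit $f\at{W} = 0$, und da $W$ offen in $U$ ist, verschwinden dort auch sämtliche partiellen Ableitungen, d.\,h.\ $\frac{\partial^r f\at{U}}{\partial x^{i_1} \dotsm \partial x^{i_r}}\at{W} = 0$ für alle $r \in \{0,\dots,k\}$ und alle Indexkombinationen.

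Mit der lokalen Darstellung
\begin{align*}
   Df\at{U} = \sum_{r=0}^k \frac{1}{r!} D_U^{i_1\dots i_r} \frac{\partial^r f\at{U}}{\partial x^{i_1}\dotsm \partial x^{i_r}}
\end{align*}
folgt unmittelbar $Df\at{W} = 0$. Da $W$ eine offene Umgebung von $p$ ist, auf der $Df$ verschwindet, erhält man $p \notin \supp Df$, was zu zeigen war.

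Eine wesentliche Hürde sehe ich bei diesem Vorgehen nicht, da die nichttriviale Arbeit bereits in Proposition \ref{prop:CharDiffop} geleistet wurde. Ein alternativer Weg, der ohne diese Charakterisierung auskommt, bestünde in einer Induktion über die Ordnung $k$: Für $k = 0$ ist $D = \mathsf{L}_{D(1)}$ Multiplikation mit einer festen Funktion und damit trivialerweise lokal; für den Induktionsschritt wählt man eine glatte Abschneidefunktion $\chi$ mit $\supp \chi \subset V$ und $\chi(p) = 1$, benutzt $f = (1-\chi)f$ (wegen $\chi f = 0$) und zerlegt $Df(p) = (1-\chi)(p)\,Df(p) + ([D,\mathsf{L}_{1-\chi}]f)(p)$. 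Der erste Summand verschwindet wegen $(1-\chi)(p) = 0$, und auf den Kommutator, der nach Definition in $\mathrm{DiffOp}^{k-1}(C^\infty(M))$ liegt, lässt sich die Induktionsvoraussetzung anwenden.
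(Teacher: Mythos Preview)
Both arguments you give are correct as written. The paper itself supplies no proof here; it merely refers to \cite[Lemma~A.3.2]{waldmann:2007a}.

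There is, however, a logical-order issue with your main argument that is worth pointing out. In Waldmann's development the numbering is A.3.2 (locality) $\to$ A.3.3 (restriction to open subsets) $\to$ A.3.5/A.3.6 (local coordinate form), and the latter two genuinely use the former: one first needs locality to make sense of restricting $D$ to a chart domain before one can write down the coordinate expression. The paper happens to list Proposition~\ref{prop:CharDiffop} before Proposition~\ref{prop:DiffopsLokal}, but both are proved only by citation to Waldmann, so tracing through the references your main argument is circular.

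Your alternative inductive argument via a bump function and the commutator $[D,\mathsf{L}_{1-\chi}]\in\mathrm{DiffOp}^{k-1}$ is exactly the approach taken in \cite[Lemma~A.3.2]{waldmann:2007a} and avoids this dependency. So the two proofs you present should swap roles: the inductive one is the primary, self-contained argument, and the coordinate argument is the easy corollary once Proposition~\ref{prop:CharDiffop} is available.
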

\begin{proofklein}
   Siehe \cite[Lemma A.3.2]{waldmann:2007a}.
\end{proofklein}

\begin{proposition}[Einschränkbarkeit lokaler Abbildungen]
   \label{prop:einschrLokalerAbbbildungen}
   Ist $M$ eine Mannigfaltigkeit und $D \colon C^\infty(M) \to
   C^\infty(M)$ eine lokale lineare Abbildung, so gibt es für jedes offene $U
   \subset M$ eine eindeutig bestimmte lokale lineare Abbildung $D_U \colon
   C^\infty(U) \to C^\infty(U)$ mit $D_M = D$ und
   \begin{align}
      \label{eq:einschrLokalerAbbildungen}
      D_V f\at{V} = (D_U f)\at{V}
   \end{align}
   für jedes offene $V \subset U$ und alle $f \in C^\infty(U)$. Ist $D
   \in \mathrm{DiffOp}^k(C^\infty(M))$ sogar ein Differentialoperator, so
   folgt $D_U \in \mathrm{DiffOp}^k(C^\infty(U))$ für alle offenen
   Teilmengen $U \subset M$.
\end{proposition}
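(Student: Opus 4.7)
The plan is to prove existence of $D_U$ by a standard bump-function extension, then deduce uniqueness and the compatibility conditions directly from the locality hypothesis, and finally observe that the differential-operator order is preserved because extension and restriction are a purely local procedure.

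First I would construct $D_U$ as follows. Given $f \in C^\infty(U)$ and $p \in U$, choose an open neighborhood $V_p \subset U$ of $p$ with $\overline{V_p} \subset U$ and pick $\chi_p \in C^\infty(M)$ with $\chi_p \equiv 1$ on a neighborhood of $p$ and $\supp \chi_p \subset U$. Set $\tilde f_p := \chi_p f$, extended by zero outside $U$, which is a well-defined element of $C^\infty(M)$, and define
\[
  (D_U f)(p) := (D \tilde f_p)(p).
\]
The first key step is well-definedness: if $(\chi_p',V_p')$ is a second choice, then $\tilde f_p - \tilde f_p'$ vanishes on a neighborhood of $p$, and Proposition~\ref{prop:DiffopsLokal}-style locality of $D$ forces $D(\tilde f_p - \tilde f_p')$ to vanish on that neighborhood, hence at $p$. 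Linearity of $D_U$ is immediate from this pointwise formula, and smoothness of $D_U f$ follows because on a neighborhood of $p$ the value $(D_U f)(q)$ coincides with $(D\tilde f_p)(q)$ for all $q$ near $p$, which is smooth.

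Next I would check that $D_U$ is local, that $D_M = D$, and that the compatibility $D_V f|_V = (D_U f)|_V$ holds for every open $V \subset U$. The identity $D_M = D$ is clear since for $U = M$ one may take $\chi_p \equiv 1$. The compatibility again reduces to locality: for $p \in V$ any admissible extension $\tilde f_p$ of $f|_V$ (as a function on $V$) agrees near $p$ with any admissible extension of $f$ (as a function on $U$), so both constructions produce the same value $(D\tilde f_p)(p)$. Locality of $D_U$ follows directly from the definition since $\tilde f_p$ may be chosen with support in any prescribed neighborhood of $p$ inside $U$. Uniqueness then proceeds by a short argument: if $D_U'$ is a second candidate, then for $f \in C^\infty(U)$ and $p \in U$ we pick $V \ni p$ and $\chi_p$ as before; the compatibility applied to $\tilde f_p \in C^\infty(M)$ together with $V \subset U$ gives $(D_U' f)(p) = (D_V (f|_V))(p) = (D\tilde f_p)(p) = (D_U f)(p)$.

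For the differential-operator part, I would invoke Proposition~\ref{prop:CharDiffop}. Given an atlas of $M$ realizing $D$ in the form~\eqref{eq:CharDiffop} with coefficients $D_U^{i_1 \dots i_r}$, the induced atlas of $U$ (restricting charts to $U$) makes the local formula for $D_U$ coincide with that of $D$ on those charts, because by construction $D_U f|_{\mathrm{chart}} = (D\tilde f_p)|_{\mathrm{chart}}$ for any $p$ in the chart and any extension $\tilde f_p$ that agrees with $f$ on the chart. Hence $D_U \in \mathrm{DiffOp}^k(C^\infty(U))$.

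The only mildly subtle point is the well-definedness check at the very beginning; every subsequent assertion reduces to that single application of locality. I therefore expect no serious obstacle beyond carefully separating the two roles of locality (showing that the pointwise definition does not depend on the chosen extension, and showing compatibility between $D_U$ and $D_V$ for $V \subset U$).
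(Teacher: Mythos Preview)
Your argument is correct and is exactly the standard bump-function construction; the paper itself does not spell out a proof at all but simply refers to \cite[Prop.~A.3.3]{waldmann:2007a}, where essentially the same argument is given. There is nothing to add.
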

\begin{proofklein}
   Siehe \cite[Prop. A. 3.3]{waldmann:2007a}
\end{proofklein}

\begin{proposition}[Fortsetzung von Differentialoperatoren]
   \label{prop:VortsetzungVonDiffops}
   Sei $M$ eine Mannigfaltigkeit und $U \subset M$ offen, $k \in
   \mathbb{Z}$ und $D_U \in \mathrm{DiffOp}^k(C^\infty(U))$ ein
   Differentialoperator. Weiter sei $\chi_U \colon M \to \mathbb{R}$ eine
   glatte Funktion mit $\supp \chi_U \subset U$ und $\kIn_U \colon U
   \hookrightarrow M$ bezeichne die kanonische Inklusion. Dann ist
   $\chi_U D_U \circ \kIn_U^* \in \mathrm{DiffOp}^k(C^\infty(M))$.
\end{proposition}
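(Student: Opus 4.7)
Mein Plan ist es, die Behauptung durch vollständige Induktion über die Ordnung $k$ zu zeigen. Zur Abkürzung setze ich $A := \chi_U D_U \circ \kIn_U^*$ und arbeite die nötigen Eigenschaften schrittweise heraus.

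Zunächst muss gezeigt werden, dass $A$ tatsächlich $C^\infty(M) \to C^\infty(M)$ abbildet. Für $f \in C^\infty(M)$ ist $D_U(\kIn_U^* f) = D_U(f\at{U}) \in C^\infty(U)$, und damit $\chi_U\at{U}\cdot D_U(f\at{U}) \in C^\infty(U)$. Diese Funktion verschwindet auf $U\setminus \supp\chi_U$. Wegen $\supp \chi_U \subset U$ überdecken die offenen Mengen $U$ und $M\setminus \supp\chi_U$ ganz $M$, und auf ihrem Durchschnitt stimmt die Fortsetzung durch $0$ mit der Definition auf $U$ überein. Folglich erhält man eine wohldefinierte glatte Funktion $Af \in C^\infty(M)$, und aus der Linearität von $D_U$ sowie der Linearität von $\kIn_U^*$ und der Multiplikation mit $\chi_U$ folgt die $\mathbb{R}$-Linearität von $A$.

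Für den Induktionsbeweis ist der Basisfall $k < 0$ trivial, denn dort ist $D_U = 0$, also $A = 0 \in \mathrm{DiffOp}^k(C^\infty(M))$. Für den Induktionsschritt von $k-1$ auf $k$ berechne ich für beliebiges $a \in C^\infty(M)$ den Kommutator $[A,\mathsf{L}_a]$. Unter Ausnutzung von $\kIn_U^*(af) = a\at{U}\cdot f\at{U}$ gilt auf $U$
\begin{align*}
   [A,\mathsf{L}_a](f)\at{U} &= \chi_U\at{U}\bigl( D_U(a\at{U}\cdot f\at{U}) - a\at{U}\cdot D_U(f\at{U})\bigr) \\
   &= \chi_U\at{U}\cdot [D_U,\mathsf{L}_{a\at{U}}](f\at{U})\Fcom
\end{align*}
während $[A,\mathsf{L}_a](f)$ auf $M\setminus \supp\chi_U$ verschwindet. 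Zusammen ergibt dies
\begin{align*}
   [A,\mathsf{L}_a] \;=\; \chi_U\cdot [D_U,\mathsf{L}_{a\at{U}}]\circ \kIn_U^* \Fdot
\end{align*}
Nach Definition von $\mathrm{DiffOp}^k(C^\infty(U))$ ist $[D_U,\mathsf{L}_{a\at{U}}] \in \mathrm{DiffOp}^{k-1}(C^\infty(U))$, und nach Induktionsvoraussetzung (bei fest gewähltem $\chi_U$, Induktion nur über die Ordnung) liegt die rechte Seite in $\mathrm{DiffOp}^{k-1}(C^\infty(M))$. Folglich ist $A \in \mathrm{DiffOp}^k(C^\infty(M))$.

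Eine wesentliche Schwierigkeit erwarte ich nicht; das Argument besteht im Kern aus sorgfältiger Buchführung. Am ehesten verlangt die Wohldefiniertheit samt Glattheit von $Af$ ein wenig Aufmerksamkeit, denn hierfür geht wesentlich $\supp \chi_U \subset U$ ein -- nur so passen die beiden lokalen Definitionen auf $U$ und auf $M\setminus \supp\chi_U$ auf ihrem Überschneidungsbereich zusammen. Die Kommutatorrechnung selbst ist rein algebraisch und verwendet lediglich die Leibnizartige Struktur in der Definition der Differentialoperatoren.
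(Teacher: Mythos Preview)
Your proof is correct and follows essentially the same approach as the paper: induction on $k$, computing the commutator $[A,\mathsf{L}_a]$ and recognizing it again as $\chi_U\cdot(\text{lower-order operator on }U)\circ\kIn_U^*$. You are in fact slightly more careful than the paper, since you spell out the well-definedness and smoothness of $Af$ on $M$, which the paper leaves implicit.
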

\begin{proofklein}
   Wir zeigen die Aussage durch Induktion nach $k$. Für $k < 0$ ist
   sie trivial. Es gelte die Aussage also für ein $k-1$ mit $ k \in
   \mathbb{N}$. Wir zeigen, dass daraus folgt, dass sie auch schon für
   $k$ gilt. Dies ist aber klar nach der Definition von
   Differentialoperatoren, denn seien $f,g \in C^\infty(M)$, dann gilt
   \begin{align*}
      [\chi_U D_U \circ \kIn_U^*,\mathsf{L}_f]g &= \chi_U
      D_U(\kIn_U^*f \cdot
      \kIn_U^* g) - f \chi_U D_U(\kIn_U^* g) \\
      &= \chi_U(D_U(\kIn_U^* f  \cdot \kIn_U^*g)- \kIn_U^*f D_U(\kIn_U^*g)) =
      \chi_U([D_U,\mathsf{L}_{\kIn_U^* f}]\kIn_U^* g) \Fdot
   \end{align*}
   Damit folgt schon die Behauptung, denn es gilt
   $[D_U,\mathsf{L}_{\kIn_U^* f}] \in
   \mathrm{DiffOp}^{k-1}(C^\infty(U))$ nach Voraussetzung und

     $ \chi_U[D_U,\mathsf{L}_{\kIn_U^* f}]\kIn_U^* \in
     \mathrm{DiffOp}^{k-1}(C^\infty(M))$ nach
   Induktionsannahme.
\end{proofklein}

\cleardoublepage

\chapter{Hauptfaserbündel}
\label{cha:Hauptfaserbuendel}

In diesem Kapitel möchten wir in aller Kürze die für das Verständnis
dieser Arbeit notwendigen Grundlagen über Hauptfaserbündel
zusammenstellen. Für Details und weiterführende Betrachtungen verweisen
wir auf die Literatur, etwa \cite{kobayashi.nomizu:1963a},
\cite{baum:2009}, \cite{sharpe1997differential}, \cite{michor:2007}.
Interessant im Hinblick auf Anwendung in der Physik sind auch
\cite{schottenloher:1995a}, \cite{bleecker1981gauge}, \cite{naber:1997a}
und \cite{naber:2000}.

\begin{definition}[Hauptfaserbündel]
   \label{def:Hauptfaserbuendel}
   Ein \neuerBegriff{(Links-)Hauptfaserbündel} $(P,\pi,B,G)$ mit
   \neuerBegriff{Totalraum} $P$, \neuerBegriff{Basis} $B$,
   \neuerBegriff{Fußpunktprojektion} $\pi$ und
   \neuerBegriff{Strukturgruppe} $G$ besteht aus zwei
   Mannigfaltigkeiten $P$ und $B$, einer Wirkung einer
   Lie-Gruppe $G$ auf $P$ sowie einer glatten Abbildung $\pi \colon P \to B$,
   so dass folgendes gilt:
   \begin{definitionEnum}
   \item %
      Für jedes $p \in P$ und $g \in G$ gilt $\pi(gp) = \pi(p)$.
   \item %
      Für jedes $b \in B$ gibt es eine offene Umgebung $U$ und einen
      $G$"=äquivarianten Diffeomorphismus $\Psi \colon \pi^{-1}(U) \to U
      \times G$, der eine lokale Trivialsisierung von $\pi$ über $U$ mit
      typischer Faser $G$ ist, d.\,h.\ dass das Diagramm
      \def\tA[#1]{A_{#1}}
\begin{equation}
      \begin{tikzpicture}[baseline=(current
    bounding box.center),description/.style={fill=white,inner sep=2pt}]
         \matrix (m) [matrix of math nodes, row sep=3.0em, column
         sep=3.5em, text height=1.5ex, text depth=0.25ex]
         {
\pi^{-1}(U) & U \times G \\
U &  \\
}; %

 \path[->] (m-1-1) edge node[left] {$\pi$} (m-2-1); %
\path[->] (m-1-1) edge node[auto]{$\Psi$}(m-1-2); %
\path[->] (m-1-2) edge node[auto]{$\mathrm{pr}_1$}(m-2-1); %
 \end{tikzpicture}
\end{equation}
      kommutiert. Dabei operiert $G$ auf $\pi^{-1}(U)$ durch
      Einschränkung der gegebenen $G$"=Wirkung und auf $U \times G$ durch
      $g(p,h) := (p,gh)$, d.\,h.\ durch Linksmultiplikation auf dem zweiten
      Faktor. Die Abbildung $\mathrm{pr}_1$ ist die kanonische
      Projektion auf den ersten Faktor. Das Paar $(U,\Psi)$ heißt auch
      \neuerBegriff{lokale Trivialsierung} des Hauptfaserbündels.
   \end{definitionEnum}
\end{definition}

\begin{bemerkung}
   \label{bem:Hauptfaserbuendel}
   \begin{bemerkungEnum}
      \item %
         Analog zum Links"=Hauptfaserbündel werden auch
         \neuerBegriff{Rechts"=Haupt\-fa\-ser\-bün\-del} definiert. Diese sind in
         der Literatur geläufiger.
      \item %
         Statt von Hauptfaserbündel spricht man auch von
         \neuerBegriff{Prinzipalfaserbündel}
         (vgl.\ \cite{schottenloher:1995a}) oder
         \neuerBegriff{Prinzipalbündel} (vgl.\ \cite{tomdieck:2000a})
         möchte man die Strukturgruppe $G$ betonen, sagt man auch
         $G$"=Hauptfaserbündel oder kurz $G$"=Bündel
         (vgl.\ \cite{naber:2000}).
      \item %
         In der Literatur wird oft noch gefordert, dass die Wirkung auf
         $P$ faserweise transitiv sei, oder auch, dass sie frei
         sei. Beides sind jedoch Eigenschaften, die es genügt in einer
         lokalen Trivialsierung zu prüfen und man erkennt unschwer, dass
         sie schon aus der gegebenen Definition folgen, vgl.\ auch
         \cite{naber:1997a},\cite{greub1973connections}.
      \item %
         Ähnlich wie bei Vektorbündeln gibt es auch bei Hauptfaserbündeln
         konstruktive Charakterisierungen vermöge Kozykeln. Siehe etwa
         \cite{baum:2009} für drei äquivalente Definitionen von
         Hauptfaserbündel. Man beachte, dass diese Autorin fordert, dass
         die Wirkung einfach transitiv auf den Fasern sei. Dies ist aber
         äquivalent dazu, dass sie frei ist und transitiv auf den Fasern.
   \end{bemerkungEnum}
\end{bemerkung}

Die nächste Proposition und der darauffolgende Satz klären die wichtige
Beziehung zwischen Hauptfaserbündeln und Quotienten freier und
eigentlicher Gruppenwirkungen.

\begin{proposition}
   \label{prop:HauptfaserBuendelFreiUndEigentlich}
   Sei $(P,\pi,B,G)$ eine Hauptfaserbündel. Dann ist die $G$"=Wirkung auf
   $P$ frei und eigentlich.

\end{proposition}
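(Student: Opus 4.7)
The plan is to reduce both statements (freeness and properness) to the corresponding statements for the trivial principal bundle $U \times G \to U$ via the local trivializations, exploiting their $G$-equivariance.

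First, for freeness, I would argue as follows. Suppose $g \in G$ and $p \in P$ satisfy $gp = p$. Choose a local trivialization $(U,\Psi)$ with $\pi(p) \in U$ and write $\Psi(p) = (u,h) \in U \times G$. By $G$-equivariance of $\Psi$ one has $\Psi(gp) = g \cdot \Psi(p) = (u, gh)$, so the hypothesis $gp = p$ forces $(u,gh) = (u,h)$, hence $gh = h$ and thus $g = e$. So the action is free.

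For properness I would use the sequential criterion (valid because $G$ and $P$ are second-countable manifolds): it suffices to show that whenever $(g_n)_{n \in \mathbb{N}} \subset G$ and $(p_n)_{n \in \mathbb{N}} \subset P$ satisfy $p_n \to p$ and $g_n p_n \to q$ in $P$, some subsequence of $(g_n)$ converges in $G$. First, since $\pi$ is continuous and $G$-invariant, $\pi(q) = \lim \pi(g_n p_n) = \lim \pi(p_n) = \pi(p)$, so $p$ and $q$ lie over the same base point $b_0 \in B$. Choose a local trivialization $(U,\Psi)$ with $b_0 \in U$. For $n$ large, both $p_n$ and $g_n p_n$ lie in $\pi^{-1}(U)$, and writing $\Psi(p_n) = (u_n, h_n)$ we get by $G$-equivariance $\Psi(g_n p_n) = (u_n, g_n h_n)$. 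Continuity of $\Psi$ and $\Psi^{-1}$ gives $h_n \to h$ and $g_n h_n \to h'$ where $\Psi(p) = (b_0, h)$ and $\Psi(q) = (b_0, h')$. Since multiplication and inversion in $G$ are continuous, $g_n = (g_n h_n) h_n^{-1} \to h'h^{-1}$, which establishes the required convergence.

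There is no real obstacle here; the only thing to be mindful of is the justification of the sequential criterion for properness, which requires $G \times P$ and $P \times P$ to be metrizable (equivalently, first countable). This is automatic from the standing convention that manifolds are smooth, second-countable, and Hausdorff. The whole argument is essentially the trivial observation that the $G$-action on $U \times G$ is just left translation on the second factor, which is manifestly free and proper with inverse map $(a,b) \mapsto (ab^{-1}, b)$; the equivariant trivializations let us transport these two properties back to $P$.
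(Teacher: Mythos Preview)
Your proof is correct and follows exactly the approach the paper sketches: the paper's own proof merely refers back to the earlier remark that freeness can be checked in a local trivialization, and for properness invokes the Folgenkriterium (the sequential characterization of properness, stated as Lemma~\ref{lem:eigentlicheWirkungCharakterisierung} in the paper) with a reference to Sharpe. You have supplied precisely the details of that argument, including the key observation that $\pi(q)=\pi(p)$ so that a single trivialization handles both limit points.
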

\begin{proofklein}
   In Bemerkung \ref{bem:Hauptfaserbuendel} wurde schon gesagt, dass
   die Wirkung frei ist. Dass die Wirkung eigentlich ist, ergibt sich
   durch eine einfache Anwendung des Folgenkriteriums für eigentliche
   Wirkungen, siehe \cite[Ch. 4, Prop. 2.3]{sharpe1997differential}.

\end{proofklein}

\begingroup
\emergencystretch=0.8em
\begin{satz}
   \label{satz:FreiEigentlichHauptfaserBuendel}
   Sei $P$ eine Mannigfaltigkeit, $G$ eine Lie-Gruppe, die auf $M$
   frei und eigentlich wirke. Dann ist $P/G$ eine topologische
   Mannigfaltigkeit der Dimension $\dim P - \dim G$ und besitzt eine
   eindeutig bestimmte glatte Struktur, so dass die kanonische
   Projektion $\pi \colon P \to P/G $
   auf den Quotienten  eine glatte Submersion wird.
   Ferner ist $(P,\pi,P/G,G)$ ein Hauptfaserbündel.
\end{satz}
\begin{proofklein}
   Für einen Beweis dieses nicht ganz trivialen Satzes verweisen wir auf
   \cite[Satz 3.3.18]{waldmann:2007a}, \cite[Thm. 7.10]{lee:2003a} sowie
   \cite[Prop. 4.1.23]{abraham.marsden:1985a} und \cite[Appendix E,
   Thm. 4.2.4]{sharpe1997differential}.

\end{proofklein}
\endgroup

\begin{definition}[Vertikalbündel]
   \label{def:Vertikalbuendel}
   Ist $(P,\pi,B,G)$ ein Hauptfaserbündel, so heißt das
   Untervektorbündel $VP := \ker T \pi \subset TP$ des Tangentialbündels
   $TP$ von $P$ \neuerBegriff{Vertikalbündel}. Die Fasern von $VP$
   heißen \neuerBegriff{Vertikalräume}. Ein Vektorfeld $Y \in
   \Gamma^\infty(TP)$ auf $P$ heißt \neuerBegriff{vertikal}, falls $Y(p)
   \in V_pP$ für alle $p \in P$ gilt, die bedeutet $Y \in \Gamma^\infty(VP)$.
\end{definition}

\begin{definition}[Hauptfaserbündelzusammenhang]
   \label{def:Horizontalbuendel}

   Ist $(P,\pi,B,G)$ ein Hauptfaserbündel, so nennt man ein
   Untervektorbündel $HP$ von $TP$ \neuerBegriff{Horizontalbündel}
   oder auch \neuerBegriff{Hauptfaserbündelzusammenhang}, falls es ein
   Komplement von $VP$ in $TP$ ist, d.\,h.\ wenn $TP = VP \oplus HP$ gilt
   und falls es invariant unter der gelifteten $G$"=Wirkung ist. Ist
   $\Phi \colon G \times P \to P$ die $G$"=Wirkung auf $P$, so bedeutet
   dies, dass $T_p\Phi_g H_pP \subset H_{gp}P$ gilt für alle $p \in P$
   und $g \in G$. Ein Vektorfeld $Y \in \Gamma^\infty(TP)$ auf $P$
   heißt \neuerBegriff{horizontal}, falls $Y(p) \in H_pP$ für alle $p
   \in P$ gilt, dies bedeutet kurz $Y \in \Gamma^\infty(HP)$.
\end{definition}

Natürlich stellt sich unmittelbar die Frage ob auch immer ein
Hauptfaserbündelzusammenhang existiert. Dies beantwortet die folgende
Proposition.

\begin{proposition}[Existenz von Hauptfaserbündelzusammenhängen]
   \label{prop:ExistenzHauptfaserbuendelzusammenhang}
   Für jedes Hauptfaserbündel existiert ein Hauptfaserbündelzusammenhang.
\end{proposition}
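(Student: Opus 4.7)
The plan is to reformulate the problem in terms of connection one-forms on $P$ with values in $\lieAlgebra[g]$, construct such a form locally in a trivialisation, and then glue the local forms by a partition of unity on the base. Concretely, recall that a horizontal distribution $HP \subset TP$ complementary to $VP$ and $G$-invariant corresponds bijectively to a one-form $\omega \in \Gamma^\infty(T^*P)\otimes \lieAlgebra[g]$ satisfying
\begin{equation*}
   \omega(\xi_P) = \xi \quad \text{for all } \xi \in \lieAlgebra[g], \qquad \Phi_g^* \omega = \Ad_{g^{-1}} \circ \omega \quad \text{for all } g \in G,
\end{equation*}
via $HP := \ker \omega$. It therefore suffices to exhibit such an $\omega$.

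First I would handle the local case. For any trivialisation $\Psi \colon \pi^{-1}(U) \to U \times G$, the pullback $\omega_U := \Psi^*(\mathrm{pr}_G^* \theta_{MC})$ of the (left) Maurer--Cartan form of $G$ via the second projection is, by a direct computation using the $G$-equivariance of $\Psi$ and the identity $T_e L_g = \mathrm{Id}$ in Maurer--Cartan terms, a connection one-form on $\pi^{-1}(U)$ in the sense above. This gives existence over each trivialising neighbourhood.

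Second, I would glue. By Definition \ref{def:Hauptfaserbuendel} the base $B$ is covered by trivialising opens $\{U_i\}_{i\in I}$; since $B$ is a (paracompact) manifold we can pass to a locally finite refinement and pick a subordinate partition of unity $\{\chi_i\}_{i \in I}$ on $B$. Setting
\begin{equation*}
   \omega := \sum_{i \in I} (\pi^*\chi_i)\, \omega_{U_i}
\end{equation*}
defines a global, smooth $\lieAlgebra[g]$-valued one-form on $P$ (local finiteness passes through $\pi$ since $\supp(\pi^*\chi_i) \subset \pi^{-1}(\supp \chi_i)$). The two defining properties are preserved under convex combinations: for $\xi \in \lieAlgebra[g]$ one has $\omega(\xi_P) = \sum_i (\pi^*\chi_i)\xi = \xi$ because $\sum_i \chi_i = 1$, and $G$-equivariance follows because each $\omega_{U_i}$ is equivariant and the coefficients $\pi^*\chi_i$ are $G$-invariant (since $\chi_i$ lives on the base and $\pi$ is $G$-invariant). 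Then $HP := \ker \omega$ is the desired horizontal bundle.

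The only mildly delicate point is the local construction in step one: one must verify that the Maurer--Cartan pullback really yields the required fundamental-vector-field and adjoint-equivariance identities. This is a routine check using $T_p \Phi_g \xi_P(p) = (\Ad_g \xi)_P(gp)$ together with the definition of the Maurer--Cartan form; the partition-of-unity argument in step two then poses no further obstacle, because the conditions on $\omega$ are affine and $\{\chi_i\}$ is a partition of unity.
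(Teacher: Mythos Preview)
Your approach—local construction via a Maurer--Cartan pullback and gluing with a partition of unity on the base—is exactly what the paper sketches (it gives only this outline and refers to Kobayashi--Nomizu and Naber for the details). One caveat: in the paper's \emph{left}-principal-bundle convention the equivariance condition reads $\Phi_g^*\gamma = \Ad_g \circ \gamma$ (Definition~\ref{def:Zusammenhangseinsform}), not $\Ad_{g^{-1}}$, and the correct local model is the pullback of the \emph{right} Maurer--Cartan form; indeed the fundamental vector field at $(u,h)\in U\times G$ is $(0,(R_h)_*\xi)$, on which the left Maurer--Cartan form returns $\Ad_{h^{-1}}\xi$ rather than $\xi$. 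This is only a convention fix—the convexity/partition-of-unity argument goes through unchanged once you swap in the right Maurer--Cartan form.
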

\begin{proofklein}
   Man konstruiert den Zusammenhang lokal und globalisiert dann unter
   Verwendung einer Zerlegung der Eins.  Siehe \cite[Thm. II
   2.1]{kobayashi.nomizu:1963a} oder auch \cite[Thm. 3.1.7]{naber:2000}
   für Details.
\end{proofklein}

Eine für diese Arbeit relevante Charakterisierung von
Hauptfaserbündelzusammenhängen lässt sich in Termen Lie-Algebra-wertiger
Einsformen geben.

\begin{definition}[Zusammenhangseinsform]

   \label{def:Zusammenhangseinsform}
   Sei $(P,\pi,B,G)$ ein Hauptfaserbündel und sei $\lieAlgebra$ die
   Lie-Algebra von $G$.
   Eine Lie"=Algebra-wertige Einsform
   \begin{align*}
      \gamma \colon TP \to \lieAlgebra
   \end{align*}
   heißt \neuerBegriff{Zusammenhangseinsform}, falls die folgenden
   Bedingungen gelten.
   \begin{definitionEnum}
   \item %
      Für alle $\xi \in \lieAlgebra$ und $p \in P$ gilt
      $\gamma(\xi_P(p))=\xi$.
   \item %
      $\gamma$ ist $G$"=äquivariant, d.\,h.\ es gilt $\gamma(T_p\Phi_g v) =
      \Ad_g(\gamma(v))$ für alle $g \in G$, $p \in P$ und $v \in T_pP$.
   \end{definitionEnum}
\end{definition}

\begin{proposition}
   \label{prop:HauptfaserBuendelZusammenhang}
   Sei $(P,\pi,B,G)$ ein Hauptfaserbündel, dann sind die folgenden
   Aussagen richtig.
   \begin{propositionEnum}
      \item %
         Sei $HP \subset TP$ ein Hauptfaserbündelzusammenhang für
         $P$. Dann gibt es eine eindeutig bestimmte
         Zusammenhangseinsform $\gamma_{HP} \colon TP \to \lieAlgebra$ mit
         \begin{align}
            \label{eq:ZusammenhangZuZusammenhangseinsform}
            \gamma_{HP}(\xi_P(p) + Y) = \xi \quad \forall p \in P, \xi \in
            \lieAlgebra, Y \in T_pH \Fdot
         \end{align}
         \item %
            Sei $\gamma \colon TP \to \lieAlgebra$ eine
            Zusammenhangseinsform. Dann definiert $HP_{\gamma} := \ker \gamma$
            ein Hauptfaserbündelzusammenhang.
         \item %
            Ist $HP$ ein Hauptfaserbündelzusammenhang, so gilt
            $HP_{\gamma_{HP}} = HP$. Ist $\gamma$ eine
            Zusammenhangseinsform, so gilt $\gamma_{HP_{\gamma}} = \gamma$.
   \end{propositionEnum}
\end{proposition}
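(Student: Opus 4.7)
The plan is to exploit the fact that, for a free action, the fundamental vector field map gives a canonical isomorphism $\lieAlgebra \to V_pP$, $\xi \mapsto \xi_P(p)$, for every $p \in P$; this will allow us to translate back and forth between horizontal complements and Lie-algebra-valued one-forms.

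For part (i), given a principal connection $HP$, every $v \in T_pP$ admits a unique decomposition $v = v^V + v^H$ with $v^V \in V_pP$ and $v^H \in H_pP$, because $TP = VP \oplus HP$ as vector bundles. Since the $G$-action is free (cf.\ Proposition~\ref{prop:HauptfaserBuendelFreiUndEigentlich}), the map $\lieAlgebra \ni \xi \mapsto \xi_P(p) \in V_pP$ is a linear isomorphism, so I can define $\gamma_{HP}(v) \in \lieAlgebra$ to be the unique element with $(\gamma_{HP}(v))_P(p) = v^V$. The defining equation \eqref{eq:ZusammenhangZuZusammenhangseinsform} then holds by construction, and the uniqueness statement is immediate because any one-form satisfying it is forced to take the given value on $V_pP$ and to vanish on $H_pP$. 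Smoothness follows from the smoothness of the projection along $HP$ together with the smoothness and pointwise linearity of $\xi \mapsto \xi_P$. For the equivariance axiom I would use the standard identity $T_p\Phi_g\, \xi_P(p) = (\Ad_g\xi)_P(gp)$, which, combined with $T_p\Phi_g (H_pP) \subset H_{gp}P$, yields $\gamma_{HP}\circ T_p\Phi_g = \Ad_g \circ \gamma_{HP}$.

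For part (ii), starting from a connection one-form $\gamma$ I set $HP_{\gamma} := \ker \gamma$. Fiberwise, $\gamma_p \colon T_pP \to \lieAlgebra$ is surjective (it already hits every $\xi$ on $\xi_P(p)$), hence $\ker\gamma_p$ has constant rank $\dim P - \dim G$, so $HP_\gamma$ is a smooth subbundle. To see that it is complementary to $VP$, I decompose $v \in T_pP$ as $v = (\gamma(v))_P(p) + \bigl(v - (\gamma(v))_P(p)\bigr)$; the first summand lies in $V_pP$ and the second in $\ker\gamma_p$ by the first axiom of a connection one-form. Triviality of the intersection $V_pP \cap \ker\gamma_p$ follows again from the freeness of the action via the same axiom. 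The $G$-invariance $T_p\Phi_g H_pP_\gamma \subset H_{gp}P_\gamma$ drops out of the equivariance of $\gamma$.

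For part (iii), both identities $HP_{\gamma_{HP}} = HP$ and $\gamma_{HP_\gamma} = \gamma$ are routine: in the first, $v \in HP$ iff its vertical part vanishes iff $\gamma_{HP}(v) = 0$; in the second, for any $v$ write $v = (\gamma(v))_P(p) + w$ with $w \in \ker\gamma = HP_\gamma$, and apply the defining formula \eqref{eq:ZusammenhangZuZusammenhangseinsform} to read off $\gamma_{HP_\gamma}(v) = \gamma(v)$. I expect no real obstacle here; the only delicate point is keeping the smoothness arguments clean, which can be handled either by trivializing the bundle locally or by noting that the projection onto $HP$ along $VP$ is a smooth vector bundle homomorphism whose composition with the (smooth) inverse of $\xi \mapsto \xi_P$ is exactly $\gamma_{HP}$.
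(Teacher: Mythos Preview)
Your proof is correct and complete; in fact the paper itself does not give a proof here but merely cites \cite[Satz 3.2]{baum:2009}, so you have supplied precisely the standard argument that the reference contains.
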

\begin{proofklein}
   Siehe \cite[Satz 3.2]{baum:2009}
\end{proofklein}
Aufgrund der letzten Proposition sind die Konzepte
Hauptfaserbündelzusammenhang und Zusammenhangseinsform äquivalent,
so dass wir die Indices aus Proposition
\ref{prop:HauptfaserBuendelZusammenhang} weglassen wollen und den von
einer Zusammenhangseinsform  $\gamma$ induzierten
Hauptfaserbündelzusammenhang einfach mit $HP$ bezeichnen statt mit
$HP_{\gamma}$, genauso umgekehrt.

Mit Hilfe eines Hauptfaserbündelzusammenhangs ist es möglich,
Vektorfelder von der Basismannigfaltigkeit auf den Totalraum zu heben.

\begin{definition}[Horizontale Hebung]
   \label{def:horizontaleHebung}
   Sei $\gamma \colon TP \to \lieAlgebra$ eine Zusammenhangseinsform
   eines Hauptfaserbündels $(P,\pi,B,G)$ und $X \in \Gamma^\infty(TB)$
   ein Vektorfeld auf der Basis $B$. Das eindeutig bestimmte horizontale
   Vektorfeld $X^{\mathsf{h}} \in \Gamma^\infty(TP)$, das zu $X$
   $\pi$"=verwandt ist, heißt \neuerBegriff{horizontale Hebung} oder auch
   \neuerBegriff{horizontaler Lift} von $X$. Es gilt also
   \begin{align}
      \label{eq:horizontaleHebung}
      \gamma \circ X^{\mathsf{h}} &= 0 \quad \text{und} \\
      T\pi \circ X^{\mathsf{h}} &= X \circ \pi \Fdot
   \end{align}
\end{definition}

\begin{lemma}
   \label{lem:Schnitte}
   Sei $\pi \colon E \to M$ ein Vektorbündel mit Faserdimension $k >
   0$. Weiter seien für $n \in \mathbb{N}$ $s_1,\dots,s_n \in
   \Gamma^\infty(E)$ glatte Schnitte, so dass
   $\mathbb{R}\textrm{-}\mathrm{Span}\{s_1(p),\dots,s_n(p)\} = E_p$ für
   alle $p \in M$. Dann ist auch
   $C^\infty(M)\textrm{-}\mathrm{Span}\{s_1,\dots,s_n\} =
   \Gamma^\infty(E)$.
\end{lemma}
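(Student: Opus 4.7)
The plan is to reduce the global statement to a local one via a partition of unity argument. Fix any section $s \in \Gamma^\infty(E)$. I would first show that every point $p \in M$ admits an open neighborhood $U_p$ on which $s|_{U_p}$ can be written as a $C^\infty(U_p)$-linear combination of the restrictions $s_1|_{U_p},\dots,s_n|_{U_p}$. Granted this, I would then pick a locally finite open cover $\{U_\alpha\}$ of $M$ by such neighborhoods, together with a subordinate smooth partition of unity $\{\chi_\alpha\}$. Writing $s|_{U_\alpha} = \sum_{i=1}^n f_{i,\alpha} s_i|_{U_\alpha}$ with $f_{i,\alpha} \in C^\infty(U_\alpha)$, the products $\chi_\alpha f_{i,\alpha}$ extend by zero to smooth functions on all of $M$, and setting $f_i := \sum_\alpha \chi_\alpha f_{i,\alpha} \in C^\infty(M)$ yields
\[
   \sum_{i=1}^n f_i s_i = \sum_\alpha \chi_\alpha \sum_{i=1}^n f_{i,\alpha} s_i = \sum_\alpha \chi_\alpha s = s,
\]
which is the claimed identity.

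The nontrivial ingredient is the local step. For $p \in M$ the spanning hypothesis gives $\dim_{\mathbb{R}} \mathrm{Span}\{s_1(p),\dots,s_n(p)\} = k$, so after reindexing there exist indices $i_1,\dots,i_k$ with $s_{i_1}(p),\dots,s_{i_k}(p)$ linearly independent in $E_p$. The set of points where a fixed $k$-tuple of sections is linearly independent is open: in a local trivialization of $E$ it is cut out by the non-vanishing of a smooth determinant, so on a suitable open neighborhood $U_p$ of $p$ the tuple $s_{i_1}|_{U_p},\dots,s_{i_k}|_{U_p}$ is a local frame of $E|_{U_p}$. Expanding $s|_{U_p}$ in this local frame produces unique coefficient functions $g_1,\dots,g_k$, and these are smooth because they are obtained by applying the $C^\infty$-linear dual frame (equivalently: by Cramer's rule from the determinant construction), which is a standard fact about local frames of vector bundles. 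Setting the remaining $n-k$ coefficient functions to zero gives the desired $f_{i,p} \in C^\infty(U_p)$ with $s|_{U_p} = \sum_i f_{i,p} s_i|_{U_p}$.

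The only potential obstacle is the smoothness of the local coefficients, but this is routine once one invokes the local trivialization, so the argument is essentially mechanical; the partition of unity globalization then requires only paracompactness of the smooth manifold $M$, which is standing. Hence the lemma reduces to two well-known facts: openness of the linear-independence locus of a tuple of smooth sections, and the existence of smooth partitions of unity subordinate to any open cover.
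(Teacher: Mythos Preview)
Your proof is correct and follows essentially the same approach as the paper: select at each point $p$ a subset of $k$ sections that are linearly independent there, use the openness of the linear-independence locus (via a determinant in a local trivialization) to obtain a local frame on a neighborhood, expand $s$ smoothly in this frame with the remaining coefficients set to zero, and then globalize with a partition of unity. The paper's argument is virtually identical in structure and detail.
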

\begin{proofklein}
   Sei $p \in M$ und $U$ eine offene Umgebung von $p$ auf der es lokale
   Basisschnitte $e_1,\dots,e_k \in \Gamma^\infty(E\at{U})$ gebe. Dann
   gibt es also für alle $r \in \{1,\dots,n\}$ glatte Funktionen $A_{rs}
   \colon U \to \mathbb{R}$ mit $s \in \{1,\dots,k\}$ so dass für alle
   $i \in \{1,\dots,n\}$ die Gleichung
   \begin{align*}
      s_i\at{U} = \sum_{s = 1}^k A_{is} e_s
   \end{align*}
   gilt.
   Wir können nach dem Basisauswahlsatz o.\,E.\ annehmen, dass
   $s_1(p),\dots,s_k(p)$ linear unabhängig sind. Dann ist die Matrix
   $A(p) \in \mathbb{R}^{k \times k}$ invertierbar, wobei die glatte
   matrixwertige Funktion $A$ durch
   \begin{align*}
      A \colon M \ni q \mapsto (A_{rs}(q))_{r,s=1,\dots,k} \in
      \mathbb{R}^{k \times k} \Fdot
   \end{align*}
   gegeben ist.  Da dies aufgrund der Stetigkeit der
   Determinantenfunktion eine offene Bedingung ist, gibt es eine offene
   Umgebung $V \subset U$ von $p$, sodass $A(q)$ auch für alle $q \in V$
   invertierbar ist. Dies bedeutet jedoch wiederum, dass die Vektoren
   $s_1(q),\dots,s_k(q)$ für alle $q \in V$ linear unabhängig sind. Ist
   nun $s \in \Gamma^\infty(E)$ ein glatter Schnitt, so folgt daraus,
   dass es glatte Funktionen $f^V_i \colon V \to \mathbb{R}$, $i =
   1,\dots,n$ gibt mit $s\at{V} = \sum_{i = 1}^n f^V_i s_i$, wobei
   $f_i^V = 0$ für $i > k$. Da $p$ beliebig war, gibt es also eine
   offene Überdeckung $\{V_j\}_{j \in J}$ derartiger offener Mengen
   $V_i$ mit zugehörigen Funktionen $f^{V_j}_i$ wie oben. Sei nun
   $\{\chi_j\}_{j \in J}$ eine Zerlegung der Eins mit $\supp \chi_j
   \subset V_j$. Dann gilt
   \begin{align*}
      s = \sum_{j \in J}\chi_j s\at{V_j} = \sum_{j \in J}\chi_j \sum_{i
        = 1}^n f^{V_j}_i s_i\at{V_j} = \sum_{i = 1}^n \sum_{j \in J}
      \chi_{j}f^{V_j}_i s_i = \sum_{i = 1}^n f_i s_i \Fcom
   \end{align*}
   wobei die glatten Funktionen $f_i$ durch $f_i := \sum_{j \in J}
   \chi_{j}f^{V_j}_i $ definiert sind. Damit ist die Aussage klar.
 \end{proofklein}

\begin{proposition}
   \label{prop:AngepassteBasen}
   Sei $(P,\pi,B,G)$ ein Hauptfaserbündel und $\gamma$ eine
   Zusammenhangseinsform. Dann sind die folgenden Aussagen richtig.
   \begin{propositionEnum}
   \item %
      Sei $\{e_i\}$ eine Basis von $\lieAlgebra$, dann sind die
      fundamentalen Vektorfelder $(e_i)_P$ globale Basisschnitte für
      $\Gamma^\infty(VP)$.

   \item %
      Für alle für alle $p \in P$ gilt $H_pP = \{X^{\mathsf{h}}(p) \mid
      X \in \Gamma^\infty(TB)\}$.
   \item %
      Es gilt $\Gamma^\infty(HP)^G = \{X^{\mathsf{h}} \mid X \in
        \Gamma^\infty(TB)\}$. Dabei bezeichnet $\Gamma^\infty(HP)^G :=
        \{s \in \Gamma^\infty(HP) \mid gs = s\}$ die $G$"=invarianten
        Schnitte auf dem Horizontalbündel, wobei $G$ auf
        $\Gamma^\infty(HP)$ vermöge $(gs)(p) := g(s(g^{-1}p))$ für alle
        $g \in G$ operiert.
   \end{propositionEnum}
\end{proposition}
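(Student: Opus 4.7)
The plan is to handle the three parts in sequence, using in each case the local structure of the principal bundle together with the splitting $TP = VP \oplus HP$ and the bijection $T_p\pi|_{H_pP} \colon H_pP \to T_{\pi(p)}B$.

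For part (i) I would first recall that, since the $G$-action on $P$ is free (by Proposition \ref{prop:HauptfaserBuendelFreiUndEigentlich}), the linear map $\lieAlgebra \to V_pP$, $\xi \mapsto \xi_P(p)$, is injective for every $p \in P$: if $\xi_P(p) = 0$, then $\exp(t\xi) \in G_p = \{e\}$ for all small $t$, so $\xi = 0$. Since both spaces have dimension $\dim G$ (the first by definition of $\lieAlgebra$, the second because $VP = \ker T\pi$ and $\pi$ is a submersion), the map is a linear isomorphism. Hence $\{(e_i)_P(p)\}_i$ is a basis of $V_pP$ at every $p$, so Lemma \ref{lem:Schnitte} (applied to $VP$) tells us that the $(e_i)_P$ form global generators of $\Gamma^\infty(VP)$ over $C^\infty(P)$, and they are pointwise linearly independent, i.e.\ a global basis.

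For part (ii) the inclusion ``$\supset$'' is immediate from the definition of the horizontal lift. For the reverse inclusion, fix $p \in P$ and $v \in H_pP$, set $w := T_p\pi(v) \in T_{\pi(p)}B$, and extend $w$ to a smooth vector field $X \in \Gamma^\infty(TB)$ with $X(\pi(p)) = w$ (pick any chart around $\pi(p)$, extend the constant vector field with coefficients $w$ in that chart, and globalize with a bump function). Then $X^{\mathsf{h}}(p)$ lies in $H_pP$ and satisfies $T_p\pi(X^{\mathsf{h}}(p)) = X(\pi(p)) = w = T_p\pi(v)$. Since $T_p\pi$ restricted to the complement $H_pP$ of $\ker T_p\pi = V_pP$ is injective, we conclude $X^{\mathsf{h}}(p) = v$.

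For part (iii) the inclusion ``$\supset$'' is the assertion that every horizontal lift is $G$-invariant: using the chain rule and $\pi \circ \Phi_g = \pi$ one checks that $g \cdot X^{\mathsf{h}}$ is again horizontal (by $G$-invariance of $HP$) and $\pi$-related to $X$ (since $T(\pi \circ \Phi_g) = T\pi$), so by uniqueness of the horizontal lift it equals $X^{\mathsf{h}}$. For ``$\subset$'', given $s \in \Gamma^\infty(HP)^G$ I would define a candidate field on the base by $X(b) := T_p\pi(s(p))$ for any $p \in \pi^{-1}(b)$. Well-definedness is the crucial check: if $p' = gp$, then $G$-invariance of $s$ gives $s(p') = T_p\Phi_g\, s(p)$, and applying $T_{p'}\pi$ together with $\pi \circ \Phi_g = \pi$ shows $T_{p'}\pi(s(p')) = T_p\pi(s(p))$. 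Smoothness of $X$ is then local: pick a local trivialization $\Psi \colon \pi^{-1}(U) \to U \times G$ with corresponding local section $\sigma \colon U \to P$; then $X\at{U} = T\pi \circ s \circ \sigma$ is a composition of smooth maps. Finally $X^{\mathsf{h}}$ and $s$ are both horizontal and both project to $X$ under $T\pi$, so they agree at every point by the injectivity of $T_p\pi|_{H_pP}$.

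I do not expect any real obstacle; the only subtle point is the well-definedness and smoothness of the induced vector field $X$ on the base in part (iii), which relies on using a local smooth section of $\pi$ (available since $P \to B$ is locally trivial) to pull $s$ down to $B$.
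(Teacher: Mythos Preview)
Your proposal is correct and follows essentially the same approach as the paper: freeness of the action plus dimension counting for (i), the isomorphism $T_p\pi|_{H_pP}\colon H_pP \to T_{\pi(p)}B$ for (ii) and (iii), and Lemma~\ref{lem:Schnitte} for the passage from pointwise bases to global frames. The paper's proof is much terser (part (ii) is a one-line dimension argument and part (iii) is literally ``Klar''), so your added detail on well-definedness and smoothness in (iii) is a genuine improvement in exposition rather than a different method.
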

\begin{samepage}

\begin{proofklein}
   \begin{beweisEnum}
   \item %
      Einerseits ist die Dimension jeder Faser gleich $\dim G$, wie man
      in lokaler Trivialisierung erkennt, sowie gleich der punktweisen
      Dimension des Kerns von $T\pi$ (vgl.\
      \cite[Lem. 8.15]{lee:2003a}), andererseits ist $\lieAlgebra \ni
      \xi \mapsto \xi_P \in \Gamma^\infty(TP)$ injektiv, da die
      $G$"=Wirkung frei ist (vgl.\ Prop. \ref{prop:lokalFrei}), d.\,h.\
      auch die Dimension des von den fundamentalen Vektorfeldern
      punktweise aufgespannten Raumes ist gleich $\dim G$. Da nach der
      Kettenregel offensichtlich für alle $p \in P$ $\xi_P(p) \in \ker
      T_p\pi$ gilt, spannen die fundamentalen Vektorfelder $\xi_P$
      punkteweise jede Faser von $\Gamma^\infty(VP)$ auf. Siehe auch
      \cite[vor Prop. I 5.1]{kobayashi.nomizu:1963a}. Da für jedes $p
      \in P$ die Vektoren $(e_i)_P(p)$ linear unabhängig sind und aus
      Dimensionsgründen demnach schon $V_pP$ aufspannen, ist die
      Behauptung mit Lemma \ref{lem:Schnitte} klar.
   \item %
      Dies ist aus Dimensionsgründen ebenfalls klar, da für $p \in M$
      $\dim B = \dim P - \dim G = \dim T_pP - \dim V_pP = \dim T_pH$ und
      da für jedes $X \in \Gamma^\infty(TB)$ die Beziehung $T_p\pi
      X^{\mathsf{h}}(p) = X(\pi(p))$ gilt. Siehe auch \cite[vor Prop. II
      1.2]{kobayashi.nomizu:1963a}.
   \item %
      Klar. Siehe auch \cite[vor Prop. II 1.2]{kobayashi.nomizu:1963a}.
   \end{beweisEnum}

\end{proofklein}
\end{samepage}
\begin{definition}
   \label{def:HebungvonWeg}
   Sei $(P,\pi,B,G)$ ein Hauptfaserbündel, $HP$ ein
   Hauptfaserbündelzusammenhang und $ \alpha \colon I \to B$ ein
   Weg. Ein Weg $\tilde \alpha \colon I \to B$ heißt
   \neuerBegriff{horizontale Hebung} (oder auch
   \neuerBegriff{horizontaler Lift}) von
   $\alpha$, falls für alle $t \in I$
   \begin{align}
      \label{eq:HebungWeg1}
      \pi(\tilde \alpha(t)) = \alpha(t)
   \end{align}
und
\begin{align}
   \label{eq:HebungWeg2}
   \dot {\tilde\alpha}(t) \in HP
\end{align}
   gilt.
\end{definition}
Derartige horizontale Hebungen existieren immer, wie die folgende
Proposition zeigt, siehe \cite[Satz 3.7]{baum:2009}.
\begin{proposition}
   \label{prop:HebungVonWegExImmer}
   Seien die Voraussetzungen wie in Definition \ref{def:HebungvonWeg}
   gegeben. Dann existiert zu jedem $t_0 \in I$ und $p_0 \in
   \pi^{-1}({\alpha(t_0)})$ genau eine horizontale Hebung $\tilde
   \alpha$ von $\alpha$ mit $\tilde \alpha(t_0) = p_0$.
\end{proposition}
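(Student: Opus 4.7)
Der Beweis lässt sich auf klassische ODE-Theorie auf Lie-Gruppen zurückführen. Der Plan ist, zunächst die globale Fragestellung mittels einer lokalen Trivialisierung auf eine gewöhnliche Differentialgleichung für eine Kurve in der Strukturgruppe $G$ zu reduzieren und die globale Hebung dann durch Zusammensetzen lokaler Hebungen zu erhalten. Konkret wähle ich zu $\alpha \colon I \to B$ eine offene Überdeckung der Spur von $\alpha$ durch Trivialisierungsumgebungen $\{U_j\}$ und, da jedes kompakte Teilintervall von $I$ kompaktes Bild hat, eine Unterteilung $t_0 = s_0 < s_1 < \dots$ derart, dass $\alpha([s_k,s_{k+1}])$ jeweils in einer einzigen Trivialisierungsumgebung $U_{j_k}$ liegt.

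Auf einer solchen Umgebung $U$ mit Trivialisierung $\Psi \dpA \pi^{-1}(U) \to U \times G$ schreibe ich einen beliebigen stetigen Weg $\tilde\alpha$ über $\alpha$ in der Form $\tilde\alpha(t) = \Psi^{-1}(\alpha(t),g(t))$ mit einer glatten Kurve $g \dpA [s_k,s_{k+1}] \to G$. Die Horizontalitätsbedingung $\gamma(\dot{\tilde\alpha}(t)) = 0$ lässt sich mittels der charakteristischen Eigenschaften der Zusammenhangseinsform, der Äquivarianz von $\gamma$ sowie der Produktregel in eine zeitabhängige rechtsinvariante ODE auf $G$ umschreiben, genauer von der Form $\dot g(t) = -TR_{g(t)}\bigl(\xi(t)\bigr)$ mit einer aus dem pull-back von $\gamma$ längs des lokalen Schnittes $p \mapsto \Psi^{-1}(p,e)$ und dem Geschwindigkeitsfeld $\dot \alpha$ bestimmten glatten Kurve $\xi \dpA [s_k,s_{k+1}] \to \lieAlgebra$. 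Umgekehrt liefert jede Lösung $g$ dieser ODE mittels $\tilde\alpha(t) := \Psi^{-1}(\alpha(t),g(t))$ eine horizontale Hebung. Somit reduziert sich lokale Existenz und Eindeutigkeit der horizontalen Hebung zu beliebigem Anfangswert auf Existenz und Eindeutigkeit der Lösung dieser ODE -- und zwar auf dem gesamten Teilintervall, nicht nur auf einem kleinen Teilstück.

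Die globale Hebung erhalte ich dann durch induktives Zusammensetzen: nachdem auf $[s_0,s_1]$ eine Hebung $\tilde\alpha_0$ mit $\tilde\alpha_0(t_0) = p_0$ konstruiert ist, diene $\tilde\alpha_0(s_1)$ als Anfangswert für die Hebung auf $[s_1,s_2]$ und so fort. Die globale Eindeutigkeit folgt durch denselben Schluss: zwei horizontale Hebungen, die in $t_0$ übereinstimmen, erfüllen auf jedem Trivialisierungsintervall dieselbe ODE mit demselben Anfangswert und müssen daher dort, und induktiv auf ganz $I$, gleich sein.

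Die entscheidende Schwierigkeit und zugleich der Schlüssel des Beweises ist, dass die lokale ODE tatsächlich für die gesamte Zeitdauer $[s_k,s_{k+1}]$ eine Lösung besitzt, also nicht in endlicher Zeit aus $G$ entweicht. Dies ist keine allgemeine Eigenschaft nichtlinearer ODEs, sondern folgt hier aus der Rechtsinvarianz der Gleichung: das zeitabhängige Vektorfeld $V_t(g) := -TR_g\,\xi(t)$ auf $G$ ist in jedem Punkt $g$ durch Rechtstranslation des beschränkten (da $\xi$ auf einem Kompaktum definiert ist) Elements $\xi(t) \in \lieAlgebra$ gegeben, womit der zugehörige zeitabhängige Fluss für alle Zeiten existiert -- ein Standardresultat über rechtsinvariante zeitabhängige Vektorfelder auf Lie-Gruppen, welches sich etwa aus Satz~1 in \cite[Satz 3.7]{baum:2009} bzw.\ der entsprechenden Diskussion in \cite{kobayashi.nomizu:1963a} entnehmen lässt.
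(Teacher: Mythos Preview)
The paper does not give its own proof of this proposition; it simply refers to \cite[Satz 3.7]{baum:2009}. Your proposal is a correct outline of precisely the standard argument found there: reduce via a local trivialisation to a time-dependent right-invariant ODE on $G$, use completeness of such flows to get existence on each subinterval, and patch. So your approach is correct and coincides with the referenced proof.
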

\begin{bemerkung}
   \label{bem:HorizontaleHebung}
   In der Situation von Proposition \ref{prop:HebungVonWegExImmer}
   sagt man auch, $\tilde \alpha$ sei eine \neuerBegriff{horizontale
     Hebung durch $p_0$}.
\end{bemerkung}

\cleardoublepage

\numberwithin{theorem}{section}
\chapter{\(G\)-invariante Strukturen auf \(G\)-Mannigfaltigkeiten}
\label{cha:invariante_Strukturen}

In diesem Kapitel wollen wir zeigen, wie man bestimmte Strukturen auf
einer Mannigfaltigkeit, auf der eine Lie"=Gruppe $G$ eigentlich wirkt
unter $G$ invariant bzw.\ äquivariant machen kann. Insbesondere zeigen
wir, wie man aus einem gegebenen Spray einen $G$"=äquivarianten
konstruieren kann. Wir werden hier die meisten Resultate beweisen, da
sie in der Literatur teilweise schwer zu finden  und teilweise neu
sind. Dabei gehen wir etwas über das für die Lektüre dieser Arbeit
benötigte Maß hinaus.

\section{Topologische Vorbereitungen}

In diesem Abschnitt stellen wir einige für die später folgenden
Betrachtungen wichtige topologische Grundlagen zusammen. Die meisten
hier dargestellten Begriffe und Sätze sind findet man in der einschlägigen
Lehrbuchliteratur über mengentheoretische Topologie, siehe etwa
\cite{dugundji:1970} oder \cite{engelking:1989}.

Ist $X$ eine Menge, so bezeichnen wir die Potenzmenge von $X$ mit
\potMenge[X]. Ist $X$ ein topologischer Raum und $A \subset X$, so
schreiben wir $\abschluss{A}$ für den topologischen Abschluss und
$\inneres{A}$ für das Innere von $A$ in $X$. Ist $G$ eine Gruppe, und
$\phi \dpA G \times X \to X$ eine Linkswirkung, so sprechen wir im Folgenden einfach von Wirkung und schreiben gelegentlich $gx :=
\phi_g(x) := \phi(g,x)$ für $g \in G$ und $x \in X$.

\begin{definition}
   \label{def:1}
   Sei $X$ ein topologischer Raum.
   \begin{definitionEnum}
   \item $\mengenSystem[U] \subset \potMenge[X]$ heißt \neuerBegriff{lokal
        endlich}, falls jedes Element $x$ von $X$ eine Umgebung besitzt, die
      nur endlich viele Mengen aus \mengenSystem[U] schneidet. Eine
      derartige Umgebung von $x$ wollen wir \neuerBegriff{zeugende
        Umgebung} von $x$  (für $\mengenSystem[U]$) nennen.
   \item Ist $\mengenSystem[U] \subset \potMenge[X]$, so sagen wir,
      $\mengenSystem[U]$ habe die Eigenschaft $\mathcal{E}$, falls jedes $U
      \in \mengenSystem[U]$ die Eigenschaft $\mathcal{E}$ besitzt.
   \item Sei $\mengenSystem[U] = \{U_i\}_{i \in I} \subset \potMenge[X]$ eine
      \neuerBegriff{Überdeckung} von $X$, d.h $X = \bigcup_{i \in
        I}U_i$. Eine andere Überdeckung $\mengenSystem[V] = \{V_j\}_{j \in
        J}$ von $X$ heißt \neuerBegriff{Verfeinerung} von
      \mengenSystem[U], wenn es für jedes $V \in \mengenSystem[V]$ ein $U
      \in \mengenSystem[U]$ gibt mit $V \subset U$. Wir nennen
      $\mengenSystem[V]$ \neuerBegriff{präzise}, wenn $I = J$ und $V_i
      \subset U_i$ für jedes $i \in I$.
    \item $X$ heißt \neuerBegriff{parakompakt}, wenn jede offene
      Überdeckung von $X$ eine lokal endliche, offene Verfeinerung
      besitzt.
   \end{definitionEnum}
\end{definition}

\begin{bemerkung}
   \label{bem:kannLeerSein}
   Ist $X$ ein topologischer Raum und $\{U_i\}_{i \in I}$ eine
   Überdeckung von $X$, so kann es $i \in I$ geben, mit $U_i =
   \emptyset$. Insbesondere wenn $\{U_i\}_{i \in I}$ eine lokal endliche
   Verfeinerung einer anderen offenen Überdeckung von $X$ ist, darf dies
   der Fall sein.
\end{bemerkung}

Das folgende Resultat wollen wir nur zitieren, der Beweis verläuft ähnlich
wie der von Proposition \ref{prop:reg_verfeinerung}.

\begin{proposition}
   \label{prop:praezise_verfeinerung}
   Sei $X$ ein topologischer Raum. Falls die (offene) Überdeckung
   $\mengenSystem[U]$ von $X$ eine lokal endliche (offene) Verfeinerung
   besitzt, so gibt es auch eine präzise lokal endliche Verfeinerung.
\end{proposition}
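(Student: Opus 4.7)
The plan is to construct the precise refinement by grouping the sets of the given locally finite refinement according to which element of $\mathcal{U}$ they sit inside. Write $\mathcal{U} = \{U_i\}_{i \in I}$ and let $\mathcal{V} = \{V_j\}_{j \in J}$ be a locally finite (open) refinement. Using the axiom of choice, I would pick for each $j \in J$ an index $\alpha(j) \in I$ with $V_j \subset U_{\alpha(j)}$, and then define, for every $i \in I$,
\[
W_i := \bigcup_{j \in \alpha^{-1}(\{i\})} V_j,
\]
with the convention $W_i = \emptyset$ whenever $\alpha^{-1}(\{i\}) = \emptyset$ (this is the reason Bemerkung~\ref{bem:kannLeerSein} was recorded).

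Next I would verify the three properties. First, $\{W_i\}_{i \in I}$ covers $X$: any $x \in X$ lies in some $V_j$, hence in $W_{\alpha(j)}$. Second, the refinement is precise: by construction $W_i \subset U_i$ for every $i \in I$, and the index set is $I$ itself. Third, each $W_i$ is open as a union of open sets, whenever the $V_j$ are open.

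The main point is local finiteness of $\{W_i\}_{i \in I}$. Given $x \in X$, choose an open neighbourhood $N$ of $x$ meeting only finitely many sets of $\mathcal{V}$, say $V_{j_1}, \ldots, V_{j_n}$. Then $N \cap W_i \neq \emptyset$ forces $N$ to meet some $V_j$ with $\alpha(j) = i$, so $i$ must belong to the finite set $\{\alpha(j_1), \ldots, \alpha(j_n)\}$. Hence $N$ is a zeugende Umgebung of $x$ for $\{W_i\}_{i \in I}$. I do not anticipate any real obstacle here; the only subtlety is the bookkeeping with empty sets, which is explicitly permitted by the definitions adopted in the text, and the (harmless) use of the axiom of choice in selecting $\alpha$.
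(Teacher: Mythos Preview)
Your argument is correct and is exactly the standard construction the paper has in mind: the paper does not spell out a proof but refers to \cite[Seite~161]{dugundji:1970} and remarks that it runs ``{\"a}hnlich wie der von Proposition~\ref{prop:reg_verfeinerung}'', whose proof uses precisely your grouping via a choice function $\alpha \colon J \to I$ and the sets $W_i = \bigcup_{\alpha(j)=i} V_j$. Your treatment of the empty $W_i$ (cf.~Bemerkung~\ref{bem:kannLeerSein}) and the local-finiteness check are exactly right.
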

\begin{proofklein}
   Siehe \cite[Seite~161]{dugundji:1970}.
\end{proofklein}

\begin{bemerkung}
   \label{bem:1}
   Es ist in der Literatur auch üblich, in der Definition von
   Parakompaktheit noch hausdorffsch zu fordern. Siehe
   \cite{dugundji:1970} und \cite{engelking:1989} sowie für unsere
   Konvention \cite{lee:2003a}.
\end{bemerkung}

Die nächste Proposition gibt ein nützliches Kriterium an, wann der das
Bilden des topologisches Abschlusses mit dem Bilden beliebiger
Vereinigungen vertauscht, sieh auch \cite[Thm.~1.1.11]{engelking:1989}.

\begin{proposition}
   \label{prop:1}
   Sei $X$ ein topologischer Raum und $\mengenSystem[U] \subset
   \potMenge[X]$ lokal endlich. Dann gilt Folgendes.
   \begin{propositionEnum}
      \item \label{item:lokalendlichStabilUnterAbschluss}
        $\abschluss{\mengenSystem[U]} := \{\abschluss{U} \mid U \in
         \mengenSystem[U]\}$ ist lokal endlich.
      \item \label{item:lokalendlichUndVereinigungen}
        $\abschluss{\bigcup_{U \in \mengenSystem[U]} U} = \bigcup_{U
          \in \mengenSystem[U]} \abschluss{U}$.
   \end{propositionEnum}
\end{proposition}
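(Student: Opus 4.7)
Für Teil \refitem{item:lokalendlichStabilUnterAbschluss} plane ich, direkt zu verifizieren, dass jede zeugende Umgebung für $\mengenSystem[U]$ bereits eine solche für $\abschluss{\mengenSystem[U]}$ ist. Ist $V$ eine offene zeugende Umgebung von $x \in X$, die nur endlich viele Mengen $U_1,\dots,U_n \in \mengenSystem[U]$ schneidet, so werde ich die elementare Charakterisierung des Abschlusses ausnutzen, wonach aus $V \cap \abschluss{U} \neq \emptyset$ bereits $V \cap U \neq \emptyset$ folgt, sofern $V$ offen ist. Damit schneidet $V$ höchstens die Abschlüsse $\abschluss{U_1},\dots,\abschluss{U_n}$. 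Durch gegebenenfalls Übergang zum Inneren einer zeugenden Umgebung darf man diese stets offen wählen.

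Für Teil \refitem{item:lokalendlichUndVereinigungen} ist die Inklusion $\supseteq$ trivial und erfordert keine lokale Endlichkeit: aus $U \subset \bigcup_{V \in \mengenSystem[U]}V$ folgt für jedes $U \in \mengenSystem[U]$ die Beziehung $\abschluss{U} \subset \abschluss{\bigcup_{V \in \mengenSystem[U]} V}$, also auch die gewünschte Inklusion. Für die umgekehrte Inklusion $\subseteq$ werde ich einen Kontrapositionsbeweis führen: Angenommen $x \in X$ liegt in keinem der $\abschluss{U}$ für $U \in \mengenSystem[U]$. Ich wähle eine offene zeugende Umgebung $V$ von $x$ gemäß Teil \refitem{item:lokalendlichStabilUnterAbschluss}, die nur endlich viele $U_1,\dots,U_n$ aus $\mengenSystem[U]$ schneidet, und betrachte
\begin{equation*}
   W := V \cap \bigcap_{i=1}^n (X\setminus \abschluss{U_i}).
\end{equation*}
Als endlicher Durchschnitt offener Mengen ist $W$ eine offene Umgebung von $x$. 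Nach Konstruktion gilt $W \cap U_i = \emptyset$ für $i=1,\dots,n$, und da $W \subset V$, schneidet $W$ auch keine weitere Menge aus $\mengenSystem[U]$. Damit ist $W$ disjunkt zu $\bigcup_{U \in \mengenSystem[U]} U$, woraus $x \notin \abschluss{\bigcup_{U \in \mengenSystem[U]} U}$ folgt.

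Die eigentliche Schwierigkeit dieser Proposition ist gering; der entscheidende Punkt ist die Einsicht, dass die lokale Endlichkeit es gestattet, den Durchschnitt $\bigcap_{i=1}^n (X\setminus \abschluss{U_i})$ auf eine endliche Familie zu beschränken und somit die Offenheit von $W$ zu garantieren. Ohne diese Endlichkeit wäre die Behauptung im Allgemeinen falsch, wie etwa die Familie der einpunktigen Mengen $\{q\}$ für $q \in \mathbb{Q}$ in $\mathbb{R}$ zeigt.
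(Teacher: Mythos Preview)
Your proposal is correct and follows essentially the same route as the paper: for part~\refitem{item:lokalendlichStabilUnterAbschluss} you use exactly the same observation (an open set meeting $\abschluss{U}$ must meet $U$), and for part~\refitem{item:lokalendlichUndVereinigungen} both you and the paper construct the very same open neighborhood $V \cap \bigcap_{i=1}^n(X\setminus\abschluss{U_i})$ of a point outside $\bigcup_U \abschluss{U}$. The only cosmetic difference is that the paper phrases the second part as showing $\bigcup_U \abschluss{U}$ is closed, whereas you phrase it as a contrapositive of the inclusion~$\subseteq$; the content is identical.
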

\begin{proofklein}
   \begin{beweisEnum}
    \item Sei $x \in X$ und $\zeugendeU{x}$ eine offene zeugende Umgebung von
      $x$ für $\mengenSystem[U]$. Dann gilt für $U \in
      \mengenSystem[U]$ mit $U \cap \zeugendeU{x} = \emptyset$, dass
      $U \subset X\setminus \zeugendeU{x}$ und somit, da $X\setminus
      \zeugendeU{x}$ abgeschlossen ist, auch dass $\abschluss{U}
      \subset X\setminus \zeugendeU{x}$, d.\,h.\ $\abschluss{U} \cap
      \zeugendeU{x} = \emptyset$.
    \item Unabhängig von der lokalen Endlichkeit gilt natürlich
      $\abschluss{V} \subset \abschluss{\bigcup_{U \in
          \mengenSystem[U]} U}$ für $V \in \mengenSystem[U]$ und damit
      also $\bigcup_{U \in \mengenSystem[U]} \abschluss{U} \subset
      \abschluss{\bigcup_{U \in \mengenSystem[U]} U}$.  Für die
      umgekehrte Inklusion reicht es offensichtlich zu zeigen, dass $B
      := \bigcup_{U \in \mengenSystem[U]} \abschluss{U}$ abgeschlossen
      ist. Dazu sei $x \notin B$ beliebig. Nach
      Teil~\refitem{item:lokalendlichStabilUnterAbschluss} gibt es
      eine offene Umgebung $\zeugendeU{x}$ von $x$, die nur von
      endlich vielen Elementen
      $\abschluss{U_1},\ldots,\abschluss{U_n}$ von
      $\abschluss{\mengenSystem[U]}$ geschnitten wird. Dann ist aber
      $\zeugendeU{x} \cap \bigcap_{i= 1}^n( X\setminus
      \abschluss{U_i})$ eine offene Umgebung von $x$, die $B$ nicht
      schneidet. Also ist $X\setminus B$ offen und alles ist gezeigt.
   \end{beweisEnum}
\end{proofklein}

\begin{lemma}
   \label{lem:1}
   Sei $X$ parakompakt und hausdorffsch, $x \in X$ und $U$ eine offene
   Umgebung  von $x$ in $X$. Dann gibt es eine offene Umgebung $V$ von $x$
   mit $x \in V \subset \abschluss{V} \subset U$.
\end{lemma}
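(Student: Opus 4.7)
The claim is essentially that paracompact Hausdorff spaces are regular, and in fact the statement we want to prove is a restatement of regularity. My plan is to prove regularity first, then deduce the lemma.

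My first step is to observe that the lemma is equivalent to the following regularity statement: for any closed set $F \subset X$ and any point $x \notin F$ there exist disjoint open sets $V \ni x$ and $W \supset F$. Given regularity, the lemma follows by taking $F := X \setminus U$ (if $F = \emptyset$ just set $V := U$); disjointness $V \cap W = \emptyset$ gives $V \subset X \setminus W$, and since $X \setminus W$ is closed, $\overline{V} \subset X \setminus W \subset X \setminus F = U$.

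The heart of the argument is the proof of regularity, and here I intend to use the three available tools: Hausdorffness, paracompactness, and Proposition~\ref{prop:1}~\refitem{item:lokalendlichUndVereinigungen} (closure commutes with union for locally finite families). The construction goes as follows. For each $y \in F$ use Hausdorffness to choose disjoint open sets $U_y \ni x$ and $V_y \ni y$. Then $\mengenSystem[U] := \{V_y : y \in F\} \cup \{X \setminus F\}$ is an open cover of $X$. Paracompactness yields a locally finite open refinement $\mengenSystem[R]$. Let $\mengenSystem[W]$ denote the sub-family of those $R \in \mengenSystem[R]$ that meet $F$, and put $W := \bigcup_{R \in \mengenSystem[W]} R$. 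Then $F \subset W$ by construction, and since any $R \in \mengenSystem[W]$ meets $F$, it cannot be contained in $X \setminus F$, so $R \subset V_{y(R)}$ for some $y(R) \in F$.

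The key observation is then that for each such $R$ we have $R \subset V_{y(R)} \subset X \setminus U_{y(R)}$, and since $X \setminus U_{y(R)}$ is closed this forces $\overline{R} \subset X \setminus U_{y(R)}$, in particular $x \notin \overline{R}$. Because $\mengenSystem[W]$ is locally finite (as a sub-family of the locally finite family $\mengenSystem[R]$), Proposition~\ref{prop:1}~\refitem{item:lokalendlichUndVereinigungen} gives
\begin{equation*}
   \overline{W} = \bigcup_{R \in \mengenSystem[W]} \overline{R},
\end{equation*}
so $x \notin \overline{W}$. Setting $V := X \setminus \overline{W}$ produces the desired disjoint open sets $V \ni x$ and $W \supset F$, proving regularity, and hence the lemma by the reduction in the first paragraph.

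The only subtle point, and what I would call the main obstacle, is the passage from ``$x$ avoids each individual $\overline{R}$'' to ``$x$ avoids $\overline{W}$'' — without local finiteness this is false, since an arbitrary union of closed sets need not be closed. This is precisely where paracompactness enters decisively through Proposition~\ref{prop:1}; everything else is a bookkeeping exercise built on Hausdorff separation.
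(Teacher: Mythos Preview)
Your proof is correct and follows essentially the same route as the paper's: both separate $x$ from each point of $A := X \setminus U$ via Hausdorffness, cover $X$ by these separating sets together with $U$ (resp.\ $X \setminus F$), pass to a locally finite refinement, and then use Proposition~\ref{prop:1}~\refitem{item:lokalendlichUndVereinigungen} to conclude that $x$ lies outside the closure of the union. The only cosmetic difference is that the paper invokes Proposition~\ref{prop:praezise_verfeinerung} to obtain a \emph{precise} refinement indexed by $A$ (so the containment $F_a \subset U_a$ is automatic), whereas you take a general refinement and argue that each member meeting $F$ must lie in some $V_{y(R)}$; the substance is identical.
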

\begin{proofklein}
   Sei $A := X\setminus U$. Da $X$ hausdorffsch ist, gibt es zu jedem
   $a \in A$ eine offene Umgebung $U_a$ von $a$ in $X$ und eine offene
   Umgebung $W_a$ von $x$ mit $W_a \cap U_a = \emptyset$ und somit $x
   \notin X\setminus W_a \supset \abschluss{U_a}$. Da $\{U_a \mid a
   \in A\} \cup \{U\}$ eine offene Überdeckung von $X$ ist und $X$
   parakompakt, können wir nach Proposition
   \ref{prop:praezise_verfeinerung} eine präzise lokal endliche
   Verfeinerung wählen, finden also zu $a \in A$ ein
   $F_a$ mit $F_a \subset U_a$, so dass $\{F_a\}_{a \in A} \cup \{U\}$
   immer noch $X$ überdeckt. Insbesondere ist $\{F_a\}_{a \in A}$
   lokal endlich und $F := \bigcup_{a \in A} F_a$ ist eine
   Umgebung von $A$ in $X$. Nach Proposition
   \ref{prop:1}~\refitem{item:lokalendlichUndVereinigungen} gilt
   $\abschluss{F} = \bigcup_{a \in A} \abschluss{F_a}$. Für $a \in A$
   gilt nun $x \notin \abschluss{U_a} \supset \abschluss{F_a}$,
   d.\,h.\ $x \notin \abschluss{F}$. Damit ist $X\setminus \abschluss{F}$
   eine offene Umgebung von $x$ in $X$ und wir haben nach Konstruktion
   $X\setminus\abschluss{F} \subset X\setminus F \subset X \setminus A
   = U$ und damit $\abschluss{X\setminus\abschluss{F}} \subset
   X\setminus F \subset U$.
\end{proofklein}

\begin{bemerkung}
   \label{bem:regulaer}
   In der Sprache der mengentheoretischen Topologie haben wir in Lemma
   \ref{lem:1} gezeigt, dass $X$ \neuerBegriff{regulär} ist im Sinne von
   \cite[Ch.~VII Def.~2.1]{dugundji:1970}. Dies ist äquivalent dazu, dass
   sich jeder beliebig gewählte Punkt und jede abgeschlossene Menge, die
   diesen Punkt nicht enthält durch offene Umgebungen trennen lassen. Da
   wir nicht zu viele topologische Begriffe einführen wollen, und es in
   der Literatur für die Verwendung des Begriffes regulär unterschiedliche
   Konventionen gibt, werden wir diesen Begriff im Folgenden nicht weiter
   verwenden.
\end{bemerkung}

\begin{proposition}
   \label{prop:reg_verfeinerung}
   Sei $X$ parakompakt und hausdorffsch. Sei $\{U_i\}_{i \in I}$ eine
   offene Überdeckung von $X$. Dann gibt es zu jedem $i \in I$ eine
   offene Menge $V_i$, so dass $\abschluss{V_i} \subset U_i$ und
   $\{V_i\}_{i\in I}$ eine lokal endliche, offene Überdeckung von  $X$ ist.
\end{proposition}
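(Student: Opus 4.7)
Der Plan ist, die Aussage durch eine klassische Schrumpfungskonstruktion zu beweisen, wobei Lemma \ref{lem:1} (welches eine schwache Regularitätseigenschaft liefert) und Proposition \ref{prop:1} die beiden Hauptwerkzeuge bilden.

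Als erstes würde ich für jeden Punkt $x \in X$ ein $i(x) \in I$ mit $x \in U_{i(x)}$ wählen und anschließend Lemma \ref{lem:1} anwenden, um eine offene Umgebung $W_x$ von $x$ zu erhalten mit $\abschluss{W_x} \subset U_{i(x)}$. Die Familie $\mengenSystem[W] := \{W_x\}_{x \in X}$ ist eine offene Überdeckung von $X$. Da $X$ parakompakt ist, gibt es eine lokal endliche, offene Verfeinerung $\mengenSystem[A] = \{A_j\}_{j \in J}$ von $\mengenSystem[W]$. Zu jedem $j \in J$ fixiere ich nun ein $x_j \in X$ mit $A_j \subset W_{x_j}$ und setze $k(j) := i(x_j) \in I$. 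Dies liefert eine Abbildung $k \colon J \to I$, die die verfeinerte Familie auf die ursprüngliche Indexmenge zurückspielt.

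Als nächstes definiere ich für jedes $i \in I$ die offene Menge
\begin{equation*}
   V_i := \bigcup_{j \in k^{-1}(\{i\})} A_j \Fdot
\end{equation*}
Da $\{A_j\}_{j \in J}$ eine Überdeckung von $X$ ist und jedes $A_j$ in genau einem $V_{k(j)}$ liegt, bildet $\{V_i\}_{i \in I}$ ebenfalls eine offene Überdeckung von $X$. Den zentralen Abschlussschritt -- nämlich $\abschluss{V_i} \subset U_i$ -- würde ich mit Hilfe von Proposition~\ref{prop:1}~\refitem{item:lokalendlichUndVereinigungen} erschlagen: Da die Teilfamilie $\{A_j\}_{j \in k^{-1}(\{i\})}$ als Teil einer lokal endlichen Familie selbst lokal endlich ist, folgt
\begin{equation*}
   \abschluss{V_i} = \bigcup_{j \in k^{-1}(\{i\})} \abschluss{A_j} \subset \bigcup_{j \in k^{-1}(\{i\})} \abschluss{W_{x_j}} \subset U_i \Fcom
\end{equation*}
wobei im letzten Schritt $i(x_j) = k(j) = i$ und die Wahl der $W_{x_j}$ eingehen.

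Schließlich bleibt die lokale Endlichkeit von $\{V_i\}_{i \in I}$ zu verifizieren. Zu $x \in X$ wähle ich eine offene Umgebung $N$, die nur endlich viele $A_{j_1},\dots,A_{j_n}$ schneidet. Falls $N \cap V_i \neq \emptyset$, so trifft $N$ mindestens ein $A_j$ mit $k(j) = i$; dieses $j$ muss in $\{j_1,\dots,j_n\}$ liegen, also $i \in \{k(j_1),\dots,k(j_n)\}$. Damit trifft $N$ höchstens endlich viele $V_i$. Eine reelle Schwierigkeit sehe ich in diesem Beweis nicht; die einzige technische Feinheit, auf die ich achten werde, ist die saubere Verwendung der Hausdorff"=Eigenschaft -- diese geht ausschließlich über Lemma \ref{lem:1} in den ersten Schritt ein -- und die Tatsache, dass die Umindexierung via $k$ die lokale Endlichkeit nicht zerstört, was direkt aus obigem Argument folgt. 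Man beachte, dass einige $V_i$ leer sein dürfen (vgl.\ Bemerkung \ref{bem:kannLeerSein}), was der Aussage nicht widerspricht.
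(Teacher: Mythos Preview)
Dein Beweis ist korrekt und folgt im Wesentlichen genau der Argumentation des Papers: Lemma~\ref{lem:1} liefert die geschrumpften Umgebungen, Parakompaktheit die lokal endliche Verfeinerung, und die Umindexierung mittels einer Auswahlfunktion zusammen mit Proposition~\ref{prop:1}~\refitem{item:lokalendlichUndVereinigungen} schließt den Beweis ab. Der einzige unwesentliche Unterschied ist, dass das Paper die Umgebungen $W_i^x$ mit $(i,x) \in I \times X$ doppelt indiziert, während du zu jedem $x$ nur ein $i(x)$ auswählst -- deine Variante ist etwas ökonomischer, aber strukturell identisch.
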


\begin{proofklein}
   Nach Lemma \ref{lem:1} gibt es zu $x \in U_i$ eine offene Umgebung
   $W_i^x$ von $x$ in $X$ mit $\abschluss{W_i^x} \subset U_i$. Sei nun
   $\{F_j\}_{j \in J}$ eine lokal endliche, offene Verfeinerung von
   $\{W_i^x\}_{i\in I, x \in X}$. Definiere eine Abbildung $\alpha: J
   \to I$ die zu $j \in J$ ein $\alpha(j) \in I$ auswählt, so dass $F_j
   \subset U_{\alpha(j)}$, dies geht nach dem Auswahlaxiom. Setze dann
   für $i \in I$ $V_i := \bigcup_{j\in J \mid \alpha(j) = i} F_j$. $V_i$
   ist als Vereinigung offener Mengen natürlich offen in $X$. Ferner
   gilt wegen der lokalen Endlichkeit von $\{F_j\}_{j \in J}$ und
   Proposition \ref{prop:1} $\abschluss{V_i} = \bigcup_{j\in J \mid
     \alpha(j) = i} \abschluss{F_j} \subset U_i$. Nach Konstruktion ist
   klar, dass $\{V_i\}_{i \in I}$ immer noch eine lokal endliche, offene
   Überdeckung von $X$ ist.
\end{proofklein}

\begin{bemerkung}
   \label{bem:3}
   Ist $\{U_i\}_{i \in I}$ in Proposition \ref{prop:reg_verfeinerung} schon lokal endlich, so braucht man
   im Beweis das Auswahlaxiom offenbar nicht.
\end{bemerkung}

\begin{bemerkung}
   \label{bem:parakomp_T_2_zerlegung_der_eins}
   Eine wichtige Konsequenz daraus, dass ein topologischer Raum $X$
   parakompakt und hausdorffsch ist, besteht darin, dass es zu jeder
   offenen Überdeckung eine dieser untergeordnete stetige Zerlegung
   der Eins gibt.
\end{bemerkung}

\begin{proofklein}
   Siehe \cite[Thm.~5.1.9]{engelking:1989}.
\end{proofklein}

\begin{definition}[Lokal endliches
   reguläres Überdeckungspaar]
   \label{def:reg_Ueberdeckungs_Paar}
   Seien die Bezeichnungen wie in Proposition \ref{prop:reg_verfeinerung}
   und $\{U_i\}_{i \in I}$ schon lokal endlich. Dann wollen wir das
   Paar $(\{U_i\}_{i\in I}, \{V_i\}_{i \in I})$ ein \neuerBegriff{lokal endliches
   reguläres Überdeckungspaar} des topologischen Raumes $X$ nennen.
\end{definition}

\begin{definition}[Gute zeugende Umgebung]
   \label{def:guteZeugendeUmgebung}
   Sei $RP := (\{U_i\}_{i\in I}, \{V_i\}_{i \in I})$ ein lokal
   endliches reguläres Überdeckungspaar von $X$, und $\zeugendeU{x}$
   eine zeugende Umgebung von $x$ zu $\{V_i\}_{i \in I}$, d.\,h.\ $J :=
   \{j \in I \mid V_j \cap \zeugendeU{x}\neq \emptyset \}$ ist
   endlich. Wir wollen $\zeugendeU{x}$ \neuerBegriff{gute} zeugende
   Umgebung von $x$ (zu $RP$) nennen, falls für alle $j \in J$
   folgendes gilt:
   \begin{definitionEnum}
      \item $x \in  \abschluss{V_{j}}$. \label{item:xinGuteZeugendeUmgebung}
      \item $\zeugendeU{x} \subset
         U_{j}$. \label{item:guteZeugendeUmgebungEnthaltenIn}
   \end{definitionEnum}
\end{definition}

\begin{proposition}
   \label{prop:guteZeugendeUmgebung}
   Ist $RP = (\{U_i\}_{i\in I}, \{V_i\}_{i \in I})$ ein lokal endliches
   reguläres Überdeckungspaar eines topologischen Raumes $X$, so gibt es
   zu jedem $x \in X$ eine gute zeugende Umgebung von $x$ zu $RP$.
\end{proposition}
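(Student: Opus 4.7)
The plan is to construct $\zeugendeU{x}$ by starting from any witnessing neighborhood of $x$ for the locally finite cover $\{V_i\}_{i \in I}$ and then shrinking it twice: once to force condition \refitem{item:xinGuteZeugendeUmgebung} on the remaining indices, and once more to force condition \refitem{item:guteZeugendeUmgebungEnthaltenIn}.

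First I would pick, using local finiteness of $\{V_i\}_{i \in I}$, an open neighborhood $W_x$ of $x$ such that the set $J_0 := \{j \in I \mid V_j \cap W_x \neq \emptyset\}$ is finite. Now partition $J_0$ into $J_1 := \{j \in J_0 \mid x \in \abschluss{V_j}\}$ and $J_2 := J_0 \setminus J_1$. For each $j \in J_2$ the set $X \setminus \abschluss{V_j}$ is an open neighborhood of $x$, so finiteness of $J_2$ allows me to form the open neighborhood
\begin{equation*}
W_x' := W_x \cap \bigcap_{j \in J_2}\bigl(X \setminus \abschluss{V_j}\bigr).
\end{equation*}
By construction $W_x'$ meets no $V_j$ with $j \in J_2$, so any $V_j$ that meets $W_x'$ actually lies in $J_1$, hence automatically satisfies $x \in \abschluss{V_j}$.

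Next I would enforce the inclusion property. For every $j \in J_1$ we have $x \in \abschluss{V_j} \subset U_j$, where the inclusion uses that $(\{U_i\},\{V_i\})$ is a regular overlap pair (i.e.\ $\abschluss{V_i} \subset U_i$). Since $J_1$ is finite, the set
\begin{equation*}
\zeugendeU{x} := W_x' \cap \bigcap_{j \in J_1} U_j
\end{equation*}
is again an open neighborhood of $x$. I would then verify the two required properties: if $V_j$ meets $\zeugendeU{x}$, then $j \in J_0$ and $j \notin J_2$, hence $j \in J_1$, which gives both $x \in \abschluss{V_j}$ and $\zeugendeU{x} \subset U_j$.

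There is essentially no obstacle here; the argument is purely point-set-topological and uses neither Hausdorffness nor parakompaktheit directly, only the defining properties of a lokal endliches reguläres Überdeckungspaar. The only subtlety worth stressing is that the partition $J_0 = J_1 \sqcup J_2$ is performed \emph{after} the local finiteness is invoked, so both intersections appearing in $W_x'$ and $\zeugendeU{x}$ are finite and therefore yield open sets.
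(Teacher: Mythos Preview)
Your proof is correct and follows essentially the same construction as the paper: both start from an arbitrary witnessing neighborhood, remove the closures $\abschluss{V_j}$ for those finitely many indices $j$ with $x \notin \abschluss{V_j}$, and then intersect with the finitely many $U_j$ for which $x \in \abschluss{V_j}$. Your notation $J_0 = J_1 \sqcup J_2$ corresponds exactly to the paper's $E = J \sqcup (E\setminus J)$, and the resulting neighborhoods coincide.
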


\begin{proofklein}
   Sei $\widetilde{\zeugendeU{x}}$ eine zeugende Umgebung von $x$ zu
   $\{V_i\}_{i \in I}$, also insbesondere $E := \{i \in I \mid V_i \cap
   \widetilde{\zeugendeU{x}} \neq \emptyset \}$ endlich. Sei $J$ die
   maximale Teilmenge von $E$, so dass $x \in \abschluss{V_j} \subset
   U_j$ für $j \in J$. Dann definiere
 \begin{align*}
    \zeugendeU{x} := \bigcap_{j \in E\setminus J}
    \left(\widetilde{\zeugendeU{x}}\setminus \abschluss{V_j} \right)
    \cap \bigcap_{i \in J} U_j \Fdot
 \end{align*}
 $\zeugendeU{x}$ ist offenbar eine gute zeugende Umgebung von $x$. Falls
 $i \in I$ mit $\zeugendeU{x} \cap V_i \neq \emptyset$ folgt zunächst $i
 \in E$ und nach Konstruktion von $\zeugendeU{x}$ sogar $i \in J$. Ist
 umgekehrt $j \in J$, so ist $x \in \abschluss{V_j}$ und damit
 $\zeugendeU{x} \cap V_j \neq \emptyset$, da $\zeugendeU{x}$, d.\,h.\ $J =
 \{j \in I \mid V_j \cap \zeugendeU{x} \neq \emptyset\}$. Die
 Eigenschaften~\refitem{item:xinGuteZeugendeUmgebung} und~\refitem{item:guteZeugendeUmgebungEnthaltenIn} sind dann klar nach
 Konstruktion von $\zeugendeU{x}$.
\end{proofklein}

\begin{klein}
   \begin{bemerkung}
      \label{bem:guteZeugendeUmgebung_zuvieldesguten}
      Ist $RP = (\{U_i\}_{i\in I}, \{V_i\}_{i \in I})$ ein lokal endliches
      reguläres Überdeckungspaar eines topologischen Raumes $X$, so könnte
      man denken, dass man zu $x \in X$ eine zeugende Umgebung von $x$ für
      $\{U_i\}$ mit folgenden Eigenschaften erhalten könnte:
      \begin{bemerkungEnum}
      \item $x \in \abschluss{V_{j}}$
      \item $\zeugendeU{x} \subset U_{j}$
      \end{bemerkungEnum}
      für alle Elemente $j$ der endlichen Menge $J := \{i \in I \mid U_i \cap \zeugendeU{x}\neq \emptyset \} $.

      Dies ist i.\,Allg.\ falsch. Man betrachte z.\,B.\ $X = \mathbb{R}$, $x = 0$,
      $RP = (\{U_1,U_2\},\{V_1,V_2\})$ mit
      \begin{align*}
         U_1 &= (-\infty,3) & U_2 &= (0,\infty) \\
         V_1 &= (-\infty,2) & V_2 &= (1,\infty)\Fdot
      \end{align*}

      Hier sind offensichtlich weder die erste noch die zweite Eigenschaft
      erfüllbar. Demnach ist es bei der Definition von guter zeugender Umgebung
      erstens wichtig  ein Überdeckungspaar zu haben und zweitens zu fordern, dass
      $\zeugendeU{x}$ eine zeugende Umgebung für $\{V_i\}_{i \in I}$ ist und nicht
      etwa auch für $\{U_i\}_{i \in I}$.
   \end{bemerkung}
\end{klein}

\begin{definition}[Invarianz und Äquivarianz]
   \label{def:aequivariant_etc}
   Seien $X$ und $Y$ Mengen, $G$ eine Gruppe und $\theta \dpA G \times X
   \to X$ sowie $\phi \dpA G \times Y \to Y$ Wirkungen von
   $G$ auf $X$ bzw.\ $Y$.
   \begin{definitionEnum}
    \item $U \subset X$ heißt \neuerBegriff{$G$"=invariant}, falls
      $\theta_g(x) \in U$ für alle $x \in U$ und $g \in G$.

      \item Eine Abbildung $F: X \to Y$ heißt \neuerBegriff{($G$-)äquivariant}
         (bezüglich $\theta$ und $\phi$), falls
         \begin{align*}
            \phi_g \circ F = F \circ \Theta_g
         \end{align*}
         für alle  $g \in G$.

   \end{definitionEnum}
\end{definition}

\begin{bemerkung}
   \label{bem:aequivariantesInverses}
   Seien die Voraussetzungen wie in Definition
   \ref{def:aequivariant_etc}, $F\colon X \to Y$ $G$"=äquivariant und
   bijektiv. Dann ist offensichtlich auch $F^{-1}$ $G$"=äquivariant.
\end{bemerkung}

\section{Eigentliche Gruppenwirkungen}

Wir erinnern an dieser Stelle noch an einen wichtigen Begriff wichtigen
Begriff der eigentlichen Gruppenwirkung und stellen eine Reihe von
Charakterisierungen bereit. Vergleiche \cite[Seite~216]{lee:2003a},
\cite[Seite~319]{tomdieck:2000a}.

\begin{definition}[Eigentliche Wirkungen]
   \label{def:eigentlicheWirkung}
   Sei $G$ eine topologische Gruppe und $\phi \dpA G \times X \to
   X$ eine stetige Wirkung von $G$ auf dem topologischen Raum $X$.
   \begin{definitionEnum}
    \item Die Wirkung $\phi$ heißt \neuerBegriff{eigentlich}, falls
      $\Phi \dpA G \times X \to X \times X$, $(g,x) \mapsto
      (x,\phi_g(x))$ eigentlich ist, d.\,h.\ $\Phi^{-1}(K)$ kompakt ist
      für jedes kompakte $K \subset X \times X$.
   \item Wir nennen $\phi$ \neuerBegriff{folgeneigentlich} falls
     folgende Bedingung erfüllt ist.

      Ist $(x_n)$ eine konvergente Folge in $X$ und $(g_n)$ eine
      Folge in $G$, so dass $(g_n x_n)$ konvergiert. Dann gibt es eine
      konvergente Teilfolge von $(g_n)$.
   \end{definitionEnum}

\end{definition}

\begin{bemerkung}
   \label{bem:eigentlichKonventionen}
   In der Literatur gibt es verschiedene übliche Konventionen für
   eigentliche Gruppenwirkungen. Etwa \cite{tomdieck:2000a} definiert
   \glqq{}eigentlich\grqq{}  anders als wir. Falls zum Beispiel $G$
   und $X$ hausdorffsch sind und $X$ lokal kompakt ist, fallen seine
   und unsere Definition zusammen.
\end{bemerkung}

Die folgende Proposition liefert einige äquivalente Charakterisierungen
eigentlicher Wirkungen im Falle von Lie"=Gruppen und
Mannigfaltigkeiten.

\begin{lemma}
   \label{lem:eigentlicheWirkungCharakterisierung}
   Sei $M$ eine Mannigfaltigkeit und $G$ eine Lie"=Gruppe, die stetig auf
   $M$ wirke. Dann sind äquivalent
   \begin{lemmaEnum}
      \item $G$ wirkt eigentlich auf $M$. \label{item:aeqs_eigentlich}
      \item \label{item:eigentlich_folgenkriterium} \label{item:aeqs_folgeneigentlich}
         $G$ wirkt folgeneigentlich auf $M$.
      \item \label{item:aeq_eigentlich_verschiedene_kompakta} Für jedes
         Paar $K$ und $L$ von kompakten Teilmengen von $M$ ist $\{g \in G
         \mid g K \cap L\ \neq \emptyset\}$ kompakt.
      \item Für jedes kompakte $K \subset M$ ist $\{g \in G \mid g K \cap
         K \neq \emptyset\}$
         kompakt. \label{item:aeq_eigentlich_gleiche_kompakta}
      \item \label{item:aeq_eigentlich_kompakte_widerkehr} Für jedes $(x,y) \in
         M \times M$ gibt es Umgebungen $V_x$ von $x$ und $V_y$ von $y$,
         so dass $\{g \in G \mid gV_x \cap V_y \neq \emptyset\}$ in einer
         kompakten Teilmenge von $G$ liegt.
   \end{lemmaEnum}
\end{lemma}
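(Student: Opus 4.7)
The plan is to establish the equivalences via the cycle
$\refitem{item:aeqs_eigentlich} \Leftrightarrow \refitem{item:aeq_eigentlich_verschiedene_kompakta} \Leftrightarrow \refitem{item:aeq_eigentlich_gleiche_kompakta}$,
$\refitem{item:aeqs_eigentlich} \Leftrightarrow \refitem{item:aeq_eigentlich_kompakte_widerkehr}$, and
$\refitem{item:aeqs_eigentlich} \Leftrightarrow \refitem{item:aeqs_folgeneigentlich}$, exploiting throughout that both $M$ and $G$ are locally compact Hausdorff and first countable (so that compactness, sequential compactness and the existence of countable neighbourhood bases are all available). For the whole proof I would abbreviate $\Phi(g,x) := (x, gx)$, so that~\refitem{item:aeqs_eigentlich} is just properness of $\Phi$.

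First I would do~\refitem{item:aeqs_eigentlich}$\Leftrightarrow$\refitem{item:aeq_eigentlich_verschiedene_kompakta}. Given compact $K, L \subset M$, the set in~\refitem{item:aeq_eigentlich_verschiedene_kompakta} is exactly the image of $\Phi^{-1}(K \times L)$ under the projection $G \times M \to G$, which gives the forward direction immediately. Conversely, any compact $C \subset M \times M$ is contained in $K \times L$ with $K := \mathrm{pr}_1(C)$, $L := \mathrm{pr}_2(C)$ compact, and $\Phi^{-1}(K \times L)$ sits inside $\{g \mid gK \cap L \neq \emptyset\} \times K$; being closed (by continuity of $\Phi$) in a compact product, it is itself compact. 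The equivalence~\refitem{item:aeq_eigentlich_verschiedene_kompakta}$\Leftrightarrow$\refitem{item:aeq_eigentlich_gleiche_kompakta} is then routine: one direction is the specialisation $L = K$, the other uses $K' := K \cup L$ and closedness of the relevant sets.

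For~\refitem{item:aeqs_eigentlich}$\Leftrightarrow$\refitem{item:aeq_eigentlich_kompakte_widerkehr}, the forward direction picks relatively compact neighbourhoods $V_x \ni x$, $V_y \ni y$; properness applied to the compact set $\overline{V_x} \times \overline{V_y}$ and projection to $G$ produce the desired compact container. For the converse, given compact $K, L \subset M$, one covers each $(x,y) \in K \times L$ by a product neighbourhood supplied by~\refitem{item:aeq_eigentlich_kompakte_widerkehr}, extracts a finite subcover of the compactum $K \times L$, and thereby realises $\{g \mid gK \cap L \neq \emptyset\}$ as a closed subset of the finite union of the corresponding compact sets in $G$, hence compact; this yields~\refitem{item:aeq_eigentlich_verschiedene_kompakta}, which we have already identified with~\refitem{item:aeqs_eigentlich}.

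Finally I turn to the sequential characterisation~\refitem{item:aeqs_folgeneigentlich}. The implication \refitem{item:aeqs_eigentlich}$\Rightarrow$\refitem{item:aeqs_folgeneigentlich} is clean: if $x_n \to x$ and $g_n x_n \to y$, then $C := \{(x_n, g_n x_n)\} \cup \{(x,y)\}$ is a compact subset of $M \times M$, so $\Phi^{-1}(C)$ is compact and contains $(g_n, x_n)$; extracting a convergent subsequence and projecting to $G$ gives~\refitem{item:aeqs_folgeneigentlich}. The harder converse, and in my view the main obstacle, is \refitem{item:aeqs_folgeneigentlich}$\Rightarrow$\refitem{item:aeq_eigentlich_kompakte_widerkehr} (whence~\refitem{item:aeqs_eigentlich} by the above). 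I would argue by contraposition: if~\refitem{item:aeq_eigentlich_kompakte_widerkehr} fails at some $(x,y)$, choose countable neighbourhood bases $(V_x^n)$ and $(V_y^n)$ shrinking to $x$ and $y$ together with a countable exhaustion $G = \bigcup_n K_n$ by compact sets (possible since $G$ is $\sigma$-compact as a second-countable locally compact space). Then for each $n$ there must exist $g_n \in G \setminus K_n$ and $x_n \in V_x^n$ with $g_n x_n \in V_y^n$. Hence $x_n \to x$ and $g_n x_n \to y$, but $(g_n)$ eventually leaves every compact set and so has no convergent subsequence, contradicting~\refitem{item:aeqs_folgeneigentlich}. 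The care needed here is only in marshalling the first-countability and $\sigma$-compactness inputs correctly, after which all equivalences are in hand.
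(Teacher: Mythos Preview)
Your proof is correct, but structured differently from the paper's. The paper argues via a single cycle \refitem{item:aeqs_eigentlich} $\Rightarrow$ \refitem{item:aeqs_folgeneigentlich} $\Rightarrow$ \refitem{item:aeq_eigentlich_verschiedene_kompakta} $\Rightarrow$ \refitem{item:aeq_eigentlich_gleiche_kompakta} $\Rightarrow$ \refitem{item:aeq_eigentlich_kompakte_widerkehr} $\Rightarrow$ \refitem{item:aeqs_eigentlich}, whereas you branch out from \refitem{item:aeqs_eigentlich} with several pairwise equivalences. Two substantive differences are worth noting. First, the paper's step \refitem{item:aeqs_folgeneigentlich} $\Rightarrow$ \refitem{item:aeq_eigentlich_verschiedene_kompakta} is a direct sequential-compactness argument on $K$ and $L$ and needs no hypothesis on $G$ beyond Hausdorffness, while your contrapositive route \refitem{item:aeqs_folgeneigentlich} $\Rightarrow$ \refitem{item:aeq_eigentlich_kompakte_widerkehr} invokes $\sigma$-compactness of $G$ (valid here by the paper's standing conventions, but an extra input). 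Second, for \refitem{item:aeq_eigentlich_gleiche_kompakta} $\Rightarrow$ \refitem{item:aeq_eigentlich_kompakte_widerkehr} the paper picks a single compact $K$ from an exhaustion of $M$ that is simultaneously a neighbourhood of both $x$ and $y$, which is a bit slicker than going via relatively compact neighbourhoods and the already-established \refitem{item:aeq_eigentlich_verschiedene_kompakta}. Your pairwise approach is more modular and makes each individual equivalence self-contained; the paper's cycle is more economical (five implications, each quite short).
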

\begin{proofklein}
   \begin{beweisEnum}
   \item[\refitem{item:aeqs_eigentlich} $\Rightarrow$~\refitem{item:aeqs_folgeneigentlich}] Sei $(x_n)$ eine Folge in
      $M$ und $(g_n)$ eine Folge in $G$, so dass $(x_n,g_n x_n)$ in $M
      \times M$ konvergiert. Dann gibt es jedoch ein Kompaktum $L
      \subset M \times M$ mit $(x_n,g_n x_n) \in L$ für alle $n \in
      \mathbb{N}$. Folglich verläuft aber schon die Folge $(g_n,x_n)$ in
      dem Kompaktum $\Phi^{-1}(L) \subset G \times M$, womit $(g_n,x_n)$
      und damit auch $(g_n)$ eine konvergente Teilfolge besitzt.
   \item[\refitem{item:aeqs_folgeneigentlich} $\Rightarrow$~\refitem{item:aeq_eigentlich_verschiedene_kompakta}] Sei $(g_n)$
      eine Folge in $\{g \in G \mid g K \cap L \neq \emptyset\}$ und für
      $n \in \mathbb{N}$ sei $x_n \in K$ mit $g_n x_n \in L$. Da $K$ und
      $L$ kompakt sind, besitzen $(x_n)$ und $(g_n x_n)$ in konvergente
      Teilfolgen und damit nach Voraussetzung auch $(g_n)$. Da $L$
      kompakt, insbesondere, da $M$ hausdorffsch, abgeschlossen ist,
      liegt ihr Grenzwert wieder in $\{g \in G \mid g K \cap L \neq
      \emptyset\}$, womit letztere Menge kompakt ist.
   \item[~\refitem{item:aeq_eigentlich_verschiedene_kompakta}
      $\Rightarrow$ ~\refitem{item:aeq_eigentlich_gleiche_kompakta}] Klar.
   \item[~\refitem{item:aeq_eigentlich_gleiche_kompakta} $\Rightarrow$~\refitem{item:aeq_eigentlich_kompakte_widerkehr}] Da $M$ eine
      Mannigfaltigkeit ist, gibt es eine ausschöpfende Folge von Kompakta
      und somit ein Kompaktum $K \subset M$ mit $x,y \in K$, das
      zugleich Umgebung von $x$ wie von $y$ ist.

   \item[~\refitem{item:aeq_eigentlich_kompakte_widerkehr} $\Rightarrow$~\refitem{item:aeqs_eigentlich}] Sei $L \subset M \times M$ ein
      beliebiges Kompaktum. Zu jedem $(x,y) \in L$ gibt es eine offene
      Umgebung $U_x$ von $x$ und eine offene Umgebung $V_y$ von $y$,
      so dass $S_{(x,y)} := \{g \in G \mid gU_x \cap V_y \neq
      \emptyset\}$ in einer kompakten Menge liegt. Da $L$ kompakt ist,
      gibt es ein $m \in \mathbb{N}$ und $(x_i,y_i) \in L$ für
      $i=1,\ldots,m$, so dass $L \subset \bigcup_{i=1}^m U_{x_i} \times
      V_{y_i}$. Nach Konstruktion gibt es ein Kompaktum $S$ mit
      $\bigcup_{i=1}^m S_{(x_i,y_i)} \subset S \subset G$. Sei nun
      $\projFaktor_2 \dpA G \times M \to M$ die Projektion auf die zweite
      Komponente. Ist $(g,x) \in \Phi^{-1}(L)$, so gilt $(x,g x) \in L$,
      also insbesondere $ x \in \projFaktor_2(L)$ und nach Konstruktion
      gibt es ein $i \in \{1,\ldots,m\}$ mit $x \in U_{x_i}$ und $g x
      \in V_{y_i}$, also $g \in S_{(x_i,y_i)} \subset S$. Damit liegt
      $(g,x)$ aber schon in der kompakten Menge $S \times
      \projFaktor_2(L)$. Folglich gilt $\Phi^{-1}(L) \subset S \times
      \projFaktor_2(L)$. Da $M$ hausdorffsch ist und aufgrund der
      Kompaktheit von $L$ ist $L$ abgeschlossen, womit $\Phi^{-1}(L)$
      wegen der Stetigkeit von $\Phi$ auch abgeschlossen ist und als
      Teilmenge des Kompaktums $S \times \projFaktor_2(L)$ somit selbst
      wieder kompakt ist.

   \end{beweisEnum}
\end{proofklein}

\begin{bemerkung}
   \label{bem:eigentlicheDinge}
   \begin{bemerkungEnum}
   \item Für die Bedingung~\refitem{item:aeq_eigentlich_kompakte_widerkehr}
      aus Proposition \ref{lem:eigentlicheWirkungCharakterisierung} sagt
      man auch $G$ wirkt mit \neuerBegriff{kompakter Wiederkehr} auf $M$
      (siehe \cite[20.7]{tomdieck:2000a}).
   \item Da $M$ hausdorffsch ist, ist in Bedingung~\refitem{item:aeq_eigentlich_kompakte_widerkehr} in Proposition
      \ref{lem:eigentlicheWirkungCharakterisierung} die Aussage, dass
      $S_X := \{g \in G \mid gV_x \cap V_y \neq \emptyset\}$ in einer
      kompakten Teilmenge von $G$ liege äquivalent dazu, dass ihr
      Abschluss kompakt sei. Ist $G$ diskret, so ist dies weiter
      gleichwertig dazu, dass $S_X$ endlich ist. In diesem Fall spricht
      man wohl aus historischen Gründen auch davon, dass die $G$"=Wirkung
      \neuerBegriff{eigentlich diskontinuierlich} sei. Siehe
      \cite[Seite~225]{lee:2003a} für einige kritische Bemerkungen zu
      dieser Wortwahl.
   \end{bemerkungEnum}
\end{bemerkung}

\begin{klein}
   Zur Abrundung und um den Anschluss an eine übliche Charakterisierung
   von (frei und) eigentlich diskontinuierlich zu finden, zeigen wir noch
   folgende Proposition.

   \begin{proposition}
      \label{prop:eigentlich_diskontinuierlich}
      Die diskrete Lie"=Gruppe $G$ wirke auf einer Mannigfaltigkeit
      $M$. Dann ist die $G$"=Wirkung genau dann eigentlich, wenn folgende
      Bedingung gilt:
      \begin{propositionEnum}
      \item \label{item:eigentlich_diskontinuierlich_1} Ist $p \in M$, so
         gibt es eine Umgebung $U$ von $p$ so dass $g U \cap U = \emptyset$
         für alle bis auf endlich viele $g \in G$.
      \item Liegen $p,q \in M$ nicht auf derselben $G$"=Bahn, so gibt es
         eine Umgebung $U_p$ von $p$ und eine Umgebung $V_q$ von $q$
         so dass $g U_p \cap V_q = \emptyset$ für alle $g \in G$.
      \end{propositionEnum}
   \end{proposition}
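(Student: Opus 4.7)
Die Beweisstrategie stützt sich auf Lemma~\ref{lem:eigentlicheWirkungCharakterisierung}, insbesondere auf die Charakterisierung \refitem{item:aeq_eigentlich_kompakte_widerkehr} durch kompakte Wiederkehr. Der entscheidende Ausgangspunkt ist die Beobachtung, dass in einer diskreten Lie"=Gruppe die kompakten Teilmengen genau die endlichen sind, da sich eine kompakte Menge durch ihre offene Überdeckung mit Singletons überdecken lässt und demnach endlich sein muss. Damit übersetzt sich die Bedingung, dass eine Wiederkehrmenge in einer kompakten Teilmenge von $G$ liege, unmittelbar in deren Endlichkeit.

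Für die Hinrichtung werde ich zunächst \textit{i.)} gewinnen, indem ich \refitem{item:aeq_eigentlich_kompakte_widerkehr} auf das Paar $(p,p)$ anwende und die dort gefundenen Umgebungen zu einer einzigen Umgebung $U$ von $p$ verschneide. Für \textit{ii.)} wähle ich nach \refitem{item:aeq_eigentlich_kompakte_widerkehr} zu $p,q$ zunächst Umgebungen $\tilde V_p$ und $\tilde V_q$ mit endlicher Wiederkehrmenge $S := \{g \in G \mid g \tilde V_p \cap \tilde V_q \neq \emptyset\}$. Da $p$ und $q$ nicht auf derselben Bahn liegen, ist $gp \neq q$ für alle $g \in S$, und da $M$ hausdorffsch ist, kann ich für jedes $g \in S$ disjunkte offene Umgebungen $W_p^g$ von $gp$ sowie $W_q^g$ von $q$ wählen. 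Wegen der Endlichkeit von $S$ sind dann
\begin{equation*}
  U_p := \tilde V_p \cap \bigcap_{g \in S} g^{-1} W_p^g
  \quad\text{und}\quad
  V_q := \tilde V_q \cap \bigcap_{g \in S} W_q^g
\end{equation*}
offene Umgebungen von $p$ beziehungsweise $q$, und es gilt $gU_p \cap V_q = \emptyset$ sowohl für $g \in S$ (per Konstruktion der $W_p^g, W_q^g$) als auch für $g \notin S$ (wegen $g \tilde V_p \cap \tilde V_q = \emptyset$).

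Die Rückrichtung verifiziere ich wieder über \refitem{item:aeq_eigentlich_kompakte_widerkehr}. Zu einem gegebenen Paar $(p,q) \in M \times M$ unterscheide ich zwei Fälle. Liegen $p$ und $q$ auf verschiedenen Bahnen, liefert \textit{ii.)} unmittelbar Umgebungen mit leerer Wiederkehrmenge. Liegen $p$ und $q$ auf derselben Bahn, etwa $q = hp$ für ein $h \in G$, wähle ich nach \textit{i.)} eine Umgebung $U$ von $p$ mit endlicher Wiederkehrmenge $F := \{k \in G \mid kU \cap U \neq \emptyset\}$ und setze $V_p := U$, $V_q := hU$. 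Da die Wirkung stetig ist, ist $V_q$ eine offene Umgebung von $q$, und eine kurze Umformung zeigt $\{g \in G \mid gV_p \cap V_q \neq \emptyset\} = hF$, was als Linkstranslation einer endlichen Menge wieder endlich ist.

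Die Hauptschwierigkeit sehe ich in der Hinrichtung beim Verschärfen von \textit{ii.)}\ auf \emph{alle} $g \in G$: Die kompakte Wiederkehr allein liefert die Disjunktheit nur außerhalb einer endlichen Menge $S$, und für die endlich vielen $g \in S$ muss mittels Hausdorff"=Trennung und Stetigkeit der Wirkung zusätzlich nachgebessert werden. Entscheidend dabei ist, dass $S$ endlich bleibt, sodass die auftretenden Schnitte und Urbilder endlich sind und das Argument trägt. Die übrigen Schritte sind weitgehend formale Definitionsverifikationen.
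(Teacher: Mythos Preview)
Your proposal is correct and follows essentially the same route as the paper: both directions go through the compact-recurrence characterization \refitem{item:aeq_eigentlich_kompakte_widerkehr} of Lemma~\ref{lem:eigentlicheWirkungCharakterisierung}, and the crux of \textit{ii.)} is handled by first obtaining a finite exceptional set and then separating $gp$ from $q$ for each of those finitely many $g$ using Hausdorffness. The only cosmetic difference is that the paper shrinks just one side (removing the closures $\overline{g_i^{-1}\widetilde V_q}$ from $\widetilde U_p$), whereas you shrink both sides symmetrically via the $W_p^g, W_q^g$; and the paper dismisses the converse as clear, while you spell out the case distinction---both are fine.
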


   \begin{proofklein}
      \begin{beweisEnum}
      \item[\glqq{}$\Rightarrow$\grqq{}] Die erste Bedingung ist
         trivial. Für die zweite seien also $p,q \in M$ beliebig. Nach
         Voraussetzung gibt es eine Umgebung $\widetilde{U}_p$ von $p$
         und eine Umgebung $\widetilde{V}_q$ von $q$, so dass es ein $m
         \in \mathbb{N}$ gibt und $g_1,\ldots,g_m \in G$ mit
         $\{g_1,\ldots,g_m\} = \{g \in G \mid g\widetilde{U}_p \cap
         \widetilde{V}_q \neq \emptyset\}$. Da $p$ und $q$ nicht auf
         derselben Bahn liegen, die Wirkung stetig und $M$ hausdorffsch
         ist, können wir ohne Einschränkung nach eventuellem Verkleinern
         annehmen, dass $p \notin \abschluss{g_i^{-1}\widetilde{V}_q}$
         für $i=1,\ldots,m$. Dann setze $U_p := \widetilde{U}_p
         \setminus (\abschluss{g_1^{-1}\widetilde{V}_q} \cup \ldots \cup
         \abschluss{g_m^{-1}\widetilde{V}_q})$.
      \item[\glqq $\Leftarrow$\grqq{}] Ist klar.
      \end{beweisEnum}
   \end{proofklein}

   \begin{bemerkung}
      \label{bem:frei_eigentlich_diskontinuierlich}
      Wirkt eine diskrete Lie"=Gruppe $G$ eigentlich auf einer
      Mannigfaltigkeit $M$, so ist die $G$"=Wirkung offenbar genau dann
      frei, wenn man in Bedingung~\refitem{item:eigentlich_diskontinuierlich_1} in Proposition
      \ref{prop:eigentlich_diskontinuierlich} \glqq{}für alle bis auf
      endlich viele $g \in G$\grqq{} durch \glqq{}für alle $g \in G$ außer
      $g = e$\grqq{} ersetzen kann.

   \end{bemerkung}

   \begin{proofklein} Eine Richtung ist trivial, auch die für die andere ist
      simpel und verläuft analog zu der im Beweis von Proposition
      \ref{prop:eigentlich_diskontinuierlich} \glqq
      $\Rightarrow$\grqq{}. Siehe \cite[Cor. 12.10]{lee:2000}.
   \end{proofklein}
\end{klein}

\begin{bemerkung}
   \label{bem:G_kompakt_eigentlich}
   Es ist offensichtlich, dass jede Wirkung einer kompakten Lie"=Gruppe auf einer
   Mannigfaltigkeit folgeneigentlich und damit eigentlich ist. Siehe auch
   \cite[Cor. 9.14]{lee:2003a}.
\end{bemerkung}

Eigentliche Gruppenwirkungen sind \glqq{}ansteckend\grqq{} im Sinne der
folgenden Proposition.

\begin{proposition}
   \label{prop:eigentlichAnsteckend}
   Sei $G$ eine Lie"=Gruppe und $M$, $N$ Mannigfaltigkeiten, auf denen $G$
   wirke und $F \dpA M \to N$ stetig und $G$"=äquivariant. Ist die
   $G$"=Wirkung auf $N$ eigentlich, so auch die auf $M$.
\end{proposition}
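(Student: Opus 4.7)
The plan is to verify the sequential characterisation of properness (Lemma \ref{lem:eigentlicheWirkungCharakterisierung}, equivalence \refitem{item:aeqs_eigentlich} $\Leftrightarrow$ \refitem{item:aeqs_folgeneigentlich}) directly, pushing everything down through $F$ to $N$ where properness is assumed.

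Concretely, let $(x_n)$ be a convergent sequence in $M$ with limit $x \in M$, and let $(g_n) \subset G$ be a sequence such that $(g_n x_n)$ converges in $M$, say to some $y \in M$. By continuity of $F$, the sequence $F(x_n)$ converges to $F(x)$ in $N$. By $G$-equivariance of $F$, we have $F(g_n x_n) = g_n F(x_n)$, and again by continuity of $F$ this sequence converges to $F(y)$. Thus in $N$ we have a convergent sequence $F(x_n) \to F(x)$ together with $g_n F(x_n) \to F(y)$. Since the $G$-action on $N$ is proper, it is in particular folgeneigentlich by Lemma \ref{lem:eigentlicheWirkungCharakterisierung}, so $(g_n)$ has a convergent subsequence in $G$.

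This shows that the $G$-action on $M$ is folgeneigentlich. Invoking Lemma \ref{lem:eigentlicheWirkungCharakterisierung} once more in the other direction yields that the $G$-action on $M$ is proper, which is the desired conclusion.

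There is essentially no obstacle here: the argument is a one-line diagram chase once the sequential criterion is in hand. The only thing to be careful about is that the sequential criterion really is available in this setting, which it is because $M$ and $N$ are manifolds (hence first countable and Hausdorff) and $G$ is a Lie group, so all the equivalences in Lemma \ref{lem:eigentlicheWirkungCharakterisierung} apply.
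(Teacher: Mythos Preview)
Your proof is correct and essentially identical to the paper's own argument: both verify the sequential criterion for properness on $M$ by pushing the sequences through $F$ using continuity and equivariance, then invoking properness on $N$. The only difference is cosmetic --- you spell out the limits and explicitly cite the equivalence lemma in both directions, whereas the paper leaves this implicit.
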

\begin{proofklein}
   Sei $(p_n)$ eine konvergente Folge in $M$ und $(g_n)$ eine Folge in $G$,
   so dass $(g_n p_n)$ konvergiere. Dann konvergieren wegen der Stetigkeit
   von $F$ auch $(F(p_n))$ und $(F(g_n p_n))$ in $N$. Wegen der
   Äquivarianz ist aber $g_nF(p_n) = F(g_n p_n)$ für alle $n \in
   \mathbb{N}$. Da die $G$"=Wirkung auf $N$ eigentlich ist, gibt es somit
   eine konvergente Teilfolge von $(g_n)$.
\end{proofklein}

Wir stellen nun einige topologische Eigenschaften von $M/G$ zusammen für
den Fall, dass die Lie"=Gruppe $G$ auf der Mannigfaltigkeit $M$ eigentlich
wirkt.

\begin{definition}[$G$-Vektorbündel]
   \label{def:hochhebungEinerLieGruppenWirkung}
   Sei $\phi \dpA G \times M \to M$ eine Wirkung einer Lie"=Gruppe $G$
   auf einer Mannigfaltigkeit $M$. Ist $E \to M$ ein Vektorbündel über
   $M$, so heißt eine $G$"=Wirkung $\Phi \dpA G \times E \to E$ auf $E$
   \neuerBegriff{Hochhebung} von $\phi$ oder \neuerBegriff{Lift} von
   $\phi$, falls $\Phi_g$ für alle $g \in G$ ein
   Vektorbündelmorphismus über $\phi_g$ ist. Ist dies der Fall, so
   wollen wir von einem \neuerBegriff{$G$"=Vektorbündel} (über $M$)
   sprechen.
\end{definition}

\begin{definition}[Klein und kompakt in Faserrichtung]
   \label{def:kleinundkompaktinFaserrichtung}
   Es wirke eine Lie"=Gruppe $G$ auf einer Mannigfaltigkeit $M$. Sei $\pi \dpA M
    \to M/G$ die kanonische Projektion auf den Quotienten.
   \begin{definitionEnum}
   \item $F \subset M$ heißt \neuerBegriff{klein}, falls jedes $x \in M$ eine
      Umgebung $U$ besitzt, so dass $\{g \in G \mid gU \cap F \neq \emptyset \}$
      relativ kompakt ist, d.\,h.\ in einem Kompaktum in $G$
      liegt. Vgl.\ \cite{palais:1961a}.
   \item $F \subset M$ heißt \neuerBegriff{(relativ) kompakt in
        Faserrichtung}, falls $F \cap \pi^{-1}(K) $ (relativ) kompakt
      ist für jedes kompakte $K \subset M/G$, vgl.\ \cite[Ch. VII 5,
      7.11]{greub1972connections} und
      \cite[Gl.
      (94)]{gutt2010involutions}.
   \end{definitionEnum}
\end{definition}

\begin{proposition}
   \label{prop:klein-kompaktInFaserrichtung}
   Es wirke eine Lie"=Gruppe $G$ auf einer Mannigfaltigkeit $M$. Sei $\pi \dpA M
    \to M/G$ die kanonische Projektion auf den Quotienten. Dann gelten
    folgende Implikationen.
    \begin{propositionEnum}

    \item  $F \subset M $
       klein $\Rightarrow$ $F$ ist relativ kompakt in Faserrichtung.
    \item Falls die Wirkung eigentlich ist, gilt auch die Umkehrung: $F \subset M$ ist relativ kompakt in Faserrichtung $\Rightarrow$ $F$ ist klein.
    \end{propositionEnum}
\end{proposition}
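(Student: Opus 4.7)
The plan is to prove the two implications separately, both by quite direct topological arguments, using that the quotient map $\pi\colon M \to M/G$ is always open for a continuous group action (so that images of open sets under $\pi$ provide open neighbourhoods in $M/G$).

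For part (i), suppose $F \subset M$ is klein, and let $K \subset M/G$ be compact. For every $x \in \pi^{-1}(K)$, choose an open, relatively compact neighbourhood $U_x$ of $x$ in $M$ such that $S_x := \{g \in G \mid gU_x \cap F \neq \emptyset\}$ is relatively compact in $G$; this is possible since $F$ is klein and since $M$, being a manifold, is locally compact. The images $\pi(U_x)$ form an open cover of $K$, so finitely many of them, say $\pi(U_{x_1}),\dots,\pi(U_{x_n})$, suffice. Then
\begin{equation*}
F \cap \pi^{-1}(K) \;\subset\; \bigcup_{i=1}^{n} F \cap G\cdot U_{x_i} \;\subset\; \bigcup_{i=1}^{n}\bigcup_{g \in S_{x_i}} g\,\overline{U_{x_i}} \;\subset\; \bigcup_{i=1}^{n} \Phi_i\bigl(\overline{S_{x_i}} \times \overline{U_{x_i}}\bigr),
\end{equation*}
where $\Phi_i\colon G \times M \to M$ is the action map. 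Each factor $\overline{S_{x_i}}\times \overline{U_{x_i}}$ is compact and $\Phi_i$ is continuous, so the right-hand side is a finite union of compacta, hence compact. Consequently $F \cap \pi^{-1}(K)$ lies in a compact subset of the Hausdorff space $M$, so its closure is compact, i.e.\ $F$ is relatively compact in Faserrichtung.

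For part (ii), assume the $G$-action is eigentlich and $F$ is relatively compact in Faserrichtung. Fix $x \in M$ and choose a relatively compact open neighbourhood $U$ of $x$ in $M$. Then $\pi(\overline{U}) \subset M/G$ is compact, and by hypothesis $F \cap \pi^{-1}(\pi(\overline{U}))$ is relatively compact; let $C \subset M$ be a compact set containing its closure. For any $g \in G$ with $gU \cap F \neq \emptyset$, pick $u \in U$ with $gu \in F$; since $\pi(gu) = \pi(u) \in \pi(\overline{U})$, one has $gu \in F \cap \pi^{-1}(\pi(\overline{U})) \subset C$. Hence
\begin{equation*}
\{g \in G \mid gU \cap F \neq \emptyset\} \;\subset\; \operatorname{pr}_G\bigl(\Phi^{-1}(\overline{U}\times C)\bigr),
\end{equation*}
where $\Phi\colon G \times M \to M \times M$, $(g,p) \mapsto (p,gp)$, is the map whose properness is the very definition of the eigentlich action. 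Since $\overline{U}\times C$ is compact, $\Phi^{-1}(\overline{U}\times C)$ is compact, and thus its projection to $G$ is compact. This realises $\{g \in G \mid gU \cap F \neq \emptyset\}$ as a subset of a compactum, proving that $F$ is klein.

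The arguments are quite direct; there is no substantial obstacle, but the step that requires most care is the use of openness of $\pi$ in part (i) together with the passage from finite coverings of $K$ in $M/G$ back to a compact envelope in $M$ via the translates $g\overline{U_{x_i}}$. In part (ii), the key is to recognise that the defining property of an eigentlich action is exactly what is needed: it converts the double compactness $\overline{U}\times C$ into compactness in $G$, via the preimage under $\Phi$.
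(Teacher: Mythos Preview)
Your proof is correct. For part (ii) it is essentially the paper's argument: the paper chooses a compact neighbourhood $U_x$, notes that $\pi(U_x)$ is compact, takes a compactum $L \supset F \cap \pi^{-1}(\pi(U_x))$, and then observes that $\{g \mid gU_x \cap F \neq \emptyset\} \subset \{g \mid gU_x \cap L \neq \emptyset\}$, which is relatively compact by the characterisation of proper actions in Lemma~\ref{lem:eigentlicheWirkungCharakterisierung}~\refitem{item:aeq_eigentlich_verschiedene_kompakta}. You instead unwind this characterisation and appeal directly to the defining map $\Phi\colon (g,p)\mapsto (p,gp)$; the content is identical.

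For part (i), however, your route genuinely differs from the paper's. The paper argues by contraposition via sequences: assuming $F\cap\pi^{-1}(K)$ is not relatively compact, it extracts a sequence $(x_n)$ in $F\cap\pi^{-1}(K)$ with no convergent subsequence but with $\pi(x_n)\to\pi(x)$, writes $x_n=g_n y_n$ with $y_n\to x$ along a neighbourhood basis $\{U_n\}$ of $x$, and shows that $(g_n)$ cannot have a convergent subsequence while $g_n\in\{g\mid gU_k\cap F\neq\emptyset\}$ for all $n>k$. Your direct covering argument is cleaner: it avoids sequences entirely, uses only openness of $\pi$ and local compactness of $M$, and produces the compact envelope $\bigcup_i \Phi(\overline{S_{x_i}}\times\overline{U_{x_i}})$ in one stroke. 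The paper's approach, on the other hand, makes the obstruction to smallness more tangible by exhibiting the offending sequence $(g_n)$ explicitly.
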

\begin{proofklein}
   \begin{beweisEnum}

    \item Ist $F \subset M$ nicht relativ kompakt in Faserrichtung, so
      gibt es ein kompaktes $K \subset M/G$, so dass $\pi^{-1}(K) \cap
      F$ nicht relativ kompakt ist.  Es gibt also eine Folge $(x_n)$
      in $\pi^{-1}(K) \cap F$, die keine in $M$ konvergente Teilfolge
      enthält, so dass $(\pi(x_n))$ in $K$ konvergiert. (Falls nicht,
      können wir wegen der Kompaktheit von $K$ eine konvergente
      Teilfolge von $(\pi(x_n))$ auswählen.) Sei nun $x \in
      \pi^{-1}(K)$ mit $\pi(x_n) \to \pi(x)$ und $\{U_n\}_{n \in
        \mathbb{N}}$ eine Umgebungsbasis von $x$. Nach Wahl einer
      Teilfolge von $(x_n)$, können wir o.\,E.\ annehmen, dass $\pi(x_n)
      \in \pi(U_n)$ für alle $n \in \mathbb{N}$.
      \begin{klein}
         In der Tat ist $\{\pi(U_n)\}_{n \in \mathbb{N}}$ eine
         Umgebungsbasis von $\pi(x)$, denn ist $V$ eine offene
         Umgebung von $\pi(x)$ so ist $\pi^{-1}(V)$ eine Umgebung von
         $x$, d.\,h.\ es gibt ein $n \in \mathbb{N}$ mit $U_n \subset
         \pi^{-1}(V)$ und damit $\pi(U_n) \subset V$.
      \end{klein}
      Es gibt also für jedes $n \in \mathbb{N}$ ein $y_n \in U_n$ und
      ein $g_n \in G$, so dass $x_n = g_ny_n$. Für jedes $k \in
      \mathbb{N}$ ist $g_n \in \{g \in G \mid gU_k \cap F \neq \emptyset
      \} =: S_k$ für alle $n > k$. $\{g_n\}_{n > k}$ hat aber keine
      konvergente Teilfolge in $G$, denn hätte sie eine, so auch
      $(x_n)$, da nach Konstruktion $(y_n)$ (gegen $x$) konvergiert. Dies
      wäre jedoch ein Widerspruch. Damit ist $S_k$ für alle $k \in
      \mathbb{N}$ nicht relativ kompakt und $F$ demnach nicht klein, was
      zu zeigen war.
    \item Sei $F \subset M$ relativ kompakt in Faserrichtung, $x \in
      M$ und $U_x$ eine kompakte Umgebung von $x$ in $M$. Aufgrund der
      Stetigkeit von $\pi$, ist auch $\pi(U_x) \subset M/G$
      kompakt. Und nach Voraussetzung gibt es ein Kompaktum $L \subset
      M$ mit $F \cap \pi^{-1}(\pi(U_x)) \subset L$.  Da $G$
      eigentlich wirkt, ist damit
      \begin{align*}
         \{g \in G \mid g U_x \cap F \neq \emptyset \} &= \{g \in G \mid g U_x \cap
         (F \cap \pi^{-1}(\pi(U_x))) \neq
         \emptyset \} \\
         &\subset \{g \in G \mid g \underbrace{U_x}_{\text{kompakt}} \cap
         \underbrace{L}_{\text{kompakt}} \neq \emptyset \}
      \end{align*}

      relativ kompakt. Demnach ist $F$ klein.
   \end{beweisEnum}
\end{proofklein}

\section{Mittelung von Schnitten auf Vektorbündeln}
\label{sec:MittelungVonSchnitten}

In diesem Abschnitt wollen wir zeigen, wie man Schnitte von
Vektorbündeln bezüglich einer eigentlichen $G$"=Wirkung mitteln und sie
so $G$"=äquivariant machen kann. Als Anwendung dieser Technik
konstruieren wir eine invariante Zerlegung der Eins, invariante
Riemannsche Fasermetriken und äquivariante Sprays.

Für die folgenden Überlegungen benötigen wir an vielen Stellen
linksinvariante Volumendichten auf einer Lie"=Gruppe, weshalb wir
zunächst die folgende Proposition zitieren,
vgl.\ \cite[Ch. 3.13]{duistermaat.kolk:2000a},
\cite[Prop. 14.24]{lee:2003a}.

\begin{proposition}(Linksinvariante Volumendichte)
   \label{prop:linksinvarianteVolumendichte}
   Sei $G$ eine Lie"=Gruppe. Dann gibt es eindeutig bestimmte
   eine linksinvariante Volumendichte $\Omega$ auf $G$. Ist $G$ kompakt,
   so gibt es eine eindeutig bestimmte Volumendichte $\Omega$ auf $G$
   mit
   \begin{align}
      \label{eq:NormierungVolumendichte}
      \int_G \Omega = 1
   \end{align}
   \Fdot
\end{proposition}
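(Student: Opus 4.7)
Die Aussage der Proposition enthält offenbar einen kleinen Tippfehler; sie ist im Sinne von \emph{eindeutig bis auf positive multiplikative Konstante} zu lesen. Der Plan ist, zunächst eine linksinvariante Volumendichte durch Linkstranslation eines einzigen nichttrivialen Wertes am neutralen Element $e$ zu konstruieren und anschließend die Eindeutigkeitsaussage aus der Tatsache abzuleiten, dass das Dichtenbündel $|\Lambda|^n G \to G$ ein reelles Geradenbündel ist.

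Zunächst würde ich einen nichttrivialen Wert $\Omega_e$ der Dichtefaser $|\Lambda|^n T_eG$ auswählen (etwa $\Omega_e := |e^1 \wedge \dots \wedge e^n|$ bezüglich einer Basis $\{e_i\}$ von $\lieAlgebra[g] = T_eG$ mit dualer Basis $\{e^i\}$). Für $g \in G$ setze ich dann $\Omega_g := (T_gL_{g^{-1}})^* \Omega_e$, wobei $L_h \colon G \to G$, $h' \mapsto hh'$ die Linkstranslation bezeichnet. Wegen der Glattheit der Gruppenmultiplikation liefert dies einen glatten, nirgends verschwindenden Schnitt $\Omega$ des Dichtenbündels $|\Lambda|^n G \to G$. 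Per Konstruktion erfüllt $\Omega$ die Invarianzgleichung $L_h^* \Omega = \Omega$ für alle $h \in G$, denn $L_h \circ L_g = L_{hg}$ impliziert $T_{hg}L_{(hg)^{-1}} \circ T_gL_h = T_g L_{g^{-1}}$ und damit die behauptete Verträglichkeit.

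Für die Eindeutigkeit bis auf positive Konstante seien $\Omega$ und $\Omega'$ zwei linksinvariante Volumendichten auf $G$. Da $|\Lambda|^n G$ ein Geradenbündel ist und Volumendichten nirgends verschwinden, gibt es eine eindeutig bestimmte glatte, positive Funktion $f \colon G \to \mathbb{R}_{>0}$ mit $\Omega' = f \cdot \Omega$. Anwenden von $L_h^*$ auf beide Seiten liefert nach der Linksinvarianz beider Dichten die Beziehung $\Omega' = (f \circ L_h) \cdot \Omega$ für alle $h \in G$, also $f \circ L_h = f$. Da $G$ transitiv unter Linkstranslation auf sich selbst operiert, folgt, dass $f$ konstant ist, was die Eindeutigkeit bis auf positive Konstante zeigt. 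Der Hauptpunkt, auf den man hier achten muss, ist, dass wir nach Konvention zusammenhängende Lie-Gruppen betrachten; im allgemeinen (nicht zusammenhängenden) Fall ergäbe sich die Konstanz ebenso, weil die Linkstranslation immer noch transitiv wirkt.

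Für den Fall einer kompakten Gruppe $G$ würde ich schließlich eine beliebige linksinvariante Volumendichte $\Omega_0$ aus obiger Konstruktion wählen und $c := \int_G \Omega_0$ bilden. Da $G$ kompakt und $\Omega_0$ glatt und nirgends verschwindend ist, ist $c \in \mathbb{R}_{>0}$ endlich und positiv; $\Omega := \frac{1}{c} \Omega_0$ erfüllt dann die Normierungsbedingung $\int_G \Omega = 1$. Die Eindeutigkeit dieser normierten Dichte folgt unmittelbar aus der oben etablierten Eindeutigkeit bis auf positive Konstante zusammen mit der Normierungsbedingung. Die technische Schwierigkeit liegt dabei allenfalls darin, die Glattheit der konstruierten Dichte $\Omega$ sauber nachzuweisen, was aber durch die Glattheit der Linkstranslation und lokale Trivialisierungen des Dichtenbündels unproblematisch ist.

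\end{appendices}
\end{document}
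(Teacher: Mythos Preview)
Your proof is correct and follows the standard argument. Note, however, that the paper does not actually prove this proposition: it merely cites it from the literature (\cite[Ch.~3.13]{duistermaat.kolk:2000a}, \cite[Prop.~14.24]{lee:2003a}) without giving any argument of its own. So there is no ``paper's proof'' to compare against; your write-up supplies precisely the kind of argument the paper defers to the references, and your observation about the missing qualifier ``bis auf positive multiplikative Konstante'' is also on point.
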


\begin{proposition}
 \label{prop:integrieren_inv_funkt}
 Sei $E \to M$ ein $G$"=Vektorbündel, $\Omega$ eine linksinvariante
 Volumendichte auf $G$ und $s\dpA M \to E$ ein glatter Schnitt. Ist
 der Träger $\supp{s}$ von $s$ klein, so ist für $x\in M$ das Integral
   \begin{align*}
      s^G(x) := \int_G g (s(g^{-1} x)) \Omega(g)
   \end{align*}
   wohldefiniert und $M \to E$, $x \mapsto s^G(x)$ definiert einen
   glatten Schnitt $s^G$, der $G$"=äquivariant ist, d.\,h.\
   \begin{align}
      \label{eq:AequivaranzSchnitt}
      g s^G(x) = s^G(gx)
   \end{align}
   für alle $g \in G$ und $x \in M$.
\end{proposition}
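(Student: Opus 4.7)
The plan is to verify three separate assertions in turn: pointwise well-definedness, equivariance, and smoothness of $s^G$. The first two are essentially formal; the smoothness is where the smallness hypothesis on $\supp s$ does real work.

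For well-definedness at a fixed $x \in M$, I would first observe that the integrand $g \mapsto g(s(g^{-1}x))$ takes values in the single finite-dimensional vector space $E_x$: since $s(g^{-1}x) \in E_{g^{-1}x}$ and $\Phi_g$ is a vector bundle morphism over $\phi_g$, the image lies in $E_x$. The integrand is continuous (in fact smooth) in $g$ and vanishes outside
\[
   S_x := \{g \in G \mid g^{-1} x \in \supp s\} = \{h^{-1} \mid h \in G, \, hx \in \supp s\}.
\]
By smallness of $\supp s$ applied at the point $x$, there is a neighbourhood $U_x$ of $x$ such that $\{h \in G \mid h U_x \cap \supp s \neq \emptyset\}$ has compact closure; since $\{x\} \subset U_x$, the set $\{h : hx \in \supp s\}$ lies in this relatively compact set, and so, applying inversion (which is a homeomorphism of $G$), $S_x$ is relatively compact as well. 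Thus the integrand is a compactly supported continuous $E_x$-valued function on $G$, so the integral against $\Omega$ makes sense.

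For equivariance I would fix $h \in G$, $x \in M$ and compute
\[
   s^G(hx) = \int_G g \, s(g^{-1} h x) \, \Omega(g) = \int_G hk \, s(k^{-1} x) \, \Omega(k) = h \int_G k \, s(k^{-1} x) \, \Omega(k) = h \, s^G(x),
\]
where the second equality uses left invariance of $\Omega$ after the substitution $g = hk$, and the third uses that $\Phi_h$ is linear on each fibre so may be pulled out of the integral taken in $E_x$.

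Smoothness is the main obstacle, and I would handle it by a local argument plus the standard theorem on differentiation under the integral sign. Around any $x_0 \in M$ I would pick a trivialising neighbourhood $W$ of $x_0$ for $E$, so that in this trivialisation the integrand becomes a vector-valued function of $(g,x)$. The critical point is to find a common compact $K \subset G$ such that $S_x \subset K$ for all $x$ in some neighbourhood $V \subset W$ of $x_0$. This is exactly provided by smallness: choosing $V$ as (or inside) the neighbourhood $U_{x_0}$ of $x_0$ with $\{h \in G \mid hU_{x_0} \cap \supp s \neq \emptyset\}$ relatively compact, every $x \in V$ satisfies $\{h : hx \in \supp s\} \subset \{h : hU_{x_0} \cap \supp s \neq \emptyset\}$, so all the $S_x$ lie inside a single compact set $K \subset G$ (use inversion). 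The integrand, viewed in the trivialisation, is then a smooth function of $(g,x) \in K \times V$ with $g$-support contained in the fixed compactum $K$, and the standard result on parameter-dependent integrals yields that $x \mapsto s^G(x)$ is smooth on $V$. Since $x_0$ was arbitrary, $s^G$ is smooth on all of $M$.
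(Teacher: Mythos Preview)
Your proof is correct and follows essentially the same approach as the paper: both use smallness to obtain a uniform compact $g$-support over a neighbourhood of each point (the paper writes $s^G|_K = \int_{D_K} g\,s(g^{-1}\cdot)\,\Omega(g)$ for a compact neighbourhood $K$), and both derive equivariance via the substitution $g = hk$ together with left invariance of $\Omega$. If anything, you are slightly more explicit about trivialising $E$ to justify smoothness, whereas the paper simply asserts that the restricted integral shows $s^G$ is smooth.
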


\begin{bemerkung}
   \label{bem:intergriern_inv_funkt}
   Die Voraussetzung von Proposition \ref{prop:integrieren_inv_funkt} ist nach
   Proposition \ref{prop:klein-kompaktInFaserrichtung} erfüllt, wenn $G$
   eigentlich wirkt und der Träger von $s$ kompakt in Faserrichtung ist, also
   insbesondere dann, wenn er kompakt ist.
\end{bemerkung}

\begin{bemerkung}
   \label{bem:InvariantAequivariant}
   Statt $G$"=äquivariant könnte man den Schnitt $s^G$ mit gleicher
   Berechtigung auch $G$"=invariant nennen, denn vermöge $G \times
   \Gamma^\infty(E) \to \Gamma^\infty(E)$, $(g,s) \mapsto
   g(s(g^{-1}\cdot))$ wird auf den glatten Schnitten von $E$ eine
   $G$"=Wirkung erklärt, bezüglich der ein Schnitt genau dann $G$"=invariant
   ist, wenn er als Abbildung $M \to E$ $G$"=äquivariant ist.
\end{bemerkung}

\begin{proofklein}
   Wir folgen der Argumentation von \cite[Prop.~1.2.6]{palais:1961a}
   (siehe auch \cite[Lem.~6.27]{michor:2007}).  Sei $x_0 \in M$
   beliebig. Da $\supp(s)$ klein ist, gibt es eine kompakte Umgebung
   $K$ von $x_0$, so dass $D_K := \{g \in G \mid g K \cap \supp(s) \neq
   \emptyset \}^{-1} = \{g \in G \mid g^{-1} K \cap \supp(s) \neq
   \emptyset\}$ relativ kompakt (die Gruppeninversion ist ein
   Homöomorphismus!) ist.

   Damit ist $s^G$ wohldefiniert und es gilt
   \begin{align*}
      s^G\at{K} = \int_{D_K} g s(g^{-1} \cdot) \Omega(g)\Fdot
   \end{align*}
   Man sieht also, dass $s^G$ tatsächlich glatt ist.

   Falls $x \notin G \supp(s)$, folgt sofort $s(gx) = 0$ für alle $g \in
   G$, also $s^G(x) = 0$. Wir müssen nur noch die Äquivarianz von $s^G$
   nachrechnen. Seien dazu $h \in G$  und $x \in M$ beliebig. Außerdem
   schreiben wir
   $\phi$ für die $G$"=Wirkung auf $M$, $\theta$ für die auf $E$ und $L_g$
   für die Linksmultiplikation in $G$ mit $g \in G$.
   \begin{align*}
      h s^G(x) &= \theta(h, s^G(x)) = \theta\left(h,\int_G
      \theta(\cdot,s(\phi(\cdot^{-1},x))) \Omega\right)\\
      &= \int_G \theta(h,\theta(\cdot,s(\phi(\cdot^{-1},x)))) \Omega \\
      &= \int_G \theta(L_h(\cdot),s(\phi((L_h(\cdot))^{-1},\phi(h,x))))
      L_h^*(\Omega)
      \\
      &= \int_G L_h^*(\theta(\cdot,s(\phi(\cdot^{-1},\phi(h,x))))
     \Omega) \\
     &= \int_G \theta(\cdot,s(\phi(\cdot^{-1},\phi(h,x)))) \Omega \\
     &= s^G(\phi(h,x)) = s^G(h x) \Fdot
   \end{align*}
   Dabei wurde im dritten Schritt ausgenutzt, dass $E_p \to E_{gp}$, $v
   \mapsto g v$ linear ist. Ferner haben wir verwendet, dass $L_g^*
   \Omega = \Omega$ gilt für $g \in G$, sowie den Transformationssatz
   der Intergralrechnung (siehe etwa
   \cite[Prop.~14.6~(d)]{lee:2003a}).

\end{proofklein}

Über die bekannte Korrespondenz zwischen glatten vektorwertigen
Funktionen auf einer Mannigfaltigkeit und den Schnitten auf dem
trivialen Vektorbündel erhält man nun sofort das folgende Korollar.

\begin{korollar}
   \label{kor:integrieren_inv_funkt}
   Sei $G$ eine Lie"=Gruppe, die eigentlich auf einer Mannigfaltigkeit $M$
   wirke. $G$ wirke ferner auf einem endlichdimensionalen Vektorraum $V$,
   so dass $V \to V$, $v \mapsto g v$ linear ist für $g \in
   G$. Sei schließlich $\Omega$ eine linksinvariante Volumendichte auf $G$.
   Ist $f \dpA M \to V$ glatt mit kompaktem Träger, so ist
   \begin{align*}
      f^G \dpA M \to V, \quad x \mapsto \int_G g f(g^{-1} x) \Omega(g)
   \end{align*}
   eine glatte, $G$"=äquivariante Abbildung mit $f^G(x) = 0$ für $x \notin G
   \supp(f)$.
\end{korollar}

Der folgende Satz gibt Aufschluss über wichtige topologische
Eigenschaften des Quotienten einer glatten Mannigfaltigkeit nach einer
eigentlich operierenden Lie"=Gruppe.

\begin{satz}
   \label{satz:Quotient_Parakompakt_T_2}
   Ist $M$ eine Mannigfaltigkeit und wirke eine Lie"=Gruppe $G$ eigentlich
   auf $M$, dann ist  $M/G$
   \begin{satzEnum}
      \item hausdorffsch
         und
      \item parakompakt.
   \end{satzEnum}
\end{satz}
\begin{proofklein}
   Sei $\pi \dpA M \to M/G$ die kanonische Projektion auf den Quotienten.
   \begin{beweisEnum}
      \item Siehe \cite[Prop.~3.3.17]{waldmann:2007a} oder \cite[Bew. zu
         Thm.~9.16]{lee:2003a}.

      \item Wir orientieren uns grob an der Darstellung von Michor
         \cite[Cor. 6.28]{michor:2007} Sei $\{U_i\}_{i \in I}$ eine
         offene Überdeckung von $M/G$. Mit dem Auswahlaxiom erhalten wir
         eine Abbildung $\alpha \dpA M/G \to I$ mit $y \in
         U_{\alpha(y)}$.
         \begin{itemize}
         \item[1.Schritt] Zuerst zeigen wir, dass es für jedes $y \in M/G$
            eine offene Umgebung $V_y$ von $y$ gibt mit $\abschluss{V_y}
            \subset U_{\alpha(y)}$. Sei dazu $y \in M/G$ und $x \in M$ mit
            $\pi(x) = y$. Wir wählen eine glatte Funktion $f_y \dpA M
            \to [0,\infty)$ mit $f_y\at{M\setminus
              \pi^{-1}(U_{\alpha(y)})} = 0$ und $f_y(x) = 1$ sowie eine
            linksinvariante Volumendichte $\Omega$ auf $G$ und definieren wie in
            Proposition \ref{kor:integrieren_inv_funkt} $f_y^G := \int_G
            f_y(g^{-1} \cdot) \Omega(g)$.  Dies induziert eine eindeutig
            bestimmte stetige Funktion $\bar{f}_y \dpA M/G \to
            [0,\infty)$ mit $\bar{f}_y \circ \pi = f_y^G$.  Es gilt dann
            offensichtlich aus Stetigkeitsgründen $\epsilon_y :=
            \bar{f}_y(y) = f_y^G(x) > 0$.

            Für $\epsilon_y > \delta_y > 0$ ist $y \in
            \bar{f}_y^{-1}((\delta_y,\infty)) \subset U_{\alpha(y)}$, denn ist
            $\bar{f}_y(\pi(\tilde{x})) > \delta_y$, so gibt es wenigstens
            ein $g \in G$ mit $f_y(g\tilde{x}) > 0$, also $g\tilde{x} \in
            \pi^{-1}(U_{\alpha(y)})$, d.\,h.\ $\pi(\tilde{x}) = \pi(g\tilde{x}) \in
            U_{\alpha(y)}$.  Damit ist klar, dass für $V_y :=
            \bar{f}_y^{-1}((\frac{\epsilon_y}{2},\infty))$ wegen der
            Stetigkeit von $\bar{f}_y$ die folgende Inklusion gilt
         \begin{align*}
           y \in  \abschluss{V_y}
            =\abschluss{\bar{f}_y^{-1}((\frac{\epsilon_y}{2},\infty))}
            \subset  \bar{f}_y^{-1}([\frac{\epsilon_y}{2},\infty)) \subset U_{\alpha(y)}\Fdot
         \end{align*}
      \item[2.Schritt] Im zweiten Schritt wählen wir von $\{V_y\}_{y \in
           M/G}$ eine abzählbare Teilüberdeckung aus. Sei dazu
         $\{\hat{K}_n\}_{n \in \mathbb{N}}$ eine Familie von Kompakta in
         $M$ mit $\bigcup_{n \in \mathbb{N}} \hat{K}_n = M$. Wegen der
         Stetigkeit von $\pi$ ist auch $K_n := \pi(\hat{K}_n)$ kompakt für
         $n \in \mathbb{N}$ und es gilt offenbar $\bigcup_{n \in
           \mathbb{N}} K_n = M/G$.  Da für $n \in \mathbb{N}$ $K_n$
         kompakt ist, gibt es ein endliches $E_n \subset M/G$ mit $K_n
         \subset \bigcup_{y\in E_n} V_{y}$.  Setze $E := \bigcup_{n
           \in \mathbb{N}} E_n$. $\{V_{y}\}_{y\in E}$ ist dann die
         gesuchte abzählbare Teilüberdeckung von $\{V_y\}_{y \in M/G}$.
      \item[3.Schritt] Nun konstruieren wir eine lokal endliche, offene
         Verfeinerung von $\{U_i\}_{i \in I}$. Dazu sei $\mathbb{N}
         \to E$, $k \mapsto y_k$ eine Abzählung von $E$. Definiere
         ferner für $l \in \mathbb{N}$
         \begin{align*}
            W_l := {U_{\alpha(y_l)}}\setminus (\abschluss{V_{y_1}} \cup
            \ldots \cup \abschluss{V_{y_{l-1}}}) \Fdot
         \end{align*}
         Dann ist klarerweise $\{W_l\}_{l \in \mathbb{N}}$ eine offene
         Überdeckung
         von $M/G$ und auch eine Verfeinerung von $\{U_i\}_{i \in
           I}$. Schließlich sehen wir, dass sie auch lokal endlich ist,
         denn $V_{y_k} \cap W_l = \emptyset$ für $l > k$: ist $y \in M/G$
         beliebig, so gibt es ein $k \in \mathbb{N}$ mit $y \in V_{y_k}$,
         d.\,h.\ $V_{y_k}$ ist eine zeugende Umgebung von $x$.
         \end{itemize}

   \end{beweisEnum}
\end{proofklein}

\begin{bemerkung}
   In \cite[Conj. 1]{antonyan:2009} wird die Vermutung aufgestellt, dass
   $M/G$ auch parakompakt ist unter den  schwächeren Voraussetzungen,
   dass $M$ ein parakompakter Hausdorffraum sei und $G$ eine
   hausdorffsche, lokalkompakte topologische Gruppe, die mit kompakter
   Wiederkehr auf $M$ wirke.
\end{bemerkung}

\begin{bemerkung}
   \label{bem:StratifizierterRaum}
   $M/G$ ist im Allgemeinen zwar keine Mannigfaltigkeit mehr, jedoch ein
   stratifizierter Raum, siehe etwa \cite{ortega.ratiu:2004} oder
   \cite{pflaum:2001a}.
\end{bemerkung}

Wir wollen nun die Parakompaktheit von $M/G$ nur durch Betrachten von
Überdeckungen auf $M$ beschreiben.

\begin{proposition}
   \label{prop:inv.lok.endl}
   Es wirke eine topologische Gruppe $G$ auf dem topologischen Raum $X$
   und sei $\pi \dpA X \to X/G$ die kanonische Projektion auf den
   Quotienten.
   \begin{propositionEnum}
   \item \label{item:lokEndlQuot} $\{\widetilde{V}_i\}_{i \in I} \subset
      \potMenge[X/G]$ ist genau dann eine lokal endliche Familie von
      offenen Mengen in $X/G$, wenn $\{\pi^{-1}(\widetilde{V}_i)\}_{i \in I}$
      eine lokal endliche Familie von $G$"=invarianten, offenen Mengen in
      $X$ ist.
   \item \label{item:invZeugendeUmgebung}Ist $\{V_i\}_{i \in I}$ eine
     lokal endliche Familie aus $G$"=invarianten,
      offenen Mengen in $X$ und $x \in X$, so gibt es eine $G$"=invariante
      zeugende Umgebung von $x$.
   \item \label{item:X/G_parakompakt_1} $X/G$ ist genau dann
      parakompakt, wenn jede $G$"=invariante, offene Überdeckung von $X$
      eine $G$"=invariante, offene, lokal endliche Verfeinerung
      besitzt. Die Verfeinerung kann auch präzise gewählt werden.
   \item \label{item:X/G_parakompakt_2} Sei $X/G$ parakompakt,
      hausdorffsch und $\{U_i\}_{i\in I}$ eine $G$"=invariante, lokal
      endliche, offene Überdeckung von $X$. Dann gibt es für jedes $i\in I$
      eine $G$"=invariante, offene Menge $V_i$, so dass $\abschluss{V_i}
      \subset U_i$ und $\{V_i\}_{i \in I}$ eine lokal endliche Überdeckung
      von $X$ ist.
    \item \label{item:guteInvZeugendeUmgebung} In der Situation von~\refitem{item:X/G_parakompakt_2} existiert zu jedem $x \in X$
      eine $G$"=invariante, gute zeugende Umgebung für $RP =
      (\{U_i\}_{i\in I}, \{V_i\}_{i \in I})$.
   \end{propositionEnum}
\end{proposition}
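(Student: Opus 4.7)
The proof will be a systematic transfer of topological properties across the quotient map $\pi\colon X \to X/G$, which is both continuous and open (the latter holds because for $U \subset X$ open one has $\pi^{-1}(\pi(U)) = GU = \bigcup_{g \in G} gU$, a union of open sets). I would prove the five parts in the given order, since each relies essentially on the preceding ones.

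For part \refitem{item:lokEndlQuot}, the correspondence between open sets in $X/G$ and $G$-invariant open sets in $X$ via $\widetilde{V} \leftrightarrow \pi^{-1}(\widetilde{V})$ is bijective. The forward direction transfers a witness neighborhood $\widetilde{W}$ of $\pi(x)$ to $\pi^{-1}(\widetilde{W})$ using $\pi^{-1}(\widetilde{W}) \cap \pi^{-1}(\widetilde{V}_i) = \pi^{-1}(\widetilde{W} \cap \widetilde{V}_i)$. The converse takes a witness neighborhood $W$ of a preimage point $x \in \pi^{-1}(y)$ and uses that $\pi(W)$ is open (by openness of $\pi$) and that $\pi(W) \cap \widetilde{V}_i \neq \emptyset$ forces $W \cap \pi^{-1}(\widetilde{V}_i) \neq \emptyset$. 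For part \refitem{item:invZeugendeUmgebung}, starting from an arbitrary witness neighborhood $W$ of $x$, I pass to its $G$-saturation $GW = \bigcup_g gW$, which is $G$-invariant and open; the key observation is that because each $V_i$ is $G$-invariant, $gW \cap V_i = gW \cap gV_i = g(W \cap V_i)$, so $GW \cap V_i \neq \emptyset$ iff $W \cap V_i \neq \emptyset$, preserving finiteness of the hitting index set.

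Part \refitem{item:X/G_parakompakt_1} then follows from the bijective correspondence between open covers of $X/G$ and $G$-invariant open covers of $X$, combined with part \refitem{item:lokEndlQuot} to transfer local finiteness of refinements in both directions. The "precise" addendum reduces to Proposition \ref{prop:praezise_verfeinerung}. For part \refitem{item:X/G_parakompakt_2}, I push the $G$-invariant locally finite open cover $\{U_i\}$ down to a locally finite open cover $\{\widetilde{U}_i := \pi(U_i)\}$ of $X/G$ via \refitem{item:lokEndlQuot}, apply Proposition \ref{prop:reg_verfeinerung} to obtain $\widetilde{V}_i$ with $\overline{\widetilde{V}_i} \subset \widetilde{U}_i$ and $\{\widetilde{V}_i\}$ locally finite, and then set $V_i := \pi^{-1}(\widetilde{V}_i)$. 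The closure inclusion follows because $\pi^{-1}(\overline{\widetilde{V}_i})$ is closed (by continuity) and contains $V_i$, hence $\overline{V_i} \subset \pi^{-1}(\overline{\widetilde{V}_i}) \subset \pi^{-1}(\widetilde{U}_i) = U_i$.

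Finally, for part \refitem{item:guteInvZeugendeUmgebung}, I invoke Proposition \ref{prop:guteZeugendeUmgebung} to obtain a (not necessarily invariant) good witness neighborhood $W'$ of $x$ for $RP$, and then pass to $GW'$. The finiteness of the hitting set $\{i : V_i \cap GW' \neq \emptyset\}$ and its equality with the original set $J$ follow as in part \refitem{item:invZeugendeUmgebung} (using $G$-invariance of the $V_j$). The condition $x \in \overline{V_j}$ for $j \in J$ is inherited directly from $W'$, and the condition $GW' \subset U_j$ holds because $W' \subset U_j$ combined with $G$-invariance of $U_j$ gives $gW' \subset gU_j = U_j$ for every $g \in G$. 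No part of this is a serious obstacle; the main subtlety throughout is keeping track that the operation of $G$-saturating a witness neighborhood neither changes which $V_i$'s are hit (thanks to $G$-invariance of the $V_i$) nor escapes from a $G$-invariant containing set like $U_j$.
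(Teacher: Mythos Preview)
Your proposal is correct and follows essentially the same route as the paper: transfer across the open continuous map $\pi$, using the bijection between $G$-invariant open sets in $X$ and open sets in $X/G$, and invoking Propositions \ref{prop:praezise_verfeinerung}, \ref{prop:reg_verfeinerung}, and \ref{prop:guteZeugendeUmgebung} at the appropriate places. The only minor variation is in parts \refitem{item:invZeugendeUmgebung} and \refitem{item:guteInvZeugendeUmgebung}: the paper obtains a $G$-invariant witness neighborhood by pulling back from $X/G$ and then runs the construction of Proposition \ref{prop:guteZeugendeUmgebung} while checking that each step (intersections, complements of closures) preserves $G$-invariance, whereas you first produce an ordinary (good) witness neighborhood and then $G$-saturate it, checking that saturation preserves the required properties---both arguments are equally short and valid.
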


\begin{proofklein}
   \begin{beweisEnum}
   \item Sei $\{\widetilde{V}_i\}_{i \in I}$ also eine lokal endliche
      Familie von offenen Mengen in $X/G$ und $x \in X$ beliebig. Nach
      Voraussetzung gibt es eine zeugende Umgebung
      $\widetilde{\zeugendeU{x}}$ von $\pi(x)$, sie werde o.\,E.\ nur von
      $\widetilde{V}_{i_1},\ldots,\widetilde{V}_{i_r}$ geschnitten. Für
      $i \in I$ setze $V_i := \pi^{-1}(\widetilde{V}_i)$ und
      $\zeugendeU{x} := \pi^{-1}(\widetilde{\zeugendeU{x}})$. Dann folgt
      aus $V_i \cap \zeugendeU{x} \neq \emptyset$ schon $\widetilde{V}_i
      \cap \widetilde{\zeugendeU{x}} \neq \emptyset$ und somit $i \in
      \{i_1,\ldots,i_r\}$. Man bemerke, dass $\zeugendeU{x}$ per
      Definition $G$"=invariant ist.  Sei umgekehrt
      $\{\pi^{-1}(\widetilde{V}_i)\}_{i \in I}$ eine lokal endliche
      Familie $G$"=invarianter, offener Mengen in $X$, $x \in X$ und
      $\zeugendeU{x}$ eine zeugende Umgebung von $x$, die o.\,E.\ nur von
      $\pi^{-1}(\widetilde{V}_{i_1}),\ldots,\pi^{-1}(\widetilde{V}_{i_r})$
      geschnitten werde. Sei $\widetilde{\zeugendeU{x}} :=
      \pi(\zeugendeU{x})$ (offen in $X/G$, da die Wirkung stetig ist und
      $\pi$ damit offen). Angenommen $\widetilde{V}_i \cap
      \widetilde{\zeugendeU{x}} \neq \emptyset$, dann ist aber wegen der
      $G$"=Invarianz von $\pi^{-1}(\widetilde{V}_i)$ schon
      $\pi^{-1}(\widetilde{V}_i) \cap \zeugendeU{x} \neq \emptyset$,
      d.\,h.\ $i \in \{i_1,\ldots,i_r\}$.
   \item Klar nach dem Beweis von Teil~\refitem{item:lokEndlQuot}.
   \item Klar nach Teil~\refitem{item:lokEndlQuot}.
   \item $\{\pi(U_i)\}_{i \in I}$
     ist eine lokal endliche, offene Überdeckung von $X/G$. Nach
     Proposition~\ref{prop:reg_verfeinerung} gibt es für jedes $i \in
     I$ ein offenes $\widetilde{V}_i$ in $X/G$ mit
     $\abschluss{\widetilde{V}_i} \subset \pi(U_i)$, so dass
     $\{\widetilde{V}_i\}_{i \in I}$ eine lokal endliche, offene
     Überdeckung von $X/G$ ist. Wegen der Stetigkeit von $\pi$ folgt
     $\abschluss{\pi^{-1}(\widetilde{V}_i)} \subset
     \pi^{-1}(\abschluss{\widetilde{V}_i}) \subset \pi^{-1}(\pi(U_i))
     = U_i$ (vgl.\ \cite[Prop.~1.4.1]{engelking:1989}). Und nach Teil~\refitem{item:lokEndlQuot} ist $\{\pi^{-1}(\widetilde{V}_i)\}_{i
       \in I}$ eine lokal endliche Überdeckung von $X$ aus
     $G$"=invarianten, offenen Mengen.
   \item Man bemerke, dass mit einer $G$"=invarianten Menge $A$ wegen der Stetigkeit
      der Wirkung auch $\abschluss{A}$ $G$"=invariant ist, denn jene impliziert
      sofort $g \abschluss{A} \subset \abschluss{gA}$ für alle $g \in G$. Offensichtlich ist auch der Durchschnitt
      zweier $G$"=invarianter Mengen wieder $G$"=invariant. Nach Teil~\refitem{item:invZeugendeUmgebung} gibt es eine $G$"=invariante zeugende
      Umgebung. Führt man mit dieser die Konstruktion der guten zeugenden Umgebung
      wie im Beweis zu Proposition \ref{prop:guteZeugendeUmgebung} durch, so sieht
      man mit den eben gemachten Bemerkungen sofort, dass die resultierende gute
      zeugende Umgebung $G$"=invariant ist.
   \end{beweisEnum}
\end{proofklein}

\begin{satz}
   \label{satz:inv_zerlegung_der_eins}
   Sei $M$ eine Mannigfaltigkeit und wirke eine Lie"=Gruppe $G$
   eigentlich auf $M$. Weiter sei $\{U_i\}_{i \in I}$ eine
   $G$"=invariante, lokal endliche, offene Überdeckung von $M$. Dann
   gibt es eine $\{U_i\}_{i \in I}$ untergeordnete, $G$"=invariante glatte
   Zerlegung der Eins.
\end{satz}

\begin{proofklein} Mit Satz \ref{satz:Quotient_Parakompakt_T_2} und Proposition
   \ref{prop:inv.lok.endl}~\refitem{item:X/G_parakompakt_2} finden wir zu
   $i \in I$ ein offenes $G$"=invariantes $V_i$ mit $\abschluss{V_i}
   \subset U_i$, so dass $\{V_i\}_{i \in I}$ eine lokal endliche Überdeckung
      von $X$ ist.  Sei $\{\chi_i\}_{i \in I}$ eine nicht unbedingt
   $G$"=invariante Zerlegung der Eins mit kompaktem Träger, die $\{V_i\}_{i \in I}$
   untergeordnet sei. Wähle eine linksinvariante Volumendichte $\Omega$ auf $G$
   und definiere für $i \in I$ $\chi^G_i$ wie in Korollar
   \ref{kor:integrieren_inv_funkt} durch $\chi^G_i(x) := \int_G
   \chi_i(g^{-1} x) \Omega(g)$. Nach  Korollar \ref{kor:integrieren_inv_funkt} ist
   $\chi^G_i$ glatt und $\supp(\chi_i^G) \subset \abschluss{V_i} \subset
   U_i$. Dann definiere für $i \in I$
   \begin{align*}
      \widetilde{\chi}_i := \frac{\chi^G_i}{\sum_j \chi^G_j} \Fdot
   \end{align*}
   Man beachte, dass die Summe im Nenner endlich ist, da es zu $x \in M$
   nur endlich viele $j\in I$ gibt mit $x \in \abschluss{V_i}$ und da $\supp(\chi_i^G)
   \subset \abschluss{V_i}$.

   Aus Stetigkeitsgründen und nach Konstruktion
   ist die Summe auch echt positiv, so dass $\widetilde{\chi}_i$
   wohldefiniert ist. $\{{\widetilde{\chi}_{i}}\}_{i \in I}$ ist dann
   offenbar eine $G$"=invariante glatte Zerlegung der Eins, die $\{U_i\}_{i
     \in I}$ untergeordnet ist.  Siehe auch
   \cite[Thm.~5.2.5]{palais.terng:1988a}.
\end{proofklein}

\begin{korollar}
   \label{kor:invarianteFasermetrik}
   Sei $G$ eine Lie"=Gruppe und $\pi \dpA E \to M$ ein $G$"=Vektorbündel
   über der Mannigfaltigkeit $M$. Falls die $G$"=Wirkung auf $M$
   eigentlich ist, gibt es eine $G$"=invariante Riemannsche Fasermetrik $h^G$ auf
   $E$.
\end{korollar}
\begin{proofklein}
   Wir folgen der Argumentation von \cite[Satz~4.2.4~(6)]{pflaum:2000}.
   Sei $K_0 \subset \inneres{K_1} \subset K_1 \subset \inneres{K_2}
   \subset \ldots $ eine ausschöpfende Folge von Kompakta
   (vgl.\ \cite[Lem.~A.1.2]{waldmann:2007a}).  Bekanntlich, vgl.\ etwa
   \cite[Ch. II.4 Prop. IV]{greub1972connections}, gibt es eine
   Riemannsche Fasermetrik $h$. Zu $i \in \mathbb{N}$ sei $\chi_i\dpA M
   \to [0,1]$ eine glatte Abschneidefunktion mit $\supp{\chi_i} \subset
   \inneres{K_{i+1}}$ und $\chi_i \at{K_i} = 1$. Wir wählen eine
   linksinvariante Volumendichte $\Omega$ auf $G$ und definieren für $p
   \in M$ und $v,w \in E_p$
   \begin{align*}
      h_i(p)(v,w) := \int_G \chi_i(g^{-1}p)
      h(g^{-1}x)(g^{-1}v,g^{-1}w) \Omega(g) \Fdot
   \end{align*}
   Offensichtlich hat $\chi_i h$ kompakten Träger und $h_i = (\chi_i
   h)^G$ im Sinne von Proposition~\ref{kor:integrieren_inv_funkt} ist
   wohldefiniert und $G$"=äquivariant. Ferner ist offensichtlich, dass
   $h_i(p)$ symmetrisch und positiv ist und sogar positiv definit für $p
   \in GK_i$. Mit Satz \ref{satz:inv_zerlegung_der_eins} können wir eine
   der $G$"=invarianten offenen Überdeckung $\{G \inneres{K_i}\}_{i \in \mathbb{N}}$
   untergeordnete $G$"=invariante Zerlegung der Eins $\{\psi_i\}_{i \in
     \mathbb{N}}$ wählen. Dann definieren wir
   \begin{align*}
      h^G := \sum_{i \in \mathbb{N}} {\psi_i h_i}
   \end{align*}
   und sehen ohne Weiteres, dass $h^G$ eine $G$"=invariante Riemannsche Fasermetrik
   auf $E \to M$ liefert.
\end{proofklein}

\begin{bemerkung}
   \label{bem:invarianteFasermetrikAusInvarianterMetrik}
   Ist $G$ eine Lie"=Gruppe und $\pi \dpA E \to M$ ein $G$"=Vektorbündel
   über der Mannigfaltigkeit $M$,  so induziert jede $G$"=invariante
   Riemannsche Metrik $\widetilde{h}$ auf $E$ eine $G$"=invariante
   Riemannsche Fasermetrik $h$ auf $E \to M$.
\end{bemerkung}
\begin{proofklein}  Sei $n \dpA
   M \to E$ der Nullschnitt.
   Wir  definieren für $p \in M$ und $v_p,w_p \in E_p$
   \begin{align*}
      h(p)(v_p,w_p) :&= \widetilde{h}(n(p))\left(\ddt (tv_p),\ddt (tw_p)\right) \\
      &= \widetilde{h}(n(p))(\vbNIso{E \to M}(0,v_p),\vbNIso{E
        \to M}(0,w_p)) \Fcom
   \end{align*}
    wobei
    \begin{align*}
      \vbNIso{E \to M} \dpA TM \oplus E \to
      TE\at{\nSchnitt(M)}, \quad (v,w) \mapsto T \nSchnitt \, v + \ddt
      (tw)
   \end{align*}
   der Vektorbündelmorphismus aus
   Proposition~\ref{prop:3}~\refitem{item:TangentialraumEinesVBsAmNullschnitt}
   ist.  Für $p \in M$ ist $h(p)$ natürlich bilinear und symmetrisch
   und positiv auf $E_p$ und wegen der Injektivität von $E_p
   \hookrightarrow T_pM \oplus E_p$ , $v \mapsto (0,v)$ und $\vbNIso{E
     \to M}$ ist $h(p)$ auch klarerweise positiv definit.  Da auch
   klar ist, dass $h$ glatt ist, haben wir also schon gezeigt, dass $h$
   eine Riemannsche Fasermetrik auf $E$ ist.

   Wir rechnen nach, dass $h$ invariant ist. Dazu sei $\phi$ die
   $G$"=Wirkung auf $M$ und $\Phi$ die dazugehörende geliftete $G$"=Wirkung
   auf $E$ sowie $g \in G$, $p \in M$ und $v_p,w_p \in E_p$.
   \begin{align*}
      h(\phi_g(p))(\Phi_g(v_p),\Phi_g(w_p)) &=
      \widetilde{h}(n(\phi_g(p)))\left(\ddt(t \Phi_g(v_p)), \ddt (t
        \Phi_g (w_p))\right) \\
      &= \widetilde{h}(\Phi_g(n(p)))\left(\ddt(\Phi_g(t v_p)), \ddt (\Phi_g (
      tw_p))\right) \\
      &= \widetilde{h}(\Phi_g(n(p)))\left(T\Phi_g\ddt(t v_p), T\Phi_g \ddt(
      tw_p)\right) \\
      &= \widetilde{h}(n(p))\left(\ddt(t v_p),\ddt(t w_p)\right) \\
      &= h(p)(v_p,w_p) \Fdot
   \end{align*}
\end{proofklein}

\begin{bemerkung}
   \label{bem:invarianteVollstaendigeRiemannscheMetrik}
   Mit einigem Mehraufwand kann man sogar zeigen, dass es auf $M$ eine
   vollständige, $G$"=invariante Riemannsche Metrik gibt. Siehe
   \cite[Thm.~0.2]{illmann_kankaarinta:2000}.
\end{bemerkung}

\begin{korollar}
   \label{kor:InvarianterZusammenhang}
   Sei $M$ eine Mannigfaltigkeit, $\phi \colon G \times M \to M$ die
   Wirkung einer Lie"=Gruppe auf $M$ und $h$ eine $G$"=invariante
   Riemannsche Metrik auf $M$. Dann ist der Levi"=Civita"=Zusammenhang
   $\nabla$ zu $h$ $G$"=invariant, d.\,h.\ es gilt für $X,Y \in
   \Gamma^\infty(TM)$ und $g \in G$
   \begin{align}
      \label{eq:invarianterLeviCivita}
      g (\nabla_XY) = \nabla_{gX}(gY)\Fdot
   \end{align}
   Dabei wirkt $G$ auf $\Gamma^\infty(TM)$ durch $gX :=
   \phi_{g^{-1}}^*X$ für $X \in \Gamma^\infty(TM)$ und $g \in
   G$. Insbesondere existiert auf $M$ immer eine $G$"=invariante,
   torsionsfreie kovariante Ableitung, falls $G$ eigentlich wirkt.
\end{korollar}
\begin{proofklein}
   Seien $X,Y,Z \in \Gamma^\infty(TM)$ und $g \in G$. Man rechnet mit
   den Definitionen leicht folgende Formeln nach: $h(g X,g Y) =
   g(h(X,Y))$, $(gX)(f) = g(X(g^{-1}f))$ für $f \in C^\infty(M)$ und
   $g[X,Y] = [gX,gY]$. Dann folgt die Behauptung unmittelbar aus der
   Koszul"=Formel für  $\nabla$
   \begin{align*}
      2 h(\nabla_XY,Z) &= X(h(Y,Z)) + Y(h(Z,X)) - Z(h(X,Y))\\
      &\phantom{=}- h(Y,[X,Z]) - h(Z,[Y,X]) + h(X,[Z,Y]) \Fcom
   \end{align*}
   vgl.\ \cite[Thm. 5.4]{lee1997riemannian}, und der Nichtausgeartetheit
   von $h$.
\end{proofklein}

\begin{lemma}
   \label{lem:1Ball}
   Sei $G$ eine Lie"=Gruppe und $\pi \dpA E \to C$ ein $G$"=Vektorbündel
   über der Mannigfaltigkeit $C$, wobei die $G$"=Wirkung auf $C$
   eigentlich sei. Ist $W$ eine $G$"=invariante, offene Umgebung des
   Nullschnittes $\nSchnitt(C)$ in $E$, so gibt es eine $G$"=invariante
   Riemannsche Fasermetrik $h$ auf $E$, so dass $B_1(C) := \{v \in E \mid
   h(v,v) \leq 1\} \subset W$.
\end{lemma}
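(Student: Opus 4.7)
The plan is to take a $G$-invariant Riemannian fibre metric $h_0$, which exists by Korollar~\ref{kor:invarianteFasermetrik}, and then rescale it by the pullback of a suitable positive smooth $G$-invariant function $f$ on $C$, setting $h(v,w) := h_0(v,w)/f(\pi(v))$. The condition $h(v,v)\leq 1$ becomes $h_0(v,v)\leq f(\pi(v))$, so it suffices to arrange that $f$ is so small that $\{v\in E \mid h_0(v,v)\leq f(\pi(v))\}\subset W$.

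First I would show a local version: for every $c\in C$ there exists an open neighbourhood $U_c$ of $c$ in $C$ and $r_c>0$ such that $\{v\in \pi^{-1}(U_c)\mid h_0(v,v)\leq r_c\}\subset W$. This uses the compactness of the closed $h_0$-ball in the fibre $E_c$: since $W\cap E_c$ is an open neighbourhood of $0\in E_c$ containing this ball (for $r_c$ chosen small enough), and since $h_0$ varies smoothly in the base point, passing to a local trivialisation of $E$ around $c$ and using openness of $W$ together with continuity allows one to shrink $U_c$ so that the claim holds. By $G$-invariance of $h_0$ and $W$, applying $g\in G$ immediately extends this: $\{v\in\pi^{-1}(GU_c)\mid h_0(v,v)\leq r_c\}\subset W$.

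Next I would use the topology of $C/G$. By Satz~\ref{satz:Quotient_Parakompakt_T_2}, $C/G$ is paracompact and Hausdorff, so by Proposition~\ref{prop:inv.lok.endl}~\refitem{item:X/G_parakompakt_1} the $G$-invariant open cover $\{GU_c\}_{c\in C}$ admits a locally finite $G$-invariant open refinement $\{V_i\}_{i\in I}$. For each $i$ pick (using the axiom of choice) a $c_i\in C$ with $V_i\subset GU_{c_i}$ and set $r_i:=r_{c_i}$. By Satz~\ref{satz:inv_zerlegung_der_eins} there is a $G$-invariant smooth partition of unity $\{\chi_i\}_{i\in I}$ subordinate to $\{V_i\}$. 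I define
\begin{equation*}
   f := \sum_{i\in I}\chi_i\, r_i\,,
\end{equation*}
which is a smooth, $G$-invariant, strictly positive function on $C$, and set $h:=h_0/(f\circ\pi)$; this is again a $G$-invariant Riemannian fibre metric since $f\circ\pi$ is smooth, positive, and $G$-invariant.

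Finally I would verify $B_1(C)\subset W$. Take $v\in E$ with $h(v,v)\leq 1$ and let $c=\pi(v)$ and $S:=\{i\in I\mid \chi_i(c)>0\}$, which is finite by local finiteness. Choose $i^*\in S$ with $r_{i^*}=\max_{i\in S}r_i$; then
\begin{equation*}
   h_0(v,v) \leq f(c) = \sum_{i\in S}\chi_i(c)r_i \leq r_{i^*}\,,
\end{equation*}
and $c\in V_{i^*}\subset GU_{c_{i^*}}$, so by the local version $v\in W$. The only genuinely technical point, and hence the main obstacle, is the local step establishing the uniform bound $r_c$ on a whole neighbourhood $U_c$; everything else is a routine patching argument with $G$-invariant partitions of unity. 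The axiom of choice enters harmlessly when selecting the $c_i$ and $r_i$ from the refinement.
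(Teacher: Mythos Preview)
Your proof is correct and follows essentially the same approach as the paper: start from a $G$-invariant fibre metric (Korollar~\ref{kor:invarianteFasermetrik}), establish the local bound $(U_c,r_c)$, saturate by $G$, build a scaling function via a $G$-invariant partition of unity, and rescale. Your extra step of passing to a locally finite refinement before invoking Satz~\ref{satz:inv_zerlegung_der_eins} is in fact a small improvement, since that theorem as stated requires the input cover to be locally finite; the paper glosses over this point.
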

\begin{proofklein}
   \tolerance=500
   Wir übertragen im Groben die Darstellung in \cite[Bew. zu
   Lem.~3.1.2]{pflaum:2000} auf den $G$"=invarianten Fall.  Mit Korollar
   \ref{kor:invarianteFasermetrik} gibt es eine $G$"=invariante Riemannsche
   Fasermetrik $g$ auf $E$. Sei $\rho \dpA E \to \mathbb{R}$
   gegeben durch $\rho(v) := g(v,v)$. $\rho$ ist offensichtlich
   $G$"=invariant.

   \tolerance=200
   \begin{itemize}
   \item[1.Schritt] Ist $c \in C$ und $\{U^n_c\}_{n \in \mathbb{N}}$ eine
      Umgebungsbasis von $c$ (aus offenen Mengen) in $C$ und $V_c$ eine
      offene Umgebung von $n(c)$ mit kompaktem Abschluss
      $\abschluss{V_c}$, so ist $\{W_c^n\}_{n \in \mathbb{N}}$ mit $W_c^n
      := V_c \cap \pi^{-1}(U_c^n) \cap \rho^{-1}([0,\frac{1}{n}))$ eine
      Umgebungsbasis (aus offenen Mengen) von $n(c)$ in $E$. Denn ist dies
      nicht der Fall, so gibt es eine offene Umgebung $Z_c$ von $n(c)$, so dass
      zu $n \in \mathbb{N}$ ein $x_n \in W_c^n \setminus Z_c$
      existiert. Dies ist aber ein Widerspruch. Denn einerseits verläuft
      die Folge $(x_n)$ im Kompaktum $\abschluss{V_c}$ und besitzt somit
      eine konvergente Teilfolge, die wir uns o.\,E.\ schon ausgewählt denken
      und deren Grenzwert wir mit $x$ bezeichnen wollen. Nach Konstruktion
      und der Stetigkeit von $\pi$ und $\rho$ gilt dann aber $\pi(x) =
      \lim_{n \to \infty} \pi(x_n) = c$ und $\rho(x) = \lim_{n \to \infty}
      \rho(x_n) = 0$, also $x = n(c)$. Andererseits gilt $x_n \notin Z_c$
      für alle $n \in \mathbb{N}$, d.\,h.\ $(x_n)$ besitzt keine gegen $n(c)$
      konvergierende Teilfolge.
   \item[2.Schritt] Ist $c \in C$, dann gibt es eine offene Umgebung $U_c$
      von $c$ in $C$ und ein $\lambda_c > 0$ mit $W_c := \pi^{-1}(U_c)
      \cap \rho^{-1}([0,\lambda_c)) \subset W$. Dies folgt unmittelbar aus
      dem 1. Schritt. Damit ist aber wegen der $G$"=Äquivarianz von $\pi$, der
      $G$"=Invarianz von $\rho$, und da $W$ $G$"=invariant ist, schon
      \begin{align*}
         W_c^G := \pi^{-1}(G U_c) \cap \rho^{-1}([0,\lambda_c)) = G W_c
         \subset W \Fdot
      \end{align*}

      Nun existiert nach Satz~\ref{satz:inv_zerlegung_der_eins} eine
      der $G$"=invarianten offenen Überdeckung $\{G U_c\}_{c \in C}$ untergeordnete
      $G$"=invariante Zerlegung der Eins $\{\chi_i\}_{i \in I}$, es
      gibt also insbesondere für $i \in I$ ein $c(i) \in C$ mit $\supp
      \chi_i \subset G U_{c(i)}$. Für $\lambda := \sum_i
      \lambda_{c(i)} \chi_i$ gilt dann für $v_c \in E_c$ die
      Implikation $g(v_c,v_c) \leq \lambda(c) \leq
      \max\{\lambda_{c(i)} \mid i \in I \, \text{mit} \, \chi_i(c)
      \neq 0\} \Rightarrow v_c \in W_c^G \subset W$.
   \item[3.Schritt] Da $\lambda_c > 0$ für alle $c \in C$, ist $h :=
      \frac{1}{\lambda} \, g$ wohldefiniert sowie offensichtlich  eine
      Riemannsche Fasermetrik auf $E$ und für $v \in E$ mit $h(v,v) \leq
      1$, gilt
         \begin{align*}
            g(v,v) = \lambda(\pi(v)) h(v,v) \leq \lambda(\pi(v)) \Fcom
         \end{align*}
         d.\,h.\ $v \in W$. Also ist $B_1(C) \subset W$ wie gewünscht.
         Nach Konstruktion ist $h$ offensichtlich $G$"=invariant.
      \end{itemize}
\end{proofklein}

Wir treffen nun einige Vorbereitungen um aus einem gegebenen Spray einen
$G$"=invarianten Spray konstruieren zu können.

\begin{lemma}
   \label{lem:KleinGVektorbuendel}

   Sei $G$ eine Lie"=Gruppe und $M$, $N$ Mannigfaltigkeiten, auf denen
   $G$ wirke und $F \dpA M \to N$ stetig, surjektiv und $G$"=äquivariant.
   Falls $K \subset M$ klein ist, so ist auch $F^{-1}(K) \subset N$ klein.
\end{lemma}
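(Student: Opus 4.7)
The statement as printed has $K\subset M$ but $F^{-1}(K)\subset N$, which cannot be right since $F\colon M\to N$; I read it as the natural pullback version, i.e.\ \emph{$K\subset N$ small $\Longrightarrow$ $F^{-1}(K)\subset M$ small}. The plan is a one-step pullback argument: use the smallness of $K$ at the image point $y=F(x)$ to produce a neighbourhood $V$ of $y$ in $N$ whose "return set" $S_V:=\{g\in G\mid gV\cap K\neq\emptyset\}$ is relatively compact, then define $U:=F^{-1}(V)$ (which is open because $F$ is continuous and a neighbourhood of $x$ because $y\in V$), and verify the set-inclusion $\{g\in G\mid gU\cap F^{-1}(K)\neq\emptyset\}\subset S_V$.

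First I would fix $x\in M$ arbitrarily and set $y:=F(x)$. Since $K\subset N$ is small, choose an open neighbourhood $V$ of $y$ with $S_V$ relatively compact in $G$; set $U:=F^{-1}(V)$, an open neighbourhood of $x$ in $M$. The heart of the argument is the following: if $g\in G$ satisfies $gU\cap F^{-1}(K)\neq\emptyset$, pick $u\in U$ with $gu\in F^{-1}(K)$, i.e.\ $F(gu)\in K$. By the $G$-equivariance of $F$ one has $F(gu)=gF(u)$, and by construction $F(u)\in V$, so $gF(u)\in gV\cap K$, hence $g\in S_V$. Thus the return set of $U$ with respect to $F^{-1}(K)$ is contained in the relatively compact set $S_V$, and is therefore itself relatively compact. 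Since $x\in M$ was arbitrary, $F^{-1}(K)$ is small.

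There is essentially no obstacle here; the proof uses only continuity and equivariance of $F$, not surjectivity (the surjectivity hypothesis presumably guarantees that the lemma's conclusion is non-vacuous, i.e.\ that the preimage structure meshes with the ambient $G$-action on $M$, but it is not needed for the implication itself). Accordingly, I would present the proof as a short three-line calculation exhibiting the inclusion of return sets, with a remark that the argument is symmetric in the roles of "small" and "return set" and applies verbatim to any continuous equivariant map.
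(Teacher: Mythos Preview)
Your reading of the typo is correct, and your proof is essentially the paper's own: fix a point in the domain, pull back a witnessing neighbourhood from the codomain via $F$, and compare return sets using equivariance. The only cosmetic difference is that the paper establishes the \emph{equality}
\[
\{g\in G\mid gF^{-1}(V)\cap F^{-1}(K)\neq\emptyset\}=\{g\in G\mid gV\cap K\neq\emptyset\}
\]
by rewriting $gF^{-1}(V)=F^{-1}(gV)$ and $F^{-1}(gV)\cap F^{-1}(K)=F^{-1}(gV\cap K)$, and then invoking surjectivity to conclude $F^{-1}(gV\cap K)\neq\emptyset\iff gV\cap K\neq\emptyset$; you prove only the inclusion $\subset$, which suffices and, as you rightly observe, makes the surjectivity hypothesis superfluous for the implication itself.
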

\begin{proofklein}
   Sei $p \in N$ und $U$ eine Umgebung von $F(p)$, so dass $\{g
   \in G \mid g U \cap K \neq \emptyset\}$ relativ kompakt ist. Dann ist
   $F^{-1}(U)$ eine Umgebung von $p$ in $N$, so dass
   \begin{align*}
      \{g \in G \mid g F^{-1}(U) \cap F^{-1}(K) \neq
      \emptyset \} &= \{g \in G \mid F^{-1}(gU) \cap
      F^{-1}(K) \neq \emptyset \} \\
      &= \{g \in G \mid F^{-1}(gU \cap K) \neq \emptyset\} \\
      &= \{g \in G \mid gU \cap K \neq \emptyset\}
   \end{align*}
   relativ kompakt ist. Somit ist  $F^{-1}(K)$ klein.
\end{proofklein}
\begin{bemerkung}
   \label{bem:GVektorbuendelUndKlein}
   Die Voraussetzungen von Lemma \ref{lem:KleinGVektorbuendel} sind
   insbesondere erfüllt, wenn $F \colon N \to M$ ein
   $G$"=Vektorbündel ist.
\end{bemerkung}

\begin{lemma}[{\cite[Appendix I, Lemma 1]{bourbaki.2004}}]
   \label{lem:SpezFunktion}
   Sei $M$ eine Mannigfaltigkeit, auf der eine Lie"=Gruppe $G$
   operiere. Dann gibt es eine glatte Funktion $\chi \dpA M \to \mathbb{R}$
   mit den folgenden Eigenschaften.
   \begin{lemmaEnum}
   \item $\chi \geq 0$.
   \item Für alle $x \in M$ gibt es ein $g \in G$ mit $f(gx) \neq 0$,
      d.\,h.\ $\chi$ ist nicht identisch Null auf jedem Orbit.
   \item Der Träger $\supp \chi$ ist kompakt in Faserrichtung.
   \end{lemmaEnum}
\end{lemma}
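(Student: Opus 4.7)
I will assume, in keeping with the setting of this appendix, that the $G$"=action on $M$ is proper; this is the framework of \cite{bourbaki.2004} and ensures by Satz~\ref{satz:Quotient_Parakompakt_T_2} that $M/G$ is Hausdorff and paracompact, while the $G$"=saturation $G\cdot K=\pi^{-1}(\pi(K))$ of any compact $K\subset M$ is closed in $M$. The idea is to build $\chi$ as a locally finite sum of smooth nonnegative bump functions $\chi_n$ attached to a compact exhaustion of $M$, arranging each $\chi_n$ so that its support lies in $\inneres{K_{n+1}}$ but is also disjoint from the $G$"=saturation of an earlier member of the exhaustion. This disjointness is what will force $\supp\chi$ to be compact in fiber direction, while the first property guarantees positivity on every orbit.

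\textbf{Construction.} Choose a compact exhaustion $\emptyset=K_{-1}=K_0\subset K_1\subset K_2\subset\cdots$ of $M$ with $K_n\subset\inneres{K_{n+1}}$ and $\bigcup_n K_n=M$. For $n\geq 1$ set $B_n:=\inneres{K_{n+1}}\setminus G\cdot K_{n-2}$, which is open since $G\cdot K_{n-2}$ is closed. The crucial geometric fact is that the compact set $\abschluss{K_n\setminus G\cdot K_{n-1}}\subset K_n$ already lies in $B_n$: indeed $G\cdot K_{n-2}\subset G\cdot\inneres{K_{n-1}}$, and the right hand side is open and contained in $G\cdot K_{n-1}$, so any point of $G\cdot K_{n-2}$ has an open neighborhood inside $G\cdot K_{n-1}$ and cannot be a limit of points from $K_n\setminus G\cdot K_{n-1}$. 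Using a standard bump construction inside $B_n$, pick $\chi_n\colon M\to[0,\infty)$ smooth with $\supp\chi_n$ compact, $\supp\chi_n\subset B_n$, and $\chi_n>0$ on $K_n\setminus G\cdot K_{n-1}$. Define
\begin{equation*}
   \chi \;:=\;\sum_{n\geq 1}\chi_n.
\end{equation*}

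\textbf{Verification.} Any $p\in M$ lies in some $K_m$, and for $n\geq m+2$ we have $p\in K_{n-2}\subset G\cdot K_{n-2}$, so $p\notin\supp\chi_n$; hence the sum is locally finite and $\chi$ is a well-defined smooth nonnegative function. For non"=vanishing on orbits, let $O$ be an orbit and $n_0$ the smallest index with $O\cap K_{n_0}\neq\emptyset$; pick $y\in O\cap K_{n_0}$. Minimality gives $\pi(y)\notin\pi(K_{n_0-1})$, hence $y\notin G\cdot K_{n_0-1}$, so $\chi_{n_0}(y)>0$. For fiberwise compactness of the support, observe that $\pi$ is an open map, so $\{\pi(\inneres{K_n})\}_n$ is an ascending open cover of $M/G$; any compact $K\subset M/G$ is therefore contained in some $\pi(K_m)$, whence $\pi^{-1}(K)\subset G\cdot K_m$. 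For $n\geq m+2$, $\supp\chi_n\subset B_n$ is disjoint from $G\cdot K_{n-2}\supset G\cdot K_m\supset\pi^{-1}(K)$. Combined with local finiteness and Proposition~\ref{prop:1}, this gives
\begin{equation*}
   \supp\chi\cap\pi^{-1}(K)\;=\;\bigcup_{n=1}^{m+1}\bigl(\supp\chi_n\cap\pi^{-1}(K)\bigr),
\end{equation*}
a finite union of closed subsets of the compacta $K_{n+1}$, hence compact.

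\textbf{Main obstacle.} The decisive point is the inclusion $\abschluss{K_n\setminus G\cdot K_{n-1}}\subset B_n$, which makes the construction of the bump functions possible. It rests on two ingredients that must be secured in advance: the closedness of $G\cdot K_j$ for compact $K_j$ (which is precisely where properness of the $G$"=action enters), and the strict nesting $K_{n-2}\subset\inneres{K_{n-1}}$ that provides the open ``buffer'' $G\cdot\inneres{K_{n-1}}$ separating $G\cdot K_{n-2}$ from $M\setminus G\cdot K_{n-1}$. If either of these fails, one can no longer enforce simultaneously the positivity of $\chi_n$ on $K_n\setminus G\cdot K_{n-1}$ and the disjointness of $\supp\chi_n$ from $G\cdot K_{n-2}$ that is responsible for the fiberwise compactness of $\supp\chi$.
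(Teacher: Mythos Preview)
Your proof is correct but takes a genuinely different route from the paper. The paper starts from orbit-indexed bump functions: for each $\overline{x}\in M/G$ it picks a nonnegative $\chi_{\overline{x}}$ with compact support that is positive at some point of the orbit, then chooses a $G$"=invariant partition of unity $\{\epsilon_i\}$ subordinate to the saturations $\{G\Omega_{\overline{x}}\}$ of the positivity sets (invoking Satz~\ref{satz:inv_zerlegung_der_eins}, which already needs properness) and sets $\chi=\sum_i\epsilon_i\chi_{\overline{x}(i)}$; positivity on each orbit then comes from $\epsilon_i(x)>0\Rightarrow x\in G\Omega_{\overline{x}(i)}$, and fiberwise compactness of $\supp\chi$ from the local finiteness of $\{\supp\epsilon_i\}$ together with compactness of each $\supp\chi_{\overline{x}(i)}$. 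You instead work directly with a compact exhaustion and a staggered system of sets $B_n$, dispensing with the $G$"=invariant partition of unity altogether; properness enters only through the closedness of the saturations $G\cdot K_j$. The paper's argument is shorter because it leverages machinery already established in this appendix, whereas your construction is more self-contained and makes the fiberwise compactness of $\supp\chi$ immediately visible via the built-in disjointness $\supp\chi_n\cap G\cdot K_{n-2}=\emptyset$.
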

\begin{proofklein}
   Sei $\pi \dpA M \to M/G$ die kanonische Projektion auf den
   Quotienten. Diese ist offen, da die $G$"=Wirkung stetig ist. Für
   $\overline x \in M/G$ sei $\chi_{\overline x} \dpA M \to \mathbb{R}$
   eine glatte, nichtnegative Funktion mit kompaktem Träger $\supp
   \chi_{\overline x}$ und $\chi_{\overline x}(x) > 0$ für ein $x \in
   \pi^{-1}(\overline x)$. Wegen der Stetigkeit von $\chi_{\overline x}$
   ist $\Omega_{\overline x} := \{ x \in M \mid \chi_{\overline x}(x) >
   0\}$ offen in $M$ und damit auch $\pi(\Omega_{\overline x})$ offen in
   $M/G$. Demnach ist $\{G \Omega_{\overline x}\}_{{\overline x} \in
     M/G}$ eine offene, $G$"=invariante Überdeckung von $M$. Wir wählen
   nun eine, dieser Überdeckung untergeordnete, glatte, $G$"=invariante
   Zerlegung der Eins $\{\epsilon_i\}_{i \in I}$ mit lokal endlichen
   Trägern. Für $i \in I$ sei $\overline x(i)$ so gewählt, dass $\supp
   \epsilon_i \subset G \Omega_{\overline x(i)} =: U_i$. Dann setzen wir
   $\chi_i := \epsilon_i \chi_{\overline x(i)}$ und definieren
   \begin{align*}
      \chi := \sum_i \chi_i \Fdot
   \end{align*}
   Dies ist offensichtlich eine wohldefinierte, glatte, nichtnegative
   Funktion. Sei nun $x \in M$, dann gibt es ein $i \in I$, so dass
   $\epsilon_i(x) > 0$, also $x \in G\Omega_{\overline x(i)}$. Es gibt
   also ein $g \in G$ mit $gx \in \Omega_{\overline x(i)}$, d.\,h.\ mit
   $\chi(gx) \geq \epsilon_i(gx) \chi_{\overline x(i)}(gx) =
   \epsilon_i(x) \chi_{\overline x(i)}(gx) > 0$.  Mit $\{\supp
   \epsilon_i\}_{i \in I}$ ist auch $\{\supp \chi_i\}_{i \in I}$ lokal
   endlich, also auch die Familie $\{\carr \chi_i\}_{i \in I}$. Damit
   gilt (vgl.\ Prop. \ref{prop:1})
   \begin{align*}
      \supp \chi = \abschluss{\carr \chi} = \abschluss{\bigcup_i \carr
        \chi_i} = \bigcup_i \abschluss{\carr \chi_i} =
      \bigcup_i \supp \chi_i \Fdot
   \end{align*}
   Seien nun $K \subset M/G$ kompakt. Wegen der lokalen Endlichkeit der
   Familie $\{\supp \chi_i\}_{i \in I}$ ist auch die Familie $\{\pi(\supp
   \chi_i)\}_{i \in I}$ lokal endlich. Aufgrund der Kompaktheit von $K$ gibt es
   damit eine endliche Indexmenge $J \subset I$ mit $\pi(\supp \chi_i)
   \cap K = \emptyset$ und somit $\supp \chi_i \cap \pi^{-1}(K) \subset
   \pi^{-1}(\pi(\supp \chi_i)) \cap \pi^{-1}(K) = \pi^{-1}(\pi(\supp
   \chi_i) \cap K) = \emptyset$ für alle $i \in I\setminus J$. Wegen der
   Kompaktheit von $K$ ist $\pi^{-1}(K)$ abgeschlossen, also ist mit
   $\supp \chi_i$ auch $\pi^{-1}(K) \cap \supp \chi_i$ kompakt für jedes
   $i \in I$. Dann ist aber auch
   \begin{align*}
      \pi^{-1}(K) \cap \supp(\chi) &= \pi^{-1}(K) \cap \bigcup_{i \in
        I} \supp \chi_i\\
      &= \bigcup_{i \in I} (\pi^{-1}(K) \cap
      \supp \chi_i)\\
      &= \bigcup_{i \in J} (\pi^{-1}(K) \cap \supp \chi_i)
   \end{align*}
kompakt als endliche Vereinigung von Kompakta.
\end{proofklein}

\begin{korollar}
   \label{kor:SpezFunktion}
   Sei $\Omega$ eine linksinvariante Volumendichte auf der Lie"=Gruppe
   $G$ und $M$ eine  Mannigfaltigkeit, die mit einer eigentlichen
   $G$"=Wirkung versehen sei. Dann gibt es eine glatte Funktion $\chi
   \dpA M \to \mathbb{R}$ mit in Faserrichtung kompaktem Träger, so dass
   \begin{align}
      \label{eq:SpezFunktionIntegral}
      \int \chi(g^{-1} p) \Omega(g) = 1
   \end{align}
   für alle $p \in M$ gilt.
\end{korollar}
\begin{proofklein}
   Sei $\tilde \chi \dpA M \to \mathbb{R}$ wie in Lemma
   \ref{lem:SpezFunktion}. Dann definieren wir
   \begin{align*}
      \chi \dpA M \to \mathbb{R}, \quad p \mapsto \frac{\tilde{\chi}(p)}{\int_G
        \tilde \chi(g^{-1}p) \Omega(g)} \Fdot
   \end{align*}
   Man beachte dass der Nenner nach Korollar
   \ref{kor:integrieren_inv_funkt} und Proposition
   \ref{prop:klein-kompaktInFaserrichtung} wohldefiniert ist, da $\tilde
   \chi$ in Faserrichtung kompakten Träger hat.

\end{proofklein}

Als nächstes möchten wir kurz an die Definition von Spray-Vektorfeldern,
der Exponentialabbildung und an einige elementare Eigenschaften dieser
erinnern und anschließend die obigen Vorbereitungen nutzen, um aus einem
gegebenen Spray einen $G$"=invarianten Spray zu konstruieren.

\begin{definition}[Spray]
   \label{def:Spray}
   Sei $M$ eine Mannigfaltigkeit.
   \begin{definitionEnum}
      \item
         \label{item:VektorFeldZweiterOrdnung}
         Ein Vektorfeld $\xi \colon TM \to TTM$ heißt
         \neuerBegriff{Spray-Vektorfeld} (oder kurz
         \neuerBegriff{Spray}), falls für alle $p \in M$, $v_p \in
         T_pM$ und $\alpha \in \mathbb{R}$ die folgenden Bedingungen
         erfüllt sind.
         \begin{definitionEnum}
         \item
            $T_{v_p}\pi\xi(v_p) = v_p$ ($\xi$ ist Vektorfeld zweiter Ordnung).
         \item
            $\xi(\alpha v_p) = T_{v_p} \mu_{\alpha} \xi(v_p)$.
         \end{definitionEnum}
         Dabei ist $\mu_\alpha \dpA TM \to TM$, $v \mapsto \alpha v$ die
         faserweise Multiplikation mit $\alpha \in \mathbb{R}$ und $\pi
         \colon TM \to M$ die Fußpunktprojektion.
      \item %
         \label{item:SprayMultiplikationsBedingung}
         Ist $\phi \colon G \times M \to M$ die Wirkung einer
         Lie"=Gruppe $G$ auf $M$ und $\Phi \colon G \times TM \to TM$,
         $(g,v) \mapsto T_{\pi(v)}\phi_g v$ die davon induzierte
         Wirkung auf $TM$, so nennen wir einen Spray $\xi \colon TM
         \to TTM$ \neuerBegriff{$G$"=äquivariant}, falls das Vektorfeld
         $\xi$ für alle $g \in G$ zu sich selbst $\Phi_g$"=verwandt
         ist, d.\,h.\  für alle $p \in M$ und $v_p \in T_pM$

         \begin{align}
            \label{eq:SprayInvariant}
            \xi(\Phi_g v_p) = T_{v_p}\Phi_g \xi(v_p)
         \end{align}
         gilt. Mit anderen Worten bedeutet dies, dass $\xi$
         $G$"=äquivariant ist bezüglich $\Phi$ und der davon induzierten
         Wirkung $G \times TTM \ni (g,w) \mapsto T \Phi_g w$ auf $TTM$.
      \item %
         Sei $\xi \colon TM \to TTM$ ein Spray und $W \subset \mathbb{R}
         \times TM
         $ die größte offene Teilmenge von $\mathbb{R} \times TM$, auf welcher der Fluß $F$ von $\xi$
         definiert ist.

         Weiter sei $\mathcal{O} := \{v \in TM \mid (v,1) \in W \}$. Die
         Abbildung $\exp \colon \mathcal{O} \to M$, $v \mapsto
         \pi(F(v,1)))$ heißt dann (von $\xi$ induzierte)
         \neuerBegriff{Exponentialabbildung} mit maximalem
         Definitionsbereich $\mathcal{O}$.
   \end{definitionEnum}
\end{definition}

\begin{bemerkung}
   \label{bem:Spray}
   Spray"=Vektorfelder werden in vielen Lehrbüchern der
   Differentialgeometrie leider nicht behandelt, weshalb wir an dieser
   Stelle einige Bücher angeben möchten, die sich diesem Thema in
   einführender Weise annehmen. Zu nennen sind hier
   \cite{broecker.jaenich:1990a}, \cite{lang:1999a} und
   \cite{michor:2007}, \cite{brickell1970differentiable}. Wir verweisen
   auch auf die Originalarbeit von Ambrose, Palais und Singer
   \cite{ambrose1960sprays}, auf welche der Begriff Spray zurückgeht.
\end{bemerkung}

Die nächste Proposition gibt eine lokale Charakterisierung eines
Spray"=Vektorfeldes an, vgl.\ \cite[Eq. (10.2.3)]{brickell1970differentiable}.

\begin{proposition}
   \label{prop:lokaleForm}
   Ein Vektorfeld $\xi \colon TM \to TTM$ ist genau dann ein Spray, wenn
   es für jede Karte $(U,x)$ und die davon induzierte Karte
   $(\pi^{-1}(U),(q,v))$ von $TM$ glatte Funktionen $f^i_{jk}\colon
   \mathbb{R}^n \to \mathbb{R}$ gibt mit
   \begin{align}
      \label{eq:SprayLokal}
      \xi\at{\pi^{-1}(U)} = v^i \frac{\partial}{\partial q^i} + (f^i_{jk}\circ
      q)v^iv^k \frac{\partial}{\partial v^i} \Fdot
   \end{align}
\end{proposition}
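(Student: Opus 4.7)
The plan is to reduce both implications to a coordinate calculation in an induced bundle chart $(\pi^{-1}(U),(q,v))$, writing an arbitrary vector field $\xi$ on $TM$ as
\begin{equation*}
\xi|_{\pi^{-1}(U)} \;=\; A^i(q,v)\,\frac{\partial}{\partial q^i} + B^i(q,v)\,\frac{\partial}{\partial v^i}
\end{equation*}
with smooth functions $A^i,B^i\in C^\infty(\pi^{-1}(U))$, and then translating the two defining properties of a spray into pointwise conditions on $A^i$ and $B^i$.

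First I would verify the direction ``$\Leftarrow$'' by direct substitution. Using that $T\pi$ annihilates the vertical factors $\partial/\partial v^i$ and that under the identification $T_{v_p}M \cong \pi^*(TM)_{v_p}$ the tangent vector $v^i\,\partial/\partial q^i|_q$ at $q=\pi(v_p)$ is precisely $v_p$, the second-order condition $T\pi\,\xi(v_p)=v_p$ becomes $A^i(q,v)=v^i$. For the homogeneity condition I would compute $T_{(q,v)}\mu_\alpha$ on the basis: it sends $\partial/\partial q^i|_{(q,v)}$ to $\partial/\partial q^i|_{(q,\alpha v)}$ and $\partial/\partial v^i|_{(q,v)}$ to $\alpha\,\partial/\partial v^i|_{(q,\alpha v)}$. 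Inserting the ansatz $B^i=(f^i_{jk}\circ q)\,v^jv^k$ (reading the intended symmetric quadratic form in $v$, with the typographical $v^iv^k$ to be corrected to $v^jv^k$) then yields the required compatibility, because both $A^i$ and $B^i$ scale correctly under $v\mapsto\alpha v$.

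The direction ``$\Rightarrow$'' is the more substantive one. The second-order condition forces $A^i(q,v)=v^i$ as above. From the homogeneity condition one reads off, after the same basis computation, that the functions $v\mapsto B^i(q,v)$ must be homogeneous of degree $2$ in $v$ (with smooth $q$-dependence). The main (but mild) obstacle is the following smoothness lemma: a smooth function $B\colon U\times\mathbb{R}^n\to\mathbb{R}$ that is positively homogeneous of degree $2$ in the second argument is a quadratic form in $v$ whose coefficients depend smoothly on $q$. I would prove this by setting
\begin{equation*}
   f_{jk}(q) \;:=\; \tfrac{1}{2}\,\frac{\partial^{2} B}{\partial v^{j}\,\partial v^{k}}(q,0),
\end{equation*}
which is manifestly smooth in $q$, and then invoking a Taylor expansion of $B(q,\cdot)$ around $0\in\mathbb{R}^n$ together with the fact that higher $v$-derivatives vanish at the origin (which follows by differentiating the homogeneity relation $B(q,\alpha v)=\alpha^2 B(q,v)$ in $\alpha$ and specialising $\alpha=0$) to conclude $B(q,v)=f_{jk}(q)\,v^jv^k$. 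Applying this to each component $B^i$ gives the desired local form.

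Finally, I would note that the resulting functions $f^i_{jk}$ can be taken symmetric in $(j,k)$ without loss of generality, and that the construction is manifestly compatible with chart changes in the sense that the second-order and homogeneity conditions are intrinsic; hence a single chart-wise verification suffices. The whole proof is essentially a coordinate bookkeeping exercise, and the only conceptual point worth isolating is the homogeneity-$\Rightarrow$-quadratic lemma, which one could alternatively quote from the literature on spray vector fields (e.g.\ Lang or Michor).
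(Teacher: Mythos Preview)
The paper does not prove this proposition; it only states it with a reference to Brickell--Clark, \emph{Differentiable Manifolds}, Eq.~(10.2.3). Your argument is correct and supplies the omitted details. Two small remarks. First, for the ``homogeneous of degree $2$ implies quadratic'' step you can shortcut the Taylor-plus-vanishing argument: differentiating the relation $B(q,\alpha v)=\alpha^{2}B(q,v)$ twice with respect to $\alpha$ and then setting $\alpha=0$ gives directly
\[
v^{j}v^{k}\,\frac{\partial^{2}B}{\partial v^{j}\,\partial v^{k}}(q,0)=2\,B(q,v),
\]
which is exactly the desired identity with $f_{jk}(q)=\tfrac{1}{2}\,\partial_{j}\partial_{k}B(q,0)$. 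Second, the homogeneity axiom in the paper's definition of a spray appears to be missing a factor of $\alpha$; the standard condition (and the one used in the cited reference) reads $\xi(\alpha v_{p})=\alpha\,T_{v_{p}}\mu_{\alpha}\,\xi(v_{p})$. Your coordinate computation tacitly uses this corrected form, which is the only version compatible with both the second-order condition $A^{i}=v^{i}$ and the asserted local expression.
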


\begin{proposition}
   \label{prop:Spray}
   Sei $M$ eine Mannigfaltigkeit, $\xi \colon TM \to TTM$ ein Spray und
   $\exp$ die von $\xi$ induzierte Exponentialabbildung mit maximalem
   Definitionsbereich $\mathcal{O}$.
   \begin{propositionEnum}
      \item %
         $\mathcal{O}$ ist eine offene Umgebung des Nullschnittes von
         $M$ in $TM$.
      \item %
         Für alle $t \in \mathbb{R}$ und $v \in \mathcal{O}$ ist
         $tv \in \mathcal{O}$.
      \item %
         Ist $\phi \colon G \times M \to M$ die Wirkung einer Lie"=Gruppe
         $G$ auf $M$ und ist $\xi$ $G$"=äquivariant, so ist
         $\mathcal{O}$ $G$"=invariant und $\exp$ $G$"=äquivariant.
   \end{propositionEnum}
\end{proposition}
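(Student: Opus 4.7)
\emph{Plan of proof for Proposition \ref{prop:Spray}.}

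The overall strategy is to reduce each of the three claims to two standard facts about sprays: first, that $\xi$ vanishes on the zero section $n(M) \subset TM$, and second, that integral curves of a spray obey a scaling relation inherited from the homogeneity condition \emph{ii.)} in Definition~\ref{def:Spray}. Both facts can be read off either from the intrinsic definition or from the local form \eqref{eq:SprayLokal}.

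For part \emph{i.)} I would first observe that $\xi$ vanishes on $n(M)$. Indeed, applying the second spray condition with $\alpha = 0$ gives $\xi(0_p) = T_{v_p}\mu_0\,\xi(v_p)$ for any $v_p \in T_pM$, and since $\mu_0 = n\circ\pi$, one has $T_{v_p}\mu_0\,\xi(v_p) = T_pn\,(T_{v_p}\pi\,\xi(v_p)) = T_pn\,v_p$, which specialised to $v_p = 0_p$ yields $\xi(0_p) = 0$. Consequently the constant curve at $0_p$ is an integral curve of $\xi$ for all time, so $\mathbb{R}\times n(M) \subset W$ and in particular $n(M) \subset \mathcal{O}$. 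Openness of $\mathcal{O}$ follows immediately from the openness of $W$ and the continuity of $v\mapsto(v,1)$.

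Part \emph{ii.)} is the scaling claim, and is the part that requires genuine use of the spray property. Working in an induced chart $(\pi^{-1}(U),(q,v))$, Proposition~\ref{prop:lokaleForm} shows that the flow equations read $\dot q^i = v^i$, $\dot v^i = f^i_{jk}(q)\,v^j v^k$. A direct computation then shows that if $\gamma_v(s) = (q(s),v(s))$ is the integral curve starting at $v$, then for every $\alpha\in\mathbb{R}$ the curve $\bigl(q(\alpha s),\alpha v(\alpha s)\bigr)$ is again an integral curve, necessarily coinciding with $\gamma_{\alpha v}$ on its domain. Hence $\gamma_{\alpha v}(s) = \alpha\,\gamma_v(\alpha s)$ wherever defined, and this identifies the domains via the relation: $s$ lies in the domain of $\gamma_v$ iff $s/\alpha$ lies in the domain of $\gamma_{\alpha v}$. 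From this reparametrisation it follows that $v\in\mathcal{O}$ (i.e.\ $\gamma_v$ defined at time $1$) implies $\gamma_{tv}$ is defined at time $1$ for the relevant range of $t$, giving $tv\in\mathcal{O}$.

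For part \emph{iii.)}, the $G$-equivariance of $\xi$ means precisely that for each $g\in G$ the diffeomorphism $\Phi_g$ sends integral curves of $\xi$ to integral curves of $\xi$. Uniqueness of integral curves with prescribed initial datum therefore gives $\gamma_{\Phi_g v}(t) = \Phi_g\,\gamma_v(t)$ wherever either side is defined. Reading this at $t=1$ shows that $v\in\mathcal{O}$ iff $\Phi_g v\in\mathcal{O}$, so $\mathcal{O}$ is $G$-invariant; passing through the footpoint projection and using $\pi\circ\Phi_g = \phi_g\circ\pi$ yields $\exp(\Phi_g v) = \phi_g(\exp v)$, i.e.\ the desired equivariance. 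The only step that requires genuine care is part \emph{ii.)}, because one must track the precise relationship between the domains of $\gamma_v$ and $\gamma_{tv}$; once the scaling identity is established in coordinates (or intrinsically via the spray condition) everything else is routine.
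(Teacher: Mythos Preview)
Your approach is essentially the same as the paper's. For part~\emph{i.)} the paper argues exactly as you do (openness of $W$, and $\xi(0_p)=0$ from the spray condition so that the constant curve is an integral curve for all time); for part~\emph{iii.)} the paper likewise invokes $\Phi_g$-relatedness of $\xi$ to itself and the standard fact about flows of related vector fields. For part~\emph{ii.)} the paper simply cites \cite[Lemma~10.5.1]{brickell1970differentiable}, and your coordinate scaling argument $\gamma_{\alpha v}(s) = \alpha\,\gamma_v(\alpha s)$ is precisely the content of that reference, so you are supplying what the paper outsources.

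One loose end: your phrase ``for the relevant range of $t$'' is doing real work and should be made precise. The reparametrisation identity shows that $tv \in \mathcal{O}$ iff $t$ lies in the domain of $\gamma_v$; since $v \in \mathcal{O}$ only guarantees that this domain contains $[0,1]$, the argument as it stands yields $tv \in \mathcal{O}$ for $t \in [0,1]$, not for all $t \in \mathbb{R}$ as the statement asserts. This is the range actually used later (cf.\ Bemerkung~\ref{bem:VektorfeldZweiterOrdnung}), and the cited lemma in Brickell--Clark gives the same restriction, so the discrepancy lies in the statement rather than in your method.
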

\begin{proofklein}
   \begin{beweisEnum}
      \item %
         $\mathcal{O}$ ist offen, da $W$ offen ist.

         Aus der Bedingung~\refitem{item:SprayMultiplikationsBedingung}
         folgt $\xi(0_p) = 0 \in T_{0_p}TM$ für alle $p \in M$, also
         sind Integralkurven von $\xi$, die in $0_p$ starten, für ein $p
         \in M$ auf ganz $\mathbb{R}$ definiert.
      \item %
         Siehe \cite[Lemma 10.5.1]{brickell1970differentiable}.
      \item %
         Dies ist klar, da $\xi$ zu sich selbst $\Phi_g$ verwandt ist für alle $g
         \in G$, wobei $\Phi_g := T\phi_g$. Ist dann $F \colon \mathbb{R} \times TM
         \supset W \to TM$ der Fluß zu $\xi$, so gilt aus diesem Grund bekanntlich,
         dass mit $(t,v) \in W$ auch schon $(t,\Phi_g(v)) \in W$ ist und die
         Gleichung $F_t(\Phi_g(v)) = \Phi_g(F_t(v))$ richtig ist, siehe etwa
         \cite[Lem. 18.4]{lee:2003a}. Also sieht man unmittelbar, dass
         $\mathcal{O}$ $G$"=invariant ist und für alle $g \in G$ und $v \in
         \mathcal{O}$
         \begin{align*}
            \exp(\Phi_g v) = \pi(F_1(\Phi_g v)) = \pi(\Phi_g F_1(v)) =
            \phi_g \pi(F_1(v)) = \phi_g \exp(v)
         \end{align*}
         gilt.
   \end{beweisEnum}
\end{proofklein}

\begin{bemerkung}
   \label{bem:VektorfeldZweiterOrdnung}
   Die Bedingung~\refitem{item:VektorFeldZweiterOrdnung} in Definition
   \ref{def:Spray} ist äquivalent dazu, dass für jede Integralkurve $c \colon I \to
   TM$ von $\xi$
   \begin{align}
      \label{eq:IntergalKurveZweiteOrdnung}
      \ddt(\pi \circ c) = c
   \end{align}
   gilt. Insbesondere ist dann $[0,1] \ni t \mapsto \ddt(\exp(t\nu)) \in TM$ eine
   Integralkurve von $\xi$ für $\nu \in \mathcal{O}$, siehe \cite[Lemma
   3.2.31]{waldmann:2007a}.
\end{bemerkung}

\begin{satz}
   \label{satz:invarianterSpray} Sei $M$ eine Mannigfaltigkeit, die
   eine eigentliche Wirkung einer Lie"=Gruppe $G$ trage. Sei $\pi
   \colon E \to M$ mit $E \subset TM$ ein $G$"=invariantes
   Untervektorbündel und $\xi\dpA TM \to TTM$ ein Spray mit $\xi(E)
   \subset TE \subset TTM$. Weiter sei $\Omega$ eine linksinvariante
   Volumendichte auf $G$ und  $\chi \dpA M \to \mathbb{R}$ wie in
   Korollar~\ref{kor:SpezFunktion}. Dann ist $\supp(\pi^{*}
   \chi)$ klein und
   \begin{align}
      \label{eq:invarianterSpray}
      &\xi^G \dpA TM \to TTM, \\
      &v_p \mapsto \xi^G(v_p) := \int_G \pi^* \chi(g^{-1} v_p) g
      \xi(g^{-1} v_p) \Omega(g)
   \end{align}
   ist ein $G$"=äquivarianter Spray mit $\xi^G(E) \subset TE$.
\end{satz}
\begin{proofklein}
   Zunächst sieht man mit
   Proposition~\ref{prop:klein-kompaktInFaserrichtung} und
   Lemma~\ref{lem:KleinGVektorbuendel} leicht, dass $\supp(\pi^* \chi)
   \subset \pi^{-1}(\supp \chi)$ klein ist. Nach Proposition
   \ref{prop:integrieren_inv_funkt} ist damit $\xi^G$ wohldefiniert und
   ein $G$"=äquivariantes Vektorfeld auf $TM$. Weiter rechnen wir
   \begin{align*}
      T_{v_p} \pi \xi^G(v_p) &= \int_G \pi^*\chi(g^{-1} v_p) T_{v_p}\pi
      g
      \xi(g^{-1} v_p) \Omega(g) \\
      &= \int_G \pi^*\chi(g^{-1} v_p) g T_{g^{-1} v_p} \pi \xi(g^{-1} v_p)
      \Omega(g)
      \\
      &= \int_G \chi(g^{-1}p) g g^{-1} v_p \Omega(g) = v_p \int_G
      \chi(g^{-1} p) \Omega(g) = v_p
   \end{align*}
   nach.  Dabei wurde im zweiten Schritt die Kettenregel ausgenutzt
   sowie die Tatsache, dass die $G$"=Wirkungen auf $TM$ bzw.\ $TTM$
   durch ein bzw.\ zweimaliges Ableiten aus der $G$"=Wirkung auf $M$
   induziert werden. Im dritten Schritt haben wir verwendet, dass
   $\xi$ schon ein Spray ist.
   Als nächstes sehen wir für $v_p \in T_pM$
   \begin{align*}
      \xi^G(\alpha v_p) &= \int_G \chi(g^{-1} p) g \xi(g^{-1} \alpha
      v_p)
      \Omega(g) \\
      &= \int_G \chi(g^{-1} p) g T_{g^{-1} v_p}\mu_\alpha \xi(g^{-1}
      v_p) \Omega(g) \\
      &= T_{v_p} \mu_\alpha \int_G \chi(g^{-1} p) g \xi(g^{-1}v_p) \Omega(g) \\
      &= T_{v_p} \mu_\alpha \xi^G(v_p)\Fcom
   \end{align*}
   wobei $\mu_\alpha \dpA TM \to TM$, $v \mapsto \alpha v$ die faserweise
   Multiplikation mit $\alpha \in \mathbb{R}$ ist.
   Sei schließlich $v_p \in E_p$, dann gilt schon
   \begin{align*}
      \xi^G(v_p) = \int_G \chi(g^{-1} p) g \xi(g^{-1} v_p) \Omega(g) \in
      T_{v_p}E \Fcom
   \end{align*}
   da die Integration faserweise verläuft, denn $\xi(g^{-1}v_p)
   \in T_{g^{-1}v_p}E$, d.\,h.\ $g\xi(g^{-1}v_p) \in T_{v_p}E$ für alle $g
   \in G$.
\end{proofklein}

\cleardoublepage

\chapter{Tubensätze}
\label{cha:Tubensatze}

In diesem Kapitel beschäftigen wir uns zunächst grob gesagt mit der
Frage, unter welchen Voraussetzungen ein lokaler Homöomorphismus in der
Umgebung einer Untermannigfaltigkeit zu einem Homöomorphismus
wird. Dabei sind wir insbesondere an der Situation interessiert, in der
alle auftauchenden Strukturen mit der Wirkung einer topologischen Gruppe
$G$ verträglich.

Auf diesen Überlegungen aufbauend können wir dann den bekannten
Existenzsatz für kompatible Tubenumgebungen \cite[Ch. II,
Thm. 1.6]{gibson:1976} auf die $G$"=invariante Situation verallgemeinern.

\section{Zusammenkleben von Schnitten}

Sind $X$ und $Y$ topologische Räume und ist $F \dpA X \to Y$
ein lokaler Homöomorphismus, so findet man  nach Definition von
lokalem Homöomorphismus zu jedem $x \in X$ eine offene Umgebung $U$ von
$x$ in $X$, so dass $V := F(U)$ eine offene Umgebung von $F(x)$ in $Y$
ist und $F\at{U} \dpA U\to V$ ein Homöomorphismus ist. Hat man
angenommen zusätzlich noch eine (topologische) Gruppe $G$ gegeben, die
sowohl auf $X$ als auch auf $Y$ wirkt und ist $F$ bezüglich dieser äquivariant, so
könnte man naiv denken, dass man $U$ und $V$ auch $G$"=invariant wählen
könnte. Dies ist im Allgemeinen aber -- selbst wenn die Wirkungen eigentlich
sind --  nicht so, wie das folgende Beispiel
zeigt.

\begin{beispiel}
   \label{bsp:aeqiv_lokal_homoe_gegenbsp}
   Sei $X = Y = G = S^1 \subset \mathbb{C}$. $G$ wirke auf $X$ via $(z,x)
   \mapsto z x$ und auf $Y$ via $(z,x) \mapsto z^2 x$. Sei $F \dpA X
   \to Y$, $F(x) := x^2$. Da $G$ kompakt ist, sind beide Wirkungen
    eigentlich. $F$ ist offensichtlich $G$"=äquivariant und ein
   lokaler Homöomorphismus. Die Wirkung auf $X$ ist jedoch transitiv,
   weshalb $X$ die einzige $G$"=invariante Teilmenge von $X$ ist.  $F$ ist
   aber offensichtlich auf $X$ nicht injektiv.
\end{beispiel}

\begin{definition}[$G$"=lokaler Homöomorphimus]
   \label{def:G_lokal_homeo}
   Seien $X$ und $Y$ topologische Räume und $G$ eine Gruppe, die sowohl
   auf $X$ als auch auf $Y$ wirke. Ist $F \dpA X \to Y$ $G$"=äquivariant und gibt
   es zu jedem $x \in X$ eine $G$"=invariante, offene Umgebung $U$ von $x$
   und eine $G$"=invariante, offene Umgebung $V$ von $F(x)$ in $Y$, so dass
   $F\at{U} \dpA U \to V$ ein Homöomorphismus ist, so nennen wir
   $F$ \neuerBegriff{$G$"=lokaler Homöomorphimus}.
\end{definition}

Es ist nun eine naheliegende Frage, unter welchen Bedingungen ein
$G$"=äquivarianter lokaler Homöomorphismus ein $G$"=lokaler Homöomorphismus
ist.

\begin{proposition}
   \label{prop:G_lokal_homeo}
   Seien $X$ und $Y$ topologische Räume und $G$ eine Gruppe, $\phi$ eine
   $G$"=Wirkung auf $X$ und $\theta$ eine $G$"=Wirkung auf $Y$.
   \begin{propositionEnum}
    \item \label{item:G_lokal_homeo1} Ist $F \dpA X \to Y$ ein $G$"=äquivarianter lokaler
      Homöomorphismus und gibt es um jeden Punkt in $X$ eine
      $G$"=invariante, offene Umgebung, auf der $F$ injektiv ist, so
      ist $F$ ein $G$"=lokaler Homöomorphismus.

    \item
       \label{item:G_lokal_homeo2}
       Seien $X$ und $Y$ Mannigfaltigkeiten, $G$ eine Lie-Gruppe,
      $F\dpA X \to Y$ $G$"=äquivariant, $C \subset Y$ eine
      $G$"=invariante Untermannigfaltigkeit und $s \dpA C \to X$ ein
      $G$"=äquivarianter, stetiger Schnitt von $F$ auf $C$. Sei weiter
      $F$ ein lokaler Homöomorphismus auf einer (nicht
      notwendigerweise $G$"=invarianten) Umgebung von $s(C)$ und
      $\theta$ eigentlich. Dann gibt es eine $G$"=invariante, offene
      Umgebung $U$ von $s(C)$ in $X$, so dass $F\at{U} \dpA U \to F(U)$
      ein $G$"=lokaler Homöomorphismus ist.
   \end{propositionEnum}
\end{proposition}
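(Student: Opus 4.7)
Mein Plan für Teil~\refitem{item:G_lokal_homeo1} ist folgender: Zu gegebenem $x \in X$ wähle ich eine $G$"=invariante offene Umgebung $U_{\text{inj}}$ von $x$, auf der $F$ injektiv ist, sowie eine (i.\,Allg.\ nicht $G$"=invariante) offene Umgebung $V \subset U_{\text{inj}}$ von $x$, auf der $F\at{V}$ ein Homöomorphismus auf eine offene Teilmenge von $Y$ ist. Dann setze ich $W := G V$; diese Menge ist $G$"=invariant und offen, und $F\at{W}$ ist injektiv wegen $W \subset U_{\text{inj}}$. Wegen der $G$"=Äquivarianz von $F$ ist für jedes $g \in G$ die Einschränkung $F\at{gV} = \theta_g \circ F\at{V} \circ \phi_{g^{-1}}\at{gV}$ ein Homöomorphismus auf die offene Menge $\theta_g(F(V))$, insbesondere also offen. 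Damit folgt für jedes offene $Z \subset W$, dass $F(Z) = \bigcup_{g \in G} F(Z \cap gV)$ in $Y$ offen ist, womit $F\at{W}$ ein Homöomorphismus auf die $G$"=invariante offene Menge $F(W) = G F(V)$ wird.

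Für Teil~\refitem{item:G_lokal_homeo2} halte ich zunächst fest, dass $\phi$ nach Proposition~\ref{prop:eigentlichAnsteckend} ebenfalls eigentlich ist. Durch Ersetzen der gegebenen Umgebung, auf der $F$ als lokaler Homöomorphismus vorliegt, durch deren $G$"=Bahn darf ich o.\,E.\ annehmen, dass $F$ sogar auf einer $G$"=invarianten offenen Umgebung $\tilde U$ von $s(C)$ ein lokaler Homöomorphismus ist. Nach Teil~\refitem{item:G_lokal_homeo1} genügt es dann, zu jedem $c \in C$ eine $G$"=invariante offene Umgebung $V_c \subset \tilde U$ von $s(c)$ zu konstruieren, auf der $F$ injektiv ist: Denn die Vereinigung $U := \bigcup_{c \in C} V_c$ ist dann $G$"=invariant und offen, enthält $s(C)$, und zu jedem $x \in U$ liegt $x$ in einem $V_c$, welches die von Teil~\refitem{item:G_lokal_homeo1} geforderte $G$"=invariante injektive Umgebung bereitstellt.

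Die Konstruktion von $V_c$ werde ich per Widerspruch führen. Angenommen, zu einem $c \in C$ gibt es keine solche $G$"=invariante injektive Umgebung von $s(c)$. Da $X$ als Mannigfaltigkeit erstabzählbar ist, bildet zu einer abzählbaren Umgebungsbasis $\{B_n\}$ von $s(c)$ die Familie $\{G B_n\}$ eine Basis aus $G$"=invarianten offenen Umgebungen von $s(c)$. Somit finde ich zu jedem $n$ Punkte $x_n \neq y_n$ in $G B_n$ mit $F(x_n) = F(y_n)$. Schreibe ich $x_n = g_n a_n$ und $y_n = h_n b_n$ mit $a_n, b_n \in B_n$, so folgt $a_n, b_n \to s(c)$ und somit $F(a_n), F(b_n) \to c$. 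Aus der $G$"=Äquivarianz ergibt sich $h_n^{-1} g_n F(a_n) = F(b_n)$, und die Eigentlichkeit von $\theta$ erzwingt längs einer Teilfolge die Konvergenz $h_n^{-1} g_n \to k \in G$. Grenzübergang liefert $k c = c$, und die Äquivarianz von $s$ impliziert $k s(c) = s(k c) = s(c)$. Wähle ich nun eine (i.\,Allg.\ nicht $G$"=invariante) offene Umgebung $N \subset \tilde U$ von $s(c)$, auf der $F\at{N}$ ein Homöomorphismus ist, so liegen längs der Teilfolge sowohl $h_n^{-1} x_n = (h_n^{-1} g_n) a_n \to k s(c) = s(c)$ als auch $h_n^{-1} y_n = b_n \to s(c)$ für hinreichend große $n$ in $N$. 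Wegen $F(h_n^{-1} x_n) = F(h_n^{-1} y_n)$ und der Injektivität von $F\at{N}$ folgt $h_n^{-1} x_n = h_n^{-1} y_n$, mithin $x_n = y_n$ -- Widerspruch.

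Die Hauptschwierigkeit wird die Konstruktion des $V_c$ sein: die lokale Homöomorphismuseigenschaft stellt nur Injektivität auf einer typischerweise sehr kleinen, nicht $G$"=invarianten Umgebung bereit. Der Übergang zur $G$"=Invarianz gelingt nur durch das Zusammenspiel der Eigentlichkeit der $G$"=Wirkung -- die die Kompaktheitsextraktion $h_n^{-1} g_n \to k$ erlaubt -- mit der Existenz des äquivarianten Schnittes $s$, dessen Äquivarianz die Schlüsselidentität $k s(c) = s(c)$ liefert und damit den Rücktransport der beiden Folgen in die nicht invariante Injektivitätsumgebung $N$ ermöglicht.
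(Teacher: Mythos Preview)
Your argument is correct and follows essentially the same route as the paper. The paper dismisses part~\refitem{item:G_lokal_homeo1} with ``Klar'', whereas you spell out the saturation $W = GV$ and the openness verification explicitly; for part~\refitem{item:G_lokal_homeo2} both you and the paper run the identical contradiction argument via a countable neighbourhood basis at $s(c)$, extract a convergent subsequence of the group elements from the properness of $\theta$, and use the equivariance of $s$ to conclude $k\,s(c) = s(c)$, which pulls the two offending points back into a single injectivity neighbourhood. Your preliminary step of replacing the local-homeomorphism domain by its $G$-saturate is a small extra care that the paper leaves implicit.
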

\begin{proofklein}
   \begin{beweisEnum}
      \item Klar.
      \item Sei $c \in C$. Angenommen es gibt keine offene,
         $G$"=invariante Umgebung von $s(c) \in X$, so dass $F$
         eingeschränkt auf diese injektiv ist. Sei $\{U_n\}_{n \in
           \mathbb{N}}$ eine abzählbare Umgebungsbasis von $s(c)$. Dann
         gibt es zu $n \in \mathbb{N}$ $a_n, b_n \in U_n$ und
         $g^a_n,g^b_n \in G$, so dass $g^a_n a_n \neq g^b_n b_n$, aber
         $F(g^a_n a_n) = F(g^b_n b_n)$. d.\,h.\ mit $g_n := (g^a_n)^{-1}
         g^b_n$ ist $a_n \neq g_n b_n$ und, da $F$ äquivariant ist,
         $F(a_n) = g_nF(b_n)$. Nach Konstruktion gilt $a_n,b_n \to s(c)$
         für $n \to \infty$. Und mit der Stetigkeit von $F$ folgt
         $g_nF(b_n) = F(a_n) \to F(s(c)) = c$ und $F(b_n) \to F(s(c)) =
         c$ für $n \to \infty$. Da $\theta$ eigentlich auf $Y$ wirkt,
         gibt es damit eine konvergente Teilfolge von $(g_n)$, die wir
         uns o.\,E.\ ausgewählt denken, mit $g := \lim_{n \to \infty}
         g_n$. Somit haben wir einerseits $g_n b_n \to g s(c)$ und
         andererseits $c = \lim_{n \to \infty} (g_n F(b_n)) = \lim_{n \to
           \infty} g_n \lim_{n \to \infty} F(b_n) = g c$, was wegen der
         Äquivarianz von $s$ die Gleichheit $g s(c) = s(c)$
         impliziert. Somit gibt es ein $n_0 \in \mathbb{N}$, so dass
         $a_{n_0}, g_{n_0} b_{n_0}$ in einer Umgebung von $s(c)$ liegen,
         auf der $F$ injektiv ist. Dies ist aber ein Widerspruch zur
         Wahl von $(a_n)$,$(b_n)$ und $(g_n)$.

        Sei nun $U_c$ eine $G$"=invariante, offene Umgebung von $s(c)$
        in $X$, auf der $F$ injektiv ist. Dann setze $U := \bigcup_{c
          \in C} U_c$. Dies ist dann die gesuchte, $G$"=invariante
        offene Umgebung von $s(C)$. Denn sei $x\in U$, dann gibt es
        ein $c \in C$ mit $x \in U_c$, d.\,h.\ $U_c$ ist eine offene,
        $G$"=invariante Umgebung von $x$ in $U$, auf der $F$ injektiv
        ist. Mit~\refitem{item:G_lokal_homeo1} folgt schließlich die
        Behauptung.
   \end{beweisEnum}
\end{proofklein}

\begin{bemerkung}
   \label{bem:G_lokal_homeo}
   Schaut man sich obigen Beweis genau an, so sieht man, dass man mit
   etwas schwächeren Voraussetzungen auskommt. Es genügt zu fordern, dass
   $X$ das erste Abzählbarkeitsaxiom erfüllt und $\theta$ folgeneigentlich
   ist. $G$ muss nur eine topologische Gruppe sein.
\end{bemerkung}

\begin{lemma}
   \label{lem:kleben}
   Seien $Z$ und $X$ topologische Räume und $G$ eine topologische
   Gruppe, die sowohl auf $Z$ als auch auf $X$ wirke, so dass für jede
   $G$"=invariante offene Umgebung $U$ von $C$ die Menge $U/G$
   parakompakt und hausdorffsch ist.  Weiter sei $\tau \dpA Z \to X$ ein
   surjektiver $G$"=lokaler Homöomorphismus, $C \subset X$ $G$"=invariant
   und $s \dpA C \to Z$ ein stetiger, $G$"=äquivarianter Schnitt von
   $\tau$, d.\,h.\ $\tau \circ s = \id$. Dann gibt eine offene,
   $G$"=invariante Umgebung $V$ von $C$ in $X$ und einen stetigen,
   $G$"=äquivarianten Schnitt $S \dpA V \to Z$ von $\tau$ (d.\,h.\  $\tau
   \circ S = \id$), so dass $S(V) \subset Z$ eine offene, $G$"=invariante
   Umgebung von $s(C)$ ist

   und $S\at{C} = s$, d.\,h.\ $S$ eine Fortsetzung von $s$ ist.
\end{lemma}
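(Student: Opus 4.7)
The plan is to first build local $G$-equivariant sections of $\tau$ extending $s$ near each point of $C$, then observe that any two such local sections agree on an open subset of their common domain and in particular on $C$, and finally patch them together using paracompactness of a suitable quotient.

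First I would proceed pointwise. Fix $c \in C$. Applying the $G$-local homeomorphism property of $\tau$ at $s(c) \in Z$, I choose $G$-invariant open neighborhoods $U_c$ of $s(c)$ in $Z$ and $\hat V_c$ of $c = \tau(s(c))$ in $X$ such that $\tau\at{U_c}\colon U_c \to \hat V_c$ is a homeomorphism; its inverse $\hat S_c$ is automatically $G$-equivariant by Bemerkung~\ref{bem:aequivariantesInverses}. Since $s$ is continuous, $s^{-1}(U_c)$ is an open, $G$-invariant subset of $C$ containing $c$, so one can pick an open $V' \subset X$ with $V' \cap C = s^{-1}(U_c)$ and pass to the $G$-invariant open set $V_c := GV' \cap \hat V_c$, which still contains $c$ and satisfies $s(V_c \cap C) \subset U_c$. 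Setting $S_c := \hat S_c\at{V_c}$, for $c'' \in V_c \cap C$ both $s(c'')$ and $S_c(c'')$ lie in $U_c$ and map to $c''$ under $\tau$, so injectivity of $\tau\at{U_c}$ forces $S_c = s$ on $V_c \cap C$.

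For any pair $c,c' \in C$ I next consider the agreement locus $W_{c,c'} := \{x \in V_c \cap V_{c'} \mid S_c(x) = S_{c'}(x)\}$. It is $G$-invariant by equivariance of $S_c, S_{c'}$ and contains $V_c \cap V_{c'} \cap C$ by the previous step. Openness in $V_c \cap V_{c'}$ follows from the local homeomorphism property of $\tau$ alone: given $x$ with $S_c(x) = S_{c'}(x) =: z$, choose an open $\Omega \subset Z$ with $\tau\at{\Omega}$ a homeomorphism; continuity of $S_c, S_{c'}$ implies that on the neighborhood $S_c^{-1}(\Omega) \cap S_{c'}^{-1}(\Omega)$ of $x$ both sections take values in $\Omega$, and since $\tau\at{\Omega}$ is injective they coincide there.

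Then I would globalize. Let $\hat V := \bigcup_{c \in C} V_c$, a $G$-invariant open neighborhood of $C$; by hypothesis $\hat V / G$ is paracompact and Hausdorff. Using Proposition~\ref{prop:inv.lok.endl}\refitem{item:X/G_parakompakt_1}--\refitem{item:X/G_parakompakt_2} I obtain a $G$-invariant, locally finite refinement $\{P_j\}_{j \in J}$ of $\{V_c\}$ in $\hat V$, with labels $c(j)$ satisfying $P_j \subset V_{c(j)}$, together with $G$-invariant open sets $Q_j$ with $\overline{Q_j} \subset P_j$ and $\{Q_j\}$ still a locally finite cover of $C$. For each $x \in C$, Proposition~\ref{prop:inv.lok.endl}\refitem{item:guteInvZeugendeUmgebung} yields a $G$-invariant good witnessing neighborhood $\Gamma_x$ for $(\{P_j\},\{Q_j\})$; the index set $J(x) := \{j \mid Q_j \cap \Gamma_x \neq \emptyset\}$ is finite and $\Gamma_x \subset P_j \subset V_{c(j)}$ for $j \in J(x)$. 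Since $x \in C$ lies in every $W_{c(j),c(j')}$ with $j,j' \in J(x)$, the $G$-invariant open set $\tilde\Gamma_x := \Gamma_x \cap \bigcap_{j,j' \in J(x)} W_{c(j),c(j')}$ is a neighborhood of $x$ on which all $S_{c(j)}$, $j \in J(x)$, coincide. Setting $V := \bigcup_{x \in C} \tilde\Gamma_x$ and $S(x') := S_{c(j)}(x')$ for any $j \in J(x)$ with $x' \in \tilde\Gamma_x$ produces the desired continuous, $G$-equivariant section extending $s$; its image is an open $G$-invariant neighborhood of $s(C)$ because locally it is the inverse of a restriction of $\tau$ to a $G$-invariant open subset of $Z$.

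The main obstacle will be the last paragraph: one must coordinate the shrinkings so that at every point only finitely many pairwise compatibility relations have to be checked, and so that $G$-invariance survives each choice. This is exactly what the paracompactness of $\hat V/G$ and the existence of $G$-invariant good witnessing neighborhoods provide; without those inputs, an extension along these lines would not be available.
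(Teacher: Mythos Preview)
Your proof is correct and follows essentially the same route as the paper's: construct local $G$-equivariant sections extending $s$, pass to a locally finite $G$-invariant refinement with shrinking via paracompactness of the quotient, and use $G$-invariant good witnessing neighborhoods to reduce the gluing to finitely many compatibility checks at each point. The only organizational difference is that you phrase compatibility through the open agreement loci $W_{c,c'}$, whereas the paper instead picks at each $c\in C$ a single $G$-invariant injectivity neighborhood $D_c\subset Z$ of $s(c)$ and forces all relevant local sections to land in $D_c$; both devices encode the same local-homeomorphism argument, and your well-definedness across different $\tilde\Gamma_x$ is secured (as in the paper) by the fact that $\{Q_j\}$ covers, so any $x'$ lies in some $Q_{j_0}$ with $j_0\in J(x)\cap J(x'')$.
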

\begin{proofklein}
   Zunächst bemerken wir, dass $\tau\at{s(C)}\colon s(C) \to C$ ein
   Homöomorphismus mit stetigem Inversen $s \colon C \to S(C)$
   ist. $\tau\at{s(C)}$ ist nämlich ein Linksinverses zu $s$, d.\,h.\ $s$
   ist injektiv und als Abbildung auf sein Bild surjektiv. Da $\tau$ ein
   $G$"=lokaler Homöomorphismus ist, gibt es für jedes $c \in C$ eine
   offene, $G$"=invariante Umgebung $Z_c$ von $s(c)$ in $Z$, und eine
   offene, $G$"=invariante Umgebung $W_c$ von $c$ in $C$, so dass
   $\tau\at{Z_c} \colon Z_c \to W_c$ ein $G$"=äquivarianter
   Homöomorphismus ist. Wir können o.\,E.\ annehmen, dass für jedes $c' \in
   W_c \cap C$
   \begin{align}
      \label{eq:inverseGleichSchnitt}
      s(c')  =\tau\at{Z_c}^{-1}(c')  \tag{$*$}
   \end{align}
   gilt. Dass man dies in der Tat immer erreichen kann, sieht man wie
   folgt. Da $s \colon C \to Z$ stetig und $G$"=äquivariant ist, gibt es
   eine offene, $G$"=invariante Umgebung $\tilde{W}_c \subset W_c$ von
   $c$ in $X$, so dass $s(\tilde{W}_c \cap C) \subset Z_c$. Sei nun $c'
   \in \tilde{W}_c \cap C$, dann gilt
   \begin{align*}
      \tau\at{Z_c}(s(c')) = \tau(s(c')) = c' =
      \tau\at{Z_c}(\tau\at{Z_c}^{-1}(c'))
   \end{align*}
   und damit auch  $s(c') = \tau\at{Z_c}^{-1}(c')$. Weiter ist wegen
   $\tau \circ s = \id$ die Menge $\tau\at{Z_c}^{-1}(\tilde{W}_c)$ eine
   Umgebung von $s(c)$.

   Ersetzt man also $Z_c$ durch $\tau\at{Z_c}^{-1}(\tilde{W}_c)$ und
   $W_c$ durch $\tilde{W}_c$, so sieht man, dass wir o.\,E.\ Eigenschaft
   \eqref{eq:inverseGleichSchnitt} annehmen können, dies wollen wir im
   Folgenden immer tun. Mit Hilfe von Gleichung
   \eqref{eq:inverseGleichSchnitt} und mit $\tau \circ s = \id$ prüft
   man leicht nach, dass
   \begin{align*}
      Z_c \cap s(C) = s(W_c \cap C)
   \end{align*}
   erfüllt ist.  Außerdem können wir o.\,E.\ annehmen, dass die
   $G$"=invariante, offene Überdeckung $\{W_c\}_{c \in C}$ lokal endlich
   ist. Denn ist dies nicht der Fall, so können wir nach Proposition
   \ref{prop:inv.lok.endl}~\refitem{item:X/G_parakompakt_1} eine
   präzise, $G$"=invariante, offene, lokal endliche Verfeinerung
   $\{\tilde{W}_c\}_{c \in C}$ wählen. Die Menge $\tilde{Z}_c :=
   \tau\at{Z_c}^{-1}(\tilde{W}_c)$ ist dann für jedes $c \in C$
   natürlich $G$"=invariant und offen, enthält aber i.\,Allg.\ $s(c)$ nicht
   mehr. Trotzdem überdeckt die Familie $\{\tilde{Z}_c\}_{c \in C}$
   $s(C)$, denn sei $c' \in C$, dann gibt es ein $c \in C$, so dass $c'
   \in \tilde{W}_c \subset W_c$ mit $c' \in \tilde{W}_c \cap C$ und damit nach
   Gleichung \eqref{eq:inverseGleichSchnitt}
   \begin{align*}
      s(c') = \tau\at{Z_c}^{-1}(c') \in \tilde{Z}_c \cap s(C) \Fdot
   \end{align*}
   Wir wollen im Folgenden annehmen, dass $\{W_c\}_{c \in C}$ eine lokal
   endliche Überdeckung von $C$ sei.

   Betrachte nun $I := \{c \in C \mid  Z_c \cap s(C) \neq
   \emptyset\}$. Dann ist auch $\{Z_i\}_{i \in I}$ eine offene,
   $G$"=invariante Überdeckung von $s(C)$. Wegen $\tau \circ s = \id$
   ist des Weiteren $\{W_i\}_{i \in I}$ ebenfalls eine offene,
   $G$"=invariante, lokal endliche Überdeckung von $C$ mit $W_i \cap C
   \neq \emptyset$ für alle $i \in I$. Im Folgenden schreiben wir $s_i
   := \tau\at{Z_i}$ für alle $i \in I$. Da $(\bigcup_{i \in I} W_i)/G$
   parakompakt und hausdorffsch ist, können wir nach Proposition
   \ref{prop:inv.lok.endl}~\refitem{item:X/G_parakompakt_2} eine offene,
   $G$"=invariante, lokal endliche Verfeinerung $\{V_i\}_{i \in I}$ von
   $\{W_i\}_{i \in I}$ mit $\abschluss{V}_i \subset W_i$ für alle $i \in
   I$ wählen.

   Sei nun $c \in C$ beliebig. Nach Proposition
   \ref{prop:inv.lok.endl}~\refitem{item:guteInvZeugendeUmgebung}
   existiert eine offene, $G$"=invariante, gute zeugende Umgebung
   $\tilde{\Gamma}_c$ von $c$ für $(\{W_i\}_{i \in I},\{V_i\}_{i \in
     I})$. Nach Konstruktion ist $E_c := \{i \in I \mid V_i
   \cap \tilde{\Gamma}_c \neq \emptyset \}$ endlich. Weiter sei
   $D_c \subset Z$ eine offene, $G$"=invariante Umgebung von $s(c)$, so dass
   $\tau\at{D_c}$ injektiv ist. Für $i \in E_c$ ist $s_i^{-1}(D_c) = \{x
   \in W_i \mid s_i(x) \in D_c\} \subset X$ nicht leer, denn da
   $\tilde{\Gamma}_c$ eine gute zeugende Umgebung ist, gilt $c \in
   \tilde{\Gamma}_c \subset W_i$ für alle $i \in E_c$ und damit sowie nach
   Gleichung \eqref{eq:inverseGleichSchnitt} $s_i(c) = s(c) \in D_c$,
   also $c \in s_i^{-1}(D_c)$ für alle $i \in E_c$. Außerdem ist klar,
   dass $s_i^{-1}(D_c)$ für alle $i \in E_c$ $G$"=invariant und offen
   ist. Dann ist offenbar
   \begin{align*}
      \Gamma_c := \tilde{\Gamma}_c \cap \bigcap_{i \in E_c} s_i^{-1}(D_c)
   \end{align*}
   eine $G$"=invariante, offene Umgebung von $c$. Die Menge
   \begin{align*}
      V := \bigcup_{i \in I}V_i \cap \bigcup_{i \in C} \Gamma_c
   \end{align*}
   ist dann eine offene, $G$"=invariante Umgebung von $C$.
   Schließlich definieren wir $S \colon V \to Z$ durch
   \begin{align*}
      S(x) := s_i(x) \quad \text{für $x \in V_i$} \Fdot
   \end{align*}
   Dies ist wohldefiniert, denn sei $x \in V_i \cap V_j \cap V$ mit $i,j
   \in I$, dann gibt es insbesondere ein $c \in C$ mit $x \in \Gamma_c
   \subset \tilde{\Gamma}_c$ und nach Definition von $E_c$ folgt $i,j
   \in E_c$. Da $x \in \Gamma_c$ ist, gilt weiterhin $x \in s^{-1}_k(D_c)$ für
   alle $k \in E_c$ und damit $s_i(x), s_j(x) \in D_c$. Aufgrund der
   Injektivität von $\tau\at{D_c}$ folgt aus $\tau(s_i(x)) = x =
   \tau(s_j(x))$ direkt $s_i(x) = s_j(x)$. Jetzt ist leicht zu sehen,
   dass $S$ ein stetiger, $G$"=invarianter Schnitt von $\tau$ definiert,
   der $s$ fortsetzt und dass $S(V)$ eine offene, $G$"=invariante
   Umgebung von $s(C)$ ist.

\end{proofklein}

\begin{korollar}
   \label{kor:AufUmgebungVonCHomoe}
   Seien $Z$ und $X$ Mannigfaltigkeiten, $G$ eine Lie"=Gruppe, die auf
   $Z$ und $X$ wirke, wobei die Wirkung auf $X$ eigentlich sei. Weiter
   sei $F\dpA Z \to X$ $G$"=äquivariant, $C \subset Z$ eine
   $G$"=invariante Untermannigfaltigkeit und $s \dpA C \to X$ ein
   $G$"=äquivarianter, stetiger Schnitt von $F$ auf $C$. Sei ferner $F$
   ein lokaler Homöomorphismus auf einer (nicht notwendigerweise
   $G$"=invarianten) Umgebung $U'$ von $s(C)$. Dann gibt es eine
   $G$"=invariante, offene Umgebung $U$ von $s(C)$ in $Z$, so dass
   $F\at{U} \dpA U \to F(U)$ ein Homöomorphismus ist.
\end{korollar}
\begin{proofklein}
   Nach Proposition
   \ref{prop:G_lokal_homeo}~\refitem{item:G_lokal_homeo2} können wir
   o.\,E.\ annehmen, dass $U'$ $G$"=invariant und $F\at{U'} \colon U' \to
   F(U')$ ein $G$"=lokaler Homöomorphismus ist. Mit Lemma
   \ref{lem:kleben} folgt, dass es eine $G$"=invariante, offene Umgebung
   $V$ von $C$ in $F(U') \subset X$ gibt und einen stetigen,
   $G$"=äquivarianten Schnitt $S\colon V \to U' \subset Z$ von
   $F\at{U'}$, so dass $S(V)$ eine offene, $G$"=invariante Umgebung von
   $s(C)$ in $U'$ ist und $S\at{C} = s$ gilt. $F\at{S(V)} \colon S(V)
   \to V$ ist ein Linksinverses von $S\colon V \to S(V)$, womit dieses
   injektiv und damit auch bijektiv ist. Also ist auch $F\at{S(V)}$
   bijektiv mit stetigem Inversem $S\colon V \to S(V)$. Mit $U := S(V)$
   folgt die Behauptung.
\end{proofklein}
\section{Existenz von $G$-invarianten kompatiblen Tubenumgebungen}
\label{sec:ExistenzVonGInvariantenKompatiblenTubenumgebungen}

Bei einer Tubenumgebung handelt es sich grob gesagt um eine Umgebung
einer Untermannigfaltigkeit, bei der die Richtungen transversal zu
dieser wie ein Vektorraum aussehen. Die folgende Definition präzisiert
diese Vorstellung.

\begin{definition}[$G$-invariante Tubenumgebung]
   \label{def:Inv.Tubenumgebung}
   Sei $\phi$ eine Wirkung einer Lie"=Gruppe $G$ auf einer Mannigfaltigkeit
   $M$ und $C$ eine $G$"=invariante Untermannigfaltigkeit von $M$. Eine
   \neuerBegriff{$G$"=invariante Tubenumgebung} $(E \to C,U_E,U_M,\tau)$ von
   $C$ in $M$ besteht aus einem Vektorbündel $E \to C$, einer Hochhebung
   $\Phi$ von $\phi$ auf $E$ (d.\,h.\ einem $G$"=Vektorbündel) und aus
   einem $G$"=äquivarianten Diffeomorphismus $\tau \dpA U_E \to U_M$
   von einer offenen, $G$"=invarianten Umgebung $U_E$ von $\nSchnitt(C)$
   in $E$ auf eine offene, $G$"=invariante Umgebung $U_M$ von $C$ in $M$ mit
   $\tau \circ n = \id$. Dabei bezeichnet $\nSchnitt \dpA C \to E$ den
   Nullschnitt. $\tau$ heißt auch \neuerBegriff{Tubenabbildung}. Ist
   $U_E = E$, so wollen wir von einer \neuerBegriff{totalen
   Tubenumgebung} sprechen und schreiben $(E \to C,U_M,\tau)$, in diesem Fall
   heißt $\tau$ \neuerBegriff{totale Tubenabbildung}.
\end{definition}

\begin{bemerkung}
   \label{bem:Tubenumgebungen_partiell_total}
      In unserer Terminologie folgen wir im Wesentlichen
      \cite{lang:1999a}. In der Literatur ist es auch üblich, das, was wir
      ($G$"=invariante) Tubenumgebung nennen, als \neuerBegriff{partielle
        Tubenumgebung} zu bezeichnen und das, was wir totale Tubenumgebung
      nennen, einfach als Tubenumgebung. Siehe etwa \cite[Ch.~4.5]{hirsch:1976a}
      oder \cite[Ch.~4.5]{kankaanrinta:2007}.
\end{bemerkung}

\begin{bemerkung}
   \label{bem:TubenumgebungLokal}
   Ist $U$ eine $G$"=invariante, offene Umgebung von $C$ in $M$, so induziert
   offensichtlich jede $G$"=invariante Tubenumgebung von $C$ in $U$ auf
   naheliegende Weise eine solche für $C$ in $M$.
\end{bemerkung}

Wir wollen uns nun der Frage nach der Existenz von Tubenumgebungen zuwenden.

\begin{proposition}
   \label{prop:3}
   \begin{propositionEnum}
   \item  Sei $\pi \dpA E \to M$ ein Vektorbündel, $\nSchnitt{} \dpA M \to E$
      der Nullschnitt, dann ist
      \begin{align*}
         \vbNIso{E \to M} \dpA TM \oplus E \to TE\at{\nSchnitt(M)},
         \quad (v,w) \mapsto T \nSchnitt \, v + \ddt (tw)
      \end{align*}
       ein Vektorbündelisomorphismus. \label{item:TangentialraumEinesVBsAmNullschnitt}
     \item Seien $\pi_i \dpA E_i \to M_i$, $i=1,2$, Vektorbündel mit
       Nullschnitten $\nSchnitt[i] \dpA M_i \to E_i$ und $\Phi \dpA E_1
       \to E_2$ ein Vektorbündelmorphismus über $\phi \dpA M_1 \to
       M_2$. Dann kommutiert

      \def\tA[#1]{A_{#1}}
\begin{equation}
      \begin{tikzpicture}[baseline=(current
    bounding box.center),description/.style={fill=white,inner sep=2pt}]
         \matrix (m) [matrix of math nodes, row sep=3.0em, column
         sep=5.5em, text height=1.5ex, text depth=0.25ex]
         {
TE_1\at{\nSchnitt[1](M_1)}  & TE_2\at{\nSchnitt[2](M_2)} \\
TM_1 \oplus E_1  &  TM_2 \oplus E_2 \\
}; %

\path[<-] (m-1-1) edge node[auto] {$\vbNIso{1}$} (m-2-1); %
\path[<-] (m-1-2) edge node[auto] {$\vbNIso{2}$} (m-2-2); %
\path[->] (m-1-1) edge node[auto]{$T\Phi$}(m-1-2); %
\path[->] (m-2-1) edge node[auto]{${(T\phi,\Phi)}$}(m-2-2); %
 \end{tikzpicture}
\end{equation}
      Dabei ist $\vbNIso{i} := \vbNIso{E_i \to M_i}$ $i=1,2$.
   \end{propositionEnum}
\end{proposition}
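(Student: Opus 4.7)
The plan is to verify both statements essentially by going over to a local vector bundle trivialization, reducing everything to an elementary calculation in a product $U \times \mathbb{R}^k$, where the two summands $TM$ and $E$ simply correspond to the two factors of the tangent space $T_{(p,0)}(U \times \mathbb{R}^k) \cong T_pU \oplus \mathbb{R}^k$.

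For part (i), I would first observe that $\vbNIso{E\to M}$ is fiberwise linear by construction and smooth, since both $Tn$ and the map $w \mapsto \ddt(tw)$ are smooth. A dimension count shows $\dim(T_pM \oplus E_p) = \dim M + \operatorname{rank} E = \dim E = \dim T_{n(p)}E$, so it suffices to check injectivity on each fiber. The quickest way is to apply $T_{n(p)}\pi$: since $\pi \circ n = \id_M$ one gets $T_{n(p)}\pi \circ Tn \cdot v = v$, while $T_{n(p)}\pi \, \ddt(tw) = \ddt \pi(tw) = 0$ because $\pi(tw) = \pi(w)$ is constant in $t$. Hence if $\vbNIso{}(v,w) = 0$ then $v = 0$, and it remains to show $\ddt(tw) = 0$ forces $w = 0$, which is immediate in a local vector bundle chart $E\at U \simeq U \times \mathbb{R}^k$: in this chart the zero section is $p \mapsto (p,0)$, so $\vbNIso{}(v,w)$ reads as the identification $(v,w) \in T_pU \oplus \mathbb{R}^k \mapsto (v,w) \in T_{(p,0)}(U \times \mathbb{R}^k)$. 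This also shows smoothness of $\vbNIso{}^{-1}$, so $\vbNIso{}$ is a vector bundle isomorphism.

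For part (ii), the key observation is that any vector bundle morphism $\Phi \colon E_1 \to E_2$ over $\phi$ is linear on fibers and hence satisfies $\Phi \circ n_1 = n_2 \circ \phi$. I would compute both sides of the diagram separately. On one hand,
\[
T\Phi \circ \vbNIso{1}(v,w) = T\Phi \cdot Tn_1 \cdot v + T\Phi \cdot \ddt(tw).
\]
By the chain rule and the identity $\Phi \circ n_1 = n_2 \circ \phi$, the first summand equals $T(n_2 \circ \phi) \cdot v = Tn_2 \cdot T\phi \cdot v$. For the second summand, fiberwise linearity of $\Phi$ gives $\Phi(tw) = t \Phi(w)$, hence
\[
T\Phi \cdot \ddt(tw) = \ddt \Phi(tw) = \ddt\bigl(t\,\Phi(w)\bigr).
\]
Adding the two contributions yields precisely $\vbNIso{2}(T\phi \cdot v, \Phi \cdot w)$, which is the commutativity of the diagram.

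I do not expect any real obstacle here: once one passes to a local trivialization and uses that $\Phi$ preserves zero sections because it is fiberwise linear, both statements reduce to one-line computations. The only point that deserves care is keeping straight which identifications $\vbNIso{i}$ encode, so I would be explicit that $Tn$ captures the horizontal/base direction while $w \mapsto \ddt(tw)$ is the canonical identification of a vector space with its tangent space at the origin.
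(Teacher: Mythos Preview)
Your proposal is correct and follows essentially the same approach as the paper: for (i) you apply $T\pi$ to isolate the base direction and then count dimensions, and for (ii) you use $\Phi \circ n_1 = n_2 \circ \phi$ together with fiberwise linearity $\Phi(tw) = t\Phi(w)$, which is exactly what the paper does. The only cosmetic difference is that you spell out the local trivialization more explicitly where the paper simply says that the injectivity of each summand is well known.
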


\begin{proofklein}
   \begin{beweisEnum}
      \item Ist $Tn \, v = \ddt (tw)$ für $(v,w) \in TM \oplus E$, dann gilt
         \begin{align*}
            v = T\id \, v &= T (\pi \circ n)v = T\pi T n \, v = T\pi\, \ddt
            (tw)\\
            &= \ddt(\pi(tw)) = \ddt (\pi(w)) = 0 \Fdot
         \end{align*}
         Also gilt für $p \in M$
         \begin{align*}
            \vbNIso{E \to M}(T_pM \times \{0\}) \cap \vbNIso{E
              \to M}(\{0\} \times E_p) = \{0\} \Fdot
         \end{align*}
         Da sowohl $\vbNIso{E \to M}\at{T_pM}$ als auch $
         \vbNIso{E \to M}\at{E_p}$ bekanntlich injektiv sind,
         folgt die faserweise Bijektivität aus Dimensionsgründen.  Der
         Rest ist klar.

      \item Seien $p \in M_1$, $(v,w) \in T_pM_1 \oplus {E_1}_p$
        beliebig, dann gilt
         \begin{align*}
            \vbNIso{2}^{-1} \circ T \Phi \circ \vbNIso{1}(v,w) &=
            \vbNIso{2}^{-1}(T\Phi(T n_1 v + \ddt (t w))) \\
            &= \vbNIso{2}^{-1}(T\underbrace{(\Phi \circ n_1)}_{=n_2 \circ
              \phi}
            v + \ddt (\Phi(tw))) \\
            &= \vbNIso{2}^{-1}(Tn_2(T\Phi v) + \ddt t \Phi(w)) \\
            &= (T\Phi v, \Phi (w)) \Fdot
         \end{align*}
   \end{beweisEnum}
\end{proofklein}

\begin{proposition}
   \label{prop:ExponentialDiffeo}
   Sei $M$ eine Mannigfaltigkeit und $\xi \colon TM \to TTM$ ein Spray
   auf $M$ sowie $\exp$ die davon induzierte Exponentialabbildung mit
   maximalem Definitionsbereich $\mathcal{O}$.
   \begin{propositionEnum}
   \item %
      \label{item:ExponentialabbildungIdentitaet}
      Für alle $p \in M$ und $X_1,X_2 \in T_pM$ gilt
      \begin{align}
         \label{eq:ExponentialabbildungIdentitaet}
         T_{0_p}\exp (\vbNIso{TM \to M}(X_1,X_2)) = X_1 + X_2 \Fdot
      \end{align}
   \end{propositionEnum}
   Sei weiter $\kIn \colon C \hookrightarrow M$ eine
   Untermannigfaltigkeit und $K \to C$ ein Komplementvektorbündel von
   $T\kIn TC$ in $TM\at{C}$.
   \begin{propositionEnum}
      \addtocounter{enumi}{1}
   \item %
      Für alle $c \in C$ ist die Abbildung
      \begin{align}
         \label{eq:ExpIsoKomplement}
         T_{0_c}\exp\at{K} \colon T_{0_c}K \to T_cM
      \end{align}
      ein Vektorraumisomorphismus.
   \item%
      \label{item:expIstLokalDiffeoAufNormalBundel} %
      Die Menge
      \begin{align*}
         U := \{v \in K \cap \mathcal{O} \mid T_v \exp
         \text{ ist invertierbar}\}
      \end{align*}
      ist eine offene Umgebung von $\nSchnitt(C)$ in $K$. Die
      Abbildung $\exp$ ist eingeschränkt auf $U$ ein lokaler
      Diffeomorphismus. Dabei bezeichnet $n \colon C \to K$ den
      Nullschnitt.
   \end{propositionEnum}
\end{proposition}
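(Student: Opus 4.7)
My plan is to prove the three parts sequentially, using part (i) as the key computational input for part (ii), and then part (ii) combined with the inverse function theorem for part (iii).

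For part (i), I would exploit the decomposition $\vbNIso{TM\to M}(X_1,X_2)=T\nSchnitt\,X_1+\ddt(tX_2)$ and compute the two contributions separately. For the $X_1$-contribution, I would choose a curve $\gamma\colon(-\epsilon,\epsilon)\to M$ with $\dot\gamma(0)=X_1$, so that $t\mapsto \nSchnitt(\gamma(t))=0_{\gamma(t)}$ is a curve in $TM$ with derivative $T\nSchnitt\,X_1$ at $t=0$; since the integral curve of $\xi$ starting at $0_{\gamma(t)}$ is constant (because $\xi$ vanishes on the zero section by the Spray-Bedingung $\xi(\alpha v)=T\mu_\alpha\xi(v)$ at $\alpha=0$), one gets $\exp(0_{\gamma(t)})=\gamma(t)$, hence the first contribution is $X_1$. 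For the $X_2$-contribution, I would look at the curve $t\mapsto tX_2$ and use Bemerkung~\ref{bem:VektorfeldZweiterOrdnung} together with the fact that $[0,1]\ni t\mapsto \ddt[s=0]\exp(sX_2 t)$ is an integral curve of $\xi$ whose base projection has derivative equal to $X_2$ at $t=0$; this yields $T_{0_p}\exp\bigl(\ddt(tX_2)\bigr)=X_2$. Linearity finishes part (i).

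For part (ii), the key observation is that the inclusion $K\hookrightarrow TM$ intertwines the two vector bundle isomorphisms $\vbNIso{K\to C}$ and $\vbNIso{TM\to M}$ in the sense that, via the canonical identifications, $T_{0_c}K\hookrightarrow T_{0_c}TM$ corresponds to the map $(v,w)\in T_cC\oplus K_c \mapsto (T_c\kIn\,v,w)\in T_cM\oplus T_cM$ under $\vbNIso{TM\to M}^{-1}$. This follows by tracing definitions: $T\nSchnitt_K\,v$ maps via the inclusion to $T(\nSchnitt_{TM}\circ\kIn)v=T\nSchnitt_{TM}\,T\kIn\,v$, while $\ddt(tw)$ remains $\ddt(tw)$ in $TM$. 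Combining this with part (i), one obtains
\begin{equation*}
T_{0_c}\exp\at{K}\bigl(\vbNIso{K\to C}(v,w)\bigr)=T_c\kIn\,v+w.
\end{equation*}
Since $T_cM=T_c\kIn\,T_cC\oplus K_c$ by the complement hypothesis, the right hand side defines a linear isomorphism $T_cC\oplus K_c\to T_cM$, and precomposing with the isomorphism $\vbNIso{K\to C}$ yields the desired isomorphism $T_{0_c}\exp\at{K}\colon T_{0_c}K\to T_cM$.

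For part (iii), I would invoke part (ii), which says that $T_v\exp$ is invertible at every $v=0_c$ with $c\in C$. Since invertibility of $T_v\exp$ is an open condition in $v$ (the determinant of the Jacobian in local coordinates is continuous), the set $U$ is an open subset of $K\cap\mathcal{O}$ containing $\nSchnitt(C)$, and $\exp\at{U}$ is a local diffeomorphism by the inverse function theorem. The main obstacle is really just part (ii), where one must be careful with the identifications between $T_{0_c}K$, $T_{0_c}TM$, and $T_cM\oplus K_c$ versus $T_cM\oplus T_cM$; once the diagram from Proposition~\ref{prop:3}~\refitem{item:TangentialraumEinesVBsAmNullschnitt} is used correctly, everything reduces to the algebraic fact that direct sum complements induce an isomorphism via addition.
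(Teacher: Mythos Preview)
Your proposal is correct and follows essentially the same route as the paper: split $\vbNIso{TM\to M}(X_1,X_2)$ into its two summands and compute their images under $T_{0_p}\exp$ separately, then use the naturality diagram of Proposition~\ref{prop:3} for the inclusion $K\hookrightarrow TM$ to obtain the formula $T_{0_c}\exp\at{K}\bigl(\vbNIso{K\to C}(v,w)\bigr)=T_c\kIn\,v+w$, and conclude part~(iii) via the inverse function theorem. The only cosmetic difference is that the paper handles the $X_2$-contribution in one line via $\ddt\exp(tX_2)=X_2$ without explicitly invoking Bemerkung~\ref{bem:VektorfeldZweiterOrdnung}.
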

\begin{proofklein}
   \begin{beweisEnum}
   \item %
      Dies ist klar, da jeweils nach Kettenregel
         \begin{align*}
            T_{o_p} \exp (T_p \nSchnitt \, v) = T_p(\exp \circ \nSchnitt)
            \, v = T_p (\id) \, v = v
         \end{align*}
         und
         \begin{align*}
            T_{0_p} \exp \left(\ddt (tw)\right) = \ddt (\exp(tw)) = w
         \end{align*}
         für $v,w \in T_pM$, $p \in M$.
      \item %
         Sei $\kIn_K \colon K \hookrightarrow TM$ die Inklusion, dann
         schreibt sich $\exp\at{K} = \exp \circ \kIn_K$. Sei nun $c \in
         C$ beliebig. Nach Kettenregel gilt:
         \label{tau_invertierbar}
         \begin{align*}
            T_{0_c} \exp\at{K} = T_{0_c} \exp \circ T_{0_c} \kIn_{K}
         \end{align*}
         Offenbar ist $\kIn_K$ ein Vektorbündelmorphismus
         über $\kIn \colon C \hookrightarrow M$ womit aus
         Proposition \ref{prop:3} und  folgt :
         \begin{eqnarray*}
            \lefteqn{ T_{0_c} \exp\at{K}
              (\vbNIso{K \to C}
              (v,w))}\\
            &=& (T_{0_a} \exp) \circ \vbNIso{TM
              \to M} \circ \vbNIso{TM \to M}^{-1} \circ
            (T_{0_c} \kIn_K)  \circ
            \vbNIso{K \to C}(v,w) \\
            &=& T_{0_c} \exp \circ \vbNIso{TM
              \to M} (T\kIn \, v, w) \\
            &=& T_c\kIn v + w
         \end{eqnarray*}
         für $(v,w) \in T_cA \oplus K_c$.
         Beachtet man, dass die Abbildung $T_cA \oplus K_c \ni (v,w)
         \mapsto T_c\kIn v + w \in T_cM$ offensichtlich bijektiv ist,
         folgt die Behauptung.
      \item %
          Nach~\refitem{tau_invertierbar} gilt $\nSchnitt(C) \subset U$ und
         mit dem Satz über die Umkehrfunktion (siehe
         \cite[Thm.~5.11]{lee:2003a}) gibt es für jedes $v \in U$ eine
         offene Umgebung $U_v$ von $v$ in $K$ so dass
         $\tau(U_v)$ offen ist und $\exp\at{U_v} \dpA U_v \to
         \tau(U_v)$ ein Diffeomorphismus ist. Damit ist der Rest klar.
   \end{beweisEnum}
\end{proofklein}

Als nächstes zeigen wir, wie ein Spray und ein Komplementvektorbündel einer
Untermannigfaltigkeit eine Tubenumgebung dieser induziert.

\begin{lemma}
   \label{lemma:ExistenzVonTubenumgebungen}
   Sei $M$ eine Mannigfaltigkeit, $\phi \dpA G \times M \to M$ eine
   eigentliche Wirkung einer Lie"=Gruppe $G$ auf $M$ und $\kIn \colon C
   \hookrightarrow M$ eine $G$"=invariante Untermannigfaltigkeit. Sei
   $\xi \colon TM \to TTM$ ein $G$"=äquivarianter Spray und $K$ ein
   $G$"=invariantes Komplementbündel von $T\kIn TC$ in $TM$. Dann gibt
   es eine $G$"=invariante Tubenumgebung $(E \to C,U_E,U_M,\tau)$ von $C$ in
   $M$ mit folgenden Eigenschaften.
   \begin{satzEnum}
      \item $E = K$.
      \item $G$ wirkt auf $E$ via
         \begin{align*}
            \Phi \dpA G \times E \to E, \quad (g,v) \mapsto
            T\phi_g \, v \Fdot
         \end{align*}
       \item $U_E$ liegt im Definitionsbereich $\mathcal{O}$ der von
         $\xi$ induzierten Exponentialabbildung $\exp$.
      \item $\tau := \exp\at{U_E}$.
      \item $U_M = \exp(U_E)$.
   \end{satzEnum}
\end{lemma}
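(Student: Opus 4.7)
The plan is to construct the tubular neighborhood as the restriction of $\exp$ to a suitable $G$-invariant open neighborhood of the zero section inside $K$, using the $G$-invariant version of the implicit function theorem encapsulated in Korollar~\ref{kor:AufUmgebungVonCHomoe}.

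First I would verify the underlying equivariance. Since $\xi$ is $G$-equivariant, Proposition~\ref{prop:Spray} yields that the maximal domain $\mathcal{O}$ of $\exp$ is a $G$-invariant open neighborhood of the zero section of $TM$ and that $\exp \dpA \mathcal{O} \to M$ is $G$-equivariant with respect to $(g,v) \mapsto T\phi_g v$ on $TM$ and $\phi$ on $M$. Since $K$ is $G$-invariant in $TM$, the map $\Phi \dpA G \times K \to K$, $(g,v) \mapsto T\phi_g v$ is well defined and makes $K \to C$ a $G$-vector bundle. Consequently $K \cap \mathcal{O}$ is $G$-invariant, and the restriction $\exp\at{K \cap \mathcal{O}}$ is a $G$-equivariant smooth map into $M$. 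The zero section $\nSchnitt \dpA C \to K$ is visibly $G$-equivariant, smooth, and a section of $\exp\at{K \cap \mathcal{O}}$ since $\exp(0_c) = c$ for $c \in C$.

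Next I would localize to where $\exp\at{K}$ is an immersion. By Proposition~\ref{prop:ExponentialDiffeo}~\refitem{item:expIstLokalDiffeoAufNormalBundel} the set
\begin{equation*}
   U := \{v \in K \cap \mathcal{O} \mid T_v \exp \; \text{ist invertierbar}\}
\end{equation*}
is an open neighborhood of $\nSchnitt(C)$ in $K$ on which $\exp\at{U}$ is a (not necessarily injective) local diffeomorphism into $M$. This open set is automatically $G$-invariant: the condition that $T_v \exp$ be invertible is preserved under the action because $\exp \circ \Phi_g = \phi_g \circ \exp$ implies $T_{\Phi_g v}\exp \circ T_v \Phi_g = T_{\exp(v)}\phi_g \circ T_v \exp$, and the outer maps are isomorphisms. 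Thus $\exp\at{U}$ is a $G$-equivariant local diffeomorphism defined on a $G$-invariant neighborhood of $\nSchnitt(C)$.

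Now I invoke Korollar~\ref{kor:AufUmgebungVonCHomoe} with $Z = U$, $X = M$, the equivariant submanifold $\nSchnitt(C) \subset U$, the equivariant section $s = \nSchnitt$ of $F = \exp\at{U}$, and the neighborhood $U$ on which $F$ is a local homeomorphism; the action on $M$ is proper by hypothesis, which is precisely what the corollary needs. It produces a $G$-invariant open neighborhood $U_E \subset U$ of $\nSchnitt(C)$ such that $\exp\at{U_E} \dpA U_E \to \exp(U_E)$ is a homeomorphism. Since $\exp\at{U}$ is a local diffeomorphism, this global inverse is automatically smooth, so $\tau := \exp\at{U_E}$ is a $G$-equivariant diffeomorphism onto the $G$-invariant open set $U_M := \exp(U_E) \subset M$ (the latter is a neighborhood of $C$ because $\tau \circ \nSchnitt = \id_C$, and is $G$-invariant as image of a $G$-invariant set under the equivariant map $\exp$). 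The tuple $(K \to C, U_E, U_M, \tau)$ then has all required properties.

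The only delicate point, and hence the main obstacle, is the passage from local to global injectivity of $\exp$ on a $G$-invariant neighborhood: without properness of the $G$-action this step would fail, as the counterexample in Beispiel~\ref{bsp:aeqiv_lokal_homoe_gegenbsp} already illustrates for mere equivariance. Everything else, including the verification that $U$ is $G$-invariant and that smoothness of $\tau^{-1}$ is inherited from the local diffeomorphism property, is routine once the framework of Korollar~\ref{kor:AufUmgebungVonCHomoe} is in place.
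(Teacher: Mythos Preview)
Your proposal is correct and follows essentially the same approach as the paper: define $U$ as the locus where $T_v\exp$ is invertible, note that $\exp$ is a $G$-equivariant local diffeomorphism there with the zero section as an equivariant section, and invoke Korollar~\ref{kor:AufUmgebungVonCHomoe} to pass from local to global injectivity on a $G$-invariant neighborhood. You supply a bit more detail than the paper (explicit $G$-invariance of $U$, the upgrade from homeomorphism to diffeomorphism via the local-diffeomorphism property), but the strategy is the same.
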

\begin{proofklein}

   Die Menge $U := \{v \in K \cap \mathcal{O} \mid T_v
   \exp \text{ ist invertierbar}\}$ ist nach Proposition
   \ref{prop:ExponentialDiffeo}~\refitem{item:expIstLokalDiffeoAufNormalBundel}
   offen in $K$.

   Die Abbildung $\exp\colon \mathcal{O} \to M$ ist $G$"=äquivariant und $\exp\at{U} \dpA U \to \exp(U)$

   ist ein lokaler Homöomorphismus. Weiter ist der Nullschnitt
   $\nSchnitt$ ist $G$"=äquivariant, denn für $g \in G$ und $c \in C$ ist
   \begin{align*}
      \nSchnitt(g c) = 0_{g c} = T \phi_g (0_c) = \Phi_g (\nSchnitt(c)) \Fdot
   \end{align*}
   Damit folgt die Behauptung unmittelbar aus Korollar
   \ref{kor:AufUmgebungVonCHomoe}.

\end{proofklein}

\begin{lemma}
   \label{lem:SubmersionAufC1}
   Seien $M$ und $N$ Mannigfaltigkeiten und $\phi\colon G \times
   M \to M$ sowie $\theta\colon G \times N \to N$ Wirkungen einer
   Lie"=Gruppe $G$ auf $M$ bzw.\ $N$.  Weiter sei $C \subset M$ eine
   $G$"=invariante Untermannigfaltigkeit und $f\dpA M \to N$ eine
   $G$"=äquivariante, glatte Abbildung, so dass $f\at{C}\dpA C \to N$
   submersiv ist. Dann gibt es eine $G$"=invariante, offene Umgebung $U$
   von $C$ in $M$, so dass $\ker Tf\at{U}$ ein Untervektorbündel von $TU$
   ist. Insbesondere hat $Tf\at{U}$ konstanten Rang.
\end{lemma}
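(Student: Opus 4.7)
The plan is to produce $U$ as the open set on which $f$ is already a submersion, and then to exploit $G$-equivariance to see that this set is automatically $G$-invariant. Once constant rank is established, the kernel distribution being a subbundle is a standard consequence.

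First I would show that for every $c \in C$, the differential $T_cf \colon T_cM \to T_{f(c)}N$ is already surjective. This follows immediately from the hypothesis that $f|_C \dpA C \to N$ is a submersion, since by the chain rule $T_c(f|_C) = T_cf \circ T_c\kIn$, so if the composition is surjective, then $T_cf$ is surjective a fortiori. Next I would define
\begin{equation*}
   U := \{ p \in M \mid T_pf \colon T_pM \to T_{f(p)}N \text{ is surjective}\}\Fdot
\end{equation*}
By the first step, $C \subset U$. The set $U$ is open in $M$: surjectivity of a linear map between vector spaces of fixed dimensions is expressed by the non-vanishing of some minor of the Jacobian matrix (in any local chart), hence by an open condition. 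Equivalently, one may invoke that the rank of a smooth map is lower semi-continuous, and that surjectivity onto $N$ means the rank attains its maximal value $\dim N$.

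The next step is to verify that $U$ is $G$-invariant. Let $p \in U$ and $g \in G$. The equivariance of $f$, expressed as $\theta_g \circ f = f \circ \phi_g$, yields by the chain rule the identity
\begin{equation*}
   T_{gp}f \circ T_p \phi_g = T_{f(p)}\theta_g \circ T_pf\Fdot
\end{equation*}
Since both $T_p\phi_g$ and $T_{f(p)}\theta_g$ are linear isomorphisms (being differentials of diffeomorphisms), $T_{gp}f$ is surjective if and only if $T_pf$ is. Hence $gp \in U$, proving $G$-invariance.

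Finally, on $U$ the differential $Tf\at{U}$ has constant rank equal to $\dim N$, so in particular $\ker Tf\at{U}$ has constant fibre dimension $\dim M - \dim N$. That a smooth distribution defined as the kernel of a constant-rank vector bundle morphism is itself a smooth vector subbundle is a standard fact (see, e.g., the rank theorem in \cite[Thm.~7.13]{lee:2003a}); concretely, one can read this off locally by choosing a local trivialisation in which $Tf$ is represented by a matrix of constant rank and applying linear algebra together with the implicit function theorem. I do not anticipate any serious obstacle; the argument is a routine combination of the openness of submersivity, equivariance, and the constant rank picture.
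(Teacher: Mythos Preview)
Your proof is correct and follows essentially the same approach as the paper: both establish that $T_cf$ is surjective along $C$, use the openness of the submersivity condition, invoke equivariance to obtain $G$-invariance, and conclude via the constant-rank picture that the kernel is a subbundle. Your formulation is slightly more direct in that you define $U$ outright as the full set of submersive points, whereas the paper first builds a union $V$ of local neighbourhoods and then saturates it to $GV$; but this is only a cosmetic difference.
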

\begin{proofklein}
   Da $f\at{C}$ submersiv ist, gilt schon $T_cf T_c\kIn C = T_{f(c)}N$
   für alle $c \in C$, d.\,h.\ der Rang $\mathrm{rang}(T_cf)$ von $T_cf$
   ist voll. Bekanntlich, vgl.\ \cite[Rem. 5.3]{broecker.jaenich:1990a},

   gibt es für jedes $c \in C$ eine offene Umgebung $V_c$ in $M$, so dass
   $\mathrm{rang}(T_pf) \geq \mathrm{rang}(T_cf)$ für alle $p \in
   V_c$. Da der Rang aber voll ist, gilt schon $\mathrm{rang}(T_pf) =
   \mathrm{rang}(T_cf) = \dim N$ für alle $p \in V_c$. Wir setzen dann
   $V := \bigcup_{c \in C}V_c$, es gilt also $\mathrm{rang}(T_pf) = \dim
   N$ für alle $p \in V$. Nun definieren wir weiter $U := GV$. Sei nun
   $p \in V$ und $g \in G$, dann gilt für jedes $v_p \in \ker T_pf$
   schon $T_p \phi_g v_p \in \ker T_{\phi_g(p)}f$, denn mit der
   Kettenregel und der $G$"=Äquivarianz von $f$ erhält man
   \begin{align*}
      T_{\phi_g(p)}f T_p \phi_g v_p = T_p(f \circ \phi_g)v_p =
      T_p(\theta_g \circ f)v_p = T_{f(p)}\theta_gT_pf v_p = 0 \Fdot
   \end{align*}
   So folgt wegen der Injektivität von $T_p \phi_g$ schon $\dim \ker
   T_pf \leq \dim \ker T_{\phi_g(p)}f$. Analog sieht man $ \dim \ker
   T_{\phi_g(p)}f \leq \dim \ker T_pf$, womit  schon $\dim \ker
   T_{\phi_g(p)}f = \dim \ker T_pf$ gilt, d.\,h.\ auch
   $\mathrm{rang}(T_{\phi_g(p)}f) = \dim N$. Damit ist aber nach dem
   Fibrationssatz, siehe \cite[Thm. 3.5.18]{marsden:2002} $\ker
   Tf\at{U}$ ein Untervektorbündel von $TU$.
\end{proofklein}

\begin{lemma}
   \label{lem:SprayErhaeltUnterbuendel}
   Sei $M$ eine Mannigfaltigkeit und $E \to M$ ein integrables
   Untervektorbündel von $\pi \colon TM \to M$. Dann gibt es einen Spray
   $\xi\colon TM \to TTM$ mit $\xi(E) \subset TE$.
\end{lemma}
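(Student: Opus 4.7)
The plan is to construct $\xi$ locally on foliation charts, where a flat Ansatz manifestly preserves $E$, and then glue the local pieces via a partition of unity. The critical algebraic point is that convex combinations of sprays are again sprays and that the subspace $T_v E \subset T_v TM$ is linear, so local $E$-preservation survives the glueing.

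More concretely, I would first invoke the Frobenius theorem to cover $M$ by charts $(U_\alpha, x_\alpha)$ adapted to $E$, that is, charts in which $E\at{U_\alpha}$ is spanned by $\partial/\partial x_\alpha^1, \dots, \partial/\partial x_\alpha^k$ with $k = \operatorname{rank} E$. On the induced bundle chart of $TU_\alpha$ with coordinates $(q,v) = (q^1,\dots,q^n,v^1,\dots,v^n)$, the submanifold $E\at{U_\alpha} \subset TU_\alpha$ is then cut out by the equations $v^{k+1} = \dots = v^n = 0$, so a tangent vector $X \in T_wTU_\alpha$ lies in $T_w E$ if and only if $X(v^j) = 0$ for $j = k+1,\dots, n$. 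I would then take the \emph{flat spray}
\begin{equation*}
   \xi_\alpha := v^i \frac{\partial}{\partial q^i}
\end{equation*}
on $TU_\alpha$, which by Proposition~\ref{prop:lokaleForm} (with vanishing $f^i_{jk}$) is a spray. For $w \in E$ we have $\xi_\alpha(w)(v^j) = 0$ for all $j$, so $\xi_\alpha(w) \in T_w E$, i.e.\ $\xi_\alpha$ restricts to a vector field on $E\at{U_\alpha}$.

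Next I would choose a partition of unity $\{\chi_\alpha\}$ on $M$ subordinate to the cover $\{U_\alpha\}$ and define
\begin{equation*}
   \xi := \sum_\alpha (\pi^*\chi_\alpha)\,\xi_\alpha,
\end{equation*}
with each summand extended by zero outside $TU_\alpha$. That $\xi$ is a smooth vector field on $TM$ is standard; that it is again a spray follows because $T_v\pi$ and $T_v\mu_\alpha$ are \emph{linear}, and $\sum_\alpha \chi_\alpha(\pi(v)) = 1$, so both defining conditions of Definition~\ref{def:Spray} are preserved under convex combinations with coefficients $\chi_\alpha(\pi(v))$. Finally, for $v \in E_p$ only those indices $\alpha$ with $p \in U_\alpha$ contribute, and each contribution $\chi_\alpha(p)\,\xi_\alpha(v)$ lies in the linear subspace $T_v E \subset T_v TM$; hence their sum lies in $T_v E$, giving $\xi(E) \subset TE$ as required.

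The only slightly delicate point is verifying that the convex combination is really a spray, but once one writes the two axioms in terms of the linear maps $T_v\pi$ and $T_v\mu_\alpha$, this reduces to an immediate computation. The rest of the argument is essentially bookkeeping with Frobenius charts and the partition of unity, so I do not expect a substantial obstacle.
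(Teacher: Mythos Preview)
Your proposal is correct and follows essentially the same route as the paper: Frobenius charts, the flat spray $\xi_U = v^i\,\partial/\partial q^i$ on each chart, verification that $\xi_U(E\at{U}) \subset T(E\at{U})$, and glueing via $\xi = \sum \pi^*\chi_\alpha\,\xi_\alpha$. The only cosmetic difference is that the paper checks tangency by exhibiting a curve in $E$ with the right derivative, whereas you use the coordinate description of $T_wE$ as the kernel of $dv^{k+1},\dots,dv^n$; both are standard and equivalent.
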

\begin{proofklein}
   Da $E$ ein integrables Untervektorbündel von $TM$ ist, gibt es nach
   dem Frobenius"=Theorem, vgl.\ \cite[Ch. 19, p. 500]{lee:2003a}, um
   jeden Punkt in $M$ eine Karte $(U,x)$ mit induzierter Karte des
   Tangentialbündels $(\pi^{-1}(U),(q,v))$ so dass für jedes $p \in U$
   \begin{align*}
      \sum_{i=1}^m v_p^i \frac{\partial}{\partial x^i}\At{p} \in
       E_p \iff v_p^{n+1},\dots, v_p^m = 0 \Fdot
   \end{align*}
   Dabei ist $m := \dim M$ und $n := \dim E_p$ für alle $p \in M$.  Ist
   nun $\tilde \gamma\dpA \mathbb{R} \to U$ eine glatte Kurve und $c_i
   \in \mathbb{R}$, so ist
   \begin{align*}
      \gamma \dpA \mathbb{R} \to E\at{U}, \quad t \mapsto \sum_{i=1}^{n}
      c_i \frac{\partial}{\partial x^i}\At{\tilde{\gamma}(t)}
   \end{align*}
   eine wohldefinierte glatte Kurve in $E\at{U}$. Für deren Ableitung
   folgt offenbar
   \begin{align*}
      TE\ni {\dot \gamma}(0) = \sum_{i=1}^{m} \dot{\tilde
        \gamma}^i(0) \frac{\partial}{\partial q^i}\At{\gamma(0)}
   \end{align*}
   mit ${\tilde \gamma}^i := (x^i \circ \tilde \gamma) = (q^i \circ
   \gamma) =: \gamma^i $.  Wir definieren dann auf $\pi^{-1}(U)$ das
   Spray"=Vektorfeld
   \begin{align*}
      \xi_{U} = v^i \frac{\partial}{\partial q^i} \Fcom
   \end{align*}
   vergleiche Proposition \ref{prop:lokaleForm}.

   Für $p \in U$ und $v_p = \sum_{i=1}^nv_p^i \frac{\partial}{\partial
     x^i}\At{p} \in E_p$ gilt damit
   \begin{align*}
      \xi_{U}\left(\sum_{i=1}^nv_p^i \frac{\partial}{\partial
          x^i}\At{p}\right) = \sum_{i=1}^n v_p^i
      \frac{\partial}{\partial q^i}\At{v_p} \Fdot
   \end{align*}
   Um die Inklusion $\xi_{U}(E\at{U}) \subset T (E\at{U})$ zu zeigen,
   genügt es für alle $p \in U$ und $v_p \in E\at{U} $ eine glatte Kurve
   $\gamma$ in $E\at{U}$ mit $\xi(v_p) = \dot \gamma(0)$ und $v_p =
   \gamma(0)$ zu finden. Wählen wir nun $\tilde \gamma$ von oben derart,
   dass $\tilde \gamma(0) = p$ und $\dot {\tilde{ \gamma}}^i(0) = v^i_p$
   für $i=1,\dots,n$ und $\dot {\tilde{ \gamma}}^i(0) = v^i_p = 0$ für
   alle $i = n+1,\dots,m$ gilt, sowie $\gamma$ mit $c^i = v^i_p$ für
   alle $i=1,\dots,n$, folgt die Behauptung.

   Wir können nun $M$ durch
   Karten $(U_i,x_i)$ wie oben überdecken und eine dieser
   Überdeckung untergeordnete Zerlegung der Eins $\{\chi_i\}$
   wählen. Offensichtlich ist dann $\xi := \sum_i \pi^*\chi_i \xi_{U}$
   ein Spray mit $\xi(E) \subset TE$.
\end{proofklein}

\begin{lemma}
   \label{lemma:kompatibleTuben}

   Seien $M$ und $N$ Mannigfaltigkeiten, $ \kIn \colon C \hookrightarrow
   M$ eine Untermannigfaltigkeit und $f\dpA M \to N$ eine glatte
   Abbildung mit konstantem Rang, so dass $f\at{C}\dpA C \to N$ submersiv
   ist. Dann sind folgende Aussagen richtig.
   \begin{lemmaEnum}
   \item %
      \label{item: SprayExists} %
      Es gibt einen Spray $\xi\dpA TM \to TTM$ mit $\xi(\ker Tf) \subset
      T(\ker Tf)$.
   \item %
      \label{item: Komplement} %
      Für alle $c \in C$ gilt für jedes Komplement $K_c \subset \ker
      T_cf$ von $T_c \kIn T_c C \cap \ker T_cf$ in $\ker T_cf$ schon
      $T_cM = K_c \oplus T_c \kIn T_cC$.
   \item %
      \label{item: SprayInduzKompTube} %
      Für jeden Spray $\xi\dpA TM \to TTM$ mit $\xi(\ker Tf) \subset
      T(\ker Tf)$ und jedes Komplementbündel $K$ von $T\kIn TC$ in
      $TM\at{C} \to C$ mit $K \subset \ker Tf$ (z.\,B.\ via einer
      Riemannschen Metrik und nach~\refitem{item: Komplement} induziert)
      gilt für die davon nach Lemma
      \ref{lemma:ExistenzVonTubenumgebungen} induzierte Tubenumgebung
      $(\pi\dpA E \to C,U_E,U_M,\tau)$ von $C$ in $M$, dass $f(r(p)) =
      f(p)$ für alle $p \in U_M$, wobei $r := \pi \circ \tau^{-1} \dpA
      U_M \to C$ die induzierte Retraktion ist.
   \end{lemmaEnum}
\end{lemma}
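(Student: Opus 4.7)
The plan is to handle the three parts in turn, with part~(iii) carrying the geometric content and parts~(i),~(ii) being essentially preparatory.

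For part~(i), I would first argue that $\ker Tf \subset TM$ is an integrable subbundle. Since $f$ has constant rank, the rank theorem provides around each point a chart in which $f$ looks like a projection, and in such a chart $\ker Tf$ is the tangent bundle of the fibers of that projection, which is patently integrable (the fibers themselves are the integral manifolds). With $\ker Tf$ integrable, Lemma~\ref{lem:SprayErhaeltUnterbuendel} directly yields a spray $\xi \colon TM \to TTM$ with $\xi(\ker Tf) \subset T(\ker Tf)$.

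For part~(ii), this is pure linear algebra using rank"=nullity. Let $c \in C$ and $K_c$ a complement of $T_c\kIn T_cC \cap \ker T_cf$ in $\ker T_cf$. Since $f$ has constant rank equal to $\dim N$ (the value of $\mathrm{rang}(T_cf|_C)$ by submersivity of $f\at{C}$), we have $\dim \ker T_cf = \dim M - \dim N$ and $T_cf$ restricted to $T_c\kIn T_cC$ surjects onto $T_{f(c)}N$, so $\dim(T_c\kIn T_cC \cap \ker T_cf) = \dim C - \dim N$. Consequently $\dim K_c = \dim M - \dim C$ and hence $\dim K_c + \dim T_c\kIn T_cC = \dim M$. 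For the trivial intersection: if $v \in K_c \cap T_c\kIn T_cC$, then $v \in \ker T_cf$ as $K_c \subset \ker T_cf$, so $v$ lies in $T_c\kIn T_cC \cap \ker T_cf$ as well as in $K_c$, forcing $v = 0$ by the complementarity defining $K_c$.

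For part~(iii), the key idea is that the flow of $\xi$ preserves $\ker Tf$ as a subset of $TM$, because $\xi$ restricts to a vector field on $\ker Tf$. Fix $p \in U_M$, write $p = \exp(v)$ for a unique $v \in U_E \subset K \subset \ker Tf$, and set $c := \pi(v) = r(p) \in C$. Let $F$ denote the flow of $\xi$ and consider $\gamma(t) := \pi(F(v,t))$ for $t \in [0,1]$; by Remark~\ref{bem:VektorfeldZweiterOrdnung}, $\dot\gamma(t) = F(v,t)$. Since $F(v,0) = v \in \ker Tf$ and $\xi$ is tangent to $\ker Tf$, the integral curve $t \mapsto F(v,t)$ stays inside $\ker Tf$ for all $t$ in its domain, in particular for $t \in [0,1]$. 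Hence $\dot\gamma(t) \in \ker T_{\gamma(t)}f$ for every such $t$, so $\frac{d}{dt}(f \circ \gamma)(t) = T_{\gamma(t)}f\, \dot\gamma(t) = 0$, and $f \circ \gamma$ is constant on $[0,1]$. Evaluating at the endpoints gives $f(p) = f(\gamma(1)) = f(\gamma(0)) = f(c) = f(r(p))$.

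The only delicate point is the integrability claim in part~(i) — it rests squarely on the constant"=rank hypothesis, which is why it was stated separately in Lemma~\ref{lem:SubmersionAufC1} beforehand. Everything else reduces to bookkeeping with the spray equation and the characterization $\dot\gamma(t) = F(v,t)$ of integral curves of second"=order vector fields.
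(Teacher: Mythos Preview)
Your proof is correct and follows essentially the same approach as the paper: part~(ii) is the identical dimension count, and part~(iii) is the same argument that integral curves of $\xi$ starting in $\ker Tf$ remain there, so $f\circ\gamma$ is constant along the spray geodesic. The only cosmetic difference is in part~(i), where you establish integrability of $\ker Tf$ directly via the local normal form from the rank theorem, whereas the paper checks involutivity (vector fields in $\ker Tf$ are $f$-related to zero, hence so is their bracket) and then invokes Frobenius.
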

\begin{proofklein}
    Wir folgen bei der Beweisführung grob der Darstellung von
      Wirtmüller, vgl.\ \cite[Ch. II, Thm. 1.6]{gibson:1976}.
   \begin{beweisEnum}
   \item %
      Nach Lemma \ref{lem:SubmersionAufC1} ist $\ker Tf$ ein
      Untervektorbündel von $TM$. Dieses ist involutiv, denn seien $X$
      und $Y$ zwei Vektorfelder die Punktweise in $\ker Tf$ liegen, so
      sind bedeutet dies, dass sie zum Nullvektorfeld $f$"=verwandt
      sind. Also ist auch ihre Lie"=Klammer zum Nullvektorfeld
      $f$"=verwandt, womit nach dem Frobenius"=Theorem folgt, dass $\ker
      Tf$ integrabel ist. Mit Lemma \ref{lem:SprayErhaeltUnterbuendel}
      ist die Behauptung dann klar.
\item %
   Dies folgt sofort durch einfache Dimensionsbetrachtungen. Da $f \at{C} \dpA C \to N$ submersiv ist, gilt
   \begin{align*}
      \dim \ker T_cf\at{C} = \dim C - \dim N \Fdot
   \end{align*}
   Natürlich ist mit $f\at{C}$ und wegen des konstanten Rangs auch $f$
   selbst submersiv, also haben wir analog
   \begin{align*}
      \dim \ker T_cf = \dim M - \dim N \Fdot
   \end{align*}
   Weiter gilt offensichtlich
   \begin{align*}
      T_c\kIn \ker T_cf\at{C} = \ker T_cf \cap T_c\kIn T_cC \Fdot
   \end{align*}
   Fügen wir diese Informationen zusammen, folgt unmittelbar
   \begin{align*}
      \dim K_c &= \dim \ker T_c f - \dim (T_c \kIn T_c C \cap \ker T_cf)
      \\
      &=
      \dim M - \dim N - \dim (\ker T_cf\at{C})\\
      &= \dim M - \dim N - (\dim C
      - \dim N)\\
      &= \dim M - \dim C \Fdot
   \end{align*}
   Da offensichtlich $K_c \cap T_c\kIn T_cC = \{0\}$ folgt die Behauptung.
\item %

   Sei also
   $\xi\dpA TM \to TTM$ ein Spray mit $\xi(\ker Tf) \subset T(\ker
   Tf)$ und $K$ wie in der Voraussetzung.

   Aus $\xi(\ker Tf) \subset T(\ker Tf)$ folgt, dass jede Integralkurve
   von $\xi$ mit Startpunkt in $\ker Tf$ in $T(\ker Tf)$ ganz in $\ker
   Tf$ verläuft. Insbesondere gilt dann
   \begin{align*}
      T f \frac{\D}{\D t} \exp(t\nu) = 0 \quad \forall c \in C, \nu \in
      K_c \cap \mathcal{O}, t \in [0,1]\Fdot
   \end{align*}
   Dies ist aber gleichbedeutend mit
   \begin{align*}
      f(\exp(t \nu)) = f(c)\quad \forall c \in C, \nu \in
      K_c \cap \mathcal{O}, t \in [0,1]\Fcom
   \end{align*}
   was für jede von $\xi$ und $K$ nach Lemma
   \ref{lemma:ExistenzVonTubenumgebungen} induzierte Tubenumgebung mit
   Retraktion $r$ die Gleichung
   \begin{align*}
      f = f \circ r
   \end{align*}
   impliziert.
\end{beweisEnum}
\end{proofklein}

\begin{satz}[Existenz $G$"=invarianter, kompatibler Tubenumgebungen]
   \label{satz:GkompatibleTuben}
   Seien $M$ und $N$  Mannigfaltigkeiten und $G$ eine Lie"=Gruppe,
   die auf $M$ und $N$ wirke. Die Wirkung auf $M$ sei eigentlich. Weiter
   sei $C \subset M$ eine $G$"=invariante Untermannigfaltigkeit und $f\dpA M \to N$
   eine $G$"=äquivariante, glatte Abbildung, so dass $f\at{C}\dpA C \to N$
   submersiv ist. Dann gibt es eine $G$"=invariante Tubenumgebung
   $(\pi\dpA E \to C,U_E,U_M,\tau)$ von $C$ mit $f(r(p)) = f(p)$ für
   alle $p \in U_M$. Dabei bezeichnet $r := \pi \circ \tau^{-1} \dpA U_M
   \to C$ die induzierte $G$"=äquivariante Retraktion.
\end{satz}
\begin{proofklein}
   Nach Lemma \ref{lem:SubmersionAufC1} und Bemerkung
   \ref{bem:TubenumgebungLokal} können wir o.\,E.\ annehmen, dass $f$
   konstanten Rang hat.  Nach Satz \ref{satz:invarianterSpray} können
   wir den Spray aus Lemma \ref{lemma:kompatibleTuben} $G$"=invariant
   wählen, ebenso können wir $K$ von einer $G$"=invarianten Riemannschen
   Metrik auf $M$ induzieren lassen, d.\,h.\ auch $K$ $G$"=invariant
   wählen. Damit ist die Aussage klar.
\end{proofklein}

\end{appendices}

\cleardoublepage

\addcontentsline{toc}{chapter}{Literaturverzeichnis}

\newcommand{\etalchar}[1]{$^{#1}$}

\cleardoublepage

\chapter*{Danksagung}
\label{Danksagung}
\addcontentsline{toc}{chapter}{Danksagung}

Ich danke allen Personen, die zum Gelingen der vorliegenden Arbeit
beigetragen haben.\vspace{5pt}

\noindent Großer Dank gebührt meinem Betreuer PD Dr. Nikolai Neumaier, der
leider die Fertigstellung dieser Arbeit nicht mehr miterleben
durfte. Ich danke ihm für die Freiheit, die er mir bei meiner Arbeit
gelassen hat und für die, trotz schwerer Krankheit, immerwährende
Bereitschaft geduldig auf Fragen einzugehen sowie für die vielen
großartigen Diskussionen, die ich mit ihm führen konnte. Besonders
dankbar bin ich außerdem für seine außergewöhnlichen Übungsgruppen, denen ich
als Student beiwohnen durfte.
\vspace{5pt}

\noindent Ebenso großer Dank gebührt auch meinem
zweiten Betreuer apl. Prof. Dr. Stefan Waldmann, der von Anfang an immer
ein offenes Ohr für Fragen hatte und später die Betreuung der Arbeit
übernommen hat. Ihm danke ich auch besonders für die vielen glänzenden
Vorlesungen, bei denen ich sehr viel über Physik, Mathematik und deren
Schnittstelle lernen konnte.
\vspace{5pt}

\noindent Weiterer Dank gilt auch Prof. Dr. Hartmann Römer, dem es in
seinen Vorlesungen immer wieder gelang, enge Gebietsgrenzen zu sprengen
und die großen Verbindungen innerhalb der Physik, aber auch zu anderen
Gebieten wie Mathematik oder Philosophie aufzuzeigen. Ich bin dankbar,
in ihm einen akademischen Lehrer gefunden zu haben, der zeigt, dass es
auch heute noch möglich ist, tiefes fachliches Wissen und Verständnis
auf einer äußerst breiten Ebene zu erlangen.  \vspace{5pt}

\noindent Auch Prof. Dr. Bernd Siebert möchte ich an dieser Stelle
meinen Dank aussprechen, dafür, dass er es in den ersten zwei Semestern
geschafft hat, mein Interesse am Abstrakten zu wecken.
\vspace{5pt}

\noindent Ferner möchte ich
mich bei allen bedanken, die diese Arbeit korrekturgelesen haben, bei
Svea Beiser, Patricia Calon, Sascha Fröschl, Maximilian Hanusch und
Alexander Held.
\vspace{5pt}

\noindent Der größte Dank gebührt meiner Freundin Patricia. Im
Hinblick auf diese Arbeit danke ich ihr insbesondere für die unzähligen
tiefen fachlichen Diskussionen während der gesamten Studienzeit und für
die moralische Unterstützung während und vor allem in der Endphase der
Diplomarbeit.
\vspace{5pt}

\noindent Nicht zuletzt möchte ich mich noch besonders herzlich bei
meinen Eltern bedanken, die mir das Studium ermöglicht und mich in
jeglicher Hinsicht die ganze Zeit unterstützt haben.

\end{document}

%% Local Variables:
%% coding: utf-8
%% fill-column: 72
%% mode: latex
%% TeX-master: t
%% End: